\documentclass[11pt]{article}
\usepackage[margin=3cm]{geometry}

\usepackage{amsmath,amsthm,amsfonts,amssymb}
\usepackage{tikz}
\usepackage{lmodern}        
\usepackage{microtype}

\usepackage{multicol}
\usepackage{thmtools}
\usepackage{algorithm}
\usepackage{algpseudocode}
\usepackage{titling}
\usepackage{graphicx}
\usepackage[font={small,it},labelfont=bf]{caption}
\usepackage{appendix}
\usepackage{lipsum}
\usepackage{mathtools}
\usepackage{color}
\definecolor{myurlcolor}{rgb}{0,0.35,0}
\definecolor{mycitecolor}{rgb}{0,0,0.55}
\definecolor{myrefcolor}{rgb}{0.55,0,0}
\usepackage[pagebackref,draft=false]{hyperref}
\hypersetup{colorlinks,
linkcolor=myrefcolor,
citecolor=mycitecolor,
urlcolor=myurlcolor}

\usepackage{cancel}
\usepackage{bbm}
\usepackage[inline]{enumitem}
\setlist[enumerate,1]{label=(\arabic*), ref=\arabic*}
\setlist[enumerate,2]{label=(\roman*), ref=\theenumi.\roman*}
\usepackage{cite}
\usepackage{url}
\usepackage{tikz-cd}
\usepackage{todonotes}
\usepackage{fancyhdr}
\usepackage{forest}
\usepackage{tikz-qtree}

\newcommand{\runtitle}{}
\usepackage{xparse}
\usepackage{extarrows}
\NewDocumentCommand{\Qsym}{}{\mathsf{Q}}        
\newcommand{\Qof}[1]{\Qsym\!\left[#1\right]}

\NewDocumentCommand{\Leaf}{m}{\Qof{#1}}         

\NewDocumentCommand{\DeclareTree}{m m}{\expandafter\def\csname #1\endcsname{#2}}
\DeclareTree{AssemTree}{\Join{\Leaf{S_1}}{\Join{\Leaf{S_2}}{\Leaf{S_3}}{S_{23}}}{S_{1,23}}}

\NewDocumentCommand{\meq}{o m}{%
  \mathrel{\xlongequal{%
    \IfNoValueTF{#1}{#2}{\substack{\text{#1}\\#2}}%
  }}%
}

\let\hat\widehat
\let\tilde\widetilde

\usepackage[capitalize,noabbrev]{cleveref}

\newtheorem{theorem}{Theorem}[section]
\newtheorem{proposition}[theorem]{Proposition}
\newtheorem{lemma}[theorem]{Lemma}
\newtheorem{corollary}[theorem]{Corollary}
\newtheorem{definition}[theorem]{Definition}
\newtheorem{definitionthm}[theorem]{Definition/Theorem}
\newtheorem{question}[theorem]{Question}

\newtheorem{notation}[theorem]{Notation}
\newtheorem*{question*}{Question}
\theoremstyle{definition}
\newtheorem{statement}[theorem]{Statement}
\newtheorem{example}[theorem]{Example}
\newtheorem{remark}[theorem]{Remark}

\newtheorem{condition}[theorem]{Condition}

\AddToHook{cmd/appendix/before}{%
}
\newenvironment{keywords}
{\bgroup\leftskip 20pt\rightskip 20pt \small\noindent{\bf Keywords:} }%
{\par\egroup\vskip 0.25ex}



\newlist{caselist}{enumerate}{1}
\setlist[caselist]{label=\textbf{Case~\arabic*:}, ref=Case~\arabic*, leftmargin=*}

\newtheorem{statementprime}{Statement}

\crefname{definitionthm}{Definition/Theorem}{Definitions/Theorems}
\crefname{notation}{Notation}{Notations}
\crefname{appendix}{Appendix}{Appendices}
\algnewcommand\algorithmicinput{\textbf{Input:}}
\algnewcommand\algorithmicoutput{\textbf{Output:}}
\algnewcommand\Input{\item[\algorithmicinput]}
\algnewcommand\Output{\item[\algorithmicoutput]}
\newcommand{\1}{\mathbf{1}_{|\Sc|}}
\newcommand{\Ac}{\mathcal{A}}

\newcommand{\Cc}{\mathcal{C}}
\newcommand{\Dc}{\mathcal{D}}
\newcommand{\Ec}{\mathcal{E}}

\newcommand{\Hc}{\mathcal{H}}
\newcommand{\Ic}{\mathcal{I}}

\newcommand{\Lc}{\mathcal{L}}
\newcommand{\Mc}{\mathcal{M}}

\newcommand{\Oc}{\mathcal{O}}
\newcommand{\Pc}{\mathcal{P}}
\newcommand{\Qc}{\mathcal{Q}}
\newcommand{\Rc}{\mathcal{R}}
\newcommand{\Sc}{\mathcal{S}}
\newcommand{\Tc}{\mathcal{T}}
\newcommand{\Uc}{\mathcal{U}}
\newcommand{\Vc}{\mathcal{V}}
\newcommand{\Wc}{\mathcal{W}}
\newcommand{\Xc}{\mathcal{X}}
\newcommand{\Yc}{\mathcal{Y}}
\newcommand{\Zc}{\mathcal{Z}}
\newcommand{\Af}{\mathfrak{A}}
\newcommand{\Bf}{\mathfrak{B}}

\newcommand{\Df}{\mathfrak{D}}

\newcommand{\Gf}{\mathfrak{G}}
\newcommand{\Hf}{\mathfrak{H}}

\newcommand{\Mf}{\mathfrak{M}}

\newcommand{\Pf}{\mathfrak{P}}

\newcommand{\Kr}{\mathrm{K}}

\newcommand{\Prb}{\mathrm{P}}
\newcommand{\Qr}{\mathrm{Q}}

\newcommand{\Ab}{\mathbb{A}}
\newcommand{\Bb}{\mathbb{B}}
\newcommand{\Cb}{\mathbb{C}}

\newcommand{\Gb}{\mathbb{G}}

\newcommand{\Mb}{\mathbb{M}}

\newcommand{\Rb}{\mathbb{R}}

\newcommand{\Zb}{\mathbb{Z}}
\newcommand{\Ts}{\mathsf{T}}
\newcommand{\Gs}{\mathsf{G}}
\newcommand{\Ms}{\mathsf{M}}

\newcommand{\pf}{\mathfrak{p}}

\usepackage{centernot}
\usepackage{stmaryrd}

\newcommand{\sepset}{\mathtt{sepset}}

\newcommand{\ol}{\overline}

\newcommand{\cf}{cf.\ }

\newcommand{\ind}{\perp\!\!\!\perp}
\newcommand{\miid}{\,\|\,}
\newcommand{\Do}{\mathsf{do}}
\newcommand{\anc}{\mathsf{Anc}}

\newcommand{\ch}{\mathsf{Ch}}

\newcommand\IM{\mathsf{IM}}
\newcommand{\lb}{\left(}
\newcommand{\rb}{\right)}

\newcommand{\dto}{\dashrightarrow}

\newcommand{\pa}{\mathsf{Pa}}
\newcommand{\de}{\mathsf{De}}
\newcommand{\dr}{\mathrm{d}}

\newcommand{\ant}{\mathsf{Ant}}

\newcommand{\pcc}{\mathsf{Pc}}
\newcommand{\dcc}{\mathsf{Dc}}
\newcommand{\re}{\mathsf{Re}}
\newcommand{\bu}{\mathsf{Bu}}
\newcommand{\id}{\mathrm{id}}
\newcommand{\dist}{\mathsf{dist}}
\newcommand{\pde}{\mathsf{PoDe}}
\newcommand{\pan}{\mathsf{PoAn}}
\newcommand{\pant}{\mathsf{PoAnt}}
\newcommand{\pdet}{\mathsf{PoPot}}
\newcommand{\pch}{\mathsf{PoCh}}
\newcommand{\idp}{\mathrm{sIDP}}

\newcommand{\SCM}{\Mc=(\Ic,\Vc,\Wc,\Xc,\Prb,f)}

\newcommand{\isADMG}{\Af=(\Ic,\Oc,\Sc,\Ec)}
\newcommand{\PAG}{\Pf=(\Ic,\Vc,\Ec)}
\newcommand{\MAG}{\Mf=(\Ic,\Vc,\Ec)}

\newcommand{\ilsADMG}{\Af=(\Ic,\Oc,\Lc,\Sc,\Ec)}
\newcommand{\Edges}{\{\tuh,\hut,\huo,\huh,\ouo,\ouh,\tuo,\out,\tut\}}
\newcommand{\sm}{\setminus}
\newcommand{\dcup}{\,\dot{\cup}\,}
\newcommand{\wrt}{w.r.t.\ }

\newcommand{\st}{s.t.\ }

\newcommand{\Ibbm}{\mathbbm{1} }

\newcommand{\cat}[1]{{\mathsf{#1}}}

\newcommand{\fci}[1]{\mathsf{FCI}\text{-}\mathcal{R}{#1}}
\newcommand{\sep}[2]{\underset{#2}{\stackrel{#1}{\perp}}}

\newcommand{\nsep}[2]{\underset{#2}{\stackrel{#1}{\not\perp}}}
\newcommand{\Ind}[2]{\underset{#2}{\stackrel{#1}{\,\ind\,}}}
\usepackage{tikz}
\usepackage{tikz-cd}
\usetikzlibrary{arrows,arrows.meta,calc,fit}
\usetikzlibrary{positioning,patterns,matrix}
\usetikzlibrary{decorations.markings,decorations.pathreplacing,decorations.pathmorphing}
\usetikzlibrary{shapes,shapes.arrows,shapes.geometric,shapes.multipart}
\tikzstyle{ndout} = [draw, semithick, shape=circle, minimum size=20pt,inner sep=0pt]
\tikzstyle{ndash} = [draw, dashed, shape=circle, minimum size=20pt,inner sep=0pt]
\tikzstyle{ndexo} = [draw, dashed, shape=circle, minimum size=20pt,inner sep=0pt, fill=lightgray]
\tikzstyle{ndlat} = [draw, semithick, shape=circle, minimum size=20pt,inner sep=0pt, fill=lightgray]
\tikzstyle{ndsel} = [draw, semithick, shape=regular polygon, regular polygon sides=3, minimum size=25pt,inner sep=0pt, fill=lightgray,shape border rotate=180]
\tikzstyle{arout} = [style={->,>=Latex}]
\tikzstyle{arlout} = [style={->,>=Latex}]
\tikzstyle{arlat} = [style={<->,>=Latex}]
\newcommand{\arrhead}{{Latex}}
\newcommand{\arrtail}{{}}
\newcommand{\arrstar}{Rays[n=6]}
\newcommand{\arrcirc}{{Circle[open]}}
\newcommand*{\out}[1][]{\mathrel{\tikz [baseline=-0.25ex,\arrcirc-\arrtail, #1] \draw [#1] (0pt,0.5ex) -- (1.3em,0.5ex);}}
\newcommand*{\hut}[1][]{\mathrel{\tikz [baseline=-0.25ex,\arrhead-\arrtail, #1] \draw [#1] (0pt,0.5ex) -- (1.3em,0.5ex);}}
\newcommand*{\tut}[1][]{\mathrel{\tikz [baseline=-0.25ex,\arrtail-\arrtail, #1] \draw [#1] (0pt,0.5ex) -- (1.3em,0.5ex);}}
\newcommand*{\tuh}[1][]{\mathrel{\tikz [baseline=-0.25ex,\arrtail-\arrhead, #1] \draw [#1] (0pt,0.5ex) -- (1.3em,0.5ex);}}
\newcommand*{\tuo}[1][]{\mathrel{\tikz [baseline=-0.25ex,\arrtail-\arrcirc, #1] \draw [#1] (0pt,0.5ex) -- (1.3em,0.5ex);}}
\newcommand*{\huh}[1][]{\mathrel{\tikz [baseline=-0.25ex,\arrhead-\arrhead, #1] \draw [#1] (0pt,0.5ex) -- (1.3em,0.5ex);}}
\newcommand*{\ouo}[1][]{\mathrel{\tikz [baseline=-0.25ex,\arrcirc-\arrcirc, #1] \draw [#1] (0pt,0.5ex) -- (1.3em,0.5ex);}}
\newcommand*{\huo}[1][]{\mathrel{\tikz [baseline=-0.25ex,\arrhead-\arrcirc, #1] \draw [#1] (0pt,0.5ex) -- (1.3em,0.5ex);}}
\newcommand*{\ouh}[1][]{\mathrel{\tikz [baseline=-0.25ex,\arrcirc-\arrhead, #1] \draw [#1] (0pt,0.5ex) -- (1.3em,0.5ex);}}

\newcommand*{\ous}[1][]{\mathrel{\tikz [baseline=-0.25ex,\arrcirc-\arrstar, #1] \draw [#1] (0pt,0.5ex) -- (1.3em,0.5ex);}}
\newcommand*{\hus}[1][]{\mathrel{\tikz [baseline=-0.25ex,\arrhead-\arrstar, #1] \draw [#1] (0pt,0.5ex) -- (1.3em,0.5ex);}}
\newcommand*{\tus}[1][]{\mathrel{\tikz [baseline=-0.25ex,\arrtail-\arrstar, #1] \draw [#1] (0pt,0.5ex) -- (1.3em,0.5ex);}}
\newcommand*{\sut}[1][]{\mathrel{\tikz [baseline=-0.25ex,\arrstar-\arrtail, #1] \draw [#1] (0pt,0.5ex) -- (1.3em,0.5ex);}}
\newcommand*{\suh}[1][]{\mathrel{\tikz [baseline=-0.25ex,\arrstar-\arrhead, #1] \draw [#1] (0pt,0.5ex) -- (1.3em,0.5ex);}}
\newcommand*{\suo}[1][]{\mathrel{\tikz [baseline=-0.25ex,\arrstar-\arrcirc, #1] \draw [#1] (0pt,0.5ex) -- (1.3em,0.5ex);}}
\newcommand*{\sus}[1][]{\mathrel{\tikz [baseline=-0.25ex,\arrstar-\arrstar, #1] \draw [#1] (0pt,0.5ex) -- (1.3em,0.5ex);}}

\tikzstyle{ndint} = [draw, semithick, shape=rectangle, minimum size=20pt,inner sep=0pt]
\tikzstyle{ndout} = [draw, semithick, shape=circle, minimum size=20pt,inner sep=0pt]
\tikzstyle{ndlat} = [draw, semithick, shape=circle, minimum size=20pt,inner sep=0pt, fill=lightgray]
\tikzstyle{arint} = [style={->,>=Latex,thick}] 
\tikzstyle{arout} = [style={->,>=Latex,thick}] 
\tikzstyle{arlout} = [style={->,>=Latex,thick}] 
\tikzstyle{arlat} = [style={<->,>=Latex,thick}] 

\tikzstyle{out} = [style={o->,>=Latex,style=semithick}]
\tikzstyle{hut} = [style={<-,>=Latex,style=semithick}]
\tikzstyle{tut} = [style=semithick]
\tikzstyle{tuh} = [style={->,>=Latex,style=semithick}]
\tikzstyle{tuo} = [style={-o,style=semithick}]
\tikzstyle{huh} = [style={<->,>=Latex,style=semithick}]
\tikzstyle{ouo} = [style={o-o,style=semithick}]
\tikzstyle{huo} = [style={<-o,>=Latex,style=semithick}]
\tikzstyle{ouh} = [style={o->,>=Latex,style=semithick}]
\tikzstyle{ous} = [style={o-{Rays[n=6]},style=semithick}]
\tikzstyle{hus} = [style={<-{Rays[n=6]},>=Latex,style=semithick}]
\tikzstyle{tus} = [style={-{Rays[n=6]},style=semithick}]
\tikzstyle{sut} = [style={{Rays[n=6]}-,style=semithick}]
\tikzstyle{suh} = [style={{Rays[n=6]}->,>=Latex,style=semithick}]
\tikzstyle{suo} = [style={{Rays[n=6]}-o,style=semithick}]
\tikzstyle{sus} = [style={{Rays[n=6]}-{Rays[n=6]},style=semithick}]

\title{Complete Causal Identification from Ancestral Graphs under Selection Bias}
\date{\today}
\long\def\acks#1{\vskip 0.3in\noindent{\large\bf Acknowledgments}\vskip 0.2in
\noindent #1}
\author{Leihao Chen\thanks{Korteweg-de Vries Institute for Mathematics, University of Amsterdam, Amsterdam, the Netherlands; {\tt l.chen2@uva.nl}}
        ~~and~~
        Joris M.~Mooij\thanks{Korteweg-de Vries Institute for Mathematics, University of Amsterdam, Amsterdam, the Netherlands; {\tt j.m.mooij@uva.nl}}}
\begin{document}
\maketitle
\begin{abstract}
  Many causal discovery algorithms, including the celebrated FCI algorithm, output a Partial Ancestral Graph (PAG). PAGs serve as an abstract graphical representation of the underlying causal structure, modeled by directed acyclic graphs with latent and selection variables. This paper develops a characterization of the set of extended-type conditional independence relations that are invariant across all causal models represented by a PAG. This theory allows us to formulate a general measure-theoretic version of Pearl's causal calculus and a sound and complete identification algorithm for PAGs under selection bias. Our results also apply when PAGs are learned by certain algorithms that integrate observational data with experimental data and incorporate background knowledge.
\end{abstract}

\begin{keywords}
causal discovery, causal identification, conditional independence, graphical models, selection bias. 
\end{keywords}

\newpage

\tableofcontents

\newpage

\section{Introduction}

Pearl's celebrated causal calculus, together with standard probability calculus, provides a sound and complete method for identifying causal effects from the observational distribution under the assumption that the data are generated according to a causal model (with latent variables) whose (marginalized) graph is available \cite{Pearl1994prob_calculus_action,pearl95causal,pearl2009causality,shpitser2008complete,huang06pearl}. However, there is often insufficient knowledge for specifying a causal graph a priori. This motivates causal discovery, which aims to recover causal structure from (observational/experimental) data under suitable assumptions. For example, under standard assumption, the Fast Causal Inference (FCI) algorithm infers from observational data a partial ancestral graph (PAG), which represents the Markov equivalence class over observed variables induced by underlying causal directed acyclic graphs (DAGs) with the presence of latent variables and selection bias  \cite{spirtes2001causation,spirtes95causal,spirtes99alogorithm,richardson2002ancestral,zhang2006causal,Zhang08complete,Ali09mark_equiv_ancestral_graphs,Colombo14order_ind_constrint_based_causal_structure_learning,Colombo12learning_highdim_dag,Uhler2013Faithfulness}. This immediately raises the following question:
\begin{question}\label{question}
    Is it possible to derive a sound and complete causal identification method for PAGs?
\end{question}

A natural first attempt would be to enumerate all graphs in the Markov equivalence class represented by a learned PAG and then apply a standard identification procedure to each member. Unfortunately, this is infeasible as the number of graphs can grow exponentially with the number of nodes in general. This challenge has attracted substantial attention and has motivated methods that reason directly from PAGs. An early version of sound causal calculus was first developed for maximal ancestral graphs (MAGs) and PAGs by Zhang \cite{zhang2008causal}. Maathuis and Colombo \cite{Maathuis15generalized} and Perković and coauthors \cite{Perkovic15complete,perkovic18complete} considered generalizations of the covariate adjustment criterion and formula for PAGs. Jaber and coauthors further strengthened Zhang's result and derived a complete causal calculus and an identification algorithm for PAGs \cite{jaber19idencom,jaber22causal}. 

There is also a substantial parallel literature on causal identification and effect estimation under incomplete knowledge of the causal structure, not limited to PAGs. A full review is beyond the scope of this paper; we mention only a few illustrative examples \cite{HuVanDerPas2025SelectingValidAdjustmentSets,PerkovicKalischMaathuis2017CPDAGBK,Perkovic2020MPDAGID,WitteHenckelMaathuisDidelez2020EfficientAdjustment,Nandy2017JointInterventions,Hyttinen2015DoCalculusUnknown,vanDerZander2014ConstructingSeparators,Maathuis2009IDA}.

Notably, the above works on PAGs adopt the simplifying assumption of no selection bias. In practice, however, selection mechanisms are pervasive—arising unintentionally or by design—and can fundamentally obstruct causal inference. In particular, conditioning on a common effect of multiple variables can induce spurious dependence among them, even when they are independent in the underlying unselected population (Berkson’s paradox) \cite{cooper95causal_discovery_selecion,Heckman79SampleSB,hernan2004structural,Hernan2020WhatIf,chen2025foundationsstructuralcausalmodels,Elwert2014EndogenousSelectionBias,GreenlandPearlRobins1999CausalDiagrams,Munafo2018ColliderScope}. 

The goal of this paper is to develop a sound and complete causal identification method for PAGs that accommodates selection bias. Conceptually, our work is  inspired by thought-provoking works on statistical causality and (extended) conditional independence by Dawid and coauthors \cite{dawid10dag,Dawid21decision,constantinou17extendedci,Dawid2007Fundamentals,Dawid1979ConditionalIndependence}. Technically, it builds on foundational results on (partial) ancestral graphs, the FCI algorithm, causal identification under Markov equivalence without selection bias, and a specific notion of conditional independence\cite{richardson2002ancestral,zhang2006causal,jaber19idencom,jaber22causal,forre2021transitional}.

\subsection{Idea behind our approach}

From an interventionist perspective, causality concerns controlled \emph{changes} (interventions) and the \emph{invariants} they reveal in a system. Such causal invariants can often be phrased precisely as statements that certain controlled changes are \emph{irrelevant} for certain aspects of the system. In parallel, \emph{conditional independence} is a formal notion of irrelevance. Putting these two ideas side by side already points to a close connection between conditional independence and statistical causality, as stressed by Dawid in his David R.~Cox Foundations of Statistics Lecture \cite{Dawid2025CoxFoundationsLecture}.

Consider a ``causal'' acyclic directed mixed graph (ADMG) $\Af$ (e.g., \cref{fig:admg}). Here ``causal'' means that $\Af$ is endowed with the following interpretations.

\begin{definition}[Causal relation (graphical version)]\label{def:causal_relation_graph}
According to $\Af$:
\begin{enumerate}
    \item \label{item:c1} a directed edge $a\tuh b$ indicates that $X_a$ is a direct cause of $X_b$;
    \item \label{item:c2} a directed path $a\tuh \cdots \tuh b$ indicates that $X_a$ is a cause of $X_b$;
    \item \label{item:c3} a bidirected edge $a\huh b$ in the marginal graph $\Af_{\sm \{a,b\}^c}$, obtained by after marginalizing out all nodes except $a$ and $b$, indicates that $X_a$ and $X_b$ are confounded.
\end{enumerate}
\end{definition}

A key observation is that, once an explicit \emph{regime indicator} is introduced \cite{pearl95causal,spirtes2001causation,Dawid21decision}, these causal notions can be expressed as conditional independence in the sense of Dawid \cite{Dawid21decision}. Concretely, introduce a non-stochastic variable $X_{I_a}$ indicating the data-generating regime: $X_{I_a}=\star_{I_a}$ denotes the observational regime, whereas $X_{I_a}=x_a$ denotes the interventional regime in which $X_a$ is set to $x_a$. Graphically, this variable is represented by a square node $I_a$, indicating its non-stochastic nature, and added as a parent of $a$ (e.g., \cref{fig:admg}). With this augmentation, the causal relations in \cref{def:causal_relation_graph} admit a unified formulation in terms of conditional independence in Dawid's sense, read off graphically via the usual separation criterion for ADMGs. For notational simplicity, we use the term $d$-separation throughout, even in the ADMG setting where the standard term is $m$-separation.
\begin{definition}[Causal relation (conditional independence version)]\label{def:causal_relation_ci}
According to $\Af$:
    \begin{enumerate}
    \item \label{item:c1'} $X_a$ is not a direct cause of $X_b$ if $b \sep{\mathsf{d}}{\Af_{\Do(I_a,\{a,b\}^c)}} I_a$ where $\Af_{\Do(I_a,\{a,b\}^c)}$ is constructed from $\Af$ by adding regime indicator $I_a$ as a parent of $a$, deleting all edges with arrowheads towards nodes outside $\{a,b\}$, and marking all nodes in $\{a,b\}^c$ as square nodes;
    \item \label{item:c2'} $X_a$ is not a cause of $X_b$ if $b \sep{\mathsf{d}}{\Af_{\Do(I_a)}} I_a$;
    \item \label{item:c3'} there is no confounding between $X_a$ and $X_b$ if $b \sep{\mathsf{d}}{\Af_{\Do(I_a)}} I_a\mid a$.
\end{enumerate}
\end{definition}

\begin{example}\label{ex:admg}
    According to the causal ADMG $\Af$ in \cref{fig:admg}, we have: (i) $X_a$ is not a direct cause of $X_c$, (ii) $X_a$ is a cause of $X_c$, and (iii) there is no confounding between $X_a$ and $X_b$, since
\[
    c \sep{\mathsf{d}}{\Af_{\Do(I_a,b)}} I_a, \quad c \nsep{\mathsf{d}}{\Af_{\Do(I_a)}} I_a \quad \text{ and }\quad  b \sep{\mathsf{d}}{\Af_{\Do(I_a)}} I_a\mid a.
\]

\begin{figure}[ht]
\centering
\begin{tikzpicture}[scale=0.8, transform shape]
\begin{scope}[xshift=0]
    \node[ndout] (a) at (-1.5,0) {$a$};
    \node[ndout] (b) at (0,0) {$b$};
    \node[ndout] (c) at (1.5,0) {$c$};
    \draw[arout] (a) to (b);
    \draw[arout] (b) to (c);
    \draw[arlat, bend left] (b) to (c);
    \node at (0,-1) {$\Af$};
\end{scope}

\begin{scope}[xshift=7cm]
    \node[ndint] (Ia) at (-3,0) {$I_a$};
    \node[ndout] (a) at (-1.5,0) {$a$};
    \node[ndout] (b) at (0,0) {$b$};
    \node[ndout] (c) at (1.5,0) {$c$};
    \draw[arout] (Ia) to (a);
    \draw[arout] (a) to (b);
    \draw[arout] (b) to (c);
    \draw[arlat, bend left ] (b) to (c);
    \node at (-1,-1) {$\Af_{\Do(I_a)}$};
\end{scope}

\begin{scope}[xshift=14cm]
    \node[ndint] (Ia) at (-3,0) {$I_a$};
    \node[ndout] (a) at (-1.5,0) {$a$};
    \node[ndint] (b) at (0,0) {$b$};
    \node[ndout] (c) at (1.5,0) {$c$};
    \draw[arout] (Ia) to (a);
    \draw[arout] (b) to (c);
    \node at (-1,-1) {$\Af_{\Do(I_a,b)}$};
\end{scope}
\end{tikzpicture}
\caption{An ADMG $\Af$ and the derived graphs $\Af_{\Do(I_a)}$  and $\Af_{\Do(I_a,b)}$ in \cref{ex:admg}.}
\label{fig:admg}
\end{figure}
\end{example}

Moreover, Pearl’s causal calculus \cite{pearl95causal,pearl2009causality} admits the following reformulation, as first presented in \cite{spirtes2001causation}:
\begin{enumerate}
    \item If $A\sep{\mathsf{d}}{\Af_{\Do(D)}} B\mid C\cup D$, then $p(x_A\mid x_B,x_C \miid \Do(x_{D}))=p(x_A\mid x_C \miid \Do(x_{D}))$.
    \item If $A\sep{\mathsf{d}}{\Af_{\Do(I_B,D)}} I_B\mid B\cup C\cup D$, then $p(x_A\mid x_C \miid \Do(x_B,x_{D}))=p(x_A\mid x_C ,x_B\miid \Do(x_{D}))$.
    \item If $A\sep{\mathsf{d}}{\Af_{\Do(I_B,D)}} I_B \mid C\cup D$, then $p(x_A\mid x_C \miid \Do(x_B,x_{D}))=p(x_A\mid x_C\miid \Do(x_{D}))$.
\end{enumerate}
Compared to the original formulation in \cite{pearl95causal}, this formulation is syntactically simpler and often easier to apply; with an appropriate Markov property, its proof is more direct \cite{forre2021transitional}.

In summary, conditional independence provides a universal language for causal invariants in causal structures (modeled by ADMGs, possibly augmented with selection variables). To answer \cref{question} without ruling out selection bias, it therefore suffices to understand the causal invariants shared by all ADMGs with selection variables represented by a PAG, which manifest as a particular type of conditional independence statement. This is precisely the characterization delivered by our main results.

\subsection{Overview of the main results}\label{sec:intro_main_result}

We briefly summarize our main results. For illustration, we state a simplified version here; the full generality and rigor are deferred to later sections.

Let $\Gf=(\Vc,\Ec)$ be a MAG or a PAG. Let $D,T\subseteq \Vc$ be disjoint, let $A\subseteq \Vc\sm T$, and let $B,C\subseteq \{I_d\}_{d\in D}\cup T\cup \Vc$ be disjoint. Using an appropriate separation relation $\sep{*}{\bullet}$ together with a graph transformation $(\bullet)_{\Do(I_D,T)}$ (cf.\ \cref{def:sep_PAG,def:int_iADMG,def:int_mag,def:int_pag}), we prove that conditional independence
\[
    A \sep{*}{\Gf_{\Do(I_D,T)}} B \mid C\cup T
\]
is valid if and only if, for every ADMG $\Af$ with selection nodes $\Sc_\Af$ represented by $\Gf$, the following conditional independence holds (cf.\ \cref{thm:sep_hsint_pag_mag,thm:sep_hsint_mag})
\[
    A \sep{*}{\Af_{\Do(I_D,T)}} B \mid C\cup T\cup \Sc_{\Af}.
\]
As a consequence, we obtain a sound and complete causal calculus (\cf \cref{thm:causal_calculus_mag_pag}) for MAGs and PAGs, applicable to general (not necessarily discrete) interventional Markov kernels. Modulo measure-theoretic technicalities, the rules take the following form:
\begin{enumerate}
    \item If $A\sep{*}{\Gf_{\Do(D)}} B\mid C\cup D$, then for every ADMG $\Af$ with selection nodes $\Sc_\Af$ represented by $\Gf$,
    \[
        \Prb(X_A\mid X_B,X_C,X_{\Sc_\Af}=x_{\Sc_\Af} \miid \Do(X_{D}))
        =\Prb(X_A\mid X_C,X_{\Sc_\Af}=x_{\Sc_\Af} \miid \Do(X_{D})).
    \]
    \item If $A\sep{*}{\Gf_{\Do(I_B,D)}} I_B\mid B\cup C\cup D$, then for every such $\Af$,
    \[
        \Prb(X_A\mid X_C,X_{\Sc_\Af}=x_{\Sc_\Af} \miid \Do(X_B,X_{D}))
        =\Prb(X_A\mid X_B,X_C,X_{\Sc_\Af}=x_{\Sc_\Af} \miid \Do(X_{D})).
    \]
    \item If $A\sep{*}{\Gf_{\Do(I_B,D)}} I_B \mid C\cup D$, then for every such $\Af$,
    \[
        \Prb(X_A\mid X_C,X_{\Sc_\Af}=x_{\Sc_\Af} \miid \Do(X_B,X_{D}))
        =\Prb(X_A\mid X_C,X_{\Sc_\Af}=x_{\Sc_\Af} \miid \Do(X_{D})).
    \]
\end{enumerate}

Finally, we provide an algorithm $\Ac$ such that $\Ac(\Gf;A,B)$ returns $\textsc{Fail}$ if the target causal effect $\Prb(X_A\mid X_{\Sc_\Af}=x_{\Sc_\Af}\miid \Do(X_B))$ under selection is non-identifiable for at least one ADMG $\Af$ with selection nodes $\Sc_\Af$ represented by $\Gf$ (cf.\ \cref{thm:idp_complete}). Otherwise, it outputs an identifying functional for the target causal effect, expressed in terms of the observational distribution conditional on selection $\Prb(X_\Vc\mid X_{\Sc_\Af}=x_{\Sc_\Af})$, that is valid for every ADMG $\Af$ with selection nodes $\Sc_\Af$ represented by $\Gf$ (cf.\ \cref{thm:idp_sound}).

\subsection{Outline}

In \cref{sec:causal_mag_pag}, we develop our theory for causal MAGs and PAGs. Our treatment differs from the standard treatment in a way that broadens the scope of the theory. Building on the theory, \cref{sec:reason_PAGs} formulates a rigorous measure-theoretic causal calculus together with an adjustment criterion and formula. In \cref{sec:IDalg}, we give a formal definition of causal identification from PAGs under selection bias, present an identification algorithm, and establish its soundness and completeness. Background on structural causal models, (extended) conditional independence, graphical models, and the FCI algorithm is collected in \cref{sec:preliminary}. Some further discussion related to \cref{sec:causal_mag_pag,sec:IDalg} is deferred to \cref{sec:additional_discussion}. Proofs for the main results of \cref{sec:causal_mag_pag,sec:IDalg} appear in \cref{sec:pf_1,sec:pf_2}, with auxiliary results collected in \cref{sec:pf_auxiliary}.

\section{Causal MAGs and PAGs}\label{sec:causal_mag_pag}

In this section, we develop a basic framework for causal MAGs and for a broader class of graphs, termed SOPAGs with input nodes, with PAGs representing Markov equivalence classes as a special case. We then introduce the graph-manipulation operations and the corresponding notion of graphical separation, which together yield our precise characterization of separations in MAGs and SOPAGs with input nodes.

\subsection{Basic framework}\label{sec:basic_def}

\begin{definition}[Causal ilsADMG]\label{def:isADMG}
    A \textbf{(causal) ilsADMG} is a tuple \[\Af=(\Ic,\Oc,\Lc,\Sc,\Ec)\] whose underlying graph $\hat{\Af}=(\Ic,\Vc,\Ec)$, with $\Vc=\Oc\dcup \Lc\dcup \Sc$, is an ADMG with input nodes $\Ic$. Here, $\Ic$, $\Oc$, $\Lc$, and $\Sc$ are disjoint sets with the interpretation:
    \begin{multicols}{2}
    \begin{enumerate}
        \item $\Ic$ is the set of exogenous input nodes;

        \item $\Oc$ is the set of observed output nodes;

        \item $\Lc$ is the set of latent output nodes;

        \item $\Sc$ is the set of latent selection nodes.
    \end{enumerate}
    \end{multicols}
    If $\Lc=\emptyset$, we call $\Af$ an \textbf{isADMG}. Given an isADMG $\Af$, we often use $\Sc_\Af$ to refer to the set of latent selection nodes of $\Af$.
\end{definition}

\begin{remark}[Causal interpretation of isADMG]
    The rationale behind the definition of an isADMG is as follows:
    We start with a DAG $\Df=(\Ic,\Vc=\Oc\dcup \Lc\dcup \Sc,\Ec)$ interpreted as the augmented causal graph of an acyclic SCM where $\Oc$ is the set of observed endogenous nodes, $\Lc$ is the set of latent exogenous and endogenous nodes and $\Sc$ is the set of latent selection nodes. In other words, $\Df$ is the augmented causal graph of an acyclic s-iSCM (\cf \cref{def:s-scm}). We then marginalize out all latent nodes in $\Lc$ (\cf \cite[Definition~3.2.18]{forre2025mathematical}) to obtain an isADMG $\hat{\Af}=(\Ic,\Oc, \Sc,\hat{\Ec})$. 
\end{remark}

\begin{definition}[Inducing path/walk]\label{def:inducing_path}
  Let $\Gf=(\Ic,\Vc,\Ec)$ be a mixed graph with input nodes $\Ic$, output nodes $\Vc=\Oc\dcup \Lc\dcup \Sc$ and edges $\Ec$ of the types \[\Edges.\] Let $a,b\in \Ic\dcup\Oc$ be distinct nodes. A path/walk $\pi$ from $a$ to $b$ in $\Gf$ is called an \textbf{inducing path/walk from $a$ to $b$ relative to $\Lc$ given $\Sc$}, or simply an $(\Lc,\Sc)$\textbf{-inducing path/walk from $a$ to $b$} if: 
  \begin{enumerate}
      \item every non-endnode on $\pi$ is either in $\Lc$ or a collider, and

      \item every collider on $\pi$ is in $\anc_{\Gf}(\{a,b\}\dcup \Sc)$.
  \end{enumerate}
  We call an $(\emptyset,\Sc)$-inducing path/walk $\pi$ an $\Sc$-\textbf{inducing path/walk}, and an  $(\emptyset,\emptyset)$-inducing path/walk $\pi$ simply an \textbf{inducing path/walk}.
\end{definition}

\begin{definition}[Partial ancestral graphs with inputs (iPAGs)]\label{def:pag}
  A mixed graph $\PAG$ with input nodes $\Ic$, output nodes $\Vc$ and edges $\Ec$ of the types \[\Edges\] is called a \textbf{partial ancestral graph with inputs (iPAG)} if the following conditions hold:
  \begin{enumerate}
    \item there is at most one edge between any two distinct nodes and no edges between a node and itself;
    \item there are no arrowheads towards input nodes and no edges between any two input nodes (input variables are set outside of the system);
    \item the graph does not contain directed cycles, almost directed cycles, or triples of the form $a\suh b\tut c$ (``ancestral'');
    \item there is no inducing path between any two distinct non-adjacent nodes (``maximal'').
  \end{enumerate}
\end{definition}

\begin{remark}
    \begin{enumerate}
        \item Clause~3 also excludes partially directed cycles, i.e., an anterior path from $a$ to $b$ together with an edge $b\tuh a$ \cite[Section~2.4]{richardson2002ancestral}.
        \item If there are no input nodes and no circle edge marks, then this definition is equivalent to the definition of MAGs given by \cite[Definition~3.1, Section~3.7]{richardson2002ancestral}.
        \item In most literature, a PAG represents a Markov equivalence class of MAGs. Our definition of PAGs is more general and allows for a theory with wider applicability (\cf \cref{rem:sopag}).
    \end{enumerate}
\end{remark}

\begin{definition}[Potentially directed/anterior paths and possible graphical relations]\label{def:podirected_path}
Let $\PAG$ be an iPAG and $a,b\in \Vc$ be two distinct nodes. Let \[\pi:a\sus v_1\sus \cdots \sus v_{n-1}\sus b\] be a path in $\Pf$. 
\begin{enumerate}
    \item If none of the edges $v_i\sus v_{i+1}$ is of the form $v_i\hus v_{i+1}$ or $v_i\sut v_{i+1}$, then we call $\pi$ a \textbf{potentially directed path from $a$ to $b$ in $\Pf$}, call node $a$ \textbf{possible ancestor} (when $n=1$, \textbf{possible parent}) of node $b$ in $\Pf$, and call node $b$ \textbf{possible descendant} (when $n=1$, \textbf{possible child}) of node $a$ in $\Pf$.

    \item If none of the edges $v_i\sus v_{i+1}$ is of the form $v_i\hus v_{i+1}$, then we call $\pi$ \textbf{potentially anterior path from $a$ to $b$ in $\Pf$} and call node $a$ \textbf{possible anterior} of node $b$ in $\Pf$. 
\end{enumerate}
The notations for those sets of nodes are $\pan_{\Pf}(b)$, $\mathsf{PoPa}_{\Pf}(b)$, $\pde_{\Pf}(b)$, $\pch_{\Pf}(b)$, and $\pant_{\Pf}(b)$, respectively.
\end{definition}

\begin{definition}[Maximal ancestral graphs with inputs (iMAGs)]\label{def:mag}
  An iPAG $\PAG$ is called a \textbf{maximal ancestral graph with inputs (iMAG)} if all edges in $\Ec$ are of the type \[\{\tuh,\hut,\huh,\tut\}.\]
\end{definition}

One key observation of \cite{richardson2002ancestral} is that a MAG $\Mf$ is an ``appropriate representation'' of an lsADMG $\Af$ if: 
\begin{enumerate}[itemsep=0pt,label=(\roman*)]  
    \item adjacencies in $\Gf$ coincide with $(\Lc,\Sc)$-inducing paths in $\Af$, and
    \item marks of arrowheads and tails encode ancestorship: an arrowhead at a node $v$ forbids $v$ from being an ancestor of the adjacent node or of any selection node in $\Sc_\Af$, whereas a tail at $v$ asserts the opposite.
\end{enumerate} 
We now formalize this as a definition, which plays an essential role in the whole theory.

\begin{definition}[Graph representation]\label{def:graph_rep}
  Let $\PAG$ be an iPAG. Let $\Af=(\tilde{\Ic},\Oc,\Lc, \Sc,\tilde{\Ec})$ be an ilsADMG. We say that \textbf{$\Pf$ $(\Lc,\Sc)$-represents $\Af$} or  \textbf{$\Pf$ represents $\Af$} if
  \begin{enumerate}
    \item $\tilde{\Ic}=\Ic$ and $\Vc=\Oc$;

    \item two distinct nodes $a,b\in \Vc\cup \Ic$ are adjacent in $\Pf$ iff  $\{a,b\}\nsubseteq \Ic$ and there is an $(\Lc,\Sc)$-inducing path from $a$ to $b$ in $\Af$;

    \item if $a\hus b$ is in $\Pf$ then $a\notin \anc_{\Af}(\{b\} \dcup \Sc)$;

    \item if $a\tus b$ is in $\Pf$ then $a\in \anc_{\Af}(\{b\}\dcup \Sc)$.
  \end{enumerate}
\end{definition}

\begin{proposition}[Construction of MAG representation]\label{prop:mag_rep}
  Let $\ilsADMG$ be an ilsADMG. There exists a unique iMAG $\Mf$ that represents $\Af$. We denote this iMAG $\Mf$ by $\cat{MAG}(\Af)$.
\end{proposition}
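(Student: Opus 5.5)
Both existence and uniqueness follow the classical construction of Richardson and Spirtes, adapted to input nodes and to ADMGs rather than DAGs. The plan is to define a graph $\Mf=(\Ic,\Oc,\Ec_\Mf)$ explicitly from $\Af$, check the four clauses of \cref{def:graph_rep}, and then check the iPAG axioms of \cref{def:pag} with all edges in $\{\tuh,\hut,\huh,\tut\}$.

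\textbf{Construction and uniqueness.} For distinct $a,b\in\Ic\cup\Oc$ with $\{a,b\}\nsubseteq\Ic$, put an edge between $a$ and $b$ iff $\Af$ contains an $(\Lc,\Sc)$-inducing path between them. At each end, say at $a$, place a tail if $a\in\anc_\Af(\{b\}\dcup\Sc)$ and an arrowhead otherwise.

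Clauses 1--4 of \cref{def:graph_rep} then hold by construction. Uniqueness is immediate, since any representing iMAG has the same adjacencies by clause 2. Because an iMAG has only tails and arrowheads, and clauses 3 and 4 force the mark at each end by the dichotomy ancestor/non-ancestor, every mark is also determined.

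\textbf{Ancestrality and input constraints.} We must check that no arrowhead points into an input node and that no edge joins two inputs.
\begin{itemize}
\item Inputs have no parents in $\Af$, and so no incoming edges at all.
\item If $a\in\Ic$ and $a$ is adjacent to $b$, the inducing path leaves $a$ by an edge $a\tuh v$. Following the inducing path, $v$ is either $b$, a latent node, or a collider in $\anc(\{a,b\}\dcup\Sc)$. A collider cannot be an ancestor of $a$, since $a$ has no parents. Iterating this argument shows that $a\in\anc_\Af(\{b\}\dcup\Sc)$, so the mark at $a$ is a tail. The case of a latent non-collider needs the standard walk argument.
\item For ancestrality, first note that $a\tus b$ means $a\in\anc(b\cup\Sc)$, and $a\hus b$ means $a\notin\anc(b\cup\Sc)$. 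A directed or almost directed cycle in $\Mf$ would give $a\in\anc(b\cup\Sc)$ and $b\in\anc(a\cup\Sc)$ with an arrowhead at $a$. Since $a\notin\anc(\Sc)$ (else the mark at $a$ would be a tail), this forces $b\notin\anc(\Sc)$, hence $b\in\anc(a)$ and $a\in\anc(b)$, contradicting acyclicity of $\Af$.
\item A triple $a\suh b\tut c$ gives $b\in\anc(\{c\}\dcup\Sc)$ and $c\in\anc(\{b\}\dcup\Sc)$ together with $b\notin\anc(\{a\}\dcup\Sc)$. So $b\in\anc(c)$ and $c\in\anc(b)$ modulo $\Sc$, and we get a contradiction in the same way.
\end{itemize}

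\textbf{Maximality.} The main step is maximality (clause 4 of \cref{def:pag}): an inducing path in $\Mf$ between non-adjacent $a,b$ must not exist. I would follow Richardson--Spirtes, Theorem 4.18, in two steps.
\begin{enumerate}
\item Show that ancestral relations transfer: for $u,w\in\Oc$, $u\in\anc_\Mf(w)$ implies $u\in\anc_\Af(\{w\}\dcup\Sc)$. This holds by composing tail marks along the path.
\item Take an inducing path $a=v_0,\dots,v_n=b$ in $\Mf$, in which every inner node is a collider in $\anc_\Mf(\{a,b\})$. Replace each edge by its $(\Lc,\Sc)$-inducing path in $\Af$ and concatenate these into a walk in $\Af$.
\end{enumerate}
The hard part is proving that this concatenated walk can be shortened to an $(\Lc,\Sc)$-inducing path from $a$ to $b$, which would contradict non-adjacency. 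At each junction $v_i$, the node is a collider in $\Mf$, so it carries arrowheads from both sides. That means $v_i\notin\anc_\Af(\{v_{i\pm1}\}\dcup\Sc)$. From this we deduce that the $\Af$-subpaths enter $v_i$ with arrowheads, so $v_i$ is a collider on the walk. Its ancestorship to $\{a,b\}\dcup\Sc$ follows from step 1.

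Converting the inducing walk into an inducing path is the standard lemma. I would prove it by induction on walk length, cutting at repeated nodes while checking that collider and ancestor conditions are preserved. This is where I expect most of the technical care, especially for walks that start at input nodes.
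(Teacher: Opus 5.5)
Your proposal is correct and follows the paper's proof. It uses the same explicit construction: adjacency via $(\Lc,\Sc)$-inducing paths, each mark fixed by membership in $\anc_\Af(\{b\}\dcup\Sc)$, and the remark that an inducing path leaving an input node forces a tail there. The paper then takes ancestrality and maximality for granted, so your extra verification only adds rigor.

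Two small repairs are needed.
\begin{enumerate}
\item For an almost directed cycle $a\tuh\cdots\tuh b\huh a$ you do not get $b\in\anc_\Af(\{a\}\dcup\Sc)$. The directed part already yields $a\in\anc_\Af(\{b\}\dcup\Sc)$, which contradicts the arrowhead at $a$. The same one-line argument disposes of directed cycles without invoking acyclicity of $\Af$.
\item In the maximality step, a pair $a,b\in\Ic$ cannot be refuted by non-adjacency, since two input nodes are never adjacent. That case is vacuous, however: $\anc_\Mf(\{a,b\})=\{a,b\}$, so no inner node can be an admissible collider.
\end{enumerate}
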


\begin{remark}
   \begin{enumerate}
       \item It is important to note that the MAG transformation here is slightly different from the one in \cite{richardson2002ancestral}. See \cref{ex:mag_trans}. The difference stems from the presence of exogenous input nodes. If there are no exogenous input nodes, or if we model them as endogenous nodes (\cf \cref{def:endo_graph}), then the two transformations are the same. This is also the main reason why one cannot directly use the results of \cite{richardson2002ancestral} in many cases (such as \cref{prop:sep_mag}) when there are input nodes.

       \item A single isADMG may be represented by multiple iPAGs. All such iPAGs must have the same adjacencies; they may differ only in their marks of endnodes of edges. For example, an edge $a\suo b$ in an iPAG $\Pf^1$ representing $\Af$ could be $a\suh b$ or $a\sut b$ in another iPAG $\Pf^2$ representing $\Af$.
   \end{enumerate}
\end{remark}

\begin{definition}[Endogenized Graphs]\label{def:endo_graph}
    Let $\Gf=(\Ic,\Vc,\Ec)$ be a mixed graph with input nodes $\Ic$, output nodes $\Vc$ and edges $\Ec$ of the types \[\Edges.\] The endogenized graph  of $\Gf$ is the mixed graph $\Gf^*\coloneqq (\emptyset,\Ic\dcup \Vc,\Ec)$, obtained by reclassifying all input nodes as output nodes.
\end{definition}

\begin{example}[MAG transformation with input nodes]\label{ex:mag_trans}
Consider an isADMG $\Af^1$ where $a$ and $b$ are input nodes and an isADMG $\Af^2$ where $a$ and $b$ are output nodes, shown in \cref{fig:mag_trans}, together with their different MAG representations. 
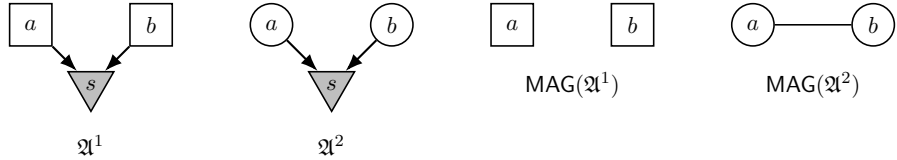
\begin{figure}[ht]
\centering
\begin{tikzpicture}[scale=0.8, transform shape]
\begin{scope}[xshift=0]
    \node[ndint] (a) at (-1,1) {$a$};
    \node[ndint] (b) at (1,1) {$b$};
    \node[ndsel] (s) at (0,0) {$s$};
    \draw[arout] (a) to (s);
    \draw[arout] (b) to (s);
    \node at (0,-1) {$\Af^1$};
\end{scope}

\begin{scope}[xshift=4cm]
    \node[ndout] (a) at (-1,1) {$a$};
    \node[ndout] (b) at (1,1) {$b$};
    \node[ndsel] (s) at (0,0) {$s$};
    \draw[arout] (a) to (s);
    \draw[arout] (b) to (s);
    \node at (0,-1) {$\Af^2$};
\end{scope}

\begin{scope}[xshift=8cm]
    \node[ndint] (a) at (-1,1) {$a$};
    \node[ndint] (b) at (1,1) {$b$};
    \node at (0,0) {$\cat{MAG}(\Af^1)$};
\end{scope}

\begin{scope}[xshift=12cm]
    \node[ndout] (a) at (-1,1) {$a$};
    \node[ndout] (b) at (1,1) {$b$};
    \draw[tut] (a) to (b);
    \node at (0,0) {$\cat{MAG}(\Af^2)$};
\end{scope}

\end{tikzpicture}
\caption{Two isADMGs $\Af^1$ and $\Af^2$ in \cref{ex:mag_trans} are the identical up to node type, whereas their MAG representations are different.}
\label{fig:mag_trans}
\end{figure}    
\end{example}

\begin{lemma}\label{lem:marg_and_mag}
    Let $\Af=(\Ic,\Oc,\Lc,\Sc,\Ec)$ be an ilsADMG. Then we have $\cat{MAG}(\Af_{\sm \Lc})=\cat{MAG}(\Af)$.
\end{lemma}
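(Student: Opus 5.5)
By the uniqueness part of \cref{prop:mag_rep}, it suffices to show that the iMAG $\Mf\coloneqq\cat{MAG}(\Af)$ also represents the isADMG $\Af_{\sm\Lc}=(\Ic,\Oc,\emptyset,\Sc,\Ec')$ in the sense of \cref{def:graph_rep}, i.e.\ that it $(\emptyset,\Sc)$-represents $\Af_{\sm \Lc}$. Condition~1 is immediate since marginalization keeps $\Ic$ and $\Oc$. For conditions~3 and~4 I will use the standard fact that marginalizing out $\Lc$ preserves ancestral relations among the remaining nodes: for $v,w\in \Ic\cup\Oc\cup\Sc$, $v\in\anc_{\Af}(w)$ iff $v\in\anc_{\Af_{\sm\Lc}}(w)$. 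A directed path through latent nodes collapses to a directed edge in the marginal, and a directed edge in the marginal lifts to such a path. Hence $\anc_{\Af}(\{b\}\dcup\Sc)\cap(\Ic\cup\Oc)=\anc_{\Af_{\sm\Lc}}(\{b\}\dcup\Sc)\cap(\Ic\cup\Oc)$, and the arrowhead and tail conditions transfer verbatim.

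The real content is condition~2. For $a,b\in\Ic\cup\Oc$ with $\{a,b\}\not\subseteq\Ic$, I must show that there is an $(\Lc,\Sc)$-inducing path from $a$ to $b$ in $\Af$ iff there is an $\Sc$-inducing path from $a$ to $b$ in $\Af_{\sm\Lc}$. I plan to work with walks and reduce to paths at the end, using the routine fact that an inducing walk can be shortened to an inducing path between the same endnodes.

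For the direction ($\Leftarrow$), take an $\Sc$-inducing path in $\Af_{\sm\Lc}$ and replace each marginal edge by a witnessing walk in $\Af$ whose interior nodes all lie in $\Lc$. By the definition of marginalization (\cite[Definition~3.2.18]{forre2025mathematical}), such a walk has the same endmarks as the marginal edge. Original colliders remain colliders, and ancestry is preserved, so they stay in $\anc_{\Af}(\{a,b\}\dcup\Sc)$. All new interior nodes are latent. The result is an $(\Lc,\Sc)$-inducing walk, which I then shorten to a path.

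For the direction ($\Rightarrow$), take an $(\Lc,\Sc)$-inducing path $\pi$ in $\Af$. Its non-latent interior nodes are colliders lying in $\anc_\Af(\{a,b\}\dcup\Sc)$. I split $\pi$ at these observed or selection colliders into maximal segments whose interiors lie in $\Lc$. The goal is to replace each segment by a marginal edge, or by a short inducing walk, in $\Af_{\sm\Lc}$. The difficulty is that a latent interior node on a segment may itself be a collider, and then the segment need not correspond to a single marginal edge. Such a latent collider $l$ lies in $\anc_\Af(\{a,b\}\dcup\Sc)$, so there is a directed path from $l$ to some $t\in\{a,b\}\dcup\Sc$. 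I intend to reroute through this path: go from the segment down to the first non-latent node $t'$ on it. That $t'$ is a marginal child of the neighbours' latent chains, so $t'$ appears as a collider on the rerouted walk. Since $t'$ is an ancestor of $t$, it lies in $\anc(\{a,b\}\dcup\Sc)$, as an inducing walk requires. This is the standard argument of \cite{richardson2002ancestral}, and I expect it to be the main obstacle because of its bookkeeping: one has to check endmarks at the junctions, and handle the cases where $t'\in\{a,b\}$ or where $t'$ is an input node, which can have no arrowheads into it. In the input-node case, arrowheads are impossible, so the directed path from $l$ never reaches an input node; this means $t'\notin\Ic$, which resolves that case. An induction on the number of latent colliders then yields an $\Sc$-inducing walk in $\Af_{\sm\Lc}$, and shortening gives the required path.
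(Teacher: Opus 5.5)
Your argument is correct. The paper itself states this lemma in Section~2.1 without proof, so there is no written argument to compare against. Your reduction is exactly what the explicit construction of $\cat{MAG}$ in \cref{prop:mag_rep} calls for: by uniqueness, it suffices to show preservation of ancestral relations together with equivalence of $(\Lc,\Sc)$-inducing paths in $\Af$ and $\Sc$-inducing paths in $\Af_{\sm\Lc}$.

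For the inducing-path equivalence, the route suggested by the paper's surrounding results goes through separation. One uses that there is an inducing path between $a$ and $b$ iff they are not separated given $\Sc\cup Z$ for any admissible $Z$ (\cite[Theorem~4.2]{richardson2002ancestral}, and the unlabelled lemma following \cref{lem:inducing_path}). One then uses that latent projection preserves separation among the retained nodes. That route is shorter given the citations. However, it needs the characterization with latent nodes present and with input nodes under $id$-separation, and the paper only records it for isADMGs. Your rerouting argument is purely combinatorial and avoids that machinery.

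When you write it out, make the following routine points explicit:
\begin{itemize}
\item In ($\Leftarrow$), the interior nodes of the witnessing walks are non-colliders, not merely latent. Hence condition~2 of \cref{def:inducing_path} only concerns the original colliders.
\item In ($\Rightarrow$), the induction is cleanest if you replace an occurrence $p\suh l\hus q$ of a latent collider by the walk $p\suh l\tuh\cdots\tuh t'\hut\cdots\hut l\hus q$. This walk runs down the latent directed path from $l$ to $t'$ and back. It is still $(\Lc,\Sc)$-inducing and has one latent collider fewer.
\item Once no latent colliders remain, discard any latent-interior segment that starts and ends at the same node. That node remains a collider, since it is entered by arrowheads on both sides.
\item The case $t'\in\{a,b\}$ that you flagged needs no special treatment. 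On a walk, an interior occurrence of an endnode as a collider is allowed, and it disappears in the shortening.
\item The walk-to-path shortening works for the following reason. Every node on an inducing walk lies in $\anc_{\Af}(\{a,b\}\dcup\Sc)$: follow tails until you reach a collider or an endnode. A repeated non-latent interior node is a collider at each occurrence, so it stays a collider after the loop is cut.
\end{itemize}
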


Together with the properties of iADMG marginalization (\cf \cite[Section~3]{forre2025mathematical}), this lemma tells us that for our theory (\cref{thm:sep_hsint_mag,thm:sep_hsint_pag_mag}) there is no loss of generality in restricting attention to isADMGs rather than general ilsADMGs. The benefit is that we only need to consider $\Sc$-inducing paths/walks, rather than $(\Lc,\Sc)$-inducing paths/walks.

\begin{notation}
\begin{enumerate}
    \item  Let $\Gf$ be an iMAG or iPAG and $\Pf$ be an iPAG. We write
  \[
  \begin{aligned}
      [\Gf]_\Gs\coloneqq& \{\Af : \Af\text{ is an isADMG represented by }\Gf \}\\
    \text{ and } [\Pf]_\Ms\coloneqq & \{\Mf : \Mf \text{ is an iMAG and }  [\Mf]_\Gs\subseteq  [\Pf]_\Gs\}.
  \end{aligned}
\]

  \item  Similar to \cite[Section~6.1]{richardson2002ancestral}, we define the ``canonical isADMG'' $\cat{isADMG}(\Mf)=(\Ic,\Oc,\Sc,\tilde{\Ec})$ of an iMAG $\MAG$ as follows:
\begin{enumerate}
  \item $\Oc\coloneqq \Vc$, $\Sc\coloneqq \{s_{ab}: a\tut b \text{ in } \Mf\}/ \sim$, where $s_{ab}\sim s_{ba}$ for all $a,b\in \Vc$ with $a\tut b$ in $\Ec$; and
  \item replacing $a \tut b$ in $\Mf$ with $a \tuh s_{ab} \hut b$ and all other edges in $\Ec$ are left unchanged.
\end{enumerate}
\end{enumerate}
\end{notation}

 With this notation, we have $\Af\in [\cat{MAG}(\Af)]_\Gs$ and $\cat{isADMG}(\Mf)\in [\Mf]_\Gs$. Moreover, if $\Pf$ is an iPAG that represents an isADMG $\Af$, then $\cat{MAG}(\Af)\in [\Pf]_\Ms$.

\begin{remark}
    Let $\PAG$ be an iPAG and $\Mf^1,\Mf^2\in [\Pf]_\Ms$. Then $\Pf$, $\Mf^1$, and $\Mf^2$ have the same adjacencies. If there is an arrowhead $a\hus b$ or tail $a\tus b$ in $\Pf$, then there must be an arrowhead $a\hus b$ or an tail $a\tus b$ in $\Mf$ for every $\Mf\in [\Pf]_\Ms$. Note that the converse does not hold for arbitrary iPAGs.
\end{remark}

We now define the class of iPAGs that will be our main focus. In the special case without input nodes, it includes the class of CPAGs \cite[Definition~3.2.1]{zhang2006causal}, i.e., PAGs that represent Markov equivalence classes of MAGs.  

\begin{definition}[Completely/Sufficiently oriented PAGs]\label{def:SOPAG}
Let $\Af$ be an isADMG that is represented by an iPAG $\PAG$. We call $\Pf$ a \textbf{completely oriented PAG with inputs (iCOPAG)} and a COPAG when $\Pf$ does not have inputs, if
\begin{enumerate}[label=(P\arabic*), ref=(P\arabic*)]
    \item \label{sopag:p1} $\Pf$ is closed under all FCI orientation rules in \cref{alg:FCI} given the independence model of $\Af$ conditioned on $\Sc_\Af$;
    \item \label{sopag:p2} $\Pf$ does not contain $a\suh b\tuo c$ or $a\suh b\out c$;
    \item \label{sopag:p3} if $a\suh b\ous c$ is in $\Pf$, then $a\suh c$ must be in $\Pf$; and
    \item \label{sopag:p4} the orientation scheme of \cite[Definition~3.3.1 and Lemma~3.3.4]{zhang2006causal} applies to $\Pf$ and yields an iMAG $\Mf\in [\Pf]_\Ms$.
\end{enumerate}
We call $\Pf$ a \textbf{sufficiently oriented PAG with inputs (iSOPAG)} and an SOPAG when $\Pf$ does not have inputs, if $\Pf$ satisfies Properties~\ref{sopag:p1}, \ref{sopag:p2} and \ref{sopag:p3}.

\end{definition}

\begin{remark}[On the definition of COPAGs/SOPAGs]\label{rem:sopag}
Let $\Pf$ be the output of the FCI algorithm \cite{zhang2006causal} without input nodes given an oracle independence model of MAG $\Mf$. Then $\Pf$ represents the Markov equivalence class of $\Mf$ and is a COPAG \cite[Lemmas~A.1 and A.2]{Zhang08complete}.  In contrast, not all SOPAGs represent the Markov equivalence classes of MAGs. Some of them can represent \emph{restricted Markov equivalence classes} of $\Mf$, which occurs when background knowledge is incorporated to exclude some members from the Markov equivalence class of $\Mf$ \cite{bryan20tieredbackground,venkateswaran24completecausalexplanationexpert}. Also note that by allowing for exogenous input nodes in our iSOPAGs, our theory applies not only to causal inference from purely observational data, but also to settings combining observational and experimental data \cite{Mooij20JCI}.
\end{remark}

Zhang \cite{zhang2008causal} first identified the importance of the so-called ``visibility'' of an edge for formulating a causal calculus for MAGs without undirected edges. We generalize \cite[Definition~8]{zhang2008causal} to the case where we have exogenous input nodes and undirected edges.

\begin{definition}[(In)visible directed edges]\label{def:visible_edge}
  Let $\Gf=(\Ic,\Vc,\Ec)$ be a mixed graph. Let $a\in \Ic\cup \Vc$ and $b\in \Vc$. A directed edge $a\tuh b$ in $\Gf$ is called \textbf{visible} if: 
  \begin{enumerate}
      \item $a\in \Ic$, or

      \item $a\in \Vc$ and there is a node $c\in \Ic\cup \Vc$ such that

      \begin{enumerate}
          \item $c$ is not adjacent to $b$, and
          \item  there is either $c\suh a$ or a (definite collider) path $c\suh v_1\huh \cdots \huh v_{n-1} \huh a$ into $a$ in $\Gf$ for some $n\geq 2$ and $v_1\ldots,v_{n-1}\in \pa_{\Gf}(b)$.
      \end{enumerate}
  \end{enumerate}
  Otherwise, we call the edge $a\tuh b$ \textbf{invisible}.
\end{definition}

\subsection{Graph manipulation}

We first introduce hard and soft manipulation operations for iADMGs and iMAGs. For iMAGs, hard manipulation generalizes the upper manipulation operation of \cite[Definition~11]{zhang2008causal} to the setting with exogenous input nodes and selection bias. The definition of soft manipulation is inspired by \cite[Definition~3.2.14]{forre2025mathematical} and by the discussion in \cref{rem:intuition_soft_manipulation}.

\begin{definition}[Hard/soft manipulation on iADMGs]\label{def:int_iADMG}
    Let $\Af=(\Ic,\Vc,\Ec)$ be an iADMG and $A\subseteq \Ic\cup \Vc$. 

    \begin{enumerate}
        \item We define the \textbf{hard-manipulated iADMG} $\Af_{\Do(A)}=(\hat{\Ic},\hat{\Vc},\hat{\Ec})$ by: 
    \begin{enumerate}
        \item  $\hat{\Ic}\coloneqq \Ic\dcup (A\cap \Vc)$ and $\hat{\Vc}\coloneqq \Vc\sm A$; and

        \item  $\hat{\Ec}\coloneqq \Ec\sm \{b\suh a: a\in A\}$. 
    \end{enumerate}

    \item We define the \textbf{soft-manipulated iADMG} $\Af_{\Do(I_A)}=(\tilde{\Ic},\tilde{\Vc},\tilde{\Ec})$ by: 
    \begin{enumerate}
        \item $\tilde{\Ic}\coloneqq \Ic\dcup \{I_a\}_{a\in A\cap \Vc}$ and $\tilde{\Vc}\coloneqq \Vc$; and

        \item $\tilde{\Ec}\coloneqq \Ec\dcup \{I_a\tuh a: a\in A\cap \Vc\}$.
    \end{enumerate}
    \end{enumerate}
\end{definition}

\begin{definition}[Manipulation operations on iMAGs]\label{def:int_mag}
  Let $\MAG$ be an iMAG and $A\subseteq \Ic\cup \Vc$. 
\begin{enumerate}
  \item We define the \textbf{hard-manipulated iMAG $\Mf_{\Do(A)}\coloneqq (\tilde{\Ic},\tilde{\Vc},\tilde{\Ec})$} by:
  \begin{enumerate}
      \item $\tilde{\Ic}\coloneqq \Ic\dcup (A\cap \Vc)$ and  $\tilde{\Vc}\coloneqq \Vc \sm A$; and 
      \item $\tilde{\Ec}\coloneqq  \Ec \sm (\Ec_1\cup \Ec_2)$, where 
      \[
      \Ec_1 \coloneqq \{b\suh a: a\in A\sm \Ic\} \text{ and } \Ec_2\coloneqq \{a\sus b: a,b\in A\cup  \Ic\}.
      \]
  \end{enumerate}

  \item We define the \textbf{soft-manipulated iMAG $\Mf_{\Do(I_A)}\coloneqq (\tilde{\Ic},\tilde{\Vc},\tilde{\Ec})$} by:
  \begin{enumerate}
      \item $\tilde{\Ic}\coloneqq \Ic\dcup \{I_a\}_{a\in A\sm \Ic}$ and  $\tilde{\Vc}\coloneqq \Vc $; and

      \item $\tilde{\Ec}\coloneqq  \Ec \dcup \Ec_1 \dcup \Ec_2\dcup \Ec_3 \dcup \Ec_4 \dcup \Ec_5$, where
    \[
    \begin{aligned}
      \Ec_1&\coloneqq \{I_a\tuh a: a\in A\sm \Ic \text{ and } \exists b\in \Ic\cup \Vc \text{ s.t.\ } a\hus b \text{ in } \Mf\}\\
      \Ec_2&\coloneqq \{I_a\tut a: a\in A\sm \Ic \text{ and } \exists b\in \Ic\cup \Vc \text{ s.t.\ } a\tut b \text{ in } \Mf\}\\
      \Ec_3&\coloneqq\{I_a\tuo a: a\in A\sm \Ic \text{ and } \nexists b\in \Ic\cup \Vc \text{ s.t.\ } a\hus b \text{ or } a\tut b \text{ in } \Mf\}\\
      \Ec_4&\coloneqq\{I_a\tuh b: a\in A\sm \Ic,\ b\in \Vc \text{ and } a\tuh b \text{ is invisible in } \Mf\}\\
      \Ec_5&\coloneqq\{I_a\tut b: a\in A\sm \Ic,\ b\in \Vc \text{ and } a\tut b \text{ is an edge in } \Mf \}.
    \end{aligned}
    \]
  \end{enumerate}
\end{enumerate}
\end{definition}

\begin{remark}
In what follows, we show that a restricted class of separations statements holds in the hard-manipulated iMAG iff the same separations hold in every hard-manipulated isADMGs such that the isADMGs themselves are represented by that original iMAG. Although the construction of hard-manipulated iMAGs is straightforward, the reason that this construction enjoys the above mentioned property is not obvious without examining the proof of \cref{lem:sep_hint_mag}.
\end{remark}

\begin{remark}[On soft manipulation]\label{rem:intuition_soft_manipulation}
    Although the definition of soft-manipulated iMAGs may appear complicated at first glance,\footnote{They are actually iPAGs by definition.}  the idea behind it is quite simple: one runs over all $\Af\in [\Mf]_\Gs$ and takes the ``union'' of all the iMAGs $\cat{MAG}(\Af_{\Do(I_A)})$ to obtain an iPAG $\Pf=(\tilde{\Ic},\tilde{\Vc},\tilde{\Ec})$. To be more precise, we define
\begin{enumerate}
    \item  $\tilde{\Ic}\coloneqq \bigcup_{\Af\in [\Mf]_\Gs}\Ic_{\cat{MAG}(\Af_{\Do(I_A)})}=\Ic $
    \item  $\tilde{\Vc}\coloneqq \bigcup_{\Af\in [\Mf]_\Gs}\Vc_{\cat{MAG}(\Af_{\Do(I_A)})}=\Vc $
    \item  $\tilde{\Ec}\coloneqq ``\bigcup_{\Af\in [\Mf]_\Gs}\Ec_{\cat{MAG}(\Af_{\Do(I_A)})}\text{''}=\Ec \dcup \Ec_1 \dcup \Ec_2\dcup \Ec_3 \dcup \Ec_4 \dcup \Ec_5$,
\end{enumerate}
where in the definition of $\tilde{\Ec}$ we use circle marks when there are ambiguities in endnodes marks of edges from different iMAGs $\cat{MAG}(\Af_{\Do(I_A)})$ for $\Af\in [\Mf]_\Gs$ and we use \cref{lem:sint_mag_I,lem:sint_mag_II}. The ambiguity occurs when there are no arrowheads towards node $a$ and there is no $b\in \Ic\cup \Vc$ such that $a\tut b$ is in $\Mf$, because there could be some isADMG $\Af^1\in [\Mf]_\Gs$ having $a\in \anc_{\Af^1}(\Sc_{\Af^1})$ while some isADMG $\Af^2\in [\Mf]_\Gs$ having $a\notin \anc_{\Af^2}(\Sc_{\Af^2})$ by \cref{lem:anc_selec_I}. In this case, we can have 
\[I_a \tut a \text{ in } \cat{MAG}(\Af^1_{\Do(I_A)}) \quad  \text{ and } \quad I_a\tuh a \text{ in } \cat{MAG}(\Af^2_{\Do(I_A)}).\]  For the purpose of reading off the separation statements relevant here, one may replace circle marks by tails or arrowheads arbitrarily. Ultimately, one can show that this way of adding edges is minimal in the following sense: a separation statement holds in the soft-manipulated iMAG if and only if, for every isADMG represented by the original iMAG, the corresponding separation statement with the selection nodes added to the conditioning set holds in the corresponding soft-manipulated isADMG (cf. \cref{thm:sep_hsint_mag}).
\end{remark}

Let $\MAG$ be an iMAG and $A,B\subseteq \Ic\cup \Vc$ be disjoint. Given a soft-manipulated iMAG $\Mf_{\Do(I_A)}$, we define $(\Mf_{\Do(I_A)})_{\Do(I_B)}$ by first orienting all the circles (if any) as tails and then applying \cref{def:int_mag}.\footnote{Or one can directly apply \cref{def:int_pag}.} Then we have the following result showing that hard manipulations commute pairwise, and so do soft manipulations..

\begin{proposition}[Manipulations commute]\label{prop:int_com}
  Let $\MAG$ be an iMAG and $A,B\subseteq \Ic\cup \Vc$. Then we have
  \begin{enumerate}
    \item $(\Mf_{\Do(A)})_{\Do(B)}=\Mf_{\Do(A\cup B)}=(\Mf_{\Do(B)})_{\Do(A)}$;

    \item $(\Mf_{\Do(I_A)})_{\Do(I_B)}=\Mf_{\Do(I_{A\cup B})}=(\Mf_{\Do(I_B)})_{\Do(I_A)}$.

  \end{enumerate}
\end{proposition}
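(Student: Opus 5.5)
The plan is to prove both items directly from \cref{def:int_mag} by showing that each composite manipulation produces the same node sets and the same edge set as the one-step manipulation with $A\cup B$. Symmetry in $A$ and $B$ then gives the second equality in each item. Before composing, I would check that $\Mf_{\Do(A)}$ is again an iMAG: deleting edges and reclassifying nodes as inputs preserves ancestrality, and maximality follows because an inducing path in the manipulated graph lifts to one in $\Mf$. Likewise, $\Mf_{\Do(I_A)}$ with its circles oriented as tails must be an iMAG, so that the second manipulation is well defined. I would also fix the convention for $A\cap B\neq\emptyset$: a node already carrying an $I_a$ does not receive a second copy. Equivalently, one may first reduce to the disjoint case.

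\textbf{Hard manipulations.} Write $\Ic'=\Ic\dcup(A\cap\Vc)$ for the inputs of $\Mf_{\Do(A)}$. Applying $\Do(B)$ adds $B\cap(\Vc\sm A)$ to the inputs, giving $\Ic\dcup((A\cup B)\cap\Vc)$; the outputs become $\Vc\sm(A\cup B)$. The edges removed in the two steps are of two kinds.
\begin{itemize}
\item Edges with an arrowhead into a node of $(A\sm\Ic)\cup(B\sm\Ic')$. This set equals $(A\cup B)\sm\Ic$.
\item Edges with both endpoints in $A\cup\Ic$, or with both endpoints in $B\cup\Ic'$. Since $B\cup\Ic'\supseteq A\cup\Ic$ up to input nodes, this is exactly the set of edges with both endpoints in $A\cup B\cup\Ic$.
\end{itemize}
Removing edges is a set difference, so it does not matter that some edges are "removed twice". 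Hence the result equals $\Mf_{\Do(A\cup B)}$.

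\textbf{Soft manipulations.} Here the edges $\Ec_1,\dots,\Ec_5$ attached to $b\in B$ are determined by data about $b$ in the current graph. So it suffices to show that, for $b\in B\sm A$, passing from $\Mf$ to $\Mf_{\Do(I_A)}$ (with circles oriented as tails) leaves the following unchanged:
\begin{enumerate}
\item whether some edge has an arrowhead at $b$;
\item whether some undirected edge touches $b$;
\item the set of edges $b\tuh c$ and $b\tut c$ with $c\in\Vc$;
\item the visibility of each edge $b\tuh c$.
\end{enumerate}
Item (3) is clear, since new edges always have an input endpoint $I_a$ and $b$ is not such an input. For (1), a new arrowhead at $b$ comes from some $I_a\tuh b$, which is added only when $a\tuh b$ already exists in $\Mf$. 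For (2), a new undirected edge at $b$ is $I_a\tut b$ with $a\tut b$ already in $\Mf$, or $I_a\tut a$ from a reoriented $\Ec_3$ circle; the latter touches only $a\in A$, not $b$. Conversely, all edges of $\Mf$ survive, so nothing is lost. The edges added by the second step are $I_b$-edges, and these coincide with the $I_b$-edges of $\Mf_{\Do(I_{A\cup B})}$. The edges $I_a$-edges from the first step are untouched, because nodes in $A$ are not in $B\sm A$. Thus the union of edge sets matches.

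\textbf{Main obstacle: visibility (item (4)).} New input nodes $I_a$ might act as fresh witnesses that make a previously invisible edge $b\tuh c$ visible in $\Mf_{\Do(I_A)}$. (The reverse direction is harmless: old witnesses and old paths survive, and a new edge at the witness only changes its adjacency to $c$ via $I_a$-edges.) Suppose a witness uses $I_a\tuh v_1$ with $I_a$ not adjacent to $c$. Then $a\tuh v_1$ is an invisible edge of $\Mf$. If $a$ is not adjacent to $c$, then $a$ itself is a witness in $\Mf$, contradicting invisibility. If $a$ is adjacent to $c$, ancestrality applied to $a\tuh v_1\tuh c$ forces $a\tuh c$. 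Since $I_a$ is not adjacent to $c$, the edge $a\tuh c$ must be visible, witnessed by some $d$ not adjacent to $c$. The remaining step is to transport $d$'s witness path through $a$ to conclude that $a\tuh v_1$ is visible, which would be a contradiction. I would attempt this with a case analysis on whether $d$ is adjacent to $v_1$, using the maximality of $\Mf$ (no inducing path between $d$ and $c$) to exclude the bad configurations. This is the step I expect to require the most care.
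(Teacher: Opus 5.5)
Your decomposition matches the paper's. Clause~1 is immediate from \cref{def:int_mag}. For clause~2 one checks that the data defining $\Ec_1,\dots,\Ec_5$ at $b\in B\sm A$ is the same in $\Mf$ and in $\Mf_{\Do(I_A)}$. The only nontrivial point is the visibility of edges $b\tuh c$, which the paper settles with \cref{lem:vis_sint}. Your reduction of that point is right: old adjacencies are unchanged, so a new witness must be some $I_a$ not adjacent to $c$. But your argument stops exactly where the work is, so the proof has a gap at its one nontrivial step. There are two concrete problems.

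First, you miss a case. The arrow out of $I_a$ may be $I_a\tuh a\in\Ec_1$, with $a$ itself the first node $v_1$ of the collider path into $b$. Then your claim that ``$a\tuh v_1$ is an invisible edge of $\Mf$'' does not apply. This case is easy: $a\in\pa_\Mf(c)$, non-adjacency of $I_a$ and $c$ forces $a\tuh c$ to be visible, and a witness for $a\tuh c$ concatenated with $a\huh\cdots\huh b$ witnesses $b\tuh c$ in $\Mf$. Second, maximality of $\Mf$ is not what closes the main case. The contradiction has to come from the invisibility of $b\tuh c$ itself.

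Here is one way to finish. Let $d\suh w_1\huh\cdots\huh w_k\huh a$ witness $a\tuh c$, with $k\ge 0$, $w_j\in\pa_\Mf(c)$, and $d$ not adjacent to $c$. Write $w_0\coloneqq d$ and work backwards from $j=k$.
\begin{itemize}
\item The edge into $v_1$ found at the previous step (initially $a\tuh v_1$) is invisible, so $w_j$ is adjacent to $v_1$. Ancestrality then forces that edge to be into $v_1$.
\item If $j\ge 1$ and the edge is $w_j\huh v_1$, then $d\suh w_1\huh\cdots\huh w_j\huh v_1\huh\cdots\huh b$ (shortcut at repeated nodes) witnesses $b\tuh c$, a contradiction.
\item Otherwise $w_j\tuh v_1$, and this edge is invisible. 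Any witness for it extends along $w_j\huh\cdots\huh w_k\huh a$ to a witness for $a\tuh v_1$, since $w_j,\dots,w_k$ are by then all parents of $v_1$.
\item At $j=0$ you get $d\suh v_1$, and $d\suh v_1\huh\cdots\huh b$ witnesses $b\tuh c$, a contradiction.
\end{itemize}
This elementary route is a fine alternative to the paper's semantic one via \cref{lem:visible_edge_I,lem:visible_edge_II}. In the paper's route, if $b\tuh c$ is invisible in $\Mf$, some $\Af\in[\Mf]_\Gs$ contains $b\huh c$, and that edge survives soft manipulation.

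Finally, orienting the $\Ec_3$ circles as tails leaves $I_a\tut a$ in the composite, whereas $\Mf_{\Do(I_{A\cup B})}$ has $I_a\tuo a$. To get literal equality, apply \cref{def:int_pag} directly, as the paper's footnote allows.
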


This proposition shows that there is no ambiguity in iterating hard manipulations or in iterating soft manipulations: the order of application does not matter within each class. Note that even if $A$ and $B$ are disjoint, the soft manipulations and the hard manipulations do not commute, that is, 
\[(\Mf_{\Do(I_A)})_{\Do(B)}\ne(\Mf_{\Do(B)})_{\Do(I_A)}\] (\cf \cref{ex:count_int_com}).  We therefore define 
\[\Mf_{\Do(I_A,B)}\coloneqq (\Mf_{\Do(I_A)})_{\Do(B)},\] 
i.e., first a soft manipulation and then a hard manipulation. This is similar to \cite[Definition~11]{zhang2008causal} if we replace the soft manipulation and the hard manipulation with the lower manipulation and the upper manipulation, respectively.

\begin{example}[Hard manipulation and soft manipulation do not commute]\label{ex:count_int_com}
    Consider a MAG $\Mf$ consisting of $a\huh b\tuh c$. Then  $(\Mf_{\Do(a)})_{\Do(I_b)}$ consists of an isolated node $a$ together with $b\tuh c$ and $b\out I_b\tuh c$, but $(\Mf_{\Do(I_b)})_{\Do(a)}$ consists of an isolated node $a$ together with $I_b\tuh b\tuh c$.
\end{example}

After the discussion on iMAGs, we now turn to iPAGs.

\begin{definition}[Manipulation operations on iPAGs]\label{def:int_pag}
  Let $\PAG$ be an iPAG and $A\subseteq \Ic\cup \Vc$.

\begin{enumerate}
  \item We define the \textbf{hard-manipulated iPAG} $\Pf_{\Do(A)}\coloneqq (\tilde{\Ic},\tilde{\Vc},\tilde{\Ec})$ with: 
  \begin{enumerate}
      \item $\tilde{\Ic}\coloneqq \Ic\cup A$ and $\tilde{\Vc}\coloneqq \Vc \sm A$; and

      \item $\tilde{\Ec}\coloneqq  \left(\Ec\sm (\Ec_1\cup \Ec_2\cup \Ec_3)\right)\dcup \Ec_4$, where 
      \[
      \begin{aligned}
          &\Ec_1 \coloneqq \{b\suh a: a\in A\sm \Ic,\ b\suh a \text{ in } \Ec\},\\
          &\Ec_2\coloneqq \{b\suo a: a\in A\sm \Ic,\ b\in \Vc\sm A,\ b\suo a\text{ in }\Ec \}\\
          &\Ec_3\coloneqq \{a\sus b: a,b\in \Ic\cup A,\ a\sus b \text{ in } \Ec\}, \text{ and }\\
          &\Ec_4\coloneqq \{b\sut a: b\suo a \text{ in } \Ec_2\}.
      \end{aligned}
      \]
  \end{enumerate}
  
  \item We define the \textbf{soft-manipulated iPAG} $\Pf_{\Do(I_A)}\coloneqq (\tilde{\Ic},\tilde{\Vc},\tilde{\Ec})$ with: 
  \begin{enumerate}
      \item $\tilde{\Ic}\coloneqq \Ic\dcup \{I_a\}_{a\in A\sm \Ic} $ and $\tilde{\Vc}\coloneqq \Vc $; and

      \item $\tilde{\Ec}\coloneqq  \Ec \dcup \Ec_1 \dcup \Ec_2\dcup \Ec_3 \dcup \Ec_4 \dcup \Ec_5 \dcup \Ec_6 \dcup \Ec_7$, where
    \[
    \begin{aligned}
      \Ec_1&\coloneqq \{I_a\tuh a: a\in A\sm \Ic \text{ and } \exists b\in \Vc\cup \Ic  \text{ s.t.\ } a\hus b \text{ in } \Pf\}\\
      \Ec_2&\coloneqq \{I_a\tut a: a\in A\sm \Ic \text{ and } \exists b\in \Vc\cup \Ic \text{ s.t.\ } a\tut b \text{ in } \Pf\}\\ 
      \Ec_3&\coloneqq\{I_a\tuo a: a\in A\sm \Ic \text{ and } \nexists b\in \Vc\cup \Ic \text{ s.t.\ } a\hus b \text{ or } a\tut b \text{ in } \Pf
      \}\\
      \Ec_4&\coloneqq\{I_a\tuh b: a\in A\sm \Ic \text{ and } \big((a\tuh b \text{ is invisible in $\Pf$) or } (a\ouh b\text{ in } \Pf)\big)\}\\
      \Ec_5&\coloneqq\{I_a\tut b: a\in A\sm \Ic, b\in \Vc \text{ and } a\tut b  \text{ in } \Pf\}\\
      \Ec_6&\coloneqq\{I_a\tut b: a\in A\sm \Ic, b\in \Vc \text{ and } a\out b \text{ \st } \exists c\tut b  \text{ in } \Pf\}\\
      \Ec_7&\coloneqq\{I_a\tuo b: a\in A\sm \Ic \text{ and }\\
        &\qquad \qquad \big((a\tuo b \text{ in }\Pf)\text{ or }( a\ouo b \text{ in }\Pf) \text{ or } (a\out b \text{ \st } \nexists c\tut b \text{ in } \Pf)\big)\}.
    \end{aligned}
    \]
  \end{enumerate}
\end{enumerate}
\end{definition}

\begin{remark}
    In general $\Pf_{\Do(I_A)}$ is not an iSOPAG even if $\Pf$ is sufficiently oriented. 
\end{remark}

\subsection{Graphical separation}

We now introduce a notion of graphical separation for hard- and soft-manipulated iPAGs, iMAGs, and iADMGs, which we call $id$-separation. It is inspired by \cite[Definition~3.4.3]{forre2025mathematical} and \cite[Definition~4]{zhang2008causal}. It is the appropriate notion for establishing the theory developed in \cref{prop:sep_mag,thm:sep_hsint_mag,thm:sep_hsint_pag_mag,thm:causal_calculus_mag_pag}. Note that this notion of separation is asymmetric. When applied to a MAG or an ADMG (no input nodes), $id$-separation reduces to the usual $d$-separation, i.e., the standard $m$-separation criterion (see, e.g., \cite{pearl2009causality,richar03mark,richardson2002ancestral} for the definition of $m$-separation).

\begin{definition}[Graphical separation]\label{def:sep_PAG}
Let $\Gf=(\Ic,\Vc,\Ec)$ be an iPAG/iMAG/iADMG. Let $D,T \subseteq \Ic\cup \Vc$ be disjoint and $C\subseteq\Ic\cup \Vc\cup \{I_d\}_{d\in D}$ and ($n\geq 0$) \[\pi: v_0 \sus \cdots \sus v_n\]  be a path/walk in $\Hf\coloneqq \Gf_{\Do(I_D,T)}$.

We say that the path/walk $\pi$ is \textbf{$C$-$id$-open} or \textbf{$id$-open given $C$} if:
    \begin{enumerate}
        \item [(i)] $v_0 \notin C$ and $v_n \notin C$; and

        \item [(ii)] every pair of adjacent edges in $\pi$ is of one of the following forms:
            $$
            \begin{array}{rlll}
            id\text{-non-collider: } & v_{i-1} \sut v_i \sus v_{i+1}  & \text { with } & v_i \notin C; \\
            id\text{-non-collider: } & v_{i-1} \sus v_i \tus v_{i+1} & \text { with } & v_i \notin C ; \\
            id\text{-non-collider: } &  v_{i-1} \suo v_i \ous v_{i+1}   & \text { with } & v_i \notin C \text{ and unshielded }; \\
            id\text{-collider: } & v_{i-1} \suh v_i \hus v_{i+1} & \text { with } & v_i \in \anc_{\Hf}(C),\{v_j\}_{j=i-1}^{i+1}\subseteq \Ic\cup \Vc;\\
            id\text{-collider: } & v_{i-1} \suh v_i \hus v_{i+1} & \text { with } & v_i \in \pan_{\Hf}(C\cap \Vc),\{v_j\}_{j=i-1}^{i+1}\nsubseteq \Ic\cup \Vc;\\
            id\text{-collider: } & v_{i-1} \suo v_i \hus v_{i+1} & \text { with } & v_i \in \pan_{\Hf}(C\cap \Vc),v_{i-1}\in \{I_d\}_{d\in D};\\
            id\text{-collider: } & v_{i-1} \suh v_i \ous v_{i+1} & \text { with } & v_i \in \pan_{\Hf}(C\cap \Vc), v_{i+1}\in \{I_d\}_{d\in D}.
            \end{array}
            $$
    \end{enumerate}
We say that the path/walk $\pi$ is \textbf{$C$-$id$-blocked} or \textbf{$id$-blocked by $C$} if it is not $C$-$id$-open. For convenience, we often omit the prefix ‘$id$-’ and simply say that a path/walk is open or blocked.

Let $A, B, C \subseteq \Ic\cup \Vc\cup \{I_d\}_{d\in D}$ (not necessarily disjoint) be subsets of nodes. We then say that:
 \textbf{$A$ is $id$-separated from $B$ given $C$ in $\Gf_{\Do(I_D,T)}$}, in symbols:
$$
A \underset{\Gf_{\Do(I_D,T)}}{\stackrel{\mathsf{id}}{\perp}} B \mid C,
$$
if every path/walk from a node in $A$ to a node in $B\cup \Ic\cup \{I_d\}_{d\in D}\cup T$ is $id$-blocked by $C$.
\end{definition}

\begin{remark}[Potentially directed vs.\ definitely directed]
To motivate \cref{def:sep_PAG} in the case of iPAG $\Pf$, assume first for simplicity $\Ic=D=T=\emptyset$.  At a collider triple $v_{i-1}\suh v_i \hus v_{i+1}$, one may declare the path to be open given $C$ either if (i) $v_i\in\pan_{\Pf}(C)$ or if (ii) $v_i\in\anc_{\Pf}(C)$. The purpose of \cref{def:sep_PAG} is to characterize, soundly and completely, the connecting paths that occur in some $\Mf\in[\Pf]_\Ms$. In the absence of selection bias, it was observed in \cite{zhang2008causal,jaber22causal} that option (i) is only sound but not complete, whereas option (ii) is both sound and complete. The incompleteness of (i) comes from the fact that, when orienting a PAG $\Pf$ to become a MAG $\Mf\in [\Pf]_\Ms$ such that a PAG-open path with multiple colliders becomes open given $C$ in $\Mf$, it may require making multiple potentially directed paths from these colliders to $C$ become definitely directed. Such an orientation target need not be achievable simultaneously in a single $\Mf\in[\Pf]_\Ms$.

When $D\neq\emptyset$, it is tempting to extend option (ii) verbatim (e.g., by treating a collider as open whenever it lies in $\anc_{\Pf_{\Do(I_D)}}(C)$), but this can fail to be sound. Our definition therefore adopts a hybrid criterion, combining potentially directed and definitely directed paths. While multiple potentially directed paths cannot always be made definite directed simultaneously in a single MAG, any single such path always can; see the proof of \cref{thm:sep_hsint_pag_mag}.
\end{remark}

\begin{proposition}[Graphical separation in MAG representation]\label{prop:sep_mag}
    Let $\MAG$ be an iMAG. Then for $A, B,C\subseteq \Ic\cup \Vc$  we have
    \[
        A\sep{\mathsf{id}}{\Mf} B\mid C \quad \Longleftrightarrow \quad \forall \Af\in [\Mf]_\Gs:\ A\sep{\mathsf{id}}{\Af} B\mid C\cup \Sc_\Af.
    \]
\end{proposition}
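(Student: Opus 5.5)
We prove the two directions separately. In both, the connecting objects are walks: by \cref{def:sep_PAG} with $D=T=\emptyset$, an open walk exists iff an open path exists. Here $id$-separation of $A$ from $B$ given $C$ in $\Mf$ means that no walk from $A$ to $B\cup\Ic$ is open given $C$. Colliders are open iff they lie in $\anc(C)$, since every triple consists of nodes in $\Ic\cup\Vc$.

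\textbf{($\Leftarrow$).} We argue by contraposition. Suppose there is a $C$-open walk $\pi$ in $\Mf$ from $a\in A$ to $b\in B\cup\Ic$. Take the canonical isADMG $\Af=\cat{isADMG}(\Mf)\in[\Mf]_\Gs$ and replace every undirected edge $u\tut w$ on $\pi$ by $u\tuh s_{uw}\hut w$.

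Non-colliders of $\pi$ remain non-colliders outside $C\cup\Sc_\Af$. Each new collider $s_{uw}$ lies in $\Sc_\Af$ and is therefore open. For an original collider $v\in\anc_\Mf(C)$, the directed path to $C$ in $\Mf$ can contain no undirected edge, since a tail-tail edge cannot follow an arrowhead by ancestrality. Hence it survives in $\Af$, so $v\in\anc_\Af(C)$. If $b\in\Ic$, it remains an input of $\Af$. Thus $\pi$ yields an open walk in $\Af$ given $C\cup\Sc_\Af$.

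\textbf{($\Rightarrow$).} Again by contraposition, fix $\Af\in[\Mf]_\Gs$ and a walk $\rho$ in $\Af$ from $a\in A$ to $b\in B\cup\Ic$ that is open given $C\cup\Sc_\Af$. We transfer $\rho$ to $\Mf$ following the strategy of Richardson--Spirtes \cite[Theorem~4.18]{richardson2002ancestral}.

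First, we may assume $\rho$ is a walk whose colliders are in $\anc_\Af(C\cup\Sc_\Af)$ and whose non-colliders are outside $C\cup\Sc_\Af$. Split $\rho$ at its nodes in $\Oc\cup\Ic$ that are non-colliders, or colliders lying in $\anc_\Af(C)$. Each resulting segment between consecutive such nodes $u,w$ has interior nodes that are selection nodes or colliders in $\anc_\Af(\{u,w\}\cup\Sc_\Af)$; this is the key claim. The claim is proved by rerouting: a collider in $\anc(C)$ but not in $\anc(\{u,w\}\cup\Sc)$ can be replaced using its directed path to $C$. Given the claim, each segment is an $\Sc$-inducing walk, hence contains an $\Sc$-inducing path, so $u,w$ are adjacent in $\Mf$ by \cref{def:graph_rep}.

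This produces a walk $u_0\sus u_1\sus\cdots\sus u_k$ in $\Mf$. By the ancestorship clauses of \cref{def:graph_rep}, each endpoint mark is a tail if the node is an ancestor of the neighbour or of $\Sc_\Af$, and an arrowhead otherwise. We then verify two conditions, using that ancestrality in $\Af$ relative to $\Sc_\Af$ corresponds to ancestrality in $\Mf$:
\begin{enumerate}
\item Every collider $u_i$ in $\Mf$ has arrowheads on both sides, so $u_i\notin\anc_\Af(\Sc_\Af)$, and the construction forces $u_i\in\anc_\Af(C)$. We show this implies $u_i\in\anc_\Mf(C)$, by following the directed path in $\Af$, collapsing its latent-free segments, and using the arrowhead semantics to get definite directed edges.
\item Every non-collider $u_i$ lies outside $C$.
\end{enumerate}
If $b\in\Ic$, it remains an input of $\Mf$.

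The main obstacle is this rerouting and mark analysis. One must show that a node in $\anc_\Af(C)$ but not in $\anc_\Af(\Sc_\Af)$ is an $\Mf$-ancestor of $C$. One must also handle segment endpoints that are ancestors of $\Sc_\Af$, which appear as tails or undirected edges; these must never be colliders, or else be absorbed into a longer inducing segment. We will handle this by induction on the number of colliders of $\rho$ not in $\anc_\Af(C)$, shortening the walk at each step. Throughout, input nodes require care, because $\Mf$ has no edges between inputs; this is why we use walks and allow $b\in\Ic$ as a target.
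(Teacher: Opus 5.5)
Your ($\Leftarrow$) direction is correct and is the paper's argument. The ($\Rightarrow$) direction has a genuine gap at your item~2 (``every non-collider $u_i$ lies outside $C$''). For the split rule you state, this is false even with no selection nodes and no inputs.

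Take $\Af$ with edges $a\tuh u$, $u\tuh w$, $u\huh w$, $w\tuh b$, with $\Sc_\Af=\emptyset$ and $C=\{u\}$. The path $a\tuh u\huh w\tuh b$ is open given $C$, and your split points are $a,u,w,b$. Since $u\in\anc_\Af(w)$, $\Mf=\cat{MAG}(\Af)$ contains $a\tuh u\tuh w\tuh b$ (plus $a\tuh w$). On your collapsed walk $u\in C$ is a non-collider, so the walk is blocked. The open walk in $\Mf$ is $a\tuh w\tuh b$. It arises by absorbing $u$ into the segment from $a$ to $w$, which is legitimate because $u$ is an ancestor of a \emph{neighbouring split point}, not of $\Sc_\Af$.

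Your obstacle paragraph only considers absorbing ancestors of $\Sc_\Af$. Your proposed induction on colliders not in $\anc_\Af(C)$ never touches $u$: such colliders already lie in $\anc_\Af(\Sc_\Af)$ and are interior to segments under your own rule. This is also why your ``key claim'' holds automatically and needs no rerouting.

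To repair this, shrink the split set. Repeatedly delete any split collider $u$ with $u\in\anc_\Af(\{u^-,u^+\}\cup\Sc_\Af)$, where $u^\pm$ are its current neighbouring split points. At termination, three things hold.
\begin{enumerate}
\item[(i)] By reverse induction on deletion order, every interior node of a final segment from $u_i$ to $u_{i+1}$ is a collider in $\anc_\Af(\{u_i,u_{i+1}\}\cup\Sc_\Af)$, so each segment is $\Sc$-inducing.
\item[(ii)] A non-collider split point with $u_i\tuh x$ toward $u_{i+1}$ satisfies $u_i\in\anc_\Af(\{u_{i+1}\}\cup\Sc_\Af)$. This is direct if $x=u_{i+1}$; otherwise it follows from (i) and $x\notin\anc_\Af(u_i)$. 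So $u_i$ keeps a tail in $\Mf$ and stays a non-collider outside $C$.
\item[(iii)] Each surviving collider has arrowheads on both sides in $\Mf$ and lies in $\anc_\Af(C)\setminus\anc_\Af(\Sc_\Af)$. Every edge of its directed path to $C$ is then a $\tuh$ edge of $\Mf$.
\end{enumerate}
Start from a path so split points are distinct, and truncate at the first interior input node.

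The paper avoids all of this. It endogenizes the inputs to obtain $\Af^*$ and observes that $\cat{MAG}(\Af^*)$ and $\Mf$ agree on every edge with an output endpoint. It then invokes \cite[Section~3.4.2 and Theorem~4.18]{richardson2002ancestral} to obtain an open walk in $\cat{MAG}(\Af^*)$ with all colliders in $C$ and no interior inputs, which is then an open walk in $\Mf$.
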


The following theorem clarifies the relationship between separation in a hard- and soft-manipulated iMAG $\Mf_{\Do(I_D,T)}$ and corresponding separations in the  hard- and soft-manipulated isADMG $\Af_{\Do(I_D,T)}$ where $\Af\in [\Mf]_\Gs$.

\begin{theorem}[Main result I: Separation in manipulated iMAG/isADMG]\label{thm:sep_hsint_mag}
    Let $\MAG$ be an iMAG and $D,T\subseteq \Vc$ be disjoint. Let $A\subseteq \Vc\sm T$ and $B,C\subseteq \Ic\cup \{I_d\}_{d\in D}\cup  \Vc$ be pairwise disjoint. Then we have
  \[
      A \sep{\mathsf{id}}{\Mf_{\Do(I_D,T)}} B\mid C\cup T \quad \Longleftrightarrow \quad \forall \Af\in[\Mf]_\Gs: \  A \sep{\mathsf{id}}{\Af_{\Do(I_D,T)}} B\mid C\cup T\cup \Sc_{\Af}.
  \]
\end{theorem}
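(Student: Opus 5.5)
The plan is to reduce the statement to \cref{prop:sep_mag}, applied to the manipulated graphs, by splitting $\Do(I_D,T)$ into its soft part $\Do(I_D)$ and its hard part $\Do(T)$. The key intermediate claim concerns the soft part. For every $\Af\in[\Mf]_\Gs$, the iMAG $\cat{MAG}(\Af_{\Do(I_D)})$ should be obtained from $\Mf_{\Do(I_D)}$ by resolving each circle mark into a tail or an arrowhead. Conversely, every such resolution that is relevant for separation should arise from some $\Af\in[\Mf]_\Gs$. This is exactly the content of \cref{rem:intuition_soft_manipulation}, and I would prove it using \cref{lem:sint_mag_I,lem:sint_mag_II} together with \cref{lem:anc_selec_I}.

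First, adjacency of $I_a$ with $b$ in $\cat{MAG}(\Af_{\Do(I_D)})$ means there is an $\Sc$-inducing path from $I_a$ to $b$ in $\Af_{\Do(I_D)}$. Any such path starts with $I_a\tuh a$ and then continues along an inducing path from $a$ to $b$ in $\Af$ that is into $a$ or ends with $b=a$. Such paths are detected in $\Mf$ in one of two ways. Either the edge $a\tuh b$ is invisible, which gives $\Ec_4$, or $a\tut b$ is present, which gives $\Ec_5$. The converse direction uses the canonical $\cat{isADMG}(\Mf)$, modified by adding a latent confounder on every invisible edge. This construction follows the visibility argument of Zhang, extended to handle selection nodes. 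The marks at $a$ on the edges $\Ec_1$--$\Ec_3$ are governed by whether $a\in\anc_\Af(\Sc_\Af)$. By \cref{lem:anc_selec_I}, this ancestry is forced exactly when there is an $a\tut b$ edge in $\Mf$, and it is forbidden when some edge $a\hus b$ is present. Otherwise both options are realisable.

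Second, I would handle the hard part. The operation $\Do(T)$ on iADMGs only deletes edges into $T$. I would show that $\cat{MAG}((\Af_{\Do(I_D)})_{\Do(T)})$ and $(\cat{MAG}(\Af_{\Do(I_D)}))_{\Do(T)}$ induce the same $id$-separations of the restricted form in the statement. The restriction matters: $A$ is disjoint from $T$, $T$ lies in the conditioning set, and paths may end in input nodes. This is presumably the content of \cref{lem:sep_hint_mag}. The proof would be by path surgery. An open walk in $\Af_{\Do(I_D,T)}$ given $C\cup T\cup\Sc_\Af$ is shortened into an $\Sc$-inducing-path skeleton, which yields a corresponding open path in the manipulated iMAG. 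In the other direction, every edge of an open path in $\Mf_{\Do(I_D,T)}$ is expanded into an inducing path in $\Af$. Colliders stay open because ancestors of $C\cup T\cup\Sc_\Af$ in $\Af$ correspond to ancestors of $C\cup T$ in the MAG. Ancestors of $\Sc_\Af$ show up as undirected edges, and these are harmless since they act as non-colliders.

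Combining the two steps gives the result. For ``$\Leftarrow$'', suppose there is an open path in $\Mf_{\Do(I_D,T)}$. I would choose a single $\Af$ whose circle resolutions make that particular path open, and then lift the path to $\Af_{\Do(I_D,T)}$. For ``$\Rightarrow$'', every $\Af$ corresponds to a resolution of the circles. Each such resolution must keep the path blocked, which follows by checking the $id$-collider and $id$-non-collider clauses of \cref{def:sep_PAG} for circle marks at the nodes $I_a$. I expect the main obstacle to be the realisability step. One must show that the resolutions needed for a given open path, with tails versus arrowheads at the various $I_a$, can be achieved simultaneously by one $\Af\in[\Mf]_\Gs$. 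The decisive issue is placing each relevant $a$ inside or outside $\anc_\Af(\Sc_\Af)$. I would try to settle this by adding, independently for each $a$, a fresh selection child $a\tuh s_a$ or not. I would then verify that this addition preserves $\Mf$ as the representing iMAG, using the fact that no edge into $a$ exists in that case.
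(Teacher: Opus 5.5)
Your architecture is the paper's. For ``$\Rightarrow$'' you resolve the circles of $\Mf_{\Do(I_D)}$ according to $\Af$ and then pass to $\Af_{\Do(I_D,T)}$ through a hard-manipulation step. The resolution gives an iMAG \emph{containing} $\cat{MAG}(\Af_{\Do(I_D)})$, not equal to it: $\Ec_4$ has an edge $I_a\tuh b$ for every invisible $a\tuh b$ of $\Mf$, and a given $\Af$ realizes only some of them (this is \cref{lem:sint_mag_III}). For ``$\Leftarrow$'' you build one $\Af$ per open path.

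The gap is in the hard-manipulation step, which you treat as routine. As you state it, it is false: for a fixed $\Af$, $\cat{MAG}(\Af_{\Do(T)})$ and $\cat{MAG}(\Af)_{\Do(T)}$ need not have the same separations. Take $x\tuh t\hut y$ and $t\tuh s$, with $s$ a selection node and $T=\{t\}$. Then $\cat{MAG}(\Af)$ is the undirected triangle on $x,y,t$, so $x\tut y$ survives in $\cat{MAG}(\Af)_{\Do(t)}$, although $x$ and $y$ are disconnected in $\Af_{\Do(t)}$. Your ``expand each edge into an inducing path'' direction therefore works only for a well-chosen $\Af$ such as $\cat{isADMG}(\Mf)$.

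The half you actually need is true: separation in $\cat{MAG}(\Af)_{\Do(T)}$ implies separation in $\Af_{\Do(T)}$ given $\Sc_\Af$. Your skeleton argument does not prove it, however. The marks of $\cat{MAG}(\Af)_{\Do(T)}$ reflect ancestry in $\Af$, not in $\Af_{\Do(T)}$. So an edge $u\huh v$ of $\cat{MAG}(\Af_{\Do(T)})$ may appear as $u\tuh v$, turning an open collider $u\in C$ into a blocking non-collider. Take $w\tuh u\tuh t\tuh v$ and $u\huh v$, with $T=\{t\}$ and $C=\{u\}$. The open path $w\tuh u\huh v$ of $\Af_{\Do(t)}$ lifts to $w\tuh u\tuh v$, which is blocked in $\cat{MAG}(\Af)_{\Do(t)}$; the open path there is the different edge $w\tuh v$.

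The missing idea is the one in \cref{lem:sep_hint_mag}:
\begin{itemize}
\item reduce to the case where every $t\in T$ has an arrowhead in $\Mf$;
\item show that such ``ghost'' edges are never undirected and are invisible in $\Mf_{\Do(T)}$;
\item keep the undirected stretches of the path intact, and reroute the directed/bidirected stretches using \cite[Lemma~12]{zhang2008causal}.
\end{itemize}
If you simply invoke \cref{lem:sep_hint_mag} (its soundness half), your ``$\Rightarrow$'' coincides with the paper's.

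For ``$\Leftarrow$'', the obstacle you single out does not arise, and one construction you rely on is not valid. A shortest open path in $\Mf_{\Do(I_D,T)}$ stops at the first node of $B\cup\Ic\cup\{I_d\}_{d\in D}\cup T$ it meets. So it contains at most one $I_d$, as its last node, and only the last edge has to be realized. The circle of $I_d\tuo d$ is irrelevant here: it occurs only when $d$ has neither an arrowhead nor an undirected edge in $\Mf$, so $d$ is a non-collider under either resolution. Your fresh-selection-child trick is therefore harmless but unnecessary.

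By contrast, adding a latent confounder on every invisible edge of $\cat{isADMG}(\Mf)$ can create new inducing paths and change the represented MAG (\cref{fig:countex_invis}), so it cannot be used to realize $\Ec_4$. What works is the paper's case analysis of the last edge:
\begin{itemize}
\item for $v_{n-1}\hut I_d$ with $v_{n-1}\neq d$, add only the single edge $v_{n-1}\huh d$ (\cref{lem:visible_edge_II}), and route through $d$ as a collider if $d\in C$ and as a non-collider otherwise;
\item for $v_{n-1}\tut I_d$ with $v_{n-1}\neq d$, add $s_{v_{n-1}d}\huh d$;
\item in the remaining cases $\cat{isADMG}(\Mf)$ suffices.
\end{itemize}
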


This theorem plays a central role in the subsequent development, so we record several remarks.

\begin{remark}\label{rem:sep_hsint_mag}
By \cref{thm:sep_hsint_mag},
\[
\begin{aligned}[t]
A \sep{\mathsf{id}}{\Mf_{\Do(I_D,T)}} B \mid C\cup T
&\ \Longrightarrow\ \forall \Af\in[\Mf]_\Gs:\ 
A \sep{\mathsf{id}}{\Af_{\Do(I_D,T)}} B \mid C\cup T\cup \Sc_\Af,\\
A \nsep{\mathsf{id}}{\Mf_{\Do(I_D,T)}} B \mid C\cup T
&\ \Longrightarrow\ \exists \Af\in[\Mf]_\Gs:\ 
A \nsep{\mathsf{id}}{\Af_{\Do(I_D,T)}} B \mid C\cup T\cup \Sc_\Af.
\end{aligned}
\]
Hence, the associated causal calculus rule for iMAGs is sound and \emph{atomically complete}: if a rule applies in the iMAG, it applies in every represented isADMG after conditioning on $X_{\Sc_\Af}$; if it does not apply, some represented isADMG witnesses this failure after conditioning on $X_{\Sc_\Af}$.
\end{remark}

\begin{remark}
In the second implication of \cref{rem:sep_hsint_mag}, the witnessing graph
$\Af\in[\Mf]_\Gs$ may depend on the particular triple $(A,B,C)$. In general, there is
no single isADMG $\Af$ represented by $\Mf$ such that, for all $A,B,C$,
\[
A \nsep{\mathsf{id}}{\Mf_{\Do(I_D,T)}} B \mid C\cup T
\ \Longrightarrow\
A \nsep{\mathsf{id}}{\Af_{\Do(I_D,T)}} B \mid C\cup T\cup \Sc_\Af.
\]
Equivalently, one may fail to find $\Af\in[\Mf]_\Gs$ with
\[
\mathsf{IM}(\Mf_{\Do(I_D)})=\mathsf{IM}\bigl(\Af_{\Do(I_D)}\mid \Sc_\Af\bigr),
\]
where $\mathsf{IM}(\Gf\mid \Sc)$ denotes the conditional independence model of $\Gf$ given $\Sc$; see \cref{ex:countex_ci} for a counterexample.

Consequently, the preceding result yields at most \emph{atomic completeness} of the causal calculus for iMAGs/iPAGs (\cf \cref{thm:causal_calculus_atomic_complete}). Since identification typically requires applying a \emph{sequence} of calculus rules, atomic completeness does not by itself imply the stronger form of completeness: namely, that failure to identify a functional via the calculus for an iMAG/iPAG guarantees the existence of a represented isADMG in which the functional is not identifiable.
\end{remark}

\begin{remark}\label{re:sint}
In the definition of $id$-separation, all input nodes are implicitly included on the right-hand side of the symbol $\perp$. This convention is essential in \cref{thm:sep_hsint_mag}. If one replaces $id$-separation by ordinary $d$-separation, the completeness direction in \cref{thm:sep_hsint_mag} may fail; in particular,
\[
\Bigl(\forall \Af\in[\Mf]_\Gs:\ A\sep{\mathsf{d}}{\Af_{\Do(I_D)}} B \mid C\cup \Sc_\Af\Bigr)
\;\centernot\Longrightarrow\;
A\sep{\mathsf{d}}{\Mf_{\Do(I_D)}} B \mid C .
\]
See \cref{ex:dvsd*} for a counterexample in which, for every $\Af\in[\Mf]_\Gs$
\[
    A\sep{\mathsf{d}}{\Af_{\Do(I_D)}} B \mid C\cup \Sc_\Af \quad \text{ and }\quad A\nsep{\mathsf{d}}{\Mf_{\Do(I_D)}} B \mid C .
\]
\end{remark}

\begin{remark}\label{rem:zhang}
In connection with \cref{re:sint}, recall that the converse of \cite[Corollary~13]{zhang2008causal} fails in general. By contrast, in our setting we have
\[
\Big(\forall \Af\in[\Mf]_\Gs:\
A \sep{\mathsf{id}}{\Af_{\Do(I_D)}} B \mid C\cup \Sc_\Af\Big)
\ \Longrightarrow\
A \sep{\mathsf{id}}{\Mf_{\Do(I_D)}} B \mid C .
\]
Accordingly, the counterexample in \cite{zhang2008causal} does not carry over; see Figure~\ref{fig:ex_zhang}. We have 
\[
b\nsep{\mathsf{id}}{\Mf_{\Do(I_a)}} c \mid a, \quad b\nsep{\mathsf{id}}{\Af^1_{\Do(I_a)}} c \mid a  \quad \text{ and } \quad b\nsep{\mathsf{id}}{\Af^2_{\Do(I_a)}} c \mid a.
\]

\begin{figure}[ht]
\centering
\begin{tikzpicture}[scale=0.8, transform shape]
\begin{scope}[xshift=0]
    \node[ndout] (a) at (0,0) {$a$};
    \node[ndout] (b) at (-1.5,0) {$b$};
    \node[ndout] (c) at (1.5,0) {$c$};
    \draw[arout] (a) to (b);
    \draw[arout] (a) to (c);
    \node at (0,-1) {$\Mf$};
\end{scope}

\begin{scope}[xshift=5cm]
    \node[ndout] (a) at (0,0) {$a$};
    \node[ndout] (b) at (-1.5,0) {$b$};
    \node[ndout] (c) at (1.5,0) {$c$};
    \draw[arout] (a) to (b);
    \draw[arout] (a) to (c);
    \draw[arlat, bend left] (b) to (a);
    \node at (0,-1) {$\Af^1$};
\end{scope}
\begin{scope}[xshift=10cm]
    \node[ndout] (a) at (0,0) {$a$};
    \node[ndout] (b) at (-1.5,0) {$b$};
    \node[ndout] (c) at (1.5,0) {$c$};
    \draw[arout] (a) to (b);
    \draw[arout] (a) to (c);
    \draw[arlat, bend left] (a) to (c);
    \node at (0,-1) {$\Af^2$};
\end{scope}

\begin{scope}[xshift=0,yshift=-4cm]
    \node[ndint] (I) at (0,1.5) {$I_a$};
    \node[ndout] (a) at (0,0) {$a$};
    \node[ndout] (b) at (-1.5,0) {$b$};
    \node[ndout] (c) at (1.5,0) {$c$};
    \draw[tuo] (I) to (a);
    \draw[arout] (I) to (b);
    \draw[arout] (I) to (c);
    \draw[arout] (a) to (b);
    \draw[arout] (a) to (c);
    \node at (0,-1) {$\Mf_{\Do(I_a)}$};
\end{scope}

\begin{scope}[xshift=5cm,yshift=-4cm]
    \node[ndint] (I) at (0,1.5) {$I_a$};    
    \node[ndout] (a) at (0,0) {$a$};
    \node[ndout] (b) at (-1.5,0) {$b$};
    \node[ndout] (c) at (1.5,0) {$c$};
    \draw[arout] (I) to (a);
    \draw[arout] (a) to (b);
    \draw[arout] (a) to (c);
    \draw[arlat, bend left] (b) to (a);
    \node at (0,-1) {$\Af^1_{\Do(I_a)}$};
\end{scope}

\begin{scope}[xshift=10cm,yshift=-4cm]
    \node[ndint] (I) at (0,1.5) {$I_a$};
    \node[ndout] (a) at (0,0) {$a$};
    \node[ndout] (b) at (-1.5,0) {$b$};
    \node[ndout] (c) at (1.5,0) {$c$};
    \draw[arout] (I) to (a);
    \draw[arout] (a) to (b);
    \draw[arout] (a) to (c);
    \draw[arlat, bend left] (a) to (c);
    \node at (0,-1) {$\Af^2_{\Do(I_a)}$};
\end{scope}
\end{tikzpicture}
\caption{Example about \cite[Corollary~13]{zhang2008causal} in \cref{rem:zhang}: $\Af^1$ and $\Af^2$ are represented by $\Mf$.}
\label{fig:ex_zhang}
\end{figure}
\end{remark}

\begin{example}[Counterexample to equality of conditional independence models]\label{ex:countex_ci}
Consider a MAG $\Mf$ and its soft-manipulated counterpart $\Mf_{\Do(I_a)}$ shown in Figure~\ref{fig:countex_ci}. We have 
\[
d\nsep{\mathsf{id}}{\Mf_{\Do(I_a)}} I_a\mid a\quad  \text{ and }\quad e\nsep{\mathsf{id}}{\Mf_{\Do(I_a)}} I_a\mid a,
\]
but there is no isADMG $\Af\in [\Mf]_\Gs$ such that both
\[
d\nsep{\mathsf{id}}{\Af_{\Do(I_a)}} I_a\mid \{a\}\cup \Sc_\Af \quad \text{ and }\quad  e\nsep{\mathsf{id}}{\Af_{\Do(I_a)}} I_a\mid \{a\}\cup \Sc_\Af
\]  
hold  simultaneously. This implies that \emph{for all} $\Af\in [\Mf]_\Gs$, we have $\mathsf{IM}(\Mf_{\Do(I_a)})\subsetneq\mathsf{IM}(\Af_{\Do(I_a)}\mid \Sc_\Af)$. 
\begin{figure}[ht]
\centering
\begin{tikzpicture}[scale=0.8, transform shape]
\begin{scope}[xshift=0]
    \node[ndout] (a) at (0,0) {$a$};
    \node[ndout] (b) at (-1.5,0) {$b$};
    \node[ndout] (c) at (1.5,0) {$c$};
    \node[ndout] (d) at (-3,0) {$d$};
    \node[ndout] (e) at (3,0) {$e$};
    \draw[arout] (a) to (b);
    \draw[arout] (a) to (c);
    \draw[arout] (b) to (d);
    \draw[arout] (c) to (e);
    \node at (0,-1) {$\Mf$};
\end{scope}

\begin{scope}[xshift=8cm]
    \node[ndint] (I) at (0,1.5) {$I_a$};
    \node[ndout] (a) at (0,0) {$a$};
    \node[ndout] (b) at (-1.5,0) {$b$};
    \node[ndout] (c) at (1.5,0) {$c$};
    \node[ndout] (d) at (-3,0) {$d$};
    \node[ndout] (e) at (3,0) {$e$};
    \draw[arout] (a) to (b);
    \draw[arout] (a) to (c);
    \draw[arout] (b) to (d);
    \draw[arout] (c) to (e);
    \draw[tuo] (I) to (a);
    \draw[arout] (I) to (b);
    \draw[arout] (I) to (c);
    \node at (0,-1) {$\Mf_{\Do(I_a)}$};
\end{scope}

\end{tikzpicture}
\caption{\cref{ex:countex_ci} showing the failure of $\mathsf{IM}(\Mf_{\Do(I_a)})=\mathsf{IM}(\Af_{\Do(I_a)}\mid \Sc_\Af)$.}
\label{fig:countex_ci}
\end{figure}
\end{example}

\begin{example}[Example for \cref{re:sint}]\label{ex:dvsd*}
    Consider a MAG $\Mf$ and its soft-manipulated counterpart $\Mf_{\Do(I_a)}$ shown in \cref{fig:dvsd*}. We have \[d\nsep{\mathsf{d}}{\Mf_{\Do(I_a)}} c\mid a\] as there is an open path $d\hut b\hut I_a\tut c$ from $d$ to $c$ given $a$ in $\Mf_{\Do(I_a)}$. As one can check, for all $\Af\in[\Mf]_\Gs$, we have 
    \[
    d\sep{\mathsf{d}}{\Af_{\Do(I_a)}} c\mid \{a\}\cup \Sc_\Af
    \] 
    since node $a$ cannot be a collider on a path from $d$ to $c$ in $\Af_{\Do(I_a)}$, e.g., $\Af^1_{\Do(I_a)}$ and $\Af^2_{\Do(I_a)}$ shown in \cref{fig:dvsd*}. Note that if $a$ is a collider on a path from $d$ to $c$ in an isADMG with the same set of observed nodes as $\Af$ (e.g., $\Af^3$ with $\Af^3_{\Do(I_a)}$ shown in \cref{fig:dvsd*}) then there must be an edge between $b$ and $c$ in its MAG representation ($\cat{MAG}(\Af^3)$), since there is an inducing path from $b$ to $c$ (there is a collider path from $b$ to $c$ with colliders $a,s\in \anc_{\Af^3}(\{s\})$). 
    
\begin{figure}[ht]
\centering
\begin{tikzpicture}[scale=0.8, transform shape]
\begin{scope}[xshift=0]
    \node[ndout] (a) at (0,0) {$a$};
    \node[ndout] (b) at (-1.5,0) {$b$};
    \node[ndout] (c) at (1.5,0) {$c$};
    \node[ndout] (d) at (-3,0) {$d$};
    \draw[arout] (a) to (b);
    \draw[tut] (a) to (c);
    \draw[arout] (b) to (d);
    \node at (-0.75,-1) {$\Mf$};
\end{scope}

\begin{scope}[xshift=7cm]
    \node[ndint] (I) at (0,1.5) {$I_a$};
    \node[ndout] (a) at (0,0) {$a$};
    \node[ndout] (b) at (-1.5,0) {$b$};
    \node[ndout] (c) at (1.5,0) {$c$};
    \node[ndout] (d) at (-3,0) {$d$};
    \draw[arout] (a) to (b);
    \draw[tut] (a) to (c);
    \draw[arout] (b) to (d);
    \draw[tut] (I) to (a);
    \draw[arout] (I) to (b);
    \draw[tut] (I) to (c);
    \node at (-0.75,-1) {$\Mf_{\Do(I_a)}$};
\end{scope}

\begin{scope}[xshift=0cm,yshift=-4cm]
    \node[ndint] (I) at (0,1.5) {$I_a$};
    \node[ndout] (a) at (0,0) {$a$};
    \node[ndout] (b) at (-1.5,0) {$b$};
    \node[ndout] (c) at (1.5,0) {$c$};
    \node[ndsel] (s) at (0.75,-1.25) {$s$};
    \node[ndout] (d) at (-3,0) {$d$};
    \draw[arout] (a) to (b);
    \draw[arout] (a) to (s);
    \draw[arout] (c) to (s);
    \draw[arout] (b) to (d);
    \draw[arout] (I) to (a);
    \draw[arlat, bend left] (a) to (b);
    \node at (-0.75,-2) {$\Af^1_{\Do(I_a)}$};
\end{scope}

\begin{scope}[xshift=7cm,yshift=-4cm]
    \node[ndint] (I) at (0,1.5) {$I_a$};
    \node[ndout] (a) at (0,0) {$a$};
    \node[ndout] (b) at (-1.5,0) {$b$};
    \node[ndout] (c) at (1.5,0) {$c$};
    \node[ndsel] (s) at (0.75,-1.25) {$s$};
    \node[ndout] (d) at (-3,0) {$d$};
    \draw[arout] (a) to (b);
    \draw[arout] (a) to (s);
    \draw[arout] (c) to (s);
    \draw[arout] (b) to (d);
    \draw[arout] (I) to (a);
    \draw[arlat, bend left] (a) to (s);
    \node at (-0.75,-2) {$\Af^2_{\Do(I_a)}$};
\end{scope}

\begin{scope}[xshift=14cm,yshift=-4cm]
    \node[ndint] (I) at (0,1.5) {$I_a$};
    \node[ndout] (a) at (0,0) {$a$};
    \node[ndout] (b) at (-1.5,0) {$b$};
    \node[ndout] (c) at (1.5,0) {$c$};
    \node[ndsel] (s) at (0.75,-1.25) {$s$};
    \node[ndout] (d) at (-3,0) {$d$};
    \draw[arout] (a) to (b);
    \draw[arout] (a) to (s);
    \draw[arout] (c) to (s);
    \draw[arout] (b) to (d);
    \draw[arout] (I) to (a);
    \draw[arlat, bend left] (a) to (b);
     \draw[arlat, bend left] (a) to (s);
    \node at (-0.75,-2) {$\Af^3_{\Do(I_a)}$};
\end{scope}

\end{tikzpicture}
\caption{MAG $\Mf$ in \cref{ex:dvsd*} such that $d\nsep{\mathsf{d}}{\Mf_{\Do(I_a)}} c\mid a$, but for $\Af^1,\Af^2\in [\Mf]_\Gs$, we have $d\sep{\mathsf{d}}{\Af^1_{\Do(I_a)}} c\mid \{a\}\cup \{s\}$ and $d\sep{\mathsf{d}}{\Af^2_{\Do(I_a)}} c\mid \{a\}\cup \{s\}$. We have $d\nsep{\mathsf{d}}{\Af^3_{\Do(I_a)}} c\mid \{a\}\cup \{s\}$, but $\Af^3\notin [\Mf]_\Gs$.}
\label{fig:dvsd*}
\end{figure}
\end{example}

The next theorem connects separations in hard and soft-manipulated iPAGs, iMAGs and iADMGs.

\begin{theorem}[Main result II: Separation in manipulated iSOPAG/iMAG/iADMG]\label{thm:sep_hsint_pag_mag}
    Let $\PAG$ be an iSOPAG representing some isADMG $\Af$ and $D,T\subseteq \Vc$ be disjoint. Let $A\subseteq \Vc \sm T$ and $B,C\subseteq \Ic\dcup \{I_d\}_{d\in D}\dcup \Vc$ be pairwise disjoint. Then we have
  \[
    \begin{aligned}
        A \sep{\mathsf{id}}{\Pf_{\Do(I_D,T)}} B\mid C\cup T \quad &\Longrightarrow \quad \forall \Mf\in[\Pf]_\Ms: \  A \sep{\mathsf{id}}{\Mf_{\Do(I_D,T)}} B\mid C\cup T,\\  
        \text{ and }\quad  A \sep{\mathsf{id}}{\Pf_{\Do(I_D,T)}} B\mid C\cup T \quad &\Longrightarrow \quad \forall \Af\in[\Pf]_\Gs: \  A \sep{\mathsf{id}}{\Af_{\Do(I_D,T)}} B\mid C\cup T\cup \Af_\Sc.
    \end{aligned}      
  \]  
  Furthermore, if $\Pf$ is an iCOPAG, then we have
  \[
    \begin{aligned}
        A \sep{\mathsf{id}}{\Pf_{\Do(I_D,T)}} B\mid C\cup T \quad &\Longleftrightarrow \quad \forall \Mf\in[\Pf]_\Ms: \  A \sep{\mathsf{id}}{\Mf_{\Do(I_D,T)}} B\mid C\cup T,\\  
        \text{ and }\quad  A \sep{\mathsf{id}}{\Pf_{\Do(I_D,T)}} B\mid C\cup T \quad &\Longleftrightarrow \quad \forall \Af\in[\Pf]_\Gs: \  A \sep{\mathsf{id}}{\Af_{\Do(I_D,T)}} B\mid C\cup T\cup \Af_\Sc.
    \end{aligned}      
  \]  
\end{theorem}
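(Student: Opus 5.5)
The plan is to reduce everything to \cref{thm:sep_hsint_mag}, so that the work concentrates on the first line of each pair, the iPAG-versus-iMAG statements. By definition $[\Pf]_\Gs=\bigcup_{\Mf\in[\Pf]_\Ms}[\Mf]_\Gs$: every $\Af\in[\Pf]_\Gs$ has $\cat{MAG}(\Af)\in[\Pf]_\Ms$, and conversely $[\Mf]_\Gs\subseteq[\Pf]_\Gs$. Hence, by \cref{thm:sep_hsint_mag} applied to each $\Mf\in[\Pf]_\Ms$,
\[
\forall \Mf\in[\Pf]_\Ms:\ A \sep{\mathsf{id}}{\Mf_{\Do(I_D,T)}} B\mid C\cup T
\iff
\forall \Af\in[\Pf]_\Gs:\ A \sep{\mathsf{id}}{\Af_{\Do(I_D,T)}} B\mid C\cup T\cup \Sc_\Af .
\]
Thus the second line of each pair follows from the first, and it remains to prove soundness (for iSOPAGs) and completeness (for iCOPAGs) of $id$-separation in $\Pf_{\Do(I_D,T)}$ relative to all $\Mf_{\Do(I_D,T)}$ with $\Mf\in[\Pf]_\Ms$.

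\textbf{Soundness.} I argue contrapositively. Fix $\Mf\in[\Pf]_\Ms$ and a $C\cup T$-$id$-open path $\pi$ in $\Mf_{\Do(I_D,T)}$ from $A$ to $B\cup\Ic\cup\{I_d\}\cup T$; I want an open path in $\Pf_{\Do(I_D,T)}$. First I compare the two manipulated graphs edge by edge using \cref{def:int_mag,def:int_pag}.
\begin{itemize}
\item Adjacencies agree, except that an added edge $I_a\sus b$ arises in $\Mf$ from an invisible $a\tuh b$ or from $a\tut b$. In $\Pf$ the corresponding edge is $a\tuh b$ (visibility in $\Pf$ implies visibility in $\Mf$, using \ref{sopag:p1}--\ref{sopag:p3}), or $a\ouh b$, $a\out b$, $a\ouo b$, $a\tuo b$, all of which are covered by $\Ec_4$--$\Ec_7$.
\item Marks in $\Pf_{\Do(I_D,T)}$ are either equal to or circles over marks in $\Mf_{\Do(I_D,T)}$. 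Hard manipulation turns $b\suo a$ into $b\sut a$, and this matches $\Mf$, where the arrowhead into $a$ is deleted.
\end{itemize}
Then I check $\pi$ triple by triple.
\begin{itemize}
\item Non-colliders in $\Mf$ remain non-colliders in $\Pf$, except triples $\suo v\ous$ that are shielded. Following the Zhang/Jaber approach, I first pass to a shortest (minimal) open path and use \ref{sopag:p2}, \ref{sopag:p3} and the FCI rules to rule out shielded circle triples, or to shortcut through the shielding edge.
\item Colliders in $\Mf$ with $v\in\anc(C)$ become $\pan$ in $\Pf$. For triples entirely inside $\Ic\cup\Vc$ I need definite ancestry in $\Pf$. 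Here I would use the standard lemma that if $v\in\anc_\Mf(C)$ along a path whose $\Pf$ version is not definitely directed, one can find a shorter collider path or a different collider; this is essentially the Jaber et al.\ argument, adapted with \ref{sopag:p2} to handle undirected edges from selection.
\end{itemize}

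\textbf{Completeness for iCOPAGs.} Given an open path $\pi$ in $\Pf_{\Do(I_D,T)}$, I construct a single $\Mf\in[\Pf]_\Ms$ in which its image is open. I take a minimal such path. Only the colliders of types with some $I_d$ involved use $\pan_{\Hf}(C\cap\Vc)$; colliders among $\Ic\cup\Vc$ already use definite ancestry, which survives in every $\Mf$. Minimality should force at most one collider needing a potentially directed path: a second one would yield a shortcut through the $I_d$-edges, since all $I_d$ edges to a common $d$ are adjacent to $I_d$.

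I then pick a shortest potentially directed path $\rho$ from that collider to $C$ and orient the circle component via the scheme in \ref{sopag:p4}, in the manner of Zhang's Lemma~3.3.4 with a chosen vertex ordering, so that $\rho$ becomes directed and the circle endpoints at the $I_d$-adjacent triple get the needed marks. Circles at $d$ on $I_d\tuo d$ resolve consistently with $\Ec_1$--$\Ec_3$. Finally I verify that non-collider triples $\suo v\ous$ (unshielded) stay non-colliders: the scheme creates no new unshielded colliders.

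\textbf{Main obstacle.} I expect the main obstacle to be this simultaneous orientation step, namely showing that one member of $[\Pf]_\Ms$ realizes all the required marks along $\pi$ together with the directedness of $\rho$. The soundness case analysis for the invisible-edge additions $\Ec_4,\Ec_6,\Ec_7$ is the secondary difficulty.
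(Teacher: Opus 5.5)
Your reduction of the isADMG statements to the iMAG statements is correct. You pass through \cref{thm:sep_hsint_mag} together with $[\Pf]_\Gs=\bigcup_{\Mf\in[\Pf]_\Ms}[\Mf]_\Gs$, and this is also how the paper obtains them.

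The genuine gap is in completeness. You claim that for a minimal open path $\pi$ in $\Pf_{\Do(I_D,T)}$ one can choose a single $\Mf\in[\Pf]_\Ms$ in which the image of $\pi$ itself is open, and this is false. Take the COPAG that FCI returns for the MAG $x\tuh v\hut y$, $x\tuh d\hut y$, $v\tuh d$, namely $x\ouh v\huo y$, $x\ouh d\huo y$, $v\ouo d$. Let $A=\{x\}$, $B=T=\emptyset$, $D=C=\{d\}$. Then $\Pf_{\Do(I_d)}$ contains $I_d\tuh d$ and $I_d\tuo v$, and $x\ouh v\out I_d$ is a shortest, irreducible open path given $\{d\}$, because $v\in\pan(d)$ via $v\ouo d$. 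In any $\Mf\in[\Pf]_\Ms$ no undirected edge touches $v$, so $I_d$ and $v$ are adjacent in $\Mf_{\Do(I_d)}$ only if $d\tuh v$ is in $\Mf$. But then every edge at $v$ is into $v$, so $v\notin\pan_{\Mf_{\Do(I_d)}}(d)$ and the collider is blocked.

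In general, two requirements clash whenever every potentially directed path from $v_{n-1}$ to $C$ passes through $d$: you need $d\tuh v_{n-1}$ so that $I_d\tuh v_{n-1}$ exists, and you need a directed path from $v_{n-1}$ to $C$ to open the collider. The paper handles exactly this configuration in its Case~4.2.2, where $v_{n-2}\suh v_{n-1}\out I_d$, $d\ouo v_{n-1}$, and $d$ lies on the shortest potentially directed path from $v_{n-1}$ to $C$. It orients so that this path becomes directed, which forces $v_{n-1}\tuh d$, and then \emph{reroutes}. It replaces the last edge by $v_{n-1}\tuh d\hut I_d$, or cuts $\pi$ at $d$ and appends $d\hut I_d$ if $d$ already lies on $\pi$, and checks openness at $d$ separately for $d\in C$ and $d\notin C$. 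In the example the good path $x\ouh d\hut I_d$ has the same length as the bad one, so an unspecified ``minimal'' choice does not avoid the problem.

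A related omission concerns a path ending with $v_{n-1}\hut I_d$, $v_{n-1}\neq d$. The edge $I_d\tuh v_{n-1}$ survives in $\Mf_{\Do(I_d)}$ only if $d\tuh v_{n-1}$ is \emph{invisible} in $\Mf$, and this depends on arrowheads into $d$ from nodes off $\pi$. So you need an orientation that creates no new arrowheads into $d$ (\cref{lem:orientation_invisible}), not merely the right marks along $\pi$.

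Your soundness sketch follows the same Zhang/Jaber strategy as the paper, but it understates what must be proved.
\begin{itemize}
\item Your list of triples whose status changes is incomplete. A non-collider $\suh v\tus$ of $\Mf$ can appear as $\suh v\ous$ in $\Pf$ (as $x\ouh v\ouo d$ does above), and a collider of $\Mf$ can appear as $\suo v\hus$.
\item Excluding these on an irreducible path needs discriminating-path and $\fci{1}$ arguments in which some $I_d$ is the non-adjacent endpoint. These must therefore be run in $\Pf_{\Do(I_D)}$, which is not an iSOPAG. The paper first proves that $\Pf_{\Do(I_D)}$ is closed under $\fci{1}$ and $\fci{4}$ (\cref{lem:soft_mani_SOPAG}), and this relies on the particular choice of $\Ec_7$.
\item For a path ending at $I_d$, the node $v_{n-1}$ can fail to have definite status even on an irreducible path, since $\Pf_{\Do(I_D)}$ has $I_d$-edges absent from $\Mf_{\Do(I_D)}$. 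One must then pass to a subsequence path (\cref{lem:3}).
\item Length-minimality is not enough to get definite ancestry of colliders. Replacing a collider by the next node on its directed path to $C$ can give a path of equal length, which is why the paper uses $C$-tight paths.
\end{itemize}
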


\begin{remark}\label{re:hint_mag}
    In \cref{thm:sep_hsint_pag_mag}, the conclusion is established only for a restricted class of separation statements where hard-manipulated targets are always conditioned upon. If hard-manipulated targets are not conditioned on, as \cref{ex:hint_pag} shows, then the conclusion does not hold in general:
    \[
        A \sep{\mathsf{id}}{\Pf_{\Do(T)}} B\mid C \quad \centernot\Longrightarrow \quad \forall \Af\in[\Pf]_{\Gs}: \  A \sep{\mathsf{id}}{\Af_{\Do(T)}} B\mid C \cup \Sc_{\Af}.
    \]
\end{remark}

\begin{example}[Example for \cref{re:hint_mag}]\label{ex:hint_pag}
    Consider an SOPAG $\Pf$ and an sADMG $\Af\in [\Pf]_{\Gs}$ shown in \cref{fig:hint_pag}. We have 
    \[a \sep{\mathsf{id}}{\Pf_{\Do(t)}} b\mid \{c_1,c_2\},\] 
    since the paths 
    \[
    a\ouh c_1\huo c_2\ouo b \quad  \text{ and } \quad  a\ouh c_1 \out t
    \]
    in $\Pf_{\Do(t)}$ are both blocked by $\{c_1,c_2\}$. On the other hand, it follows that
    \[
    a \nsep{\mathsf{id}}{\Af_{\Do(t)}} b\mid \{c_1,c_2\}\cup \{s\},
    \] 
    since there is an open path $a\tuh c_1\hut t$ given $\{c_1,c_2\}\cup \{s\}$ in $\Af_{\Do(t)}$. Note that the path $a\tuh c_1\hut t$ is blocked by $\{c_1,c_2\}\cup \{s\}\cup \{t\}$ in $\Af_{\Do(t)}$.

    \begin{figure}[ht]
\centering
\begin{tikzpicture}[scale=0.8, transform shape]
\begin{scope}[xshift=0]
    \node[ndout] (a) at (-1.5,0) {$a$};
    \node[ndout] (c1) at (0,0) {$c_1$};
    \node[ndout] (c2) at (1.5,0) {$c_2$};
    \node[ndout] (b) at (3,0) {$b$};
    \node[ndout] (t) at (0,1.5) {$t$};
    \draw[ouh] (a) to (c1);
    \draw[ouh] (c2) to (c1);
    \draw[ouo] (c2) to (b);
    \draw[ouh] (c2) to (t);
    \draw[ouh] (a) to (t);
    \draw[ouo] (t) to (c1);
    \node at (0.75,-1) {$\Pf$};
\end{scope}

\begin{scope}[xshift=9cm]
    \node[ndout] (a) at (-1.5,0) {$a$};
    \node[ndout] (c1) at (0,0) {$c_1$};
    \node[ndout] (c2) at (1.5,0) {$c_2$};
    \node[ndout] (b) at (3,0) {$b$};
    \node[ndint] (t) at (0,1.5) {$t$};
    \draw[ouh] (a) to (c1);
    \draw[ouh] (c2) to (c1);
    \draw[ouo] (c2) to (b);
    \draw[tuo] (t) to (c1);
     \node at (0.75,-1) {$\Pf_{\Do(t)}$};
\end{scope}

\begin{scope}[xshift=0cm,yshift=-4cm]
     \node[ndout] (a) at (-1.5,0) {$a$};
     \node[ndsel] (s) at (-1.5,1.5) {$s$};
    \node[ndout] (c1) at (0,0) {$c_1$};
    \node[ndout] (c2) at (1.5,0) {$c_2$};
    \node[ndout] (b) at (3,0) {$b$};
    \node[ndout] (t) at (0,1.5) {$t$};
    \draw[arout] (a) to (c1);
    \draw[arout] (c2) to (c1);
    \draw[arout] (c2) to (b);
    \draw[arlat,bend left] (c2) to (b);
    \draw[arout] (c2) to (t);
    \draw[arout] (a) to (t);
     \draw[arout] (a) to (s);
    \draw[arout] (t) to (c1);
    \node at (0.75,-1) {$\Af$};
\end{scope}

\begin{scope}[xshift=9cm,yshift=-4cm]
    \node[ndout] (a) at (-1.5,0) {$a$};
    \node[ndsel] (s) at (-1.5,1.5) {$s$};
    \node[ndout] (c1) at (0,0) {$c_1$};
    \node[ndout] (c2) at (1.5,0) {$c_2$};
    \node[ndout] (b) at (3,0) {$b$};
    \node[ndint] (t) at (0,1.5) {$t$};
    \draw[arout] (a) to (c1);
     \draw[arout] (a) to (s);
    \draw[arout] (c2) to (c1);
    \draw[arout] (c2) to (b);
    \draw[arlat,bend left] (c2) to (b);
    \draw[arout] (t) to (c1);
    \node at (0.75,-1) {$\Af_{\Do(t)}$};
\end{scope}

\end{tikzpicture}
\caption{PAG $\Pf$, hard-manipulated PAG $\Pf_{\Do(t)}$, an isADMG $\Af\in [\Pf]_{\Gs}$ and its hard-manipulated isADMG $\Af_{\Do(t)}$ in \cref{ex:hint_pag}.}
\label{fig:hint_pag}
\end{figure}
\end{example}

\section{Causal reasoning with PAGs}\label{sec:reason_PAGs}

Building on the theory developed in \cref{sec:causal_mag_pag}, we derive rigorous causal reasoning rules for iSOPAGs, including a causal calculus and adjustment criteria and formulas.

\subsection{Probability calculus}

We recall several standard operations on Markov kernels that will be used in the sequel.

\begin{definitionthm}[Probability calculus]\label{defthm:prob_calculus}
    Let $\Xc$, $\Yc$, $\Zc$, $\Tc$, $\Uc$, $\Wc$ be standard measurable spaces. Consider Markov kernels 
    \[
    \begin{aligned}
        &\Kr(X,Y\miid T):\,\Tc \dto  \Xc \times \Yc, \\
    &\Kr_1(Z\miid U,X,T):\, \Uc \times \Xc\times \Tc \dto \Zc,\quad  \text{ and } \quad \Kr_2(X,Y\miid T,W):\, \Tc \times \Wc \dto \Xc \times \Yc.
    \end{aligned}
    \]
    
    \begin{enumerate}

    \item We define the \textbf{product Markov kernel} of $\Kr_1$ and $\Kr_2$ as follows:
    \[ 
    \begin{aligned}
        &\Kr_1(Z\miid U,X,T) \otimes \Kr_2(X,Y\miid T,W) :\, \Uc \times  \Tc \times \Wc \dto \Zc \times \Xc \times \Yc, \\
        &\big( \Kr_1(Z\miid U,X,T) \otimes \Kr_2(X,Y\miid T,W)\big) (B;(u,t,w))=\\
        &\int \Ibbm_B(z,x,y)\, \Kr_1(Z \in \dr z\miid U=u,X=x,T=t) \, \Kr_2((X,Y) \in \dr (x,y)\miid T=t,W=w).
    \end{aligned}
    \]

    \item  The \textbf{composition of Markov kernels}
    $\Kr_1(Z \miid U,X,T) \circ \Kr_2(X,Y \miid T,W) :\, \Uc \times  \Tc \times \Wc \dto \Zc $
    is defined using measurable sets $B \subseteq \Zc$ via:
    \[
    \begin{aligned}
    &\Bigl(\Kr_1(Z \miid U,X,T) \circ \Kr_2(X,Y\miid T,W)\Bigr)(B,(u,t,w)) \\
    &= \displaystyle \int \Kr_1(Z \in B\miid U=u,X=x,T=t) \, \Kr_2(X \in \dr x,Y\in \Yc \miid T=t,W=w).
    \end{aligned}
    \] 
    
    \item  We define the \textbf{marginal Markov kernels} of $\Kr(X,Y\miid T)$ over $X$ and $Y$, respectively, as follows:
    \[ 
    \begin{aligned}
        &\Kr(X\miid T)\coloneqq \Kr(X,Y\miid T)^{\sm Y}\coloneqq (\id_X\otimes \varepsilon_Y)\circ \Kr(X,Y\mid T) :\, \Tc \dto \Xc, \text{and }\\
        &\Kr(Y\miid T)\coloneqq \Kr(X,Y\miid T)^{\sm X}\coloneqq (\varepsilon_X\otimes\id_Y)\circ \Kr(X,Y\mid T) :\, \Tc \dto \Yc,
    \end{aligned}
    \]
    where $\id_X(X\in\cdot\mid X=x)=\delta_x(\cdot)$ and $\id_Y(Y\in \cdot\mid Y=y)=\delta_y(\cdot)$, and $\varepsilon_X:\Xc\dto \{*\}$ and $\varepsilon_Y:\Yc\dto \{*\}$ are the counit kernels, and we routinely identify 
    \[
    \Xc\times \{*\}\cong \Xc\cong \{*\}\times \Xc.
    \]

    \item There exists an essentially unique Markov kernel,\footnote{The existence and essential uniqueness are guaranteed by \cite[Lemma~2.23 and Theorem~2.24]{forre2021transitional} (see also \cite[Theorem~1.25]{kallenberg2017random} for a similar result). This generalizes the classical result of disintegration of probability distributions on standard measurable spaces to Markov kernels. This result can also be generalized to analytic measurable spaces \cite{bogachev20kant} and universal measurable spaces \cite{forre2021transitional}.}called a \textbf{conditional Markov kernel of $\Kr(X,Y\miid T)$ given $Y$}, $\tilde{\Kr}(X\miid Y,T):\, \Yc \times \Tc \dto \Xc$
    such that
    \[   \Kr(X,Y \miid T) = \tilde{\Kr}(X \miid Y,T) \otimes \Kr(Y\miid T),\]
    where $\Kr(Y\miid T)$ is the marginal Markov kernel of $\Kr(X,Y\miid T)$ over $Y$. We often denote $\tilde{\Kr}(X \miid Y,T)$ by $\Kr(X \mid Y\miid T)$ or $\Kr(X,Y\miid T)^{|Y}$. Here, essential uniqueness means that: if $\Qr(X\miid Y,T)$ is another Markov kernel, then we have 
    \[
    \Kr(X,Y\miid T) = \Qr(X\miid Y,T) \otimes \Kr(Y\miid T)
    \] 
    iff the set 
    \[
    \begin{aligned}
    N\coloneqq \big\{ (y,t) \in \Yc \times \Tc \mid  \exists &A \in \Sigma_\Xc \text{ s.t.\ }\,\\
    &\Qr(X \in A\miid Y=y,T=t) \neq \Kr(X \in A \mid Y=y \miid T=t)\big\}
    \end{aligned}
    \]
    is a measurable $\Kr(Y\miid T)$-null set in $\Yc \times \Tc$.\footnote{$N\subseteq \Yc \times \Tc$ is a measurable $\Kr(Y\miid T)$-null set in $\Yc \times \Tc$ if $\Kr(Y\in N_t\miid T=t)=0$ for all $t\in \Tc$ where $N_t=\{y\in \Yc\mid (y,t)\in N\}$.}

    \end{enumerate}
\end{definitionthm}

    \begin{definitionthm}[Absolute continuity and derivative {\protect\cite{forre2021transitional,forre2025mathematical}}]\label{defthm:derivative}
    Let $\Kr(W\miid T)$ and $\Qr(W\miid T)$ be two Markov kernels, and $\mu$ be a $\sigma$-finite measure on $(\Wc,\Sigma_\Wc)$.\footnote{WLOG, by a renormalization, we can take the $\sigma$-finite reference measure $\mu$ to be a probability measure.} We say that $\Kr(W\miid T)$ is \textbf{absolutely continuous} \wrt $\Qr(W\miid T)$ if for all $t\in \Tc$ and $D\in \Sigma_{\Wc}$
    \[
    \Qr(W\in D\miid T=t)=0 \quad  \Longrightarrow \quad  \Kr(W\in D\miid T=t)=0.
    \] 
    In symbols, we write $\Kr(W\miid T)\ll \Qr(W\miid T)$. The following two statements are equivalent:
    \begin{enumerate}
        \item  $\Kr(W\miid T)\ll \mu$.
        \item $\Kr(W\miid T)$ has a  derivative \wrt $\mu$, i.e., a \emph{joint measurable} map: $p:\Wc\times \Tc\to \Rb, (w,t)\mapsto p(w\miid t)$, such that for all $t\in \Tc$ and $D\in \Sigma_\Wc$:
        \[
            \Kr(W\in D\miid T=t)=\int_D p(w\miid t)\mu(\dr w).
        \]
        In this case, the derivative is essentially unique, i.e., for two such derivatives $p_1$ and $p_2$ we have $\mu(N_t)=0$ for all $t\in \Tc$ where 
        \[
            N\coloneqq \{(w,t)\in \Wc\times \Tc: p_1(w\miid t)\ne p_2(w\miid t)\}\in \Sigma_\Wc\otimes \Sigma_\Tc.
        \]
    \end{enumerate}
        Furthermore, $\Kr(W\miid T)$ has a strictly positive derivative \wrt $\mu$ iff 
        \[
        \mu\ll \Kr(W\miid T)\ll \mu
        \]
        (see, e.g., \cite[Corollary~2.3.20]{forre2025mathematical}).
    \end{definitionthm}

\begin{notation}[Equality of Markov kernels up to null sets]\label{notation:equality_markov_kernel} Let 
\[
\begin{aligned}
    &\Kr_1(X\miid T,U,W_1,W_2):\Tc\times \Uc\times \Wc_1\times \Wc_2\dto \Xc, \quad \Kr_2(X\miid U,W_1,W_2):\Uc\times \Wc_1\times \Wc_2\dto \Xc,\\
    &\Kr_3(X\miid T,U,W_1,W_2):\Tc\times \Uc\times \Wc_1\times \Wc_2\dto \Xc
\end{aligned}
\] 
be Markov kernels and $\mu_\Tc,\mu_\Uc$ be $\sigma$-finite reference measures on $\Tc$ and $\Uc$, respectively. We write
    \[ 
        \begin{aligned}
            \Kr_3(X\miid T,U,W_1,W_2)&\meq{\mu_{\Tc}\otimes\mu_{\Uc}}\Kr_1(X\miid T,U,W_1,W_2)\meq{\mu_{\Tc}\otimes\mu_{\Uc}}\Kr_2(X\miid U,W_1,W_2), \\
            \Kr_2(X\miid U,W_1,W_2)&\meq{\mu_{\Uc}}\Kr_2(X\miid U,\cancel{W_1},W_2)\\
            \Kr_1(X\miid T,U,W_1,W_2)&\meq{\mu_{\Uc}}\Kr_2(X\miid U,W_1,W_2)
        \end{aligned}
        \]
        to mean:
        \begin{enumerate}
            \item [(i)] equalities up to a measurable set $N\subseteq \Tc\times \Uc \times \Wc_1\times \Wc_2$ that is $\mu_{\Tc}\otimes\mu_{\Uc}$-null, i.e., $\mu_{\Tc}\otimes\mu_{\Uc}(N_{(w_1,w_2)})=0$ for all $(w_1,w_2)\in \Wc_1\times \Wc_2$;

            \item [(ii)] there exists a Markov kernel from $\Uc\times  \Wc_1\times\Wc_2$ to $\Xc$ independent on $W_1$ that is equal to $\Kr_2$ up to a measurable set $\tilde{N}\coloneqq N\times \Wc_1\subseteq (\Uc\times \Wc_2)\times \Wc_1$ such that $\tilde{N}$ is $\mu_{\Uc}$-null, i.e., $\mu_{\Uc}(N_{w_2})=0$ for all $w_2\in \Wc_2$;

            \item [(iii)] equality up to a measurable set $\tilde{N}\coloneqq \Tc\times N\subseteq \Tc\times (\Uc \times \Wc_1\times \Wc_2)$ that is $\mu_{\Uc}$-null, i.e., $\mu_{\Uc}(N_{(w_1,w_2)})=0$ for all $(w_1,w_2)\in \Wc_1\times \Wc_2$.
        \end{enumerate} 
\end{notation}

\begin{notation}[C-factor]\label{notation:cfactor}
Let $\Gf=(\Ic,\Vc,\Sc,\Ec)$ be an isADMG or $\Gf=(\Ic,\Vc,\Ec)$ be an iMAG/iSOPAG. Let $C\subseteq\Vc$. Assume that an s-iSCM $(\Mc,X_\Sc\in S)$ is the true underlying causal model and $\Gb(\Mc,X_\Sc\in S)\in[\Gf]_\Gs$.\footnote{If $\Gf$ is an isADMG, this simply reads $\Gb(\Mc,X_{\Sc}\in S)=\Gf$.} We introduce the following generic notation:
\[
    \Qc[C]\coloneqq \Prb_{\Mc}(X_C\mid X_\Sc\in S\miid  \Do(X_{\Vc\sm C}),X_{\Ic}).   
\]
\end{notation}

\subsection{Causal calculus}

From \cref{thm:sep_hsint_pag_mag,thm:sep_hsint_mag,defthm:causal_calculus,defthm:causal_calculus}, we obtain the following formal measure-theoretic causal calculus for iMAGs and iSOPAGs. 

\begin{theorem}[Main result III: Causal calculus for iMAG/iSOPAG]\label{thm:causal_calculus_mag_pag}
    Let $(\Mc,X_{\Sc}=\mathbf{1}_{|\Sc|})$ be an acyclic s-iSCM, where $\Mc=(\Ic,\Vc\dcup \Sc,\Wc,\Xc,\Prb,f)$ is an acyclic iSCM such that $\Prb_\Mc(X_\Sc=\mathbf{1}_{|\Sc|})>0$ where $\mathbf{1}_{|\Sc|}=(1,\ldots,1)\in \{0,1\}^{|\Sc|}$. Let $\Af\coloneqq \Gb(\Mc,X_{\Sc}=\mathbf{1}_{|\Sc|})$ be a causal isADMG and $\Gf$ be an iMAG or an iSOPAG such that $\Af\in[\Gf]_\Gs$. Let $A,B,C\subseteq \Vc$ and $D\subseteq \Ic\cup \Vc$ be pairwise disjoint. Write $D_1\coloneqq D\cap \Ic$ and $D_2\coloneqq D\cap \Vc$. Assume that there are $\sigma$-finite reference measures $\mu_v$ on $\Xc_v$ for each $v\in \Vc$ (write $\mu_F\coloneqq \bigotimes_{v\in F}\mu_v$ for $F\subseteq \Vc$). 
    
    \textbf{(1) Insertion/deletion of observations.}
        Suppose 
        \[A\sep{\mathsf{id}}{\Gf_{\Do(D)}} B\mid C\cup D.\] 
        Then there exists a Markov kernel $\Qr(X_A\miid X_C,X_D)$ unique up to a measurable set $\tilde{N}\coloneqq N\times \Xc_{\Ic\sm D_1}\subseteq \Xc_{C\cup D}\times \Xc_{\Ic\sm D_1}$, such that $\tilde{N}$ is $\Qc[D_2^c]^{\sm C^c}$-null and $\Qr$ is a version of $\big(\Qc[D_2^c]^{|B_1\cup C}\big)^{\sm A^c}$ for every $B_1\subseteq B$ simultaneously. If 
        \[\mu_{B\cup C}\ll \Prb_\Mc(X_B,X_C \mid X_\Sc=\mathbf{1}_{|\Sc|}\miid X_\Ic, \Do(X_{D_2}))\ll \mu_{B\cup C},\] then it follows that 
        \[ 
        \begin{aligned}
            &\Prb_\Mc(X_A\mid X_B,X_C,X_\Sc=\mathbf{1}_{|\Sc|} \miid X_\Ic,\Do(X_{D_2}))\\
            &\meq{\mu_{B\cup C}}\Prb_\Mc(X_A\mid X_C,X_\Sc=\mathbf{1}_{|\Sc|} \miid X_{\Ic} ,\Do(X_{D_2})) \\
            &\meq{\mu_{C}}\Prb_{\Mc}(X_A\mid X_C,X_\Sc=\mathbf{1}_{|\Sc|} \miid \cancel{X_{\Ic\sm D_1}}, X_{D_1} ,\Do(X_{D_2})).
        \end{aligned}
        \]

    \textbf{(2) Actions/observations exchange.}
        Suppose \[A\sep{\mathsf{id}}{\Gf_{\Do(I_B,D)}} I_B\mid B\cup C\cup D.\] Then there exists a  Markov kernel $\Qr(X_A\miid X_B,X_C,X_D)$ unique up to a measurable set $\tilde{N}\coloneqq N\times \Xc_{\Ic\sm D_1}\subseteq \Xc_{B\cup C\cup D}\times \Xc_{\Ic\sm D_1}$, such that $\tilde{N}$ is $\Qc[(B_2\cup D_2)^c]^{\sm(B_1\cup C)^c}$-null and $\Qr$ is a version of $\big(\Qc[(B_2\cup D_2)^c]^{|B_1\cup C}\big)^{\sm A^c}$ for every decomposition $B=B_1\dcup B_2$ simultaneously. If 
        \[
        \begin{aligned}
            \mu_{B\cup C}\ll \Prb_\Mc&(X_B,X_C\mid X_\Sc=\mathbf{1}_{|\Sc|} \miid X_\Ic, \Do(X_{D_2}))\ll \mu_{B\cup C} \quad \text{ and}\\ \mu_C\ll \Prb_\Mc&(X_C \mid X_\Sc=\mathbf{1}_{|\Sc|} \miid X_\Ic, \Do(X_B,X_{D_2}))\ll \mu_C,
        \end{aligned}
        \] 
        then it follows that 
        \[
            \begin{aligned}
                &\Prb_\Mc(X_A\mid X_C,X_\Sc=\mathbf{1}_{|\Sc|} \miid X_\Ic,\Do(X_{B},X_{D_2}))\\
                &\meq{\mu_{B\cup C}}\Prb_\Mc(X_A\mid X_B, X_C,X_\Sc=\mathbf{1}_{|\Sc|} \miid X_{\Ic},\Do(X_{D_2}))\\
                &\meq{\mu_{B\cup C}}\Prb_\Mc(X_A\mid X_B, X_C,X_\Sc=\mathbf{1}_{|\Sc|} \miid \cancel{X_{\Ic\sm D_1}}, X_{D_1},\Do(X_{D_2})).
            \end{aligned}
        \]

    \textbf{(3) Insertion/deletion of actions.}
    Suppose 
    \[A\sep{\mathsf{id}}{\Gf_{\Do(I_B,D)}} I_B \mid C\cup D.\] 
    Then there exists a Markov kernel $\Qr(X_A\miid X_C,X_D)$ unique up to a measurable set $\tilde{N}\coloneqq N\times \Xc_{\Ic\sm D_1}\subseteq \Xc_{C\cup D}\times \Xc_{\Ic\sm D_1}$, such that $\tilde{N}\times \Xc_{B_2}$ is $\Qc[(B_2\cup D_2)^c]^{\sm C^c}$-null and $\Qr$ is a version of $\big(\Qc[(B_2\cup D_2)^c]^{|C}\big)^{\sm A^c}$ for every $B_2\subseteq B$ simultaneously. If 
    \[
    \begin{aligned}
        \mu_{C}\ll \Prb_\Mc&(X_C \mid X_\Sc=\mathbf{1}_{|\Sc|}\miid X_{\Ic},\Do(X_B,X_{D_2}))\ll \mu_{C} \quad \text{ and} \\ 
        \mu_C\ll \Prb_\Mc&(X_C\mid X_\Sc=\mathbf{1}_{|\Sc|}\miid X_{\Ic}, \Do(X_{D_2}))\ll \mu_C,
    \end{aligned}
    \]
    then it follows that 
        \[
            \begin{aligned}
                &\Prb_\Mc(X_A\mid X_C,X_\Sc=\mathbf{1}_{|\Sc|} \miid X_\Ic,\Do(X_{B},X_{D_2}))\\
                &\meq{\mu_{C}} \Prb_\Mc(X_A\mid X_C,X_\Sc=\mathbf{1}_{|\Sc|} \miid X_\Ic,\Do(X_{D_2}))\\
                &\meq{\mu_{C}} \Prb_\Mc(X_A\mid X_C,X_\Sc=\mathbf{1}_{|\Sc|} \miid \cancel{X_{\Ic\sm D_1}}, X_{D_1},\Do(X_{D_2})).
            \end{aligned}
        \]
\end{theorem}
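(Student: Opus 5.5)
The plan is to reduce each rule to the corresponding rule for the single causal isADMG $\Af=\Gb(\Mc,X_\Sc=\mathbf{1}_{|\Sc|})$. That rule is taken as given by the measure-theoretic causal calculus for isADMGs with selection (\cref{defthm:causal_calculus}, formulated via $id$-separation in $\Af_{\Do(I_B,D)}$ with $\Sc_\Af$ added to the conditioning set). The bridge is the transfer of separation statements from $\Gf$ to every member of $[\Gf]_\Gs$, supplied by \cref{thm:sep_hsint_mag} when $\Gf$ is an iMAG and by the soundness half of \cref{thm:sep_hsint_pag_mag} when $\Gf$ is an iSOPAG. Only soundness is needed, so the SOPAG assumption (rather than COPAG) suffices.

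Concretely, for rule (1) we apply the theorems with $D_{\text{soft}}=\emptyset$ and $T=D_2$. Inputs $D_1\subseteq\Ic$ are already input nodes, so $\Gf_{\Do(D)}=\Gf_{\Do(D_2)}$ up to deleting edges among inputs, which does not affect $id$-separation because all inputs are implicitly on the right-hand side. From $A\sep{\mathsf{id}}{\Gf_{\Do(D)}} B\mid C\cup D$ we obtain $A\sep{\mathsf{id}}{\Af_{\Do(D)}} B\mid C\cup D\cup\Sc_\Af$.

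For rules (2) and (3) we take soft targets $B$ and hard targets $D_2$, with right-hand side $I_B$. We then check the hypotheses of the theorems: $A\subseteq\Vc\sm T$, and $B,C$ disjoint subsets of $\Ic\cup\{I_d\}\cup\Vc$ after absorbing $D_1$ and $B\cup C$ into the conditioning set. This yields $A\sep{\mathsf{id}}{\Af_{\Do(I_B,D)}} I_B\mid (B\cup)C\cup D\cup\Sc_\Af$.

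Next, we use that $\Af$ is the (marginalized) graph of the s-iSCM together with the global Markov property for $id$-separation in iADMGs (Forré's transitional conditional independence). This turns each separation into an extended conditional independence of the kernel $\Prb_\Mc(X_{\Vc}\mid X_\Sc=\mathbf{1}\miid X_\Ic,\Do(\cdot))$. Conditioning on the selection event is legitimate since $\Prb_\Mc(X_\Sc=\mathbf{1})>0$ and $\Sc_\Af$ lies in the conditioning set.

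The isADMG causal calculus then delivers the existence of the kernel $\Qr$ with the stated null-set uniqueness, expressed through the C-factors $\Qc[\cdot]$ of \cref{notation:cfactor}. It also delivers the displayed equalities under the stated positivity/absolute continuity assumptions, where derivatives from \cref{defthm:derivative} let us pass from null sets of the kernels to $\mu$-null sets. The final "cancelled $X_{\Ic\sm D_1}$" equality comes from the fact that inputs outside $D_1$ are $id$-separated by convention, which gives independence from them.

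The main obstacle is bookkeeping rather than graph theory. One must verify that the index sets of the paper's theorems match the calculus rules exactly, in particular that $C$ may contain $B$ or inputs and that $D_1\subseteq\Ic$ is handled correctly. One must also verify that "for every decomposition $B=B_1\dcup B_2$ simultaneously" follows, since a single separation statement in $\Af$ implies the kernel-level irrelevance both for conditioning on $B_1$ and for intervening on $B_2$. I would first prove the combined statement in $\Af$ via the extended CI $X_A\ind (I_B,\text{inputs})\mid X_{B\cup C\cup D},\Sc$ and then specialize it to each decomposition.
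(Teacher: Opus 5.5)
Your proposal is correct and takes essentially the paper's route: the paper obtains the theorem directly from the soundness directions of \cref{thm:sep_hsint_mag,thm:sep_hsint_pag_mag}, with $T=D_2$ and $C\cup D_1$ (resp.\ $B\cup C\cup D_1$) as conditioning set, combined with the iADMG causal calculus \cref{defthm:causal_calculus} applied to $\Gb(\Mc)$ with $\Sc$ conditioned on. Your extra step, which the paper leaves implicit, is to first obtain the transitional conditional independence and then restrict to the positive-probability atom $\{X_\Sc=\mathbf{1}_{|\Sc|}\}$, so that the positivity hypotheses are only needed for the selected kernel.
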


\begin{remark}
    Positivity conditions play an essential role in the soundness of the causal calculus. There are examples where identification results fail when positivity conditions are not met, even if the corresponding graphical criteria hold (see, e.g., \cite{forre2025mathematical,Yaroslav22revisite_gID,chen25notes}). There are various sufficient positivity conditions in the literature (see, e.g., \cite{shpitser06identification,forre2025mathematical,Yaroslav22revisite_gID,Hwang_2024_pos}). The problem of finding necessary positivity conditions remains open.
\end{remark}

\cref{thm:sep_hsint_pag_mag} yields the atomic completeness of the causal calculus developed in \cref{thm:causal_calculus_mag_pag}.

\begin{theorem}[Causal calculus is atomic complete]\label{thm:causal_calculus_atomic_complete}
    The causal calculus stated in \cref{thm:causal_calculus_mag_pag} is atomic complete for iMAGs and iCOPAGs: if a rule is not applicable in $\Gf$, then there must exist an s-iSCM $\Mc^\Sc \in \Mb^+(\Gf)$ for which the corresponding causal calculus rule is not applicable. 
\end{theorem}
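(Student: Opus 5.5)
The plan is to reduce atomic completeness to the graphical completeness already proved, and then turn a graphical witness into a probabilistic one. Fix one of the three rules and suppose its graphical premise fails in $\Gf$. For Rule~(1) this means $A\nsep{\mathsf{id}}{\Gf_{\Do(D)}} B\mid C\cup D$. For Rules~(2) and~(3) it is the corresponding failure with $I_B$ on the right, in $\Gf_{\Do(I_B,D)}$. Each premise has the form covered by \cref{thm:sep_hsint_mag} (for iMAGs) or by the $\Longleftrightarrow$ part of \cref{thm:sep_hsint_pag_mag} (for iCOPAGs), with $T=D$ conditioned upon. So the ``only if'' directions of these theorems give an isADMG $\Af\in[\Gf]_\Gs$ that witnesses the failure. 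In this $\Af$ the statement
\[
A\nsep{\mathsf{id}}{\Af_{\Do(I_B,D)}} I_B\mid C\cup D\cup \Sc_\Af
\]
holds (or its analogue for Rule~(1)), and it is exhibited by a specific $id$-open path or walk $\pi$.

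Next I would build an s-iSCM $\Mc^\Sc\in\Mb^+(\Gf)$ with $\Gb(\Mc^\Sc)=\Af$ in which the probabilistic equality asserted by the rule fails. I take all variables binary (or Gaussian). On the edges of $\pi$ I use faithful-type mechanisms, for example XOR or linear with generic coefficients. On the edges of $\Af$ not used by $\pi$ I use mechanisms that are trivial but still strictly positive, by adding small independent noise so that the positivity hypotheses of \cref{thm:causal_calculus_mag_pag} hold. Each selection node is a noisy AND/OR of its parents, with $\Prb(X_\Sc=\mathbf{1})>0$.

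The standard argument, that an open path in a DAG with generic parameters yields dependence, then carries over. The relevant conditional kernel genuinely depends on $X_B$ (respectively on the regime $I_B$), so the equality claimed by the rule fails on a set of positive $\mu$-measure. In the $I_B$ cases I would translate the open path ending at $I_B$ into a difference between $\Prb(\cdot\miid\Do(X_B,X_{D_2}))$ and the corresponding observational or conditional kernel. To do this I use the extended conditional independence Markov property in its converse form: a variable of the form ``regime indicator'' that is $d$-connected yields a non-invariant kernel for a suitably chosen SCM.

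I expect the main obstacle to be this last step. The difficulty is making sure that conditioning on $X_\Sc=\mathbf{1}$ and on the colliders of $\pi$ does not accidentally cancel the dependence, while positivity is kept everywhere. To handle this I would first reduce $\pi$ to a path in the underlying DAG, with latent nodes and with the descendant paths from colliders to $C\cup\Sc$ made explicit, in the spirit of Meek's completeness construction. I would then set all other mechanisms to independent noise, and verify dependence by a direct computation on this sparse chain-plus-collider structure, where the selection simply acts as an extra conditioned collider descendant.
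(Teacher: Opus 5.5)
Your first paragraph is exactly the paper's argument. The paper gives no proof beyond saying that the result follows from the ``$\Leftarrow$'' directions of \cref{thm:sep_hsint_mag,thm:sep_hsint_pag_mag}. What you have misread is the meaning of ``not applicable''. As in \cref{rem:sep_hsint_mag}, a rule is not applicable to an s-iSCM $\Mc^\Sc$ when the graphical premise of the iADMG calculus (\cref{defthm:causal_calculus}) fails for $\Gb(\Mc^\Sc)$, with $\Sc$ added to the conditioning set. It is not a claim that the kernel equality fails. So once you have the witness $\Af\in[\Gf]_\Gs$, you only need some $\Mc^\Sc\in\Mb^+(\Gf)$ whose graph $\Gb(\Mc^\Sc)$ is exactly $\Af$. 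For example, take binary variables whose strictly positive conditional tables depend genuinely on every parent, one independent latent for each bidirected edge, and noisy selection mechanisms. No dependence computation is needed.

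Everything after your first paragraph aims at a strictly stronger, faithfulness-type claim that the theorem does not assert, and as sketched it has real problems.
\begin{itemize}
\item Setting all mechanisms off $\pi$ to independent noise deletes those edges from $\Gb(\Mc^\Sc)$ (\cref{def:graph_siscm}). The model's graph is then a proper subgraph of $\Af$. Such a subgraph can lose $\Sc$-inducing paths or ancestral relations that $\Gf$ encodes, so it need not lie in $[\Gf]_\Gs$, and then $\Mc^\Sc\notin\Mb^+(\Gf)$. This also contradicts your own requirement $\Gb(\Mc^\Sc)=\Af$. To repair it you would have to keep every edge of $\Af$ genuine and argue, for instance, that for strictly positive discrete parameters the dependence is an open condition and survives small perturbations.
\item An $id$-open path witnessing the failure may end at an input node in $\Ic\setminus D_1$ rather than at $B$ or $I_B$. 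In that case what fails is the $\cancel{X_{\Ic\sm D_1}}$ invariance in the rule's conclusion, not dependence on $X_B$ or on the regime.
\item The step that turns an $id$-connecting walk in $\Af_{\Do(I_B,D)}$ given $C\cup D\cup\Sc_\Af$ into an inequality of kernels on a set of positive $\mu$-measure is the hard part. You name it as the main obstacle but do not carry it out.
\end{itemize}
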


Two simple corollaries of \cref{thm:sep_hsint_mag,thm:sep_hsint_pag_mag} are the iMAG/iSOPAG versions of \emph{invariance under intervention} \cite{spirtes2001causation,zhang2008causal} and of \emph{criteria for causal relationships} \cite{forre2025mathematical}. Briefly, the first gives a graphical characterization of $\Prb_\Mc(X_A\mid X_B,X_\Sc=\mathbf{1}_{|\Sc|})=\Prb_\Mc(X_A\mid X_B,X_\Sc=\mathbf{1}_{|\Sc|}\miid \Do(X_C))$ for an iMAG or iSOPAG, which is a corollary of the third rule of \cref{thm:causal_calculus_mag_pag}. The second characterizes when one may conclude that there is no (direct) causal effect or no confounding between two variables. See \cref{sec:criterion_relation} for details.

\subsection{Adjustment criterion and formula}

We formalize an adjustment criterion and the corresponding formula for iMAGs and iSOPAGs. Our formulation is inspired by \cite{forre2020causal,forre2021transitional} and generalizes the classic backdoor criterion \cite{Pearl93aAspects,pearl2009causality} and several of its variants and extensions \cite{Shpitser10validity,Pearl10confounding,Correa17causal,bareinboim2015recovering,shpitser06identification,Perkovic15complete}. 

\begin{theorem}[General adjustment criterion and formula]
    Assume the setting of \cref{thm:causal_calculus_mag_pag} and $\Ic\subseteq D$ WLOG. Let $J\coloneqq J_0\dcup J_1\subseteq \Vc$ and $H\subseteq \Vc$ be disjoint. Assume 
    \[
    \begin{aligned}
        \mu_{B\cup C\cup J\cup H}&\ll \Prb_\Mc(X_B,X_C,X_J,X_H\mid X_\Sc=\mathbf{1}_{|\Sc|}\miid \Do(X_D))\ll \mu_{B\cup C\cup J\cup H},\\
        \mu_{C\cup J\cup H}&\ll \Prb_\Mc(X_C,X_J,X_H\mid X_\Sc=\mathbf{1}_{|\Sc|}\miid \Do(X_B,X_D))\ll \mu_{ C\cup J\cup H}.
    \end{aligned}    
    \]
    Furthermore, assume
    \[
        \begin{aligned}
            J_0\cup H &\sep{\mathsf{id}}{\Gf_{\Do(I_B,D)}} I_B \mid C\cup D,\qquad 
            A \sep{\mathsf{id}}{\Gf_{\Do(I_B,D)}} J_1\cup I_B \mid B\cup C\cup D \cup J_0\cup H,\\
            H &\sep{\mathsf{id}}{\Gf_{\Do(I_B,D)}} B \mid I_B\cup C\cup D \cup J.
        \end{aligned}
    \]
    Then the adjustment formula holds true: 
    \[
    \begin{aligned}
        &\Prb_{\Mc}(X_A\mid X_C,X_\Sc=\mathbf{1}_{|\Sc|}\miid \Do(X_B,X_D)) \\
        &\meq{\mu_{B\cup C}}\Prb_{\Mc}(X_A\mid X_B,X_C,X_J,X_\Sc=\mathbf{1}_{|\Sc|}\miid \Do(X_D))\circ \Prb_{\Mc}(X_J\mid X_C,X_\Sc=\mathbf{1}_{|\Sc|}\miid \Do(X_D)). 
    \end{aligned}       
    \]
\end{theorem}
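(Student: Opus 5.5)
The plan is to reduce the statement to a chain of applications of the three rules of \cref{thm:causal_calculus_mag_pag}, together with standard Markov-kernel manipulations from \cref{defthm:prob_calculus}. Throughout, all kernels are conditioned on $X_\Sc=\mathbf{1}_{|\Sc|}$ and carry $\Do(X_D)$. Since $\Ic\subseteq D$, the input variables are absorbed into $D$, so I will suppress them.

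Where a separation in $\Gf_{\Do(I_B,D)}$ does not literally match the hypothesis of one of the three rules, I will go down to the isADMG level. By \cref{thm:sep_hsint_mag,thm:sep_hsint_pag_mag}, each assumed separation in $\Gf_{\Do(I_B,D)}$ transfers to $\Af_{\Do(I_B,D)}$ with $\Sc_\Af$ added to the conditioning set. There I will use the extended conditional independence (global Markov) property of the s-iSCM to obtain the corresponding kernel identities. I will also use that $id$-separation from a union $J_1\cup I_B$ implies $id$-separation from each of $J_1$ and $I_B$ separately. This is immediate from \cref{def:sep_PAG}, because separation from a set quantifies over all paths ending in that set.

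\textbf{Step 1 (decompose the target).} By disintegration, $\Prb(X_A\mid X_C\miid\Do(X_B,X_D)) = \Prb(X_A\mid X_C,X_{J_0},X_H\miid \Do(X_B,X_D))\circ \Prb(X_{J_0},X_H\mid X_C\miid\Do(X_B,X_D))$.

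\textbf{Step 2 (first factor).} From $A\perp J_1\cup I_B\mid B\cup C\cup D\cup J_0\cup H$, the $I_B$-part gives, via Rule~(2), $\Prb(X_A\mid X_C,X_{J_0},X_H\miid\Do(X_B,X_D))=\Prb(X_A\mid X_B,X_C,X_{J_0},X_H\miid\Do(X_D))$. The $J_1$-part, read at the observational regime of $I_B$ through the isADMG-level extended CI, gives $\Prb(X_A\mid X_B,X_C,X_J,X_H\miid\Do(X_D))=\Prb(X_A\mid X_B,X_C,X_{J_0},X_H\miid\Do(X_D))$.

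\textbf{Step 3 (second factor).} Rule~(3) applied to $J_0\cup H\perp I_B\mid C\cup D$ yields $\Prb(X_{J_0},X_H\mid X_C\miid\Do(X_B,X_D))=\Prb(X_{J_0},X_H\mid X_C\miid\Do(X_D))$.

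\textbf{Step 4 (verify the right-hand side).} From $H\perp B\mid I_B\cup C\cup D\cup J$, again evaluated at the observational regime, I get $\Prb(X_H\mid X_B,X_C,X_J\miid\Do(X_D))=\Prb(X_H\mid X_C,X_J\miid\Do(X_D))$. Expanding over $H$ and using Step~2,
\[
\Prb(X_A\mid X_B,X_C,X_J\miid\Do(X_D))=\Prb(X_A\mid X_B,X_C,X_{J_0},X_H\miid\Do(X_D))\circ\Prb(X_H\mid X_C,X_J\miid\Do(X_D)).
\]
Composing this with $\Prb(X_J\mid X_C\miid\Do(X_D))$ gives the product $\Prb(X_H\mid X_C,X_J)\otimes\Prb(X_J\mid X_C)=\Prb(X_J,X_H\mid X_C)$. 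Marginalizing out $J_1$ leaves $\Prb(X_{J_0},X_H\mid X_C\miid\Do(X_D))$. By Step~3, this is exactly the composition obtained in Step~1, which proves the formula.

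The main obstacle is the measure-theoretic bookkeeping, not the graphical reasoning. Each equality above holds only up to null sets of different kernels, so I must check that the two positivity assumptions allow every null set to be converted into a $\mu_{B\cup C}$-null set, or into a $\mu_{C\cup J\cup H}$-null set that is then integrated out. In particular, Step~2 evaluates conditionals given $X_B$ in a kernel where $B$ is intervened upon, and it is the joint positivity of $(X_B,X_C,X_J,X_H)$ that justifies this. I would handle this by fixing versions via the uniqueness clauses in \cref{thm:causal_calculus_mag_pag}, and by using the equivalence in \cref{defthm:derivative} to pass between null sets of the relevant kernels and of the reference measures.
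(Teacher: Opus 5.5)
Your argument is correct, but it is organized differently from the paper's. The paper does not re-derive the formula. It transfers the three separation hypotheses via \cref{thm:sep_hsint_mag,thm:sep_hsint_pag_mag} to $\Af_{\Do(I_B,D)}$ for every $\Af\in[\Gf]_\Gs$, with $\Sc_\Af$ added to the conditioning sets. It then invokes the isADMG-level adjustment theorem \cite[Theorem~5.2.3]{forre2025mathematical} as a black box, so all null-set bookkeeping is inherited from the lecture notes.

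You instead stay at the level of $\Gf$ and re-prove the adjustment formula from Rules~(2) and~(3) of \cref{thm:causal_calculus_mag_pag} plus disintegration. This costs you the bookkeeping you describe, but it is self-contained within the paper. It also makes explicit where each separation and each positivity condition is used. Rule~(2) needs the first positivity condition marginalized to $B\cup C\cup J_0\cup H$ and the second marginalized to $C\cup J_0\cup H$. The compositions need the conditional laws of $(X_{J_0},X_H)$ given $X_C$, and of $X_H$ given $(X_B,X_C,X_J)$, to be dominated by the reference measures a.e.

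One simplification: your detour to the isADMG level for the two ``observational regime'' identities is unnecessary. No edge points into any $I_b$, so $\Gf_{\Do(D)}$ is the induced subgraph of $\Gf_{\Do(I_B,D)}$ on the original nodes, and paths avoiding $I_B$ have the same $id$-status in both graphs. Hence the $J_1$-part of the second hypothesis gives $A\sep{\mathsf{id}}{\Gf_{\Do(D)}} J_1\mid B\cup C\cup D\cup J_0\cup H$. Conditioning on $I_B$ blocks every path through or ending in $I_B$, so the third hypothesis amounts to $H\sep{\mathsf{id}}{\Gf_{\Do(D)}} B\mid C\cup D\cup J$. Both identities are then direct instances of Rule~(1) of \cref{thm:causal_calculus_mag_pag}, with positivity supplied by the first assumption.
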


\begin{corollary}[Back-door adjustment criterion and formula]
    Assume the setting of \cref{thm:causal_calculus_mag_pag} and $\Ic\subseteq D$ WLOG. Assume 
    \[
        F \sep{\mathsf{id}}{\Gf_{\Do(I_B,D)}} I_B \mid  D,\quad \text{ and } \quad 
            A \sep{\mathsf{id}}{\Gf_{\Do(I_B,D)}}  I_B \mid B\cup F\cup D.
    \]
    Furthermore, assume
    \[  
        \begin{aligned}
            &\Prb_\Mc(X_F\mid X_{\Sc}=\1\miid \Do(X_D))\otimes \Prb_{\Mc}(X_B \mid X_\Sc=\1 \miid \Do(X_D)) \\
            &\ll \Prb_\Mc(X_F,X_B\mid X_\Sc=\1 \miid \Do(X_D)).
        \end{aligned}
    \]
    Then the adjustment formula holds true $\Prb_\Mc(X_B\mid X_\Sc=\1\miid \Do(X_D))\text{-a.s.}$:
    \[
        \begin{aligned}
            &\Prb_{\Mc}(X_A\mid X_\Sc=\mathbf{1}_{|\Sc|}\miid \Do(X_B,X_D))\\
            &=\Prb_{\Mc}(X_A\mid X_F,X_B,X_\Sc=\mathbf{1}_{|\Sc|}\miid \Do(X_D))\circ \Prb_{\Mc}(X_F\mid X_\Sc=\mathbf{1}_{|\Sc|}\miid \Do(X_D)).
        \end{aligned}
    \]
\end{corollary}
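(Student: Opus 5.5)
The plan is to derive the corollary from the two atomic rules of \cref{thm:causal_calculus_mag_pag}, in their kernel-level form. I do not intend to specialise the general adjustment theorem directly. The reason is that its positivity hypothesis is stated relative to fixed $\sigma$-finite reference measures $\mu_v$, whereas here we only have the weaker condition that the product of the marginals is absolutely continuous with respect to the joint.

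Formally the corollary is the special case $C=H=J_1=\emptyset$, $J=J_0=F$ of the general adjustment theorem. Under that substitution the three separation hypotheses become, respectively, $F\perp I_B\mid D$, $A\perp I_B\mid B\cup F\cup D$, and a trivial statement for $H=\emptyset$. I would use this only as a guide to which separations to feed into which rule.

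\textbf{Step 1 (decomposition).} Disintegrate the interventional kernel along $F$:
\[
\Prb_\Mc(X_A\mid X_\Sc=\1\miid\Do(X_B,X_D))=\Prb_\Mc(X_A\mid X_F,X_\Sc=\1\miid\Do(X_B,X_D))\circ\Prb_\Mc(X_F\mid X_\Sc=\1\miid\Do(X_B,X_D)).
\]
This uses \cref{defthm:prob_calculus}(4), with $\Ic\subseteq D$, so the inputs are among the conditioned-on interventions.

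\textbf{Step 2 (rule 3 on the mixing kernel).} Apply rule (3) of \cref{thm:causal_calculus_mag_pag} with $A:=F$ and $C:=\emptyset$, using $F\sep{\mathsf{id}}{\Gf_{\Do(I_B,D)}} I_B\mid D$. With $C=\emptyset$ the kernel-level conclusion has no conditioning null set. It gives exactly
\[
\Prb_\Mc(X_F\mid X_\Sc=\1\miid\Do(X_B,X_D))=\Prb_\Mc(X_F\mid X_\Sc=\1\miid\Do(X_D)),
\]
so this kernel does not depend on $x_B$.

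\textbf{Step 3 (rule 2 on the integrand).} Apply rule (2) with $C:=F$, using $A\sep{\mathsf{id}}{\Gf_{\Do(I_B,D)}} I_B\mid B\cup F\cup D$. The kernel-level statement yields a single kernel $\Qr(X_A\miid X_B,X_F,X_D)$ that is simultaneously a version of two conditionals, by taking the decompositions $B=\emptyset\dcup B$ and $B=B\dcup\emptyset$:
\begin{itemize}
\item $\Prb_\Mc(X_A\mid X_F,X_\Sc=\1\miid\Do(X_B,X_D))$, determined up to sets that are null for $\Prb(X_F\mid\cdot\miid\Do(X_B,X_D))$ for each $x_B$;
\item $\Prb_\Mc(X_A\mid X_B,X_F,X_\Sc=\1\miid\Do(X_D))$, determined up to sets that are null for $\Prb(X_F,X_B\mid\cdot\miid\Do(X_D))$.
\end{itemize}
Substituting $\Qr$ and the result of Step 2 into Step 1 gives the adjustment formula with $\Qr$ as integrand.

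\textbf{Step 4 (version independence; the main obstacle).} It remains to show that any version of $\Prb(X_A\mid X_B,X_F,X_\Sc=\1\miid\Do(X_D))$ may replace $\Qr$ without changing the composition, for $\Prb(X_B\mid\cdot\miid\Do(X_D))$-a.e.\ $x_B$. Let $N$ be the set where another version differs from $\Qr$. By construction $N$ is null for the joint $\Prb(X_F,X_B\mid\cdot\miid\Do(X_D))$.

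The assumed absolute continuity, product of marginals $\ll$ joint, then makes $N$ null for $\Prb(X_F\mid\cdot)\otimes\Prb(X_B\mid\cdot)$. By Fubini, for $\Prb(X_B\mid\cdot)$-a.e.\ $x_B$ the section $N_{x_B}$ is $\Prb(X_F\mid\cdot\miid\Do(X_D))$-null. By Step 2, this is the same as being null for the mixing kernel $\Prb(X_F\mid\cdot\miid\Do(X_B,X_D))$, so the composition is unchanged.

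I expect this null-set bookkeeping to be the only delicate point. It must be done uniformly in $x_D$ and must use the joint measurability from \cref{notation:equality_markov_kernel}. It is precisely why the weak product-versus-joint positivity condition suffices here in place of the two-sided $\mu$-equivalences required in the general theorem.
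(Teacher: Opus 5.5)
Your argument is correct, but it follows a different route from the paper. The paper gives no separate proof. It transfers the two $id$-separations from $\Gf_{\Do(I_B,D)}$ to $\Af_{\Do(I_B,D)}$, with $\Sc_\Af$ added to the conditioning set, for every $\Af\in[\Gf]_\Gs$ via \cref{thm:sep_hsint_mag,thm:sep_hsint_pag_mag}. It then invokes the ready-made back-door result for iADMGs \cite[Theorem~5.2.3 and Corollary~5.2.5]{forre2025mathematical}.

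You instead start from the already-lifted kernel-level parts of rules (2) and (3) in \cref{thm:causal_calculus_mag_pag} and re-derive the formula by hand:
\begin{itemize}
\item disintegration along $F$;
\item rule (3) with empty conditioning set, which, as you correctly note, gives an exact rather than almost-sure identity for the mixing kernel, since a kernel into a one-point space has no nontrivial null sets;
\item rule (2) with the two extreme decompositions $B=\emptyset\dcup B$ and $B=B\dcup\emptyset$;
\item a Tonelli argument.
\end{itemize}

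The paper's route is shorter. However, it leaves implicit how the external iADMG corollary accommodates the selection conditioning $X_\Sc=\1$, namely conditioning on $\Sc_\Af$ and then restricting to a positive-probability atom. In your route that bookkeeping has already been absorbed into \cref{thm:causal_calculus_mag_pag}.

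Your version also makes three things transparent:
\begin{itemize}
\item with the oracle kernel $\Qr$ the identity holds for every $x_B$;
\item the product-versus-joint absolute continuity is used only to make the formula independent of the chosen version of the integrand;
\item no reference measures or interventional positivity are needed.
\end{itemize}
This last point is exactly why the corollary is not literally the $C=H=J_1=\emptyset$ special case of the general adjustment theorem.
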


The above two results are derived from \cref{thm:sep_hsint_mag,thm:sep_hsint_pag_mag} and \cite[Theorem~5.2.3 and Corollary~5.2.5]{forre2025mathematical}.

\section{Identification algorithm for PAGs}\label{sec:IDalg}

In this section, we study the measure-theoretic identification algorithm for iSOPAGs under selection bias, termed \emph{sIDP algorithm}. We first introduce the formal definition of identifiability for (conditional) interventional kernels under selection bias, together with the graphical and measure-theoretic notions needed to formulate the algorithm. Then we show the soundness and completeness of the sIDP algorithm. 

\subsection{Causal identification}

Identifiability plays a fundamental role in statistical analysis. Causal effect identifiability \cite[Definition~3.2.4]{pearl2009causality} is subtle: small definitional changes may lead to errors (see, e.g., \cite{lee20gID,Yaroslav22revisite_gID}). Also note that in the presence of selection mechanisms, there are two types of causal identification: (i) s-recoverability \cite{bareinboim2015recovering}, and (ii) s-ID \cite{Abouei24sID,abouei24sIDlatent}. We focus on causal identificaiton of the s-ID type and formalize the identifiability and trackability of (conditional) interventional kernels under selection for isADMGs, iMAGs, and iSOPAGs.

\begin{definition}[Identifiability and trackability of (conditional) interventional kernels under selection]\label{def:iden}
Let $\Gf=(\Ic,\Vc,\Sc,\Ec)$ be an isADMG or $\Gf=(\Ic,\Vc,\Ec)$ be an iMAG/iSOPAG. Let $\Cc(\Gf)$ be a model class of s-iSCMs such that $\Gb(\Mc^\Sc)\in[\Gf]_\Gs$ for all $\Mc^\Sc=(\Mc,X_\Sc\in S)\in \Cc(\Gf)$.  Assume that the state spaces of models in $\Cc(\Gf)$ are such that $\Xc_i\subseteq [0,1]$ are standard Borel spaces, for all $i\in \Ic\dcup \Vc$.\footnote{Every standard Borel space is Borel-isomorphic to a Borel subset of $[0,1]$.}  
Define the kernel universe (including $D=\emptyset$)
\[
     \Pc(\Vc;\Ic)\coloneqq \bigcup_{D\subseteq \Vc}\big\{\Kr(X_{\Vc\sm D}\miid X_{D},X_{\Ic}): [0,1]^{|D\cup \Ic|}\dto [0,1]^{|\Vc\sm D|} \text{ Markov kernel} \big\}.
\] 
Define for $A,B,C\subseteq \Vc$ disjoint 
\[
\Phi_\Gf^{(A,B,C)}(\Mc^\Sc)\coloneqq \Prb_\Mc(X_A\mid X_C,X_{\Sc}\in S\miid \Do(X_B),X_\Ic)\in \Pc(\Vc;\Ic).\footnote{We identify $\Prb_\Mc(X_A\mid X_C,X_{\Sc}\in S\miid \Do(X_B),X_\Ic)$ with a kernel $\Kr(X_{\Vc\sm (B\cup C)}\miid X_{\Ic\cup B\cup C})\in \Pc(\Vc;\Ic)$ where
\[
    \begin{aligned}
        &\Kr(X_{\Vc\sm (B\cup C)}\miid X_{\Ic\cup B\cup C}=x_{\Ic\cup B\cup C})\\
    &=\begin{cases}
        \Prb_\Mc(X_A\mid X_C=x_C,X_{\Sc}\in S\miid \Do(X_B=x_B),X_\Ic=x_{\Ic})\otimes \delta_0^{\otimes |\Vc\sm (A\cup B\cup C)|}, \text{ if } x_{\Ic\cup B\cup C}\in \Xc_{\Ic\cup B\cup C}\\
        \delta_0^{\otimes |\Vc\sm (B\cup C)|}, \text{ otherwise.}
    \end{cases}
    \end{aligned}
\]
}
\]
We say that $\Prb_\Mc(X_A \mid X_C,X_\Sc\in S \miid  \Do(X_{B}), X_{\Ic})$ is \textbf{identifiable in $\Cc(\Gf)$} if there exists a mapping $\Psi_{\Gf}^{(A,B,C)}:\Phi_{\Gf}^{(\Vc,\emptyset,\emptyset)}(\Cc(\Gf))\to \Pc(\Vc;\Ic)$ such that  for all $\Mc^\Sc\in \Cc(\Gf)$
\[
\Phi_{\Gf}^{(A,B,C)}(\Mc^\Sc)=\Psi_\Gf^{(A,B,C)}\circ \Phi_{\Gf}^{(\Vc,\emptyset,\emptyset)}(\Mc^\Sc),
\]
where the equality holds up to versions of conditional kernels.

We say that $\Prb_\Mc(X_A \mid X_C,X_\Sc\in S \miid  \Do(X_{B}),X_{\Ic})$ is \textbf{trackable in $\Cc(\Gf)$ from $\Prb_\Mc(X_{\Vc}\mid X_\Sc\in S \miid X_{\Ic})$} (up to oracle choices of conditional kernels), if the mapping $\Psi_{\Gf}^{(A,B,C)}$ witnessing identifiability can be constructed as a composition of a finite sequence of operations of probability calculus and causal calculus in \cref{defthm:prob_calculus,defthm:causal_calculus,thm:causal_calculus_mag_pag} (with a model-dependent conditional kernel supplied at each conditioning step by an oracle) that depends only on $\Gf$ and $(A,B,C)$. 
\end{definition}

\begin{remark}
    At first glance, the existence of a mapping $\Psi$ witnessing causal identifiability need not imply trackability. However, completeness of the identification algorithm developed later (\cf \cref{thm:idp_complete}) shows that identifiability and trackability coincide in various cases.
\end{remark}

In \cref{def:iden}, the model class $\Cc(\Gf)$ may be, for example:
    \begin{enumerate}[label=(\roman*)]
        \item $\Mb(\Gf)$: all s-iSCMs $(\Mc,X_\Sc\in S)$ with $\Gb(\Mc,X_\Sc\in S)\in [\Gf]_\Gs$;
        \item $\Mb^+(\Gf)$: $(\Mc,X_\Sc\in S)\in \Mb(\Gf)$ for which there exist fixed $\sigma$-finite reference measures $\mu_v$ on $\Xc_v$ for all $v\in \Vc$ such that for every $D\subseteq \Vc$
        \[\mu_D\ll \Prb_\Mc(X_{D}\mid X_\Sc\in S \miid X_\Ic,\Do(X_{\Vc\sm D}))\ll \mu_D;\]
        \item $\Mb_c^+(\Gf)$: $(\Mc,X_\Sc\in S)\in \Mb^+(\Gf)$ such that there exists a causal Bayesian network
        \[
        \bigl(\Df=(\Ic,\Vc\dcup \Sc\dcup \Lc,\tilde{\Ec}),\{\Prb_v(X_v\miid X_{\pa_\Df(v)})\}_{v\in \Vc\cup \Sc\cup \Lc}\bigr)
        \]
        such that: it is interventionally equivalent to $\Mc$, its marginalized graph satisfies $\Df_{\sm \Lc}=\Gb(\Mc^\Sc)$, and for every $v\in \Vc\cup \Sc\cup \Lc$, the kernel $\Prb_v(X_v\miid X_{\pa_\Df(v)})$ is positive and continuous in the sense of \cref{def:pcmk}.

        \item $\Mb^+_{d}(\Gf)$: $(\Mc,X_\Sc\in S)\in \Mb(\Gf)$ for which $\Prb_\Mc(X_{\Vc\sm \Sc},X_{\Wc}\mid X_\Sc\in S \miid X_\Ic)$ has a positive probability mass function;\footnote{When endogenous variables of an SCM are discrete, assuming discrete exogenous variables with strictly positive probability mass function entails no loss of generality. See, e.g., \cite{Rosset_18_bound_card_hidden}.}
        \item $\Mb_{lg}^+(\Gf)$: $(\Mc,X_\Sc\in S)\in \Mb(\Gf)$ with linear causal mechanisms and Gaussian noise having a positive-definite covariance matrix.
         
    \end{enumerate}

    \begin{definition}[Positive and continuous Markov kernels \cite{Richard01causal_inference_continuous}]\label{def:pcmk}
    We say a Markov kernel $\Kr(X\miid Y):\Yc\dto \Xc$ is \textbf{positive and continuous} if
    \begin{enumerate}
        \item $\Xc$ and $\Yc$ are Polish spaces;

        \item (positivity) $\Kr(X\miid Y)$ is strictly positive on non-empty open subsets of $\Xc$;

        \item (Feller continuity) $\Kr(X\miid Y)$ is continuous as a map from $\Yc\to \Pc(\Xc)$ where $\Pc(\Xc)$ is the space of all probability measures on $\Xc$, equipped with the weak topology.
    \end{enumerate}
\end{definition}

See \cref{sec:pcmk} for several useful properties of positive and continuous Markov kernels.

\begin{remark}[On model class]\label{rem:model_class}
    \begin{enumerate}
        \item If $\Gf$ is an iADMG or an ADMG, interpret $\Cc(\Gf)$ as a class of iSCMs or SCMs, respectively.

        \item  For any isADMG/iMAG/iSOPAG $\Gf$,  
        \[
        \Mb^+_d(\Gf), \Mb^+_{lg}(\Gf)\subsetneq \Mb_{c}^+(\Gf) \subsetneq \Mb^+(\Gf)\subsetneq \Mb(\Gf).
        \]
        $\Mb(\Gf)$ is too broad for a reasonable causal identification result, since no positivity conditions are imposed (see e.g., \cite{Yaroslav22revisite_gID,forre2025mathematical,chen25notes}). $\Mb^+(\Gf)$ can yield almost-sure identification \wrt the reference measure $\mu$. For $\Mb_c^+(\Gf)$, one can obtain a pointwise identification result (provided one takes continuous version of conditional Markov kernels) and the same for $\Mb^+_d(\Gf)$ and $ \Mb^+_{lg}(\Gf)$. See \cref{thm:idp_sound}.

        \item If the density $p_{\Mc}(x_{v} \miid x_{\Ic},\Do(x_{\Vc\cup \Sc\sm \{v\}}))$ of $\Prb_{\Mc}(X_{v} \miid X_{\Ic},\Do(X_{\Vc\cup \Sc\sm \{v\}}))$, \wrt a $\sigma$-finite reference measure $\mu_{v}$ on $\Xc_{v}$, is strictly positive for all $v\in \Vc\cup \Sc$, $x_{\Vc\cup \Sc}\in \Xc_{\Vc\cup \Sc}$ and $x_{\Ic}\in \Xc_{\Ic}$, then $\Mc\in \Mb^+(\Gb(\Mc^\Sc))$ (see, e.g., \cite[Lemma~5.3.33]{forre2025mathematical}). $\Mc\in \Mb^+(\Gb(\Mc^\Sc))$ implies  $p_\Mc(x_{\Vc} \mid X_\Sc\in S \miid x_{\Ic})>0$ for all $x_\Vc\in \Xc_\Vc$ and $x_{\Ic}\in \Xc_{\Ic}$ but not conversely. 

    \end{enumerate}
\end{remark}

\subsection{Review: ID algorithm for ADMGs}

Before diving into the technical details of the sIDP algorithm, it is instructive to first review the basic idea behind the ID algorithm for ADMGs. Relevant references include \cite{tian02general,shpister06joint,shpitser2008complete,richardson2023nested,forre2025mathematical}. Let $\Mc\in \Mb^{+}_c(\Af)$ be an SCM with causal ADMG $\Af=(\Vc,\Ec)$. If nonempty sets $A,B\subseteq \Vc$ are disjoint, the ``one-line formulation'' of the ID algorithm, derived in \cite[Theorem~48]{richardson2023nested} is: if $\mathsf{Distr}(\Af_\Dc)\subseteq \mathsf{Intrin}(\Af)$ then
\begin{equation}\label{eqn:nested_mark}
 p_{\Mc}(x_A\miid \Do(x_B))=\sum_{x_{\Dc\sm A}}\prod_{D\in \mathsf{Distr}(\Af_\Dc)} \Qc[D] 
    = \sum_{x_{\Dc\sm A}}\prod_{D\in \mathsf{Distr}(\Af_\Dc)} \phi_{\Vc\sm D}(p_{\Mc}(x_{\Vc});\Af),
\end{equation}
where $\Dc=\anc_{\Af_{\Vc\sm B}}(A)$ and $\mathsf{Distr}(\Af_{\Dc})$ denotes the set of districts (i.e., c-components) of $\Af_{\Dc}$ and $\mathsf{Intrin}(\Af)$ denotes the set of intrinsic sets of $\Af$ \cite[Definition~33]{richardson2023nested}.\footnote{A general measure-theoretic formulation is:
    \[
        \Prb_\Mc(X_A\miid \Do(X_B))=\Big(\bigotimes_{D\in \mathsf{Distr}(\Af_\Dc)}^\succ \Qc[D]\Big)^{\sm (\Dc\sm A)},
    \]
    where the equality holds up to oracle choices of conditional kernels and the product of C-factors over districts is rigorously defined in \cite[Definition~5.3.16]{forre2025mathematical}.}   
    Every factor $\Qc[D]$ for $D\in \mathsf{Distr}(\Af_\Dc)\cap \mathsf{Intrin}(\Af)$ can be derived from $\Qc[\Vc]$ by applying the fixing operation \cite[Definition~19]{richardson2023nested} iteratively in an arbitrary order \cite[Theorem~31]{richardson2023nested}, which is defined as\footnote{Note that, conceptually, the fixing operation is different from hard intervention on graphs. We interpret $\phi_r(\Gf)\coloneqq \Gf_{\Do(r)}$ as a purely mathematical definition.} 
    \[
       \phi_r(\Gf)\coloneqq \Gf_{\Do(r)}, \quad \phi_r\big(q(x_{V}\miid x_W);\Gf\big)\coloneqq \frac{q(x_V\miid x_W)}{q(x_r\mid x_{\mathsf{Mb}_\Gf(r)\cap V} \miid x_{W})}
    \]
    for iADMG $\Gf=(W,V,\tilde{\Ec})$ and fixable node $r\in V$ in the sense that \cite[Definition~17]{richardson2023nested} \[\mathsf{Distr}_{\Gf}(r)\cap \de_{\Gf}(r)=\{r\}.\] This procedure is complete: if $\mathsf{Distr}(\Af_\Dc)\nsubseteq \mathsf{Intrin}(\Af)$ then the target interventional kernel is not identifiable \wrt $\Af$ \cite{shpitser2008complete,Huang2008OnTC,richardson2023nested,shpitser23doesidalgorithmfail}. 

    An algorithmic procedure of the above formulation could consist of three steps:
    \begin{enumerate}
        \item [(i)] Set $\Dc=\anc_{\Af_{\Vc\sm B}}(A)$.

        \item [(ii)] Decompose $\Af_{\Dc}$ into disjoint districts (c-components).

        \item [(iii)] For each $D\in \mathsf{Distr}(\Af_\Dc)$, check whether the fixing operation can be applied iteratively so as to obtain $\Af_{\Do(D^c)}$ graphically from $\Af$. If every $D\in \mathsf{Distr}(\Af_\Dc)$ passes the test, then $\Qc[D]=\phi_{D^c}(\Qc[\Vc];\Af)$ for all $D\in \mathsf{Distr}(\Af_\Dc)$; multiplying these factors and marginalizing yields the desired interventional kernel. If at least one $D\in \mathsf{Distr}(\Af_\Dc)$ fails the test, output $\textsc{Fail}$.
    \end{enumerate}
    

    The idea behind sIDP is to extend the above three-step procedure to iSOPAGs in a fully measure-theoretic setting. Conceptually, it suffices to identify the right iSOPAG counterparts of the key ADMG notions:

    \begin{enumerate}
    \item[(i)] a notion of an ``atomic unit'' in an iSOPAG (\cf \cref{def:bucket});
    \item [(ii)] a notion of ``ancestors'' (\cf \cref{def:podirected_path}) and a measure-theoretic rule of reducing the problem from $\Vc$ to ``ancestors'' of $A$ (cf. Rule~L0 in \cref{prop:sidp_rule});
    \item[(iii)] a notion of “district” in an iSOPAG (\cf \cref{def:pc-component,def:region}) and a measure-theoretic analogue of the “product over districts” factorization for iSOPAGs (\cf Rule~L1 in \cref{prop:sidp_rule});
    \item[(iv)] a notion of “fixable node,” together with a measure-theoretic fixing operation on kernels under iSOPAGs (\cf Rule~L2 in \cref{prop:sidp_rule}).
    \end{enumerate}

\subsection{sIDP: graphical notions}

We introduce the graphical notions needed to formulate the sIDP algorithm. The following definitions generalize those of \cite{jaber19idencom,jaber22causal}.

\begin{definition}[Bucket]\label{def:bucket}
    Let $\Gf=(\Ic,\Vc,\Ec)$ be an iPAG. Let $a,b\in \Vc$. We say that nodes $a$ and $b$ are in the same $\Gf$-\textbf{bucket} if there is a path $\pi$ in $\Gf$ from $a$ to $b$ such that there are no arrowheads on $\pi$. We write $a\in\bu_{\Gf}(b)$ and  $\bu_\Gf(B)\coloneqq\bigcup_{b\in B}\bu_{\Gf}(b)$ with $B\subseteq \Vc$. 
\end{definition}

\begin{definition}[pc-component]\label{def:pc-component}
Let $\Gf=(\Ic,\Vc,\Ec)$ be an iPAG. Let $a,b\in \Vc$. A path $\pi$ from $a$ to $b$ is called \textbf{pc-connecting} if it is of one of the following forms: 
\begin{enumerate}
    \item [(i)] $a\sus b$ not visible; or
    \item [(ii)] $a=v_0\suh v_{1} \huh \cdots \huh v_{n-1}\hus v_{n}=b$ for some $n>1$ and none of its edges are visible.
\end{enumerate}
We say that two nodes $a$ and $b$ are in the same \textbf{pc-component} in $\Gf$ if there is a pc-connecting path from node $a$ to node $b$ in $\Gf$. We write $a\in \pcc_{\Gf}(b)$ and  $\pcc_\Gf(B)\coloneqq\bigcup_{b\in B}\pcc_{\Gf}(b)$ with $B\subseteq \Vc$.
\end{definition}

\begin{definition}[Region]\label{def:region}
    Let $\Gf=(\Ic,\Vc,\Ec)$ be an iPAG. Let $a,b\in \Vc$. We say that node $a$ is in the \textbf{region} of $b$ in $\Gf$ if there exists $c\in \pcc_{\Gf}(b)$ such that $a\in \bu_{\Gf}(c)$. We write $a\in \re_{\Gf}(b)$ and $\re_{\Gf}(B)\coloneqq \bigcup_{b\in B}\re_{\Gf}(b)$ with $B\subseteq \Vc$.
\end{definition}

\subsection{sIDP: measure-theoretic operations}

We discuss measure-theoretic operations for sIDP in this subsection. Before that, we need to  first generalize the notion of topological order over nodes of an ADMG to buckets of an iSOPAG.

\begin{definition}[Topological order over buckets]
    Let $\Pf=(\Ic,\Vc,\Ec)$ be an iSOPAG. Let $D\subseteq \Ic\cup \Vc$. We call a partial order $\prec$ over buckets of $\Pf_D$ a \textbf{topological order of buckets of $\Pf_D$} if it satisfies the property:
    \[
      \Ab\subseteq \pan_{\Pf_D}(\Bb)\quad \Longrightarrow \quad   \Ab\prec \Bb. 
    \]
\end{definition}

There always exists a topological order over $\Pf_D$-buckets (\cf \cref{lem:topo_order}). Fix a topological order $\prec$ on the $\Pf_D$-buckets. Let $\Bf$ be the union of a set of $\Pf_D$-buckets. We write $\Bf^\prec$ and $\Bf^\succ$ as the union of all $\Pf_D$-buckets before and after all buckets in $\Bf$, respectively. Define $\Bf^\preceq\coloneqq \Bf^\prec\cup \Bf$ and $\Bf^\succeq\coloneqq \Bf^\succ\cup \Bf$.

\cref{thm:causal_calculus_mag_pag} yields the following three rules. These rules are measure-theoretic counterparts, for iSOPAGs, of the three elementary operations used in the ID algorithm for ADMGs.

\begin{proposition}\label{prop:sidp_rule}
    Let $\Pf=(\Ic,\Vc,\Ec)$ be an iSOPAG, and let $\Mc^\Sc=(\Mc,X_\Sc=\mathbf{1}_{|\Sc|})$ be an s-iSCM such that $\Gb(\Mc^\Sc)\in [\Pf]_\Gs$.  

    \begin{enumerate}[label=\textbf{Rule~L\arabic*:}, ref=Rule~L\arabic*, leftmargin=*, start=0]
        \item  Let $A,B\subseteq \Vc$ be disjoint with $A\ne \emptyset$. Define 
    \[
        \Dc\coloneqq \pant_{\Pf_{\Vc\sm B}}(A) \quad \text{ and } \quad  H\coloneqq (\Vc\sm (\Dc\cup B))\cup (\Ic\sm \tilde{\Dc}),
    \]
    where $\tilde{\Dc}\coloneqq \bigcup_{i \in \Ic} \pant_{\Pf_{(\Vc\sm B)\cup \{i\}}}(A)\cap \Ic$. Then $\Prb_{\Mc}(X_A\mid X_\Sc=\mathbf{1}_{|\Sc|} \miid \Do(X_B),X_{\Ic})$ is trackable from $\Qc[\Dc]$, and the following pointwise equalities hold:
    \[
        \begin{aligned}
             &\Prb_{\Mc}(X_A\mid X_{\Sc}=\1\miid \Do(X_B),X_{\Ic\sm H},\cancel{X_{\Ic\cap H}})\\
             &=\Prb_\Mc(X_A\mid X_{\Sc}=\1\miid \Do(X_B),X_{\Ic})\\
             &=\Prb_\Mc(X_A\mid X_{\Sc}=\1\miid \Do(X_{\Vc\sm \Dc}),X_{\Ic})=\Qc[\Dc]^{\sm(\Dc\sm A)}.
        \end{aligned}
    \]

    \item Let $D\subseteq \Vc$ and $A\subseteq D$. Set $R_1\coloneqq \re_{\Pf_D}(A)$ and $R_2\coloneqq \re_{\Pf_D}(D\sm R_1)$. Let $\Bb_1\prec \cdots \prec \Bb_n$ be a topological order of buckets of $\Pf_D$. Then $\Qc[D]$ is trackable from $\Qc[R_1]$ and $\Qc[R_2]$. If $\Mc^\Sc\in \Mb^+(\Pf)$, we have the pointwise equality 
    \begin{equation}\label{eq:sidp_rule1}
         \Qc[D]=\Qc[R_1]\boxtimes \Qc[R_2]=\Qc[R_2]\boxtimes \Qc[R_1],
    \end{equation}
    where 
    \[
        \Qc[R_1]\boxtimes \Qc[R_2]\coloneqq \bigotimes^>_{1\le i\le n} L_i
    \]
    and 
    \[  
        L_i\coloneqq \begin{cases}
            \Prb_{\Mc}(X_{\Bb_i}\mid X_{\Bb_i^\prec\cap R_1},X_{\Sc}=\1 \miid \Do(X_{R_1^c}),X_{\Ic}) \quad \text{ if }B_i\subseteq R_1\\
            \Prb_{\Mc}(X_{\Bb_i}\mid X_{\Bb_i^\prec\cap R_2},X_{\Sc}=\1 \miid \Do(X_{R_2^c}),X_{\Ic}) \quad \text{ if }B_i\subseteq R_2\sm R_1. 
        \end{cases}
    \] 

    \item  Let $D\subseteq \Vc$ and $\Ab\subseteq D$ be a bucket in $\Pf_D$. Write $D^+\coloneqq \pde_{\Pf_D}(\Ab)$ and $D^-\coloneqq (D\sm D^+)\cup \Ab$.  If $\pcc_{\Pf_D}(\Ab)\cap \pde_{\Pf_D}(\Ab)\subseteq \Ab$, then $\Qc[D\sm \Ab]$ is trackable from $\Qc[D]$ via
    \[
        \Qc[D\sm \Ab]=\Qc[D]^{|D^-}\otimes \Qc[D]^{\sm  D^+},
    \]
    where equality holds up to an oracle choice of the conditional kernel and holds $\mu_{\Ab}$-a.s.\ if $\Mc^\Sc \in \Mb^+(\Pf)$. If $\Mc^\Sc\in \Mb_c^+(\Pf)$, then, upon taking the continuous version of $\Qc[D]^{\mid D^-}\otimes \Qc[D]^{\setminus D^+}$, which exists, the equality holds pointwise.

    \end{enumerate} 
\end{proposition}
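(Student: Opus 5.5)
The plan is to derive each of the three rules from the causal calculus of \cref{thm:causal_calculus_mag_pag}, applied to $\Pf$. Its separation hypotheses are in each case checked graphically in $\Pf_{\Do(I_\bullet,\bullet)}$ via \cref{def:sep_PAG}. By \cref{thm:sep_hsint_pag_mag}, such separations transfer to every $\Af\in[\Pf]_\Gs$ after conditioning on $\Sc_\Af$. Positivity ($\Mb^+$) supplies the absolute continuity assumptions of the calculus, and in $\Mb_c^+$ the continuous versions upgrade a.s.\ equalities to pointwise ones via the properties in \cref{sec:pcmk}.

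\textbf{Rule~L0.} First, every $v\in\Vc\sm(\Dc\cup B)$ is not a possible anterior of $A$ in $\Pf_{\Vc\sm B}$. I will show
\[
A\sep{\mathsf{id}}{\Pf_{\Do(I_{H\cap\Vc},B)}} I_{H\cap \Vc}\mid B .
\]
An open path from some $I_h$ must leave $I_h$ through an edge $I_h\tus h$ or $I_h\tus b'$ in the soft-manipulated graph, then proceed towards $A$. At its first collider it would need a possible anterior of the conditioning set $B$. That set is hard-manipulated, so no arrowheads point into it. A path with no colliders is a potentially anterior path into $A$, contradicting $h\notin\pant$. Rule~3 then gives $\Prb(X_A\mid X_\Sc=\1\miid \Do(X_B),X_\Ic)=\Prb(X_A\mid\cdots\miid\Do(X_{\Vc\sm\Dc}),X_\Ic)$, and marginalizing $\Qc[\Dc]$ gives the last equality. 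For input nodes in $\Ic\sm\tilde\Dc$, the same argument with $D_1$ input nodes and Rule~1 (deleting observations of inputs) yields independence from $X_{\Ic\cap H}$.

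\textbf{Rule~L1.} I will fix the topological order of buckets and write $\Qc[D]=\bigotimes^>_i \Qc[D]^{|\Bb_i^\prec}$ restricted to $\Bb_i$, via the chain rule of \cref{defthm:prob_calculus}. For a bucket $\Bb_i\subseteq R_1$, I will show two things. First, Rule~2 turns the actions on $R_1^c\cap\Bb_i^{\prec}$ into observations. Second, Rule~3 removes the actions on $R_1^c\cap \Bb_i^{\succ}$, and Rule~1 removes the observations of $\Bb_i^\prec\cap R_2\sm R_1$. The separations needed are:
\[
\Bb_i\sep{\mathsf{id}}{\Pf_{\Do(I_{R_1^c},D^c)}} I_{R_1^c}\cup (\Bb_i^\prec\sm R_1)\mid (\Bb_i^\prec\cap R_1)\cup D^c .
\]
These should follow because any open path leaving $\Bb_i$ must exit the region $R_1$. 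By the definitions of bucket, pc-component and region, the exit edge is either visible or carries a tail or arrowhead pattern that is blocked by the conditioning on predecessors. The factor for $R_2\sm R_1$ is symmetric. Trackability holds since only calculus rules and conditioning are used.

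\textbf{Rule~L2.} I will show that
\[
\Qc[D\sm\Ab]=\Prb(X_{D\sm\Ab}\mid\cdots\miid\Do(X_{D^c\cup\Ab}))
\]
splits into two parts. On $D^+\sm\Ab$ it equals the conditional $\Qc[D]^{|D^-}$ by Rule~2 (action on $\Ab$ exchanged for observation). On $D^-\sm\Ab$ it equals $\Qc[D]^{\sm D^+}$ by Rule~3 (the action on $\Ab$ is deleted, since $D^-\sm\Ab$ contains no possible descendants of $\Ab$). The Rule~2 separation uses the hypothesis $\pcc_{\Pf_D}(\Ab)\cap\pde_{\Pf_D}(\Ab)\subseteq\Ab$. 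Any path from $I_\Ab$ into $D^+$ through $\Ab$ that stays open given $D^-$ must use a non-visible edge or a bidirected collider path out of $\Ab$ into a possible descendant, and that would place this possible descendant in $\pcc(\Ab)$. The main obstacle is this step. The soft-manipulated edges $I_a\tuh b$ for invisible $a\tuh b$ and $a\ouh b$ (from $\Ec_4$), and the circle cases $\Ec_6,\Ec_7$, must be shown to be blocked. My first attempt is to reduce to \cref{thm:sep_hsint_pag_mag} and argue in each $\Mf\in[\Pf]_\Ms$, using the SOPAG properties \ref{sopag:p2}--\ref{sopag:p3} to control the circle marks.
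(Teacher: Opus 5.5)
Your overall skeleton matches the paper's: Rule~L0 comes from one separation plus Rule~3 and deletion of inputs, Rule~L1 from the bucket chain rule plus Rules~2 and~3, and Rule~L2 from the $D^+/D^-$ split plus Rules~2 and~3. However, the separations that actually carry the proof are either misstated or left open.

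\textbf{Rule~L1.} The separation you display is false in general. It targets the \emph{observed} nodes $\Bb_i^\prec\sm R_1$ and the indicators $I_{\Bb_i^\prec\sm R_1}$, yet it does not condition on $\Bb_i^\prec\sm R_1$. Take a single visible edge $u\tuh b$ with $u\in\Bb_i^\prec\sm R_1$ and $b\in\Bb_i$. It already gives two open paths: $b\hut u$, and $b\hut u$ extended by the edge between $u$ and $I_u$ (here $u$ is an unconditioned non-collider).

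Your ``Rule~1'' step is also spurious. Rule~2 on $\Bb_i^\prec\sm R_1$ followed by Rule~3 on $\Bb_i^\succ\sm R_1$ already turns $L_i$ into the chain-rule factor $\Prb_\Mc(X_{\Bb_i}\mid X_{\Bb_i^\prec},X_\Sc=\1\miid\Do(X_{D^c}),X_\Ic)$. That factor keeps the observations of all of $\Bb_i^\prec$, so deleting those of $\Bb_i^\prec\sm R_1$ would change it.

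What is needed are two separations with different conditioning sets, as in \cref{lem:IDP_sound_ci_2,lem:IDP_sound_ci_3}.
\begin{enumerate}
\item For Rule~3: $\Bb_i\sep{\mathsf{id}}{\Pf_{\Do(I_H,D^c)}} I_H\mid\Bb_i^\prec\cup D^c\cup\Ic$ with $H=D\sm(\Bb_i^\preceq\cup R_1)$. This follows from the topological order alone: an open path is forced through a collider in a bucket after $\Bb_i$, and such a node is not a possible ancestor of $\Bb_i^\prec\cup D^c$.
\item For Rule~2: $\Bb_i\sep{\mathsf{id}}{\Pf_{\Do(I_{H'},T)}} I_{H'}\mid\Bb_i^\prec\cup T\cup\Ic$ with $H'=\Bb_i^\prec\sm R_1$ and $T=(R_1\cup\Bb_i^\preceq)^c$. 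This needs \cref{lem:region}: an $\ouh$, $\Pf_D$-invisible $\tuh$, or $\huh$ edge into a node of $R_1$ must come from $R_1$. Hence the edges $I_h\tuh v$ with $h\notin R_1$ land outside $R_1$, and every bidirected collider chain continuing from there stays outside $R_1$.
\end{enumerate}
Your ``exit edge'' heuristic supplies neither argument.

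\textbf{Rule~L0 (a smaller slip).} By \cref{def:sep_PAG} all of $\Ic$ are implicit targets of the separation. So your displayed separation, conditioning on $B$ alone, fails as soon as an input is a parent of a node in $A$. You must condition on $B\cup\tilde{\Dc}$ and put $\Ic\sm\tilde{\Dc}$ into the target, as in \cref{lem:IDP_sound_ci_I}.

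\textbf{Rule~L2.} You leave the decisive separation (\cref{lem:IDP_sound_ci_IV}) open, but the obstacle you flag disappears once the hypothesis is used directly. If $\pcc_{\Pf_D}(\Ab)\cap\pde_{\Pf_D}(\Ab)\subseteq\Ab$, then every edge between $a\in\Ab$ and $b\in D\sm\Ab$ is either into $a$ or a visible $a\tuh b$:
\begin{enumerate}
\item an invisible $a\tuh b$ or an $a\ouh b$ would put $b$ into both $\pcc_{\Pf_D}(\Ab)$ and $\pde_{\Pf_D}(\Ab)$;
\item an edge without arrowheads would put $b$ into $\Ab$.
\end{enumerate}
Visibility in $\Pf_D$ implies visibility in $\Pf$, so none of $\Ec_4$--$\Ec_7$ of \cref{def:int_pag} creates an edge from $I_\Ab$ to $D\sm\Ab$. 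Edges from $I_\Ab$ to $D^c$ are deleted by the hard manipulation. Hence $I_a$ is adjacent only to nodes of $\Ab$.

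The rest is an ordering argument. Take a bucket order in which $\pde_{\Pf_D}(\Ab)$ is a terminal segment. Use Left Contraction to reduce to $\Bb_i\sep{\mathsf{id}}{\Pf_{\Do(I_\Ab,D^c)}} I_\Ab\mid\Bb^{(i-1)}\cup\Ab\cup D^c$ for each later bucket $\Bb_i$.
\begin{enumerate}
\item A shortest open path then ends with $v_{n-2}\suh v_{n-1}$, where $v_{n-1}\in\Ab$ is a collider adjacent to $I_a$ and $v_{n-2}\in\Ab^\prec$. If that edge is $\huh$, then $v_{n-2}\in\pcc_{\Pf_D}(\Ab)\sm\pde_{\Pf_D}(\Ab)$, which also places it in $\Ab^\prec$.
\item Returning from $\Bb_i$ to $\Ab^\prec$ forces either a collider in a bucket after $\Bb_i$, which blocks, or a pure collider path. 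The latter puts $b$ or $v_1$ into $\pcc_{\Pf_D}(\Ab)\cap\pde_{\Pf_D}(\Ab)\sm\Ab$, contradicting the hypothesis.
\item For $D^-\sm\Ab$, a collider-free open path would be potentially anterior out of $\Ab$, which \cref{lem:pde_vs_pdet} excludes. Any collider would have to reach the hard-manipulated $D^c$, which is impossible.
\end{enumerate}

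Your fallback also runs the wrong way. For an iSOPAG, \cref{thm:sep_hsint_pag_mag} gives only PAG-separation $\Rightarrow$ MAG-separation. Checking each $\Mf\in[\Pf]_\Ms$ therefore does not produce the PAG-level hypothesis required by \cref{thm:causal_calculus_mag_pag}.
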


\begin{remark}
Let $\Af=(\Ic,\Vc,\Ec)$ be an iADMG and $\Mc$ an iSCM with $\Gb(\Mc)=\Af$. Let $A,B\subseteq \Vc$ be disjoint with $A\ne \emptyset$. 
    \begin{enumerate}
        \item Assume $\Ic=\emptyset$ and $\Mc\in \Mb^+_c(\Af)$. Define $\Dc\coloneqq \anc_{\Pf_{\Vc\sm B}}(A)$.  Then 
        \[
            p_\Mc(x_A\miid \Do(x_B))=p_\Mc(x_A\miid \Do(x_{\Vc\sm \Dc})).
        \]
        Rule~L0 in \cref{prop:sidp_rule} can be viewed as a generalization of this fact from ADMGs to iSOPAGs in full measure-theoretic generality.

        \item In Rule~L1 of \cref{prop:sidp_rule}, if $\Mc^\Sc\in \Mb^+(\Pf)$, then the density of $\Qc[R_1]\boxtimes\Qc[R_2]$ \wrt $\mu_{D}$ is 
        \begin{equation}\label{eq:markov_combination}
             \frac{q[R_1]\cdot q[R_2]}{q[R_1\cap R_2]}.
        \end{equation}
        where $q[R_1\cap R_2]$ denotes the density of $\Qc[R_1\cap R_2]$ \wrt $\mu_{R_1\cap R_2}$ and we use the fact that\footnote{This follows by the same argument as for Rule~1 in \cref{prop:sidp_rule}, noting that \cref{lem:region} and \cref{lem:IDP_sound_ci_2,lem:IDP_sound_ci_3} also apply to $R_1\cap R_2$. We therefore omit the proof.} 
        \[
            q[R_1\cap R_2]=\big(q[D]^{|(R_1\cap R_2)^{\prec}}\big)^{\sm (D\sm(R_1\cap R_2))} \quad \mu_{R_1\cap R_2}\text{-a.s.}
        \]    
        \cref{eq:sidp_rule1} extends the district product $\prod_{D\in \mathsf{Distr}(\Af_\Dc)} \Qc[D]$ in \cref{eqn:nested_mark} from ADMG models with discrete variables to iSOPAGs with general (possibly continuous) variables. Viewed purely as an operation on kernels, \cref{eq:markov_combination} is also a kernel-valued analogue of the \emph{Markov combination} of consistent probability distributions introduced by \cite{Dawid93Hyper_Markov}.\footnote{See, e.g., \cite{Byrne15Structural_Markov,massa2010combining,Goudie19Joining_Splitting} for further discussion of the Markov combination.}

        \item Consider the following operation for a fixable node $r\in \Vc$: 
        \[
            \begin{aligned}
                &\varphi_r\big(\Prb_{\Mc}(X_{\Vc}\miid X_{\Ic});\Af\big)\\
                &\coloneqq \Prb_{\Mc}(X_{\de_{\Af}(r)\sm \{r\}}\mid X_{\mathsf{NonDe}_{\Af_{\Vc}}(r)\cup \{r\}}\miid X_{\Ic}) \otimes \Prb_{\Mc}(X_{\mathsf{NonDe}_{\Af_{\Vc}}(r)}\miid X_{\Ic}),
            \end{aligned}
        \]
        where $\mathsf{NonDe}_{\Af_{\Vc}}(r)=\Vc\sm \de_{\Af}(r)$. If $\Mc\in \Mb^+_d(\Af)$, then
        \[
            \begin{aligned}
                &\varphi_r\big(p_{\Mc}(x_{\Vc}\miid x_{\Ic});\Af\big)\\
                &=p_{\Mc}(x_{\de_{\Af}(r)\sm \{r\}}\mid x_{\mathsf{NonDe}_{\Af_{\Vc}}(r)\cup \{r\}}\miid x_{\Ic})  p_{\Mc}(x_{\mathsf{NonDe}_{\Af_{\Vc}}(r)}\miid x_{\Ic})\\
                &=\frac{p_{\Mc}(x_{\de_{\Af}(r)\sm \{r\}}\mid x_{\mathsf{NonDe}_{\Af_{\Vc}}(r)\cup \{r\}}\miid x_{\Ic}) p_\Mc(x_{r}\mid x_{\mathsf{NonDe}_{\Af_{\Vc}}(r)}\miid x_\Ic)  p_{\Mc}(x_{\mathsf{NonDe}_{\Af_{\Vc}}(r)}\miid x_{\Ic})}{p_\Mc(x_{r}\mid x_{\mathsf{NonDe}_{\Af_{\Vc}}(r)}\miid x_\Ic)}\\
                &=\frac{p_\Mc(x_\Vc\miid x_{\Ic})}{p_\Mc(x_{r}\mid x_{\mathsf{NonDe}_{\Af_{\Vc}}(r)}\miid x_\Ic)}=\frac{p_\Mc(x_\Vc\miid x_{\Ic})}{p_\Mc(x_{r}\mid x_{\mathsf{Mb}_{\Af_{\Vc}}(r)}\miid x_\Ic)}=\phi_r\big(p_{\Mc}(x_{\Vc}\miid x_{\Ic});\Af\big),
            \end{aligned}
        \]
        where the fourth equality uses the fixability of $r$ (i.e., $\mathsf{Distr}_{\Af}(r)\cap \de_{\Af}(r)=\{r\}$) or \cite[Proposition~21]{richardson2023nested}. Therefore, the operation $\varphi_r(\cdot;\Af)$ provides a measure-theoretic generalization of the fixing operation in \cref{eqn:nested_mark}. Finally, Rule~L2 in \cref{prop:sidp_rule} generalizes $\varphi_\cdot(\cdot;\cdot)$ by formally applying the replacements:
        \[
            \big(\text{node}, \mathsf{Distr}_{\cdot}(\cdot),\de_{\cdot}(\cdot)\big) \curvearrowleft \big(\text{bucket}, \pcc_{\cdot}(\cdot), \pde_{\cdot}(\cdot)\big).
        \]
    \end{enumerate}
\end{remark}

\subsection{sIDP: ID algorithm for PAGs under
selection bias}

After the conceptual and technical preparations in the previous subsections, we are now ready to state the sIDP algorithm. To do so, it is useful to introduce a bookkeeping device that records how the final target interventional Markov kernel is assembled by repeated applications of Rule~L1 of \cref{prop:sidp_rule}. The motivation is as follows.

For the ID algorihm on ADMGs, the relevant decomposition is straightforward. One may start from an arbitrary node $v\in \anc_{\Af_{\Vc\sm B}}(A)$ and take its district $D$ in the induced subgraph $\Af_{\anc_{\Af_{\Vc\sm B}}(A)}$, remove $D$, and iterate on $\anc_{\Af_{\Vc\sm B}}(A)\sm D$ until all nodes are exhausted. This yields a disjoint collection of districts $\mathsf{Distr}(\Af_{\anc_{\Af_{\Vc\sm B}}(A)})=\{D_1,\ldots,D_n\}$. In contrast, for iSOPAGs the analogous procedure is slightly more delicate, and the ADMG-style “peel off districts” argument does not extend verbatim. The issue is that the region operator need not behave like a partition: there may exist bucket $\Bb\subseteq \re_{\Pf_{\Dc}}(\Ab)$, where $\Ab$ is a $\Pf_{\Dc}$-bucket, such that $\re_{\Pf_{\Dc}}(\Bb)\subsetneq \re_{\Pf_{\Dc}}(\Ab)$. Thus, naively iterating “take the region of a chosen bucket and remove it” may miss finer structure. A simple remedy is to apply the decomposition recursively: 
(i) pick an arbitrary bucket $\Ab$ and split $\Dc$ into two parts $R_1\coloneqq \re_{\Pf_{\Dc}}(\Ab)$ and $R_2\coloneqq\re_{\Pf_{\Dc}}(R_1\sm\Ab)$, which might overlap; (ii) repeat step~(i) for $R_1$ and $R_2$ until no further non-trivial decompositions are possible. This recursive splitting naturally motivates the following notion:

\begin{definition}[Assembly tree]\label{def:assembly_tree}
    An \textbf{assembly tree} for a non-empty set $\Dc$ is a finite rooted binary tree $\Ts=\Ts(\Dc)=(V,E,\cat{Lab})$ with:
    \begin{enumerate}
        \item  every node $v\in V$ has a non-empty label $\cat{Lab}(v)\subseteq \Dc$;
        \item if $r$ is the root node, $\cat{Lab}(r)=\Dc$;
        \item  if $v$ is internal with two children $v_1$ and $v_2$, then $\cat{Lab}(v)=\cat{Lab}(v_1)\cup \cat{Lab}(v_2)$.
    \end{enumerate}
\end{definition}

We denote by $\langle R\rangle$ an assembly tree consisting of a single node with label $R$, and by $\langle \emptyset \rangle$ an empty tree. Now we specify how to join two assembly trees and how to perform Rule~L1 of \cref{prop:sidp_rule} along an assembly tree.

\begin{definition}[Tree join]\label{def:tree_join}
    If $\cat{T}_1=\cat{T}_1(\Dc_1)$ and $\cat{T}_2=\cat{T}_2(\Dc_2)$, the \textbf{tree join} $\Ts=\Ts_1\Join \Ts_2$ is the assembly tree obtained by creating a new root $r$ with $\cat{Lab}(r)=\Dc_1\cup \Dc_2$, attaching the roots of $\Ts_1$ and $\Ts_2$ as children of $r$ while leaving the remaining parts of the two subtrees unchanged. Note that $\cat{T}\Join \langle \emptyset \rangle= \cat{T}$.
\end{definition}

\begin{definition}[Rule~L1 along an assembly tree]\label{def:mark_comb_tree}
    Assume the setting of \cref{prop:sidp_rule}. Let $\Ts=(V,E,\cat{Lab})$ be an assembly tree for set $\Dc$,  and let $\{\Qc[\cat{Lab}(v)]\}_{v\in V}$ be a collection of Markov kernels attached to every node $v\in V$. Assume that: if $v\in\Ts$ is an internal node with two children $v_1$ and $v_2$, then $\Qc[\cat{Lab}(v)]=\Qc[\cat{Lab}(v_1)]\boxtimes \Qc[\cat{Lab}(v_2)]$. Given this, we define the assembly product of kernels along the tree recursively:
    \begin{enumerate}
        \item  If $\Ts$ is an assembly tree consisting of a single node $v$ with label $\cat{Lab}(v)=R$, then set \[\bigotimes_\Ts\Qc[\bullet]=\bigotimes_{\langle R\rangle } \Qc[\bullet]\coloneqq \Qc[R].\]
        \item  If the root node $r$ of $\Ts$ is the root of two subtrees with $\Ts_1$ and $\Ts_2$, then set \[\bigotimes_\Ts\Qc[\bullet]=\bigotimes_{\Ts_1 \Join \Ts_2} \Qc[\bullet]\coloneqq \big(\bigotimes_{\Ts_1}\Qc[\bullet]\big)\boxtimes \big(\bigotimes_{\Ts_2}\Qc[\bullet]\big)=\big(\bigotimes_{\Ts_2}\Qc[\bullet]\big)\boxtimes \big(\bigotimes_{\Ts_1}\Qc[\bullet]\big),\]
        where the second equality holds up to order of the coordinates of the product space.
    \end{enumerate}
\end{definition}

After these preparations, we can now state the sIDP algorithm---\cref{alg:sIDP}. Conceptually, it consists of three steps:
\begin{enumerate}
    \item [(i)] we reduce the problem from $\Vc$ to $\Dc \coloneqq \pant_{\Pf_{\Vc \setminus B}}(A)$ by Rule~L0 in \cref{prop:sidp_rule};

    \item [(ii)] we decompose $\Dc$ into smaller pieces and construct an assembly tree $\Ts$ according to Rule~L1 in \cref{prop:sidp_rule};

    \item [(iii)] we check, for all leaf nodes $l_1,\ldots,l_n$ of $\Ts$ with label sets $C_1,\ldots,C_n$, whether we can track $\Qc[C_1],\ldots,\Qc[C_n]$ from $\Qc[\Vc]$ individually using Rule~L2 in \cref{prop:sidp_rule}.
\end{enumerate}
If $\Qc[C_1],\ldots,\Qc[C_n]$ are all trackable from $\Qc[\Vc]$, we combine them along the assembly tree $\Ts$ and obtain a proxy kernel $\hat{\Prb}(X_A\miid X_{\Vc\setminus \Dc},X_{\Ic})$ for the target interventional kernel.

\begin{algorithm}
\caption{$\operatorname{sIDP}(\Pf; A, B)$}
\label{alg:sIDP}
\begin{algorithmic}[1]
\State \textbf{Input:} iSOPAG $\Pf$ and disjoint sets $A, B \subseteq \Vc$ with $A\ne \emptyset$ and kernel $\Qc[\Vc]$
\State Put $\Dc=\Dc(\Pf;A,B) \coloneqq \pant_{\Pf_{\Vc \setminus B}}(A)$
\State $\Ts=(V,E,\cat{Lab}) \gets \cat{BuildTree}(\Dc,\Pf)$
\State $\{\Qr_v\}_{v\in V} \gets \cat{AttachKernel}\big(\Ts,\Vc,\Qc[\Vc],\Pf \big)$
\If{$\Qr_v\ne\textsc{Fail}$ for all $v$} 
\State $\hat{\Prb}(X_A\miid X_{\Vc\sm \Dc},X_{\Ic})\gets \big(\bigotimes_\Ts \Qr[\bullet]\big)^{\sm (\Dc\sm A)}$ 
\Return $\hat{\Prb}(X_A\miid X_{\Vc\sm \Dc},X_{\Ic})$ \Comment{Rule L0}
\Else 
\State \Return $\textsc{Fail}$
\EndIf 

\State \textbf{Output:} a proxy Markov kernel $\widehat{\Prb}(X_A \miid X_{\Vc\sm \Dc},X_{\Ic})$ for $\Prb_\Mc(X_A\mid X_S=\mathbf{1}_{|\Sc|} \miid \Do(X_B),X_{\Ic})$ or $\textsc{Fail}$

\Function{BuildTree}{$C, \Pf$}
    \If{$\exists\ \Pf_C\text{-bucket } \Bb\subsetneq C$ \st $\re_{\Pf_C}(\Bb)\subsetneq C$} \label{sIDP:L1} \Comment{Rule L1}
    \State pick one such $\Bb$ and set $C_1\coloneqq \re_{\Pf_C}(\Bb)$ and $C_2\coloneqq \re_{\Pf_C}(C\sm C_1)$
    \State $\Ts_1 \gets \cat{BuildTree}(C_1,\Pf)$, \quad $\Ts_2 \gets \cat{BuildTree}(C_2,\Pf)$
    \State \Return $\Ts_1\Join \Ts_2$
    \Else 
    \State \Return $\langle C \rangle$
    \EndIf
\EndFunction

\Function{AttachKernel}{$\Ts=(V,E,\cat{Lab}),T, \Qr,\Pf$}
    
   \State let $r$ be the root of $\Ts$ and set $R:=\cat{Lab}(r)$

  \If{$r$ has children $r_1,r_2$ with subtrees $\Ts_1,\Ts_2$}
    \State $\{\Qr_v\}_{v\in V_1}\gets \cat{AttachKernel}(\Ts_1,T,\Qr,\Pf)$, \quad $\{\Qr_v\}_{v\in V_2}\gets \cat{AttachKernel}(\Ts_2,T,\Qr,\Pf)$
    \If{$\exists \Qr_v=\textsc{Fail}$} $\Qr_r\gets \textsc{Fail}$
    \Else
    \State $\Qr_r \gets \Qr_{r_1}\boxtimes \Qr_{r_2}$ 
    \EndIf
     \State \Return $\{\Qr_v\}_{v\in V_1}\cup \{\Qr_v\}_{v\in V_2}\cup \{\Qr_r\}$
     \Else 
    \State $\tilde{T}\gets T$, \quad $\tilde{\Qr}\gets \Qr$
    \While{$\exists$ $\Pf_{\tilde{T}}$-bucket $\Bb\subseteq \tilde{T}\setminus R$ such that
           $\pcc_{\Pf_{\tilde{T}}}(\Bb)\cap\pde_{\Pf_{\tilde{T}}}(\Bb)\subseteq \Bb$} \label{sIDP:L2} \Comment{Rule L2}
      \State pick such $\Bb$,\quad $D^+\gets \pde_{\Pf_{\tilde{T}}}(\Bb)$,\quad $D^-\gets (\tilde{T}\setminus D^+)\cup \Bb$
      \State $\tilde{\Qr} \gets {\tilde{\Qr}}^{\mid D^-}\otimes {\tilde{\Qr}}^{\sm D^+}$,\quad $\tilde{T} \gets \tilde{T}\setminus \Bb$
    \EndWhile
    \If{$\tilde{T}=R$}\State \Return $\tilde{\Qr}$ \Else \State \Return $\textsc{Fail}$ \EndIf
    \EndIf
\EndFunction
\end{algorithmic}
\end{algorithm}

\subsection{sIDP: soundness and completeness}

The soundness of the sIDP algorithm follows from \cref{prop:sidp_rule}.

\begin{theorem}[Main result IV: sIDP is sound]\label{thm:idp_sound}
Let $\PAG$ be an iSOPAG and $A,B\subseteq \Vc$ be disjoint with $A\ne\emptyset$.
\begin{enumerate}
    \item \textbf{Soundness up to oracle choices.} If $\idp(\Pf;A,B)$ does not output $\textsc{Fail}$, then for every $(\Mc,X_S=\mathbf{1}_{|\Sc|})\in \Mb(\Gf)$ the kernel 
    \[
    \Prb_\Mc(X_A\mid X_S=\mathbf{1}_{|\Sc|}\miid \Do(X_B),X_\Ic) \quad  \text{ is trackable from } \quad \Prb_\Mc(X_\Vc\mid X_S=\mathbf{1}_{|\Sc|}\miid X_{\Ic})
    \] 
    up to oracle choices.

    \item \textbf{Soundness almost surely.} Let $(\Mc,X_S=\mathbf{1}_{|\Sc|})\in \Mb^+(\Pf)$. If $\idp(\Pf;A,B)$ does not output $\textsc{Fail}$, then 
    \[
        \widehat{\Prb}(X_A\miid X_{\Vc\sm \Dc},X_{\Ic})\meq{\mu_{\Vc\sm \Dc}} \Prb_\Mc(X_A\mid X_S=\mathbf{1}_{|\Sc|} \miid \Do(X_B),X_{\Ic}),
    \]

    \item \textbf{Soundness pointwise.} Let $(\Mc,X_S=\mathbf{1}_{|\Sc|})\in \Mb^+_c(\Pf)$. Assume that $\idp(\Pf;A,B)$ does not output $\textsc{Fail}$. Then every kernel produced by Rule~L2 in \cref{alg:sIDP} admits a continuous version. If, at each application of Rule~L2, we choose such a continouous version, then the following pointwise equality holds:
    \[
        \hat{\Prb}(X_A\miid X_{\Vc\sm \Dc},X_{\Ic})=\Prb_\Mc(X_A\mid X_S=\mathbf{1}_{|\Sc|} \miid \Do(X_B),X_{\Ic}).
    \]
\end{enumerate}
\end{theorem}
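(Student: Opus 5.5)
The proof follows the execution of \cref{alg:sIDP}: each step of the algorithm is justified by the corresponding rule of \cref{prop:sidp_rule}, and the three soundness modes are obtained by tracking which kind of equality each rule delivers.

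\textbf{Step 1: reduction to $\Dc$.} By Rule~L0 of \cref{prop:sidp_rule}, the target $\Prb_\Mc(X_A\mid X_\Sc=\1\miid \Do(X_B),X_\Ic)$ is trackable from $\Qc[\Dc]$ and equals $\Qc[\Dc]^{\sm(\Dc\sm A)}$ pointwise. It therefore suffices to show that the assembled kernel $\bigotimes_\Ts \Qr[\bullet]$ equals $\Qc[\Dc]$ in each of the three senses.

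\textbf{Step 2: correctness of the assembly tree.} I would prove by structural induction on $\cat{BuildTree}$ that every internal node $v$ with children $v_1,v_2$ satisfies $\Qc[\cat{Lab}(v)]=\Qc[\cat{Lab}(v_1)]\boxtimes\Qc[\cat{Lab}(v_2)]$. This is Rule~L1 applied with $D=\cat{Lab}(v)$, where $R_1=\re_{\Pf_D}(\Bb)$ and $R_2=\re_{\Pf_D}(D\sm R_1)$. Two side facts are needed.
\begin{enumerate}
\item Rule~L1 is stated with $R_1=\re_{\Pf_D}(A)$ for a set $A\subseteq D$; taking $A=\Bb$ covers our case.
\item $R_1\cup R_2=D$, so the child labels cover the parent label.
\end{enumerate}
The inductive hypothesis of \cref{def:mark_comb_tree} is then met, and $\bigotimes_\Ts\Qc[\bullet]=\Qc[\Dc]$. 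Termination is immediate, since $C_1\subsetneq C$ by construction and $C_2\subseteq C\sm\Bb\subsetneq C$.

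\textbf{Step 3: leaves via Rule~L2.} For each leaf with label $R$, the while loop starts from $(\tilde T,\tilde\Qr)=(\Vc,\Qc[\Vc])$ and repeatedly removes a bucket $\Bb$ satisfying $\pcc\cap\pde\subseteq\Bb$. By induction on the iterations, $\tilde\Qr=\Qc[\tilde T]$ at every stage, using Rule~L2 with $D=\tilde T$. When the loop exits successfully, $\tilde T=R$, so $\Qr_v=\Qc[R]$. Since $\boxtimes$ is built only from products, marginals and conditionals, combining the leaves along $\Ts$ gives a finite composition of probability-calculus and causal-calculus operations that depends only on $\Pf$ and $(A,B)$. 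This establishes trackability up to oracle choices, which is part~1.

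\textbf{Step 4: the three equality modes.} For part~2, assume $\Mc^\Sc\in\Mb^+(\Pf)$. Each Rule~L2 step holds $\mu$-a.s. and each Rule~L1 step holds pointwise. I must then show that $\mu$-null discrepancies propagate through $\otimes$, conditioning and marginalization to a $\mu_{\Vc\sm\Dc}$-null set. Positivity ($\mu\ll\Qc\ll\mu$) lets me argue with densities, where $\boxtimes$ has density $q[R_1]q[R_2]/q[R_1\cap R_2]$, and Fubini handles the rest.

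For part~3, assume $\Mc^\Sc\in\Mb^+_c(\Pf)$. Rule~L2 supplies a continuous version at each step. I would show that continuous versions are preserved under the operations used, using positivity and Feller continuity together with the properties of positive continuous kernels in \cref{sec:pcmk}. I would then show that two continuous kernels agreeing $\mu$-a.s. agree everywhere, because positivity makes $\mu$ charge every open set. This upgrades the a.s. equality of part~2 to pointwise equality.

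I expect Step~4 to be the main obstacle, for two reasons.
\begin{enumerate}
\item Rule~L2 is applied iteratively, each time to the output of the previous step, so null sets and continuity must be tracked through conditional kernels of kernels that were themselves only defined up to versions.
\item The domains change from step to step, since $X_{\Vc\sm \tilde T}$ becomes an intervened input.
\end{enumerate}
My first attempt would be to work throughout with strictly positive jointly measurable densities and invoke Rule~L2's a.s. guarantee at each stage, then pass to continuous versions only at the end.
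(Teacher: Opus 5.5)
Your proposal is correct and follows the paper's proof, which likewise runs through \cref{alg:sIDP}, justifies each step by Rules~L0--L2 (\cref{lem:IDP_sound_I,lem:IDP_sound_II,lem:IDP_sound_III}), and keeps track of which steps give a.s.\ equality and which give pointwise equality. The one point to tighten is Part~3, because the conditioning inside $\boxtimes$ does not preserve continuity in general: instead of proving such preservation, use, as the paper does, that for $\Mc^\Sc\in\Mb^+_c(\Pf)$ every interventional kernel $\Qc[\cdot]$ of $\Mc$ is itself positive and continuous (\cref{lem:pcmk}), so each kernel the algorithm produces agrees with a continuous true kernel, the pointwise equalities of the three rules chain, and uniqueness of continuous versions supplies the a.s.-to-pointwise upgrade.
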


If at least one of $\Qc[C_1],\ldots,\Qc[C_n]$ is not trackable from $\Qc[\Vc]$ via Rule~L2, the algorithm outputs $\textsc{Fail}$. In that case, we can find a tuple $(C,T)$ with $\emptyset\neq C\subsetneq T\subseteq \Vc$ such that:
\begin{enumerate}
    \item [(i)] for all buckets $\Bb\subsetneq C$ of $\Pf_{\Dc}$ we have $\re_{\Pf_C}(\Bb)=C$, and

    \item [(ii)] for all buckets $\Bb\subseteq T\setminus C$ of $\Pf_{\Dc}$ it holds that $\pcc_{\Pf_T}(\Bb)\cap\pde_{\Pf_T}(\Bb)\nsubseteq \Bb$.
\end{enumerate}
 This allows us, for a COPAG $\Pf$, to construct a MAG $\Mf\in[\Pf]_\Ms$, and then an sADMG $\Af\in[\Mf]_\Gs$ such that the target interventional kernel is not identifiable \wrt $\Af$ (\cf \cref{sec:pf_prop_mag}). Hence, sIDP and the causal calculus in \cref{thm:causal_calculus_mag_pag} are complete.

\begin{theorem}[Main result V: sIDP and causal calculus are complete]\label{thm:idp_complete}
Let $\Pf=(\emptyset,\Vc,\Ec)$ be a COPAG and $A,B\subseteq \Vc$ be disjoint with $A\ne\emptyset$. If $\idp(\Pf;A,B)$ outputs $\textsc{Fail}$, then $\Prb_\Mc(X_A\mid X_S=\mathbf{1}_{|\Sc|}\miid \Do(X_B))$ is not identifiable in $\Mb^+_d(\Pf)$. 

The causal calculus (\cref{thm:causal_calculus_mag_pag}) for COPAGs is complete for identifying causal effects, i.e., any identifiable interventional Markov kernel can be achieved via a finite sequence of applications of the causal calculus rules and the probability calculus rules from the observational Markov kernel under a COPAG.
\end{theorem}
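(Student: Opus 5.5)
The plan is the standard "non-identifiability by hedge/thicket" strategy, lifted from ADMGs to COPAGs. When $\idp(\Pf;A,B)$ returns $\textsc{Fail}$, some leaf of the assembly tree $\Ts$ gets stuck in Rule~L2. That leaf has label $C$, with $\Dc\supseteq C$ and $C\neq\emptyset$. The stuck state is $\tilde T$ with $C\subsetneq\tilde T\subseteq\Vc$. These give the witness pair $(C,T)\coloneqq(C,\tilde T)$ with properties (i) and (ii) listed before the theorem. Property (i) holds because \cref{sIDP:L1} found no further split of $C$. Property (ii) holds because \cref{sIDP:L2} found no further fixable bucket in $T\sm C$.

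I then construct a single MAG $\Mf\in[\Pf]_\Ms$ in which this pair becomes an honest obstruction. This uses Property~\ref{sopag:p4} of a COPAG, namely the orientation scheme of \cite{zhang2006causal}. I orient the circle component containing the buckets so that:
\begin{enumerate}
    \item each bucket becomes a tree whose root is chosen consistently with possible descendance;
    \item every potentially directed path witnessing $\pde_{\Pf_T}(\Bb)$ in (ii) is realized as a directed path in $\Mf$;
    \item the invisible, pc-connecting edges witnessing (i) and (ii) stay invisible, so they may be replaced by bidirected confounding.
\end{enumerate}
Next, from $\cat{isADMG}(\Mf)$ I pass to an isADMG $\Af\in[\Mf]_\Gs$. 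To do so I add, for each invisible directed edge used, a bidirected edge, which does not change $\cat{MAG}$, by the inducing-path characterization in \cref{def:graph_rep}. In $\Af_T$ the set $C$ is then a single district contained in $\anc$ of $A$ (after restricting to $\Dc$). Moreover, $T$ is a district in which no node outside $C$ is fixable. This is the ADMG hedge/thicket configuration of \cite{shpitser2008complete,richardson2023nested}, now conditioned on $\Sc_\Af$.

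Finally, I build two models in $\Mb^+_d(\Pf)$ with the same $\Prb(X_\Vc\mid X_\Sc=\1)$ but different target kernels. I adapt the standard parity (XOR) construction for hedges: all variables are binary with strictly positive noise. The selection nodes $s_{ab}$ from undirected edges are given mechanisms making $X_\Sc=\1$ an event of positive probability that is independent of the hedge bits. This ensures conditioning on selection does not destroy the discrepancy while preserving positivity. To show equality of the observational kernels, I use the fact that both models agree on $\Qc[\Vc\sm T']$ factors outside the hedge. To show inequality, I use that $\Qc[C]$ is needed and differs. I then extend from $\Prb(X_A\mid\ldots\miid\Do(X_B))$ by the usual ancestral embedding argument along directed paths from $C$ into $A$.

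The completeness of the causal calculus then follows from \cref{thm:idp_sound}. Each step of sIDP (Rules~L0--L2, \cref{prop:sidp_rule}) is derived from \cref{thm:causal_calculus_mag_pag} plus probability calculus, so anything identifiable is reached by the algorithm and hence by the calculus. I expect the main obstacle to be the single-MAG orientation step. There, the possible-descendant relations needed for all buckets in $T\sm C$, and the region connectivity of $C$, must be realized simultaneously. I would first try Zhang's scheme with the bucket roots chosen inside $C$ and argue via (ii) and \ref{sopag:p2}--\ref{sopag:p3} that no new colliders or visibility arise.
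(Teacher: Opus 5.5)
Your overall architecture matches the paper's: extract $(C,T)$ from the stuck leaf, orient a MAG, pass to an isADMG, exhibit a hedge, and deduce calculus-completeness from soundness. The step from $\Mf$ to a hedge, however, fails in three places.

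First, adding a bidirected edge for every invisible directed edge you use need not preserve $\cat{MAG}$. \cref{lem:visible_edge_II} licenses only one such edge at a time, and simultaneous additions can create new inducing paths. In $b\hut a\tuh c$ with both edges invisible, adding $a\huh b$ and $a\huh c$ makes $b\huh a\huh c$ inducing, so $b$ and $c$ become adjacent (\cref{fig:countex_invis}).

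Second, $C$ and $T$ themselves are generally not single c-components, as the C-forests of a hedge must be. Edges inside a bucket carry no arrowheads (\cref{lem:prop_bucket}), so after orientation they become directed or undirected edges (the latter turn into selection colliders in $\Af$), never bidirected ones. If $C$ is a single bucket with two or more nodes, property~(i) is vacuous and $C$ is not bidirected-connected at all.

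Third, requiring every potentially directed path behind~(ii) to become directed in one MAG is in general impossible; this is exactly why \cref{def:sep_PAG} is hybrid. Mere non-fixability of $T\sm C$ also does not give the C-forest structure $T=\anc_{\Af_T}(C)$.

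The paper avoids all three problems. It keeps one representative per bucket, made a source of its circle component ($\hat C\subseteq C$ and $\tilde T\subseteq T\sm C$ in \cref{lem:construct_MAG}). This needs only one pc-connected possible child outside the bucket for each bucket of $T\sm C$ (\cref{lem:idp_version}). It then prunes the nodes attached only through invisible edges to obtain $(\breve T,\breve C)$ (\cref{lem:prop_construct_MAG_III}). There the district structure uses only bidirected edges already in $\Mf$, and every $t\in\breve T\sm\breve C$ has a child in its own district. Consequently $\Af=\cat{isADMG}(\Mf)$ works with no added edges, and $\breve T=\anc_{\Af_{\breve T}}(\breve C)$.

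You also never verify the two conditions that tie the configuration to the query. The outer hedge set must meet $B$ (\cref{def:hedge}). The root set must be ancestral to $A$ in $\Af_{\Vc\sm B}$, whereas $\Dc$ only guarantees potential anteriority; an anterior path through an undirected edge of $\Mf$ becomes, in $\Af$, a path into a selection node $s_{ab}$ rather than into $A$. In the PAG setting these are genuine lemmas (\cref{lem:intarget} and parts~(3)--(4) of \cref{lem:prop_construct_MAG,lem:prop_construct_MAG_III}). Both rest on Rule~L0 not splitting $\Pf$-buckets. That fails when some proper potentially anterior path from $B$ to $A$ starts with a non-visible edge. The paper treats this case separately with the two-node hedge $(\{v_0,v_1\},\{v_1\})$ of \cref{lem:noniden}, and your plan has no counterpart for it.

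Your device of making $X_\Sc=\1$ uninformative about the hedge bits is a reasonable substitute for the paper's conditional hedge criterion (\cref{prop:hedge}). It helps only once a genuine hedge in the observed part of some $\Af\in[\Pf]_\Gs$ has been exhibited.
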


\section{Discussion}\label{sec:dis}

Inspired by and building on earlier foundational work in the literature, we establish a precise characterization of a restricted class of $id$-separations invariant across all isADMGs represented by  an iMAG or iCOPAG in terms of the corresponding $id$-separation in the iMAG or iCOPAG. For general iSOPAGs, we prove the corresponding soundness direction, which already suffices to formulate a measure-theoretic causal calculus and an identification algorithm under selection bias. For COPAGs, we show our calculus and algorithm are complete. Overall, this yields a complete method for causal identification from iMAGs and COPAGs under selection bias, and we conclude with several directions for further work.

We studied the problem of causal identification of unconditional intervential kernels under selection bias. To assess whether the \emph{selection-biased conditional interventional kernel} $\Prb_{\Mc}(X_A\mid X_C,X_S=\1\miid \Do(X_B),X_{\Ic})$ is trackable from $\Prb_\Mc(X_{\Vc}\mid X_S=\1\miid X_{\Ic})$ \wrt iSOPAG $\Pf$, one direct approach is to run $\idp(\Pf,A\cup C,B)$. If $\idp(\Pf,A\cup C,B)$ does not output $\textsc{Fail}$, then conditioning on $X_C$ immediately implies trackability. However, this procedure is not complete: there are cases where $\Prb_{\Mc}(X_A\mid X_C,X_S=\1\miid \Do(X_B),X_{\Ic})$ is identifiable while the above procedure fails. Based on the ideas in \cite{shpitser06identification} and \cite{jaber22causal}, we can first apply the second rule of the causal calculus (\cf \cref{thm:causal_calculus_mag_pag}) to exchange certain interventions and observations and then apply sIDP (\cf \cref{alg:sIDP}). This then yields the sCIDP algorithm given by \cref{alg:sCIDP} in \cref{sec:scidp}. Soundness follows directly from \cref{thm:causal_calculus_mag_pag,thm:idp_sound}, while we leave completeness to future work. It is worth mentioning that establishing completeness likely requires an appropriate conditional analogue of a so-called Hedge criterion for non-identifiability. Some existing formulations of the Hedge criterion in the literature, e.g., \cite[Theorem~3]{shpitser06identification} and \cite[Theorem~5]{jaber22causal}, appear to be imprecise, as argued by \cite{shpitser23doesidalgorithmfail}.

Even when a causal effect is not point-identified, one may still derive \emph{informative bounds} \cite{balke97bounds}. It would be of interest to develop a complete method for computing sharp bounds (for discrete variables) in MAGs and PAGs under selection bias. Recent progress in this direction without selection bias includes \cite{Bellot24bound_causal_effects_mark_equi}.

In the complete setting considered here, sIDP produces an identifying functional whenever the target effect is identifiable. The \emph{statistical properties} of estimators based on this functional remain to be studied—for instance, questions of efficiency and the construction of more efficient estimators when the naive plug-in is suboptimal. Related work in the no-selection setting (assuming a given PAG) is \cite{Jung21Estimate_causal_effec_on_markov_equiva}. A further step is to study the validity of “discover-then-estimate’’ pipelines with latent variables and selection bias. Recent post-selection inference results in nearby settings include \cite{Chang2025PostSelection,Gradu2025ValidInference}, though selection bias is not yet incorporated.

\emph{Validation of causal discovery} methods also remains challenging. Much of the literature evaluates learned graphs via structural distances such as structural Hamming distance, but this approach faces several obstacles: the ground-truth graph for real data is typically unknown; simulation benchmarks can be misleading \cite{reisach2021beware_simulated_dag}; learned graphs may be unstable to small perturbations; and, more fundamentally, not every data-generating process with a causal interpretation is well captured by standard graph models \cite{chen2025foundationsstructuralcausalmodels,Blom19beyond}. These considerations motivate effect-level validation, where performance is assessed through testable interventional predictions \cite{gentzel2019case_interventional_eval,dang2026effect_level_validation,Peters2015structural_int_distance}. Our identification results may provide useful tools for such validation in the presence of selection bias.

Another interesting direction for future work is \emph{early stopping} in the FCI–identification pipeline. In the present approach, causal calculus is applied only after FCI has terminated. In many cases, however, the partial graph available before termination may already suffice to identify the target causal query. Leveraging this could reduce computational cost and the number of conditional-independence tests.

Finally, \emph{adding input nodes} (regime indicators) offers a universal method for endowing graphical models with \emph{causal semantics} as discussed by Dawid in \cite[p.~348--351]{Lauritzen02chaingraph}. However, the construction is non-unique: different ways of adding input nodes generally lead to different causal interpretations. In this paper, we show that under the ``canonical'' causal interpretation of MAGs and PAGs as representations of causal ADMGs with selection variables, the ``canonical'' choice is the construction in \cref{def:int_mag,def:int_pag}, rather than merely adding an input node as a parent of the targeted observed node. This naturally raises a broader question: for other classes of graphical models, can one identify a principled “canonical’’ way to introduce input nodes that matches their intended ``canonical'' causal interpretation?

\acks{We thank Booking.com for support.}

\newpage

\appendix
\crefalias{section}{appendix}
\section{Preliminaries}\label{sec:preliminary}

\subsection{Structural Causal Model with inputs and selection mechanisms (s-iSCM)}

In this subsection, we present some basics of Structural Causal Model with inputs and selection mechanisms. Relevant references include \cite{forre2025mathematical,chen2025foundationsstructuralcausalmodels,bongers2021foundations}.

\begin{definition}[Structural Causal Model with inputs]\label{def:iSCM}
    A \textbf{Structural Causal Model with inputs (iSCM)} is a tuple $\Mc=(\Ic,\Vc,\Wc,\Xc,\Prb,f)$ such that
    \begin{enumerate} 
      \item   $\Ic,\Vc,\Wc$ are disjoint finite sets of labels for the \textbf{exogenous input variables}, \textbf{endogenous variables}, and the \textbf{latent exogenous random variables}, respectively;
      \item   the \textbf{state space} $\Xc = \prod_{i\in \Ic\dcup \Vc \dcup \Wc} \Xc_i$ is a product of standard measurable spaces $\Xc_i$;
      \item   the \textbf{exogenous distribution} $\Prb$ is a probability distribution on $\Xc_{\Wc}$ that factorizes as a product $\Prb = \bigotimes_{w\in \Wc} \Prb(X_w)$ of probability distributions $\Prb(X_w)$ on $\Xc_w$;
      \item   the \textbf{causal mechanism} is specified by the measurable mapping $f : \Xc \rightarrow \Xc_{\Vc}$.
\end{enumerate}
\end{definition}

\begin{definition}[Hard intervention]\label{def:scm_hard_intervention}
  Given an iSCM $\Mc$ and an intervention target $T \subseteq \Vc$, we define the \textbf{intervened iSCM} $$\Mc_{\Do(T)} \coloneqq  (\Ic\dcup T,\Vc\sm T,\Wc,\Xc,\Prb,f_{\Vc\setminus T}).$$
\end{definition}

\begin{definition}[iSCM with selection mechanism (s-iSCM)]\label{def:s-scm}
We call $\Mc^\Sc\coloneqq(\Mc,X_{\Sc}\in S)$ an \textbf{s-iSCM} or \textbf{iSCM with a selection mechanism}, where $\Mc=(\Ic,\Vc\dcup \Sc,\Wc,\Xc,\Prb,f)$ is an iSCM, and $S\subseteq \Xc_{\Sc}$ is a measurable subset such that $\Prb_{\Mc}(X_{\Sc}\in S\miid X_{\Ic}=x_{\Ic})>0$ for all $x_\Ic\in \Xc_{\Ic}$. The causal semantics of $\Mc^\Sc$ is as follows. 
\begin{enumerate}
    \item Observable Markov kernel:
    \[\Prb_{\Mc^\Sc}(X_{\Vc}\miid X_{\Ic})\coloneqq\Prb_\Mc(X_{\Vc}\mid X_\Sc\in S\miid X_{\Ic});\]

    \item Interventional Markov kernel: for $T\subseteq \Vc$ with $\Prb_{\Mc}(X_{\Sc}\in S\miid \Do(X_T=x_T),X_{\Ic}=x_{\Ic})>0$ for all $x_T\in \Xc_T$ and $x_\Ic\in \Xc_{\Ic}$, we define 
    \[
    \Prb_{\Mc^\Sc}(X_{\Vc\sm T}\miid \Do(X_T),X_{\Ic})\coloneqq\Prb_{\Mc}(X_{\Vc\sm T }\mid X_{\Sc}\in S\miid \Do(X_T),X_\Ic).
    \]
\end{enumerate}
\end{definition}

\begin{definition}[Parent according to iSCM]\label{def:parent}
    Let $\Mc=(\Ic,\Vc,\Wc,\Xc,\Prb,f)$ be an iSCM. For $i\in \Ic\dcup \Vc \dcup \Wc$ and $j\in \Vc$,
we say that $i$ is a \emph{parent} of $j$ according to $\Mc$ if there does not exist a
measurable function
\[
\tilde f_j:\ \mathcal{X}_{(\Ic\cup \Vc\cup \Wc)\setminus\{i\}}\to \mathcal{X}_j
\]
such that
\[
f_j(x)=\tilde f_j(x_{(\Ic\cup \Vc\cup \Wc)\setminus\{i\}})
\qquad \text{ for all $x\in \Xc\sm N$},
\]
where $N=\tilde{N}\times \Xc_\Vc$ with $\tilde{N}\subseteq \Xc_{\Ic}\times \Xc_\Wc$ being such that the section $\tilde{N}_{x_\Ic}$ is a $\Prb_{\Mc}(X_W)$-null set for each $x_\Ic\in \Xc_\Ic$.
\end{definition}

\begin{definition}[Graph of an s-iSCM]\label{def:graph_siscm}
Let $\Mc^\Sc\coloneqq(\Mc,X_{\Sc}\in S)$ be an s-iSCM with iSCM $\Mc=(\Ic,\Vc\dcup \Sc,\Wc,\Xc,\Prb,f)$. The isADMG $(\Ic,\Vc,\Sc,\Ec)$ has input nodes $\Ic$, output nodes $\Vc$, latent selection nodes $\Sc$, directed edges
\[
\Ec_1=\{\, i\tuh j:\ i\in \Ic\cup \Vc\cup \Sc,\ j\in \Vc\cup \Sc \text{ \st } i \text{ is parent of } j \text{ according to } \Mc \,\}
\]
and bidirected edges
\[
\begin{aligned}
    \Ec_2=\{\, j\huh k:\ & j\in \Vc\cup \Sc,\ k\in \Vc\cup \Sc,\ j\neq k \text{ \st }\\
    &j \text{ and } k \text{ share a common parent in } \Wc
\text{ according to } \Mc \,\},
\end{aligned}
\]
and $\Ec=\Ec_1\cup \Ec_2$.
\end{definition}

\subsection{Transitional probability theory, conditional independence and measure-theoretic causal calculus}

In this section, we present some basics of \emph{transitional probability theory}, which provides a convenient framework handling random variables and non-stochastic variables simultaneously, together with a corresponding notion of conditional independence for transitional random variables. We follow the setup introduced in \cite{forre2021transitional}.

\begin{definition}[Transitional probability space and random variable]
  Let $\Kr(W\miid T)$ be a Markov kernel from measurable space $(\Tc,\Sigma_\Tc)$ to $(\Wc,\Sigma_{\Wc})$. Then the tuple $(\Wc\times \Tc,\Kr(W\miid T))$ is called a \textbf{transitional probability space}. A measurable map
  $
    X: \mathcal{W} \times \mathcal{T} \rightarrow \Xc
  $
  is called a \textbf{transitional random variable}.
\end{definition}

\begin{definition}[Transitional conditional independence]\label{def:tran_ci}
Let $(\Wc \times \Tc, \mathrm{K}(W\miid T))$ be a transitional probability space. Consider transitional random variables:
\[ 
X: \Wc \times \Tc \rightarrow \Xc,\  Y:\,\Wc \times \Tc \rightarrow \mathcal{Y},\
Z:\Wc \times \Tc \rightarrow \mathcal{Z}.
\]
We say that \textbf{$X$ is independent of $Y$ conditioned on $Z$ w.r.t.\ $\Kr(W\miid T)$}, in symbols:
\[X \Ind{}{\mathrm{K}(W\miid T)} Y \mid Z, \]
if there exists a Markov kernel $\Qr(X\miid Z):\; \Zc \dashrightarrow \Xc,$ such that:
\[ \mathrm{K}(X,Y,Z\miid T) = \mathrm{Q}(X\miid Z) \otimes \mathrm{K}(Y,Z\miid T ),\]
where $\mathrm{K}(Y,Z\miid T)$ is the marginal of $\mathrm{K}(X,Y,Z\miid T)$.
As a special case, we define:
\[X  \Ind{}{\mathrm{K}(W\miid T)} Y \qquad :\iff \qquad X \Ind{}{\mathrm{K}(W\miid T)} Y \mid \ast. \]
\end{definition}

\begin{remark}[Essential uniqueness]
  The Markov kernel $\mathrm{Q}(X\miid Z)$ appearing in the conditional independence $X  \Ind{}{\mathrm{K}(W\miid T)} Y \mid Z$ in \cref{def:tran_ci} is then a version of a conditional Markov kernel $\mathrm{K}(X\mid Y,Z\miid T)$ and is thus essentially unique by \cref{defthm:prob_calculus}. We will use the following suggestive notation for it:
    \[ \mathrm{K}(X\mid \cancel{Y},Z\miid \cancel{T}) \coloneqq \mathrm{Q}(X\miid Z).\]
    So we have in case of $X  \Ind{}{\mathrm{K}(W\miid T)} Y \mid Z$:
    \[ \mathrm{K}(X,Y,Z\miid T) = \mathrm{K}(X\mid \cancel{Y},Z\miid \cancel{T})\otimes \mathrm{K}(Y,Z\mid T).\]
    Note that the conditional independence establishes that there is a version of the conditional Markov kernel $\mathrm{K}(X\mid Y,Z\miid T)$ that depends only on $z$ and does not depend on $y$ or $t$.
\end{remark}

We now present the measure-theoretic causal calculus for iADMGs. The rules below are corollaries of the following strong global Markov property for iADMGs. Relevant references include \cite{forre2021transitional,forre2025mathematical}.

\begin{theorem}[Strong global Markov property]\label{thm:strong_markov_property}
    Let $\Mc=(\Ic,\Vc,\Wc,\Xc,\Prb,f)$ be an iSCM whose causal graph is an iADMG $\Af\coloneqq \Gb(\Mc)=(\Ic,\Vc,\Ec)$. Then for all $A,B,C\subseteq \Ic\cup \Vc$ (not necessarily disjoint), the following implication holds:
    \[
        A\sep{\mathsf{id}}{\Af} B\mid C \quad \Longrightarrow \quad X_A\Ind{}{\Prb_{\Mc}(X_{\Vc}\miid X_{\Ic})} X_B\mid X_C.
    \]
\end{theorem}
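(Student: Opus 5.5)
This is the strong global Markov property of \cite{forre2021transitional,forre2025mathematical}, and I would prove it by reducing $id$-separation in the iADMG $\Af=\Gb(\Mc)$ to ordinary $d$-separation in an acyclic augmented DAG. I would then use the known directed global Markov property for transitional random variables.

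\emph{Step 1: augmented DAG.} Let $\Df$ be the graph with nodes $\Ic\cup\Vc\cup\Wc$. Its edges are $i\tuh j$ whenever $i$ is a parent of $j$ according to $\Mc$ (\cref{def:parent}), together with $w\tuh j$ for $w\in\Wc$ a parent of $j$. Because $\Mc$ is acyclic, $\Df$ is a DAG. Marginalizing $\Wc$ (\cite[Definition~3.2.18]{forre2025mathematical}) gives back $\Af$, since bidirected edges arise exactly from shared $\Wc$-parents (\cref{def:graph_siscm}). I would then show that $A\sep{\mathsf{id}}{\Af}B\mid C$ implies $A\sep{\mathsf{id}}{\Df}B\mid C$. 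Take an open walk in $\Df$ from $A$ to $B\cup\Ic$ given $C$. Each segment $j\hut w\tuh k$ with $w\in\Wc$ is replaced by $j\huh k$. The latent node $w$ is a non-collider not in $C$, and the arrowhead marks at $j$ and $k$ are preserved. Ancestral relations among $\Ic\cup\Vc$ agree in $\Df$ and $\Af$. Hence the resulting walk is $id$-open in $\Af$, which is a contradiction. This step is routine.

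\emph{Step 2: Markov property for the DAG.} The key observation is that the right-hand side of $id$-separation implicitly contains $\Ic$. So $A\sep{\mathsf{id}}{\Df}B\mid C$ means that $A$ is $d$-separated from $B\cup\Ic$ given $C$ in $\Df$, where input nodes are treated as parentless nodes. I would prove $X_A\Ind{}{\Prb_\Mc(X_{\Vc}\miid X_\Ic)}X_B\mid X_C$ by induction on a topological order, following \cite{forre2021transitional}. First, the joint kernel $\Prb_\Mc(X_{\Vc},X_{\Wc}\miid X_\Ic)$ factorizes as the product of $\Prb(X_w)$ with the deterministic kernels $\delta_{f_j}$. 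Each node's kernel depends only on its $\Df$-parents, up to the null sets allowed in \cref{def:parent}. Second, transitional conditional independence satisfies the separoid axioms (symmetry only on the left, plus decomposition, weak union and contraction) \cite{forre2021transitional}. With these, the standard proof of the local-to-global property, in Lauritzen's moralization style or via Verma–Pearl induction on ancestral sets, carries over. The asymmetry is harmless because $\Ic$ always sits on the right.

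\emph{Step 3: conclusion.} Marginalizing the latent $X_\Wc$ then gives the stated independence with respect to $\Prb_\Mc(X_\Vc\miid X_\Ic)$. For non-disjoint $A,B,C$, I would note that nodes of $A\cap C$ or $B\cap C$ contribute only deterministic components, so they can be handled by the usual extended separoid rules.

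\emph{Main obstacle.} The hard part is Step 2: proving the global directed Markov property for transitional conditional independence when the inputs are non-stochastic. There is no distribution over $\Ic$, so the usual symmetric semi-graphoid argument is not available. I would follow the proof in \cite[Section~3]{forre2021transitional}. That proof runs an induction over ancestral sets $\anc_\Df(A\cup B\cup C)$, using the asymmetric separoid rules and always keeping $\Ic$ on the right-hand side. The measure-theoretic care needed for the null sets in \cref{def:parent} also sits in this step.
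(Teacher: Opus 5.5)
Your outline is correct and consistent with the paper, which does not prove this theorem itself but imports it from \cite{forre2021transitional,forre2025mathematical}; augmenting with the latent exogenous nodes, applying the transitional directed global Markov property for graphs with input nodes, and marginalizing $X_{\Wc}$ is the standard way to obtain the iADMG version from those sources. One correction to Step~2: transitional conditional independence satisfies left and right versions of decomposition, weak union and contraction but only a restricted symmetry (swapping $X$ and $Y$ given $Z$ additionally requires $Y$ to be independent of the inputs given $Z$), so even among purely output variables the symmetric Verma--Pearl or moralization argument does not carry over verbatim, and you genuinely need the asymmetric induction of \cite{forre2021transitional} that you fall back on in your last paragraph.
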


\begin{theorem}[Causal calculus (iADMGs)]\label{defthm:causal_calculus}
Let $\SCM$ be an acyclic iSCM and $\Af\coloneqq \Gb(\Mc)$ be an iADMG of $\Mc$. Let $A,B,C\subseteq \Vc$ and $D\subseteq \Ic\cup \Vc$ be pairwise disjoint. Write $D_1\coloneqq D\cap \Ic$ and $D_2\coloneqq D\cap \Vc$. Assume that there are $\sigma$-finite reference measures $\mu_v$ on $\Xc_v$ for each $v\in \Vc$ (write $\mu_F\coloneqq \bigotimes_{v\in F}\mu_v$ for $F\subseteq \Vc$). 
     \begin{enumerate}
        \item Insertion/deletion of observations: Suppose 
        \[A\sep{\mathsf{id}}{\Af_{\Do(D)}} B\mid C\cup D.\] Then there exists Markov kernel $\Qr(X_A\miid X_C,X_D)$ unique up to a measurable set $\tilde{N}\coloneqq N\times \Xc_{\Ic\sm D_1}\subseteq \Xc_{C\cup D}\times \Xc_{\Ic\sm D_1}$, such that $\tilde{N}$ is $\Prb_\Mc(X_C \miid X_\Ic,\Do(X_{D_2}))$-null and $\Qr$ is a version of $\Prb_{\Mc}(X_A\mid X_{B_1},X_C \miid X_{\Ic},\Do(X_{D_2}))$ for every $B_1\subseteq B$ simultaneously. If 
        \[
        \mu_{B\cup C}\ll \Prb_\Mc(X_B,X_C \miid X_\Ic, \Do(X_{D_2}))\ll \mu_{B\cup C},
        \] 
        then we have
        \[ 
        \begin{aligned}
            \Prb_\Mc(X_A\mid X_B,X_C \miid X_\Ic,\Do(X_{D_2}))&\meq{\mu_{B\cup C}}\Prb_\Mc(X_A\mid X_C \miid X_{\Ic} ,\Do(X_{D_2})) \\
            &\meq{\mu_{C}}\Prb_{\Mc}(X_A\mid X_C \miid \cancel{X_{\Ic\sm D_1}}, X_{D_1} ,\Do(X_{D_2})).
        \end{aligned}
        \] 

        \item Actions/observations exchange: Suppose \[A\sep{\mathsf{id}}{\Af_{\Do(I_B,D)}} I_B\mid B\cup C\cup D.\] 
        Then there exists a Markov kernel $\Qr(X_A\miid X_B,X_C,X_D)$ unique up to a measurable set $\tilde{N}\coloneqq N\times \Xc_{\Ic\sm D_1}\subseteq \Xc_{B\cup C\cup D}\times \Xc_{\Ic\sm D_1}$, such that $\tilde{N}$ is $\Prb_\Mc(X_{B_1},X_C \miid X_\Ic,\Do(X_{B_2},X_{D_2}))$-null and $\Qr$ is a version of $\Prb_{\Mc}(X_A\mid X_{B_1},X_C \miid X_{\Ic},\Do(X_{B_2},X_{D_2}))$ for every decomposition $B=B_1\dcup B_2$ simultaneously. If 
        \[
        \begin{aligned}
            \mu_{B\cup C}\ll \Prb_\Mc&(X_B,X_C \miid X_\Ic, \Do(X_{D_2}))\ll \mu_{B\cup C} \quad \text{ and }\\
            \mu_C\ll \Prb_\Mc&(X_C \miid X_\Ic, \Do(X_B,X_{D_2}))\ll \mu_C,
        \end{aligned}
        \]
        then we have
        \[
            \begin{aligned}
                \Prb_\Mc(X_A\mid X_C \miid X_\Ic,\Do(X_{B},X_{D_2}))&\meq{\mu_{B\cup C}}\Prb_\Mc(X_A\mid X_B, X_C \miid X_{\Ic},\Do(X_{D_2}))\\
                &\meq{\mu_{B\cup C}}\Prb_\Mc(X_A\mid X_B, X_C \miid \cancel{X_{\Ic\sm D_1}}, X_{D_1},\Do(X_{D_2})).
            \end{aligned}
        \]

        \item Insertion/deletion of actions: Suppose 
        \[A\sep{\mathsf{id}}{\Af_{\Do(I_B,D)}} I_B \mid C\cup D.\] Then there exists a Markov kernel $\Qr(X_A\miid X_C,X_D)$ unique up to a measurable set $\tilde{N}\coloneqq N\times \Xc_{\Ic\sm D_1}\subseteq \Xc_{C\cup D}\times \Xc_{\Ic\sm D_1}$, such that $\tilde{N}\times \Xc_{B_2}$ is $\Prb_\Mc(X_C \miid X_\Ic,\Do(X_{B_2},X_{D_2}))$-null and $\Qr$ is a version of $\Prb_\Mc(X_A\mid X_C \miid X_{\Ic},\Do(X_{B_2},X_{D_2}))$ for every $B_2\subseteq B$ simultaneously. If 
        \[
            \begin{aligned}
                \mu_{C}\ll \Prb_\Mc&(X_C \miid X_{\Ic},\Do(X_B,X_{D_2}))\ll \mu_{C} \quad \text{ and }\\
                \mu_C\ll \Prb_\Mc&(X_C\miid X_{\Ic}, \Do(X_{D_2}))\ll \mu_C,
            \end{aligned}
        \]
        then we have
        \[
            \begin{aligned}
                 \Prb_\Mc(X_A\mid X_C \miid X_\Ic,\Do(X_{B},X_{D_2}))&\meq{\mu_{C}}\Prb_\Mc(X_A\mid X_C \miid X_\Ic,\Do(X_{D_2}))\\
                 &\meq{\mu_{C}}\Prb_\Mc(X_A\mid X_C \miid \cancel{X_{\Ic\sm D_1}}, X_{D_1},\Do(X_{D_2})).
            \end{aligned}    
        \]
    \end{enumerate}
\end{theorem}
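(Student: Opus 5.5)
The plan is to derive all three rules from the strong global Markov property (\cref{thm:strong_markov_property}), each applied to a suitably modified iSCM whose causal graph is contained in the relevant manipulated iADMG. The remaining work is measure-theoretic: converting transitional conditional independence (\cref{def:tran_ci}) into the stated versions and null-set statements. I would first record two routine facts. First, if the graph of an iSCM is a subgraph of $\Gf$ (same nodes, with input nodes possibly added), then $id$-separation in $\Gf$ implies $id$-separation in that graph, because deleting edges can only block paths. Second, transitional conditional independence satisfies decomposition and weak union, so a kernel $\Qr$ witnessing $X_A\ind X_B\mid X_C$ also witnesses $X_A\ind X_{B\sm B_1}\mid X_{B_1},X_C$ for every $B_1\subseteq B$.

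\textbf{Rule 1.} The graph of the intervened iSCM $\Mc_{\Do(D_2)}$ (\cref{def:scm_hard_intervention}) is contained in $\Af_{\Do(D)}$, and its input set is $\Ic\cup D_2$. Apply \cref{thm:strong_markov_property} to the separation $A\perp B\mid C\cup D$. Since $id$-separation implicitly places all input nodes on the right, this gives
\[X_A\ind X_{B}\cup X_{\Ic\sm D_1}\mid X_C,X_D\]
with respect to $\Prb_\Mc(X_{\Vc\sm D_2}\miid X_\Ic,\Do(X_{D_2}))$. So there is a kernel $\Qr(X_A\miid X_C,X_D)$ with
\[\Prb_\Mc(X_A,X_B,X_C\miid X_\Ic,\Do(X_{D_2}))=\Qr\otimes\Prb_\Mc(X_B,X_C\miid X_\Ic,\Do(X_{D_2})).\]
Weak union then shows that $\Qr$ is a version of the conditional kernel given $X_{B_1},X_C$ for every $B_1\subseteq B$. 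Essential uniqueness (\cref{defthm:prob_calculus}) gives uniqueness up to a set $N\times\Xc_{\Ic\sm D_1}$ that is null for the $X_C$-marginal. Under the two-sided absolute continuity assumption, a set is $\Prb(X_B,X_C\miid\cdot)$-null exactly when it is $\mu_{B\cup C}$-null, which yields the displayed $\mu$-a.s. equalities, including the one that cancels $X_{\Ic\sm D_1}$.

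\textbf{Rules 2 and 3.} I would build a soft-intervention iSCM $\Mc_{\Do(I_B)}$ with new inputs $I_b$ taking values in $\Xc_b\dcup\{\star_b\}$. Its mechanism is $\tilde f_b=f_b$ when $I_b=\star_b$ and $\tilde f_b\equiv x_b$ when $I_b=x_b$. The graph of this model, after additionally hard-intervening on $D_2$, is contained in $\Af_{\Do(I_B,D)}$. Evaluating its kernel at $I_{B_1}=\star$ and $I_{B_2}=x_{B_2}$ reproduces $\Prb_\Mc(\cdot\miid X_\Ic,\Do(X_{B_2},X_{D_2}))$, with $X_{B_2}=x_{B_2}$ holding deterministically.

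For Rule 3, \cref{thm:strong_markov_property} gives a kernel $\Qr(X_A\miid X_C,X_D)$ that does not depend on $I_B$ or $X_{\Ic\sm D_1}$. Specialising $I_B$ then yields the versions for every $B_2\subseteq B$. The null set must be taken in the product with $\Xc_{B_2}$, since $X_{B_2}$ now enters only through the input.

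For Rule 2, the conditioning set contains $B$, so the witness is $\Qr(X_A\miid X_B,X_C,X_D)$ with $I_B$ removed. The key step is to observe that under $I_{B_2}=x_{B_2}$ the variable $X_{B_2}$ equals the input value. Conditioning on $X_{B_2}$ therefore amounts to evaluating $\Qr$ on the diagonal $x_{B_2}=I_{B_2}$, and this gives the version of $\Prb_\Mc(X_A\mid X_{B_1},X_C\miid\Do(X_{B_2},X_{D_2}))$.

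\textbf{Main obstacle.} I expect the null-set bookkeeping in Rule 2 to be the hardest part. The two sides of the equality are conditional kernels with respect to different transitional kernels: one is observational, where $X_B$ is random, and the other is interventional, where $X_B$ is an input. Their exceptional sets live on different spaces, and the restriction to the diagonal $X_{B_2}=I_{B_2}$ must be handled jointly measurably. I would first prove the statement for a single decomposition by comparing both kernels with the common $\Qr$. I would then use the two positivity assumptions to replace both exceptional sets by $\mu_{B\cup C}$-null sets, and finally take a union over the finitely many decompositions $B=B_1\dcup B_2$.
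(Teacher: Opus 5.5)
Your proposal is correct and follows the paper's route. The paper presents these rules as corollaries of the strong global Markov property (\cref{thm:strong_markov_property}), applied to the hard-intervened model and to the regime-indicator extension $\Mc_{\Do(I_B)}$, and defers the measure-theoretic details to \cite{forre2021transitional,forre2025mathematical}. Your Rule~2 bookkeeping is what is needed: compare both sides with the common witness $\Qr$ evaluated on the diagonal $X_{B_2}=I_{B_2}$, then use the two positivity assumptions (with Fubini) to pass to $\mu_{B\cup C}$-null sets.
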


\subsection{Some graphical notions}

To be self-contained, we recall some terminology and graphical notions used in the paper. We start with some basic terminology.

Let $\Gf=(\Vc,\Ec)$ be a  mixed graph with edges of the types $\Edges$  and $a,b\in \Vc$. Two nodes are called \emph{adjacent} if there is an edge between them.  We call $a$ a \emph{parent} of $b$ and $b$ a
\emph{child} of $a$ if $a \tuh b$ is in $\Gf$. A \emph{walk} in $\Gf$ is a sequence of nodes $( v_0,\ldots,v_n)$ such that $v_i$ and $v_{i+1}$ are distinct and adjacent for all $0\le i\le n-1$ in $\Gf$. A \emph{path} in $\Gf$ is a walk without repeating nodes. A \emph{directed path} from $v_0$ to $v_n$ in $\Gf$ is a path of the form $v_0 \tuh v_1 \tuh \cdots \tuh v_{n-1} \tuh v_n$ in $\Gf$. Node $a$ is called an \emph{ancestor} of $b$ and $b$ a \emph{descendant}
of $a$ if $a=b$ or there is a directed path from $a$ to $b$. We use $\pa_\Gf(\cdot),\ch_\Gf(\cdot),\anc_\Gf(\cdot),\de_\Gf(\cdot)$ to denote
the set of parents, children, ancestors, and descendants of a node in $\Gf$, respectively. We write $\pa_\Gf(A)=\bigcup_{a\in A}\pa_\Gf(a)$ for a subset $A\subseteq \Vc$, and analogously for the other notions. A
\emph{directed cycle} in $\Gf$ is a walk of the form $a\tuh \cdots \tuh b\tuh a$ in $\Gf$. An \emph{almost directed cycle} in $\Gf$ is a walk of the form $a\tuh \cdots \tuh b\huh a$ in $\Gf$. 
We use star $a\suh b$ to represent the possibilities of $a\tuh b$, or $a\huh b$, or $a\ouh b$. A \emph{collider path} is a path in which all the non-end nodes are colliders, i.e., $v_0\suh v_1\huh \cdots \huh v_{n-1} \hus v_n$. A \emph{bidirected path} is a path with only bidirected edges: $v_0\huh \cdots \huh v_n$. If there is a bidirected path between nodes $a$ and $b$, we say that $a$ and $b$ are in the same \emph{c-component} or \emph{district}. We call $\tilde{\pi}$ a \emph{subpath} of a path $\pi$ if $\tilde{\pi}$ consists of a subsequence of nodes of $\pi$. We denote by $\pi(v_i,v_j)$ a segment of $\pi$ starting from $v_i$ and ending at $v_j$. Given two paths $\pi_1:v_0^1\sus \cdots \sus v_n^1$ and $\pi_2:v_0^2\sus \cdots \sus v_m^2$ such that $v_n^1=v_0^2$, we denote by $\pi_1\oplus \pi_2: v_0^1\sus \cdots v_n^1=v_0^2\sus \cdots \sus v_m^2$ the \emph{concatenated path/walk of $\pi_1$ and $\pi_2$}. Edge of the form $a\suh b$ is called \emph{into} $b$ and edge of the form $a\sut b$ is called \emph{out of} $b$. We call $\tilde{\Gf}$ a \emph{subgraph} of $\Gf$ if $\tilde{\Gf}$ consists of a subset of nodes and edges of $\Gf$, and an \emph{induced subgraph} of $\Gf$ over $A$ with $A\subseteq \Vc$ if $\tilde{\Gf}$ consists of nodes $A$ and all the edges between nodes in $A$ of $\Gf$. We often denote by $\Gf_A$ the induced subgraph of $\Gf$ over $A$.

We now introduce additional graphical notions used in the appendix.

\begin{definition}[Discriminating path] \label{def:discriminating_path} In an iPAG, a path \[\pi: a\sus \cdots \sus y \sus z\] is called a \textbf{discriminating path for $y$} if
\begin{enumerate}
    \item $\pi$ includes at least three edges;
    \item node $a$ is not adjacent to node $z$ and every node between $a$ and $y$ is a collider on $\pi$ and is a parent of $z$.
\end{enumerate}
\end{definition}

See \cref{fig:discriminating_path} for an illustration of a discriminating path for $y$.

\begin{figure}\centering
  \begin{tikzpicture}
    \begin{scope}
      \node[ndout] (z) at (9,1.5) {$z$};
      \node[ndout] (y) at (9,0) {$y$};
      \node[ndout] (vn) at (7.5,0) {$v_n$};
      \node[ndout] (vn1) at (6,0) {$v_{n-1}$};
      \node (dots) at (4.5,0) {$\cdots$};
      \node[ndout] (v1) at (3,0) {$v_1$};
      \node[ndout] (a) at (1.5,0) {$a$};
      \draw[suh] (y) -- (vn);
      \draw[huh] (vn) -- (vn1);
      \draw[huh] (vn1) -- (dots);
      \draw[huh] (dots) -- (v1);
      \draw[tuh] (vn1) edge (z);
      \draw[tuh] (vn) edge (z);
      \draw[tuh] (v1) edge (z);
      \draw[suh]  (a) -- (v1);
      \draw[sus] (y) -- (z);
    \end{scope}
  \end{tikzpicture}
  \caption{Discriminating path for $y$. Only $y$ and $a$ can be input nodes; all other nodes have to be output nodes.}
  \label{fig:discriminating_path}
\end{figure}
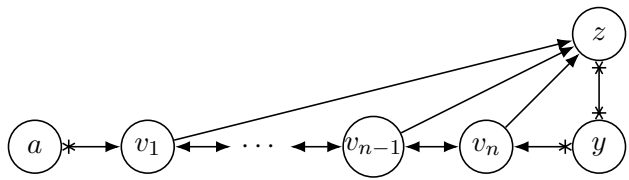

In the next definition, let $\Zb_{\ge 0}=\{0,1,2,\ldots\}$ denote the set of non-negative integers.

\begin{definition}[Proper/irreducible/tight path]\label{def:irreduciblepath}

Let $\PAG$ be an iPAG and $A,C,D\subseteq (\Ic\cup \Vc)\sm T$ and $B\subseteq \Ic\cup \Vc\cup \{I_d\}_{d\in D}$ with $T\subseteq \Ic\cup \Vc$. Let $\pi:v_0\sus v_1\sus\cdots \sus v_n$ be a path from $A$ to $B$ in $\Pf_{\Do(I_D,T)}$.

\begin{enumerate}
    \item  We call $\pi$ a \textbf{proper} `xyz' path in $\Pf_{\Do(I_D,T)}$ if $v_i\notin A\cup B$ for all $1<i<n$ (if any). Here, for example, `xyz' may be `open' or `potentially anterior'.

    \item Assume that $\pi$ is open given $C\cup T$. We call $\pi$ a \textbf{reducible} open path in $\Pf_{\Do(I_D,T)}$ if there exists an open path \[\tilde{\pi}:v_{i_0}\sus v_{i_1}\sus \cdots \sus v_{i_m}\] from $A$ to $B$ given $C\cup T$ in $\Pf_{\Do(I_D,T)}$, for some $0\leq m < n$ with $(v_{i_j})_{j=0}^m$ being a subsequence of $(v_i)_{i=0}^n$. We call an open path $\pi$ from $A$ to $B$ given $C\cup T$ in $\Pf_{\Do(I_D,T)}$ \textbf{irreducible} if it is not reducible. 

    \item Assume that $\pi$ is open given $C\cup T$. Define for $u\in \Ic\cup \Vc$
    \[
            \dist(u,C)\coloneqq \min\{\ell: \exists c\in C \text{ and a directed path } u=u_0\tuh \cdots \tuh u_\ell= c \}\in \Zb_{\ge 0} \cup \{\infty\},
    \]  
    with the convention $\dist(u,C)=\infty$ if no such directed paths exist. Define 
    \[
        \dist(\pi,C)\coloneqq \sum_{u\in \mathsf{Col}(\pi)}\dist(u,C),
    \]
    where $\mathsf{Col}(\pi)$ denotes the set of colliders on $\pi$. Define \[\|\pi\|\coloneqq \#\{\text{edges on } \pi\}=n\] to be the length of $\pi$. We say that $\pi$ is \textbf{$C$-slack} in $\Pf_{\Do(I_D,T)}$ if there exists an open path $\tilde{\pi}:u_{0}\sus u_{1}\sus \cdots \sus u_{m}$ from $A$ to $B$ given $C\cup T$ in $\Pf_{\Do(I_D,T)}$, such that \[(\|\tilde{\pi}\|,\dist(\tilde{\pi},C))<_{\mathsf{lex}}(\|\pi\|,\dist(\pi,C))\] where $<_{\mathsf{lex}}$ denotes the lexicographic order on $\mathbb N\times(\mathbb N\cup\{\infty\})$. If $\pi$ is not $C$-slack, we call it \textbf{$C$-tight}.  
\end{enumerate}
    
\end{definition}

If there exists an open path from $A$ to $B$ given $C\cup T$ in $\Pf_{\Do(I_D,T)}$, then there must exist an irreducible one and a $C$-tight one. A $C$-tight path is irreducible, and every irreducible open path is a proper open path.

\begin{definition}[C-forest]\label{def:cforest}
    Let $\Gf=(\Vc,\Ec)$ be an ADMG where $R$ is the root set, i.e., $R\subseteq \Vc$ is a set such that $\anc_{\Gf}(R)=\Vc$. Then graph $\Gf$ is called an $R$-rooted C-forest if 
    \begin{enumerate}
        \item all nodes in $\Gf$ lie in a single c-component, and 

        \item node $v$ has at most one child for all $v\in \Vc$. 
     \end{enumerate}
\end{definition}

\begin{definition}[Hedge]\label{def:hedge}
    Let $\Gf=(\Vc,\Ec)$ be an ADMG and $A,B\subseteq \Vc$ be disjoint. Let $\Hc,\Hc^\prime\subseteq \Vc$ be such that 
    \begin{enumerate}
        \item there are two subgraphs $\Gf^\Hc$ and $\Gf^{\Hc^\prime}$ of $\Gf$ over $\Hc$ and $\Hc^\prime$ (not necessarily induced subgraphs), respectively, such that they form two $R$-rooted C-forests with $R\subseteq \anc_{\Gf_{\Do(B)}}(A)$;
        \item $\Hc\cap B\ne \emptyset$, $\Hc^\prime \cap B=\emptyset$, and $\Gf^{\Hc^\prime}$ is a subgraph of $\Gf^{\Hc}$. 
    \end{enumerate}
    We call $(\Gf^\Hc,\Gf^{\Hc^\prime})$ a \textbf{hedge for $(A,B)$ in $\Gf$}. We also sometimes abuse the notation and call $(\Hc,\Hc^\prime)$ a hedge for $(A,B)$ in $\Gf$.
\end{definition}

\subsection{FCI with input nodes}
\begin{algorithm}
\begin{algorithmic}[1]
  \Input Input node set $\Ic$; output node set $\Vc$; independence model $M$ over $\Vc \mid \Ic$
  \Output mixed graph $\Gf$ with input nodes $\Ic$ and output nodes $\Vc$
  \State $\langle \Gf,\sepset \rangle \leftarrow \mathtt{FCIskeleton}(\Ic,\Vc,M)$\label{alg:fci:skeleton}
   \ForAll{edge $j \ouo v$ in $\Gf$ with $j \in \Ic$, $v \in \Vc$}\label{alg:fci:orientJ}\label{alg:fci:start_orient}
    \State orient $j \tuo v$ \Comment{input nodes cannot receive arrowheads}
  \EndFor

  \Repeat \label{alg:fci:R0}
    \begin{description}
      \item[$\Rc0$] if $i \sus j \sus k$ in $\Gf$ with $i \notin \Ic$, and $i$ and $k$ are not adjacent in $\Gf$, then orient $i \suh j \hus k$ if $j \notin \sepset(\{i,k\})$ 
    \end{description}
  \Until{this orientation rule is not applicable}
  \Repeat
    \begin{description}
      \item[$\Rc1$] if $i \suo j \hus k$ in $\Gf$ with $i \notin \Ic$, and $i$ and $k$ are not adjacent in $\Gf$, then orient $i \hut j$
      \item[$\Rc2$] if $i \tuh j \suh k$ or $i \suh j \tuh k$ in $\Gf$, and $i \suo k$ in $\Gf$, then orient $i \suh k$
      \item[$\Rc3$] if $i \suh j \hus k$ and $i \suo l \ous k$ and $l \suo j$ in $\Gf$ with $i \notin \Ic$, and $i$ and $k$ are not adjacent in $\Gf$, then orient $l \suh j$
      \item[$\Rc4$] if $\langle i, j, q_1, \dots, q_n, k \rangle$ is a discriminating path in $\Gf$ for $j$, and if $i \suo j$ in $\Gf$, then orient $i \hut j$ if $j \in \sepset(\{i,k\})$ and orient $i \huh j \huh q_1$ if $j \notin \sepset(\{i,k\})$ 
    \end{description}
  \Until{none of these orientation rules is applicable}
  \Repeat
    \begin{description}
      \item[$\Rc5$] if $i \ouo j$ in $\Gf$, and there is an uncovered circle path $i \ouo k \ouo \cdots \ouo l \ouo j$ in $\Gf$ such that $i$ is not adjacent to $l$ and $j$ is not adjacent to $k$, then orient $i \tut k \tut \cdots \tut l \tut j \tut i$
    \end{description}
  \Until{this orientation rule is not applicable}
  \Repeat
    \begin{description}
      \item[$\Rc6$] if $i \tut j \ous k$ in $\Gf$, then orient $j \tus k$
      \item[$\Rc7$] if $i \suo j \out k$ in $\Gf$ with $i \notin \Ic$, and $i$ and $k$ are not adjacent in $\Gf$, then orient $i \sut j$
    \end{description}
  \Until{none of these orientation rules is applicable}
  \Repeat
    \begin{description}
      \item[$\Rc8$] if $i \tuh j \tuh k$ or $i \tuo j \tuh k$ in $\Gf$, and $i \ouh k$ in $\Gf$, then orient $i \tuh k$
      \item[$\Rc9$] if $i \ouh k$, and $\pi = \langle i, j, \dots, k \rangle$ is an uncovered possibly directed path in $\Gf$ from $i$ to $k$ such that $j$ and $k$ are not adjacent in $\Gf$, then orient $i \tuh k$
      \item[$\Rc10$] if $i \ouh k$ in $\Gf$, $j \tuh k \hut l$ in $\Gf$, $\pi_1$ is a uncovered possibly directed path in $\Gf$ from $i$ to $j$, and $\pi_2$ is a uncovered possibly directed path in $\Gf$ from $i$ to $l$, then let $u_1$ be the node adjacent to $i$ on $\pi_1$ (possibly $u_1 = j$) and $u_2$ the node adjacent to $i$ on $\pi_2$ (possibly $u_2 = l$); if $u_1 \ne u_2$, and $u_1$ and $u_2$ are not adjacent in $\Gf$, then orient $i \tuh k$
    \end{description}
  \Until{none of these orientation rules is applicable}\label{alg:fci:end_orient}
  \end{algorithmic}
  \caption{Extended FCI Algorithm}\label{alg:FCI}
\end{algorithm}

\begin{proposition}[FCI is sound {\protect\cite[Theorem~12.8.1]{forre2025mathematical}}]\label{prop:FCI_sound}
   The Extended FCI algorithm (\cref{alg:FCI}) is \emph{sound}: if its input consists of the independence model $\IM(\Af \mid \Sc)$ of an isADMG $\Af$ given $\Sc$, then its output is a valid iPAG $\Pf$ representing $\Af$ given $\Sc$.
\end{proposition}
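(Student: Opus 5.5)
The plan is an induction over the run of \cref{alg:FCI}. The invariant is that the current graph $\Gf$ satisfies Clauses~1--4 of \cref{def:graph_rep} relative to $\Af$ given $\Sc$. Circle marks impose no constraint, so the invariant only concerns definite arrowheads and tails. Once the invariant is known to hold at termination, I will show separately that $\Gf$ satisfies the structural conditions of \cref{def:pag}.

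\emph{Skeleton.} I first show that $\mathtt{FCIskeleton}$ returns exactly the adjacencies required by Clause~2 of \cref{def:graph_rep}.
\begin{enumerate}
\item Pairs of input nodes are never adjacent, because the independence model is over $\Vc\mid\Ic$ and the algorithm never tests such pairs.
\item For the other pairs, I pass to the endogenized graph $\Af^*$ of \cref{def:endo_graph}. There I use the standard equivalence of Richardson--Spirtes: an $\Sc$-inducing path between $a$ and $b$ exists iff no set $Z$ gives $a\sep{\mathsf{d}}{} b\mid Z\cup\Sc$.
\item I then check that this equivalence survives with $id$-separation and input nodes. Inputs have no parents, so the only effect of inputs is that they may be placed in the conditioning set freely. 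This is exactly the convention built into $id$-separation.
\end{enumerate}
Soundness of the recorded sepsets follows from the same equivalence.

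\emph{Orientations.} The initial step orients $j\tuo v$ for $j\in\Ic$. Its soundness argument is direct. Take an $\Sc$-inducing path $j\tuh w_1\cdots v$. If $w_1=v$, then $j\in\anc_\Af(v)$. Otherwise $w_1$ is a collider, so $w_1\in\anc_\Af(\{j,v\}\cup\Sc)$. Since $j$ has no parents and $\Af$ is acyclic, $w_1\notin\anc_\Af(j)$. Hence $j\in\anc_\Af(\{v\}\cup\Sc)$.

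For the rules, I go through $\Rc0$--$\Rc10$ one at a time and prove that each preserves the invariant.
\begin{enumerate}
\item For $\Rc0$, I reuse the inducing-path argument. Suppose $j\in\anc_\Af(\{i,k\}\cup\Sc)$. Concatenating the two inducing paths would give an $\Sc$-inducing path from $i$ to $k$, or else force $j$ into every separating set. Either outcome contradicts the sepset.
\item For $\Rc1$--$\Rc4$ and $\Rc8$--$\Rc10$, I translate each of Zhang's soundness proofs into the ancestral statements ``$a\notin\anc(b\cup\Sc)$'' and ``$a\in\anc(b\cup\Sc)$''.
\item For $\Rc5$--$\Rc7$, I use that the tails produced there arise exactly when the relevant node is an ancestor of $\Sc$.
\end{enumerate}
The side conditions $i\notin\Ic$ are what make this transfer work. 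They exclude configurations where Zhang's argument would use a non-adjacency that is an artefact of the convention that input pairs are never adjacent, rather than a genuine separation.

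\emph{Output is an iPAG.} Most conditions follow from the invariant.
\begin{enumerate}
\item Acyclicity and the ``ancestral'' clauses follow from the invariant together with acyclicity of $\Af$. For example, $a\suh b\tut c$ would force $b\in\anc_\Af(\Sc\cup\{c\})$ and also $b\notin\anc_\Af(\Sc\cup\{a\})$, and these two cases have to be combined carefully. Likewise, arrowheads never point into input nodes.
\item Maximality holds because any inducing path in $\Gf$ between non-adjacent nodes lifts to an $\Sc$-inducing path in $\Af$. That would contradict the skeleton step.
\end{enumerate}

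The main obstacle is the rule-by-rule verification for $\Rc4$ and $\Rc9$/$\Rc10$ in the presence of input nodes. Discriminating paths and uncovered potentially directed paths may pass through or start at inputs, where the Richardson--Spirtes maximality lemma does not apply verbatim. My first attempt would reduce to the endogenized graph $\Af^*$ with the extra bookkeeping that every input is a source. I would then check case by case that Zhang's contradictions persist whenever the relevant endpoint is not an input.
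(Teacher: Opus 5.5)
The paper gives no proof of its own here: the proposition is imported from \cite[Theorem~12.8.1]{forre2025mathematical}. Your plan is the standard self-contained route, and it works in outline. It has three parts: the skeleton via the equivalence between $\Sc$-inducing paths and non-separability, rule-by-rule soundness of every definite mark against the single isADMG $\Af$, and the iPAG axioms read off from the invariant. For the skeleton, the Richardson--Spirtes separating set can be enlarged by $\Ic$ at no cost, because inputs are sources.

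The cleanest way to organize the rule step is to note what your invariant says: every non-circle mark of the current graph agrees with $\cat{MAG}(\Af)$. That graph is an iMAG by \cref{prop:mag_rep}, and its separations coincide with those of $\Af$ given $\Sc$. To see the latter, apply Richardson--Spirtes to the endogenized graph, whose MAG differs from $\cat{MAG}(\Af)$ only on input--input pairs. Zhang's soundness proofs are single-MAG arguments, so they then apply to $\cat{MAG}(\Af)$ essentially verbatim. Relative to the citation, your route buys an explicit account of where the side conditions $i\notin\Ic$ are needed; the citation buys brevity.

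Three steps need more than you wrote.

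\emph{$\Rc0$.} The branch where $j$ is a non-collider on the concatenated walk does not follow from concatenating inducing paths. The colliders of that walk lie in $\anc_\Af(\{i,k\}\cup\Sc)$, not in $\anc_\Af(Z\cup\Sc)$. Argue in $\cat{MAG}(\Af)$ instead: $j\in\anc_\Af(\{i\}\cup\Sc)\cup\anc_\Af(\{k\}\cup\Sc)$ puts a tail at $j$ on one of the two edges, so the triple is open given every $Z\not\ni j$.

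\emph{$\Rc5$.} ``Ancestor of $\Sc$'' only describes the conclusion. The actual argument is that an arrowhead anywhere on the uncovered circle cycle propagates around it into a directed or almost directed cycle in $\cat{MAG}(\Af)$. This works because each consecutive triple is an unshielded non-collider, certified by its sepset since $\Rc0$ did not fire, and because $a\suh b\tut c$ is excluded.

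\emph{Maximality.} Your lift does not cover two inputs. They are non-adjacent by convention even when $\Af$ has an $\Sc$-inducing walk between them, so no contradiction with the skeleton arises. Here use instead that inputs receive no arrowheads, so no collider can be a $\Gf$-ancestor of an input.

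Finally, the obstacle you anticipate mostly evaporates. After the initial step every input carries a tail at its own end, and the rules only resolve circles. Hence no input lies on a circle path or on the uncovered potentially directed paths of $\Rc9$--$\Rc10$, and no input can be $y$ or an interior collider of a discriminating path. Inputs enter non-adjacency hypotheses only as the far endpoint of an unshielded triple, or as the initial node of a discriminating path, with an output at the other end. There the non-adjacency is genuine and backed by a sepset, so no endogenization bookkeeping is needed beyond the skeleton step.
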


See \cite[Section~12.7]{forre2025mathematical} for the details of the skeleton search function $\mathtt{FCIskeleton}$.

\section{Additional discussion}\label{sec:additional_discussion}

We present additional discussion on positive and continuous Markov kernels, criteria for causal relationships in MAGs and PAGs, causal identification from undirected graphs, and a causal-identification algorithm for conditional causal effects in iSOPAGs under selection bias (sCIDP).

\subsection{Continuous and positive Markov kernels}\label{sec:pcmk}

The appeal of the class of positive and continuous Markov kernels is twofold:
(i) it is closed under marginalization, product, and composition of Markov kernels;
(ii) it yields a canonical conditioning operation provided that the conditional kernel can be taken to be continuous.

\begin{lemma}[Properties of positive and continuous Markov kernels {\protect\cite{chen25notes,Richard01causal_inference_continuous}}]\label{lem:pcmk}
   Let 
   \[
   \begin{aligned}
       &\Kr(X,Y\miid T):\,\Tc \dto  \Xc \times \Yc,\quad   \Kr_1(Z\miid U,X,T):\, \Uc \times \Xc\times \Tc \dto \Zc, \\
       \text{ and } &\Kr_2(X,Y\miid T,W):\, \Tc \times \Wc \dto \Xc \times \Yc
   \end{aligned}  
   \] 
    be positive and continuous Markov kernels. Then we have:
    \begin{enumerate}
            \item The marginalized Markov kernels $\Kr(X \miid T)$ and $\Kr(Y \miid T)$ are positive and continuous.
            \item The product Markov kernel 
            $
            \Kr_1(Z\miid U,X,T) \otimes \Kr_2(X,Y\miid T,W)
            $
            is positive and continuous.
            \item Suppose that the conditional Markov kernel $\Kr(X \mid Y\miid T)$ of $\Kr(X,Y\miid T)$ given $Y$ can be chosen to be continuous. Then it is pointwise unique among continuous versions of the conditional kernel, and moreover for all $y\in \Yc,t\in \Tc$ 
            \[
            \Kr(X \mid Y=y\miid T=t)=\lim_{\delta \downarrow 0}\Kr(X\mid Y\in B(y,\delta)\miid T=t),
            \]
            where $B(y,\delta)$ denotes a ball centered at $y$ with radius $\delta$ and the limit is taken in $\Pc(\Xc)$ equipped with the weak topology. Note that $\Kr(X\mid Y\in B(y,\delta)\miid T=t)$ is well-defined by positivity of $\Kr(X,Y\miid T)$.
    \end{enumerate}
\end{lemma}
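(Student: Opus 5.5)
The plan is to verify the three items directly from the definitions: positivity on nonempty open sets, and Feller continuity, meaning weak continuity of the parameter-to-measure map. Throughout I use that finite products of Polish spaces are Polish, and that the weak topology on $\Pc(\Xc)$ is metrizable when $\Xc$ is Polish.

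For (1), fix a nonempty open $U\subseteq\Xc$. Then $U\times\Yc$ is nonempty and open, so
\[
\Kr(X\in U\miid T=t)=\Kr((X,Y)\in U\times\Yc\miid T=t)>0 .
\]
Hence $\Kr(X\miid T)$ is positive. For continuity, $\Kr(X\miid T=t)$ is the pushforward of $\Kr(X,Y\miid T=t)$ under the continuous projection $\Xc\times\Yc\to\Xc$. By the continuous mapping theorem, pushforward along a continuous map is weakly continuous on $\Pc(\Xc\times\Yc)$, and composing with the continuous map $t\mapsto\Kr(X,Y\miid T=t)$ gives Feller continuity. The argument for $Y$ is symmetric.

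For (2), positivity goes as follows. A nonempty open set in $\Zc\times\Xc\times\Yc$ contains a basic box $U_1\times U_2\times U_3$ with each $U_i$ nonempty and open. The product kernel assigns to this box the mass
\[
\int_{U_2\times U_3}\Kr_1(Z\in U_1\miid u,x,t)\,\Kr_2(\dr(x,y)\miid t,w).
\]
The integrand is strictly positive by positivity of $\Kr_1$, and the integrating set has positive $\Kr_2$-mass by positivity of $\Kr_2$. Hence the integral is positive.

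Continuity is the step I expect to be the main obstacle. Fix a bounded continuous $f$ and set
\[
h(u,t,x,y)\coloneqq\int f(z,x,y)\,\Kr_1(\dr z\miid u,x,t).
\]
The first claim is that $h$ is jointly continuous. Take $(u_n,x_n,t_n,y_n)\to(u,x,t,y)$. The measures $\Kr_1(\cdot\miid u_n,x_n,t_n)$ converge weakly, so by Prokhorov this sequence is tight. Meanwhile $f(\cdot,x_n,y_n)\to f(\cdot,x,y)$ uniformly on compacts, and these two facts together give $h(u_n,t_n,x_n,y_n)\to h(u,t,x,y)$. The second claim is that
\[
\int h(u_n,t_n,x,y)\,\Kr_2(\dr(x,y)\miid t_n,w_n)\to\int h(u,t,x,y)\,\Kr_2(\dr(x,y)\miid t,w).
\]
Here the integrands converge continuously and the measures converge weakly, so I would invoke the standard extended continuous-mapping lemma. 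Alternatively, I would use Skorokhod representation on the Polish space, which gives almost-sure convergence of representatives, and conclude by dominated convergence.

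For (3), fix $t$ and let $Q_1,Q_2$ be two continuous versions of the conditional kernel. By essential uniqueness in \cref{defthm:prob_calculus}, the set $N_t=\{y: Q_1(\cdot\miid y,t)\neq Q_2(\cdot\miid y,t)\}$ is $\Kr(Y\miid T=t)$-null. By continuity and metrizability of $\Pc(\Xc)$, $N_t$ is open. By (1), $\Kr(Y\miid T=t)$ is positive, so a null open set is empty, and the two versions agree pointwise.

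For the limit formula, positivity from (1) gives $\Kr(Y\in B(y,\delta)\miid T=t)>0$, so the elementary conditional is well defined. For bounded continuous $g$ on $\Xc$, the disintegration gives
\[
\int g\,\dr\Kr(X\mid Y\in B(y,\delta)\miid T=t)=\frac{\int_{B(y,\delta)}\phi(y')\,\Kr(\dr y'\miid T=t)}{\Kr(Y\in B(y,\delta)\miid T=t)},
\qquad \phi(y')\coloneqq\int g\,\dr\Kr(X\mid Y=y'\miid T=t).
\]
The function $\phi$ is continuous by continuity of the chosen version. An average of a continuous function over shrinking balls converges to its value at the center, so the right-hand side tends to $\phi(y)$ as $\delta\downarrow 0$. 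Since this holds for every bounded continuous $g$, the weak convergence in $\Pc(\Xc)$ follows.
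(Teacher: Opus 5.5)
Your proof is correct: the pushforward argument for marginals, positivity of the product via basic open boxes, joint continuity of $(u,t,x,y)\mapsto\int f(z,x,y)\,\Kr_1(\dr z\miid u,x,t)$ via tightness followed by the extended continuous mapping theorem, and, for (3), the open null disagreement set together with shrinking-ball averages of the continuous map $y'\mapsto\int g\,\dr\Kr(X\mid Y=y'\miid T=t)$ all go through. The paper gives no proof of its own here and only cites \cite{chen25notes,Richard01causal_inference_continuous}, so your argument is a self-contained version of the standard verification; note also that the uniqueness claim in (3) already follows from the limit formula, because $\lim_{\delta\downarrow 0}\Kr(X\mid Y\in B(y,\delta)\miid T=t)$ does not depend on which version is chosen.
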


\begin{remark}[Sufficient conditions for positive and continuous Markov kernels]
Let $\Kr(X\miid Y)$ be a Markov kernel from a Polish space $\Yc$ to a Polish space $\Xc$, and suppose it admits a $\mu$-a.s.\ positive density $k(\cdot\miid\cdot)$ \wrt a $\sigma$-finite reference measure $\mu$ that is strictly positive on non-empty open subsets of $\Xc$. If for $\mu$-a.e.\ $x\in\Xc$, the map $y\mapsto k(x\miid y)$ is continuous, and there exists an integrable function $g\in L^1(\mu)$ such that $k(x\miid y)\le g(x)$ for all $x\in\Xc$ and $y\in\Yc$, then $\Kr(X\miid Y)$ is positive and continuous. If there exists $L\in L^1(\mu)$ such that
$
|k(x\miid y_1)-k(x\miid y_2)|\le L(x)\,d_\Yc(y_1,y_2)
$
for all $y_1,y_2$ in a neighborhood of each $y$ and for $\mu$-a.e.\ $x\in\Xc$, then $\Kr(X\miid Y)$ is positive and continuous.
\end{remark}

\begin{remark}[Continuous version of conditioning via density]
Let $\Kr(X,Y\miid Z)$ be a Markov kernel from a Polish space $\Zc$ to a Polish space $\Xc\times \Yc$ that admits a strictly positive jointly continuous density $k(\cdot,\cdot\miid\cdot)$ \wrt a $\sigma$-finite reference measure $\mu_\Xc\otimes \mu_\Yc$ on $\Xc\times \Yc$, where $\mu_\Xc$ and $\mu_\Yc$ are strictly positive on nonempty open subsets of $\Xc$ and $\Yc$, respectively. Assume that there exists $g\in L^1(\mu_\Xc)$ such that
\[
k(x,y\miid z)\le g(x)
\qquad
\text{for all }(x,y,z)\in \Xc\times \Yc\times \Zc.
\]
Then
\[
k(y\miid z)\coloneqq \int_\Xc k(x,y\miid z)\,\mu_\Xc(\dr x)
\]
is finite, continuous, and strictly positive. Hence
\[
k(x\mid y\miid z)\coloneqq \frac{k(x,y\miid z)}{k(y\miid z)}
\]
is well-defined and continuous. If moreover $k(x\mid y\miid z)$ is dominated by an integrable function of $x$, then
\[
k(x\mid y\miid z)\,\mu_\Xc(\dr x)
\]
induces a positive and continuous Markov kernel from $\Yc\times \Zc$ to $\Xc$.
\end{remark}

\subsection{Criteria for causal relationships in MAGs and PAGs}\label{sec:criterion_relation}

For an ADMG $\Af=(\Vc,\Ec)$ and $a,b\in \Vc$, we say that, according to $\Af$: 
\begin{enumerate}
    \item [(i)] variable $X_a$ does not have a direct causal effect on $X_b$ if there is no directed edge $a\tuh b$ in $\Af$, which is equivalent to $b\sep{\mathsf{id}}{\Af_{\Do(I_a,\Vc\sm \{a,b\})}} I_a \mid \Vc\sm \{a,b\}$;
    
    \item [(ii)] variable $X_a$ does not have a causal effect on $X_b$ if there are no directed paths from $a$ to $b$ in $\Af$, which is equivalent to $b\sep{\mathsf{id}}{\Af_{\Do(I_a)}} I_a$;

     \item [(iii)] there is no confounding between $X_a$ and $X_b$ if there is no bidirected edge between $a$ and $b$ in $\Af_{\sm \{a,b\}^c}$, which is equivalent to $b\sep{\mathsf{id}}{\Af_{\Do(I_a)}} I_a \mid a$.
\end{enumerate}

By \cref{thm:sep_hsint_mag,thm:sep_hsint_pag_mag,lem:anc_selec_I,lem:anc_selec_II}, we have the following result.

\begin{corollary}[Criterion for causal relationships in MAGs and PAGs]\label{cor:causal_relation}
Under the setting of \cref{thm:causal_calculus_mag_pag},  let $\{a,b\}\subseteq \Vc$. 
\begin{enumerate}
    \item \textbf{Direct causal effect}: If
    \[
        b\sep{\mathsf{id}}{\Gf_{\Do(I_a,\Vc\sm \{a,b\})}} I_a \mid \Vc\sm \{a,b\},
    \]
    then variable $X_a$ does not have a direct causal effect on $X_b$ according to $\Af$ for every $\Af\in [\Gf]_\Gs$. Otherwise, there exists $\Af \in [\Gf]_{\Gs}$ such that $X_a$ does have a direct causal effect on $X_b$ according to $\Af$.

    \item \textbf{Causal effect}: If
    \[
        b\sep{\mathsf{id}}{\Gf_{\Do(I_a)}} I_a,
    \]
    then variable $X_a$ does not have a causal effect on $X_b$ according to $\Af$ for every $\Af\in [\Gf]_\Gs$. Otherwise, there exists $\Af \in [\Gf]_{\Gs}$ such that $X_a$ does have a causal effect on $X_b$ according to $\Af$.

    \item \textbf{confounding}: If 
    \[
        b\sep{\mathsf{id}}{\Gf_{\Do(I_a)}} I_a \mid a,
    \]
    then there is no confounding between variables $X_a$ and $X_b$ according to $\Af$ for every $\Af\in [\Gf]_\Gs$. Otherwise, there exists $\Af \in [\Gf]_{\Gs}$ such that there is confounding between variables $X_a$ and $X_b$ according to $\Af$.

    \item \textbf{Ancestors of selection nodes}: If there is an arrowhead towards node $a$ in $\Gf$ then variable $X_a$ is not an ancestor of a selection variable according to $\Af$ for every $\Af\in [\Gf]_\Gs$. Otherwise, there exists $\Af \in [\Gf]_{\Gs}$ such that $X_a$ is an ancestor of a selection variable.
\end{enumerate}
\end{corollary}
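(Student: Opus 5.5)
The plan is to derive each of the four items by translating the ADMG characterizations (i)--(iii) stated just before the corollary into the iMAG/iSOPAG separation statements, and then invoking \cref{thm:sep_hsint_mag} (for iMAGs) or \cref{thm:sep_hsint_pag_mag} (for iSOPAGs/iCOPAGs). For item~1, take $D=\{a\}$, $T=\Vc\sm\{a,b\}$, $A=\{b\}$, $B=\{I_a\}$ and $C=\emptyset$. The hypotheses of the theorems hold: $A\subseteq \Vc\sm T$, and $B,C$ are disjoint from $A$. The theorem then yields
\[
b\sep{\mathsf{id}}{\Af_{\Do(I_a,\Vc\sm\{a,b\})}} I_a \mid (\Vc\sm\{a,b\})\cup\Sc_\Af
\]
for every $\Af\in[\Gf]_\Gs$. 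Item~2 uses $D=\{a\}$, $T=C=\emptyset$. Item~3 uses $D=\{a\}$, $T=\emptyset$, $C=\{a\}$.

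The main work is the link between these selection-conditioned isADMG separations and the causal notions (i)--(iii). The stated ADMG equivalences involve no selection nodes in the conditioning set, so I must check that adding $\Sc_\Af$ does not change them.

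For (ii), I would argue as follows. An open walk from $b$ to $I_a$ given $\Sc_\Af$ in $\Af_{\Do(I_a)}$ must end with the edge $I_a\tuh a$. Every collider on it is an ancestor of $\Sc_\Af$. Interpreting the causal semantics under selection, the notion ``$X_a$ is a cause of $X_b$'' is then read with respect to the represented structure including $\Sc_\Af$. In the direction where no such walk exists, there is in particular no directed path $a\to b$, which gives soundness. For completeness, if the walk exists in $\cat{MAG}$ form, I would produce a witnessing $\Af$ from the failure of separation, and pick, via \cref{lem:anc_selec_I,lem:anc_selec_II}, a canonical $\Af$ in which the open path is realized by a directed path from $a$ to $b$. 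Concretely, I would use $\cat{isADMG}(\Mf)$ for a suitable $\Mf\in[\Gf]_\Ms$, orienting circles along the relevant potentially directed path as in the proof of \cref{thm:sep_hsint_pag_mag}.

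Items~1 and 3 are handled in the same way. The direct-edge criterion follows because hard-intervening on all other nodes removes every path except $a\tuh b$ and $a\huh b$-type connections. For confounding, conditioning on $a$ blocks the directed route, leaving only bidirected/inducing connections.

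Item~4 is exactly \cref{lem:anc_selec_I,lem:anc_selec_II}. An arrowhead at $a$ in $\Gf$ forbids $a\in\anc_\Af(\Sc_\Af)$ by clause~3 of \cref{def:graph_rep}. Absence of any arrowhead at $a$ allows one to construct $\Af$ (for instance by adding a selection child $s$ of $a$ and checking that the representation is preserved) with $a\in\anc_\Af(\Sc_\Af)$.

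The main obstacle I expect is the ``otherwise'' (completeness) direction of items~1--3. The non-separation witness $\Af$ supplied by the theorems is a graph with $b\nsep{}{} I_a$ given $C\cup T\cup\Sc_\Af$, and I must show that in such a graph the corresponding causal relation actually holds rather than being an artifact of conditioning on $\Sc_\Af$. For example, a walk through a collider that is an ancestor of $\Sc_\Af$ does not by itself give a directed path $a\to b$. I would resolve this by examining the explicit witness constructed in the proofs of the main theorems, which is built from $\cat{isADMG}(\Mf)$, where selection nodes only have tail-tail origins. I would then verify that there the open walk can be shortened to a directed path (item~2), a direct edge (item~1), or a bidirected connection (item~3).
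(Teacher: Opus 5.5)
Your instantiation of the main theorems matches the paper's argument: $A=\{b\}$, $B=\{I_a\}$, $D=\{a\}$, with $(T,C)=(\Vc\sm\{a,b\},\emptyset)$, $(\emptyset,\emptyset)$ and $(\emptyset,\{a\})$ for items~1--3, plus \cref{lem:anc_selec_I,lem:anc_selec_II} for item~4. The paper does nothing beyond citing these four results. The gap is in the step you call the main work: checking that adding $\Sc_\Af$ to the conditioning set leaves the ADMG criteria (i)--(iii) equivalent to the graphical notions (edge $a\tuh b$, directed path, bidirected connection). For the isADMGs of this paper that equivalence is false, so the arguments you sketch for it fail.

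\emph{Soundness of item~2.} The paper's isADMGs allow selection nodes to have observed children. Take $\Af$ given by $a\tuh s\tuh b$. The path is not $\Sc$-inducing, so $a$ and $b$ are non-adjacent in $\cat{MAG}(\Af)$ and $b\sep{\mathsf{id}}{\Gf_{\Do(I_a)}} I_a$ holds. Indeed $b\sep{\mathsf{id}}{\Af_{\Do(I_a)}} I_a\mid\Sc_\Af$, because $s$ is a conditioned non-collider. Yet $a\in\anc_\Af(b)$. So ``no open walk given $\Sc_\Af$ implies no directed path $a\to b$'' is wrong.

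\emph{Completeness of item~1.} Selection nodes are not hard-intervened, so more than $a\tuh b$ and $a\huh b$ connections survive. Take $\Gf$ given by $a\tut b$. Both $b\tuh a\tuh s$ and the canonical witness $\cat{isADMG}(\Gf)$, namely $a\tuh s_{ab}\hut b$, are represented by $\Gf$. Both satisfy $b\nsep{\mathsf{id}}{\Af_{\Do(I_a)}} I_a\mid\Sc_\Af$, via $b\tuh a\hut I_a$ and $b\tuh s_{ab}\hut a\hut I_a$ respectively, yet neither has an edge $a\tuh b$. The Case~2 witness in the proof of \cref{thm:sep_hsint_mag} has no such edge either. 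So shortening the open walk in the witnesses from the main proofs cannot produce the graphical relation. Moreover, under a purely graphical reading the first half of item~2 is simply false.

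What makes the corollary true, and what the paper tacitly uses, is this: for $\Af\in[\Gf]_\Gs$, the relations ``according to $\Af$'' \emph{are} the criteria (i)--(iii) with $\Sc_\Af$ added to the conditioning set. Concretely, they are $b\sep{\mathsf{id}}{\Af_{\Do(I_a,\Vc\sm\{a,b\})}} I_a\mid(\Vc\sm\{a,b\})\cup\Sc_\Af$, $b\sep{\mathsf{id}}{\Af_{\Do(I_a)}} I_a\mid\Sc_\Af$ and $b\sep{\mathsf{id}}{\Af_{\Do(I_a)}} I_a\mid\{a\}\cup\Sc_\Af$. These are relations in the selected subpopulation, matching the conditioning on $X_\Sc=\mathbf{1}_{|\Sc|}$ in \cref{thm:causal_calculus_mag_pag}. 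With that reading, items~1--3 are literally the two theorems and no witness analysis is needed.

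Two smaller points. First, the ``otherwise'' halves follow only for iMAGs and iCOPAGs, since \cref{thm:sep_hsint_pag_mag} gives the converse only there. Second, for a PAG, the second half of item~4 first needs some $\Mf\in[\Gf]_\Ms$ with no arrowhead at $a$ (for instance from the orientation in \ref{sopag:p4}). Only then does your selection-child construction from \cref{lem:anc_selec_I} apply.
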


We give one simple application of \cref{cor:causal_relation}(4). In general, it is impossible to study the s-recoverability problem under ancestral graphs in general when selection variables are conditioned upon and therefore we can gain no information about conditional independence of the type $A\sep{}{} \Sc\mid B$. However, using \cref{cor:causal_relation}(4) and the connection between s-recoverability and s-ID derived in \cite[Theorem~6.1]{abouei24sIDlatent} we can show the following result under certain assumptions on the structure of the selection mechanisms:
\begin{corollary}[s-recoverability]\label{cor:s-recover}
    Let $\Gf=(\Vc,\Ec)$ be a MAG or SOPAG. Let $[\Gf]_\Gs^s$ denote the set of sADMGs represented by $\Gf$ such that $\ch_{\Af}(\Sc_\Af)\sm \Sc_{\Af}=\emptyset$ and for every $s\in \Sc_\Af$ it holds $\pa_{\Af}(s)\cap \Vc\ne\emptyset$. Assume $(\Mc,X_{\Sc}=\1)\in \Mb^+_d(\Gf)$ with $\Gb(\Mc,X_{\Sc}=\1)\in [\Gf]_{\Gs}^s$ is the true underlying causal model. Let $A,B\subseteq \Vc$ be disjoint. If $\Prb_{\Mc}(X_A \mid X_\Sc=\1 \miid \Do(X_B))$ is identifiable from $\Prb_\Mc(X_\Vc\mid X_\Sc=\1)$ and $A\sep{\mathsf{id}}{\Gf_{\Do(B)}} D\mid B$ where $D=\{v\in \Vc\mid \nexists u\suh v \text{ in }\Gf\}$, then $\Prb_{\Mc}(X_A \miid \Do(X_B))$ can be identified from $\Prb_\Mc(X_\Vc\mid X_\Sc=\1)$.
\end{corollary}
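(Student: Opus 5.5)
Following \cite[Theorem~6.1]{abouei24sIDlatent}, I will reduce s-recoverability to s-ID plus a conditional independence between $A$ and the selection mechanism. The target is
\[
\Prb_\Mc(X_A\miid \Do(X_B))=\Prb_\Mc(X_A\mid X_\Sc=\1\miid \Do(X_B)),
\]
which holds whenever $A$ is $d$-separated from $\Sc_\Af$ given $B$ in $\Af_{\Do(B)}$ (equivalently, via the strong global Markov property \cref{thm:strong_markov_property} applied to the intervened iSCM). Once this equality is established, identifiability of the right-hand side from $\Prb_\Mc(X_\Vc\mid X_\Sc=\1)$, which is assumed, transfers directly to the left-hand side.

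The graphical step is therefore to show that $A\sep{\mathsf{id}}{\Af_{\Do(B)}} \Sc_\Af\mid B$ for the true $\Af=\Gb(\Mc,X_\Sc=\1)\in[\Gf]^s_\Gs$. I will use the structural assumptions to localize $\Sc_\Af$. Because selection nodes have no non-selection children, any path into $\Sc_\Af$ must enter it through a parent $p\in\pa_\Af(s)\cap\Vc$ (or through a bidirected edge at $s$), and $p\in\anc_\Af(\Sc_\Af)$. By \cref{cor:causal_relation}(4), such a $p$ has no arrowhead in $\Gf$, so $p\in D$. Conversely, every $v\in D$ is a potential ancestor of selection in some represented graph, but it suffices to work with the true $\Af$. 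Hence the parents of selection nodes, $\pa_\Af(\Sc_\Af)\cap\Vc$, form a subset of $D$.

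Next I will transfer the assumed separation. From $A\sep{\mathsf{id}}{\Gf_{\Do(B)}} D\mid B$, \cref{thm:sep_hsint_pag_mag} (SOPAG case) or \cref{thm:sep_hsint_mag} (MAG case), with $T=B$, $D=\emptyset$ there, $C=\emptyset$ and $B$ replaced by $D\sm B$, gives
\[
A\sep{\mathsf{id}}{\Af_{\Do(B)}} D\mid B\cup\Sc_\Af .
\]
This separation is conditional on $\Sc_\Af$, whereas I need the unconditional version with respect to $\Sc_\Af$. I then argue by contradiction. Take a $B$-open path from $A$ to some $s\in\Sc_\Af$ in $\Af_{\Do(B)}$, and cut it at its first node in $\Sc_\Af\cup\pa_\Af(\Sc_\Af)$. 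The node before any $s$ must be a parent of $s$ in $\Vc$, which lies in $D$, or else the segment arrives at $s$ via a bidirected edge. Because selection nodes have no children outside $\Sc$, the segment from $A$ up to that point contains no selection nodes as non-colliders. Any colliders on it are ancestors of $B$, and adding $\Sc_\Af$ to the conditioning set can only open colliders, never block non-colliders outside $\Sc_\Af$. So this segment stays open given $B\cup\Sc_\Af$ and reaches $D$, which is a contradiction.

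The main obstacle is the bidirected edge case $u\huh s$ with $u\notin D$. Here I intend to use that $s\in\Sc_\Af$ is a collider-ancestor of $\Sc_\Af$, together with the requirement $\pa_\Af(s)\cap\Vc\neq\emptyset$: extending $u\huh s\hut p$ with $p\in D$ gives a path to $D$ whose collider $s$ is open given $\Sc_\Af$. The remaining steps are to verify that this extended walk can be shortened to an open path and that the $id$-convention for input nodes $B$ does not interfere.
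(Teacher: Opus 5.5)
Your route is the paper's: reduce via \cite[Theorem~6.1]{abouei24sIDlatent} to $A\sep{\mathsf{id}}{\Af_{\Do(B)}}\Sc_\Af\mid B$. Then place $\pa_\Af(\Sc_\Af)\cap\Vc$ inside $D$ by \cref{cor:causal_relation}(4), transfer the hypothesis by the soundness direction of \cref{thm:sep_hsint_mag,thm:sep_hsint_pag_mag}, and close a bidirected arrival $u\huh s$ by appending $s\hut p$. Cutting at the first node of $\Sc_\Af\cup\pa_\Af(\Sc_\Af)$ instead of always extending is cosmetic.

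However, the item you deferred, whether the $id$-convention for $B$ interferes, is a genuine gap. It is the same case the paper's proof passes over when it declares the walk ending in $v_n\hut d$ open given $\Sc\cup B$. If $\pa_\Af(s)\cap\Vc\subseteq B$, your walk ends at some $p\in B$. That node lies in the conditioning set $B\cup\Sc_\Af$, so the walk is blocked by clause~(i) of \cref{def:sep_PAG} and contradicts nothing.

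This cannot be repaired from the separation hypothesis alone. Take the true sADMG $a\huh s\hut b$ with $A=\{a\}$, $B=\{b\}$, $\Sc_\Af=\{s\}$. Then $\Gf=\cat{MAG}(\Af)$ is $b\tuh a$ and $D=\{b\}$. The only path $a\hut b$ in $\Gf_{\Do(b)}$ ends in the conditioning set, so $a\sep{\mathsf{id}}{\Gf_{\Do(b)}}D\mid b$ holds. Yet $a\huh s$ is open given $b$ in $\Af_{\Do(b)}$. Moreover, $\Prb(X_a\miid\Do(X_b))=\Prb(X_a)$ is not recoverable from $\Prb(X_a,X_b\mid X_s=1)$, because selection may depend on the latent confounder of $a$. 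Note that $\Prb(X_a\mid X_s=1\miid\Do(X_b))=\Prb(X_a\mid X_b,X_s=1)$ is identifiable for this true graph. It is not identifiable over $\Mb^+_d(\Gf)$, since $\Gf$ also represents the bow $b\tuh a$, $b\huh a$.

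The missing idea is to use the identifiability hypothesis inside the graphical step, read over the model class of $\Gf$ rather than only for the true $\Af$. Under the latter reading the example shows the corollary is false.

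1. In the bad case, your shortest $B$-open path has no colliders, since the nodes of $B$ are parentless inputs of $\Af_{\Do(B)}$. So it has the form $a\hut v_1\hut\cdots\hut v_{n-1}\huh s$.
2. If some $v_i\in\anc_\Af(\Sc_\Af)$, your cutting argument with $\anc_\Af(\Sc_\Af)$ in place of $\pa_\Af(\Sc_\Af)$ already yields the contradiction. So assume none is.
3. For $b\in\pa_\Af(s)\cap B$ the walk $b\tuh s\huh v_{n-1}$ is $\Sc$-inducing. Using $v_i\notin\anc_\Af(\Sc_\Af)$, $\cat{MAG}(\Af)\in[\Gf]_\Ms$ contains the directed path $b\tuh v_{n-1}\tuh\cdots\tuh v_1\tuh a$.
4. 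Its first edge is invisible, because $b\in\anc_\Af(\Sc_\Af)$ receives no arrowhead (\cref{lem:anc_selec_I}).
5. \cref{lem:visible_edge_II} then gives $\tilde{\Af}$ containing both $b\tuh v_{n-1}$ and $b\huh v_{n-1}$. Its selection nodes are the $s_{uv}$, so $\tilde{\Af}\in[\Gf]_\Gs^s$.
6. Hence $(\{b,v_{n-1}\},\{v_{n-1}\})$ is a hedge exactly as in the proof of \cref{lem:noniden}. \cref{prop:hedge} shows the selected effect is not identifiable in $\Mb^+_d(\Gf)$, contradicting the hypothesis.

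With this case added, the rest of your argument goes through.
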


\begin{proof}
    From \cite[Theorem~6.1]{abouei24sIDlatent}, it suffices to show $A\sep{\mathsf{id}}{\Af_{\Do(B)}} \Sc \mid B$ where $\Af=\Gb(\Mc,X_{\Sc}=\1)$. Assume, for contradiction, that this is not true. Let 
    \[
    \pi: A\ni v_0\sus \cdots \sus v_n\in \Sc
    \]
    be an irreducible open path given $B$ in $\Af_{\Do(B)}$. Note that $\pi$ cannot contain any colliders. Since $\ch_{\Af}(\Sc)\sm \Sc=\emptyset$, path $\pi$ must be of the form
    \[
            v_0\sus v_1\sus \cdots \sus v_{n-1}\suh v_n.
    \]
    Then by the assumption that $\pa_{\Af}(s)\cap \Vc\ne\emptyset$ for all $s\in \Sc$ we have open path from $A$ to $d\in \Vc$ given $\Sc\cup B$ in $\Af_{\Do(B)}$:
    \[
        v_0\sus v_1\sus \cdots \sus v_{n-1}\suh v_n\hut d.
    \]
    \cref{cor:causal_relation}(4) implies that $d\in D$. Therefore, by
    \cref{thm:sep_hsint_mag,thm:sep_hsint_pag_mag}, we have $A\nsep{\mathsf{id}}{\Gf_{\Do(B)}} D\mid B$. This leads to a contradiction to our assumption and therefore we finish the proof.
    
\end{proof}

\subsection{Causal identification from undirected graphs}\label{sec:iden_undirected_graph}

At first glance, causal analysis under undirected graphs may seem impossible, since undirected edges carry no directionality. However, by interpreting an undirected graph as a special case of a MAG, one can still study causal identification in this setting. We illustrate this with the following example.

\begin{example}[Identification from an undirected graph]\label{ex:identification_undirected_graph}
Consider a MAG $\Mf$ shown in \cref{fig:noniden_mag}. Assume $(\Mc,X_\Sc=\mathrm{1}_{|\Sc|})\in \Mb^+_d(\Mf)$ is the true underlying s-SCM. \cref{lem:noniden} tells us that $\Prb_{\Mc}(X_a\mid X_{\Sc}=\1\miid \Do(X_b))$ is non-identifiable in $\Mb_d^+(\Mf)$. We can also see this by constructing an isADMG $\Af\in [\Mf]_\Gs$ shown in \cref{fig:noniden_mag}. Indeed, 
\[
\Hc\coloneqq \{b,c_2\} \quad  \text{ and } \quad  \Hc^\prime\coloneqq \{c_2\}
\]
form a hedge for $(\{a\}\cup \Sc_\Af,\{b\})$, and the set $D\subseteq \Sc_\Af$ in \cref{prop:hedge} is empty, since \[
a\nsep{\mathsf{id}}{\Af_{\Do(I_D,b)}} I_D\mid \{b\}\cup \Sc_\Af
\]
for all $D\subseteq \Sc_\Af$. One can also check that $\idp(\{a\},\{b\},\Mf)$ outputs $\textsc{Fail}$, witnessed by 
\[
C=\Dc=\{a\}\cup \{c_1,c_2\} \quad  \text{ and } \quad T=\Vc=\{a,b\}\cup \{c_1,c_2\}.
\]
In contrast, since
\[
    a \sep{\mathsf{id}}{\Mf_{\Do(I_b)}} I_b \mid \{b,c_1,c_2\} \quad \text{ and } \quad a \sep{\mathsf{id}}{\Mf_{\Do(I_b)}} I_b \mid \{c_1,c_2\},
\]
we can conclude from the causal calculus (\cref{thm:causal_calculus_mag_pag})
\[
    \begin{aligned}
        \Prb_\Mc(X_a\mid X_{\{c_1,c_2\}}, X_\Sc=\mathrm{1}_{|\Sc|} \miid \Do(X_b))&=\Prb_\Mc(X_a\mid X_b,X_{\{c_1,c_2\}}, X_\Sc=\mathrm{1}_{|\Sc|})\\
        &=\Prb_\Mc(X_a\mid X_{\{c_1,c_2\}},X_\Sc=\mathrm{1}_{|\Sc|}).
    \end{aligned}
\]

\begin{figure}[ht]
\centering
\begin{tikzpicture}[scale=0.8, transform shape]
\begin{scope}[xshift=0]
    \node[ndout] (b) at (0,3) {$b$};
    \node[ndout] (c1) at (-1,1.5) {$c_1$};
    \node[ndout] (c2) at (1,1.5) {$c_2$};
    \node[ndout] (a) at (0,0) {$a$};
    \draw[tut] (b) to (c1);
    \draw[tut] (b) to (c2);
    \draw[tut] (c1) to (a);
    \draw[tut] (c2) to (a);
    \node at (0,-1) {$\Mf$};
\end{scope}

\begin{scope}[xshift=5cm]
    \node[ndout] (b) at (0,3) {$b$};
    \node[ndout] (c1) at (-1,1.5) {$c_1$};
    \node[ndout] (c2) at (1,1.5) {$c_2$};
    \node[ndout] (a) at (0,0) {$a$};
    \node[ndsel] (s1) at (1.25,2.75) {$s_1$};
    \node[ndsel] (s2) at (1.25,0.25) {$s_2$};
    \node[ndsel] (s3) at (-1.25,2.75) {$s_3$};
    \node[ndsel] (s4) at (-1.25,0.25) {$s_4$};
    \draw[arout] (b) to (c2);
    \draw[arout] (b) to (s1);
    \draw[arout] (c2) to (s1);
    \draw[arout] (c2) to (s2);
    \draw[arout] (a) to (s2);
    \draw[arout] (b) to (s3);
    \draw[arout] (c1) to (s3);
    \draw[arout] (c1) to (s4);
    \draw[arout] (a) to (s4);
    \draw[arout] (c2) to (a);
    \draw[arlat, bend right] (b) to (c2);
    \node at (0,-1) {$\Af$};
\end{scope}
\end{tikzpicture}
\caption{A MAG $\Mf$ with only undirected edges and an isADMG $\Af\in [\Mf]_\Gs$ in \cref{ex:identification_undirected_graph}.}
\label{fig:noniden_mag}
\end{figure}
\end{example}

\begin{remark}[Causal interpretation of concentration graphs and $\mathrm{MTP}_2$ distributions]\label{rem:causal_concentration_MTP2}
    Concentration graphs (aka.\ undirected graphs) are not uncommon in the literature on graphical models. A distribution $\Prb(X)$ on $\Xc\subseteq \Rb^d$ is called \emph{multivariate totally positive of order 2 ($\mathrm{MTP}_2$)} if its density $f$ (\wrt some reference measure) satisfies\footnote{For $x=(x_1,\ldots,x_d)$ and $y=(y_1,\ldots,y_d)$, we define $x\wedge y \coloneqq (\min(x_1,y_1),\ldots,\min(x_d,y_d))$ and $x\vee y \coloneqq (\max(x_1,y_1),\ldots,\max(x_d,y_d))$.} 
    \[
    f(x)f(y)\le f(x\wedge y)f(x\vee y)
    \]
    for all $x,y\in \Xc$, which forms a rich class with strong structural and inferential properties; see \cite{bartolucci2000_mtp2_lrt,fallat2017_mtp2_markov_structures, lauritzen2019_mle_gaussian_mtp2,lauritzen2021_mtp2_expfam_binary} and references therein for nice properties and applications of $\mathrm{MTP}_2$ distributions. In particular, if an $\mathrm{MTP}_2$ distribution has a continuous density and coordinatewise connected support, then its conditional independence model is faithfully represented by its concentration graph \cite[p.\ 1167 and Theorem~6.1]{fallat2017_mtp2_markov_structures}. Observe that, in general, a distribution faithful to a concentration graph need not admit an ADMG representation (consider, e.g., the undirected graph shown in \cref{fig:noniden_mag}). Therefore, one might conclude that, in general, graphical causal analysis is impossible for $\mathrm{MTP}_2$ distributions. In contrast, as \cref{ex:identification_undirected_graph} illustrates, if we interpret the undirected graph as a MAG,\footnote{For example, in \cite[Example~4.1 and 4.2]{fallat2017_mtp2_markov_structures}, we see no strong evidence against such an interpretation and against the presence of latent selection bias.} then certain causal conclusions can be derived by sound and complete rules (\cref{thm:causal_calculus_mag_pag,thm:causal_calculus_atomic_complete}). Further empirical validation of this interpretation needs to be explored.
\end{remark}

\subsection{sCIDP: ID algorithm for conditional causal effects from PAGs under selection bias}\label{sec:scidp}

We present the sCIDP algorithm in this subsection.

\begin{algorithm}
\caption{$\operatorname{sCIDP}(\Pf; A, B,C)$}
\label{alg:sCIDP}
\begin{algorithmic}[1]
\State \textbf{Input:} iSOPAG $\Pf$ and three disjoint sets $A, B,C \subset \Vc$ with $A\ne \emptyset$
\State \textbf{Output:} a proxy Markov kernel $\widehat{\Kr}$ for $\Prb_\Mc(X_A\mid X_C, X_S=\mathbf{1}_{|\Sc|} \miid \Do(X_B),X_{\Ic})$, or $\textsc{Fail}$

\State Let $D \coloneqq \pant_{\Pf_{\Vc \setminus B}}(A\cup C)$
\State Let $\Bb_1,\ldots \Bb_n$ be buckets in $\Pf$

\State $\breve{B} \gets B$; $\breve{C}\gets C$

\While{$\exists \Bb_i$ such that $\Bb_i\cap D\neq \emptyset$, $\Bb_i\nsubseteq D$, and $\Bb_i\cap \breve B\neq \emptyset$}
    \State $\tilde{B} \gets \Bb_i\cap \breve{B}$
    \If{$A\sep{\mathsf{id}}{\Pf_{\Do(I_{\tilde{B}},\,\breve{B}\setminus \tilde{B})}} I_{\tilde{B}} \mid \breve{B}\cup \breve{C}$}
        \State $\breve{B} \gets \breve{B}\setminus \tilde{B}$; $\breve{C}\gets \breve{C}\cup \tilde{B}$
        \State $D\gets \pant_{\Pf_{\Vc \setminus \breve{B}}}(A\cup \breve{C})$
    \Else
        \State \textbf{return} $\textsc{Fail}$
    \EndIf
\EndWhile

\While{$\exists\, i \text{ such that } C_i\coloneqq \breve C\cap \Bb_i \neq \emptyset
\text{ and } A\sep{\mathsf{id}}{\Pf_{\Do(I_{C_i},\,\breve B)}} I_{C_i} \mid \breve B\cup \breve C$}
    \State $\breve{B} \gets \breve{B}\cup C_i$; $\breve{C}\gets \breve{C}\setminus C_i$
\EndWhile
            \State $\hat{\Kr}\gets \textbf{sIDP}(\Pf,A\cup \breve{C},\breve{B})^{|\breve{C}}$
          \State \textbf{return} $\hat{\Kr}$ 
\end{algorithmic}
\end{algorithm}

\section{Auxiliary results}\label{sec:pf_auxiliary}

We present and prove auxiliary results used in the proofs of the results in \cref{sec:causal_mag_pag,sec:IDalg}.

\begin{lemma}[Property of graph representation]\label{lem:inducing_path}
  Let $\PAG$ be an iPAG that represents isADMG $\Af=(\Ic,\Oc,\Sc,\tilde{\Ec})$. Then for any two nodes $a,b$ in $\Pf$:
  \begin{enumerate}
    \item  $a\in \anc_{\Pf}(b)$ implies that $a\in \anc_{\Af}(\{b\}\cup \Sc)$;

    \item  if $a\suh b$ is in $\Pf$, then there exists an $\Sc$-inducing path from $a$ to $b$ in $\Af$ that is into $b$;

    \item if $a\huh b$ is in $\Pf$, then there exists an $\Sc$-inducing path from $a$ to $b$ in $\Af$ that goes both into $a$ and into $b$.
  \end{enumerate}
\end{lemma}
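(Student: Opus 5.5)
The plan is to derive all three items from \cref{def:graph_rep}, applied edge by edge, together with the standard concatenation property of $\Sc$-inducing paths in an isADMG.

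For item (1), take a directed path $a=u_0\tuh u_1\tuh\cdots\tuh u_n=b$ in $\Pf$. Clause~4 of \cref{def:graph_rep} gives $u_i\in\anc_\Af(\{u_{i+1}\}\cup\Sc)$ for each $i$. We then induct along the path: if $u_i\in\anc_\Af(\Sc)$ we are done immediately. Otherwise $u_i\in\anc_\Af(u_{i+1})$, and transitivity of ancestry in $\Af$ gives $u_i\in\anc_\Af(\{b\}\cup\Sc)$ once we know $u_{i+1}\in\anc_\Af(\{b\}\cup\Sc)$. Inducting backward from $u_n=b$ therefore yields $a\in\anc_\Af(\{b\}\cup\Sc)$.

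For items (2) and (3), first use Clause~2: since $a,b$ are adjacent in $\Pf$, there is an $\Sc$-inducing path $\pi$ from $a$ to $b$ in $\Af$. Among all such paths, choose one with the minimal number of edges, or more generally one that is into $b$ whenever possible. Suppose $\pi$ is out of $b$, i.e.\ its last edge is $v\hut b$, so $b\tuh v$ in $\Af$. Then $v$ is either the endpoint $a$ or a collider, since every non-endnode of an $\Sc$-inducing path is a collider.
\begin{enumerate}
\item If $v=a$, then $b\in\anc_\Af(a)$, which contradicts Clause~3: the arrowhead $a\suh b$ forbids $b\in\anc_\Af(\{a\}\cup\Sc)$.
\item If $v$ is a collider on $\pi$, then $v\in\anc_\Af(\{a,b\}\cup\Sc)$. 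Hence $b\in\anc_\Af(\{a,b\}\cup\Sc)$ via the edge $b\tuh v$. We must rule out that this only gives the trivial $b\in\anc_\Af(b)$. Because $\Af$ is acyclic and $b\tuh v$, we have $v\notin\anc_\Af(b)$, so $v\in\anc_\Af(\{a\}\cup\Sc)$. Then $b\in\anc_\Af(\{a\}\cup\Sc)$, again contradicting Clause~3.
\end{enumerate}
So $\pi$ is into $b$, which proves (2). For (3), apply the same argument at both ends simultaneously to a single path $\pi$, using the arrowheads at both $a$ and $b$. The argument at the $a$-end is symmetric and never alters the $b$-end of $\pi$, so one fixed inducing path is into both endpoints.

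The main subtlety I expect is input nodes. Clause~3 is stated for $a\hus b$ with the arrowhead at $a$, so for (2) I use it in the form ``$b\hus a$'', reading the edge from the $b$ side. Since input nodes have no incoming edges in $\Af$, an endpoint $b\in\Ic$ can never receive an arrowhead in $\Pf$, so (2) and (3) are vacuous or consistent there. Item (1) handles input nodes trivially because edges out of an input node are directed.
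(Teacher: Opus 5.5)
Your proof is correct. The paper gives no argument of its own here; it simply defers to \cite[Lemma~12.3.6]{forre2025mathematical}. Your write-up is therefore a self-contained replacement, and it is the standard argument:
\begin{itemize}
\item Item~(1) follows from clause~4 of \cref{def:graph_rep} together with transitivity of ancestry in $\Af$.
\item Items~(2) and~(3) follow from clause~2, which supplies an $\Sc$-inducing path, combined with acyclicity of $\Af$ and clause~3: an arrowhead at an endpoint forbids it from being an ancestor of the other endpoint or of $\Sc$.
\end{itemize}

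Your contradiction argument actually proves more than you use. It shows that \emph{every} $\Sc$-inducing path between $a$ and $b$ is into each endpoint that carries an arrowhead in $\Pf$. So the choice of a minimal path is unnecessary, and this is precisely why a single path works for both ends in~(3).

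One side remark is inaccurate, though harmless: edges out of an input node in an iPAG need not be directed. For example, $I_a\tut a$ occurs in \cref{lem:sint_mag_I}. Item~(1) does not need that claim, since clause~4 applies to any tail mark. Moreover, an input node can only occur as the first node of a directed path.
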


\begin{proof}[Proof of \cref{lem:inducing_path}]
    See \cite[Lemma~12.3.6]{forre2025mathematical}.
\end{proof}

The following result shows that, for our purposes, inducing paths and inducing walks are equivalent. Some arguments are more naturally phrased in terms of paths, whereas others are cleaner in terms of walks; we will therefore use the two notions interchangeably.

\begin{lemma}
    Let $\isADMG$ be an isADMG and $a\in \Oc$ and $b\in \Ic\cup \Oc$ be distinct nodes. Then the following are equivalent:
\begin{enumerate}
    \item there is an $\Sc$-inducing path from $a$ to $b$ in $\Af$;

    \item there is an $\Sc$-inducing walk from $a$ to $b$ in $\Af$;

    \item $a\nsep{\mathsf{id}}{\Af} b\mid \Sc\cup Z$ for all $Z\subseteq (\Ic\cup \Oc)\sm \{a,b\}$;

    \item $a\nsep{\mathsf{id}}{\Af} b\mid \Sc\cup Z$ for $Z=\left(\Ic\cup \anc_{\Af}(\{a,b\}\cup \Sc)\right)\sm \{a,b\}$;
\end{enumerate}
\end{lemma}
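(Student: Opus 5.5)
We prove the cycle of implications (1)$\Rightarrow$(2)$\Rightarrow$(4)$\Rightarrow$(3)$\Rightarrow$(1). The easy links are (1)$\Rightarrow$(2) and (3)$\Rightarrow$(4). Every path is a walk, so the first is immediate. The second is a specialization: (3) quantifies over all $Z$, and the set $Z=(\Ic\cup\anc_\Af(\{a,b\}\cup\Sc))\sm\{a,b\}$ of (4) is one admissible choice. This choice is admissible because $\Sc$ is disjoint from $\Ic\cup\Oc$, so $Z\subseteq(\Ic\cup\Oc)\sm\{a,b\}$ after discarding selection nodes already contained in $\Sc$. The substantive links are therefore (2)$\Rightarrow$(3) and (4)$\Rightarrow$(1).

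For (2)$\Rightarrow$(3), fix an $\Sc$-inducing walk $\omega$ from $a$ to $b$ and an arbitrary $Z\subseteq(\Ic\cup\Oc)\sm\{a,b\}$. The non-end nodes of $\omega$ are colliders lying in $\anc_\Af(\{a,b\}\cup\Sc)$.
\begin{enumerate}
\item If every collider of $\omega$ lies in $\anc_\Af(Z\cup\Sc)$, then $\omega$ is $id$-open given $Z\cup\Sc$. The endpoints are not conditioned on, since $a,b\notin Z\cup\Sc$.
\item Otherwise, some collider $v$ is an ancestor of $a$ or of $b$ through a directed path $\rho$ that avoids $Z\cup\Sc$. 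Take the collider $v$ closest to $a$ along $\omega$ that is an ancestor of $b$ (not through $Z\cup\Sc$), or symmetrically closest to $b$ that is an ancestor of $a$. Replace the tail of $\omega$ by $\rho$ reversed. The splice point becomes a non-collider not in the conditioning set, and every node of $\rho$ is a non-collider outside $Z\cup\Sc$.
\item Repeat this splicing, inducting on the number of "bad" colliders, until an $id$-open walk from $a$ to $b$ given $Z\cup\Sc$ is obtained. This is the classical Richardson--Spirtes argument adapted to walks.
\end{enumerate}
Walks with repeated nodes are harmless here, since the separation in \cref{def:sep_PAG} is phrased for paths/walks. Input nodes need care: the splice can never pass through an input node, because input nodes have no incoming arrowheads and so cannot lie in the interior of a directed path. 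Since the right-hand side of $id$-separation implicitly includes $\Ic$, a walk from $a$ to $b\in\Ic$ also counts as a connection.

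For (4)$\Rightarrow$(1), take an $id$-open path $\pi$ from $a$ to $b\cup\Ic$ given $\Sc\cup Z$, with $Z$ as in (4). If $\pi$ ends at some input node $i\neq b$, one must still produce an inducing path to $b$; this case also needs handling, and we argue by connecting through $i$ and using that $i$ is not a collider. In the main case $\pi$ ends at $b$. Every collider of $\pi$ lies in $\anc_\Af(Z\cup\Sc)\subseteq\anc_\Af(\{a,b\}\cup\Sc)$, since $Z$ is itself ancestrally closed up to inputs. Every non-collider must lie outside $Z\cup\Sc$. A non-endpoint non-collider $w$ on $\pi$ has an outgoing edge along $\pi$, so a directed subpath leads from $w$ either to an endpoint or to a collider, and hence $w\in\anc_\Af(\{a,b\}\cup\Sc)$. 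Thus $w\in Z$ or $w\in\Sc$, which contradicts openness. So $\pi$ has no non-colliders and is an $\Sc$-inducing path.

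The main obstacle is the splicing step in (2)$\Rightarrow$(3), together with the bookkeeping of input endpoints. There the $id$-open definition treats reaching $\Ic$ as success, and the inducing-path definition's ancestor sets must be checked in the same graph $\Af$. For the general case we would fall back on the existing result \cite[Lemma~12.3.x]{forre2025mathematical}.
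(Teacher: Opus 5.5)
Your route is the classical Verma--Pearl/Richardson--Spirtes argument, which is what the paper relies on: its proof just says the argument is similar to \cite[Theorem~4.2]{richardson2002ancestral}. Your two substantive links are sound. In (2)$\Rightarrow$(3) you splice at colliders outside $\anc_\Af(Z\cup\Sc)$. In (4)$\Rightarrow$(1) you show that any interior non-collider of an open path would lie in $\anc_\Af(\{a,b\}\cup\Sc)\sm\{a,b\}\subseteq Z\cup\Sc$. A few things still need fixing.

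First, the cycle you announce, (1)$\Rightarrow$(2)$\Rightarrow$(4)$\Rightarrow$(3)$\Rightarrow$(1), is not the one you prove. What you actually establish is (1)$\Rightarrow$(2)$\Rightarrow$(3)$\Rightarrow$(4)$\Rightarrow$(1).

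Second, in (4)$\Rightarrow$(1) the case of an open path ending at an input node $i\neq b$ cannot occur, so it needs no handling. Your ``connecting through $i$'' step is also not an argument, since a connection to $i$ says nothing about $b$. The correct observation is that $i\in\Ic\sm\{b\}\subseteq Z$. Such a path therefore has an endpoint in the conditioning set and is blocked by clause~(i) of \cref{def:sep_PAG}, so every open path in (4) ends at $b$.

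Third, the splicing in (2)$\Rightarrow$(3) should be made precise with an invariant:
\begin{itemize}
\item Maintain a walk from $a$ to $b$ whose non-colliders lie outside $Z\cup\Sc$ and whose colliders lie in $\anc_\Af(\{a,b\}\cup\Sc)$.
\item Any collider occurrence $v\notin\anc_\Af(Z\cup\Sc)$ lies in $\anc_\Af(a)\cup\anc_\Af(b)$, and every directed path out of $v$ avoids $Z\cup\Sc$, including $v$ itself.
\item Replace the part of the walk before $v$ by such a path into $a$ (reversed), or the part after $v$ by such a path into $b$.
\item This preserves the invariant: $v$ becomes a non-collider outside the conditioning set, or an endpoint, and the directed path adds no colliders.
\item It also strictly decreases the number of collider occurrences outside $\anc_\Af(Z\cup\Sc)$, so the procedure terminates in an open walk.
\end{itemize}

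With these repairs your proof is complete. No fallback to an external lemma of \cite{forre2025mathematical} is needed.
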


\begin{proof}
    The proof is similar to that of \cite[Theorem~4.2]{richardson2002ancestral}.
\end{proof}

The following two lemmas show that the visibility of a directed edge $a\tuh b$ in an iMAG exactly characterizes whether some isADMG represented by that iMAG can contain the bidirected edge $a\huh b$. This generalizes the result in \cite{zhang2008causal} to the case where we have exogenous input nodes and latent selection nodes.

\begin{lemma}[Visible edge I]\label{lem:visible_edge_I}
  Let $\isADMG$ be an isADMG and $\Mf=(\Ic,\Vc,\tilde{\Ec})$ be an iMAG that represents $\Af$. Let $a\in \Ic\cup \Vc$ and $b\in \Vc$ be such that $a\tuh b$ is in $\Mf$. If $a\tuh b$ is visible, then there exists \emph{no} $\Sc$-inducing walk from $a$ to $b$ in $\Af$ that is
  into $a$. In particular, the bidirected edge $a\huh b$ is not in $\Af$.
\end{lemma}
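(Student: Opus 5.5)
The plan is to argue by contradiction in both cases of \cref{def:visible_edge}, using only the representation properties in \cref{def:graph_rep} and \cref{lem:inducing_path}. Suppose $\pi$ is an $\Sc$-inducing walk from $a$ to $b$ in $\Af$ that is into $a$. The final claim is then immediate, since a bidirected edge $a\huh b$ is itself such a walk: it has no non-endnodes and is into $a$.

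\textbf{Case $a\in\Ic$.} Input nodes carry no arrowheads in an isADMG, so no walk can be into $a$. This case is trivial.

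\textbf{Case $a\in\Vc$.} Take the witness $c$ from \cref{def:visible_edge}: $c$ is not adjacent to $b$, and there is a collider path $c\suh v_1\huh\cdots\huh v_{n-1}\huh a$ in $\Mf$ with every $v_i\in\pa_\Mf(b)$ (when $n=1$ this is just $c\suh a$). The steps are as follows.
\begin{enumerate}
\item Apply \cref{lem:inducing_path}(2),(3) to each edge of this collider path. This gives $\Sc$-inducing walks in $\Af$: one from $c$ to $v_1$ into $v_1$; one from $v_{i}$ to $v_{i+1}$ into both ends; and one into $a$ at the last step.
\item Concatenate these walks with $\pi$ to obtain a walk $\omega$ from $c$ to $b$. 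At each junction $v_i$, and at $a$, both incident segments are into the node, so the junction is a collider.
\item Show that every collider on $\omega$ lies in $\anc_\Af(\{c,b\}\cup\Sc)$.
\begin{itemize}
\item Colliders inside the segment from $v_i$ to $v_{i+1}$ lie in $\anc_\Af(\{v_i,v_{i+1}\}\cup\Sc)$. The junctions satisfy $v_i\in\pa_\Mf(b)$, so $v_i\in\anc_\Af(\{b\}\cup\Sc)$ by \cref{def:graph_rep}(4). Hence all these colliders lie in $\anc_\Af(\{b\}\cup\Sc)$.
\item For $a$: the edge $a\tuh b$ in $\Mf$ gives $a\in\anc_\Af(\{b\}\cup\Sc)$.
\item Colliders on $\pi$ lie in $\anc_\Af(\{a,b\}\cup\Sc)\subseteq\anc_\Af(\{b\}\cup\Sc)$.
\end{itemize}
\item Conclude that $\omega$ is an $\Sc$-inducing walk from $c$ to $b$. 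Here $c$ may be an input node, but $b\in\Vc$, so $\{c,b\}\nsubseteq\Ic$. The lemma stated just before (inducing walks yield inducing paths) together with \cref{def:graph_rep}(2) then makes $c$ and $b$ adjacent in $\Mf$. This contradicts the choice of $c$.
\end{enumerate}

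\textbf{Main obstacle.} The delicate point is step 2: the pieces must actually concatenate into a \emph{walk} satisfying the inducing conditions. The walk may revisit nodes, and a node revisited as a non-collider would violate the requirement that every non-endnode be a collider. To control this, I would first invoke \cref{lem:inducing_path} in its path form and then work with walks throughout. In an isADMG ($\Lc=\emptyset$), every non-endnode of each piece is a collider, and each junction is a collider. So every non-endnode occurrence on $\omega$ is a collider occurrence, and repetitions are harmless at the level of walks.

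A second, smaller point concerns the endpoint $c$, which may coincide with a node on $\pi$. If $c=b$ this is impossible, since $c$ is not adjacent to $b$ in $\Mf$ while the witness definition forces $c\ne b$. Any other coincidence only shortens the walk, and truncating at the first occurrence still leaves an inducing walk. The walk-to-path equivalence lemma then finishes the argument.
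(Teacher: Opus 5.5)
Your proposal is correct and follows essentially the same argument as the paper. The case $a\in\Ic$ is trivial, and for $a\in\Vc$ you concatenate the $\Sc$-inducing walks supplied by \cref{lem:inducing_path} along the witnessing collider path with the assumed walk $\pi$, obtaining an $\Sc$-inducing walk from $c$ to $b$; this forces $c$ and $b$ to be adjacent in $\Mf$, a contradiction. One small imprecision: colliders on the first segment from $c$ to $v_1$ lie in $\anc_\Af(\{c,v_1\}\cup\Sc)\subseteq\anc_\Af(\{c,b\}\cup\Sc)$, not necessarily in $\anc_\Af(\{b\}\cup\Sc)$, but the weaker containment is all your step~3 needs.
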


\begin{proof}[Proof of \cref{lem:visible_edge_I}]
  If $a\in \Ic$, then there are no edges into $a$ in $\Af$, and therefore there exists no $\Sc$-inducing walk from $a$ to $b$ in $\Af$ that is into $a$. Now assume $a\in \Vc$. We argue by contradiction. Assume the contrary that there exists an $\Sc$-inducing walk $\pi$ from $a$ to $b$ that is into $a$. Let $c\in \Vc\sm \{a,b\}$ be a node that witnesses the visibility of $a\tuh b$. Then we have two cases:
  \begin{itemize}
    \item[(i)] If the edge $c\suh a$ is in $\Mf$, then there is an $\Sc$-inducing walk from $c$ to $a$ that is into $a$ in $\Af$ by Lemma~\ref{lem:inducing_path}. Concatenating this with the $\Sc$-inducing walk from $a$ to $b$ that is into $a$ (exists by assumption) gives an $\Sc$-inducing walk from $c$ to $b$ in $\Af$ as $a$ is a collider in this walk and $a\in \anc_{\Af}(\{b\}\cup \Sc)$, because $a\tuh b$ is in $\Mf$ and all the other colliders are in $\anc_{\Af}(\{a,b\}\cup \Sc)$.

    \item[(ii)] Assume that there is a (definite collider) path $\pf: c\suh v_1\huh \cdots \huh v_{n-1} \huh a$ in $\Mf$ for some $n\geq 2$ and $v_i\in \pa_{\Mf}(b)$ for all $i\in \{1,\ldots,n-1\}$. For each pair of subsequent nodes $(v_i,v_{i+1})$ on $\pf$ with $i\in\{1,\ldots,n-1\}$, defining $v_{n}\coloneqq a$, there exists an $\Sc$-inducing walk $\pi_i$ from $v_i$ to $v_{i+1}$ that is into $v_{i+1}$ and into $v_i$ by Lemma~\ref{lem:inducing_path}. \cref{lem:inducing_path} also implies that there is an inducing walk $\pi_0$ from $c$ to $v_1$ that is into $v_1$. Since $v_i\in \anc_{\Af}(\{b\}\cup \Sc)$ for $i\in \{0,\ldots,n-1\}$ (since $v_i\in \pa_{\Mf}(b)$) and the walk $\pi$ is into $a$, we can concatenate these walks $\pi_i$ with $\pi$ to get an $\Sc$-inducing walk from $c$ to $b$.
  \end{itemize}

  Hence, in both cases above, $c$ and $b$ are adjacent in $\Mf$. Since this is true for all such $c$, we can conclude that the directed edge $a\tuh b$ in $\Mf$ must be invisible. This yields a contradiction, so there are no $\Sc$-inducing walks from $a$ to $b$ that are into $a$.
\end{proof}

\begin{lemma}[Visible edge II]\label{lem:visible_edge_II}
  Let $\Af=(\Ic,\Oc,\Sc,\Ec)$ be an isADMG and $\Mf$ the iMAG that represents $\Af$. Let $a\tuh b$ be a directed edge in $\Mf$ with $a,b\in \Oc$. If $a\tuh b$ is invisible, then there exists an isADMG $\tilde{\Af}=(\Ic,\Oc,\tilde{\Sc},\tilde{\Ec})$ that is represented by $\Mf$ such that $a\huh b$ is in $\tilde{\Af}$.
\end{lemma}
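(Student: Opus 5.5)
The plan is to take $\tilde{\Af}$ to be $\Af$ with one bidirected edge $a\huh b$ added, and then check the four clauses of \cref{def:graph_rep} for $\Mf$ and $\tilde{\Af}$. We keep $\tilde{\Sc}\coloneqq\Sc$ and set $\tilde{\Ec}\coloneqq\Ec\cup\{a\huh b\}$. If $a\huh b$ is already in $\Af$, there is nothing to prove. Adding a bidirected edge creates no directed cycle, so $\tilde{\Af}$ is again an isADMG. Its ancestral relations, and in particular $\anc(\{v\}\cup\Sc)$ for every $v$, coincide with those of $\Af$. Hence clauses (1), (3) and (4) for $\tilde{\Af}$ follow immediately from the same clauses for $\Af$.

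Everything therefore reduces to clause (2): the new edge must not create an $\Sc$-inducing path between two nodes $c,d$ that are non-adjacent in $\Mf$. Adjacencies can only grow, and $a,b$ are already adjacent, so it suffices to rule out new ones. Suppose $\pi$ is an $\Sc$-inducing walk from $c$ to $d$ in $\tilde{\Af}$ that uses $a\huh b$. Every non-endpoint of $\pi$ is a collider lying in $\anc_{\Af}(\{c,d\}\cup\Sc)$. Each segment of $\pi$ between consecutive occurrences of the new edge is an $\Sc$-collider walk in $\Af$. Each occurrence of $a\huh b$ on $\pi$ can be replaced as follows:

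\begin{itemize}
\item[(i)] If $a$ is an interior collider on $\pi$, then $\pi$ enters $a$ from some $u$ with an arrowhead at $a$.
\item[(ii)] Since $a\tuh b$ is an edge of $\Mf$, we have $a\in\anc_{\Af}(\{b\}\cup\Sc)$.
\item[(iii)] We then replace $u\suh a\huh b$ by reasoning in $\Mf$. Each segment of $\pi$ lying in $\Af$ whose endpoints are $\Sc$-inducing connected yields adjacencies in $\Mf$ with arrowheads wherever the $\Af$-walk has them, via \cref{lem:inducing_path} and the ancestral marks.
\end{itemize}

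This reasoning produces a collider path in $\Mf$ from $c$ to $b$ (or to $d$) that ends in $\suh a\tuh b$, or is a path $c\suh v_1\huh\cdots\huh a$. Every intermediate collider $v_i$ on it is an ancestor of $b$ or of $\Sc$. The goal is to show that these $v_i$ are parents of $b$ in $\Mf$, with $c$ non-adjacent to $b$. That would make $a\tuh b$ visible by \cref{def:visible_edge}, contradicting the hypothesis.

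The main obstacle is this last step. We must show that any new inducing path through $a\huh b$ forces a witness of visibility in $\Mf$, i.e.\ the converse of the argument in \cref{lem:visible_edge_I}. If the direct construction fails, for instance because some $v_i$ is only an ancestor of $\Sc$ and not a parent of $b$, I would switch strategies and follow \cite[Lemma~9]{zhang2008causal}.

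\begin{itemize}
\item[(a)] Start from the canonical $\cat{isADMG}(\Mf)$ instead of an arbitrary $\Af$. This is legitimate since only existence of some $\tilde{\Af}$ is required.
\item[(b)] Add $a\huh b$ there. In the canonical graph, collider paths correspond exactly to collider paths of $\Mf$, and ancestors of selection nodes are exactly nodes with a tail on an undirected edge.
\item[(c)] Use maximality and ancestrality of $\Mf$, namely no inducing paths between non-adjacent nodes and no arrowheads into nodes incident to $\tut$ edges. Together these show that any minimal new inducing path yields either a discriminating-type collider path into $a$ through parents of $b$, contradicting invisibility, or an inducing path already present in $\Mf$.
\end{itemize}

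Exogenous input nodes need separate care. They cannot be colliders and cannot be endpoints of $\huh$, but they can serve as the witness $c$. Since $a,b\in\Oc$, the new edge is admissible, and the case $c\in\Ic$ is handled by the same collider-path argument.
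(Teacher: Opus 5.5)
\paragraph{The primary construction fails.} Adding $a\huh b$ to the given $\Af$ does not work in general, and the faulty step is (iii). Marks in $\Mf$ are not inherited from edge marks of $\Af$: the mark at $a$ on an $\Mf$-edge $u\sus a$ is a tail as soon as $a\in\anc_{\Af}(\Sc)$, even if $u\tuh a$ in $\Af$. In that situation every mark at $a$ in $\Mf$ is a tail. So $a\tuh b$ is invisible for trivial reasons, and there is no visibility witness to contradict. Meanwhile $u\tuh a\huh b$ becomes a new $\Sc$-inducing path, because the collider $a$ is an ancestor of $\Sc$.

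Concretely, let $\Af$ have $\Oc=\{a,b,c\}$, $\Sc=\{s\}$ and edges $c\tuh a$, $a\tuh b$, $a\tuh s$. Then $\cat{MAG}(\Af)$ is $c\tut a\tuh b$, with $c,b$ non-adjacent and $a\tuh b$ invisible. After adding $a\huh b$, the path $c\tuh a\huh b$ is $\Sc$-inducing, so $c$ and $b$ become adjacent and $\Mf$ no longer represents the new graph. So the lemma cannot be proved starting from an arbitrary $\Af$; the choice of $\Af$ matters. Note also that the failure occurs at $a$ itself, not at some intermediate $v_i$ as you anticipated.

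\paragraph{The fallback is the right proof but is only sketched.} Adding $a\huh b$ to $\cat{isADMG}(\Mf)$ is exactly the paper's construction, and it should be the proof rather than a contingency. Your observation (b) is what makes it work: in the canonical graph, the ancestors of $\tilde{\Sc}$ are exactly the nodes incident to a $\tut$ edge of $\Mf$. These nodes carry no arrowheads, so no collider can qualify through $\tilde{\Sc}$. Clauses (3) and (4) are unaffected, since adding a bidirected edge does not change ancestry.

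However, (c) is only a slogan, and the following argument still has to be supplied.
\begin{enumerate}
\item Take a shortest $\tilde{\Sc}$-inducing walk $\pi$ between $\Mf$-non-adjacent nodes $x,y$.
\item $\pi$ contains no $s_{uv}$: otherwise its neighbours on $\pi$ are non-colliders, hence endpoints, so $\pi=x\tuh s_{xy}\hut y$ and $x\tut y$ in $\Mf$.
\item If $\pi$ avoids $a\huh b$, it is an inducing walk of $\Mf$, contradicting maximality.
\item Otherwise, reversing if necessary, write $\pi$ as $x\sus\cdots\sus a\huh b\sus\cdots\sus y$. If $x=a$, replace the first edge by $a\tuh b$; this again gives an inducing walk of $\Mf$.
\item If $x\neq a$, the segment $x=v_0,\ldots,v_{n-1},a$ is a collider walk of $\Mf$ into $a$. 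By backward induction, each $v_k$ is adjacent to $b$: otherwise $v_k$, together with $v_{k+1},\ldots,v_{n-1}\in\pa_{\Mf}(b)$, witnesses visibility of $a\tuh b$.
\item The edge between $v_k$ and $b$ must be $v_k\tuh b$. The alternatives fail: $v_k\hut b$ closes an (almost) directed cycle, $v_k\huh b$ gives a shorter inducing walk, and $v_k\tut b$ lets an arrowhead meet an undirected edge at $b$.
\item Finally, $x\tuh b$ yields the strictly shorter inducing walk $x\tuh b\sus\cdots\sus y$.
\end{enumerate}
This last contradiction is with the minimality of $\pi$, not with invisibility as your (c) suggests. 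Invisibility is used only to force the adjacencies $v_k\sus b$.
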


\begin{proof}[Proof of \cref{lem:visible_edge_II}]
  We define 
  $\tilde{\Af}=(\Ic,\Oc,\tilde{\Sc},\tilde{\Ec})$ as follows: for each undirected edge $u\tut v$ in $\Mf$, introduce a selection node $s_{uv}$ (with $s_{uv}\sim s_{vu}$) and edges $u\tuh s_{uv} \hut v$; delete all undirected edges and keep all other edges in $\Mf$; and in addition add the bidirected $a\huh b$. It is easy to see that $\tilde{\Af}$ is an isADMG. We shall next show that $\cat{MAG}(\tilde{\Af})=\Mf$.

  First, let $x$ and $y$ be non-adjacent in $\Mf$. The goal is to show that there is no $\tilde{\Sc}$-inducing walk from $x$ to $y$ in $\tilde{\Af}$. Assume the contrary that there is one such walk $\pi$ with minimal length. We know that it must be a collider walk where all colliders are in $\anc_{\tilde{\Af}}(\{x,y\}\cup \tilde{\Sc})$. This walk cannot contain any nodes of $\tilde{\Sc}$. If it does, then it must be of the form 
  \[
  x\sus\cdots\sus u\tuh s_{uv} \hut v\sus \cdots \sus y
  \] 
  with $u\tut v$ in $\Mf$. Since $\pi$ is a collider walk, we must have that $u$ and $v$ are endnodes of the walk. Therefore, $\pi$ must be $x\tuh s_{xy} \hut y$ with $x\tut y$ in $\Mf$. It contradicts the fact that $x$ and $y$ are non-adjacent in $\Mf$. Hence, $\pi$ can only consist of nodes in $\Oc$ but not nodes in $\tilde{\Sc}$.

  We assume that the walk $\pi$ does not contain $a\huh b$, which implies that $\pi$ is also present in $\Mf$ since we just showed that $\pi$ does not contian nodes in $\tilde{\Sc}$. Let $z$ denote a collider on the walk $\pi$. We know that $z\in \anc_{\tilde{\Af}}(\{x,y\}\cup \tilde{\Sc})$. Assume $z\in \anc_{\tilde{\Af}}(\tilde{\Sc})$. Every directed path from $z$ to $\tilde{\Sc}$ must be of the form $z\tuh \cdots\tuh u\tuh s_{uv}$ in $\tilde{\Af}$ for some $u\tut v$ in $\Mf$. By the construction of $\tilde{\Af}$, we know that the path $z\tuh \cdots\tuh u\tut v$ is present in $\Mf$, which is impossible. So we have that $z\in \anc_{\tilde{\Af}}(\{x,y\})$, which implies that $z\in \anc_{\Af}(\{x,y\}\cup \Sc)$. Overall, there exists an $\Sc$-inducing walk from $x$ to $y$ in $\Af$, which contradicts the fact that $\Mf$ is a MAG representing $\Af$ and $x$ is not adjacent to $y$ in $\Mf$.

  Now assume that $a\huh b$ is on the walk $\pi$, i.e., $\pi$ is of the form 
  \[
  x\sus \cdots\sus a\huh b\sus \cdots\sus y.
  \] 
  If $x=a$, then the walk $x=a\tuh b\sus\cdots\sus y$ is present in $\Mf$. Therefore, we can find an $\Sc$-inducing walk in $\Af$ from $x$ to $y$ similarly to what we did above. This leads to a contradiction. So, we can assume that $x\ne a$ in the following. Note that the subwalk $x=v_0\sus v_1\sus \cdots \sus v_{n-1}\sus a$ of $\pi$ from $x$ to $a$ is a collider walk. We shall show by induction that $v_i\in \pa_{\Mf}(b)$ for $0\leq i\leq n-1$.

  The node $v_{n-1}$ must be adjacent to $b$ in $\Mf$, otherwise $a\tuh b$ would be visible in $\Mf$. The edge between $v_{n-1}$ and $b$ cannot be $v_{n-1}\hut b$, otherwise we have $a\tuh b\tuh v_{n-1}\suh a$ in $\Mf$, which contradicts the fact that $\Mf$ is a MAG. Also, it cannot be $v_{n-1}\huh b$, otherwise we can find an $\tilde{\Sc}$-inducing walk 
  \[
  x\sus v_1\sus \cdots \sus v_{n-1} \huh b\sus \cdots\sus y
  \] 
  in $\tilde{\Af}$, which is shorter than $\pi$ and therefore contradicts the assumption of $\pi$ being shortest. Besides, the edge between $v_{n-1}$ and $b$ cannot be $v_{n-1}\tut b$. Otherwise, there would be the configuration $a\tuh b\tut v_{n-1}$ in $\Mf$, which never occurs in a MAG. Hence, we can conclude that we must have $v_{n-1}\tuh b$.

  We now assume that for all $k< i\leq n-1$ with some $0\leq k<n-1$, we have that $v_i\tuh b$ is in $\Mf$. We next prove that $v_k\tuh b$ is in $\Mf$. First note that $v_k$ must be adjacent to $b$ in $\Mf$, otherwise $a\tuh b$ would be visible in $\Mf$. As in the previous part, we can see that the edge between $v_k$ and $b$ cannot be $v_k\hut b$, $v_k\huh b$ or $v_k\tut b$ for otherwise we would have almost cycle $b\tuh v_k\huh v_{k+1}\tuh b$, shorter $\tilde{\Sc}$-inducing walk $x\sus\cdots\sus v_k\huh b\sus \cdots \sus y$ in $\tilde{\Af}$ or impossible configuration $v_k\tut b\hut a$ in a MAG, respectively. Therefore, we have $v_k\tuh b$.

  By induction, $v_i\in \pa_{\Mf}(b)$ for $0\leq i\leq n-1$. Since we have $x\tuh b$, then we have a shorter $\tilde{\Sc}$-inducing walk $x\tuh b\sus \cdots \sus y$ in $\tilde{\Af}$, which contradicts the choice of $\pi$. Hence, there cannot be an $\tilde{\Sc}$-inducing walk from $x$ to $y$ in $\tilde{\Af}$ and therefore $x$ and $y$ cannot be adjacent in $\cat{MAG}(\tilde{\Af})$.

  Overall, if $x$ and $y$ are not adjacent in $\Mf$, then $x$ and $y$ are not adjacent in $\cat{MAG}(\tilde{\Af})$.

  We now consider the case that $x$ and $y$ are adjacent in $\Mf$. Assume that $x\tuh y$ is in $\Mf$. Then $x\tuh y$ is in $\tilde{\Af}$ by construction. It implies that $y\notin \anc_{\Af}(\Sc)$ and therefore $y\notin \anc_{\tilde{\Af}}(\tilde{\Sc})$. Therefore $x\tuh y$ is in $\cat{MAG}(\tilde{\Af})$ by construction. The case where the edge $x\hut y$ is in $\Mf$ is similar. If $x\huh y$ is in $\Mf$, then $x\huh y$ is in $\tilde{\Af}$ and $x,y\notin \anc_{\tilde{\Af}}(\tilde{\Sc})$. Therefore, we have edge $x\huh y$ in $\cat{MAG}(\tilde{\Af})$. If $x\tut y$ is in $\Mf$, then by construction we have $x\tuh s_{xy}\hut y$ in $\tilde{\Af}$ and therefore $x\tut y$ is in $\cat{MAG}(\tilde{\Af})$. We are done.

\end{proof}

\begin{remark}
    Let $\Wc\coloneqq \{a\tuh b: a\tuh b \text{ is invisible in } \Mf\}$. In general, there need not exist an isADMG $\Af$ represented by $\Mf$ such that $a\huh b$ is in $\Af$ for every $a\tuh b $ in $\Wc$. See \cref{fig:countex_invis} for an example.
\begin{figure}[ht]
\centering
\begin{tikzpicture}[scale=0.8, transform shape]
\begin{scope}[xshift=0]
    \node[ndout] (a) at (0,0) {$a$};
    \node[ndout] (b) at (-1.5,0) {$b$};
    \node[ndout] (c) at (1.5,0) {$c$};
    \draw[arout] (a) to (b);
    \draw[arout] (a) to (c);
    \node at (0,-1) {$\Mf$};
\end{scope}

\begin{scope}[xshift=5cm]
    \node[ndout] (a) at (0,0) {$a$};
    \node[ndout] (b) at (-1.5,0) {$b$};
    \node[ndout] (c) at (1.5,0) {$c$};
    \draw[arout] (a) to (b);
    \draw[arout] (a) to (c);
    \draw[arlat, bend left] (b) to (a);
    \draw[arlat, bend left] (a) to (c);
    \node at (0,-1) {$\Af$};
\end{scope}
\begin{scope}[xshift=10cm]
    \node[ndout] (a) at (0,0) {$a$};
    \node[ndout] (b) at (-1.5,0) {$b$};
    \node[ndout] (c) at (1.5,0) {$c$};
    \draw[arout] (a) to (b);
    \draw[arout] (a) to (c);
    \draw[arlat, bend left] (b) to (c);
    \node at (0,-1) {$\cat{MAG}(\Af)$};
\end{scope}

\end{tikzpicture}
\caption{$\Mf$ is a MAG. $\Af$ is an isADMG constructed from $\Mf$ by adding bidirected edges to all the invisible directed edges of $\Mf$.  $\cat{MAG}(\Af)\ne \Mf$.}
\label{fig:countex_invis}
\end{figure}
\end{remark}

\begin{lemma}[Property of subgraphs]\label{lem:head_circle}
    Let $\PAG$ be an iSOPAG and $\Pf_A$ be its induced subgraph over $A\subseteq \Vc$. For any nodes $a,b,c\in A$, if $a\suh b\ous c$ is in $\Pf_A$, then $a\suh c$ must be in $\Pf_A$. Furthermore, if $a\tuh b\ous c$ is in $\Pf_A$, then $a\tuh c$ or $a\ouh c$ is in $\Pf_A$, i.e., the edge between nodes $a$ and $c$ is not $a\huh c$.  
\end{lemma}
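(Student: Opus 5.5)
The plan is to reduce the first claim to Property~\ref{sopag:p3} of the iSOPAG $\Pf$, and then rule out the bidirected case in the second claim with the ancestral condition on $\Pf$.

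\textbf{First claim.} Induced subgraphs keep every edge between retained nodes, with the same endpoint marks. So if $a\suh b\ous c$ is in $\Pf_A$ with $a,b,c\in A$, the same triple is in $\Pf$. Property~\ref{sopag:p3} then places an edge $a\suh c$ in $\Pf$. Since $a,c\in A$, this edge survives in $\Pf_A$. This step is routine.

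\textbf{Second claim.} Suppose $a\tuh b\ous c$ is in $\Pf_A$. By the first part, the edge between $a$ and $c$ has an arrowhead at $c$, so it is $a\tuh c$, $a\ouh c$, or $a\huh c$. It suffices to exclude $a\huh c$. Assume for contradiction that $a\huh c$ is in $\Pf$ (induced subgraph again). I plan a case split on the mark at $c$ on the edge $b\ous c$, which is a circle at $b$ and one of tail, arrowhead or circle at $c$.
\begin{enumerate}
\item If the edge is $b\ouh c$: the triple $a\tuh b\ouh c$ is a directed edge into $b$ followed by an edge into $c$ that has a circle at $b$, and $a\huh c$ is present. I would use closure under the FCI orientation rules, Property~\ref{sopag:p1}. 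Reading $a\huh c$ as $c\suh a$, Rule~$\Rc2$ applied in the reversed direction ($c\suo b$ with $b\tuh\cdots$) does not match directly. The more direct route is to use \ref{sopag:p4}: pick $\Mf\in[\Pf]_\Ms$ obtained by Zhang's orientation scheme. In $\Mf$ the edges $a\tuh b$ and $a\huh c$ persist, and the circle at $b$ on $b\ouh c$ becomes a tail or an arrowhead. In the tail case, $a\tuh b\tuh c$ together with $a\huh c$ is an almost directed cycle, contradicting ancestrality. In the arrowhead case, I would argue via Rule~$\Rc1$/Property~\ref{sopag:p2}, which forbid that orientation.
\item If the mark at $c$ is a tail or a circle ($b\out c$ or $b\ouo c$): Property~\ref{sopag:p2} forbids $\suh b\out$, and more generally a circle at $b$ next to an arrowhead into $b$. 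In the circle case, Property~\ref{sopag:p2} forbids $a\suh b\tuo c$ and $a\suh b\out c$, so $b\ouo c$ is the only remaining configuration. For it, I would invoke the standard consequence of $\Rc1$/$\Rc2$-closure that an arrowhead into $b$ adjacent to $b\ouo c$ with $a,c$ adjacent forces $a\suh c$ to take the form $a\tuh c$ or $a\ouh c$. This is exactly the pattern Zhang used to show that in CPAGs, $a\tuh b\ouo c$ implies $a\tuh c$ or $a\ouh c$.
\end{enumerate}

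\textbf{Main obstacle.} The main difficulty is making the circle-at-$b$ cases rigorous using only properties \ref{sopag:p1}--\ref{sopag:p3} of an iSOPAG, since \ref{sopag:p4} is not available for SOPAGs. My fallback argument is this: if $a\huh c$ and $a\tuh b\ous c$, then $c\suh a\tuh b$ with $c\suo b$ matches the premise of Rule~$\Rc2$ with $(i,j,k)=(c,a,b)$ (the second form $i\suh j\tuh k$). Rule~$\Rc2$ would then orient the circle at $b$ on the $c$--$b$ edge as an arrowhead. This contradicts closure under $\Rc2$ (Property~\ref{sopag:p1}), which completes the proof.
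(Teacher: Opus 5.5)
Your fallback argument is correct and is exactly the paper's proof. The first claim is Property~\ref{sopag:p3} carried through the induced subgraph. For the second claim, if $a\huh c$ were present, then $c\suh a\tuh b$ together with the circle at $b$ on the $b$--$c$ edge triggers $\fci{2}$ with $(i,j,k)=(c,a,b)$, which contradicts closure under~\ref{sopag:p1}; so the edge $a\suh c$ from the first claim must be $a\tuh c$ or $a\ouh c$.

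You should drop the preceding case split. It is unnecessary, because the $\fci{2}$ argument does not depend on the mark at $c$. It is also partly unsound for iSOPAGs: it relies on~\ref{sopag:p4}, which iSOPAGs do not have, and on the false claim that~\ref{sopag:p2} forbids every circle at $b$ next to an arrowhead into $b$.
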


\begin{proof}
The first claim is direct from Property~\ref{sopag:p3} in \cref{def:SOPAG} and the definition of full subgraphs. One can derive the second claim using $\fci{2}$.
\end{proof}

\begin{corollary}\label{cor:prop_subgraph_II}
    Assume the setting of \cref{lem:head_circle} and furthermore $b\ouo c$. If $a\huh b$, then  $a\huh c$. If $a\ouh b$ or $a\tuh b$, then $a\ouh c$ or $a\tuh c$.
\end{corollary}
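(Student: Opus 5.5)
The plan is to use \cref{lem:head_circle} to obtain the edge $a\suh c$, and then to determine the mark at $a$ on this edge. I would do this by contradiction, using the symmetry between $b$ and $c$ created by the extra assumption $b\ouo c$. Throughout, note that $\Pf_A$ is an induced subgraph of $\Pf$, so every edge among $a,b,c$ is the same in $\Pf_A$ and in $\Pf$. Hence Properties~\ref{sopag:p1}--\ref{sopag:p3} of \cref{def:SOPAG}, in particular closure under the FCI rules, can be applied to these three edges directly.

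\emph{First claim.} Assume $a\huh b$. By \cref{lem:head_circle} the edge between $a$ and $c$ is $a\suh c$, so it is one of $a\huh c$, $a\tuh c$ or $a\ouh c$. Suppose it is not $a\huh c$.
\begin{enumerate}
\item If $a\ouh c$, then $c\suo a$ has a circle at $a$. Together with $b\suh a$ (from $a\huh b$) this gives the configuration $b\suh a\ous c$. Property~\ref{sopag:p3}, equivalently \cref{lem:head_circle} applied to the triple $(b,a,c)$, then forces $b\suh c$. This contradicts $b\ouo c$.
\item If $a\tuh c$, we have $b\suh a\tuh c$ and $b\suo c$. Rule $\Rc2$ of \cref{alg:FCI} would orient $b\suh c$. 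This contradicts the closure of $\Pf$ under the FCI rules (Property~\ref{sopag:p1}), since $b\ouo c$ is in $\Pf$.
\end{enumerate}
Hence the edge is $a\huh c$.

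\emph{Second claim.} If $a\tuh b$, the second part of \cref{lem:head_circle} already gives $a\tuh c$ or $a\ouh c$. If $a\ouh b$, the edge is again $a\suh c$ by \cref{lem:head_circle}, and it remains to exclude $a\huh c$. Suppose $a\huh c$. Since $c\ouo b$, the triple $(a,c,b)$ satisfies the hypotheses of the first claim with the roles of $b$ and $c$ exchanged. That claim yields $a\huh b$, contradicting $a\ouh b$.

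The only step needing care is the $a\tuh c$ subcase of the first claim. Here one must check that $\Rc2$ is literally applicable, with $i=b$, $j=a$, $k=c$: it needs $i\suh j\tuh k$ and $i\suo k$, and this holds. One must also check that closure under the rules transfers from $\Pf$ to the induced subgraph; it does, because the rule involves only the three edges among $a,b,c$, which are identical in $\Pf_A$ and $\Pf$.
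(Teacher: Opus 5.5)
Your proof is correct and follows essentially the same route as the paper. The paper excludes $a\ouh c$ by applying \cref{lem:head_circle} to $b\huh a\ouh c$, and excludes $a\tuh c$ via the second part of \cref{lem:head_circle}, which is itself a consequence of $\Rc2$; you invoke $\Rc2$ directly, and your explicit treatment of the second claim (\cref{lem:head_circle} for $a\tuh b$, and the symmetric reduction to the first claim for $a\ouh b$) fills in what the paper only calls ``derived similarly.''
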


\begin{proof}
    If $a\suh c$ is $a\ouh c$, then by applying \cref{lem:head_circle} to the segment $b\huh a\ouh c$ we would conclude that the edge between nodes $b$ and $c$ is $b\suh c$, a contradiction. If we have $a\tuh c$, then applying \cref{lem:head_circle} to the segment $a\tuh c\ouo b$ shows that the edge between nodes $a$ and $b$ is not bidirected, which is again a contradiction. Overall, the edge between nodes $a$ and $c$ must be $a\huh c$. The second statement is derived similarly.
\end{proof}

\begin{lemma}[Property of buckets]\label{lem:prop_bucket}
Let $\Gf=(\Ic,\Vc,\Ec)$ be a full subgraph of an iSOPAG $\Pf$. Let nodes $a$ and $b$ be in the same bucket in $\Gf$. If we have $c\suh a$ in $\Gf$, then we have $c\suh b$ in $\Gf$. If we have $c\huh a$ in $\Gf$, then we have $c\huh b$ in $\Gf$. If we have $c\ouh a$ or $c\tuh a$, then we have $c\ouh b$ or $c\tuh b$.
\end{lemma}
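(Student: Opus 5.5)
Since $a$ and $b$ lie in the same $\Gf$-bucket, there is a path $a=u_0 \sus u_1 \sus \cdots \sus u_m = b$ in $\Gf$ with no arrowheads. I will induct on $m$. It therefore suffices to treat a single edge $u \sus w$ with no arrowheads at either end, i.e. one of $u\tut w$, $u\tuo w$, $u\out w$, $u\ouo w$. The claim is then: if $c\suh u$ is in $\Gf$, then $c\suh w$ is in $\Gf$, and the finer statements about the mark at $c$ also hold. The three claims are propagated along the path one edge at a time. Here $c\ne u_i$ is automatic: an edge $c\suh u_i$ with $c=u_{i+1}$ would put an arrowhead on the bucket path.

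For the single-edge step, I first rule out tails at $u$ in the relevant configurations. If the edge is $u\tus w$ (tail at $u$), then $c\suh u\tus w$ is a triple $c\suh u\tut w$ or $c\suh u\tuo w$. The first is excluded by ancestrality (clause~3 of \cref{def:pag}). The second is excluded by Property~\ref{sopag:p2} of \cref{def:SOPAG}, which is inherited by the full subgraph $\Gf$ since the property concerns only the edges among three nodes.

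Hence the edge $u \sus w$ must have a circle at $u$, i.e. $u\ous w$ with $\ous\in\{\out,\ouo\}$. Then $c\suh u\ous w$ is in $\Gf$, and \cref{lem:head_circle} gives $c\suh w$. This proves the first claim for one step.

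For the refined claims I would use \cref{lem:head_circle} and \cref{cor:prop_subgraph_II}. If the edge is $u\ouo w$, \cref{cor:prop_subgraph_II} directly gives: $c\huh u$ implies $c\huh w$, and $c\ouh u$ or $c\tuh u$ implies $c\ouh w$ or $c\tuh w$.

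If the edge is $u\out w$, the case $c\tuh u$ is handled by the second part of \cref{lem:head_circle}, which shows $c\suh w$ is not bidirected. For the case $c\huh u\out w$, I need $c\huh w$. Suppose instead $c\tuh w$ or $c\ouh w$. Then $w$ has a tail or circle towards $c$, and together with the tail at $w$ on $u\out w$ I would derive a contradiction. With $c\tuh w$, we get $u\out w\tuh$... which does not immediately conflict. The cleaner route is to apply \cref{lem:head_circle} to the triple $u\suh ?$; failing that, I would use FCI rule $\fci{2}$ closure (Property~\ref{sopag:p1}). Specifically, $w\tuo u$ combined with $w\tuh c\huh u$ would force the orientation $w\suh u$ via $\fci{2}$, contradicting the circle at $u$. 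Similarly, $w\huo c$ with $c\huh u$ and $u\out w$ gives $u\tus w \suh$... which is again handled by $\fci{2}$ applied to $u\suo w$.

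I expect this sub-case, \emph{bidirected preservation across an $\out$ edge}, to be the main obstacle. If $\fci{2}$ is insufficient, I would fall back on Property~\ref{sopag:p3} applied to the triple $w\suh c\hus u$ read in reverse.

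Finally, I chain the single-step results along the bucket path. All three statements are preserved at each step, and each step is within the same full subgraph $\Gf$, so the lemma follows by induction on $m$.
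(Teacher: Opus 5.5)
There is a genuine gap, and it is exactly the sub-case you flagged. After ruling out a tail at $u$, you conclude that the edge is $u\out w$ or $u\ouo w$. But Property~\ref{sopag:p2} forbids \emph{both} $a\suh b\tuo c$ and $a\suh b\out c$. So the triple $c\suh u\out w$ is already excluded by the same property you used for $c\suh u\tuo w$. Because you miss this, your $\out$ branch is left open:
\begin{itemize}
\item the refined claim for $c\huh u$ is never proved; your last paragraph only lists what you would try;
\item the case $c\ouh u\out w$ of the third claim is not addressed at all;
\item the $\fci{2}$ sketch has the orientation reversed. The edge $c\tuh w$ has its arrowhead at $w$ and its tail at $c$, so the premise $w\tuh c$ you invoke is not available, and the argument does not go through.
\end{itemize}

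The fix is immediate. P2 excludes $u\out w$ directly. Equivalently, your own step via \cref{lem:head_circle} yields $c\suh w$, and then $c\suh w\tuo u$ is again a P2-forbidden triple. Either way, the only edge type that can follow an arrowhead into $u$ on the bucket path is $u\ouo w$. For that case, \cref{lem:head_circle} gives $c\suh w$, and \cref{cor:prop_subgraph_II} gives both refined statements. Iterating along the path finishes the proof.

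With this one observation your proof is correct and coincides with the paper's, which says precisely that P2 and ancestrality leave only $a\ouo v_1$ and then repeats. Your other points are fine: P2 is inherited by full subgraphs, and $c$ can never equal the next node on the path.
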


\begin{proof}
     Let \[\pi:a\sus v_1\sus \cdots \sus v_{n-1}\sus b\] be a path from $a$ to $b$ in the bucket. Since $c\suh a$ is in $\Gf$, we cannot have $a\tus v_1$ or $a\out v_1$ in $\Gf$ by \cref{def:SOPAG}. So we have $a\ouo v_1$. This implies $c\suh v_1$ by \cref{lem:head_circle}. Repeat the argument. Then we can conclude that $c\suh b$ is in $\Gf$. The second and third claims follow from \cref{cor:prop_subgraph_II}.
\end{proof}

The following lemma extends \cite[Lemma~7.5]{Maathuis15generalized} and \cite[Lemma~B.4]{Zhang08complete} to iSOPAGs.

\begin{lemma}\label{lem:podpath}
    Let $\Pf$ be an iSOPAG and $a,b$ be two distinct nodes in $\Pf$. If there is a potentially directed path from $a$ to $b$, then it is impossible to have $b\suh a$ in $\Pf$.
\end{lemma}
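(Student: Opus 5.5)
We argue by contradiction, following the strategy of \cite[Lemma~B.4]{Zhang08complete} and \cite[Lemma~7.5]{Maathuis15generalized}. Suppose $b\suh a$ is in $\Pf$ and that there is a potentially directed path from $a$ to $b$. Among all such paths, fix one of minimal length,
\[
\pi: a=v_0\sus v_1\sus \cdots \sus v_n=b .
\]
Every edge $v_i\sus v_{i+1}$ on $\pi$ is of the form $v_i\tuh v_{i+1}$, $v_i\ouh v_{i+1}$, $v_i\tuo v_{i+1}$, $v_i\ouo v_{i+1}$, or $v_i\tut v_{i+1}$. We induct on $n$.

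\textbf{Case $n=1$.} Here $a$ and $b$ are joined by a single edge, which cannot simultaneously be potentially directed from $a$ to $b$ and of the form $b\suh a$, since the latter is out of $a$ only if the mark at $a$ is an arrowhead. So the case is vacuous.

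\textbf{Inductive step, edge into $b$.} For $n\ge 2$, the idea is to propagate the arrowhead at $a$ along $\pi$ by the closure properties of an iSOPAG. Look first at the triple $b\suh a\sus v_1$. If the edge $a\sus v_1$ is $a\tus v_1$ or $a\out v_1$, we get $b\suh a\tus v_1$, which is forbidden by ancestrality (clause~3 of \cref{def:pag}), or $b\suh a\out v_1$, which is forbidden by \ref{sopag:p2}. So the edge must be $a\ouo v_1$ or $a\ouh v_1$, that is, $a\ous v_1$. Then \ref{sopag:p3} gives $b\suh v_1$. The subpath $v_1\sus\cdots\sus b$ is then a shorter potentially directed path from $v_1$ to $b$ together with an edge $b\suh v_1$, contradicting the induction hypothesis; equivalently, it contradicts the minimality of $\pi$ once we know $v_1\neq b$, which holds because $\pi$ is a path.

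\textbf{Remaining subcases.} Two degenerate situations need separate checks. First, $v_1=b$ is the case $n=1$, already excluded. Second, if $a$ is an input node, there are no arrowheads towards $a$, so $b\suh a$ cannot occur at all. A subtle point is that \ref{sopag:p3} only yields $b\suh v_1$, not a particular mark at $b$. That is all we need, because the inductive hypothesis forbids any edge of the form $b\suh v_1$.

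\textbf{Main obstacle.} The hard part is showing that the mark at $a$ on $a\sus v_1$ is a circle, i.e. ruling out tails without the full FCI machinery. Ancestrality handles a tail mark paired with an arrowhead-or-tail at $v_1$ (the patterns $\suh\tut$ and $\suh\tuh$), and \ref{sopag:p2} handles the tail–circle pattern. I would check carefully that these two together exhaust $\tus$ and $\tuo$. If some residual case slipped through, I would fall back on the FCI rule $\fci{1}$/$\fci{2}$ closure given by \ref{sopag:p1}, as in the proof of \cref{lem:head_circle}.
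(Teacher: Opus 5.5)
There is a genuine gap, at exactly the step your last paragraph flags. You exclude a tail at $a$ by claiming that $b\suh a\tus v_1$ is forbidden by ancestrality, in particular the pattern $\suh\tuh$. That claim is false. Clause~3 of \cref{def:pag} forbids only directed cycles, almost directed cycles, and arrowheads meeting undirected edges ($\suh\tut$). Triples such as $b\tuh a\tuh v_1$ or $b\huh a\tuh v_1$ are perfectly legal, and \ref{sopag:p2} excludes only $a\tuo v_1$. So after the local analysis you are left with $a\ouo v_1$, $a\ouh v_1$ \emph{and} $a\tuh v_1$, and your propagation handles only the first two. \ref{sopag:p3} needs a circle at $a$. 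The fallback to $\fci{2}$ would need $b$ and $v_1$ to be adjacent, which nothing guarantees. Hence in the case $a\tuh v_1$ there is no edge $b\suh v_1$ to feed into the induction hypothesis, and the induction does not close.

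The missing idea is to use the minimality of $\pi$ along the whole path, not just at $a$. This is the normal form the paper establishes first: on a shortest potentially directed path, once an arrowhead appears, every later edge is definitely directed. In your problematic case it goes as follows. If $a\tuh v_1$, the next edge cannot be $v_1\tuo v_2$, by \ref{sopag:p2}. If it were $v_1\ous v_2$, then \cref{lem:head_circle} would give $a\tuh v_2$ or $a\ouh v_2$, hence a shorter potentially directed path from $a$ to $b$. So $v_1\tuh v_2$, and repeating the argument with $v_{j-1}\tuh v_j$ in place of $a\tuh v_1$ shows that $\pi$ is $a\tuh v_1\tuh\cdots\tuh b$.

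From there:
\begin{itemize}
\item $b\tuh a$ or $b\huh a$ closes a directed or almost directed cycle.
\item The remaining option $b\ouh a$ is not literally excluded by clause~3, but it contradicts representation. \Cref{lem:inducing_path}(1) gives $a\in\anc_{\Af}(\{b\}\cup\Sc)$, while the arrowhead at $a$ forces $a\notin\anc_{\Af}(\{b\}\cup\Sc)$ by \cref{def:graph_rep}.
\end{itemize}

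Your \ref{sopag:p3}-propagation of the arrowhead from $b$ along leading circle edges is the other half of the paper's argument, so adding this step repairs the proof. A minor point: $v_i\tut v_{i+1}$ cannot occur on a potentially directed path, since it has a tail at $v_{i+1}$.
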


\begin{proof}
    We first show that, if there is a potentially directed path from $a$ to $b$ in $\Pf$, then there is a shortest potentially directed path \[\pf:a=v_0\sus \cdots \sus v_n=b\] such that if $v_{i-1}\suh v_{i}$ for some $i\in\{2,\ldots,n\}$ then $v_{j-1}\tuh v_j$ for all $j\in \{i+1,\ldots,n\}$. The case with $n\leq 2$ is trivial. Note that $\pf$ can contain the following edges $v_{i-1}\ouo v_i$, $v_{i-1}\ouh v_i$, $v_{i-1}\tuh v_i$, and $v_{i-1}\tuo v_i$ with $i=2,\ldots,n$. The pattern $v_{i-1}\suh v_i\tuo v_{i+1}$ cannot occur in an iSOPAG by \cref{def:SOPAG}. By the same argument in \cite[Lemma~7.2]{Maathuis15generalized}, we know that $v_{i-1}\suh v_i\ouh v_{i+1}$ and $v_{i-1}\suh v_i\ouo v_{i+1}$ cannot occur on $\pf$ provided that $\pf$ is a shortest potentially directed path. This shows the claim.

    Then by \cref{lem:head_circle}, the same argument in the second paragraph of the proof of \cite[Lemma~B.4]{Zhang08complete} completes the proof.

\end{proof}

\begin{lemma}\label{lem:property_visible}
    Let $\PAG$ be an iSOPAG and $A\subseteq \Vc$. For any $a,b,c\in A$, the following hold in $\Pf_A$:
\begin{enumerate}
\item If $a\suh b\tuh c$ or $a\suh b\ouh c$ or $a\suh b\ouo c$ and both edges are not visible directed in $\Pf_A$,
then $a\suh c$ is present in $\Pf_A$ and is not visible directed.
\item If $a\ous b \ous c$ and $a\suh c$ are present, then $a\suh c$ is neither
a visible directed edge nor a bidirected edge $a\huh c$.
\end{enumerate}
\end{lemma}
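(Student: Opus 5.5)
The plan is to argue directly from the three structural properties we already have for induced subgraphs of iSOPAGs: \cref{lem:head_circle}, \cref{cor:prop_subgraph_II} and \cref{lem:prop_bucket}. We combine them with the ancestral clause of \cref{def:pag} and the definition of visibility, \cref{def:visible_edge}. Throughout we work in $\Pf_A$, which inherits Properties~\ref{sopag:p2}--\ref{sopag:p3} by \cref{lem:head_circle}.

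\emph{Part (1), adjacency and orientation.} First treat the circle cases $a\suh b\ouh c$ and $a\suh b\ouo c$. Here $a\suh b\ous c$ holds, so \cref{lem:head_circle} immediately gives $a\suh c$. In the case $a\suh b\tuh c$, we first show that $a$ and $c$ are adjacent. If they were not, then $a$ would be a node not adjacent to $c$ with $a\suh b$, so $a$ would witness the visibility of $b\tuh c$, which is a contradiction. Next we rule out every mark other than an arrowhead at $c$.
\begin{itemize}
\item $c\tuh a$ or $c\tuh a$-type tails into $a$ would create a directed or almost directed cycle $a\suh b\tuh c\tuh a$.
\item $a\tut c$ would give the forbidden triple $b\tuh c\tut a$.
\item A circle at $c$ is excluded because $\Pf$ is closed under $\fci{2}$ by \ref{sopag:p1}, and that closure passes to the induced subgraph.
\end{itemize}
Hence $a\suh c$.

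\emph{Part (1), invisibility.} It remains to show that if $a\suh c$ is the directed edge $a\tuh c$, then it is invisible. Suppose $a\tuh c$ is visible, witnessed by a node $d$ not adjacent to $c$. Then either $d\suh a$, or there is a collider path $d\suh v_1\huh\cdots\huh v_{n-1}\huh a$ with $v_i\in\pa(c)$. We distinguish the edge between $a$ and $b$.
\begin{itemize}
\item If $a\huh b$: append $a\huh b$ to the witness structure. Since $a\in\pa(c)$, this yields a collider path from $d$ into $b$ through parents of $c$. So $b\tuh c$ would be visible in the directed case, contradicting the hypothesis. In the circle cases we instead get the forbidden pattern of \ref{sopag:p2}/\ref{sopag:p3}, because the arrowhead propagates onto $b$ next to a circle.
\item If $a\tuh b$ (or $a\ouh b$): the same witness $d$ shows that $a\tuh b$ is visible unless $d$ is adjacent to $b$. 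If $d$ is adjacent to $b$, we determine the edge using $d\suh a\tuh b$, ancestrality and $\fci{2}$. We expect $d\suh b$, after which $d$, not adjacent to $c$, witnesses visibility of $b\tuh c$. Again this is a contradiction.
\end{itemize}
This last case split, and in particular the collider-path witness where the endpoint adjacency has to be re-established step by step, is what I expect to be the main obstacle. I would handle it by induction along $v_1,\dots,v_{n-1}$, in the style of the proof of \cref{lem:visible_edge_II}.

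\emph{Part (2).} Assume $a\ous b\ous c$ and $a\suh c$.
\begin{itemize}
\item \emph{Not bidirected.} If $a\huh c$, then $c\suh a\ous b$ together with \ref{sopag:p3} gives $c\suh b$, an arrowhead at $b$. This contradicts the circle at $b$ on $b\ous c$.
\item \emph{Not visible.} Suppose $a\tuh c$ is visible with witness $d$. If $d\suh a$, then $d\suh a\ous b$ gives $d\suh b$ by \cref{lem:head_circle}. Then $d\suh b\ous c$ gives $d\suh c$, contradicting that $d$ is not adjacent to $c$. If the witness is a collider path ending $v_{n-1}\huh a$, apply the same propagation (\cref{cor:prop_subgraph_II}) along $v_{n-1}\huh a\ous b\ous c$. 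This forces an arrowhead into $b$ against the circle mark, or into $c$ from $v_{n-1}$ as $\huh$, contradicting $v_{n-1}\in\pa(c)$.
\end{itemize}
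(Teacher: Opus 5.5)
You never use closure under the discriminating-path rule $\fci{4}$ (part of \ref{sopag:p1}). This is a genuine gap: the tools you do use are ancestrality, \ref{sopag:p2}, \ref{sopag:p3}, \cref{lem:head_circle}, \cref{cor:prop_subgraph_II}, $\fci{2}$ and the definition of visibility, and these cannot prove the lemma. In Part~(1) with $a\huh b\ouh c$, your claim that one obtains ``the forbidden pattern of \ref{sopag:p2}/\ref{sopag:p3}'' is false. Take $d\tuh a\huh b\ouh c$ and $a\tuh c$, with $d$ adjacent to neither $b$ nor $c$.
\begin{itemize}
\item The only circle sits at $b$, and \ref{sopag:p3} only asks for $a\suh c$, which holds.
\item \ref{sopag:p2} is not triggered, there are no (almost) directed cycles, and $\fci{2}$ does not fire.
\item Yet $d$ witnesses that $a\tuh c$ is visible.
\end{itemize}
What excludes this graph is that $\langle d,a,b,c\rangle$ is a discriminating path for $b$, so $\fci{4}$ would already have oriented the circle at $b$; this is how the paper handles the case. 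The subcase $a\huh b\ouo c$ is harmless, since \cref{cor:prop_subgraph_II} gives $a\huh c$ outright.

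The same problem breaks your collider-path case in Part~(2). \cref{cor:prop_subgraph_II} transports a bidirected edge only across a $\ouo$ edge. When $a\ouh b$ or $b\ouh c$, you only get $v_{n-1}\suh b$ and $v_{n-1}\suh c$, which is consistent with $v_{n-1}\tuh c$. An arrowhead at $b$ on the edge $v_{n-1}b$ also does not clash with the circle at $b$ on $b\ous c$. For example, take $d\tuh v_1\huh a\ouo b\ouh c$ with $v_1\huh b$, $v_1\tuh c$, $a\tuh c$, and $d$ adjacent only to $v_1$. This satisfies every constraint you use, while $a\tuh c$ is visible; only $\fci{4}$ on $\langle d,v_1,b,c\rangle$ rules it out.

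What you need, here and in the induction you defer in Part~(1), is the paper's split:
\begin{itemize}
\item If some $v_i\huh b$, then $d,v_1,\dots,v_i,b,c$ is a discriminating path for $b$. This contradicts either the invisibility of $b\tuh c$ directly or, via $\fci{4}$, the circle at $b$.
\item Otherwise $v_{n-1}\suh b$ is $v_{n-1}\ouh b$ or an invisible $v_{n-1}\tuh b$ (the $a\huh b$ case applied to the triple $v_{n-1},a,b$). You then walk back through $v_{n-2},\dots,v_1$ to obtain $d\suh b$, which forces $d$ to be adjacent to $c$.
\end{itemize}
The adjacency and orientation argument in Part~(1), the direct-witness cases via \cref{lem:head_circle}, and the ``not bidirected'' half of Part~(2) are fine.
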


\begin{proof}
\textbf{Step~0: preparatory work}. First note that if $a\ous b$ is $a\out b$, then by \cref{def:SOPAG}, $a\suh c$ cannot be visible. Since $a\suh c$, it follows that that $b\ous c$ cannot be $b\out c$. Therefore, we have three cases: 

\begin{caselist}
  \item $a\huh b \ouo c$;

  \item $a\huh b$ combined with $b\tuh c$ invisible or $b\ouh c$;

  \item $a\tuh b$ invisible, $a\ouo b$, or $a\ouh b$ combined with $b\tuh c$ invisible, $b\ouo c$, or $b\ouh c$.
\end{caselist}

\textbf{Step~1: show Case~1}. By \cref{cor:prop_subgraph_II}, we know that we must have $a\huh c$ in this case, which is not visible directed.

\textbf{Step~2: show Case~2}. We only need to consider the case where $a\tuh c$. Assume for contradiction that there exists a node $d\in A$ non-adjacent to $c$ such that: $d\suh a \tuh c$ or $d\suh v_1\huh \cdots \huh v_{n-1} \huh a\tuh c$ with $n>1$ and $v_i\in \pa_{\Pf_A}(c)$ for all $1\le i\le n-1$. Since $a\huh b$ and $d$ is non-adjacent to $c$, we have a discriminating path $(d,a,b,c)$ or $(d,v_1,\ldots, v_{n-1},a,b,c)$ for $b$. This contradicts either the invisibility of $b\tuh c$ or the circle in $b\ouh c$ by $\fci{4}$. So, $a\tuh c$ must be invisible

\textbf{Step~3: show Case~3}. We only need to consider the case where $a\tuh c$. Assume for contradiction that there exists a node $d\in A$ non-adjacent to $c$ such that: $d\suh a $ or $d\suh v_1\huh \cdots \huh v_{n-1} \huh a$ with $n>1$ and $v_i\in \pa_{\Pf_A}(c)$ for all $1\le i\le n-1$. In the first subcase, by \cref{lem:head_circle} or invisibility of $a\tuh b$, there must be $d\suh b$. Again, by \cref{lem:head_circle} or the invisibility of $b\tuh c$, node $d$ is adjacent to $c$, which is a contradiction. We now consider the second subcase. First note that if there exists $v_i$ such that $v_i \huh b$, then $(d, v_1, \ldots ,v_i, b, c)$ forms a discriminating path for $b$, which contradicts the invisibility of $b\tuh c$ or the circle in $b\ous c$. Therefore, we can exclude this case. Note that we can also exclude the case where $a\ouo b$ similarly. If $v_{n-1}\tuh b$, then by the argument in Step~2, we can conclude that $v_{n-1}\tuh b$ is invisible. So, we could have either $v_{n-1}\tuh b$ invisible or $v_{n-1}\ouh b$. In either of the two cases, we have $v_{n-2}\tuh b$ or $v_{n-2}\ouh b$. If we have $v_{n-2}\tuh b$, then since $v_{n-2}\huh v_{n-1}$ with $v_{n-1}\ouh b$ or $v_{n-1}\tuh b$ invisible, by the argument in Step~2, the edge $v_{n-2}\tuh b$ must be invisible. Repeating the above argument, we can eventually reach $v_1$ and show that $v_1\tuh b$ invisible or $v_1\ouh b$. This implies that node $d$ is adjacent to $c$, which is a contradiction. Therefore, $a\tuh c$ must be invisible.

This finishes the proof.

\end{proof}

The following lemma extends \cite[Lemma~4]{jaber19idencom} to iSOPAGs.

\begin{lemma}[Property of regions]\label{lem:region}
    Let $\Pf=(\Ic,\Vc,\Ec)$ be an iSOPAG. Let $A\subsetneq B \subseteq \Vc$. Then for $c\in \re_{\Pf_B}(A)$ there does not exist a node $b\in B\sm \re_{\Pf_B}(A)$ such that $b\ouh c$, such that $b\tuh c$ $\Pf_B$-invisible, or such that $b\huh c$. In other words, for $c\in \re_{\Pf_B}(A)$ and $b\in B$, if we have $b\ouh c$, $\Pf_B$-invisible directed edge $b\tuh c$, or $b\huh c$, then $b\in \re_{\Pf_B}(A)$.
\end{lemma}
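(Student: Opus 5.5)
Write $R\coloneqq \re_{\Pf_B}(A)$. The plan is a direct argument by cases on the edge type, using only the definitions of bucket, pc-component and region (\cref{def:bucket,def:pc-component,def:region}) together with the structural lemmas for iSOPAGs. Since $c\in R$, there are nodes $a\in A$ and $d\in\pcc_{\Pf_B}(a)$ with $c\in\bu_{\Pf_B}(d)$. So there is an arrowhead-free path $\beta$ from $d$ to $c$, and a pc-connecting path $\gamma$ from $a$ to $d$ (possibly trivial, with $d=a$). Let $b\in B$ be joined to $c$ by one of the three edge types $b\ouh c$, an invisible $b\tuh c$, or $b\huh c$. The goal is to show that $b\in\pcc_{\Pf_B}(a')$ for some $a'\in A$; then $b\in\bu_{\Pf_B}(b)$ gives $b\in R$.

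\textbf{Step 1: transport the edge from $c$ to $d$.} The path $\beta$ carries no arrowheads, and $b\suh c$ is into $c$. Apply \cref{lem:prop_bucket} along $\beta$ in the full subgraph $\Pf_B$. This shows that $b$ is adjacent to $d$ with an arrowhead at $d$, and it preserves the mark type: $b\huh c$ gives $b\huh d$, and $b\ouh c$ or $b\tuh c$ gives $b\ouh d$ or $b\tuh d$. When the transported edge is $b\tuh d$, I still need it to be invisible. I will obtain this by walking along $\beta$ one edge at a time. Inside a bucket every edge is $\ouo$, by \ref{sopag:p2} and the argument in \cref{lem:prop_bucket}. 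So each step has the shape $b\suh x\ouo y$, and \cref{lem:property_visible}(1) shows that $b\suh y$ is present and not visible directed. In all cases the edge between $b$ and $d$ is a single pc-connecting edge, i.e.\ $b\in\pcc_{\Pf_B}(d)$. If $b$ itself lies on $\beta$, it is already in the bucket of $d$ and hence in $R$, so I assume it does not.

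\textbf{Step 2: connect $b$ to $A$.} If $d=a$, Step 1 already finishes the proof. Otherwise, prepend the edge between $b$ and $d$ to $\gamma$. If $\gamma$ is a single non-visible edge, or a collider path whose last edge is into $d$, then the concatenation $b\suh d\cdots a$ is again a collider path with no visible edges. It is therefore pc-connecting, possibly after shortcutting repeated nodes; to shortcut I take a shortest such walk and observe that its subpaths remain collider paths. The remaining difficulty is that $\gamma$'s last edge at $d$ may be $d\tus$ or $d\ous$, so $d$ is not a collider on the concatenation.

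\textbf{Main obstacle.} I expect this non-collider case to be the crux. There I would use \cref{lem:property_visible}(1),(2) and \cref{lem:head_circle} to replace the pair (edge $b$–$d$, last edge of $\gamma$) by a direct non-visible edge from $b$ to the next node of $\gamma$. For example, from $b\suh d\ouh e$ or $b\suh d\tuh e$ with invisible edges, part (1) yields a non-visible $b\suh e$. I would then induct on the length of $\gamma$ until a collider path is obtained. I would also check that bidirected edges are preserved when required, using \cref{cor:prop_subgraph_II} and the discriminating-path argument in Step 2 of \cref{lem:property_visible}. If some configuration resists this treatment, the fallback is to mimic the proof of \cite[Lemma~4]{jaber19idencom}, adding the circle-mark case handled by \ref{sopag:p2} and \ref{sopag:p3}.
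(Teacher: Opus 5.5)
Your proposal is correct and follows the paper's route. You move the edge from $c$ to the endpoint $d$ of a pc-connecting path inside the bucket; this is the paper's WLOG step, which you justify explicitly via \cref{lem:prop_bucket} and \cref{lem:property_visible}(1). You then either concatenate when $\gamma$ enters $d$ with an arrowhead, or shortcut $b\suh d$ to a non-visible $b\suh e$ with \cref{lem:property_visible}(1); the configurations $d\tut e$, $d\tuo e$, $d\out e$ are excluded by ancestrality and \ref{sopag:p2}.

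The only difference is that the paper defers the collider-path case to the argument of \cite[Lemma~4]{jaber19idencom}, whereas your shortcut already settles it in one step. The interior edges of $\gamma$ are bidirected, so the new endpoint $e=v_{n-1}$ already receives an arrowhead from its predecessor on $\gamma$. Hence $\gamma(a,e)$ followed by $e\hus b$ is pc-connecting, and neither the induction on the length of $\gamma$ nor the fallback is needed.
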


\begin{proof}
    We argue by contradiction and assume on the contrary that such node $b\in B$ exists, i.e., there exists $b\in B\sm \re_{\Pf_B}(A)$ such that non-visible edge $b\suh c$ is in $\Pf_{B}$. Since $c\in \re_{\Pf_B}(A)$, we can WLOG assume that (if there exists $\tilde{c}$ in the bucket of $c$ satisfying the assumption, then we replace $c$ with $\tilde{c}$ and continue with the argument) there exists $a\in A$ and a pc-connecting path $\pi$ from $a$ to $c$ that is of the form $a\sus c$ not visible or $a=v_0\suh v_1 \huh \cdots \huh v_{n-1}\hus v_{n}=c$ where $n>1$ and none of the edges are visible. In the first case, it is easy to check that this case cannot happen. Indeed, by the property of iSOPAGs, we know that $a\sus c$ cannot be $a\tut c$, $a\tuo c$, or $a\out c$, since we have $b\suh c$ in $\Pf_B$. If we have $a\suh c$, then $a\suh c\hus b$ forms a pc-connecting path from $a$ to $b$, which implies that $b\in \re_{\Pf_B}(A)$. If we have $a\huo c$, $a\ouo c$, or $a\hut c$ invisible, then node $b$ must be adjacent to node $a$ (by \cref{lem:head_circle} or invisibility) and the edge between them must be non-visible by \cref{lem:property_visible,lem:property_visible}. Therefore, we only need to consider the second case. Then the same argument in \cite[Lemma~4]{jaber19idencom} applies by noting that arrowheads will not meet edges $v\tut u$, $v\tuo u$, or $v\out u$, and \cite[Lemma~7.5]{Maathuis15generalized} can be generalized to iSOPAGs (\cf \cref{lem:podpath}).

\end{proof}

The following lemma extends \cite[Lemma~48]{perkovic18complete} to the case under selection bias.

\begin{lemma}\label{lem:orientation_invisible}
    Let $\Pf=(\Ic,\Vc,\Ec)$ be an iCOPAG. Let $\{c_i\}_{i=1}^n\subseteq \Vc$ be 
    nodes of $\Pf$ such that no two of them lie in the same circle component of $\Pf$. Let $\Mf\in [\Pf]_\Ms$ be an iMAG constructed via the following orientation scheme: 
    \begin{enumerate}
        \item orient all $\ouh$ edges into directed edges $\tuh$;

        \item orient all $\out$ edges into undirected edges $\tut$;

        \item for any $a\ouo b$, if there is no arrowhead into $a$ or $b$, then orient it into an undirected edge $a\tut b$;

        \item further orient the circle components of the graph into a DAG with no unshielded colliders such that no new arrowheads are introduced into any $c_i$. 
    \end{enumerate}
    Then every edge that is either $c_i\ouo d$, $c_i\ouh d$ or invisible $c_i\tuh d$ in $\Pf$ is not a visible directed edge in $\Mf$.
\end{lemma}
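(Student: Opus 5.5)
The plan is to argue by contradiction, following the structure of \cite[Lemma~48]{perkovic18complete}. The orientation only adds marks, so we track which visibility witnesses can appear in $\Mf$ that were not already present in $\Pf$. Fix an edge $e$ between $c_i$ and $d$ that is $c_i\ouo d$, $c_i\ouh d$, or invisible $c_i\tuh d$ in $\Pf$. Suppose that in $\Mf$ it becomes a visible directed edge $c_i\tuh d$. Since $c_i\in \Vc$ (a circle cannot sit at an input node, and invisibility excludes $c_i\in\Ic$), there is a witness $w$ not adjacent to $d$. Either $w\suh c_i$ holds in $\Mf$, or there is a collider path $w\suh v_1\huh\cdots\huh v_{n-1}\huh c_i$ in $\Mf$ with $v_j\in\pa_\Mf(d)$. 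Adjacencies are identical in $\Pf$ and $\Mf$. The task is therefore to transfer this witness back to $\Pf$ and reach a contradiction with invisibility there, or with the form of $e$.

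The first step is to check which arrowheads into $c_i$ exist in $\Mf$. By step 4 of the scheme, no new arrowheads are introduced into $c_i$ inside its circle component. Steps 1 and 2 turn circles into tails, never heads. Step 3 produces undirected edges only. Hence every arrowhead into $c_i$ in $\Mf$ is already an arrowhead in $\Pf$. In particular, if $e$ was $c_i\ouo d$, then $c_i$ must receive an arrowhead in $\Pf$, or else step 3 or step 4 would have made $e$ undirected or oriented it without creating a visible edge. By \cref{lem:head_circle}, an arrowhead $x\suh c_i$ together with $c_i\ouo d$ forces $x\suh d$, so $x$ is adjacent to $d$. The same propagation will be used for the collider path.

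For the collider-path case, I would argue that every arrowhead on the path in $\Mf$ is an arrowhead in $\Pf$. The bidirected edges $v_j\huh v_{j+1}$ in $\Mf$ cannot be created by the scheme: steps 1--3 never create arrowheads at circle marks, and step 4 creates only directed edges within circle components. So they are already $\huh$ in $\Pf$, and likewise the head at $c_i$ and the head at $v_1$ from $w$. The edges $v_j\tuh d$ in $\Mf$ correspond in $\Pf$ to $v_j\tuh d$, $v_j\ouh d$, or possibly $v_j\ouo d$. If the final edge is already $c_i\tuh d$ in $\Pf$, the $\Pf$-witness would show visibility unless some $v_j\to d$ is not a definite parent. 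In that case I would invoke \cref{lem:property_visible} and the discriminating-path / $\fci{4}$ argument from Step~2 of its proof. These show that an arrowhead $v_j\huh$ meeting $v_j\ouh d$ or $v_j\ouo d$ either forces $v_j\tuh d$ in $\Pf$ (using closure under $\fci{1}$--$\fci{4}$ from \ref{sopag:p1} and \cref{cor:prop_subgraph_II}) or makes $w$ adjacent to $d$. If $e$ is $c_i\ouh d$ or $c_i\ouo d$, the same path through $c_i$ with an arrowhead into $c_i$ yields a discriminating path for $c_i$, or an $\fci{1}$ configuration when $n=1$, ending at $d$. Closure under $\fci{1}$/$\fci{4}$ would then have oriented the circle at $c_i$, which contradicts the assumed form of $e$.

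The main obstacle is the case where the witness edges $v_j\suh d$ or $w\suh c_i$ involve marks oriented by step 4 inside circle components other than that of $c_i$. This is where the hypothesis that the $c_i$ lie in distinct circle components, and that step 4 introduces no unshielded colliders, must be used. The plan is to show that any edge $v_j\tuh d$ in $\Mf$ that came from $v_j\ouo d$ in $\Pf$ is incompatible with $v_j$ having the arrowhead from $v_{j-1}$ in $\Pf$, by \cref{lem:head_circle} applied along the bucket. This reduces everything to $\Pf$-marks. The COPAG assumption \ref{sopag:p4} guarantees that the scheme indeed outputs $\Mf\in[\Pf]_\Ms$, so the adjacency claims made above are legitimate.
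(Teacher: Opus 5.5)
Your route is the paper's. Its proof just says that the argument of \cite[Lemma~48]{perkovic18complete} carries over using \cref{lem:0} (pulling a discriminating-path witness in $\Mf$ back to $\Pf$) and \cref{cor:prop_subgraph_II}. Your contradiction via a visibility witness in $\Mf$, pulling arrowheads at $c_i$ and bidirected edges back to $\Pf$ and then using \ref{sopag:p3}, $\fci{1}$ and $\fci{4}$, reconstructs exactly that.

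One inference does not hold as written. You claim that ``the head at $v_1$ from $w$'' is already present in $\Pf$. But step~4 does create arrowheads inside circle components, and its no-new-arrowheads constraint protects only the $c_i$. So if $w\ouo v_1$ in $\Pf$, step~4 may orient it as $w\tuh v_1$. In that case the corresponding path in $\Pf$ is not a collider path at $v_1$. Then neither your $\fci{4}$ step nor the conclusion that $w$ witnesses visibility of $c_i\tuh d$ in $\Pf$ goes through. This is exactly the hypothesis that \cref{lem:0} requires.

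The fix is to choose a witness of minimal length. Put $v_n\coloneqq c_i$; you correctly show that every $v_j\huh v_{j+1}$ is already bidirected in $\Pf$. If $w\ouo v_1$ in $\Pf$, then \cref{cor:prop_subgraph_II} applied to $v_2\huh v_1\ouo w$ gives $v_2\huh w$ in $\Pf$, hence in $\Mf$. So $w\huh v_2\huh\cdots\huh c_i$ (just $w\huh c_i$ if $n=2$) is a shorter witness, with $w$ still non-adjacent to $d$. For a shortest witness the first arrowhead is therefore in $\Pf$, and the rest goes as you sketch:
\begin{itemize}
\item An edge $v_j\ouo d$ is excluded by \cref{cor:prop_subgraph_II} applied to $v_{j+1}\huh v_j\ouo d$. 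This yields $v_{j+1}\huh d$, or $c_i\huh d$ when $j=n-1$, contradicting the required marks.
\item For the first $j$ with $v_j\ouh d$, use \ref{sopag:p3} if $j=1$, and $\fci{4}$ on the discriminating path $w,v_1,\ldots,v_j,d$ if $j>1$.
\end{itemize}
That is precisely \cref{lem:0}; \cref{lem:property_visible} is not needed.

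Contrary to your last paragraph, the hypothesis that the $c_i$ lie in distinct circle components, and the absence of unshielded colliders, play no role in this argument. They only guarantee that some $\Mf\in[\Pf]_\Ms$ with no new arrowheads into any $c_i$ exists, which the statement already assumes.
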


\begin{proof}
    The same proof of \cite[Lemma~48]{perkovic18complete} applies using \cref{lem:0,cor:prop_subgraph_II}.
\end{proof}

Given an s-SCM $(\Mc,X_{\Sc}=\1)\in \Mb^+$ with sADMG $\Af$ and a target interventional Markov kernel $\Prb_{\Mc}(X_A\mid X_{\Sc}=\1\miid \Do(X_B))$ with $A,B\subseteq \Vc$ disjoint, the conditional version of the hedge criterion on ADMGs derived in \cite{shpitser2008complete} gives a convenient graphical criterion for the non-identifiability of $\Prb_{\Mc}(X_A\mid X_{\Sc}=\1\miid \Do(X_B))$.

\begin{proposition}[Hedge criterion]\label{prop:hedge}
Let $\Gf=(\Vc,\Sc,\Ec)$ be an sADMG and let $A,B\subseteq \Vc$ be disjoint. Let $D\subseteq \Sc$ be the unique maximal set such that 
    \[
    A\sep{\mathsf{id}}{\Gf_{\Do(I_D,B)}} I_D\mid B\cup \Sc,
    \] 
    whose existence and uniqueness are guaranteed by \cite[Theorem~20]{shpitser2008complete}. If there exists a hedge $(\Hc,\Hc^\prime)$ for $(A\cup (\Sc\sm D),B\cup D)$ in $\Gf$, then $\Prb_{\Mc}(X_A\mid X_{\Sc}=x_{\Sc}\miid \Do(X_B))$, for some $x_\Sc\in \Xc_{\Sc}$, is not identifiable in $\Mb_d^+(\Gf)$.
\end{proposition}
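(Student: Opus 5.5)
The plan is to reduce the statement to the hedge-based non-identifiability result of \cite{shpitser2008complete}, applied to the auxiliary ADMG obtained by treating the selection nodes as observed and conditioned on. The reduction has two parts: first absorb the maximal removable set $D$ into the intervention, then build two models that agree on the observational selected kernel but disagree on the target.

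\textbf{Step 1: absorb $D$ into the intervention.} Since $A \sep{\mathsf{id}}{\Gf_{\Do(I_D,B)}} I_D\mid B\cup\Sc$, rule 3 of \cref{defthm:causal_calculus} (applied to the unselected model, with $\Sc$ in the conditioning set) gives
\[
\Prb_\Mc(X_A\mid X_\Sc=x_\Sc\miid \Do(X_B))=\Prb_\Mc(X_A\mid X_{\Sc\sm D}=x_{\Sc\sm D}\miid \Do(X_B,X_D=x_D)).
\]
The target is therefore a conditional of $\Prb_\Mc(X_{A\cup(\Sc\sm D)}\miid\Do(X_{B\cup D}))$, evaluated at $X_{\Sc\sm D}=x_{\Sc\sm D}$. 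Maximality of $D$ is exactly what \cite[Theorem~20]{shpitser2008complete} requires: it guarantees that non-identifiability of the joint effect on $A\cup(\Sc\sm D)$ transfers to the conditional effect, via their IDC completeness argument.

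\textbf{Step 2: build the two witnessing models.} I view $\Gf$ as an ADMG over $\Vc\cup\Sc$ and take the hedge $(\Hc,\Hc')$ for $(A\cup(\Sc\sm D),B\cup D)$. On this hedge I use the standard construction of \cite{shpitser2008complete}: two discrete models with binary variables and parity-type mechanisms on the C-forests. The first model computes the parity over $\Hc$; the second is identical except that $\Hc\sm\Hc'$ is made uninformative. Both are extended to the rest of $\Gf$ with independent noise, and I then take small mixtures with uniform distributions. These models share the observational law $\Prb(X_{\Vc\cup\Sc})$ but differ in $\Prb(X_{R}\miid\Do(X_{B\cup D}))$ for the root set $R$. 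The differences propagate along directed paths to $A\cup(\Sc\sm D)$, which I ensure with bijective or near-bijective mechanisms.

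\textbf{Step 3: selection and positivity.} The selected observational kernel $\Prb(X_\Vc\mid X_\Sc=x_\Sc)$ is a deterministic function of the joint law, so it agrees in the two models. Membership in $\Mb_d^+(\Gf)$ requires a strictly positive pmf of $(X_\Vc,X_\Wc)$ given selection. I obtain this by the uniform perturbation: every mechanism is composed with an independent flip of small probability $\epsilon>0$, which keeps both the equality of the observational laws and, for $\epsilon$ small enough, the discrepancy in the interventional laws.

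\textbf{Main obstacle.} The difficulty is showing that the discrepancy survives conditioning on $X_{\Sc\sm D}=x_{\Sc\sm D}$ for some value $x_\Sc$, rather than only in the joint distribution. If the two joint interventional laws over $A\cup(\Sc\sm D)$ differ, they must already differ at some conditioning value $x_{\Sc\sm D}$, or in the marginal of $\Sc\sm D$. The second case has to be excluded using maximality of $D$ together with the IDC argument of \cite[Theorem~20]{shpitser2008complete}: a difference confined to the marginal of $\Sc\sm D$ would contradict $A\nsep{\mathsf{id}}{}I_{s}$ for every $s\in\Sc\sm D$. I would therefore simply cite that reduction rather than redo it, and verify that its discrete construction preserves positivity after the $\epsilon$-perturbation.
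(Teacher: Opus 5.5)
Your overall route is the paper's. You regard $\Gf$ as an ADMG over $\Vc\cup\Sc$ and take, from the hedge/IDC completeness results of \cite{shpitser2008complete} (the paper cites its Theorems~17, 20 and 21), two discrete models that agree on $\Prb(X_{\Vc\cup\Sc})$ but disagree on $\Prb(X_A\mid X_\Sc\miid\Do(X_B))$. Agreement of the joint then gives agreement of $\Prb(X_\Vc\mid X_\Sc=x_\Sc)$ for every $x_\Sc$, while the two kernels must differ at some $x_\Sc$. That observation is the entire proof in the paper. Your ``main obstacle'' therefore does not arise. The cited results already produce witnesses for the conditional query itself, so you never need to track whether a discrepancy between some pair of joint witnesses lives only in the marginal of $X_{\Sc\sm D}$. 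Your heuristic for excluding that case (appealing to d-connection of $A$ and $I_s$) is not an argument on its own anyway. Step~1 is likewise unnecessary. Also, the identity there is an instance of rule~2 (action/observation exchange) of \cref{defthm:causal_calculus}, not rule~3, because $D$ lies in the conditioning set $B\cup\Sc$.

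The genuine gap is Step~3, on which you explicitly rely for positivity. Composing every mechanism of both hedge models with an independent $\epsilon$-flip does not preserve equality of their observational laws. A flip injected at a node of $\Hc\sm\Hc^\prime$ is passed on to $\Hc^\prime$ by the first model but ignored by the second, which destroys the parity cancellation on which observational equivalence rests. The simplest hedge already shows this. Take $x\to y$, $x\leftrightarrow y$ with $\Hc=\{x,y\}$, $\Hc^\prime=\{y\}$, and $U$ a fair bit. The witnesses $x=U,\ y=x\oplus U$ and $x=U,\ y=0$ both give $x$ uniform and $y\equiv 0$, and they differ under $\Do(x)$. After flipping $x$ and $y$ independently with probability $\epsilon$, one gets $y=N_x\oplus N_y$ in the first model and $y=N_y$ in the second. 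Then $\Prb(y=1)$ is $2\epsilon(1-\epsilon)$ versus $\epsilon$, so the perturbed models no longer witness non-identifiability. Mixing with a uniform law does not rescue this either, since such a mixture is in general not an SCM with graph $\Gf$. The repair is what the paper does: take the witnessing models in $\Mb_d^+(\Gf)$ directly from the cited results, whose identifiability notion (following Pearl) is already stated for positive observational distributions. Any perturbation you add yourself would have to be designed to respect the cancellation.
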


\begin{proof}
    By \cite[Theorems~17,20, and 21]{shpitser2008complete}, there exist $\Mc_1,\Mc_2\in \Mb_d^+(\Gf)$ such that 
    \[
        \Prb_{\Mc_1}(X_{\Vc\cup \Sc})=\Prb_{\Mc_2}(X_{\Vc\cup \Sc}), \quad \text{ and }\quad  \Prb_{\Mc_1}(X_{A}\mid X_{\Sc}\miid \Do(X_B) )\ne \Prb_{\Mc_2}(X_{A}\mid X_{\Sc}\miid \Do(X_B)).
    \]
    This then implies for some $x_\Sc\in \Xc_\Sc$
    \[
        \begin{aligned}
            &\Prb_{\Mc_1}(X_{\Vc}\mid X_{\Sc}=x_{\Sc})=\Prb_{\Mc_2}(X_{\Vc}\mid X_{\Sc}=x_{\Sc}), \text{ and }\\  
            &\Prb_{\Mc_1}(X_{A}\mid X_{\Sc}=x_{\Sc}\miid \Do(X_B) )\ne \Prb_{\Mc_2}(X_{A}\mid X_{\Sc}=x_{\Sc}\miid \Do(X_B)).
        \end{aligned}
    \]
    This shows the result.
\end{proof}

\begin{example}[Adapted from \cite{shpitser23doesidalgorithmfail}]\label{ex:hedge}
    Consider an isADMG $\Af$ shown in \cref{fig:hedge} with s-SCM $(\Mc,X_s=1)\in \Mb^+_d(\Af)$. The subgraphs $\Gf^1$ and $\Gf^2$ are two $\{a,c\}$-rooted C-forests. Set $\Hc\coloneqq \{b_2,a,c\}$ and $\Hc^\prime\coloneqq \{a,c\}$. We have 
    \[
    \{a,c\}\subseteq \anc_{\Af_{\Do(b_2)}}(\{a,s\}),\quad  \Hc\cap \{b_2\}=\{b_2\}\ne \emptyset,\quad  \text{ and } \quad\Hc^\prime\cap \{b_2\}=\emptyset.
    \]
    So $(\Hc,\Hc^\prime)$ forms a hedge for $(\{a,s\},\{b_2\})$ in $\Af$. Since 
    \[
    a\nsep{\mathsf{id}}{\Af_{\Do(I_s,b_2)}} I_s\mid \{b_2,s\},
    \]
    the interventional Markov kernel $\Prb_{\Mc}(X_a\mid X_s=1\miid \Do(X_{b_2}))$ is not identifiable in $\Mb^+_d(\Af)$ by \cref{prop:hedge}. One can check that  $(\Hc,\Hc^\prime)$ also forms a hedge for $(\{a,s\},\{b_1,b_2\})$ in $\Af$. Since 
    \[
    a\nsep{\mathsf{id}}{\Af_{\Do(I_s,\{b_1,b_2\})}} I_s \mid \{b_1,b_2,s\},
    \]
    \cref{prop:hedge} establishes that $\Prb_{\Mc}(X_{a}\mid X_s=1\miid \Do(X_{b_1},X_{b_2}))$ is not identifiable in $\Mb^+_d(\Af)$. On the other hand, $(\Hc,\Hc^\prime)$ is not a hedge for $(\{a\},\{b_1,b_2\})$ in $\tilde{\Af}$, since 
    \[
    \{a,c\}\nsubseteq \anc_{\tilde{\Af}_{\Do(\{b_1,b_2\})}}(\{a\}).
    \]
    Indeed, $\Prb_{\tilde{\Mc}}(X_a\miid \Do(X_{b_1},X_{b_2}))$ is identifiable in $\Mb^+_d(\tilde{\Af})$; see \cite[Section~3]{shpitser23doesidalgorithmfail} for an identification formula.

\begin{figure}[ht]
\centering
\begin{tikzpicture}[scale=0.8, transform shape]
\begin{scope}[xshift=0]
    \node[ndout] (a) at (0,0) {$a$};
    \node[ndout] (b1) at (-1.5,0) {$b_1$};
    \node[ndout] (b2) at (0,1.5) {$b_2$};
    \node[ndout] (c) at (-1.5,1.5) {$c$};
    \node[ndsel] (s) at (-0.75,2.5) {$s$};
    \draw[arout] (b1) to (a);
    \draw[arout] (c) to (b1);
    \draw[arout] (b2) to (a);
    \draw[arout] (c) to (s);
    \draw[arout] (b2) to (s);
    \draw[huh] (c) to (a);
    \draw[huh] (c) to (b2);
    \node at (-0.75,-1) {$\Af$};
\end{scope}

\begin{scope}[xshift=5cm]
    \node[ndout] (a) at (0,0) {$a$};
    \node[ndout] (b1) at (-1.5,0) {$b_1$};
    \node[ndint] (b2) at (0,1.5) {$b_2$};
    \node[ndout] (c) at (-1.5,1.5) {$c$};
    \node[ndsel] (s) at (-0.75,2.5) {$s$};
    \draw[arout] (b1) to (a);
    \draw[arout] (c) to (b1);
    \draw[arout] (b2) to (a);
    \draw[arout] (c) to (s);
    \draw[arout] (b2) to (s);
    \draw[huh] (c) to (a);
    \node at (-0.75,-1) {$\Af_{\Do(b_2)}$};
\end{scope}

\begin{scope}[xshift=10cm]
    \node[ndout] (a) at (0,0) {$a$};
    \node[ndint] (b1) at (-1.5,0) {$b_1$};
    \node[ndint] (b2) at (0,1.5) {$b_2$};
    \node[ndout] (c) at (-1.5,1.5) {$c$};
    \node[ndsel] (s) at (-0.75,2.5) {$s$};
    \draw[arout] (b1) to (a);
    \draw[arout] (b2) to (a);
    \draw[arout] (c) to (s);
    \draw[arout] (b2) to (s);
    \draw[huh] (c) to (a);
    \node at (-0.75,-1) {$\Af_{\Do(\{b_1,b_2\})}$};
\end{scope}

\begin{scope}[xshift=0cm,yshift=-5cm]
    \node[ndout] (a) at (0,0) {$a$};
    \node[ndout] (b2) at (0,1.5) {$b_2$};
    \node[ndout] (c) at (-1.5,1.5) {$c$};
    \draw[arout] (b2) to (a);
    \draw[huh] (c) to (a);
    \draw[huh] (c) to (b2);
    \node at (-0.75,-1) {$\Gf^1$};
\end{scope}

\begin{scope}[xshift=5cm,yshift=-5cm]
    \node[ndout] (a) at (0,0) {$a$};
    \node[ndout] (c) at (-1.5,1.5) {$c$};
    \draw[huh] (c) to (a);
    \node at (-0.75,-1) {$\Gf^2$};
\end{scope}

\begin{scope}[xshift=10cm,yshift=-5cm]
    \node[ndout] (a) at (0,0) {$a$};
    \node[ndout] (b1) at (-1.5,0) {$b_1$};
    \node[ndout] (b2) at (0,1.5) {$b_2$};
    \node[ndout] (c) at (-1.5,1.5) {$c$};
    \draw[arout] (b1) to (a);
    \draw[arout] (c) to (b1);
    \draw[arout] (b2) to (a);
    \draw[huh] (c) to (a);
    \draw[huh] (c) to (b2);
    \node at (-0.75,-1) {$\tilde{\Af}$};
\end{scope}

\end{tikzpicture}
\caption{An isADMG $\Af$, its hard-manipulated graphs $\Af_{\Do(b_2)}$ and $\Af_{\Do(\{b_1,b_2\})}$, and its two subgraphs $\Gf^1$ and $\Gf^2$ from \cref{ex:hedge}; also shown is the isADMG $\tilde{\Af}$ without a selection node from \cref{ex:hedge}.}
\label{fig:hedge}
\end{figure}
\end{example}

\section[Proofs for Section~\ref{sec:causal_mag_pag}]{Proofs for \cref{sec:causal_mag_pag}}\label{sec:pf_1}

\subsection[Proof of Proposition~\ref{prop:mag_rep}]{Proof of \cref{prop:mag_rep}}
\begin{proof}[Proof of \cref{prop:mag_rep}]
  We will construct a mixed graph $\Mf$ with input nodes $\Ic$ and output nodes $\Vc$ as follows. Let two nodes $a, b \in \Ic \cup \Vc$ be adjacent in $\Mf$ if and only if (i) $a \neq b$, (ii) $\{a, b\} \nsubseteq \Ic$, and (iii) there is an $(\Lc,\Sc)$-inducing path between $a, b$ in $\Af$. In that case, orient the edge between $a$ and $b$ in $\Mf$ as follows:
  $$
  \begin{cases}a \tut b & \text { if } a \in \anc_\Af(\{b\} \cup \Sc)   \text { and } b \in \anc_\Af(\{a\} \cup \Sc) , \\ a \tuh b & \text { if } a \in \anc_\Af(\{b\} \cup \Sc)   \text { and } b \notin \anc_\Af(\{a\} \cup \Sc) , \\ a \hut b & \text { if } a \notin \anc_\Af(\{b\} \cup \Sc)  \text { and } b \in \anc_\Af(\{a\} \cup \Sc) , \\ a \huh b & \text { if } a \notin \anc_\Af(\{b\} \cup \Sc)  \text { and } b \notin \anc_\Af(\{a\} \cup \Sc) . \end{cases}
  $$
  Note that if there is an inducing path from an input node $a$ to an output node $b$, then $a\in \anc_{\Af}(\{b\} \cup \Sc)$. Therefore, there are no arrowheads towards input nodes. 
  Hence, $\Mf$ is the unique MAG that $(\Lc,\Sc)$-represents $\Af$.
\end{proof}

\subsection[Proof of Proposition~\ref{prop:sep_mag}]{Proof of \cref{prop:sep_mag}}

\begin{proof}[Proof of \cref{prop:sep_mag}]
    Let $A\subseteq \Vc$ and $B,C\subseteq \Ic\cup \Vc$. Let $\Af\coloneqq\cat{isADMG}(\Mf)\in [\Mf]_\Gs$. Let \[\pi:v_0\sus \cdots \sus v_n\] be a $C$-open path from $A$ to $B\cup \Ic$ in $\Mf$, which does not contain non-endnodes from $\Ic$. Replacing all the undirected edges $v_i\tut v_{i+1}$ (if any) on $\pi$ in $\Mf$ with $v_{i}\tuh s_{v_iv_{i+1}} \hut v_{i+1}$ gives a path $\tilde{\pi}$ in $\Af$. It is easy to see that $\tilde{\pi}$ is an open path from $A$ to $B\cup \Ic$ given $C\cup \Sc_\Af$ in $\Af$. Therefore, we have shown that 
    \[
        A\nsep{\mathsf{id}}{\Mf} B\mid C \quad \Longrightarrow \quad \exists \Af\in [\Mf]_\Gs,\ A\nsep{\mathsf{id}}{\Af} B\mid C\cup \Sc_\Af.
    \]

    Let $\Af=(\Ic,\Oc,\Sc,\tilde{\Ec})\in [\Mf]_\Gs$ and $\pi$ be an $(C\cup \Sc)$-open walk from $A$ to $B\cup \Ic$ in $\Af$, which does not contain non-endnodes from $\Ic$. Define $\Af^*\coloneqq (\emptyset,\Oc\dcup \Ic,\Sc,\tilde{\Ec})$. Note that $\Mf_\Oc=\cat{MAG}(\Af^*)_\Oc$ and for $a\in \Ic\cup \Oc$ and $b\in \Oc$, edge $a\tus b$ is in $\Mf$ iff the edge $a\tus b$ is in $\cat{MAG}(\Af^*)$, since for $a\in \Ic\cup \Oc$ and $b\in \Oc$ there is an inducing path from node $a$ to $b$ in $\Af$ iff there is an inducing path from node $a$ to $b$ in $\Af^*$, and 
    $
    a\in \anc_{\Af}(\{b\}\cup \Sc) $ iff $ a\in \anc_{\Af^*}(\{b\}\cup \Sc).
    $
    Then there must be an $C$-open walk $\tilde{\pi}$ from $A$ to $B\cup \Ic$ in $\tilde{\Mf}\coloneqq \cat{MAG}(\Af^*)$ such that every collider on $\tilde{\pi}$ is in $C$ and $\tilde{\pi}$ does not have any non-endnodes in $\Ic$ by \cite[Section~3.4.2]{richardson2002ancestral} and \cite[Theorem~4.18]{richardson2002ancestral}. The walk $\tilde{\pi}$ is also present in $\Mf$ and since set $C$ contains all the colliders and  does not contain non-colliders on $\tilde{\pi}$, the walk $\tilde{\pi}$ is open given $C$ in $\Mf$. It follows that $A\nsep{\mathsf{id}}{\Mf} B\mid C$. This means that
    \[
        \Big(\exists \Af\in [\Mf]_\Gs,\ A\nsep{\mathsf{id}}{\Af} B\mid C\cup \Sc_{\Af}\Big) \quad \Longrightarrow \quad A\nsep{\mathsf{id}}{\Mf} B\mid C.
    \]
    Overall, we have shown that for $A, B,C\subseteq \Ic\cup \Vc$
    \[
        A\sep{\mathsf{id}}{\Mf} B\mid C \quad \Longleftrightarrow \quad \forall \Af\in [\Mf]_\Gs,\ A\sep{\mathsf{id}}{\Af} B\mid C\cup \Sc_{\Af}.
    \]

\end{proof}

\subsection[Proof of Proposition~\ref{prop:int_com}]{Proof of \cref{prop:int_com}}

\begin{proof}[Proof of \cref{prop:int_com}]
  Clause 1 is immediate from the definition. Note that the definitions of $\Ec_i$ for $i=1,2,3,5$ are invariant if one replaces $\Mf$ with $\Mf_{\Do(I_A)}$ or $\Mf_{\Do(I_B)}$. The same holds for $\Ec_4$ by Lemma~\ref{lem:vis_sint}. The second clause then follows.
\end{proof}

The following lemma tells us that soft manipulation preserves visibility of directed edges.

\begin{lemma}[Visible edge and soft manipulation]\label{lem:vis_sint}
    Let $\MAG$ be an iMAG and $D\subseteq \Ic\cup \Vc$. Then a directed edge $a\tuh b$ is visible in $\Mf$ for $a,b\in \Vc$ iff it is visible in $\Mf_{\Do(I_D)}$.
\end{lemma}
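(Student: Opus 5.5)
The plan is to compare the two graphs edge by edge and show that every ingredient of \cref{def:visible_edge} for an edge $a\tuh b$ with $a,b\in\Vc$ is unchanged when passing from $\Mf$ to $\Mf_{\Do(I_D)}$. By \cref{def:int_mag}, $\Mf_{\Do(I_D)}$ has the same output nodes $\Vc$ and contains all edges of $\Mf$. The only new objects are the input nodes $I_d$ for $d\in D\sm\Ic$, together with edges of the form $I_d\tuh d$, $I_d\tut d$, $I_d\tuo d$, $I_d\tuh v$ and $I_d\tut v$. In particular, every new edge has an input node $I_d$ as one endpoint and no arrowhead at $I_d$. Moreover, $\pa(b)$ can change only by adding nodes $I_d$.

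For the forward direction, suppose $a\tuh b$ is visible in $\Mf$ with $a\in\Vc$, witnessed by some $c$. Then $c$ is still non-adjacent to $b$ in $\Mf_{\Do(I_D)}$, because no new edges among old nodes are created. The witnessing edge $c\suh a$, or the collider path into $a$ through parents of $b$, is still present, and each $v_i$ remains a parent of $b$. Hence $a\tuh b$ remains visible in $\Mf_{\Do(I_D)}$.

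The converse is the main step. Suppose $a\tuh b$ is visible in $\Mf_{\Do(I_D)}$ with $a\in\Vc$, witnessed by $c$ and possibly by a collider path $c\suh v_1\huh\cdots\huh v_{n-1}\huh a$ with all $v_i\in\pa(b)$. The intermediate nodes $v_i$ receive arrowheads, so none of them is an input node; they are therefore old nodes, and the bidirected edges between them are old edges. If $c\in\Ic\cup\Vc$, then the first edge $c\suh v_1$ (or $c\suh a$) joins two old nodes and hence belongs to $\Mf$. Also, $c$ is non-adjacent to $b$ in $\Mf$, since $\Mf$ is a subgraph of $\Mf_{\Do(I_D)}$ on the old nodes. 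So the same witness works in $\Mf$.

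The remaining and hardest case is $c=I_d$ for some $d\in D\sm\Ic$. The first edge is then $I_d\tuh u$ with $u\in\{v_1,a\}$, and it must come from $\Ec_1$ or $\Ec_4$. If it comes from $\Ec_1$, then $u=d$ and $d$ has an arrowhead somewhere in $\Mf$, that is, some $e\suh d$ lies in $\Mf$. If it comes from $\Ec_4$, then $d\tuh u$ is an invisible edge of $\Mf$.

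The plan is to produce a witness in $\Mf$ in each subcase.

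\textbf{Subcase $\Ec_4$ (invisible $d\tuh u$ in $\Mf$).} Since $I_d$ is non-adjacent to $b$, $d\tuh b$ is not an invisible edge of $\Mf$, and $d\tut b$ is not an edge of $\Mf$. I would then show that $d$ is non-adjacent to $b$ in $\Mf$, ruling out the remaining edge types between $d$ and $b$ using ancestrality of $\Mf$ and the pattern $d\tuh u\ldots$. If so, $d$ serves as a witness: the path $d\tuh u\huh\cdots\huh a$ is a collider path into $a$ through parents of $b$.

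\textbf{Subcase $\Ec_1$ ($u=d$ with an arrowhead at $d$).} Here I would take the node $e$ with $e\suh d$ in $\Mf$ and argue that $e$ is non-adjacent to $b$; if $e$ is adjacent to $b$, I would iterate along edges into $d$. Extending the collider path by $e\suh d$ then gives a witness in $\Mf$.

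Checking these adjacency exclusions carefully, possibly by induction along the collider path, is the step I expect to be the main obstacle.
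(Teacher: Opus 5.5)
Your forward direction and the converse with an old witness $c\in\Ic\cup\Vc$ are fine. The gap is in the case $c=I_d$, subcase $\Ec_4$. There your planned step ``$d$ is non-adjacent to $b$ in $\Mf$'' is false in general, and ancestrality cannot deliver it. Since $d\tuh u$ and $u\in\pa_\Mf(b)$, ancestrality only forces any edge between $d$ and $b$ to be $d\tuh b$. Non-adjacency of $I_d$ and $b$ then only forces that edge to be \emph{visible}, not absent.

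Concretely, let $\Mf$ be the DAG with edges $w\tuh d$, $w\tuh a$, $d\tuh a$, $d\tuh b$, $a\tuh b$, where $w$ and $b$ are non-adjacent. Every other node is adjacent to $a$, so $d\tuh a$ is invisible. The edge $d\tuh b$ is visible, with witness $w$. Hence $\Mf_{\Do(I_d)}$ contains $I_d\tuh a$ but no edge between $I_d$ and $b$. So $I_d$ is a legitimate witness for $a\tuh b$ with $u=a$, yet $d$ is adjacent to $b$ and cannot serve as a witness in $\Mf$.

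The missing idea is how to handle a visible edge $d\tuh b$ of $\Mf$. Take its witness $c'$, with its collider path $\rho$ into $d$ through $\pa_\Mf(b)$, and use the invisibility of $d\tuh u$. Walk back from $d$ along $\rho$ and let $x$ be the first node not in $\pa_\Mf(u)$. Then $x$ must be adjacent to $u$, since otherwise the segment of $\rho$ from $x$ would make $d\tuh u$ visible. Ancestrality now forces $x\huh u$: an undirected edge at $u$ would meet the arrowhead from $d$, and $u\tuh x$ would close a (almost) directed cycle through the successor of $x$ on $\rho$. Splicing $\rho$ up to $x$ with $x\huh u$ and the segment from $u$ to $a$, and shortcutting repeated nodes, gives a witness for $a\tuh b$ in $\Mf$. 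If no such $x$ exists, then $c'\tuh u$ already yields one.

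Your $\Ec_1$ subcase lacks the same ingredient. An arbitrary $e$ with $e\suh d$ may be adjacent to $b$, and nothing in your plan guarantees that the iteration reaches a non-adjacent node. What guarantees it is the following: $u=d\in\pa_\Mf(b)$ and $I_d$ non-adjacent to $b$ force $d\tuh b$ to be visible in $\Mf$, since otherwise $\Ec_4$ would contain $I_d\tuh b$. The witness of $d\tuh b$, concatenated with the segment from $d$ to $a$, is then a witness for $a\tuh b$; for $d=a$ this is immediate.

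For comparison, the paper avoids this combinatorics by arguing semantically. It applies \cref{lem:visible_edge_I} to $\Mf_{\Do(I_D)}$ to exclude $a\huh b$ in every represented isADMG. It notes that each $\Af\in[\Mf]_\Gs$ extends to a member of $[\Mf_{\Do(I_D)}]_\Gs$ with the same part on $\Vc$. It then concludes with \cref{lem:visible_edge_II}.
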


\begin{proof}[Proof of \cref{lem:vis_sint}]
    By the definition of $\Mf_{\Do(I_D)}$, we have that a node $c\in \Ic\cup \Vc$  satisfies item (ii) of Definition~\ref{def:visible_edge} in $\Mf$ iff the same node $c$ satisfies the same condition in $\Mf_{\Do(I_D)}$. This shows that if the directed edge $a\tuh b$ is visible in $\Mf$, then it is also visible in $\Mf_{\Do(I_D)}$.

    To finish the proof, it suffices to show that if $a\tuh b$ is visible in $\Mf_{\Do(I_D)}$ then $a\tuh b$ must be visible in $\Mf$. Let $\Af\in [\Mf]_\Gs$. Then there exists $\tilde{\Af}\in [\Mf_{\Do(I_D)}]_\Gs$ such that $\Af_{\Do(I_D)}$ is a subgraph of $\tilde{\Af}$ and $(\Af_{\Do(I_D)})_{\Vc}=\tilde{\Af}_{\Vc}$. Since $a\tuh b$ is visible in $\Mf_{\Do(I_D)}$, by \cref{lem:visible_edge_I}, for all $\breve{\Af}\in [\Mf_{\Do(I_D)}]_\Gs$ there is no bidirected edge $a\huh b$ in $\breve{\Af}$. This implies that there is no bidirected edge $a\huh b$ in $\Af$ for arbitrary $\Af\in [\Mf]_\Gs$, which by \cref{lem:visible_edge_II} implies that $a\tuh b$ is visible in $\Mf$.

\end{proof}

\subsection[Proof of Theorem~\ref{thm:sep_hsint_mag}]{Proof of \cref{thm:sep_hsint_mag}}

We present the proof of \cref{thm:sep_hsint_mag}. See \cref{fig:pf_structure_sep_hsint_mag} for an overall structure of the proof.

\begin{proof}[Proof of \cref{thm:sep_hsint_mag}]
    Let $\Af\in [\Mf]_\Gs$ be arbitrary.
    By Lemma~\ref{lem:sint_mag_III}, there exists $\tilde{\Mf}\in [\Mf_{\Do(I_D)}]_\Ms$ such that $\cat{MAG}(\Af_{\Do(I_D)})$ is a subgraph of $\tilde{\Mf}$ and therefore $\cat{MAG}(\Af_{\Do(I_D)})_{\Do(T)}$ is a subgraph of $\tilde{\Mf}_{\Do(T)}$.
    We then have 
    \[
    \begin{aligned}
        A \sep{\mathsf{id}}{\Mf_{\Do(I_D,T)}} B\mid C\cup T \quad &\Longrightarrow \quad A \sep{\mathsf{id}}{\tilde{\Mf}_{\Do(T)}} B\mid C\cup T  \\
        &\Longrightarrow \quad A \sep{\mathsf{id}}{\cat{MAG}(\Af_{\Do(I_D)})_{\Do(T)}} B\mid C\cup T \\
        &\stackrel{\ref{lem:sep_hint_mag}}{\Longrightarrow} \quad A \sep{\mathsf{id}}{\Af_{\Do(I_D,T)}} B\mid C\cup T \cup \Sc_\Af.
    \end{aligned}       
    \]
     Hence, 
    \[
        A \sep{\mathsf{id}}{\Mf_{\Do(I_D,T)}} B\mid C\cup T \quad \Longrightarrow \quad \forall \Af \in[\Mf]_\Gs, \  A \sep{\mathsf{id}}{\Af_{\Do(I_D,T)}} B\mid C\cup T\cup \Sc_{\Af}.
    \]
    
    For the other direction, it suffices to show 
    \[
        A \nsep{\mathsf{id}}{\Mf_{\Do(I_D,T)}} B\mid C\cup T \quad \Longrightarrow \quad \exists \Af \in[\Mf]_\Gs, \  A \nsep{\mathsf{id}}{\Af_{\Do(I_D,T)}} B\mid C\cup T\cup \Sc_{\Af}.
    \]
    Assume that $A \nsep{\mathsf{id}}{\Mf_{\Do(I_D,T)}} B\mid C\cup T$. Then there exists a shortest open path \[
    \pi:a \sus v_1\sus \cdots \sus v_{n-1}\sus b
    \]
    from $A$ to $B\cup \Ic\cup \{I_d\}_{d\in D}\cup T$ given $C\cup T$ in $\Mf_{\Do(I_D,T)}$ such that $v_i\notin B\cup \Ic\cup \{I_d\}_{d\in D}\cup T$ for $i=0,\ldots,n-1$. Note that $b\notin T$. We shall show that there is an open path $\breve{\pi}$ from $A$ to $B\cup \Ic\cup \{I_d\}_{d\in D}\cup T$ given $C\cup T\cup \Sc_{\Af}$ in $\Af_{\Do(I_D,T)}$ for some $\Af\in [\Mf]_\Gs$. 
    
    Now, we show the existence of such $\breve{\pi}$. We first consider the case where the path $\pi$ only contains edges from $\Mf$. Then $\pi$ is an open  path in $\Mf$ from $A$ to $B\cup \Ic$ given $C$, and it does not intersect $T$. Applying the first paragraph of proof of \cref{prop:sep_mag}, we can find an open  path $\breve{\pi}$ from $A$ to $B\cup \Ic$ given $C\cup \Sc_\Af$ in $\Af$ not intersecting $T$. This path $\breve{\pi}$ satisfies the required properties.
    
    In the following, we can assume that the path $\pi$ contains nodes from $\{I_d\}_{d\in D}$ and edges not in $\Mf$. Since $\pi$ is shortest, there are no non-endnodes of $\pi$ in $\{I_d\}_{d\in D}$.
Then we have the following five cases:
\begin{enumerate}[label=\textbf{Case~\arabic*:}, ref=Case~\arabic*, leftmargin=*]
    \item   $\cdots \sus v_{n-1} \hut I_d $ on $\pi$ with $v_{n-1}\ne d$;
    \item   $\cdots \sut v_{n-1} \tut I_d$ on $\pi$ with $v_{n-1}\ne d$;
    \item  $\cdots \sus d=v_{n-1} \hut I_d $ on $\pi$;
    \item $\cdots \sut d=v_{n-1} \tut I_d $ on $\pi$;
    \item  $\cdots \sut d=v_{n-1} \out I_d $ on $\pi$.
\end{enumerate}

For Case~1, note that $d\tuh v_{n-1}$ is present and invisible in $\Mf$ by \cref{def:int_mag}. We construct an isADMG $\Af\in [\Mf]_\Gs$ from $\Mf$ by adding bidirected edge $v_{n-1}\huh d$ (\cf proof of \cref{lem:visible_edge_II}) and replacing all undirected edges $v\tut u$ with $v\tuh s_{vu} \hut u$. We define $\breve{\pi}$ in $\Af_{\Do(I_D)}$ by replacing all the undirected edges $v_i\tut v_{i+1}$ on $\pi(v_0,v_{n-1})$ with $v_i\tuh s_{v_iv_{i+1}}\hut v_{i+1}$ and adding $v_{n-1}\hut d\hut I_d$ if $d\notin C$ or adding $v_{n-1}\huh d \hut I_d$ if $d\in C$. This $\breve{\pi}$ satisfies the requirements.

For Case~2, first note that $d\tut v_{n-1}$ is in $\Mf$. We consider $\Af\in [\Mf]_\Gs$ constructed from $\Mf$ by replacing all undirected edges $v\tut u$ with $v\tuh s_{vu} \hut u$ and adding bidirected edge $s_{v_{n-1}d}\huh d$. We can construct $\breve{\pi}$ similarly to the previous case for the part involving $v_0,\ldots, v_{n-1}$ but adding $v_{n-1}\tuh s_{v_{n-1}d}\hut d\hut I_d$ if $d\notin C$ or adding $v_{n-1}\tuh s_{v_{n-1}d}\huh d\hut I_d$ if $d\in C$, which satisfies the requirements.

For Cases~3 and 4, we consider $\Af=\cat{isADMG}(\Mf)$ and construct $\breve{\pi}$ by replacing all undirected edges $v\tut u$ on $\pi$ with $v\tuh s_{vu} \hut u$ and keeping other parts of $\pi$ intact. Then $\breve{\pi}$ satisfies the requirement. 

For Case~5 where we have an circle edge $d\out I_d$, the only possibility is that $\sut d \out I_d$.  Therefore, we can reduce this case to Case 3 and Case 4. 

This finishes the proof. 

\end{proof}

\begin{lemma}[Ancestors of selection nodes I]\label{lem:anc_selec_I}
Let $\MAG$ be an iMAG and $a,b\in \Vc$. There exists an edge $a\hus b$ in $\Mf$ iff $a\notin \anc_{\Af}(\Sc)$ for all $\Af=(\Ic,\Vc,\Sc,\tilde{\Ec})\in [\Mf]_\Gs$.
\end{lemma}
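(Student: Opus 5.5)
The plan is to prove the two implications separately. The forward direction comes straight from \cref{def:graph_rep}. The converse needs an explicit construction of a represented isADMG in which $a$ is an ancestor of a selection node.

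\emph{($\Rightarrow$).} Suppose $a\hus b$ is in $\Mf$, and let $\Af\in[\Mf]_\Gs$ be arbitrary. Clause~3 of \cref{def:graph_rep} gives $a\notin\anc_\Af(\{b\}\cup\Sc)$, and in particular $a\notin\anc_\Af(\Sc)$. Nothing more is needed here.

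\emph{($\Leftarrow$).} I argue by contraposition. Assume no edge of $\Mf$ has an arrowhead at $a$; I will exhibit $\Af\in[\Mf]_\Gs$ with $a\in\anc_\Af(\Sc_\Af)$. The absence of an arrowhead at $a$ rules out edges $i\tuh a$ with $i\in\Ic$. So every edge at $a$ is either $a\tut c$ or $a\tuh c$.

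\textbf{Case 1: $a\tut c$ is in $\Mf$ for some $c$.} Take $\Af=\cat{isADMG}(\Mf)\in[\Mf]_\Gs$. This graph contains $a\tuh s_{ac}$, so $a\in\anc_\Af(\Sc_\Af)$.

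\textbf{Case 2: otherwise.} Here all edges at $a$ are $a\tuh c$, or $a$ is isolated. Let $\Af$ be obtained from $\cat{isADMG}(\Mf)$ by adding one fresh selection node $s_a$ and the single edge $a\tuh s_a$. Then $a\in\anc_\Af(\Sc_\Af)$ holds trivially, and the work is to check that $\Mf$ still represents $\Af$.

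\begin{enumerate}[label=(\roman*)]
\item \emph{Ancestral sets.} In $\cat{isADMG}(\Mf)$ no edge points into $a$, because every such edge would come from an edge of $\Mf$ with an arrowhead at $a$. Hence $\anc(a)=\{a\}$. Adding $a\tuh s_a$ therefore changes the ancestral sets only in one way: $a$ becomes an ancestor of $\Sc$. Ancestor sets only grow, so all tail conditions (clause~4) remain valid.
\item \emph{Arrowhead conditions (clause~3).} The only affected statement would be one of the form $a\notin\anc(\{b\}\cup\Sc)$. Such a statement is required only when $\Mf$ has an arrowhead at $a$, and there is none.
\item \emph{Adjacencies (clause~2).} An inducing path cannot pass through $s_a$, because $s_a$ is a leaf and hence is never a non-endnode. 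The only other way a new $\Sc$-inducing path could arise is through a collider that has newly become an ancestor of $\Sc$, and the only such node is $a$. But $a$ has no incoming edges in $\Af$, so it is never a collider. Thus the $\Sc$-inducing paths between nodes in $\Ic\cup\Vc$ are exactly those of $\cat{isADMG}(\Mf)$.
\end{enumerate}
Since $\cat{isADMG}(\Mf)\in[\Mf]_\Gs$, items (i)--(iii) show $\Af\in[\Mf]_\Gs$.

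The main point to get right is the adjacency check in Case~2, namely that adding $a\tuh s_a$ creates no new inducing path. That check rests entirely on the observation that $\anc_\Af(a)=\{a\}$ and that $a$ is never a collider.
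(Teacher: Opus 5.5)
Your proof is correct and follows the paper's route. The forward direction is clause~3 of \cref{def:graph_rep}, and the converse adds a fresh selection node $s_a$ with edge $a\tuh s_a$ to $\cat{isADMG}(\Mf)$, which is exactly the paper's construction. Your Case~1 is unnecessary: your Case~2 verification only uses that no edge of $\Mf$ points into $a$, so the same construction works uniformly, as in the paper. Your explicit check of clauses~2--4 is more careful than the paper's one-line justification.
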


\begin{proof}[Proof of \cref{lem:anc_selec_I}]
    If there exists an edge $a\hus b$ in $\Mf$, then $a\notin \anc_{\Af}(\{b\}\cup \Sc)$ and therefore $a\notin \anc_{\Af}(\Sc)$ for all $\Af=(\Ic,\Vc,\Sc,\tilde{\Ec})\in [\Mf]_\Gs$.

    The goal now is to show that if there does not exist an edge $a\hus b$ in $\Mf$ then there exists an isADMG $\Af=(\Ic,\Vc,\Sc,\tilde{\Ec})\in [\Mf]_\Gs$ such that $a\in \anc_{\Af}(\Sc)$. We can construct an isADMG $\Af$ by adding a node $s_a$ and an edge $a \tuh s_a$ to $\cat{isADMG}(\Mf)$. We next show that $\cat{MAG}(\Af)=\Mf$. Since there are no arrowheads towards node $a$, we have that for any $u,v\in \Vc\cup \Ic$ with $\{u,v\}\nsubseteq \Ic$: 
    \begin{enumerate}
        \item [(i)] there is an $\Sc$-inducing path from $u$ to $v$ in $\Af$ iff $u$ and $v$ are adjacent in $\Mf$,

        \item [(ii)] $u\notin \anc_{\Af}(\{v\}\cup \Sc)$ iff $u\hus v$ in $\Mf$, and

        \item [(iii)] $u\in \anc_{\Af}(\{v\}\cup \Sc)$ iff $u\tus v$ in $\Mf$.
    \end{enumerate}
    Hence, $\cat{MAG}(\Af)=\Mf$.
\end{proof}

This lemma implies that if there are no arrowheads towards node $a$ in an iMAG $\Mf$, then there must exist one isADMG $\Af$ represented by $\Mf$ such that node $a$ is an ancestor of selection nodes in $\Af$ (though $a$ may also be a non-ancestor of selection nodes in some isADMG $\tilde{\Af}$ represented by $\Mf$).

If there is an undirected edge connecting to node $a$ in an iMAG $\Mf$, then node $a$ must be an ancestor of a selection node in every isADMG $\Af$ represented by $\Mf$. 

\begin{lemma}[Ancestors of selection nodes II]\label{lem:anc_selec_II}
Let $\MAG$ be an iMAG and $a,b\in \Vc$. There exists an edge $a\tut b$ in $\Mf$ iff $\{a,b\}\subseteq  \anc_{\Af}(\Sc)$  for all $\Af=(\Ic,\Vc,\Sc,\tilde{\Ec})\in [\Mf]_\Gs$.
\end{lemma}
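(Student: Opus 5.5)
The forward direction follows straight from the definition of representation. If $a\tut b$ is in $\Mf$ and $\Af\in[\Mf]_\Gs$, then Clause~4 of \cref{def:graph_rep}, applied to both endpoints, gives
\[
a\in\anc_\Af(\{b\}\cup\Sc) \quad\text{and}\quad b\in\anc_\Af(\{a\}\cup\Sc).
\]
Suppose $a\notin\anc_\Af(\Sc)$. Then $a\in\anc_\Af(b)$ with $a\ne b$. From $b\in\anc_\Af(\{a\}\cup\Sc)$, either $b\in\anc_\Af(a)$ or $b\in\anc_\Af(\Sc)$. The first option creates a directed cycle in the acyclic $\Af$. The second gives $a\in\anc_\Af(b)\subseteq\anc_\Af(\Sc)$, contradicting the assumption. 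Hence $a\in\anc_\Af(\Sc)$, and symmetrically $b\in\anc_\Af(\Sc)$.

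For the converse I argue by contraposition. Assume the edge between $a$ and $b$ in $\Mf$ is not $a\tut b$; I must exhibit some $\Af\in[\Mf]_\Gs$ in which $a$ or $b$ is not an ancestor of $\Sc$. The statement is only meaningful when $a$ and $b$ are adjacent, since the right-hand side says nothing about an edge; I read it under that adjacency convention, as in \cref{lem:anc_selec_I}. There are then three cases.

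\textbf{Case $a\hus b$.} By \cref{lem:anc_selec_I}, $a\notin\anc_\Af(\Sc)$ for every $\Af$, so any $\Af\in[\Mf]_\Gs$ works.

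\textbf{Case $a\suh b$.} Symmetrically, \cref{lem:anc_selec_I} gives $b\notin\anc_\Af(\Sc)$ for every $\Af$.

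\textbf{Remaining case.} The only edge types in an iMAG are $\tuh,\hut,\huh,\tut$, so the two cases above exhaust everything except $a\tut b$.

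The main obstacle is not the iff itself but making sure the case analysis really exhausts the edge types when $a$ or $b$ is an input node. Here the clause $a,b\in\Vc$ helps: no arrowheads point into input nodes, so the same argument covers that situation. I expect this step to be routine once \cref{lem:anc_selec_I} is in hand. The nonadjacent situation, if included in the statement, would need a separate witness. Taking $\cat{isADMG}(\Mf)$ handles it whenever some node among $a,b$ has no undirected edge. If both have undirected edges but are not joined to each other, I would rely on the adjacency convention, noting that the lemma is used only for adjacent pairs.
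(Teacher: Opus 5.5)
Your forward direction is exactly the paper's argument (Clause~4 of \cref{def:graph_rep} at both endpoints, then acyclicity of $\Af$), and that is the only direction the paper's proof actually addresses or later uses (e.g.\ in \cref{lem:sep_hint_mag}). Your contrapositive for the converse, via Clause~3, i.e.\ the easy half of \cref{lem:anc_selec_I}, is correct for adjacent $a,b$, since $\cat{isADMG}(\Mf)\in[\Mf]_\Gs$ guarantees a witness exists.

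You should sharpen your hedge on the non-adjacent case. Suppose $a$ and $b$ are non-adjacent but both carry undirected edges, e.g.\ $\Mf$ is $a\tut c\tut b$. Then your forward direction already forces $\{a,b\}\subseteq\anc_\Af(\Sc)$ for every $\Af\in[\Mf]_\Gs$. So the literal ``if'' direction is false there, and the adjacency reading is necessary rather than a convenience.
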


\begin{proof}[Proof of \cref{lem:anc_selec_II}]
    This immediate from the definition. Indeed, we have $a\in \anc_{\Af}(\{b\}\cup \Sc_{\Af})$ and $b\in \anc_{\Af}(\{a\}\cup \Sc_{\Af})$ for all $\Af\in [\Mf]_\Gs$. Since there are no cycles in $\Af$, we must have $a,b\in  \anc_{\Af}( \Sc_{\Af})$.
\end{proof}

The next several lemmas explore the relation between a soft-manipulated iMAG $\Mf_{\Do(I_D)}$ and a soft-manipulated isADMG $\Af_{\Do(I_D)}$ where $\Af\in [\Mf]_\Gs$. 

\begin{lemma}[Soft manipulation and MAG representation I]\label{lem:sint_mag_I}
  Let $\isADMG$ be an isADMG. Denote $\Mf\coloneqq \cat{MAG}(\Af)$ and $\tilde{\Mf}\coloneqq \cat{MAG}(\Af_{\Do(I_a)})$ for $a\in \Oc$. Then the induced subgraph $\tilde{\Mf}_{\Ic\cup\Oc}$ is equal to $\Mf$ and the
  edge $I_a\tus a$ is in $\tilde{\Mf}$. Furthermore, if $a\notin \anc_{\Af}(\Sc)$ then we have $I_a\tuh a$ in $\tilde{\Mf}$, and if $a\in \anc_{\Af}(\Sc)$ then we have $I_a\tut a$ in $\tilde{\Mf}$.
\end{lemma}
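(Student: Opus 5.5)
I would prove the lemma directly from the construction of $\cat{MAG}(\cdot)$ given in the proof of \cref{prop:mag_rep}. Write $\tilde{\Af}\coloneqq\Af_{\Do(I_a)}$, so that $\tilde{\Af}$ is $\Af$ with one new input node $I_a$ and one new edge $I_a\tuh a$. Three parts must be checked: (i) adjacencies and marks among $\Ic\cup\Oc$ agree in $\tilde{\Mf}$ and $\Mf$; (ii) $I_a$ is adjacent to $a$ via an edge with a tail at $I_a$; (iii) the mark at $a$ on that edge is determined by whether $a\in\anc_{\Af}(\Sc)$.

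For (i), fix distinct $u,v\in\Ic\cup\Oc$ with $\{u,v\}\nsubseteq\Ic$.

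\textbf{Ancestry.} Since $I_a$ has no incoming edges, directed paths between nodes of $\Ic\cup\Oc\cup\Sc$ are the same in $\tilde{\Af}$ and $\Af$. Hence $u\in\anc_{\tilde{\Af}}(\{v\}\cup\Sc)$ iff $u\in\anc_{\Af}(\{v\}\cup\Sc)$.

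\textbf{Inducing paths.} An $\Sc$-inducing path between $u$ and $v$ in $\tilde{\Af}$ cannot pass through $I_a$ as a non-endnode. Such a node would have to be a collider or lie in $\Lc=\emptyset$, but $I_a$ receives no arrowheads. So the path lies in $\Af$, and the collider-ancestry condition transfers by the ancestry point above. The converse direction is trivial.

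By the case definition of orientations in the proof of \cref{prop:mag_rep}, the induced subgraph $\tilde{\Mf}_{\Ic\cup\Oc}$ therefore equals $\Mf$.

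For (ii), the edge $I_a\tuh a$ is itself an $\Sc$-inducing path in $\tilde{\Af}$ with no non-endnodes, and $\{I_a,a\}\nsubseteq\Ic\cup\{I_a\}$ because $a\in\Oc$. So $I_a$ and $a$ are adjacent in $\tilde{\Mf}$. Moreover $I_a\in\anc_{\tilde{\Af}}(\{a\}\cup\Sc)$, so the mark at $I_a$ is a tail; this also follows from the remark in that proof that input nodes receive no arrowheads. Thus $I_a\tus a$ is in $\tilde{\Mf}$.

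For (iii), the mark at $a$ is a tail iff $a\in\anc_{\tilde{\Af}}(\{I_a\}\cup\Sc)$. Since $I_a$ has no parents, $a\notin\anc_{\tilde{\Af}}(I_a)$, so this reduces to $a\in\anc_{\tilde{\Af}}(\Sc)=\anc_{\Af}(\Sc)$. This yields $I_a\tut a$ when $a\in\anc_{\Af}(\Sc)$ and $I_a\tuh a$ otherwise.

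There is no real obstacle here. The only step needing care is ruling out $I_a$ as an intermediate node on an inducing path in (i), which rests on $I_a$ being a source and $\Lc$ being empty.
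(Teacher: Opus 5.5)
Your proposal is correct and follows essentially the same route as the paper. Both arguments observe that adding the source node $I_a$ leaves $\Sc$-inducing paths and ancestral relations among $\Ic\cup\Oc\cup\Sc$ unchanged, and then read off all marks, including those on $I_a\sus a$, from the orientation rule in the proof of \cref{prop:mag_rep}. You additionally make explicit two details the paper leaves as ``easy to see'': why $I_a$ cannot be an intermediate node of an inducing path, and why $a\notin\anc_{\Af_{\Do(I_a)}}(I_a)$.
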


\begin{proof}[Proof of \cref{lem:sint_mag_I}]
  For every $c\in \Oc$ and $d\in \Oc\cup \Ic$, a path from $c$ to $d$ in $\Af$ is $\Sc$-inducing if and only if it is $\Sc$-inducing in $\Af_{\Do(I_a)}$. Also, note that
  \[
    \begin{aligned}
        &c\in \anc_{\Af}(\{d\}\cup \Sc) \quad \Longleftrightarrow \quad c\in\anc_{\Af_{\Do(I_a)}}(\{d\}\cup \Sc)\\
        & d\in \anc_{\Af}(\{c\}\cup \Sc) \quad \Longleftrightarrow \quad d\in\anc_{\Af_{\Do(I_a)}}(\{c\}\cup \Sc).
    \end{aligned}
  \]
  Therefore, we have $\tilde{\Mf}_{\Ic\cup\Oc}=\Mf$.
  By the construction of $\tilde{\Mf}$, it is easy to see that $I_a\tuh a$ is in $\tilde{\Mf}$ if $a\notin \anc_{\Af}(\Sc)$, and $I_a\tut a$ is in $\tilde{\Mf}$ if $a\in \anc_{\Af}(\Sc)$.
\end{proof}

\begin{remark}
  This means that if two isADMGs $\Af^1=(\Ic,\Oc,\Sc^1,\Ec^1)$ and $\Af^2=(\Ic,\Oc,\Sc^2,\Ec^2)$ have the same MAG representation over $\Oc$, then the MAG representations $\Mf^1$ and $\Mf^2$ of
  $\Af^1_{\Do(I_a)}$ and $\Af^2_{\Do(I_a)}$ on $\Oc$ for $a\in \Oc$ respectively have the same subgraphs on the output nodes. Also, the edges from input nodes $\Ic$ to output nodes $\Oc$ are the same in $\Mf^1$ and $\Mf^2$. The only possible difference between $\Mf^1$ and $\Mf^2$ is in the edges from input node $I_a$ to the output nodes $\Oc$.

\end{remark}

\begin{lemma}[Soft manipulation and MAG representation II]\label{lem:sint_mag_II}
  Let $\MAG$ be an iMAG and $a\in \Vc$.  Then we have that for all $b\in \Vc\sm \{a\}$
    \begin{enumerate}
      \item  the edge $I_a\tuh b$ is in $\cat{MAG}(\Af_{\Do(I_a)})$ for some $\Af\in [\Mf]_\Gs$ iff $a\tuh b$ is invisible in $\Mf$;
      \item the edge $I_a\tut b$ is in $\cat{MAG}(\Af_{\Do(I_a)})$ for some $\Af\in [\Mf]_\Gs$ iff $a\tut b$ is in $\Mf$; and
      \item the nodes $I_a$ and $b$ are adjacent in $\cat{MAG}(\Af_{\Do(I_a)})$ for some $\Af\in [\Mf]_\Gs$ iff $a\tuh b$ is invisible in $\Mf$ or $a\tut b$ is in $\Mf$.
    \end{enumerate}
\end{lemma}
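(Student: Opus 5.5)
The plan is to analyse when $I_a$ and $b$ are joined by an $\Sc$-inducing path in $\Af_{\Do(I_a)}$, and which marks the resulting edge carries. By \cref{lem:sint_mag_I}, the restriction of $\cat{MAG}(\Af_{\Do(I_a)})$ to $\Ic\cup\Vc$ equals $\Mf$, so only the edges at $I_a$ remain to be determined. Since $I_a$ is an input node whose only child is $a$, every $\Sc$-inducing path from $I_a$ to $b\neq a$ in $\Af_{\Do(I_a)}$ starts with $I_a\tuh a$. Because $a$ is then a non-endnode and $a\notin \Lc$, $a$ must be a collider. Hence the path has the form $I_a\tuh a\huh\cdots b$, and the segment from $a$ to $b$ is an $\Sc$-inducing path that is into $a$, with colliders in $\anc(\{a,b\}\cup\Sc)$. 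I will also need the condition $a\in\anc_\Af(\{I_a,b\}\cup\Sc)$, which is equivalent to $a\in\anc_\Af(\{b\}\cup\Sc)$.

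This gives the following characterization. $I_a$ and $b$ are adjacent in $\cat{MAG}(\Af_{\Do(I_a)})$ if and only if
\begin{enumerate}
\item[(i)] $a\in\anc_\Af(\{b\}\cup\Sc)$, and
\item[(ii)] there is an $\Sc$-inducing path from $a$ to $b$ in $\Af$ that is into $a$.
\end{enumerate}
Condition (ii) makes $a,b$ adjacent in $\Mf$, and (i) puts a tail at $a$, so the edge in $\Mf$ is $a\tuh b$ or $a\tut b$. For the marks of the new edge: the mark at $I_a$ is always a tail. The mark at $b$ is a tail iff $b\in\anc_\Af(\{I_a\}\cup\Sc)=\anc_\Af(\Sc)$, which by \cref{lem:anc_selec_I,lem:anc_selec_II} happens exactly when the $\Mf$-edge is $a\tut b$. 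Here I use acyclicity: if the edge is $a\tuh b$, then $b\notin\anc_\Af(\{a\}\cup\Sc)$.

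For item 1, direction ($\Rightarrow$): an edge $I_a\tuh b$ forces $a\tuh b$ in $\Mf$ together with an $\Sc$-inducing path into $a$. By \cref{lem:visible_edge_I}, this is impossible if $a\tuh b$ is visible, so the edge is invisible. Direction ($\Leftarrow$): if $a\tuh b$ is invisible, \cref{lem:visible_edge_II} supplies $\tilde\Af\in[\Mf]_\Gs$ containing $a\huh b$. That edge is an inducing path into $a$, and $a\in\anc(\{b\})$, so $I_a\tuh a\huh b$ is inducing and yields $I_a\tuh b$. One caveat: \cref{lem:visible_edge_II} is stated for $a,b\in\Oc$, but here $a\in\Vc$ anyway.

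For item 2, direction ($\Rightarrow$) follows from the mark analysis above. For ($\Leftarrow$), I start from $a\tut b$ and take $\Af=\cat{isADMG}(\Mf)$ augmented by a bidirected edge $a\huh s_{ab}$, mirroring Case~2 in the proof of \cref{thm:sep_hsint_mag}. The path $I_a\tuh a\huh s_{ab}\hut b$ then has colliders $a$ and $s_{ab}$, both in $\anc(\Sc)$, so it is inducing and both $a,b\in\anc(\Sc)$. This gives $I_a\tut b$.

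The main obstacle is verifying that this augmented $\Af$ still lies in $[\Mf]_\Gs$, i.e.\ that adding $a\huh s_{ab}$ creates no new inducing paths between non-adjacent nodes of $\Mf$. My approach is to argue as in \cref{lem:visible_edge_II}. Any new inducing walk through $a\huh s_{ab}$ must continue into $b$ via $s_{ab}\hut b$ or end at $s_{ab}$, and neither of these can be an endpoint in $\Ic\cup\Vc$ unless the walk is $x\cdots a\huh s_{ab}\hut b$. In that case $b$ is an endpoint, and the prefix is an inducing walk into $a$. Because $a$ has an undirected edge in $\Mf$, no arrowheads point into $a$ in $\Mf$. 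Hence any such prefix would correspond to an $\Mf$-edge $x\suh a$, a contradiction, unless $x=a$. Item 3 is then the union of items 1 and 2, using the adjacency characterization above.
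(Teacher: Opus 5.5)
Your proposal is correct and follows essentially the paper's route. It uses the same analysis of inducing paths out of $I_a$ and the same resulting mark analysis, \cref{lem:visible_edge_II} for item~1 ($\Leftarrow$), and an explicit witness isADMG for item~2 ($\Leftarrow$). The only differences are minor. For item~1 ($\Rightarrow$) you cite \cref{lem:visible_edge_I}, where the paper re-runs that lemma's concatenation argument inline. Your witness, $\cat{isADMG}(\Mf)$ plus $a\huh s_{ab}$, leaves all ancestral relations unchanged, whereas the paper replaces $a\tut b$ by $a\tuh b$ together with $a\huh s_{ab}\hut b$. Your check that this witness lies in $[\Mf]_\Gs$ is sound and supplies a justification the paper omits, but one step should be stated precisely. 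On a path $x\cdots y\suh a\huh s_{ab}\hut b$, the edge $y\suh a$ comes from the neighbour $y$ on the prefix, not necessarily from $x$. It would have to be an edge of $\cat{isADMG}(\Mf)$ into $a$, i.e.\ an arrowhead at $a$ in $\Mf$, and none exists because $a$ carries an undirected edge.
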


\begin{proof}[Proof of \cref{lem:sint_mag_II}]
  \textbf{Step 0: preparatory work}. Let $b\in \Vc\sm \{a\}$ and suppose that there exists $\Af=(\Ic,\Oc,\Sc,\tilde{\Ec})\in [\Mf]_\Gs$ such that $I_a \sus b$ is in $\cat{MAG}(\Af_{\Do(I_a)})$. Then there is an $\Sc$-inducing path $\pi$ from $I_a$ to $b$ in $\Af_{\Do(I_a)}$. On the path $\pi$, all the colliders are in $\anc_{\Af_{\Do(I_a)}}(\{b\}\cup \Sc)$ (this implies that $a\in \anc_{\Af_{\Do(I_a)}}(\{b\}\cup \Sc)$), since
 $\anc_{\Af_{\Do(I_a)}}(I_a)=\{I_a\}$. Removing the first edge of the path $\pi$ gives an $\Sc$-inducing path from $a$ to $b$ in $\Af$. This implies that $a$ and $b$ are adjacent in $\Mf$. Assume that $a\notin \anc_{\Af}(\Sc)$.
 Then $a\in \anc_{\Af}(b)$. It implies that $b\notin \anc_{\Af}(\{a\}\cup \Sc)$, otherwise it violates the ancestral property of a MAG. Therefore, we have $a\tuh b$ in $\Mf$ and $I_a\tuh b$ in $\cat{MAG}(\Af_{\Do(I_a)})$. Assume now that $a\in \anc_{\Af}(\Sc)$. Then we have $a\tus b$ in $\Mf$ and $I_a\tus b$ in $\cat{MAG}(\Af_{\Do(I_a)})$. In this case, if we have $a\tuh b$ in $\Mf$ then we have $I_a\tuh b$ in $\cat{MAG}(\Af_{\Do(I_a)})$, but if we have $a\tut b$ in $\Mf$ then we have $I_a\tut b$ in $\cat{MAG}(\Af_{\Do(I_a)})$. Overall, this shows that if $I_a$ and $b$ are adjacent in $\cat{MAG}(\Af_{\Do(I_a)})$ for some $\Af\in [\Mf]_\Gs$, then the edge can only be of the type $I_a\tuh b$ or $I_a\tut b$. Furthermore, if $I_a$ and $b$ are adjacent in $\cat{MAG}(\Af_{\Do(I_a)})$, then nodes $a$ and $b$ must be adjacent in $\Mf$, and $I_a\tuh b$ in $\cat{MAG}(\Af_{\Do(I_a)})$ implies $a\tuh b$ in $\Mf$.

 \textbf{Step 1: show ``$\Leftarrow$'' of (1)}. We now show that if $a\tuh b$ is invisible in $\Mf$, then there exists $\Af=(\Ic,\Oc,\Sc,\tilde{\Ec})\in [\Mf]_\Gs$ such that $I_a\tuh b$ in $\cat{MAG}(\Af_{\Do(I_a)})$. Since $a\tuh b$ is invisible in $\Mf$, there exists an isADMG $\Af=(\Ic,\Oc,\Sc,\tilde{\Ec})$ that is represented by $\Mf$ such that $a\huh b$ is in $\Af$ by Lemma~\ref{lem:visible_edge_II}. Since $a\tuh b$ is in $\Mf$ and $\Mf$ represents $\Af$, we have that $a\in \anc_{\Af}(\{b\}\cup \Sc)$ and $b\notin \anc_{\Af}(\{a\}\cup \Sc)$. This also implies that the path $I_a\tuh a\huh b$ is an $\Sc$-inducing path from $I_a$ to $b$ in $\Af_{\Do(I_a)}$.
 Therefore, there must be a directed edge $I_a\tuh b$ in $\cat{MAG}(\Af_{\Do(I_a)})$.

 \textbf{Step 2: show ``$\Rightarrow$'' of (1)}. We next show that if there exists $\Af\in[\Mf]_\Gs$ such that
 $I_a\tuh b$ is in $\cat{MAG}(\Af_{\Do(I_a)})$, then the edge $a\tuh b$ must be present and invisible in $\Mf$. First note that by the first paragraph of the proof, we know that the edge $a\tuh b$ must be in $\Mf$.
 Let $\Af=(\Ic,\Oc,\Sc,\tilde{\Ec})\in[\Mf]_\Gs$ be such that $I_a\tuh b$ is in $\cat{MAG}(\Af_{\Do(I_a)})$.  If there exists a node $c$ such that $c\suh a\tuh b$ is in $\Mf$, then there are inducing paths 
 \[
 \begin{aligned}
     &\pi_1: c\suh v_1^1\sus \cdots \sus v_{n-1}^1 \suh a\\
     &\pi_2: I_a\tuh a\hus v_1^2\sus \cdots \sus v_{m-1}^2 \suh b.
 \end{aligned}
 \]
in $\Af_{\Do(I_a)}$ ($I_a\tuh b$ is in $\cat{MAG}(\Af_{\Do(I_a)})$), by Lemma~\ref{lem:inducing_path}. Deleting $I_a$ from $\pi_2$ and concatenating these two walks give a walk 
\[
\pi: c\suh v_1^1\sus \cdots \sus v_{n-1}^1 \suh a \hus v_1^2\sus \cdots \sus v_{m-1}^2 \suh b
\]
Since $\pi$ does not contain $I_a$, the walk $\pi$ is also in $\Af$. On $\pi$, all colliders are in $\anc_{\Af}(\{a,b,c\}\cup \Sc)$. Note that $a\in \anc_{\Af}(\{b\}\cup \Sc)$. Therefore, all colliders on $\pi$ are in $\anc_{\Af}(\{b,c\}\cup \Sc)$ and $\pi$ is an $\Sc$-inducing walk from $c$ to $b$ in $\Af$. This means that $c$ must be adjacent to $b$ in $\Mf$. Now assume that there exists a node $c$ such that the path $c\suh v_1\huh \cdots \huh v_{n-1}\huh a$ is in $\Mf$ with $v_1,\ldots,v_{n-1}\in \pa_{\Mf}(b)$ for some integer $n\geq 2$. Similarly as before, we have inducing paths 
\[
\begin{aligned}
    &\pi_1:c\suh v_1^1\sus \cdots \sus v_{n_1-1}^1 \suh v_1,\\ 
    &\pi_2:v_1\huh v_1^2\sus \cdots \sus v_{n_2-1}^2 \suh a, \text{ and }\\ 
    &\pi_3:I_a\tuh a\hus v_1^3\sus \cdots \sus v_{n_3-1}^3 \suh b
\end{aligned}
\]
in $\Af$. Concatenating these three paths and deleting $I_a$ gives a walk 
\[
\pi: c\suh v_1^1\sus \cdots \sus v_{n_1-1}^1 \suh v_1 \huh v_1^2\sus \cdots \sus v_{n_2-1}^2 \suh a \hus v_1^3\sus \cdots \sus v_{n_3-1}^3 \suh b.
\]
All colliders on $\pi$ are in $\anc_{\Af}(\{a,b,c,v_1\}\cup \Sc)$. Since $v_1\in \pa_{\Mf}(b)$, we know that $v_1\in \anc_{\Af}(\{b\}\cup \Sc)$. Also recall that $a\in \anc_{\Af}(\{b\}\cup \Sc)$. Thus, $\anc_{\Af}(\{a,b,c,v_1\}\cup \Sc)\subseteq \anc_{\Af}(\{b,c\}\cup \Sc)$. Hence, the walk $\pi$ is an $\Sc$-inducing walk from $c$ to $b$, which means that $c$ must be adjacent to $b$. Overall, the edge $a\tuh b$ must be invisible in $\Mf$.

 \textbf{Step 3: show ``$\Leftarrow$'' of (2)}. We show that if the edge $a\tut b$ is in $\Mf$ then there exists $\Af=(\Ic,\Oc,\Sc,\tilde{\Ec})\in [\Mf]_\Gs$ such that $I_a\tut b$ is in $\cat{MAG}(\Af_{\Do(I_a)})$. We can construct an isADMG $\Af=(\Ic,\Oc,\Sc,\tilde{\Ec})$ by
 \begin{itemize}
     \item replacing $a\tut b$ with $a\tuh b$ and $a\huh s_{ab}\hut b$ in $\Mf$,
     \item replacing all the other undirected edges $c\tut d$ by $c\tuh s_{cd}\hut d$ in $\Mf$,
     \item defining $\Sc\coloneqq \{s_{yz}: y\tut z \text{ in } \Mf \}/\sim $, where $s_{yz}\sim s_{zy}$.
 \end{itemize}
 Then $\Af\in [\Mf]_\Gs$, and the edge $I_a\tut b$ is in $\cat{MAG}(\Af_{\Do(I_a)})$, since 
 \[
 I_a\tuh a \huh s_{ab} \hut b
 \]
 is an $\Sc$-inducing path ($a,s_{ab}$ are colliders in $\anc_\Af(\Sc)$) and $I_a,b\in \anc_\Af(\Sc)$.

 \textbf{Step 4: show ``$\Rightarrow$'' of (2)}. Finally, we prove that if there exists $\Af\in[\Mf]_\Gs$ such that the edge $I_a\tut b$ is in $\cat{MAG}(\Af_{\Do(I_a)})$ then the edge $a\tut b$ is in $\Mf$. From the first paragraph of the proof, we know that we must have $a\tus b$ in $\Mf$, since we have $I_a\tut b$ in $\cat{MAG}(\Af_{\Do(I_a)})$. Note that $a\tus b$ must be $a\tut b$, otherwise we would have $I_a\tuh b$ in $\Mf$ by Step~0 of the proof, which is a contradiction.

 \textbf{Step 5: show (3)}. By Step~0 of the proof, we know that there can only be two types of edges $I_a\tuh b$ and $I_a\tut b$ between nodes $I_a$ and $b$. Given the first and second items of \cref{lem:sint_mag_II}, we can conclude that there exists $\Af\in [\Mf]_\Gs$ such that the nodes $I_a$ and $b$ are adjacent in $\cat{MAG}(\Af_{\Do(I_a)})$ iff $a\tuh b$ is invisible in $\Mf$ or $a\tut b$ is in $\Mf$.

\end{proof}

\begin{remark}
  \cref{lem:anc_selec_I,lem:sint_mag_I,lem:sint_mag_II} can be easily generalized to the case where the node $a$ is replaced by a subset of nodes $A\subseteq \Vc$.
\end{remark}

\begin{lemma}[Soft manipulation and MAG representation III]\label{lem:sint_mag_III}
    Let $\MAG$ be an iMAG. Then for every $\Af\in [\Mf]_\Gs$ there exists an iMAG $\tilde{\Mf}\in [\Mf_{\Do(I_D)}]_\Ms$ such that $\cat{MAG}(\Af_{\Do(I_D)})$ is a subgraph of $ \tilde{\Mf}$.
\end{lemma}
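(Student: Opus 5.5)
The plan is to construct $\tilde{\Mf}$ explicitly from $\Mf' \coloneqq \cat{MAG}(\Af_{\Do(I_D)})$ by adding edges and resolving circle marks of $\Mf_{\Do(I_D)}$. Fix $\Af\in[\Mf]_\Gs$. By the set version of \cref{lem:sint_mag_I} (see the remark after \cref{lem:sint_mag_II}), the induced subgraph $\Mf'_{\Ic\cup\Vc}$ equals $\Mf$. Moreover, each edge between $I_d$ and $d$ is $I_d\tuh d$ if $d\notin\anc_\Af(\Sc_\Af)$ and $I_d\tut d$ otherwise. By the set version of \cref{lem:sint_mag_II}, any further edge $I_d\sus b$ in $\Mf'$ is either $I_d\tuh b$ with $d\tuh b$ invisible in $\Mf$, or $I_d\tut b$ with $d\tut b$ in $\Mf$. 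Comparing with \cref{def:int_mag}, every edge of $\Mf'$ therefore already appears in $\Mf_{\Do(I_D)}$ with compatible marks.

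For the edges $I_d\tuo d$ in $\Ec_3$, the mark in $\Mf'$ is fixed by whether $d\in\anc_\Af(\Sc_\Af)$. The only mismatch between $\Mf'$ and $\Mf_{\Do(I_D)}$ is missing adjacencies $I_d\tuh b$ (respectively $I_d\tut b$), coming from edges $d\tuh b$ that are invisible in $\Mf$ but for which $\Af$ has no inducing path from $I_d$ to $b$. This is the same phenomenon as in \cref{ex:countex_ci}.

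So I would define $\tilde{\Mf}$ as $\Mf_{\Do(I_D)}$ with each circle $I_d\tuo d$ replaced by the mark it carries in $\Mf'$. The goal is then to show that $\tilde{\Mf}$ is an iMAG and that $[\tilde{\Mf}]_\Gs\subseteq[\Mf_{\Do(I_D)}]_\Gs$. Since $\tilde{\Mf}$ contains $\Mf'$ as a subgraph by construction, this suffices. I would build a witness $\tilde{\Af}$ by taking $\Af_{\Do(I_D)}$ and adding, for each $d\in D$ and each invisible $d\tuh b$ in $\Mf$, the edge $d\huh b$ if it is missing, in the manner of the proof of \cref{lem:visible_edge_II}. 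Where $\Ec_5$ is involved, I would add a bidirected edge $d\huh s_{db}$ instead, as in Step~3 of \cref{lem:sint_mag_II}. Then I would check that $\cat{MAG}(\tilde{\Af})=\tilde{\Mf}$: its restriction to $\Ic\cup\Vc$ should still be $\Mf$, and its $I_d$-edges should be exactly those listed.

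The main obstacle is proving that adding several bidirected edges simultaneously creates no new adjacencies. The remark after \cref{lem:visible_edge_II} (\cref{fig:countex_invis}) shows that adding all invisible-edge confounders at once can alter the MAG. I would handle this with the inductive argument of \cref{lem:visible_edge_II}. Take a shortest new $\Sc$-inducing walk and look at the first added edge $d\huh b$ on it. Invisibility of $d\tuh b$ forces each collider preceding $d$ on the walk to be a parent of $b$, which produces a shorter inducing walk and a contradiction.

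If this direct route fails, I would instead add the confounders one at a time. After each addition I would use \cref{lem:vis_sint} and the fact that visibility depends only on $\Mf$ to argue that the resulting MAG is still $\Mf$ on $\Ic\cup\Vc$. Ancestral relations are unaffected, since only bidirected edges are added. Finally, maximality and ancestrality of $\tilde{\Mf}$ itself follow because it is the MAG of an isADMG, by \cref{prop:mag_rep}.
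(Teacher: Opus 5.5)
Your construction of $\tilde{\Mf}$ matches the paper's proof exactly: you resolve each $I_d\tuo d$ according to whether $d\in\anc_{\Af}(\Sc_\Af)$. Your edge-by-edge comparison with $\cat{MAG}(\Af_{\Do(I_D)})$ via \cref{lem:sint_mag_I,lem:sint_mag_II} is also the same. The paper then simply asserts that this orientation yields an iMAG in $[\Mf_{\Do(I_D)}]_\Ms$.

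\textbf{Where your argument fails.} Your extra justification of that membership does not work. Adding $d\huh b$ to $\Af_{\Do(I_D)}$ for every invisible $d\tuh b$ can create adjacencies that are absent from $\tilde{\Mf}$. In \cref{fig:countex_invis}, take $D=\{a\}$ and $\Af=\Mf$. Both $a\tuh b$ and $a\tuh c$ are invisible. The two added confounders give the inducing path $b\huh a\huh c$, so $\cat{MAG}(\tilde{\Af})$ has a $b$--$c$ edge.

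\begin{itemize}
\item Your shortest-walk rescue breaks precisely here. The walk consists only of added edges, and the $\Mf$-edge underlying the first one, $a\tuh b$, points out of $a$, so the visibility argument for $a\tuh c$ cannot be applied.
\item Adding the confounders one at a time ends in the same graph, so it cannot help.
\item \cref{lem:visible_edge_II} only licenses a single confounder on top of the canonical isADMG. For a general $\Af\in[\Mf]_\Gs$ (e.g.\ $\Mf$ plus $a\huh c$), even one addition changes the MAG.
\item Nodes $s_{db}$ exist only in the canonical construction, so the $\Ec_5$ step is not available for arbitrary $\Af$.
\end{itemize}

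\textbf{How to repair it.} No witness is needed. The inclusion $[\tilde{\Mf}]_\Gs\subseteq[\Mf_{\Do(I_D)}]_\Gs$ is immediate, since $\tilde{\Mf}$ has the same skeleton and only resolves circles. What remains is to check directly that $\tilde{\Mf}$ is ancestral and maximal.

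\emph{Ancestrality.} New edges point out of input nodes, so no (almost) directed cycles arise. Every output endpoint of a tail--tail edge lies in $\anc_{\Af}(\Sc_\Af)$ (\cref{lem:anc_selec_II}). Hence it carries no arrowhead in $\Mf$ (\cref{lem:anc_selec_I}) and receives none from edges of the form $I_{d'}\tuh x$.

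\emph{Maximality.} Inducing paths cannot pass through input nodes, so only pairs $(I_d,b)$ with $b\in\Vc$ matter.
\begin{itemize}
\item Such a path starts either with $I_d\tuh d$ or with $I_d\tuh v_1$ for an invisible $d\tuh v_1$.
\item Delete $I_d$ in the first case, or replace it by $d$ in the second. This yields an inducing path in $\Mf$ from $d\in\anc_\Mf(b)$ to $b$, so maximality of $\Mf$ gives $d\tuh b$ or $d\tut b$.
\item If $d\tut b$, or if $d\tuh b$ is invisible, then $I_d$ and $b$ are already adjacent in $\tilde{\Mf}$.
\item If $d\tuh b$ is visible, concatenate the inducing paths supplied by \cref{lem:inducing_path}. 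In the second case, precede them with the confounder $d\huh v_1$ from \cref{lem:visible_edge_II}. This gives an $\Sc$-inducing walk from $d$ to $b$ into $d$ in some member of $[\Mf]_\Gs$, contradicting \cref{lem:visible_edge_I}.
\end{itemize}
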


\begin{proof}[Proof of \cref{lem:sint_mag_III}]
    Let $\Af\in [\Mf]_\Gs$. We orient all the circles of $I_d\tuo d$ for $d\in D$ in $\Mf_{\Do(I_D)}$ by tails if $d\in \anc_{\Af}(\Sc_{\Af})$ and by arrowheads if $d\notin \anc_{\Af}(\Sc_{\Af})$. This gives us an iMAG $\tilde{\Mf}\in [\Mf_{\Do(I_D)}]_\Ms$. 
    By Lemma~\ref{lem:sint_mag_I}, the subgraph $\tilde{\Mf}_{\Ic\cup \Vc}$ is equal to the subgraph $\cat{MAG}(\Af_{\Do(I_D)})_{\Ic\cup \Vc}$. By \cref{def:int_mag,lem:sint_mag_I,lem:sint_mag_II}, for $d\in D$ and $b\in \Vc$ we have that if $I_d\tuh b$ is in $\cat{MAG}(\Af_{\Do(I_d)})$ then $I_d\tuh b $ is in $\tilde{\Mf}$ and if $I_d\tut b$ is in $\cat{MAG}(\Af_{\Do(I_d)})$ then $I_d\tut b $ is in $\tilde{\Mf}$. Overall, this implies that $\cat{MAG}(\Af_{\Do(I_D)})$ is a subgraph of $\tilde{\Mf}$.
\end{proof}

\begin{lemma}[Separations in hard-manipulated MAGs]\label{lem:sep_hint_mag}
    Let $\MAG$ be an iMAG and $T\subseteq \Ic\cup \Vc$. Let $A\subseteq \Vc\sm T$ and $B,C\subseteq (\Ic\cup \Vc)\sm T $ be pairwise disjoint. Then we have
  \[
      A \sep{\mathsf{id}}{\Mf_{\Do(T)}} B\mid C\cup T \quad \Longleftrightarrow \quad \forall \Af\in[\Mf]_\Gs: \  A \sep{\mathsf{id}}{\Af_{\Do(T)}} B\mid C\cup T \cup \Sc_{\Af}.
  \]
\end{lemma}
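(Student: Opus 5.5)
The plan is to prove the two directions separately, reducing each to \cref{prop:sep_mag} applied to graphs in which $T$ has become a set of input nodes. The idea is to compare $\Mf_{\Do(T)}$ with the representing iMAG of $\Af_{\Do(T)}$.

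\textbf{Soundness ($\Rightarrow$).} Fix $\Af\in[\Mf]_\Gs$ and put $\Mf'\coloneqq\cat{MAG}(\Af_{\Do(T)})$. I will show that every edge of $\Mf'$ between nodes of $(\Ic\cup\Vc)\sm T$ or between $T$ and $\Vc\sm T$ is matched in $\Mf_{\Do(T)}$, in the sense relevant for open paths given $C\cup T$. Take an $\Sc$-inducing path $\rho$ in $\Af_{\Do(T)}$ between $u$ and $v$. It contains no node of $T$ as a collider, since nodes of $T$ are inputs with no incoming edges. It also contains no node of $T$ as an interior non-collider, because interior nodes of an inducing path are colliders. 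Hence $\rho$ is an $\Sc$-inducing path in $\Af$, and $u,v$ are adjacent in $\Mf$.

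For the marks, ancestry in $\Af_{\Do(T)}$ implies ancestry in $\Af$. So every tail in $\Mf'$ is a tail in $\Mf$. An arrowhead in $\Mf'$ at $v\notin T$ may correspond to a tail in $\Mf$, but this only makes $\Mf'$ have more colliders and fewer non-colliders. Arrowheads into $T$ are removed in both graphs, and edges among $T\cup\Ic$ are deleted in both.

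Now take a $(C\cup T\cup\Sc_\Af)$-open path in $\Af_{\Do(T)}$ from $A$ to $B\cup\Ic\cup T$. By the proof of \cref{prop:sep_mag}, applied to the isADMG $\Af_{\Do(T)}$ whose input set is $\Ic\cup T$, this yields a $(C\cup T)$-open walk in $\Mf'$ all of whose colliders lie in $C$. Conditioning on $T$ is harmless here: $T$ consists of input nodes, so a node of $T$ can only occur as an endpoint, and such an endpoint is already a target in the definition of $id$-separation. Transfer this walk to $\Mf_{\Do(T)}$ edge by edge. Each collider of the walk lies in $C$ and is still open in $\Mf_{\Do(T)}$, because we only need $v_i\in C$ or an ancestor of $C$. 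The delicate case is a non-collider of $\Mf'$ that becomes a collider in $\Mf_{\Do(T)}$; this cannot happen because tails of $\Mf'$ remain tails. Finally, shortening the walk to a path is standard.

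\textbf{Completeness ($\Leftarrow$).} Take a shortest $(C\cup T)$-open path $\pi$ in $\Mf_{\Do(T)}$ from $A$ to $B\cup\Ic\cup T$. Since $T$ is conditioned on and input nodes receive no arrowheads, $T$ can only appear as the final endpoint. I will realize $\pi$ in $\Af\coloneqq\cat{isADMG}(\Mf)$ by replacing each undirected edge $u\tut v$ with $u\tuh s_{uv}\hut v$, exactly as in the first paragraph of the proof of \cref{prop:sep_mag}.

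One point must be checked when the last edge is $v_{n-1}\sut t$ with $t\in T$. In $\Mf_{\Do(T)}$ this edge is $v_{n-1}\tuh t$ with $t$ an input, or an edge $t\tus v_{n-1}$. If it was $t\tut v_{n-1}$ in $\Mf$, then in $\Af_{\Do(T)}$ the edge $t\tuh s\hut v_{n-1}$ still exists, and $s$ is a collider in $\Sc_\Af$, so it is open. If it was $t\tuh v_{n-1}$, it survives unchanged. If it was $v_{n-1}\tuh t$ or $v_{n-1}\huh t$, it was removed by $\Ec_1$, so it does not occur. Colliders of $\pi$ lie in $\anc_{\Mf_{\Do(T)}}(C\cup T)$, and since $T$ has no incoming edges, this set equals $\anc(C)$. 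Directed paths in $\Mf_{\Do(T)}$ avoid $T$ except at their start, so they lift to directed paths in $\Af_{\Do(T)}$, possibly through selection nodes via $\tuh s$ paths, which are conditioned on.

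\textbf{Main obstacle.} The hardest step is the mark comparison in the soundness direction. Converting a walk in $\Af_{\Do(T)}$ into one in $\cat{MAG}(\Af_{\Do(T)})$ requires that the MAG of the manipulated graph agrees with $\Mf_{\Do(T)}$ on all edges that do not touch $T$. That agreement depends on $T$ being conditioned on, which is exactly the failure shown in \cref{ex:hint_pag}. I would first try to prove the inclusion of adjacencies and tails directly from \cref{def:graph_rep}. If arrowhead changes cause trouble, I would fall back on reasoning with $\Sc$-inducing walks inside $\Af$ itself, using \cref{lem:inducing_path}.
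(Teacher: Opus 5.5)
Your completeness direction (realizing a shortest open path of $\Mf_{\Do(T)}$ in $\cat{isADMG}(\Mf)$ by subdividing undirected edges) matches the paper's easy direction and is fine. The gap is in the transfer step of your soundness direction.

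\textbf{The transfer step fails.} Every collider of the $(C\cup T)$-open walk you obtain in $\Mf'=\cat{MAG}(\Af_{\Do(T)})$ lies in $C$. You yourself note that an arrowhead of $\Mf'$ may be a tail in $\Mf$, because ancestry running through $T$ is lost in $\Af_{\Do(T)}$. You then only rule out a non-collider becoming a collider, but the harmful case is the reverse. If an arrowhead at a collider $v_i\in C$ of the walk is a tail in $\Mf_{\Do(T)}$, then $v_i$ becomes a non-collider lying in $C$, and the transferred walk is blocked there.

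This really occurs. Take $\Ic=\Sc=\emptyset$, let $\Af$ have edges $x\huh b$, $b\tuh t\tuh a$, $b\huh a$, and put $T=\{t\}$, $A=\{x\}$, $B=\{a\}$, $C=\{b\}$. Then $\Mf'$ is $x\huh b\huh a\hut t$, and $x\huh b\huh a$ is open given $\{b,t\}$. However, $\Mf$ has $b\tuh a$ (since $b\in\anc_{\Af}(a)$ via $t$), so edge-by-edge transfer gives $x\huh b\tuh a$ in $\Mf_{\Do(t)}$, which is blocked at $b$. The lemma survives only through the edge $x\huh a$ of $\Mf$, coming from the inducing path $x\huh b\huh a$ in $\Af$; this edge is absent from $\Mf'$. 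So some rerouting is unavoidable. Your ``Main obstacle'' paragraph also misplaces the difficulty: $\Mf'$ and $\Mf_{\Do(T)}$ genuinely disagree away from $T$, and a comparison of two graphs cannot depend on the conditioning set.

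\textbf{What the paper does instead.} It uses a two-stage repair, and your proposal has neither stage.
\begin{enumerate}
\item It pins down exactly which marks can change. After reducing to the case where every $t\in T$ has an arrowhead into it in $\Mf$, \cref{lem:anc_selec_I,lem:anc_selec_II} show that no $\tuh$ or $\huh$ edge of $\Mf'$ becomes $\tut$ in $\Mf_{\Do(T)}$. Hence undirected and directed edges of $\Mf'$ are preserved, and a bidirected edge of $\Mf'$ becomes either bidirected or directed. A directed edge arising this way is invisible. The reason is that the $\Sc$-inducing path into its tail, which exists in $\Af_{\Do(T)}$ by \cref{lem:inducing_path}, also exists in $\Af$. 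So \cref{lem:visible_edge_I} forbids visibility in $\Mf$, and hence in $\Mf_{\Do(T)}$.
\item It cuts the walk into maximal undirected segments, which transfer verbatim, and segments without undirected edges. For the latter, the edge correspondence from stage 1 is what \cite[Lemma~12]{zhang2008causal} needs, and it supplies possibly different open paths in $\Mf_{\Do(T)}$ between the segment endpoints. These endpoints are non-colliders in both graphs, so the pieces concatenate into an open walk.
\end{enumerate}
Your fallback of reasoning with $\Sc$-inducing walks inside $\Af$ is not carried out, so it does not close the gap.
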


\begin{proof}[Proof of \cref{lem:sep_hint_mag}]

    Define an isADMG $\Af\coloneqq \cat{isADMG}(\Mf)\in [\Mf]_\Gs$. Let 
    \[
    \pi:A \ni v_0\sus \cdots \sus v_{n}\in B\cup \Ic\cup T
    \] 
    be a shortest open path from $A$ to $B\cup \Ic\cup T$ given $C\cup T$ in $\Mf_{\Do(T)}$ and $\pi_*$ be the corresponding path in $\Af$ where we replace all undirected edges $v_i\tut v_{i+1}$ on $\pi$ in $\Mf_{\Do(T)}$ with $v_i\tuh s_{v_{i}v_{i+1}} \hut v_{i+1}$  in $\Af$. Note that $\pi_*$ is indeed well-defined in $\Af_{\Do(T)}$, since $\pi$ does not contain nodes from $T$. Also, it is easy to see that $\pi_*$ is open given $C\cup T\cup \Af_\Sc$ in $\Af_{\Do(T)}$. This implies 
     \[
        A \nsep{\mathsf{id}}{\Mf_{\Do(T)}} B\mid C\cup T \quad \Longrightarrow \quad \exists \Af\in[\Mf]_\Gs, \  A \nsep{\mathsf{id}}{\Af_{\Do(T)}} B\mid C\cup T\cup \Sc_{\Af}.
    \]

    For the other direction, let $\Af\in [\Mf]_\Gs$ and let $\pi$ be an open path from $A$ to $B\cup \Ic\cup T$ given $C\cup T\cup \Sc_\Af$ in $\Af_{\Do(T)}$. The goal is to find an open path from $A$ to $B\cup \Ic\cup T$ given $C\cup T$ in $\Mf_{\Do(T)}$.

    \textbf{Step 0: preparatory work}. Without loss of generality, we can assume that for every $t\in T$, there is at least one arrowhead pointing to node $t$ in $\Mf$. To see this, first note that $\pi$ is also open given $C\cup T\cup \Sc_\Af$ in $\Af_{\Do(T\sm \{t\})}$. Assume that node $t$ does not have any arrowheads towards it in $\Mf$. Furthermore, assume that we can find an open path $\ol{\pi}$ from $A$ to $B\cup \Ic\cup {T\sm \{t\}}$ given $C\cup T\cup \Sc_\Af$ in $\Mf_{\Do(T\sm \{t\})}$ provided that $\pi$ is open given $C\cup\{t\}\cup (T\sm \{t\})\cup \Sc_\Af$ in $\Af_{\Do(T\sm\{t\})}$. Since no arrowheads towards $t$, path $\ol{\pi}$ cannot contain $t$ and there are no colliders on $\ol{\pi}$ having a directed path to $C\cup T\cup \Sc_{\Af}$ across node $t$. Therefore, path $\ol{\pi}$ is still present and open from $A$ to $B\cup \Ic\cup T$ given $C\cup T\cup \Sc_{\Af}$ in $\Mf_{\Do(T)}$. 

    By Proposition~\ref{prop:sep_mag}, there is a shortest open path 
    \[
    \tilde{\pi}:A \ni v_0\sus \cdots \sus v_{n}\in B\cup \Ic\cup T
    \] 
    from $A$ to $B\cup \Ic\cup T$ given $C\cup T$ in $\tilde{\Mf}\coloneqq\cat{MAG}(\Af_{\Do(T)})$. The idea is to find the desired open path in $\Mf_{\Do(T)}$ via the path $\tilde{\pi}$ in $\tilde{\Mf}$. For that purpose, we define 
    \[
    \pi^{\Do(T)}:v_0\sus \cdots \sus v_{n}
    \] 
    to be the path in $\Mf_{\Do(T)}$ consisting of the same sequence of nodes as $\tilde{\pi}$ does. The path $\pi^{\Do(T)}$ is well-defined. Indeed, since an inducing path in $\Af_{\Do(T)}$ must be present in $\Af$ and $\tilde{\pi}$ does not contain nodes in $T$, nodes $v_i$ and $v_{i+1}$ are adjacent in $\Mf_{\Do(T)}$. Note that the path $\pi^{\Do(T)}$ need not have the desired properties in $\Mf_{\Do(T)}$. We will therefore perform a suitable “surgery’’ on it and show that the resulting path achieves our goal. Before doing so, it is helpful to make the connection between $\Mf_{\Do(T)}$ and $\tilde{\Mf}$ precise.
    
    \textbf{Step 1: connect $\Mf_{\Do(T)}$ to $\tilde{\Mf}$}. For $a,b\in \Ic\cup \Vc$, if $a\in \anc_{\Af_{\Do(T)}}(\{b\}\cup \Sc)$, then $a\in \anc_{\Af}(\{b\}\cup \Sc)$. Therefore, if nodes $a$ and $b$ are adjacent in $\tilde{\Mf}$ with a tail on node $a$, then nodes $a$ and $b$ are adjacent in $\Mf_{\Do(T)}$ with a tail on node $a$.  
    
    We observe that if we have $a\tuh b$ or $a\huh b$ in $\tilde{\Mf}$ then it is impossible to have $a\tut b$ in $\Mf_{\Do(T)}$. Assume on the contrary that we have $a\tut b$ in $\Mf_{\Do(T)}$ and therefore in $\Mf$. This implies that there is a directed path from $b$ to $\{a\}\cup \Sc_\Af$ in $\Af$ and all such directed paths must intersect $T$, i.e., they must be of the form with $t\in T$
    \[
    \pf:b\tuh \cdots\tuh t \tuh \cdots \tuh d\in\{a\}\cup \Sc_\Af.
    \] 
    Since for every $t\in T$ there is at least one arrowhead toward $t$ in $\Mf$ by our assumption, Lemma~\ref{lem:anc_selec_I} implies that $T\cap \anc_{\Af}(\Sc_\Af)=\emptyset$ and therefore $b\notin \anc_{\Af}(\Sc_\Af)$. However, since we have $a\tut b$ in $\Mf$, Lemma~\ref{lem:anc_selec_II} implies that $b\in \anc_{\Af}(\Sc_\Af)$. This causes a contradiction and therefore the starting claim is correct. For convenience, if we have $a\tuh b$ in $\Mf_{\Do(T)}$ and $a\huh b$ in $\tilde{\Mf}$ then we call such directed edge $a\tuh b$ in $\Mf_{\Do(T)}$ \emph{ghost} \wrt $\tilde{\Mf}$.

    We now show that the ghost directed edge between $a$ and $b$ must be invisible in $\Mf_{\Do(T)}$. For a proof, we assume without loss of generality that $a\huh b$ is in $\tilde{\Mf}$ and $a\tuh b$ is in $\Mf_{\Do(T)}$. Assume the contrary that $a\tuh b$ is visible in $\Mf_{\Do(T)}$. Then there exists a node $d$ non-adjacent to node $b$ such that 
    \[
    d\suh a \tuh b \quad \text{ or } \quad 
    d\suh u_1\huh \cdots \huh u_{m-1} \huh a\tuh b
    \]
    with $u_1,\ldots, u_{m-1}\in \pa_{\Mf_{\Do(T)}}(b)$ in $\Mf_{\Do(T)}$. It is easy to see that these patterns are present in $\Mf$ if they are present in $\Mf_{\Do(T)}$. Also note that if node $d$ is non-adjacent to $b$ in $\Mf_{\Do(T)}$ then it is non-adjacent to $b$ in $\Mf$. Therefore, the directed edge $a\tuh b$ is visible in $\Mf$. On the other hand, since $a\huh b$ is in $\tilde{\Mf}$, there is an inducing path from $a$ to $b$ that is into $a$ in $\Af_{\Do(T)}$ by Lemma~\ref{lem:visible_edge_I}. This inducing path is also present in $\Af$, which implies that the directed edge $a\tuh b$ is invisible in $\Mf$. This is a contradiction, so the initial hypothesis is wrong and we have proven the claim.

    Given the above discussions, we summarize the connections between $\tilde{\Mf}$ and $\Mf_{\Do(T)}$: 
    \begin{enumerate}[label=(\roman*)]
        \item the skeleton of $\tilde{\Mf}$ is contained in that of $\Mf_{\Do(T)}$;
        \item $a\tut b$ in $\tilde{\Mf}$ implies $a\tut b$ in $\Mf_{\Do(T)}$;
        \item $a\tuh b$ in $\tilde{\Mf}$ implies $a\tuh b$ in $\Mf_{\Do(T)}$;
        \item $a\huh b$ in $\tilde{\Mf}$ implies either $a\huh b$, or invisible $a\tuh b$, or invisible $a\hut b$ in $\Mf_{\Do(T)}$.
    \end{enumerate}

    \textbf{Step 2: decompose $\pi^{\Do(T)}$ into ``good'' part and ``bad'' part}. Observe that we can decompose the path $\tilde{\pi}$ into several subpaths $\tilde{\pi}_1,\ldots, \tilde{\pi}_\ell$ such that 
    \[
    \tilde{\pi}=\tilde{\pi}_1\oplus\cdots\oplus \tilde{\pi}_\ell
    \]
    where $\tilde{\pi}_k: v^k_0\sus \cdots \sus v^k_{n_k}$ for $1\leq k\leq \ell$ are the maximal connected components of $\tilde{\pi}$ consisting of either only directed and bidirected edges or only undirected edges, i.e., either 
    \begin{enumerate}[label=(\roman*)]
        \item (good part) $v^k_0\tut v_{1}^k\tut \cdots \tut v_{n_k-1}^k\tut v^k_{n_k}$ and $v^{k-1}_{n_{k-1}-1}\hut v^k_0$ and $v_{n_k}^k\tuh  v^{k+1}_{1}$ (if any), or

        \item (bad part)  $v^k_0\sus \cdots \sus  v^k_{n_k}$ does not contain any undirected edges $v_{i}^k\tut v_{i+1}^k$ but we have $v^{k-1}_{n_{k-1}-1}\tut v^k_0$ and $v^k_{n_k} \tut v^{k+1}_{1}$ (if any).
    \end{enumerate}  
    From item (ii) of the connections between $\tilde{\Mf}$ and $\Mf_{\Do(T)}$, we know that if $\tilde{\pi}_k$ consists of only undirected edges in $\tilde{\Mf}$, i.e., belongs to the good part, then the corresponding subpath $\pi^{\Do(T)}_k$ of $\pi^{\Do(T)}$ is intact in $\Mf_{\Do(T)}$. On the contrary, if $\tilde{\pi}_k$ consists of directed and bidirected edges in $\tilde{\Mf}$, i.e., belongs to the bad part, then the corresponding path $\pi^{\Do(T)}_k$ may not be open from $v^k_{0}$ to $v^k_{n_k}$ given $C\cup T$ in $\Mf_{\Do(T)}$ since some collider of the form $v^k_{i-1}\huh v^k_{i} \huh v^k_{i+1}$ on $\tilde{\pi}_k$ could become a non-collider on $\pi^{\Do(T)}_k$, for example $v^k_{i-1}\hut v^k_{i} \huh v^k_{i+1}$ on $\pi^{\Do(T)}_k$. Observe that if we can find an open path $\breve{\pi}_k$ from $v^k_{0}$ to $v^k_{n_k}$ in $\Mf_{\Do(T)}$ then the walk $\pi^{\Do(T)}_{k-1}\oplus \breve{\pi}_k\oplus \pi^{\Do(T)}_{k+1}$ ($\pi^{\Do(T)}_{k-1}$ or $\pi^{\Do(T)}_{k+1}$ could be empty) is open from $v^{k-1}_{0}$ to $v^{k+1}_{n_{k+1}}$ given $C\cup T$ in $\Mf_{\Do(T)}$, since $v^k_{0}$ and $v^k_{n_k}$ are both non-colliders on $\tilde{\pi}$ in $\tilde{\Mf}$ and on $\pi^{\Do(T)}$ in $\Mf_{\Do(T)}$. 

    \textbf{Step 3: perform surgery to the bad part}. Let $\{\tilde{\pi}_{k_i}\}_{i=1}^{l}$ be the subcollection of subpaths $\{\tilde{\pi}_j\}_{j=1}^\ell$ such that all consist of only directed and bidirected edges, i.e., the collection of the bad parts. Define\footnote{Note that even in the case without exogenous input nodes, $\mathrm{Und}_{\tilde{\Mf}}$ is different from the  undirected component of $\tilde{\Mf}$ defined in \cite[Section~3.2]{richardson2002ancestral} in general.}
    \[
    \begin{aligned}
        \mathrm{Und}_{\tilde{\Mf}}\coloneqq &\ \{a\in \Vc: \exists b\in \Ic\cup \Vc \text{ s.t.\ } a\tut b \text{ is in } \tilde{\Mf}\}, \text{ and } \\ H\coloneqq &\ (\Vc\sm \mathrm{Und}_{\tilde{\Mf}})\cup \{v_0^{k_1},v_{n_{k_1}}^{k_1},\ldots, v_0^{k_l},v_{n_{k_l}}^{k_l}\}.
    \end{aligned}
    \]
    Note that there are no undirected edges between nodes in $\{v_0^{k_1},v_{n_{k_1}}^{k_1},\ldots, v_0^{k_l},v_{n_{k_l}}^{k_l}\}$. Otherwise it contradicts the assumption that $\tilde{\pi}$ is a shortest open path. Then the endogenized subgraph $\hat{\Mf}\coloneqq (\tilde{\Mf}_{H})^*$ is a MAG without undirected edges and exogenous input nodes \cite[p.985]{richardson2002ancestral}, and $\tilde{\pi}_{k_i}$, for all $i=1,\ldots,l$, are open given $C\cup T$ in $\hat{\Mf}$. Define 
    \[
    \breve{\Mf}\coloneqq (\Mf_{\Do(T)})_H^*\sm \{a\tut b: a,b\in H, a\tut b \text{ in } \Mf_{\Do(T)}\}.
    \]  
    By \cite[Lemma~12]{zhang2008causal}, there exist open paths $\breve{\pi}_{k_i}$ from $v_0^{k_i}$ to $v_{n_{k_i}}^{k_i}$ given $C\cup T$ in $\breve{\Mf}$ for all $i=1,\ldots, l$, and therefore in $\Mf_{\Do(T)}$. By the previous observation, we can concatenate all the paths $\{\tilde{\pi}_j\}_{j=1}^\ell\sm \{\tilde{\pi}_{k_i}\}_{i=1}^l$ and $\{\breve{\pi}_{k_i}\}_{i=1}^l$ to get an open walk from $A$ to $B\cup \Ic\cup T$ given $C\cup T$ in $\Mf_{\Do(T)}$.  Hence, we have shown 
     \[
        \exists \Af\in[\Mf]_\Gs, \  A \nsep{\mathsf{id}}{\Af_{\Do(T)}} B\mid C\cup T\cup \Sc_{\Af} \quad \Longrightarrow \quad A \nsep{\mathsf{id}}{\Mf_{\Do(T)}} B\mid C\cup T .
    \]
   
   Overall, by contraposition we have 
    \[
        A \sep{\mathsf{id}}{\Mf_{\Do(T)}} B\mid C\cup T \quad \Longleftrightarrow \quad \forall \Af\in[\Mf]_\Gs, \  A \sep{\mathsf{id}}{\Af_{\Do(T)}} B\mid C\cup T\cup  \Sc_{\Af}.
    \]
\end{proof}

\subsection[Proof of Theorem~\ref{thm:sep_hsint_pag_mag}]{Proof of \cref{thm:sep_hsint_pag_mag}}

We present the proof of \cref{thm:sep_hsint_pag_mag}. See \cref{fig:pf_structure_sep_hsint_pag_mag} for an overall structure of the proof.

\begin{proof}[Proof of \cref{thm:sep_hsint_pag_mag}]
From \cref{lem:4}, we have
\[
    \Big(\exists \Mf\in[\Pf]_\Ms, \  A \nsep{\mathsf{id}}{\Mf_{\Do(I_D,T)}} B\mid C \cup T\Big) \quad \Longrightarrow \quad A \nsep{\mathsf{id}}{\Pf_{\Do(I_D,T)}} B\mid C \cup T,
\]
which, by contraposition, implies 
\[
    A \sep{\mathsf{id}}{\Pf_{\Do(I_D,T)}} B\mid C\cup T \quad \Longrightarrow \quad \forall \Mf\in[\Pf]_\Ms, \  A \sep{\mathsf{id}}{\Mf_{\Do(I_D,T)}} B\mid C\cup T.
\]

We now prove the converse under the additional assumption that $\Pf$ is an iCOPAG. It suffices to show
 \[
     A \nsep{\mathsf{id}}{\Pf_{\Do(I_D,T)}} B\mid C\cup T \quad \Longrightarrow \quad \exists \Mf\in[\Pf]_\Ms, \  A \nsep{\mathsf{id}}{\Mf_{\Do(I_D,T)}} B\mid C\cup T.
  \]
For that, let \[\pi:v_0\sus v_1\sus \cdots \sus v_n\] be an irreducible open path from $A$ to $B\cup \Ic \cup  \{I_d\}_{d\in D}\cup T$ given $C\cup T$ with $n\geq 1$ (the case where $n=0$ is trivial) in $\Pf_{\Do(I_D,T)}$.  Note that $\pi$ does not contain nodes from $T$. There are four cases in total:

\begin{enumerate}[label=\textbf{Case~\arabic*:}, ref=Case~\arabic*, leftmargin=*]
    \item $v_n\in (B\cap \Vc)\cup \Ic$;
    \item $v_{n-1}\tut v_n=I_d$;
    \item $v_{n-1}\hut v_n=I_d$;
    \item $v_{n-1}\out v_n=I_d$.
\end{enumerate}

\textbf{Case~1}. If $v_n\in (B\cap \Vc)\cup \Ic$, then it is obvious that $\pi_*$ is an open path from $A$ to $B\cup \Ic \cup \{I_d\}_{d\in D}\cup T$ given $C\cup T$ in $\Mf_{\Do(I_D,T)}$ for every $\Mf\in [\Pf]_\Ms$, where $\pi_*$ is the corresponding path in $\Mf_{\Do(I_D,T)}$ consisting of the same sequence of nodes as $\pi$ does in $\Pf_{\Do(I_D,T)}$ ($\pi_*$ is obviously well-defined in $\Mf_{\Do(I_D,T)}$). 

\textbf{Case~2}. Now we assume that $v_n\in \{I_d\}_{d\in D}$ and $v_n=I_d$ for some $d\in D$. If we have $v_{n-1}\tut I_d$ in $\Pf_{\Do(I_D,T)}$, then the same edge must be in $\Mf_{\Do(I_D,T)}$ for every $\Mf\in [\Pf]_\Ms$.  So,  $\pi_*$ is well-defined and open given $C\cup T$ in $\Mf_{\Do(I_D,T)}$ for every $\Mf\in [\Pf]_\Ms$. 

\textbf{Case~3}. We can divide Case~3 into two subcases: 
\begin{enumerate}[label=\textbf{Case~3.\arabic*:}, ref=Case~\arabic*, leftmargin=*]
    \item $d=v_{n-1}$, and
    \item $d\ne v_{n-1}$. 
\end{enumerate}

\textbf{Case~3.1}. If we have $d=v_{n-1}\hut I_d$, then we have $v_{n-1}\hut I_d$ for every $\Mf \in [\Pf]_\Ms$. If $v_{n-1}$ is a non-collider on $\pi$, then we are done. If $v_{n-1}$ is a collider, then note that we can always find an iMAG $\Mf\in [\Pf]$ such that the potentially directed path from $v_{n-1}$ to $C$ in $\Pf_{\Do(I_D,T)}$ becomes directed in $\Mf_{\Do(I_D,T)}$. 

\textbf{Case~3.2}. We consider the case where $v_{n-1}\hut I_d$ is in $\Pf_{\Do(I_D,T)}$ with $v_{n-1}\ne d$. \cref{lem:orientation_invisible} allows us to construct an iMAG $\Mf \in [\Pf]_\Ms$ such that $d\tuh v_{n-1}$ is present and invisible in $\Mf$ and furthermore if there is a potentially directed path from $v_{n-1}$ to $C$ in $\Pf_{\Do(I_D,T)}$ then it becomes a directed path in $\Mf_{\Do(I_D,T)}$. So, $\pi_*$ is well-defined and open given $C\cup T$ in $\Mf_{\Do(I_D,T)}$. 

\textbf{Case~4}. Finally, we are left with the case where $v_{n-1}\out v_n=I_d$ in $\Pf_{\Do(I_D,T)}$. Note that the patterns $v_{n-2} \suo d=v_{n-1}\out v_n=I_d$ (if $n\ge 2$) with $v_{n-2}$ non-adjacent to $v_n$, and $v_{n-2} \suh d=v_{n-1}\out v_n=I_d$ cannot occur by \cref{def:int_pag}. If $v_{n-2}\sut d=v_{n-1}\out v_n$ then $\pi_*$ is obviously well-defined and open given $C\cup T$ in $\Mf_{\Do(I_D,T)}$ for every $\Mf\in [\Pf]_\Ms$. Therefore, it suffices to consider the case where $d\ne v_{n-1}$. The remaining possibilities are $d \out v_{n-1}$ or $d\ouo v_{n-1}$ or $d\tuo v_{n-1}$ if $v_{n-1}\ne d$ and $v_{n-1}\out v_n=I_d$ in $\Pf_{\Do(I_D,T)}$. We therefore have two subcases: 
\begin{enumerate}[label=\textbf{Case~4.\arabic*:}, ref=Case~\arabic*, leftmargin=*]
    \item $d \out v_{n-1}$ or $d\tuo v_{n-1}$, and
    \item  $d\ouo v_{n-1}$. 
\end{enumerate}

\textbf{Case~4.1}.  If we have $d\out v_{n-1}$ or $d\tuo v_{n-1}$, then by \ref{sopag:p2} there are no arrowheads towards $v_{n-1}$ and therefore $v_{n-1}$ must be a non-collider on $\pi$ in $\Pf_{\Do(I_D,T)}$. By \ref{sopag:p4}, we can construct an iMAG $\Mf\in [\Pf]_\Ms$ such that $d\tut v_{n-1}$ is present in $\Mf_{\Do(I_D,T)}$. Hence, $\pi_*$ is  well-defined and open given $C\cup T$ in $\Mf_{\Do(I_D,T)}$. 

\textbf{Case~4.2}. In the next, we assume $d\ouo v_{n-1}$. The case where $n\le 1$ is easy to see. We therefore assume $n\ge 2$. We can divide this case into two subcases: 
\begin{enumerate}[label=\textbf{Case~4.2.\arabic*:}, ref=Case~\arabic*, leftmargin=*]
    \item $v_{n-1}$ is a non-collider on $\pi$ in $\Pf_{\Do(I_D,T)}$, and
    \item  $v_{n-1}$ is a collider on $\pi$ in $\Pf_{\Do(I_D,T)}$.
\end{enumerate}

\textbf{Case~4.2.1}. Assume $v_{n-1}$ is a non-collider on $\pi$ in $\Pf_{\Do(I_D,T)}$. If there are no arrowheads towards $v_{n-1}$ in $\Pf$, then by \ref{sopag:p4} there exists $\Mf \in [\Pf]_\Ms$ in which $v_{n-1}\tut d$ is present, and $\pi_*$ is well-defined and open given $C\cup T$ in $\Mf_{\Do(I_D,T)}$. Assume now that there are arrowheads towards $v_{n-1}$ in $\Pf$.  If $v_{n-2} \sut v_{n-1}$ on $\pi$, then by \ref{sopag:p4} and \cref{lem:orientation_invisible} there exists $\Mf \in [\Pf]_\Ms$ in which $d\tuh v_{n-1}$ is present and invisible and therefore we are done. If $v_{n-2}\suo v_{n-1}\out v_n=I_d$ with $v_{n-2}$ non-adjacent to $v_n$, the only possibilities are $v_{n-2} \huo v_{n-1}$ or $v_{n-2} \ouo v_{n-1}$ on $\pi$ since $v_{n-2} \tuo v_{n-1}$ is excluded by the fact that we have arrowheads towards $v_{n-1}$. Again by \ref{sopag:p4} and \cref{lem:orientation_invisible}, there exists $\Mf\in [\Pf]_\Ms$  such that $d\tuh v_{n-1}$ is invisible and $v_{n-2}\hut v_{n-1}$ is in $\Mf_{\Do(I_D,T)}$. This establishes the openness of $\pi_*$ given $C\cup T$ in $\Mf_{\Do(I_D,T)}$. 

\textbf{Case~4.2.2}. The final case is that $v_{n-1}$ is a collider and $v_{n-2}\suh v_{n-1}\out v_n=I_d$ on $\pi$. If $v_{n-1}\in C$, then by \ref{sopag:p4} and \cref{lem:orientation_invisible}, there is $\Mf\in [\Pf]_\Ms$ such that $d\tuh v_{n-1}$ is invisible in $\Mf$ and $\pi_*$ is well-defined and open given $C\cup T$ in $\Mf_{\Do(I_D,T)}$. In the following, we assume that $v_{n-1}\notin C$. Since $\pi$ is open in $\Pf_{\Do(I_D,T)}$ given $C\cup T$, there is a shortest non-trivial potentially directed path $\pf$ from $v_{n-1}$ to $C$. If $d$ is not on $\pf$, then by \ref{sopag:p4} and \cref{lem:orientation_invisible}, we can find one $\Mf\in [\Pf]_\Ms$ such that $d\tuh v_{n-1}$ invisible and the corresponding path $\pf_*$ of $\pf$ becomes a directed path in $\Mf_{\Do(I_D,T)}$. Then $\pi_*$ is well-defined and open given $C\cup T$ in $\Mf_{\Do(I_D,T)}$. If $d$ lies on $\pf$, choose $\Mf \in [\Pf]_\Ms$ so that the corresponding path $\pf_*$ becomes directed in $\Mf_{\Do(I_D,T)}$. In particular, this yields $v_{n-1}\tuh d$.  We define in $\Mf_{\Do(I_D,T)}$
\[
\tilde{\pi}\coloneqq\begin{cases}
     \pi_*(v_0,d)\oplus (d,I_d) &\quad \text{ if $d$ is on $\pi$; }\\
     \pi_*(v_0,v_{n-1})\oplus (v_{n-1},d)\oplus (d,I_d) &\quad \text{ otherwise}.
\end{cases}
\] 
Consider the case where $d\in C$. If $d$ is on $\pi$, then $v_{i-1}\suh d \hus v_{i+1}$ on $\pi$ and therefore $v_{i-1}\suh d\hut I_d$ on $\tilde{\pi}$. If $d$ is not on $\pi$, then $v_{n-2}\suh v_{n-1}\tuh d  \hut I_d$ on $\tilde{\pi}$ (note that $v_{n-1}\notin C$ and $d\in C$). Hence, in both of the two cases, $\tilde{\pi}$ is open given $C\cup T$ in $\Mf_{\Do(I_D,T)}$. Now we assume that $d\notin C$. Since $d$ is an ancestor of $C$ and $d\notin C$ in $\Mf_{\Do(I_D,T)}$, node $d$ must be open on $\tilde{\pi}$ no matter whether $d$ is a collider on $\tilde{\pi}$ or not. So, $\tilde{\pi}$ is open given $C\cup T$ in $\Mf_{\Do(I_D,T)}$.

This finishes the proof together with \cref{prop:sep_mag}.
\end{proof}

\begin{lemma}[Properties of soft-manipulated iSOPAGs]\label{lem:soft_mani_SOPAG}
    Let $\PAG$ be an iSOPAG representing an isADMG $\Af$. Let $A\subseteq  \Vc$. Then $\Pf_{\Do(I_{A})}$ is closed under $\fci{1}$ and $\fci{4}$.
   
\end{lemma}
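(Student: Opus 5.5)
The plan is to check directly that no instance of the antecedent of $\fci{1}$ or $\fci{4}$ in $\Pf_{\Do(I_A)}$ can still carry a circle mark that the rule would reorient. Since $\Pf$ is closed under all FCI rules (Property~\ref{sopag:p1}), any such instance must involve at least one new node $I_a$ or one new edge from \cref{def:int_pag}. The new edges are all of the form $I_a\tus v$ with a tail at $I_a$. Their far marks are arrowheads, tails, or circles, and a circle occurs only for edges in $\Ec_3$ (namely $I_a\tuo a$) or $\Ec_7$ (namely $I_a\tuo b$ with $a\tuo b$, $a\ouo b$, or $a\out b$ with no undirected edge into $b$). I would therefore go through the positions an input node $I_a$ can occupy in each rule.

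\textbf{Rule $\fci{1}$.} The pattern is $i\suo j\hus k$ with $i\notin\Ic$ and $i,k$ non-adjacent. Here $i\ne I_a$ because input nodes are excluded, and $j\ne I_a$ because input nodes receive no arrowheads.
\begin{itemize}
\item If $k=I_a$, the edge $j\hus I_a$ would put an arrowhead at $I_a$. This is impossible, since every new edge has a tail at $I_a$.
\item So all three nodes are old, and the triple already lies in $\Pf$. Then closedness of $\Pf$ under $\fci{1}$ handles it.
\end{itemize}
There is one subtlety: the edge $i\suo j$ could itself be new only if $i=I_a$, which is already excluded. So $\fci{1}$ is immediate.

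\textbf{Rule $\fci{4}$.} Take a discriminating path $\langle i,j,q_1,\dots,q_n,k\rangle$ for $j$ with $i\suo j$. Every interior node is a collider, so every interior node is an output node. Only the endpoints $i$ or $j$ can be an input node, per the figure caption. I would split into two cases.
\begin{itemize}
\item \emph{Case $i=I_a$.} Here $i\suo j$ would be $I_a\tuo j$. I must show $I_a\tuo j$ cannot be the first edge of a discriminating path, or that if it is, the path still gives no new information. Since $j$ must be a collider into the next node, or the path has the required form, I would use the characterization of $\Ec_7$. The claim is that whenever $I_a\tuo j$ and $j\suh q$, some correspondingly positioned node adjacent to $a$ yields a discriminating path in $\Pf$ starting at $a$ (or at an earlier node). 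Closedness of $\Pf$ then forces an orientation contradicting the circle on $a\suo j$.
\item \emph{Case where a new node appears only as $k$ or as an interior node.} The interior case is impossible, since interior nodes are output nodes. If $k=I_a$, every interior $q_\ell$ must be a parent of $I_a$, which is impossible. Otherwise the path lies in $\Pf$, and closedness applies.
\end{itemize}
The main obstacle is the $i=I_a$ case for $\Ec_3$ and $\Ec_7$ edges. For $I_a\tuo a$, the node $a$ would be the discriminated node $j$. We have $a\suh q_1$ where $q_1$ is a collider and a parent of $k$, and $I_a$ is non-adjacent to $k$. I would attempt to show that the absence of arrowheads at $a$ (the condition defining $\Ec_3$), together with $a\suh q_1\tuh k$ and \cref{lem:head_circle}, forces $a$ to be adjacent to $k$ with an edge $a\tuh k$ or $a\ouh k$. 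That would make $a$ a parent of $k$, which contradicts nothing outright, but then $I_a$ is adjacent to $k$ via $\Ec_4$ or $\Ec_7$, contradicting non-adjacency.

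For $\Ec_7$ edges $I_a\tuo b$, I would similarly use \cref{lem:head_circle} and \cref{cor:prop_subgraph_II} to transfer the discriminating structure from $I_a$ to $a$. The key point is that $a$ is adjacent to $b$ with a circle at $b$, and to the relevant downstream nodes. The aim is either to produce a discriminating path from $a$ in $\Pf$, contradicting the remaining circle, or to derive adjacency of $I_a$ and $k$.
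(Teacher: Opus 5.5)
Your reduction to configurations containing a new node $I_a$ is correct. However, in both rules you discard exactly the one configuration that closure of $\Pf$ does not already cover, because the mark and path conventions are read backwards.

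In the $\fci{1}$ pattern $i\suo j\hus k$, the arrowhead of $j\hus k$ is at $j$; that is why you rightly exclude $j=I_a$. The mark at $k$ is the star. So $k=I_a$ is not impossible: it is the new edge $I_a\tuh j$. This edge comes from $\Ec_1$ when $a=j$, or from $\Ec_4$ when $a\tuh j$ is invisible or $a\ouh j$ in $\Pf$.

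Closure under $\fci{1}$ is therefore not immediate. You have to show that $I_a$ is adjacent to every $i$ with $i\suo j$.
\begin{itemize}
\item For $a=j$, this is read off from $\Ec_4$, $\Ec_6$, $\Ec_7$ of \cref{def:int_pag}, because the mark at $j$ on the edge between $j$ and $i$ is a circle.
\item For $a\neq j$, \cref{lem:head_circle} (Property~\ref{sopag:p3}) turns $a\suh j\ous i$ into $a\ouh i$ or $a\tuh i$. Then \cref{lem:property_visible} shows that $a\tuh i$ is invisible, so $\Ec_4$ contains $I_a\tuh i$.
\end{itemize}
This is exactly the case the paper's proof of the $\fci{1}$ part handles.

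For $\fci{4}$, the direction of the path is likewise reversed. By \cref{def:discriminating_path}, the discriminated node is the penultimate one. So $\langle i,j,q_1,\dots,q_n,k\rangle$ is the path $k\suh q_n\huh\cdots\huh q_1\hus j\ous i$, with $k$ non-adjacent to $i$ and every $q_\ell$ a parent of $i$. The start node $a$ of \cref{fig:discriminating_path} is your $k$, not your $i$.

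Consequences of the correct reading:
\begin{itemize}
\item $i$ receives arrowheads and is never $I_a$, so the case you single out as the main obstacle is vacuous.
\item $j\neq I_a$, because of the circle at $j$.
\item Your reason for discarding $k=I_a$ (the $q_\ell$ would have to be parents of $I_a$) is a requirement on $i$, not on $k$.
\end{itemize}

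The case $k=I_a$, with the new edge $I_a\tuh q_n$, is the only one to treat. The missing step is to show that $q_n\tuh i$ is invisible in $\Pf$. Suppose it were visible. Then there is a witness $c$ non-adjacent to $i$, with $c\suh q_n$ or $c\suh v_1\huh\cdots\huh v_{m-1}\huh q_n$ through parents of $i$. Replacing $I_a$ by $c$ gives a discriminating path for $j$ lying entirely in $\Pf$, with the circle at $j$ still present. This contradicts closure of $\Pf$ under $\fci{4}$.

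Invisibility of $q_n\tuh i$ then lets \cref{def:int_pag} produce an edge between $I_a$ and $i$; when $q_n\neq a$, \cref{lem:property_visible} is also needed. So the path is not discriminating in $\Pf_{\Do(I_A)}$.
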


\begin{proof}[Proof of \cref{lem:soft_mani_SOPAG}]

\textbf{Check $\fci{1}$}. Assume $i\suo j$. If $j\in A$ and $k=I_j$, then by \cref{def:int_pag} node $k$ is adjacent to $i$ and therefore $\fci{1}$ is not applicable. We now consider the case where $k=I_a$ for some $a\in A$ and $j\ne a$. By \cref{def:int_pag}, the only possibilities for having $k\tuh j$ are $a\ouh j$ or $a\tuh j$ invisible. \cref{lem:head_circle} yields $a\ouh i$ or $a\tuh i$. \cref{lem:property_visible} implies that $a\tuh i$ is invisible if it is present. Therefore, by \cref{def:int_pag} node $k$ is adjacent to $i$ and therefore $\fci{1}$ is not applicable.

\textbf{Check $\fci{4}$}. By \cref{def:discriminating_path}, we only need to consider the case where $k=I_a$ for some $a \in A$. We want to show that $q_n \tuh i$ must be invisible. If this is proved, then by \cref{def:int_pag} node $k$ must be adjacent to $i$ and therefore $\fci{4}$ is not applicable. Assume on the contrary that $q_n\tuh i$ is visible. Then we have 
\[
c\suh q_n \tuh i \quad\text{ or }\quad c\suh v_1\huh \cdots \huh v_{m-1}\huh q_n \tuh i
\] 
with $c$ non-adjacent to $i$ and all $v_\ell$ parents of $i$. In the above two cases, we have 
\[
c\suh q_n \huh \cdots \huh q_1\hus j\ous i \text{ or } c\suh v_1 \huh \cdots \huh v_{m-1} \huh q_n \huh \cdots \huh q_1\hus j\ous i,
\]
both of which form a discriminating path for $j$. Since $\Pf$ is already sufficiently oriented, $\fci{4}$ would have oriented the circle mark on the edge $j\ous i$. This shows that $q_n\tuh i$ is invisible. Therefore, $\fci{4}$ cannot be applied further in $\Pf_{\Do(I_A)}$.

\end{proof}

\Cref{lem:0,lem:1,lem:2,lem:3} are generalizations of corresponding results in \cite[Lemma~5.1.7]{zhang2006causal} in the sense that we work with soft-manipulated iSOPAGs rather than CPAGs without selection bias. With some essential modifications, the original idea still works in this more general setting,  given the appropriate definitions of soft manipulation (\cref{def:int_pag}) and graph separations (\cref{def:sep_PAG}) and the observation in \cref{lem:soft_mani_SOPAG}.

\begin{lemma}[Discriminating path in MAGs and PAGs]\label{lem:0}
Let $\MAG$ be an iMAG and $\Pf$ be an iSOPAG such that $\Mf\in [\Pf]_\Ms$. Let $D\subseteq \Vc$. Assume $a\in \Ic\cup \Vc$ or $a\in \{I_d\}_{d\in D}$ is non-adjacent to $z$ in $\Pf_{\Do(I_D)}$. If a path 
\[
a\sus v_1 \sus \cdots \sus v_n\sus y\sus z
\] 
is a discriminating path for $y$ in $\Mf_{\Do(I_D)}$ and the corresponding subpath from node $a$ to node $y$ in $\Pf_{\Do(I_D)}$ is a well-defined collider path, then the path is also a discriminating path for $y$ in $\Pf_{\Do(I_D)}$.
\end{lemma}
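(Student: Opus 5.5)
To prove \cref{lem:0}, I would check the two clauses of \cref{def:discriminating_path} for the path read in $\Pf_{\Do(I_D)}$. Clause~1 (at least three edges) and the non-adjacency of $a$ and $z$ are given directly: the length is the same in both graphs, and non-adjacency of $a$ and $z$ in $\Pf_{\Do(I_D)}$ is a hypothesis. The collider property of $v_1,\dots,v_n$ on the subpath from $a$ to $y$ is also assumed, since that subpath is a well-defined collider path in $\Pf_{\Do(I_D)}$. Hence the only thing left to prove is that every $v_i$ is a parent of $z$ in $\Pf_{\Do(I_D)}$, i.e., that $v_i\tuh z$ is there, and not merely some edge $v_i\sus z$ whose marks are weaker than in $\Mf_{\Do(I_D)}$.

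First I would make sure all edges exist in $\Pf_{\Do(I_D)}$. The graphs $\Mf_{\Do(I_D)}$ and $\Pf_{\Do(I_D)}$ have the same skeleton: this holds between output nodes because $\Mf\in[\Pf]_\Ms$, and for edges at $I_d$ because \cref{def:int_mag,def:int_pag} add $I_d$-edges under the same adjacency conditions. For the $I_d$-edges I would use that invisibility in $\Mf$ of an edge $d\tuh b$ that comes from a $\ous$ mark in $\Pf$ matches the $\Ec_4$/$\Ec_7$ cases of \cref{def:int_pag}, together with \cref{lem:vis_sint}. Also, each $v_i$ is an output node: it is a collider and so receives arrowheads, while input nodes and the $I_d$ never receive arrowheads. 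Consequently each edge $v_i\sus z$ lies in $\Pf$ itself, except possibly when $a=I_d$, which only affects the edge $a\sus v_1$ and not the edges into $z$.

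Next I would orient the edges $v_i\sus z$, using induction on $i$ starting from the end nearest $a$. In $\Mf$ we have $v_i\tuh z$, so in $\Pf$ the edge is one of $v_i\tuh z$, $v_i\ouh z$, $v_i\tuo z$, $v_i\ouo z$. For the \emph{arrowhead at $z$}: $v_1$ has an arrowhead from $a$, and $a$ is non-adjacent to $z$. If the mark at $z$ were a circle, we would have $a\suh v_1\ous z$ with $a$ and $z$ non-adjacent. Property~\ref{sopag:p3} (equivalently \cref{lem:head_circle}) would then force $a\suh z$, which is a contradiction. Hence $v_1\suh z$. For $i>1$, $v_{i-1}\huh v_i$ holds in $\Pf$ because the subpath is a collider path, and inductively $v_{i-1}\suh z$ with $v_{i-1}$ a non-collider in $\Mf$'s sense. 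If the mark of $v_i\sus z$ at $z$ were a circle, \cref{lem:head_circle} applied to $v_{i-1}\huh v_i\ous z$ would give $v_{i-1}\suh z$; this alone is no contradiction, so I would instead use FCI closure: with $\fci{2}$ applied to $v_i\suh v_{i-1}\tuh z$ (once $v_{i-1}\tuh z$ is known) the circle at $z$ on $v_i\suo z$ is oriented to an arrowhead, and $\Pf$ is closed under $\fci{2}$ by \ref{sopag:p1}. For the \emph{tail at $v_i$}: the arrowhead into $v_i$ from $v_{i-1}$ (or from $a$) rules out the mark patterns $\suh v_i \tuo$ and $\suh v_i\ouh$ by \ref{sopag:p2}. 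The mark $v_i\ouh z$ is then excluded by $\fci{1}$ when $i=1$ (because $a\suh v_1\ouh z$ with $a,z$ non-adjacent), and for $i>1$ by \ref{sopag:p2} applied to $v_{i-1}\suh v_i\ouh z$.

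The main obstacle I anticipate is the case in which $a$ or some vertices lie in $\{I_d\}$ and the graph is $\Pf_{\Do(I_D)}$ rather than $\Pf$, because \ref{sopag:p1}--\ref{sopag:p3} are stated only for $\Pf$. To handle this, I would restrict every closure argument to edges among nodes of $\Ic\cup\Vc$ (by the above, all $v_i$--$z$ edges are such edges). The only place $I_d$ enters is the $\fci{1}$ step at $v_1$ when $a=I_d$, and that is covered by \cref{lem:soft_mani_SOPAG}, which gives closure of $\Pf_{\Do(I_D)}$ under $\fci{1}$.
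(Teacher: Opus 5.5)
Your reduction to showing $v_i\tuh z$ in $\Pf_{\Do(I_D)}$ is sound, as are the base case $i=1$ and the use of $\fci{2}$ for the arrowhead at $z$. One relabeling is needed in the base case: the \ref{sopag:p3}/$\fci{1}$ argument on $a\suh v_1\ous z$ excludes a circle at $v_1$, not at $z$. The circle at $z$ behind a tail at $v_1$ is what \ref{sopag:p2} excludes.

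The gap is the tail at $v_i$ for $i\ge 2$. You exclude $v_{i-1}\suh v_i\ouh z$ ``by \ref{sopag:p2}'', but \ref{sopag:p2} only forbids $a\suh b\tuo c$ and $a\suh b\out c$, i.e.\ patterns with a \emph{tail} at the far end. The pattern $v_{i-1}\huh v_i\ouh z$ is compatible with it. Nothing else you invoke closes this case:
\begin{itemize}
\item \ref{sopag:p3} only yields $v_{i-1}\suh z$, which already holds, as you noted;
\item $\fci{1}$ does not apply, because $v_{i-1}$ and $z$ are adjacent;
\item $\fci{2}$ only acts on the mark at $z$.
\end{itemize}
So your argument leaves $v_i\ouh z$ open, and that is exactly the situation in which $v_i$ is not a parent of $z$ in $\Pf_{\Do(I_D)}$.

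The missing ingredient is the discriminating-path rule, which is how the paper does the inductive step. By the induction hypothesis $v_1,\dots,v_{i-1}\in\pa_{\Pf_{\Do(I_D)}}(z)$, and $a$ is non-adjacent to $z$. Hence $a\suh v_1\huh\cdots\huh v_{i-1}\huh v_i\sus z$ is itself a discriminating path for $v_i$ in $\Pf_{\Do(I_D)}$. A circle at $v_i$ on $v_i\sus z$ would make $\fci{4}$ applicable, so closure under $\fci{4}$ forces a non-circle mark there. An arrowhead at $v_i$ is impossible since $v_i\tuh z$ in $\Mf\in[\Pf]_\Ms$, so the mark is a tail, and then \ref{sopag:p2} gives $v_i\tuh z$.

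Because $a$ lies on this discriminating path, when $a\in\{I_d\}_{d\in D}$ you need closure of $\Pf_{\Do(I_D)}$, not just of $\Pf$, under $\fci{4}$. That is what \cref{lem:soft_mani_SOPAG} supplies, so your claim that $I_d$ enters only through the $\fci{1}$ step at $v_1$ is not right.

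A minor point: $\Mf_{\Do(I_D)}$ and $\Pf_{\Do(I_D)}$ need not share a skeleton. Take $\Mf\colon c\tuh d\tuh b$ and $\Pf\colon c\ouo d\ouo b$. The edge $d\tuh b$ is visible in $\Mf$, so $I_d$ and $b$ are non-adjacent in $\Mf_{\Do(I_d)}$, whereas $\Pf_{\Do(I_d)}$ contains $I_d\tuo b$. You only need that the edges $v_i\sus z$, which join output nodes, are present in $\Pf$, and that does hold because $\Mf\in[\Pf]_\Ms$.
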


\begin{proof}[Proof of \cref{lem:0}]
    Under the assumption of \cref{lem:0}, node $a$ is non-adjacent to $z$. Note that it is assumed that the subpath from $a$ to $y$ in $\Pf_{\Do(I_D)}$ is a collider path. Therefore, to show the target result, it suffices to show $v_1,\ldots,v_n\in\pa_{\Pf_{\Do(I_D)}}(z)$. We argue by induction. 
    
    We first show the base case, i.e., $v_1\in \pa_{\Pf_{\Do(I_D)}}(z)$. Since $\Mf\in[\Pf]_\Ms$, node $v_1$ must be adjacent to $z$ and we cannot have  $v_1 \hus z$ in $\Pf_{\Do(I_D)}$. The pattern $a\suh v_1 \tut z$ cannot occur in $\Mf_{\Do(I_D)}$ by definition. Also, the pattern $a\suh v_1 \tuo z$ cannot occur in $\Pf_{\Do(I_D)}$ by \cref{def:SOPAG}.  The case $a\suh v_1 \ous z$ cannot happen, since it contradicts $\fci{1}$ by \cref{lem:soft_mani_SOPAG} and the fact that 
    $a$ and $z$ are not adjacent in $\Pf_{\Do(I_D)}$. Thus, it must be $a\suh v_1 \tuh z$.

    Now we assume that $v_{1},\ldots,v_{k-1}\in \pa_{\Pf_{\Do(I_D)}}(z)$. Then there is a discriminating path from node $a$ to $z$ for $v_k$ in $\Pf_{\Do(I_D)}$. Note that it is impossible to have $v_k \huh z$ in $\Pf$. Therefore, we have $v_k\tuh z$ by \cref{lem:soft_mani_SOPAG} and $\fci{4}$. This finishes the proof.
\end{proof}

\begin{remark}
    In \cref{def:int_pag}, one might be tempted to replace $I_a \tuo b$ with $I_a\tuh b$ when $a\ouo b$ is present and there exists $c\suh b$. However, this definition does not guarantee the validity of \cref{lem:soft_mani_SOPAG}. Therefore, \cref{lem:0} may not hold in this case.
\end{remark}

\begin{lemma}\label{lem:1}
Let $\MAG$ be an iMAG and let $\Pf$ be an iSOPAG such that  $\Mf\in [\Pf]_\Ms$. Let $D\subseteq \Vc$. Let $A\subseteq\Vc,\ B\subseteq \Ic\cup \Vc$ and $C\subseteq \Ic\cup \{I_d\}_{d\in D}\cup \Vc$ be pairwise disjoint.
Let \[\pi:a=v_0\sus v_1\sus \cdots \sus v_{n-1} \sus v_{n}=b\] be an irreducible non-trivial open path from $A$ to $B\cup \Ic\cup \{I_d\}_{d\in D}$ given $C$ in $\Mf_{\Do(I_D)}$. Let $\pi^*$ be the corresponding path in $\Pf_{\Do(I_D)}$ consisting of the same sequence of nodes as $\pi$. It holds that for every $1\leq i\leq n-1$, if $v_i$ is not of a definite status on $\pi^*$ and $v_{i+1}\notin \{I_d\}_{d\in D}$, then $v_{i+1}\in \pa_{\Mf_{\Do(I_D)}}(v_{i-1})$ and $v_{i+1}$ is a collider on $\pi$ in $\Mf_{\Do(I_D)}$.
\end{lemma}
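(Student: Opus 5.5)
This lemma generalizes \cite[Lemma~5.1.7(1)]{zhang2006causal}. I would adapt that argument to soft-manipulated graphs, using \cref{lem:soft_mani_SOPAG} and \cref{lem:0} in place of the CPAG facts.

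Fix $1\le i\le n-1$ such that $v_i$ has no definite status on $\pi^*$ and $v_{i+1}\notin\{I_d\}_{d\in D}$. \emph{Not of definite status} means that $v_i$ is neither a definite collider nor a definite non-collider on $\pi^*$. So the triple in $\Pf_{\Do(I_D)}$ has the form $v_{i-1}\suo v_i\ous v_{i+1}$ (up to one side being an arrowhead and the other a circle), and it is shielded: $v_{i-1}$ and $v_{i+1}$ are adjacent. If it were unshielded, the FCI rule $\Rc0$ would have fixed the status, or $v_i$ would be a definite non-collider by the convention of \cref{def:sep_PAG}. Since $\Mf\in[\Pf]_\Ms$, the edge $v_{i-1}\sus v_{i+1}$ is also present in $\Mf_{\Do(I_D)}$.

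\textbf{Step 1: $v_{i+1}\in\pa_{\Mf_{\Do(I_D)}}(v_{i-1})$.} I use irreducibility. If the edge between $v_{i-1}$ and $v_{i+1}$ in $\Mf_{\Do(I_D)}$ were anything else, then skipping $v_i$ would give the shorter path $\pi(v_0,v_{i-1})\oplus(v_{i-1},v_{i+1})\oplus\pi(v_{i+1},v_n)$, and I claim it is still open given $C$. The endpoint conditions are unchanged. Only the statuses of $v_{i-1}$ and $v_{i+1}$ can change.

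\begin{itemize}
\item If an endpoint keeps its status, it stays open.
\item If an endpoint becomes a collider, openness needs it to be an ancestor of $C$. Here I use that $v_i$ is an ancestor of $C$ or lies on the original path as a non-collider outside $C$, together with the edge orientation between $v_i$ and that endpoint, to produce a directed path to $C$.
\item If an endpoint becomes a non-collider, openness needs it outside $C$. This is the delicate case: an original collider in $C$ turning into a non-collider.
\end{itemize}

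Running through the orientation combinations in $\Mf_{\Do(I_D)}$ (no arrowheads into input nodes; ancestral property; no $\suh\,\tut$), every choice except $v_{i+1}\tuh v_{i-1}$ yields a reducible path. I would organize this case analysis exactly as in Zhang's proof.

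\textbf{Step 2: $v_{i+1}$ is a collider on $\pi$.} With $v_{i+1}\tuh v_{i-1}$ in the MAG, I consider the tail of the edge $v_i\sus v_{i+1}$ at $v_{i+1}$ and the edge to $v_{i+2}$.

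\begin{itemize}
\item Suppose $v_{i+1}$ were a non-collider on $\pi$. Then $v_{i+1}\notin C$.
\item If $v_{i+1}\tus v_{i+2}$, or $v_{i+1}$ has a tail toward $v_i$, then skipping $v_i$ keeps $v_{i+1}$ a non-collider and keeps $v_{i-1}$ with an arrowhead from $v_{i+1}$. So $v_{i-1}$ either retains its status or becomes a collider. It is then an ancestor of $C$ through $v_i$ when $v_i$ is a collider, or else a short argument via ancestrality is needed. This contradicts irreducibility.
\end{itemize}

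The main obstacle is making this last point rigorous. I would argue as in Zhang's proof that $v_i$ must be a non-collider into which $v_{i+1}$ does not point. Any such configuration, together with $v_{i+1}\tuh v_{i-1}$, either forms an (almost) directed cycle in $\Mf_{\Do(I_D)}$ or is reducible.

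\textbf{New complication: the manipulated nodes $I_d$.} They may appear as $v_{i-1}$, and edges $I_d\tuo b$ in $\Pf_{\Do(I_D)}$ come from \cref{def:int_pag}. The hypothesis $v_{i+1}\notin\{I_d\}$ excludes the bad endpoint. For $v_{i-1}=I_d$, the edge $v_{i+1}\tuh I_d$ is impossible, because input nodes receive no arrowheads. So I must show that this subcase cannot arise at all: a shielded triple with $v_{i-1}=I_d$ would give $v_i$ definite status via the $\Ec$-clauses of \cref{def:int_pag} and closure under $\Rc1$ (\cref{lem:soft_mani_SOPAG}). This is where I expect the new work to lie.
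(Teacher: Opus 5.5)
Your overall frame is the paper's: shortcut the irreducible path at $v_i$ and ask where the shortcut can be blocked. Your Step~2 also goes through: once $v_{i+1}\tuh v_{i-1}$ is known, a non-collider $v_{i+1}$ would make the shortcut open.

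\textbf{The gap is in Step~1.} You claim that every orientation other than $v_{i+1}\tuh v_{i-1}$ makes the one-step shortcut $\pi(v_0,v_{i-1})\oplus(v_{i-1},v_{i+1})\oplus\pi(v_{i+1},v_n)$ open. This fails in exactly the ``delicate case'' you flag: $v_{i-1}$ a collider on $\pi$ with $v_{i-1}\in C$ and $v_{i-1}\tuh v_{i+1}$.
\begin{itemize}
\item There the shortcut is genuinely blocked at $v_{i-1}$. Nothing inside $\Mf_{\Do(I_D)}$ forbids this, since it is precisely the discriminating-path configuration for $v_i$.
\item It can only be excluded through the PAG. Suppose $v_{i-1}$ is a \emph{definite} collider on $\pi^*$ and $v_{i-2}$ is non-adjacent to $v_{i+1}$. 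Then $v_{i-2}\suh v_{i-1}\hus v_i\sus v_{i+1}$ is a discriminating path for $v_i$ in $\Mf_{\Do(I_D)}$, hence in $\Pf_{\Do(I_D)}$ by \cref{lem:0}. Closure under $\fci{4}$ would then give $v_i$ definite status, a contradiction.
\item So $v_{i-2}$ is adjacent to $v_{i+1}$. Skipping $v_{i-1},v_i$ together and invoking irreducibility forces $v_{i-2}$ to be a collider and a parent of $v_{i+1}$.
\item Iterating back to $a$ gives $a\suh v_{i+1}$ and an open proper sub-path, which contradicts irreducibility.
\end{itemize}
This is where \cref{lem:0} and $\fci{4}$-closure are actually used. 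Your proposal names them at the outset but never deploys them.

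\textbf{You also cannot fix an arbitrary $i$.} The argument above needs the collider chain preceding $v_i$ to be definite on $\pi^*$. The paper therefore takes $v_{i_1}$ to be the \emph{first} non-definite node from $a$ and inducts over the later ones. For a later $v_{i_k}$, the chain may pass through the earlier non-definite $v_{i_{k-1}}$, where no discriminating-path argument is available. There the paper argues as follows.
\begin{itemize}
\item The inductive hypothesis $v_{i_{k-1}+1}\in\pa_{\Mf_{\Do(I_D)}}(v_{i_{k-1}-1})$ shows that $v_{i_{k-1}},\dots,v_{i_k}$ are colliders and ancestors of the two endpoints.
\item Hence $\pi(v_{i_{k-1}-1},v_{i_k+1})$ is an inducing path in $\Mf$.
\item Maximality forces an edge, which must be $v_{i_{k-1}-1}\huh v_{i_k+1}$, and this again yields a shorter open path.
\end{itemize}

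\textbf{Your ``new complication'' is misplaced.} $v_{i-1}$ is never some $I_d$. Indeed $v_0\in A\subseteq\Vc$, and any non-endnode in $\Ic\cup\{I_d\}_{d\in D}$ would be a non-collider outside $C$, so truncating $\pi$ there contradicts irreducibility. The manipulation actually enters at $v_{i+2}$, which may be an $I_d$ in the symmetric case for $v_{i+1}$. There one must check that $I_d\tuh v_{i+1}\tut v_i$ and $I_d\tuo v_{i+1}\tut v_i$ cannot occur in $\Mf_{\Do(I_D)}$.

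A minor point: the adjacency of $v_{i-1}$ and $v_{i+1}$ in the mixed arrowhead/circle case comes from $\fci{1}$, not $\fci{0}$.
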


\begin{proof}[Proof of \cref{lem:1}]
    First, note that $\pi^*$ is well-defined by \cref{def:int_mag,def:int_pag}. Second, note that none of the non-endnodes $v_i$ for $1\leq i\leq n-1$ can be in $\Ic\cup \{I_d\}_{d\in D}$. Indeed, if $v_{i}\in \Ic\cup \{I_d\}_{d\in D}$, then the subpath $\pi(a,v_{i})$ of $\pi$ is open from $A$ to $B\cup \Ic \cup \{I_d\}_{d\in D}$ given $C$ in $\Mf_{\Do(I_D)}$, which contradicts the choice of $\pi$.   If every non-endnode on $\pi^*$ is of a definite status, then the lemma trivially holds. Now we assume that $v_i$ is not of a definite status for some $1\leq i\leq n-1$. We argue by induction. 

    \textbf{Step~1: base case}. We first show the base case. Let $v_{i_1}$ be the first node that is not of a definite status on $\pi^*$ starting from node $a$. Since $v_{i_1}$ is not of a definite status, we have $v_{i_1-1}\suo v_{i_1}\hus v_{i_1+1}$ or $v_{i_1-1}\suo v_{i_1}\ous v_{i_1+1}$ with $v_{i_1-1}$ adjacent to $v_{i_1+1}$ or $v_{i_1-1}\suh v_{i_1}\ous v_{i_1+1}$. By $\fci{1}$, nodes $v_{i_1-1}$ and $v_{i_1+1}$ are adjacent in $\Pf$ and therefore are adjacent in $\Mf$, since $v_{i_1-1},v_{i_1},v_{i_1+1}\notin \{I_d\}_{d\in D}$. Let $\tilde{\pi}$ be a path in $\Mf_{\Do(I_D)}$ constructed from $\pi$ by replacing $v_{i_1-1} \sus v_{i_1} \sus v_{i_1+1}$ with $v_{i_1-1}\sus v_{i_1+1}$. Then $\tilde{\pi}$ cannot be open from $A$ to $B\cup \Ic \cup \{I_d\}_{d\in D}$ given $C$ in $\Mf_{\Do(I_D)}$, since $\pi$ is an irreducible open path. Since the local configurations of all the nodes on $\tilde{\pi}$ except $v_{i_1-1}$ and $v_{i_1+1}$ are the same as the ones on $\pi$, the path $\tilde{\pi}$ can only be blocked at $v_{i_1-1}$ or $v_{i_1+1}$. We have four cases regarding whether $v_{i_1-1}$ and $v_{i_1+1}$ are colliders or not on $\pi$ and $\tilde{\pi}$ respectively. 

    \textbf{Step~1.1: base case 1}. The node $v_{i_1-1}$ is a non-collider on $\pi$ but a collider on $\tilde{\pi}$ and $v_{i_1-1}\notin \anc_{\Mf_{\Do(I_D)}}(C)$. Note that $v_{i_1-1}$ cannot be an endnode (by the assumption on $v_{i_1-1}$) and the pattern $v_{i_1-2}\suh v_{i_1-1}\tut v_{i_1}$ is impossible in a MAG. Then we have $v_{i_1-2}\suh v_{i_1-1}\tuh v_{i_1}$ and $v_{i_1-1}\hus v_{i_1+1}$ in $\Mf$. This implies that we have $v_{i_1}\hus v_{i_1+1}$ in $\Mf$, otherwise we would have an (almost) cycle $v_{i_1-1}\tuh v_{i_1} \tuh v_{i_1+1}\suh v_{i_1-1}$. Therefore, $v_{i_1}$ must be a collider on $\pi$ in $\Mf$. Since $\pi$ is open given $C$ in $\Mf_{\Do(I_D)}$, we have that $v_{i_1}\in \anc_{\Mf_{\Do(I_D)}}(C)$. Hence, $v_{i_1-1}\in \anc_{\Mf_{\Do(I_D)}}(C)$, which contradicts the assumption that $v_{i_1-1}\notin \anc_{\Mf_{\Do(I_D)}}(C)$. It shows that this case cannot happen.

    \textbf{Step~1.2: base case 2}. The node $v_{i_1+1}$ is a non-collider on $\pi$ but a collider on $\tilde{\pi}$ and $v_{i_1+1}\notin \anc_{\Mf_{\Do(I_D)}}(C)$. This case is similar to Case 1 and one can argue similarly that this case cannot happen by noting that when $v_{i_1+2}=I_d$ for some $d\in D$ it is impossible to have $v_{i_{1}+2}\tuh  v_{i_1+1} \tut v_{i_1}$ or $v_{i_{1}+2}\tuo v_{i_1+1} \tut v_{i_1}$ in $\Mf_{\Do(I_D)}$.

    \textbf{Step~1.3: base case 3}. The node $v_{i_1-1}$ is a collider on $\pi$ but a non-collider on $\tilde{\pi}$ and $v_{i_1-1}\in C$. In this case, we first note that $v_{i_1-1}\ne a$, since it is a collider on $\pi$. Since $v_{i_1-1}$ is a collider on $\pi$ but a non-collider on $\tilde{\pi}$, we have $v_{i_1-2}\suh v_{i_1-1}\hus v_{i_1}$ and $v_{i_1-1}\tuh v_{i_1+1}$ in $\Mf_{\Do(I_D)}$ (not $v_{i_1-1}\tut v_{i_1+1}$ because $v_{i_1-2}\suh v_{i_1-1}\tut v_{i_1+1}$ cannot be in a MAG). Since $v_{i_1}$ is the first node not of definite status on $\pi^*$ in $\Pf_{\Do(I_D)}$  starting from node $a$, the node $v_{i_1-1}$ must be of definite status on $\pi^*$. Therefore, $v_{i_1-2}\suh v_{i_1-1} \hus v_{i_1}$ must be in $\Pf_{\Do(I_D)}$. We claim that $v_{i_1-2}$ is adjacent to $v_{i_1+1}$. Indeed, assume that this is not the case. Then $v_{i_1-2}\suh v_{i_1-1}\hus v_{i_1} \sus v_{i_1+1}$ is a discriminating path for $v_{i_1}$ in $\Mf_{\Do(I_D)}$. Lemma~\ref{lem:0} tells us that it also forms a discriminating path for $v_{i_1}$ in $\Pf_{\Do(I_D)}$. By $\fci{4}$, we have that $v_{i_1}$ is of definite status on $\pi^*$ in $\Pf_{\Do(I_D)}$. This contradicts the assumption that $v_{i_1}$ is not of definite status on $\pi^*$ in $\Pf_{\Do(I_D)}$. Besides, the edge between $v_{i_1-2}$ and $v_{i_1+1}$ in $\Mf_{\Do(I_D)}$ must be $v_{i_1-2}\suh v_{i_1+1}$, since $v_{i_1-2}\suh v_{i_1-1}\tuh v_{i_1+1}$ is present in $\Mf_{\Do(I_D)}$. Since $v_{i_1}\suh v_{i_1-1}\tuh v_{i_1+1}$ is in $\Mf_{\Do(I_D)}$, we have $v_{i_1}\suh v_{i_1+1}$ in $\Mf_{\Do(I_D)}$. We construct a path $\breve{\pi}$ from $\pi$ by replacing $v_{i_1-2}\suh v_{i_1-1}\hus v_{i_1}\suh v_{i_1+1}$ with $v_{i_1-2}\suh v_{i_1+1}$. The path $\breve{\pi}$ cannot be open given $C$ in $\Mf_{\Do(I_D)}$ by the construction of $\pi$. Since the local configurations of $v_{i_1+1}$ on $\pi$ and $\breve{\pi}$ are the same, $\breve{\pi}$ must be blocked at $v_{i_1-2}$. If we have $v_{i_1-2}\tuh v_{i_1-1}$ in $\Mf_{\Do(I_D)}$, then we have $v_{i_1-2} \tuh v_{i_1+1}$. In this case, $\breve{\pi}$ cannot be blocked at $v_{i_1-2}$. So we must have $v_{i_1-3}\suh v_{i_1-2}\huh v_{i_1-1}$ and $v_{i_1-2}\tuh v_{i_1+1}$ in $\Mf_{\Do(I_D)}$. By induction, we can show that every node between $a$ and $v_{i_1}$ on $\pi$ is a collider on $\pi$ in both $\Mf_{\Do(I_D)}$ and $\Pf_{\Do(I_D)}$, and is a parent of $v_{i_1+1}$ in $\Mf_{\Do(I_D)}$. It follows that node $a$ is adjacent to $v_{i_1+1}$. Otherwise by Lemma~\ref{lem:0}, $a\sus \cdots \sus v_{i_1+1}$ forms a discriminating path for $v_{i_1}$ in $\Pf_{\Do(I_D)}$, so node $v_{i_1}$ should be of definite status on $\pi^*$, which is a contradiction. The edge between $a$ and $v_{i_1+1}$ cannot be out of $v_{i_1+1}$. That is because otherwise we have an (almost) directed cycle $v_{1}\tuh v_{i_1+1}\tuh a\suh v_1$. So we must have $a\suh v_{i_1+1}$. We can then replace the subpath $\pi(a,v_{i_1+1})$ with $a\suh v_{i_1+1}$ to get an open subpath of $\pi$ from $A$ to $B\cup \Ic\cup \{I_d\}_{d\in D}$ given $C$ in $\Mf_{\Do(I_D)}$. This contradicts the assumption that the path $\pi$ is an irreducible one. So Case 3 is impossible.

    \textbf{Step~1.4: finish base case.} Overall, Cases 1, 2, and 3 have been ruled out. Then the only possibility is that node $v_{i_1+1}$ is a collider on $\pi$ but a non-collider on $\tilde{\pi}$ and $v_{i_1+1}\in C$. In the setting of Case~4, we have $v_{i_1-1}\sut v_{i_1+1}$ in $\Mf$. Since $v_{i_1+1}$ is a collider on $\pi$, we must have $v_{i_1+1}\in \pa_{\Mf_{\Do(I_D)}}(v_{i_1-1})$. This finishes the proof of the base case. 

    \textbf{Step~2: induction step.} We now proceed with the induction step. To achieve that, we first show the following statement via induction.  

    \textbf{Step~2.1: show \cref{ind:1}}.
    \begin{statement}\label{ind:1}
        For the induction step, assume that for $1\leq r \leq i_{k-1}$, if $v_{r}$ is not of a definite status on $\pi^*$, then $v_{r+1}\in \pa_{\Mf_{\Do(I_D)}}(v_{r-1})$ and $v_{r+1}$ is a collider on $\pi$ in $\Mf_{\Do(I_D)}$. Let $v_{i_k}$ be the next node after $v_{i_{k-1}}$ on $\pi^*$ that is not of definite status on $\pi^*$. We now show:
        \begin{enumerate}
            \item[(i)] we have $v_{i_{k-1}-1}\hus v_{i_{k-1}}$ in $\Mf_{\Do(I_D)}$, and

            \item[(ii)] for every $i_{k-1}\le r \leq i_k$, we have $v_r\in \pa_{\Mf_{\Do(I_D)}} (v_{i_{k-1}-1})$ and $v_r$ is a collider on $\pi$ in $\Mf_{\Do(I_D)}$.
        \end{enumerate}
    \end{statement}
    \begin{proof}[Proof of \cref{ind:1}]
        \textbf{Show (i)}.  We argue by contradiction and assume that item~(i) does not hold. Then we have $v_{i_{k-1}-1}\tuh v_{i_{k-1}}$ in $\Mf_{\Do(I_D)}$. That is because
     we have $v_{i_{k-1}+1}\in \pa_{\Mf_{\Do(I_D)}}(v_{i_{k-1}-1})$ by the induction hypothesis and we cannot  have $v_{i_{k-1}+1}\tuh v_{i_{k-1}-1}\tut v_{i_{k-1}}$ in $\Mf_{\Do(I_D)}$. Since we have $v_{i_{k-1}+1}\in \anc_{\Mf_{\Do(I_D)}}(v_{i_{k-1}})$ and the edge $v_{i_{k-1}-1}\tuh v_{i_{k-1}}$, we have that the edge between nodes $v_{i_{k-1}}$ and $v_{i_{k-1}+1}$ must be $v_{i_{k-1}}\hut v_{i_{k-1}+1}$. Then we can replace $v_{i_{k-1}-1}\tuh v_{i_{k-1}} \hut v_{i_{k-1}+1}$ with $v_{i_{k-1}-1}\hut v_{i_{k-1}+1}$ and get a shorter path $\tilde{\pi}$. The local configuration of $v_{i_{k-1}+1}$ on $\pi$ is the same as the one on $\tilde{\pi}$. Since $\pi$ is not blocked at $v_{i_{k-1}+1}$, the path $\tilde{\pi}$ cannot be blocked at $v_{i_{k-1}+1}$. Node $v_{i_{k-1}-1}$ is a parent of $v_{i_{k-1}}$. Since $v_{i_{k-1}}$ is a collider on $\pi$ and $\pi$ is open, we know that $v_{i_{k-1}-1}$ must be an ancestor of $C$. Also note that $v_{i_{k-1}-1}\notin C$, since $\pi$ is open. Therefore, the path $\tilde{\pi}$ is an open subpath of  $\pi$.  This is a contradiction.  

    \textbf{Show (ii)}. We show item~(ii) by induction. The base case for $r=i_{k-1}$ holds by the assumption of \cref{ind:1}. Now assume that for all $i_{k-1}\leq r<i_{k-1}+l<i_k$ with some $l>0$, we have $v_r\in \pa_{\Mf_{\Do(I_D)}}(v_{i_{k-1}-1})$ and $v_r$ is a collider on $\pi$ in $\Mf_{\Do(I_D)}$. Nodes $v_r$ are also colliders on $\pi^*$ in $\Pf_{\Do(I_D)}$ by assumption. So $v_{i_{k-1}+l}$ must be adjacent to $v_{i_{k-1}-1}$ in $\Mf_{\Do(I_D)}$. Otherwise the subpath $\pi^*(v_{i_{k-1}-1},v_{i_{k-1}+l})$ would be a discriminating path for $v_{i_{k-1}}$ in $\Pf_{\Do(I_D)}$. In this case, $v_{i_{k-1}}$ must be of definite status, which is a contradiction. We must have $v_{i_{k-1}-1}\hus v_{i_{k-1}+l}$ in $\Mf_{\Do(I_D)}$, since we have $v_{i_{k-1}-1}\hut v_{i_{k-1}+l-1}\hus v_{i_{k-1}+l}$. We construct a path $\breve{\pi}$ from $\pi$ by replacing the subpath $\pi(v_{i_{k-1}-1},v_{i_{k-1}+l})$ with $v_{i_{k-1}-1}\hus v_{i_{k-1}+l}$. It cannot be open from $A$ to $B\cup \Ic \cup \{I_d\}_{d\in D}$ given $C$ in $\Mf_{\Do(I_D)}$, since it is a subpath of $\pi$. By item~(i), we know that the local configuration of $v_{i_{k-1}-1}$ is invariant, so the path $\breve{\pi}$ can only be blocked at $v_{i_{k-1}+l}$. Similar to previous arguments, we obtain that $v_{i_{k-1}+l}$ is a collider on $\pi$ and a non-collider on $\breve{\pi}$. It means that we have $v_{i_{k-1}-1}\hut v_{i_{k-1}+l}$ in $\Mf_{\Do(I_D)}$ and therefore $v_{i_{k-1}+l}\in \pa_{\Mf_{\Do(I_D)}}(v_{i_{k-1}-1})$. This shows item~(ii). 

    Overall, we complete the proof of \cref{ind:1}.
    \end{proof}

    \textbf{Step~2.2: finish induction step}. We now go back to the argument of the outer induction. First note that the nodes $v_{i_k-1}$ and $v_{i_k+1}$ are adjacent, since $v_{i_k}$ is not of definite status and they are not from $\{I_d\}_{d\in D}$. We can construct a path $\tilde{\pi}$ from $\pi$ by replacing $v_{i_k-1}\sus v_{i_k} \sus v_{i_k+1}$ with $v_{i_k-1}\sus v_{i_k+1}$. Similar to the base case, there are four cases to consider. Cases 1 and 2 can be ruled out similarly. For Case 3, we need the following argument. Assume that $v_{i_k-1}$ is a collider on $\pi$ but a non-collider on $\tilde{\pi}$. This implies that $v_{i_k-1}\in \pa_{\Mf_{\Do(I_D)}}(v_{i_{k}+1})$. Similar to \cref{ind:1}(i), we have $v_{i_k}\suh v_{i_k+1}$ in $\Mf_{\Do(I_D)}$. Similar to \cref{ind:1}(ii), for every $0\leq j\leq i_{k}-i_{k-1}-1$, we have that $v_{i_{k-1}+j}\in \pa_{\Mf_{\Do(I_D)}}(v_{i_k+1})$ and $v_{i_{k-1}+j}$ is a collider on $\pi$ in $\Mf_{\Do(I_D)}$. Also by (ii), we have that the nodes $v_{i_{k-1}},\ldots, v_{i_k}$ are colliders on $\pi$ and are ancestors of $\{v_{i_{k-1}-1}, v_{i_k+1}\}$. This implies that the subpath $\pi(v_{i_{k-1}-1},v_{i_k+1})$ forms an inducing path in $\Mf$. Since $\Mf$ is maximal, we have that $v_{i_{k-1}-1}$ is adjacent to $v_{i_k+1}$ in $\Mf$ and therefore in $\Mf_{\Do(I_D)}$. This edge cannot be $v_{i_{k-1}-1}\tut v_{i_k+1}$, $v_{i_{k-1}-1}\tuh v_{i_k+1}$, or $v_{i_{k-1}-1}\hut v_{i_k+1}$, because there are arrowheads towards $v_{i_{k-1}-1}$ and $v_{i_k+1}$ and almost directed cycles ($v_{i_k}\tuh v_{i_{k-1}-1} \tuh v_{i_k+1}\huh v_{i_k}$ and $v_{i_{k-1}}\tuh v_{i_{k}+1} \tuh v_{i_{k-1}-1}\huh v_{i_{k-1}}$) are not allowed. So we have $v_{i_{k-1}-1}\huh v_{i_k+1}$. We can then replace the subpath $\pi(v_{i_{k-1}-1},v_{i_k+1})$ with $v_{i_{k-1}-1}\huh v_{i_k+1}$ to get an open subpath of $\pi$ given $C$ in $\Mf_{\Do(I_D)}$, which is a contradiction. Therefore, the only possibility is that $v_{i_k+1}$ is a collider on $\pi$ but a non-collider on $\tilde{\pi}$. It implies that $v_{i_k+1}\in \pa_{\Mf_{\Do(I_D)}}(v_{i_k-1})$. By induction, we finish the proof of the lemma.
   
\end{proof}

\begin{lemma}\label{lem:2}
    Let $\MAG$ be an iMAG and let $\Pf$ be an iSOPAG such that $\Mf\in [\Pf]_\Ms$. Let $D\subseteq \Vc$. Let $A\subseteq\Vc,\ B\subseteq \Ic\cup \Vc$ and $C\subseteq \Ic\cup \{I_d\}_{d\in D}\cup \Vc$ be pairwise disjoint. Let 
    \[
    \pi: a=v_0\sus v_1\sus \cdots \sus v_{n-1} \sus v_{n}=b
    \] 
    be an irreducible non-trivial open path from $A$ to $B\cup \Ic$ given $C$ in $\Mf_{\Do(I_D)}$. Let $\pi^*$ be the corresponding path in $\Pf_{\Do(I_D)}$ consisting of the same sequence of nodes as $\pi$. It holds that  for every $1\leq i\leq n-1$, if $v_i$ is not of a definite status on $\pi^*$ and $v_{i+1}\notin \{I_d\}_{d\in D}$, then $v_{i-1}\in \pa_{\Mf_{\Do(I_D)}}(v_{i+1})$ and $v_{i-1}$ is a collider on $\pi$ in $\Mf_{\Do(I_D)}$.
\end{lemma}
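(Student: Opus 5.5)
The proof mirrors that of \cref{lem:1} with the roles of $v_{i-1}$ and $v_{i+1}$ exchanged, i.e.\ we traverse $\pi$ from the $B$-end towards $A$. The asymmetry lies in the endpoints. Here $b\in B\cup\Ic$ (not an $I_d$), and the hypothesis $v_{i+1}\notin\{I_d\}_{d\in D}$ ensures that the triple $v_{i-1},v_i,v_{i+1}$ consists of nodes of $\Ic\cup\Vc$. Hence the uncertain status of $v_i$ on $\pi^*$ comes from a circle mark produced by the iSOPAG $\Pf$ itself, and $\fci{1}$ (available by \cref{lem:soft_mani_SOPAG}) forces $v_{i-1}$ and $v_{i+1}$ to be adjacent in $\Pf$, hence in $\Mf$.

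The first step is the base case, where $v_{i_1}$ is the last node (counting from $b$) not of definite status on $\pi^*$. Let $\tilde\pi$ be the path obtained by shortcutting $v_{i_1-1}\sus v_{i_1}\sus v_{i_1+1}$ to $v_{i_1-1}\sus v_{i_1+1}$. By irreducibility $\tilde\pi$ is blocked, and only at $v_{i_1-1}$ or $v_{i_1+1}$. This gives the same four cases as in \cref{lem:1}.
\begin{enumerate}
\item A node that is a non-collider on $\pi$ but becomes a non-ancestral collider on $\tilde\pi$ is ruled out by the (almost) cycle argument: the pattern $\suh\cdot\tut$ is impossible in a MAG, which forces $v_{i_1}$ to be a collider that is an ancestor of $C$.
\item The case where $v_{i_1+1}$ is a collider on $\pi$ but a non-collider on $\tilde\pi$ with $v_{i_1+1}\in C$ is ruled out by the discriminating-path argument, now run towards $b$. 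Every node between $v_{i_1}$ and $b$ is shown inductively to be a collider and a parent of $v_{i_1-1}$. If $b$ is not adjacent to $v_{i_1-1}$, \cref{lem:0} makes this a discriminating path for $v_{i_1}$ in $\Pf_{\Do(I_D)}$, and $\fci{4}$ then contradicts $v_{i_1}$ not being of definite status. If $b$ is adjacent, the edge $b\suh v_{i_1-1}$ yields a shorter open path, contradicting irreducibility. Because $b\in B\cup\Ic$ is a genuine graph node, \cref{lem:0} applies directly. This is why the statement restricts to endpoints in $B\cup\Ic$, unlike \cref{lem:1}.
\item The only remaining case is that $v_{i_1-1}$ is a collider on $\pi$ but a non-collider on $\tilde\pi$, which is only possible with $v_{i_1-1}\in C$. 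Then $v_{i_1-1}\tuh v_{i_1+1}$, so $v_{i_1-1}\in\pa_{\Mf_{\Do(I_D)}}(v_{i_1+1})$ and $v_{i_1-1}$ is a collider, as required.
\end{enumerate}

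The second step is the induction towards $a$, via an analogue of \cref{ind:1}. Suppose the claim holds for all non-definite nodes after $v_{i_{k-1}}$, and let $v_{i_k}$ be the preceding such node. We show:
\begin{enumerate}
\item[(i)] $v_{i_{k-1}}\suh v_{i_{k-1}+1}$ has an arrowhead at $v_{i_{k-1}+1}$; otherwise a shortcut through the parent $v_{i_{k-1}-1}$ gives an open subpath.
\item[(ii)] Every $v_r$ with $i_k\le r\le i_{k-1}$ is a collider on $\pi$ and a parent of $v_{i_{k-1}+1}$. This uses the discriminating-path obstruction via \cref{lem:0} and $\fci{4}$, together with the shortcut obstruction.
\end{enumerate}
The case analysis at $v_{i_k}$ then proceeds as in Step~2.2 of \cref{lem:1}. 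In the problematic case, the collider segment $\pi(v_{i_k-1},v_{i_{k-1}+1})$ is an inducing path in $\Mf$. Maximality then gives adjacency, and ancestrality forces a bidirected edge $v_{i_k-1}\huh v_{i_{k-1}+1}$, which yields a shorter open path and a contradiction. The only surviving case is the desired conclusion $v_{i_k-1}\in\pa_{\Mf_{\Do(I_D)}}(v_{i_k+1})$ with $v_{i_k-1}$ a collider.

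The main obstacle is bookkeeping of orientations near soft-intervention nodes. When some $v_j$ with $j$ near $i$ is adjacent to an $I_d$, the edges $I_d\tuo\cdot$ or $I_d\tuh\cdot$ must be checked not to create the patterns excluded by \ref{sopag:p2} and \ref{sopag:p3}. For this I would rely on \cref{lem:soft_mani_SOPAG}, \cref{lem:head_circle} and \cref{lem:property_visible}.
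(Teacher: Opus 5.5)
Your overall plan is the paper's: run the proof of \cref{lem:1} from the $b$-end with the roles of $v_{i_1-1}$ and $v_{i_1+1}$ swapped. However, you misplace the one point where the mirror image is not symmetric, and so you miss the case that needs a new argument.

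In \cref{lem:2}, $\pi$ is only irreducible relative to $B\cup\Ic$. So, unlike in \cref{lem:1}, it may pass through a soft-intervention node $I_d$ as an intermediate non-collider. In your item~2 (the counterpart of Step~1.3), the discriminating path for $v_{i_1}$ starts at $v_{i_1+2}$ (and at later nodes as the induction moves towards $b$). That node can be such an $I_d$, with $I_d\tuh v_{i_1+1}$ in $\Mf_{\Do(I_D)}$. Suppose $v_{i_1+1}\neq d$. Since $v_{i_1}\suh v_{i_1+1}$, \ref{sopag:p2} rules out $d\tuo v_{i_1+1}$ and $d\out v_{i_1+1}$, but $d\ouo v_{i_1+1}$ is still possible in $\Pf$. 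In that case \cref{def:int_pag} gives $I_d\tuo v_{i_1+1}$ in $\Pf_{\Do(I_D)}$.

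This configuration is consistent with your choice of $v_{i_1}$: $v_{i_1+1}$ is still of definite status on $\pi^*$, via the last $id$-collider type in \cref{def:sep_PAG}. But $I_d\tuo v_{i_1+1}\hus v_{i_1}$ is not a collider path in $\Pf_{\Do(I_D)}$, so the hypothesis of \cref{lem:0} fails and $\fci{4}$ cannot be invoked. Your sentence ``because $b\in B\cup\Ic$ is a genuine graph node, \cref{lem:0} applies directly'' is therefore the wrong diagnosis. \cref{lem:0} already admits a start node in $\{I_d\}_{d\in D}$; the obstruction is its collider-path hypothesis, and restricting the endpoint $b$ does nothing to prevent it at an intermediate node. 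For the same reason, $v_{i+1}\notin\{I_d\}_{d\in D}$ does not by itself place $v_{i-1}$ in $\Ic\cup\Vc$, contrary to your opening claim.

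The paper closes this case by rerouting through $d$:
\begin{enumerate}
\item From $v_{i_1}\suh v_{i_1+1}\ouo d$, \cref{lem:head_circle} gives $v_{i_1}\suh d$, hence $I_d\tuh d$ in $\Pf_{\Do(I_D)}$.
\item If $I_d$ is not adjacent to $v_{i_1-1}$, then $I_d\tuh d\hus v_{i_1}\sus v_{i_1-1}$ is a discriminating path for $v_{i_1}$ in $\Pf_{\Do(I_D)}$.
\item Closure under $\fci{4}$ (\cref{lem:soft_mani_SOPAG}) yields either $v_{i_1}\tuh v_{i_1-1}$, which makes $v_{i_1}$ definite, or $d\huh v_{i_1}\huh v_{i_1-1}$.
\item In the second case, non-definiteness of $v_{i_1}$ forces a circle at $v_{i_1}$ on its edge to $v_{i_1+1}$. \cref{lem:head_circle} then gives $d\suh v_{i_1+1}$, contradicting $d\ouo v_{i_1+1}$.
\end{enumerate}
Your closing appeal to \cref{lem:soft_mani_SOPAG,lem:head_circle,lem:property_visible} for unspecified ``bookkeeping'' does not supply this argument. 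Moreover, your explicit claim that no extra work is needed at this step is false.
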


\begin{proof}[Proof of \cref{lem:2}]
 Note that \cref{lem:1} implies that $v_{i+1}$ is a collider on $\pi$ and therefore $v_{i+1}\notin \Ic$. The proof is in parallel to that of \cref{lem:1}. One can start from the first node $v_{i_1}$ that is not of definite status when traversing the path from $b$ toward $A$, while exchanging the roles of $v_{i_1+1}$ and $v_{i_1-1}$ in the proof of \cref{lem:1}. The only difference occurs in the counterpart of Step~1.3 when $v_{i_1+2}=I_d$ (which plays the role of $v_{i_1-2}$ in the proof of \cref{lem:1}) and the other parts of the proof are symmetric to that of \cref{lem:1}. Therefore, we make some comments on Step~1.3.  
 
 Recall that the goal is to show that $v_{i_1+2}$ is adjacent to $v_{i_1-1}$ (which plays the role of $v_{i_1+1}$ in the proof of \cref{lem:1}). We prove by contradiction and assume that $v_{i_1+2}$ and $v_{i_1-1}$ are non-adjacent. If $v_{i_1+1}=d$ (which plays the role of $v_{i_1-1}$ in the proof of \cref{lem:1}), then $v_{i_1+2}\tuh v_{i_1+1}\hus v_{i_1}\sus v_{i_1-1}$ forms a discriminating path for $v_{i_1}$ in $\Pf_{\Do(I_D)}$. Then by \cref{lem:soft_mani_SOPAG} the rest of the argument is the same as \cref{lem:1}. If $v_{i_1+1}\ne d$, then we may have $v_{i_1+2}\tuh v_{i_1+1}$ or $v_{i_1+2}\tuo v_{i_1+1}$ in $\Pf_{\Do(I_D)}$. The first case can be argued exactly as in the case where $v_{i_1+1}=d$. For the second case, by \cref{def:int_pag} we know that to have $v_{i_1+2}\tuo v_{i_1+1}$ we must have $d \ouo v_{i_1+1}$ in $\Pf_{\Do(I_D)}$ provided the presence of $v_{i_1+1}\hus v_{i_1}$. We also have $v_{i_1+2}\tuh d$ and $d \hus v_{i_1}$ in $\Pf_{\Do(I_D)}$ by \cref{def:int_pag,lem:head_circle}. Therefore, if $v_{i_1+2}$ is non-adjacent to $v_{i_1-1}$, then $v_{i_1+2}\tuh d \hus v_{i_1}\sus v_{i_{1}-1}$ forms a discriminating path for $v_{i_1}$ in $\Pf_{\Do(I_D)}$. By \cref{lem:soft_mani_SOPAG}, we either have $v_{i_1}\tuh v_{i_1-1}$ or $d\huh v_{i_1} \huh v_{i_1-1}$ in $\Pf_{\Do(I_D)}$. In the first case, we can immediately see that $v_{i_1}$ is of definite status on $\pi^*$. In the second case, since $v_{i_1}\huh v_{i_1-1}$ and $v_{i_1}$ is not of a definite status on $\pi^*$ by assumption,  we have $v_{i_1+1}\huo v_{i_1}$ or $v_{i_1+1}\ouo v_{i_1}$. Therefore, from \cref{lem:head_circle} and $d\huh v_{i_1}$, we must have $d\suh v_{i_1+1}$, which contradicts the fact that we have $d\ouo v_{i_1+1}$. Overall, we derive a contradiction when assuming that $v_{i_1+2}$ and $v_{i_1-1}$ are non-adjacent. This implies that $v_{i_1+2}$ is adjacent to $v_{i_1-1}$. 
\end{proof}

\begin{lemma}\label{lem:3}
    Let $\MAG$ be an iMAG and let $\Pf$ be an iSOPAG such that $\Mf\in [\Pf]_\Ms$. Let $D\subseteq \Vc$. Let $A\subseteq\Vc,\ B\subseteq \Ic\cup \Vc$ and $C\subseteq \Ic\cup \{I_d\}_{d\in D}\cup \Vc$ be pairwise disjoint.
    \begin{enumerate}
        \item Let $\pi: a=v_0\sus v_1\sus \cdots \sus v_{n-1} \sus v_{n}=b$ be an irreducible non-trivial open path from $A$ to $B\cup \Ic$ given $C$ in $\Mf_{\Do(I_D)}$. Let $\pi^*$ be the corresponding path consisting of the same sequence of nodes in $\Pf_{\Do(I_D)}$. Then for every $1\leq i\leq n-1$ (if any), node $v_i$ must be of definite status.

        \item Let $\varpi:a=u_0\sus u_1\sus \cdots \sus u_{m-1} \sus u_{m}=I_d\in \{I_d\}_{d\in D}$ be an irreducible non-trivial open path from $A$ to $\{I_d\}_{d\in D}$ given $C$ in $\Mf_{\Do(I_D)}$. Then there exists a path $\tilde{\varpi}$ from $A$ to $\{I_d\}_{d\in D}$ in $\Pf_{\Do(I_D)}$ consisting of a subsequence $(u_{i_j})_{j=0}^\ell$ of $(u_{i})_{i=0}^m$ such that every node on $\tilde{\varpi}$ is of the same definite (non-)collider status as it is on $\varpi$.
    \end{enumerate}
\end{lemma}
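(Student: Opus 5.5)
For part~(1), the plan is to combine \cref{lem:1,lem:2} and derive a contradiction from their conclusions. Suppose some non-endnode $v_i$ of $\pi^*$ is not of definite status. Since $\pi$ ends in $B\cup\Ic$, no node of $\pi$ lies in $\{I_d\}_{d\in D}$: irreducibility forbids such non-endnodes, as in the opening of the proof of \cref{lem:1}. Hence the hypothesis $v_{i+1}\notin\{I_d\}_{d\in D}$ of both lemmas holds.

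\cref{lem:1} gives $v_{i+1}\in\pa_{\Mf_{\Do(I_D)}}(v_{i-1})$, with $v_{i+1}$ a collider on $\pi$. \cref{lem:2} gives $v_{i-1}\in\pa_{\Mf_{\Do(I_D)}}(v_{i+1})$, with $v_{i-1}$ a collider on $\pi$. These two edges, $v_{i+1}\tuh v_{i-1}$ and $v_{i-1}\tuh v_{i+1}$, would form a directed $2$-cycle, which is impossible since there is at most one edge between two nodes and the graph is ancestral. So every non-endnode of $\pi^*$ has definite status. Since the marks that are not circles are shared between $\Pf_{\Do(I_D)}$ and $\Mf_{\Do(I_D)}$, this status agrees with the one on $\pi$; for a definite non-collider $v_{i-1}\suo v_i\ous v_{i+1}$, unshieldedness holds by the definition of definite status.

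For part~(2), the obstacle is that \cref{lem:1} still applies except at the last step into $I_d$, whereas \cref{lem:2} needs $v_{i+1}\notin\{I_d\}_{d\in D}$ and hence fails at $u_{m-1}$. The plan is as follows.
\begin{enumerate}
\item Apply the argument of part~(1) to all $u_i$ with $1\le i\le m-2$. Where the hypothesis of \cref{lem:2} is available the contradiction goes through as above, so these nodes have definite status.
\item The only possibly problematic node is $u_{m-1}$, which has neighbour $u_m=I_d$. If $u_{m-1}$ has definite status on the corresponding path in $\Pf_{\Do(I_D)}$, take $\tilde{\varpi}$ to be that path.
\item Otherwise, the configuration at $u_{m-1}$ is $u_{m-2}\suo u_{m-1}\sus I_d$ or $u_{m-2}\sus u_{m-1}\ous I_d$ with $u_{m-2}$ adjacent to $I_d$, by $\fci{1}$ closure (\cref{lem:soft_mani_SOPAG}). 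By \cref{lem:1}, $I_d$ would have to be a parent of $u_{m-2}$ in $\Mf_{\Do(I_D)}$, or the shortcut is blocked only at $u_{m-2}$.
\item Replace $u_{m-2}\sus u_{m-1}\sus I_d$ by the edge $u_{m-2}\sus I_d$, or iterate further back along a collider chain as in Step~1.3 of \cref{lem:1}, using discriminating paths and \cref{lem:0}. The edges $I_d\sus u$ in $\Pf_{\Do(I_D)}$ are out of $I_d$, so a shortcut edge to $I_d$ never makes the node before it a collider in an unwanted way. Irreducibility of $\varpi$ then forces $u_{m-2}$ to keep its collider or non-collider status, and it has definite status by the previous step.
\end{enumerate}
The resulting subsequence path $\tilde{\varpi}$ has all nodes of definite status matching their status on $\varpi$. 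The main difficulty is this last case analysis at $u_{m-1}$: one must check, via \cref{def:int_pag}, that $I_d$ is adjacent to $u_{m-2}$ whenever needed. This is where the $\Ec_4,\Ec_6,\Ec_7$ clauses of \cref{def:int_pag} enter.
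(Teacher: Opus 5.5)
Part~(1) is the paper's argument: \cref{lem:1,lem:2} give both $v_{i+1}\tuh v_{i-1}$ and $v_{i-1}\tuh v_{i+1}$, which is impossible. Your explicit check that $v_{i+1}\notin\{I_d\}_{d\in D}$ is a welcome addition. Part~(2), however, has a genuine gap at the one step that carries weight: showing that the shortcut $u_{m-2}\sus I_d$ preserves the status of $u_{m-2}$.

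There are three problems. First, \cref{lem:1} cannot be invoked at $u_{m-1}$. It carries the same hypothesis $v_{i+1}\notin\{I_d\}_{d\in D}$ as \cref{lem:2}, and its conclusion ``$v_{i+1}$ is a collider'' is meaningless for the endnode $I_d$. So nothing gives you $I_d\in\pa_{\Mf_{\Do(I_D)}}(u_{m-2})$.

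Second, irreducibility of $\varpi$ is a statement about $\Mf_{\Do(I_D)}$, but the shortcut edge is taken in $\Pf_{\Do(I_D)}$, which can have strictly more edges at $I_d$. For example, $d\ouh u_{m-2}$ in $\Pf$ gives $I_d\tuh u_{m-2}$ in $\Pf_{\Do(I_D)}$. If $\Mf$ orients that edge as $d\huh u_{m-2}$ or as a visible $d\tuh u_{m-2}$, then $I_d$ and $u_{m-2}$ are non-adjacent in $\Mf_{\Do(I_D)}$. Worse, when the shortcut does exist in $\Mf_{\Do(I_D)}$, irreducibility says it is blocked. Since all other nodes keep their local configuration, it is blocked at $u_{m-2}$. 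So irreducibility forces the status of $u_{m-2}$ to \emph{change}, the opposite of what you claim.

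Third, the remark that edges at $I_d$ are out of $I_d$ concerns the wrong endpoint. What matters is the mark at $u_{m-2}$ on $I_d\sus u_{m-2}$, which by \cref{def:int_pag} can be an arrowhead ($\Ec_4$), a tail ($\Ec_5,\Ec_6$) or a circle ($\Ec_7$). Nor does definite status of $u_{m-2}$ on $\varpi^*$ (neighbours $u_{m-3},u_{m-1}$) transfer to the shortcut (neighbours $u_{m-3},I_d$); it can fail there. Your use of the $\fci{1}$-closure from \cref{lem:soft_mani_SOPAG} to get adjacency of $u_{m-2}$ and $I_d$ when $u_{m-2}\suo u_{m-1}\hut I_d$ is fine. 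Adjacency is not the problem; marks are.

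The missing idea is a purely local comparison of marks at $u_{m-2}$ in $\Pf$, with no appeal to irreducibility at this step. For $u_{m-2}\suo u_{m-1}\out I_d$ with $u_{m-2}$ adjacent to $I_d$, the paper splits on the edge between $d$ and $u_{m-2}$, which fixes the mark of $I_d\sus u_{m-2}$.
\begin{itemize}
\item If that edge is an invisible $d\tuh u_{m-2}$ or $d\ouh u_{m-2}$ (so $I_d\tuh u_{m-2}$): \ref{sopag:p2} excludes $u_{m-2}\tuo u_{m-1}$. Also $u_{m-2}\ouo u_{m-1}$ would give $d\suh u_{m-1}$ by \cref{lem:head_circle}, which is incompatible with $u_{m-1}\out I_d$ under \cref{def:int_pag}. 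Hence $u_{m-2}\huo u_{m-1}$, the arrowhead at $u_{m-2}$ is reproduced, and $\varpi^*(u_0,u_{m-2})\oplus(u_{m-2},I_d)$ works.
\item If that edge is $d\ouo u_{m-2}$, $d\out u_{m-2}$ or $d\tuo u_{m-2}$: the same tools show that neither $u_{m-1}$ nor $u_{m-3}$ puts an arrowhead at $u_{m-2}$. So $u_{m-2}$ is a definite non-collider on both paths, except when $u_{m-3}\suo u_{m-2}\out I_d$ with $u_{m-3}$ adjacent to $I_d$. That case is handled by repeating the argument one node further back. This is an iteration along non-colliders, not along a collider chain as in Step~1.3 of \cref{lem:1}, and it terminates by finiteness.
\end{itemize}
The case $u_{m-2}\suo u_{m-1}\hut I_d$ is reduced to the same dichotomy via \cref{def:int_pag,lem:head_circle,lem:property_visible}.
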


\begin{proof}[Proof of \cref{lem:3}]
    \textbf{Step~1: show (1)}. If $v_i$ is not of definite status,  then $v_{i-1}\in \pa_\Mf(v_{i+1})$ and $v_{i+1}\in \pa_{\Mf}(v_{i-1})$ by \cref{lem:1,lem:2}. This is impossible. Hence, $v_i$ must be of definite status.

    \textbf{Step~2: show (2)}. By the same argument as the first part, it holds that $u_{i}$, for all $1\le i \le m-2$ (when $m\ge 3$), are of definite status. If $u_{m-1}$ is of definite status, then we are done. Therefore, there are two cases:
    \begin{enumerate}[label=\textbf{Case~\arabic*:}, ref=Case~\arabic*, leftmargin=*]
        \item $u_{m-2}\suo u_{m-1} \out u_m=I_d$ with $u_{m-2}$ being adjacent to $u_m$;
        \item $u_{m-2}\suo u_{m-1} \hut u_m=I_d$.
    \end{enumerate}

    \textbf{Case~1}. We consider the case where $u_{m-2}\suo u_{m-1} \out u_m=I_d$ with $u_{m-2}$ being adjacent to $u_m$. Note that if $m\le 2$, then we are done. So in the following, we can assume that $m\ge 3$. Since $u_{m-2}$ is adjacent to $u_m=I_d$, \cref{def:SOPAG} leaves two possibilities: 
    \begin{enumerate}[label=\textbf{Case~1.\arabic*:}, ref=Case~\arabic*, leftmargin=*]
        \item $d\tuh u_{m-2}$ invisible or $d\ouh u_{m-2}$, and
        \item $d\tuo u_{m-2}$, $d\ouo u_{m-2}$, or $d\out u_{m-2}$.
    \end{enumerate}

    \textbf{Case~1.1}. First assume that $d\ne u_{m-1}$. If we have $u_{m-2} \hut d$ invisible or $u_{m-2} \huo d$, then we have $u_{m-2}\huo u_{m-1}$ since $u_{m-2}\tuo u_{m-1}$ is excluded by \cref{def:SOPAG}, and $u_{m-2}\ouo u_{m-1}$ would imply $d\suh u_{m-1}$ by \cref{lem:head_circle}, which contradicts $u_{m-1}\out u_m$. Then, $\tilde{\varpi}\coloneqq \varpi^*(u_0,u_{m-2})\oplus (u_{m-2},u_m)$ satisfies our goal. The case where $u_{m-2}\huo u_{m-1}=d$ can be argued similarly.
    
    \textbf{Case~1.2}. Now assume $u_{m-2} \ouo d$ or $u_{m-2} \out d$ or $u_{m-2} \tuo d$.  Note that it is impossible to have $u_{m-2} \huo u_{m-1}$ in this case. Indeed, if $d\ne u_{m-1}$, the patterns $u_{m-1}\ouh u_{m-2} \out d$ and $u_{m-1}\ouh u_{m-2} \tuo d$ are excluded by \cref{def:SOPAG}, and $u_{m-1}\ouh u_{m-2}\ouo d$ implies $u_{m-1}\suh d$ by \cref{lem:head_circle}, which contradicts $u_{m-1}\out u_m$. This means that we have either $u_{m-2} \ouo u_{m-1}$ or $u_{m-2} \tuo u_{m-1}$. Then we cannot have $u_{m-3}\suh u_{m-2}$ by \cref{def:SOPAG} and the fact that $u_{m-2}$ is of definite status on $\varpi$. Therefore, we could have either $u_{m-3}\sut u_{m-2}$ or $u_{m-3}\suo u_{m-2}$. If we have $u_{m-3}\sut u_{m-2}$ or $u_{m-3}\suo u_{m-2}\out u_{m}$ with $u_{m}$ non-adjacent to $u_{m-3}$, then $\tilde{\varpi}\coloneqq \varpi^*(u_0,u_{m-2})\oplus (u_{m-2},u_m)$ will do the job. We consider the remaining case where $u_{m-3}\suo u_{m-2} \out u_m$ with $u_m$ adjacent to $u_{m-3}$. We repeat the above argument with $u_{m-2}\suo u_{m-1} \out u_m$ replaced by $u_{m-3}\suo u_{m-2} \out u_m$. Since the length of the path is finite, the argument will eventually terminate and give our target $\tilde{\varpi}$.

    \textbf{Case~2}. We are left with the case where $u_{m-2}\suo u_{m-1} \hut u_m=I_d$. If $u_{m-1}=d$, then we have $d\ouh u_{m-2}$ or $d\ouo u_{m-2}$ and therefore $u_m$ is adjacent to $u_{m-2}$. If $u_{m-1}\ne d$, then we could have $d\tuh u_{m-1}$ invisible or $d\ouh u_{m-1}$. If we have $d\tuh u_{m-1}$ invisible, then \cref{lem:head_circle,lem:property_visible} imply that $d\tuh u_{m-2}$ invisible or $d\ouh u_{m-2}$. If we have $d\ouh u_{m-1}$, then it is impossible to have $d\huh u_{m-2}$, since otherwise we would have $u_{m-2}\suh u_{m-1}$ by \cref{lem:head_circle}, which causes a contradiction. Therefore, we could have $d\tuh u_{m-2}$ or $d\ouh u_{m-2}$. In addition, \cref{lem:property_visible} yields that the directed edge $d\tuh u_{m-2}$ must be invisible if present. In summary, by \cref{def:SOPAG}, $u_{m-2}$ and $u_m$ are adjacent and we have $d\ouo u_{m-2}$, or $d\ouh u_{m-2}$ or $d\tuh u_{m-2}$ invisible. Therefore, a similar argument to that for Case~1 concludes the proof.
\end{proof}

\begin{lemma}\label{lem:definite_hint}
    Let $\MAG$ be an iMAG and let $\Pf$ be an iSOPAG such that $\Mf\in [\Pf]_\Ms$. Let $D,T\subseteq \Vc$ be disjoint. Let $A\subseteq\Vc\sm T,\ B\subseteq (\Ic\cup \Vc)\sm D$ and $C\subseteq \Ic\cup \{I_d\}_{d\in D}\cup \Vc$ be pairwise disjoint. Let
    \[
    \pi: a=v_0\sus v_1\sus \cdots \sus v_{n-1} \sus v_{n}=b
    \] 
    be an irreducible non-trivial  open path from $A$ to $B\cup \Ic \cup \{I_d\}_{d\in D} \cup T$ given $C\cup T$ in $\Mf_{\Do(I_D,T)}$. Let $\pi^*$ be the corresponding path in $\Pf_{\Do(I_D,T)}$ consisting of the same sequence of nodes as $\pi$. If $b\notin \{I_d\}_{d\in D}$, then for every $1\leq i\leq n-1$ (if any), node $v_i$ must be of definite status on $\pi^*$ in $\Pf_{\Do(I_D,T)}$. If $b\in \{I_d\}_{d\in D}$, then there exists a subseqence $(v_{i_j})_{j=0}^\ell$ of $(v_{i})_{i=0}^n$ such that it forms a path in $\Pf_{\Do(I_D,T)}$   on which every intermediate node is of the same definite (non-)collider status as it is on $\pi$ in $\Mf_{\Do(I_D,T)}$.
\end{lemma}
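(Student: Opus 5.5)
The plan is to reduce \cref{lem:definite_hint} to \cref{lem:3}, applied to the iMAG $\Mf_{\Do(T)}$ and the iSOPAG $\Pf_{\Do(T)}$ in place of $\Mf$ and $\Pf$. Hard manipulation only deletes edges into $T$ (turning circles at $T$ into tails) and reclassifies $T$ as input nodes. Since $\pi$ is irreducible and open given $C\cup T$, it has no non-endnode in $T$. Also $b\notin T$ unless $\pi$ ends there, and in that case $b$ behaves like an input endpoint.

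\textbf{Step 1: the manipulated graphs are again of the right type.} I first check that $\Mf_{\Do(T)}$ is an iMAG with inputs $\Ic\cup T$. It is ancestral because edges are only deleted and tails are never converted into heads. Maximality is inherited, because an inducing path in $\Mf_{\Do(T)}$ between non-adjacent nodes would also be inducing in $\Mf$, since the colliders remain ancestors. I then check that $\Mf_{\Do(T)}\in[\Pf_{\Do(T)}]_\Ms$, comparing the mark changes in \cref{def:int_mag,def:int_pag} edge by edge.

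\textbf{Step 2: the properties used in \cref{lem:0,lem:1,lem:2,lem:3}.} Rather than claim that $\Pf_{\Do(T)}$ is a full iSOPAG, I isolate what those proofs actually use:
\begin{itemize}
\item \cref{lem:head_circle}, i.e.\ \ref{sopag:p2}--\ref{sopag:p3};
\item closure of the soft manipulation under $\fci{1}$ and $\fci{4}$, i.e.\ \cref{lem:soft_mani_SOPAG};
\item \cref{lem:property_visible}.
\end{itemize}
I then verify each of these for $\Pf_{\Do(T)}$ on the node set $(\Ic\cup\Vc)\sm T$ together with $T$ as inputs.

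\begin{itemize}
\item \emph{Patterns avoiding $T$.} Any pattern $a\suh b\ous c$ or a discriminating path that uses only non-$T$ nodes as internal nodes is already present in $\Pf$ with the same marks. So the properties transfer.
\item \emph{Patterns with an endpoint in $T$.} Such an endpoint carries only tails in $\Pf_{\Do(T)}$. For \ref{sopag:p3}, the patterns $t\suh b\ous c$ with $t\in T$ come from $t\sus b\ous c$ in $\Pf$. There the edge from $t$ is out of $t$ or has a circle at $t$. I then argue via \ref{sopag:p2} and \cref{lem:head_circle} in $\Pf$ that $t$ and $c$ are adjacent with the appropriate mark at $c$.
\end{itemize}

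\textbf{Step 3: conclude.} Soft manipulation commutes in the required sense: $(\Pf_{\Do(I_D)})_{\Do(T)}$ and $(\Pf_{\Do(T)})$ soft-manipulated differ only in edges into $T$, which never lie on $\pi$. The same holds for $\Mf$. Hence $\pi$ is an irreducible open path in $(\Mf_{\Do(T)})_{\Do(I_D)}$, with $C\cup T$ as conditioning set, from $A$ to $B\cup(\Ic\cup T)\cup\{I_d\}$. Conditioning on $T$ is harmless because $T$ consists of input endpoints.
\begin{itemize}
\item If $b\notin\{I_d\}_{d\in D}$, \cref{lem:3}(1) gives definite status for every intermediate node.
\item Otherwise, \cref{lem:3}(2) supplies the required subsequence.
\end{itemize}

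The main obstacle is Step~2. The soft-manipulation edges $\Ec_4$--$\Ec_7$ and visibility are defined relative to $\Pf$, but \cref{lem:soft_mani_SOPAG} and \cref{lem:property_visible} must hold after the hard manipulation. In particular, deleting edges into $T$ can make a formerly invisible edge visible or create new non-adjacencies. If this fails, my fallback is to redo the arguments of \cref{lem:1,lem:2} directly on $\pi$. Those arguments only inspect nodes of $\pi$ and their mutual adjacencies, none of which lie in $T$. Any discriminating path they build, by the construction in \cref{lem:0}, also avoids $T$ internally and so already exists in $\Pf_{\Do(I_D)}$.
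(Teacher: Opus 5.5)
Your main route, which applies \cref{lem:3} to the pair $(\Mf_{\Do(T)},\Pf_{\Do(T)})$, fails at Step~3. Soft and hard manipulation do \emph{not} commute up to edges into $T$; this is exactly \cref{ex:count_int_com}. Deleting edges into $T$ can destroy the witness that makes some $d\tuh v$ visible. Then $(\Mf_{\Do(T)})_{\Do(I_D)}$ acquires extra edges $I_d\tuh v$, and the mark at $d$ on the edge between $I_d$ and $d$ can change from an arrowhead to a circle. These edges are incident to $I_d$, not to $T$, and they can lie on $\pi$ or shortcut it.

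Concretely, take the iMAG $\Mf$ with edges $t\huh d$, $d\tuh c$, $a\tuh c$, viewed as an iSOPAG $\Pf=\Mf$. Set $T=\{t\}$, $D=\{d\}$, $A=\{a\}$, $B=\emptyset$, $C=\{c\}$.
\begin{itemize}
\item In $\Mf_{\Do(I_D,T)}$ you get $I_d\tuh d\tuh c$ and $a\tuh c$, because $d\tuh c$ is visible in $\Mf$ via $t$. The path $\pi: a\tuh c\hut d\hut I_d$ is an irreducible open path given $C\cup T$.
\item In $(\Mf_{\Do(T)})_{\Do(I_D)}$, however, $d\tuh c$ has become invisible. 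So $I_d\tuh c$ appears, and $I_d\tuo d$ replaces $I_d\tuh d$.
\item Now $a\tuh c\hut I_d$ is an open shortcut. Hence $\pi$ is reducible there and \cref{lem:3} does not apply to it.
\item What \cref{lem:3} would return uses the edge $c\hut I_d$, which is absent from $\Pf_{\Do(I_D,T)}$.
\end{itemize}
Step~1 also fails as stated: $\Mf_{\Do(T)}\notin[\Pf_{\Do(T)}]_\Ms$ in general. \cref{def:int_pag} turns $x\suo t$ into $x\sut t$, while \cref{def:int_mag} deletes the edge whenever $\Mf$ orients it into $t$, so the skeletons differ. Step~2 remains unverified, as you note. (Also, $b\notin T$ always, since an endpoint of an open path cannot lie in the conditioning set $C\cup T$.)

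The missing idea is that no manipulation of the base graphs is needed.
\begin{itemize}
\item In $\Mf_{\Do(I_D,T)}$, nodes of $T$ carry no arrowheads, and $b\notin C\cup T$. Hence $\pi$ avoids $T$ entirely.
\item No node outside $T$ is an ancestor of $T$ there.
\item Edges among non-$T$ nodes coincide in $\Mf_{\Do(I_D)}$ and $\Mf_{\Do(I_D,T)}$, and likewise in $\Pf_{\Do(I_D)}$ and $\Pf_{\Do(I_D,T)}$.
\item Ancestor sets only grow when the edges into $T$ are restored.
\end{itemize}
Hence $\pi$ is open given $C$ in $\Mf_{\Do(I_D)}$.

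The one substantive step is showing that $\pi$ is \emph{irreducible} in $\Mf_{\Do(I_D)}$. Let $\tilde{\pi}$ be a subsequence path that is open given $C$ in $\Mf_{\Do(I_D)}$. It is also a path in $\Mf_{\Do(I_D,T)}$, and it can only be blocked there at a collider $v_{i_j}$ of $\tilde{\pi}$ that is a non-collider on $\pi$. Follow the directed segment of $\pi$ out of $v_{i_j}$. By ancestrality, and because arrowheads cannot meet undirected edges, it must stop at a collider $u$ of $\pi$ before reaching the neighbour of $v_{i_j}$ on $\tilde{\pi}$. Since $u\in\anc_{\Mf_{\Do(I_D,T)}}(C)$, you get $v_{i_j}\in\anc_{\Mf_{\Do(I_D,T)}}(C)$. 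So $\tilde{\pi}$ is open in $\Mf_{\Do(I_D,T)}$, contradicting the irreducibility of $\pi$ there.

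Now \cref{lem:3} applies to the genuine pair $(\Mf,\Pf)$. Its conclusions transfer to $\Pf_{\Do(I_D,T)}$ because they involve only nodes of $\pi$ and the edges and adjacencies among them. Your fallback points in this direction, but it skips exactly this irreducibility transfer, on which every shortcut argument in \cref{lem:1,lem:2} depends.
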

 
\begin{proof}[Proof of \cref{lem:definite_hint}]

The proof is inspired by the argument in \cite[Lemma~I of Lemma~14]{jaber22causal}. 

First note that $\pi^*$ is a well-defined path in $\Pf_{\Do(I_D,T)}$ and that the path $\pi$ is also open from $A$ to $B\cup \Ic\cup \{I_d\}_{d\in D}$ given $C$ in $\Mf_{\Do(I_D)}$.  
    
    We show that the path $\pi$ is irreducible in $\Mf_{\Do(I_D)}$. Assume for contradiction that this is not the case, i.e., there is a path 
    \[
    \tilde{\pi}:v_{i_0}\sus v_{i_1}\sus \cdots \sus v_{i_{m-1}}\sus v_{i_m}
    \] 
    with $1\leq m < n$ where $(v_{i_j})_{j=0}^{m}$ is a subsequence of $(v_{i})_{i=0}^{n}$ such that $\tilde{\pi}$ is open from $A$ to $B\cup \Ic \cup \{I_d\}_{d\in D}$ given $C$ in $\Mf_{\Do(I_D)}$ but is not open in $\Mf_{\Do(I_D,T)}$. Note that path $\pi$ does not contain nodes from $T$ and therefore $\tilde{\pi}$ does not either. This implies that the path $\tilde{\pi}$ is also present in $\Mf_{\Do(I_D,T)}$.

    Assume that the path $\tilde{\pi}$ is blocked at $v_{i_j}$ by $C\cup T$ in $\Mf_{\Do(I_D,T)}$. Since $\pi$ is open at $v_{i_j}$ in $\Mf_{\Do(I_D,T)}$ given $C$, we know that node $v_{i_j}$ must have different local configurations on $\pi$ and on $\tilde{\pi}$ respectively in $\Mf_{\Do(I_D,T)}$. If $v_{i_j}$ is a non-collider on $\tilde{\pi}$, then  
    $v_{i_j}\notin C$, since $\tilde{\pi}$ is open given $C$ in $\Mf_{\Do(I_D)}$. This implies that $\tilde{\pi}$ is open at $v_{i_j}$ given $C\cup T$ in $\Mf_{\Do(I_D,T)}$, which is a contradiction. 

    Now we assume that node $v_{i_j}$ is a collider on $\tilde{\pi}$ and a non-collider on $\pi$ in $\Mf_{\Do(I_D,T)}$. In this case, we have $v_{i_{j-1}}\suh v_{i_j} \hus v_{i_{j+1}}$ on $\tilde{\pi}$ and assume $v_{i_j-1}\hut v_{i_j}\sus v_{i_j+1}$ on $\pi$ in $\Mf_{\Do(I_D,T)}$ (not $v_{i_j-1}\tut v_{i_j}$ since $v_{i_{j-1}}\suh v_{i_j}$). Note that $v_{i_{j-1}}\ne v_{i_j-1}$ and the subpath $\pi(v_{i_{j-1}},v_{i_j-1})$ cannot be a directed path $v_{i_{j-1}}\hut \cdots \hut v_{i_{j}-1}$ in $\Mf_{\Do(I_D,T)}$, otherwise we would have an (almost) directed cycle $v_{i_j}\hus v_{i_{j-1}}\hut \cdots \hut v_{i_j-1}\hut v_{i_j}$. Since arrowheads cannot meet undirected edges in a MAG, there must be a collider on $\pi(v_{i_{j-1}},v_{i_j})$. Let node $u$ be the first collider on $\pi(v_{i_{j-1}},v_{i_j})$ after node $v_{i_j}$. The node $u$ does not block the path $\pi$ given $C\cup T$ in $\Mf_{\Do(I_D,T)}$, so we have $u\in \anc_{\Mf_{\Do(I_D,T)}}(C)$. Since we have $v_{i_j}\tuh v_{i_j-1}\tuh \cdots \tuh u$ in $\Mf_{\Do(I_D,T)}$, we have $v_{i_j}\in \anc_{\Mf_{\Do(I_D,T)}}(C)$. This contradicts the fact that $\tilde{\pi}$ is blocked at $v_{i_j}$ by $C\cup T$ in $\Mf_{\Do(I_D,T)}$. Thus, this case cannot happen either. The case where $v_{i_j-1}\sus v_{i_j}\tuh v_{i_j+1}$ on $\pi$ in $\Mf_{\Do(I_D,T)}$ can be proved similarly.
    
    Hence, $\pi$ is an irreducible open path from $A$ to $B\cup \Ic \cup \{I_d\}_{d\in D}$ given $C$ in $\Mf_{\Do(I_D)}$. By Lemma~\ref{lem:3}, all the non-endnodes on $\pi^*$ must be of definite status or there exists a subseqence $(v_{i_j})_{j=0}^\ell$ of $(v_{i})_{i=0}^n$ such that it forms a path in $\Pf_{\Do(I_D,T)}$   on which every intermediate node is of definite status as it is on $\pi$ in $\Mf_{\Do(I_D,T)}$.
\end{proof}

\begin{lemma}\label{lem:4}
    Let $\MAG$ be an iMAG and let $\Pf$ be an iSOPAG such that $\Mf\in [\Pf]_\Ms$. Let $D,T\subseteq \Vc$ be disjoint. Let $A\subseteq\Vc\sm T,\ B\subseteq (\Ic\cup \Vc)\sm D$ and $C\subseteq \Ic\cup \{I_d\}_{d\in D}\cup \Vc$ be pairwise disjoint.
    Let 
    \[
    \pi: a=v_0\sus v_1\sus \cdots \sus v_{n-1} \sus v_{n}=b
    \] 
    be a non-trivial $C$-tight open path from $A$ to $B\cup \Ic\cup \{I_d\}_{d\in D}\cup T$ given $C\cup T$ in $\Mf_{\Do(I_D,T)}$. Then there exists a definite open path from $A$ to $B\cup \Ic\cup \{I_d\}_{d\in D}\cup T$ given $C\cup T$ in $\Pf_{\Do(I_D,T)}$.
\end{lemma}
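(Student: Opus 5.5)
The plan is to transfer the $C$-tight path $\pi$ from $\Mf_{\Do(I_D,T)}$ to $\Pf_{\Do(I_D,T)}$ node by node. Each collider on the transferred path must stay open under the hybrid collider rules of \cref{def:sep_PAG}, and each non-collider must stay of definite status and outside $C\cup T$.

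\textbf{Step 1: definite status.} A $C$-tight path is irreducible, so \cref{lem:definite_hint} applies.
\begin{itemize}
\item If $b\notin\{I_d\}_{d\in D}$, then every intermediate node of $\pi^*$ is of definite status.
\item If $b=I_d$, then a subsequence $(v_{i_j})_{j=0}^{\ell}$ of the nodes of $\pi$ forms a path $\tilde\pi$ in $\Pf_{\Do(I_D,T)}$. Each intermediate node of $\tilde\pi$ has the same definite (non-)collider status there as on $\pi$.
\end{itemize}
In both cases the candidate path uses only nodes of $\pi$, so it avoids $T$. Its non-colliders are non-colliders of $\pi$, hence not in $C\cup T$. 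Its endpoints lie in $A$ and in $B\cup\Ic\cup\{I_d\}_{d\in D}\cup T$.

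\textbf{Step 2: non-colliders.} Each definite non-collider on the candidate path must satisfy one of the three $id$-non-collider rules. The first two (a tail at $v_i$) pass directly. The third rule ($\suo v_i\ous$) requires the triple to be unshielded. A definite non-collider with circles at $v_i$ is, by the definition of definite status, an unshielded triple, so this holds.

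\textbf{Step 3: colliders.} Each collider $u$ on $\pi$ is open in $\Mf_{\Do(I_D,T)}$, so $u\in\anc_{\Mf_{\Do(I_D,T)}}(C\cup T)$. I would sharpen this to a directed path into $C$ itself, as follows.
\begin{itemize}
\item Suppose the shortest directed path from $u$ first reaches some $t\in T$. Edges into $t$ are deleted in $\Mf_{\Do(T)}$, so such a path cannot end at $t$; hence the path reaches $C$.
\item If moreover the collider is a $C$-node, nothing further is needed.
\end{itemize}
Take a shortest directed path $u=u_0\tuh\cdots\tuh u_\ell=c\in C$ in $\Mf_{\Do(I_D,T)}$. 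Its image in $\Pf_{\Do(I_D,T)}$ is potentially directed, because any tail in $\Pf$ is also a tail in $\Mf$ and any arrowhead in $\Pf$ is also an arrowhead in $\Mf$ (with manipulated graphs compared via \cref{def:int_mag,def:int_pag}). Therefore $u\in\pan_{\Pf_{\Do(I_D,T)}}(C\cap\Vc)$, which settles the $id$-collider rules that involve an $I_d$ node.

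\textbf{Step 4: the main obstacle.} The remaining rule covers a collider whose triple lies entirely in $\Ic\cup\Vc$, and it demands a definite directed path, i.e.\ $u\in\anc_{\Pf_{\Do(I_D,T)}}(C)$. I expect this to be the hard step, and this is where $C$-tightness is used.

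Suppose the image of the shortest directed path from $u$ contains a non-directed edge $u_{k}\sus u_{k+1}$, that is, some circle mark. I would argue, as in Zhang's and Jaber et al.'s completeness lemmas, that such a circle forces an alternative configuration.
\begin{enumerate}
\item By \cref{lem:head_circle}, \cref{lem:prop_bucket}, and Property~\ref{sopag:p2}, the arrowheads into $u$ along $\pi$ propagate along circle edges. The pattern $\suh u\tuo$ is excluded.
\item Hence the neighbours of $u$ on $\pi$ must be adjacent to some $u_k$ with an arrowhead into $u_k$.
\item One can then build an open path either of shorter length, or of equal length with strictly smaller $\dist(\cdot,C)$. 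This is done by rerouting $\pi$ through $u_k$ (or $u_1$), as in the proof of \cref{lem:1}.
\end{enumerate}
Either outcome contradicts $C$-tightness. So every collider of $\pi$ has a definite directed path to $C$ in $\Pf_{\Do(I_D,T)}$, and the candidate path is a definite open path.

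One case needs separate checking: colliders adjacent to an $I_d$ endpoint. Their edges come from $\Ec_3,\Ec_4,\Ec_6,\Ec_7$ of \cref{def:int_pag}, and I would verify that the $\pan$-based rules cover them.
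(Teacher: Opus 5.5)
There is a genuine gap. Your Steps~1--3 match the paper's set-up: \cref{lem:definite_hint} gives definite status, non-colliders are automatic, and a collider whose triple contains an $I_d$ only needs $\pan$. But Step~4 is the whole content of the lemma, and the sketch you give misses its key idea.

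\textbf{Where the circle sits.} Let $\pf: v_i=u_0\tuh u_1\tuh\cdots\tuh u_m=c$ be a shortest directed path from a collider $v_i$ to $C$ in $\Mf_{\Do(I_D,T)}$. You must first show that if $\pf^*$ carries a circle, then already $v_i\ous u_1$, with the circle at $v_i$ on the first edge.
\begin{itemize}
\item $v_i\tuo u_1$ is excluded by Property~\ref{sopag:p2}.
\item If $v_i\tuh u_1$ and a circle appears later at $u_j\ous u_{j+1}$, then \cref{lem:head_circle} gives $u_{j-1}\suh u_{j+1}$, contradicting minimality of $\pf$.
\end{itemize}
Only then does \cref{lem:head_circle} give $v_{i-1}\suh u_1$ and $v_{i+1}\suh u_1$ (for $u_1$ itself, not ``some $u_k$'').

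\textbf{Why the reroute needs more.} Rerouting $\pi$ through $v_{i-1},u_1,v_{i+1}$ can still be blocked, because those neighbours may change status. For example, a collider $v_{i-1}\in C$ with $v_{i-1}\tuh u_1$ in $\Mf$ becomes a non-collider lying in $C$. The paper's crucial step is to walk outward along $\pi$ to the first $v_j$ ($j<i$) and $v_l$ ($l>i$) with $v_j\suh u_1$, $v_l\suh u_1$ and unchanged (non-)collider status. A node of this walk that is nonadjacent to $u_1$ would start a discriminating path for $v_i$, since every node passed is a collider and a parent of $u_1$. That path transfers to $\Pf$ by \cref{lem:0}, so closure under $\fci{4}$ would already have oriented the circle at $v_i$. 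Hence the walk stops at a good node, at worst $a$. Only with these endpoints does $\pi(a,v_j)\oplus(v_j,u_1,v_l)\oplus\pi(v_l,b)$ give the lexicographic decrease in $(\|\cdot\|,\dist(\cdot,C))$ that contradicts $C$-tightness; a one-sided variant handles $u_1$ already on $\pi$. Nothing you list (\cref{lem:prop_bucket}, ``as in \cref{lem:1}'') supplies this discriminating-path argument.

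\textbf{The case $b=I_d$.} Here you cannot prove what you conclude, namely that every collider of $\pi$ is a definite ancestor of $C$ so the image path itself is definite open. The right-hand walk can run into the endpoint $I_d$. Closure of $\Pf_{\Do(I_D)}$ under $\fci{4}$ (\cref{lem:soft_mani_SOPAG}) then only forces $I_d$ to be adjacent to $u_1$ in $\Pf_{\Do(I_D,T)}$, via $I_d\tuh u_1$ or $I_d\tuo u_1$. These can come from, e.g., $d\ouh u_1$ or a $\Pf$-invisible $d\tuh u_1$. In $\Mf$ that edge may be $d\huh u_1$ or a visible $d\tuh u_1$, in which case $I_d$ and $u_1$ are nonadjacent in $\Mf_{\Do(I_D,T)}$. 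The rerouted path then does not exist there, and $C$-tightness gives no contradiction.

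The paper handles this with an asymmetric right-hand statement that has an endnode alternative. At the first collider $w_{i_r}$ where that alternative is used, it outputs a \emph{different} path $\ol{\pi}(a,w_{j_r})\oplus(w_{j_r},u_1^{i_r},I_d)$ in $\Pf_{\Do(I_D,T)}$. This path is open by the hybrid $\pan$-rule because its last triple contains $I_d$, and in general it is not the image of $\pi$. Your closing remark about ``colliders adjacent to an $I_d$ endpoint'' covers only the trivial sub-case where the collider's own neighbour on the path is $I_d$.
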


\begin{proof}[Proof of \cref{lem:4}]
There are two cases: \textbf{Case~1}: $b\in B\cup \Ic\cup T$ and \textbf{Case~2}: $b\in \{I_d\}_{d\in D}$.

\textbf{Case~1}. The proof of Case~1 is inspired by the argument in \cite[Lemma~2 in p.213]{zhang2006causal} and can be divided into four steps.

\textbf{Step~0: preparatory work}. \cref{lem:definite_hint} implies that every non-endnode on $\pi^*$ (if any) is of a definite status. Since $\Pf$ is an iSOPAG of $\Mf$, every definite non-collider on $\pi^*$ corresponds to a non-collider on $\pi$ and therefore is not in $C$. Similarly, for any definite collider $v_i$ on $\pi^*$, the node $v_i$ is also a collider on $\pi$. The goal is to show that $\pi^*$ is not blocked at $v_i$ by C in $\Pf_{\Do(I_D,T)}$. Let 
\[
\pf:v_i=u_0\tuh u_1\tuh \cdots \tuh u_{m-1}\tuh u_{m}=c
\]
be a shortest directed path from $v_i$ to $c\in C$ in $\Mf_{\Do(I_D,T)}$ (possibly of length zero). None of the nodes on $\pi$ and $\pf$ are in $T$. The case that $v_i=c$ is trivial so we assume in the following that $v_i\ne c$. Let $\pf^*$ be the corresponding path in $\Pf_{\Do(I_D,T)}$. The goal is to show that $\pf^*$ is definite directed, i.e., there are no circles on it. We argue by contradiction and assume that there is a circle on $\pf^*$. 

\textbf{Step~1: show $v_i\ous u_1$}. We want to show that $v_i\ous u_1$ is in $\Pf_{\Do(I_D,T)}$. First, note that we can only have four possible types of edges on $\pf^*$: $u_{j}\ouh u_{j+1}$, $u_{j}\ouo u_{j+1}$, $u_{j}\tuh u_{j+1}$ and $u_{j}\tuo u_{j+1}$ for $0\leq j\leq m-1$ in $\Pf_{\Do(I_D,T)}$.  It is impossible to have $v_i\tuo u_1$ in $\Pf_{\Do(I_D,T)}$ by \cref{def:SOPAG} and the fact that $v_i$ is a collider on $\pi^*$. We show that having $v_i\tuh u_1$ in $\Pf_{\Do(I_D,T)}$ is impossible. Assume on the contrary that we have $v_i\tuh u_1$ in $\Pf_{\Do(I_D,T)}$, then we have either 
\[
v_i\tuh u_1 \tuh \cdots \tuh u_{j-1} \tuh u_j \tuo u_{j+1} \quad \text{ or } \quad v_i\tuh u_1 \tuh \cdots \tuh u_{j-1} \tuh u_j \ous u_{j+1}
\]
for some $j$. The first case cannot happen by \cref{def:SOPAG}. For the second case, we have $u_{j-1}\tuh u_{j+1}$ or $u_{j-1}\ouh u_{j+1}$ in $\Pf$ and therefore in $\Pf_{\Do(I_D,T)}$ by \cref{lem:head_circle}. Both of these cases contradict the choice of $\pf$. Hence, $v_i \ous u_1$ must be present in $\Pf_{\Do(I_D,T)}$. 

\textbf{Step~2: construct path $\tilde{\pi}$ contradicting $C$-tightness of $\pi$ when $u_1$ not on $\pi^*$}. In the next, we want to construct a path $\tilde{\pi}$ such that $(\|\tilde{\pi}\|,\dist(\tilde{\pi},C))<_{\mathsf{lex}}(\|\pi\|,\dist(\pi,C))$ under the assumption that node $u_1$ is not on path $\pi^*$. If this is achieved, then since $\pi$ is $C$-tight we derive a contradiction. By \cref{lem:head_circle}, we have $v_{i-1}\suh u_1$ and $v_{i+1}\suh u_1$ in $\Pf_{\Do(I_D,T)}$. Now we show the following statement: 
\begin{statement}\label{state:2}
    There exists $v_j$ with $j<i$ such that 
    \begin{enumerate}
        \item [(i)]  $v_j\suh u_1$ in $\Mf_{\Do(I_D,T)}$, and

        \item [(ii)] the (non-)collider status of node $v_j$ on path $\pi$ is the same as the (non-)collider status of $v_j$ on $a\sus v_1 \sus \cdots \sus v_j \suh u_1$ in $\Mf_{\Do(I_D,T)}$.
    \end{enumerate}
\end{statement}

\begin{proof}[Proof of \cref{state:2}]
    It suffices to show that if no nodes between nodes $a$ and $v_i$ on path $\pi$ satisfy the two conditions, then node $a$ must satisfy them. Assume that there are no nodes between nodes $a$ and $v_i$ on path $\pi$ satisfying the two conditions. If $v_{i-1}=a$, then node $a$ satisfies item (i) and item (ii) trivially. So we assume that $v_{i-1}\ne a$. We then prove by induction that every node between nodes $a$ and $v_i$ is a collider on $\pi$ and is a parent of $u_1$ in $\Mf_{\Do(I_D,T)}$.

    \textbf{Base case}. Since we have already shown that $v_{i-1}\suh u_1$ is in $\Mf_{\Do(I_D,T)}$, we have that item (ii) does not hold for $v_{i-1}$ by our assumption. It implies that either $v_{i-1}$ is a non-collider on $\pi$ but a collider on $a\sus v_1 \sus \cdots \sus v_{i-1} \suh u_1$, or the other way around. The former case implies that we have $v_{i-1}\tuh v_i \tuh u_1$ and $v_{i-1}\huh u_1$, which forms an almost directed cycle in $\Mf_{\Do(I_D,T)}$ and cannot appear in a MAG. So only the latter case is possible, where we have $v_{i-1}\tuh u_1$ and node $v_{i-1}$ is a collider on $\pi$.

    \textbf{Induction step}. For induction, assume that every node $v_j$ is a collider on path $\pi$ and $v_j\in \pa_{\Mf_{\Do(I_D,T)}}(u_1)$ for $k<j\leq i-1$ where $1\leq k< i-1$ is fixed. The goal is to show that node $v_k$ is a collider on $\pi$ and $v_k\in \pa_{\Mf_{\Do(I_D,T)}}(u_1)$. First note that node $v_k$ is adjacent to node $u_1$, otherwise $v_k\suh \cdots \huh v_{i-1}\hus v_i \tuh u_1$ forms a discriminating path for $v_i$ in $\Mf_{\Do(I_D,T)}$ and in $\Mf$. Lemma~\ref{lem:0} gives that it is also a discriminating path for $v_i$ in $\Pf$ and in $\Pf_{\Do(I_D,T)}$, since every non-endnode on $\pi^*$ is of a definite status. $\fci{4}$ should have oriented the circle of $v_i\ous u_1$ at $v_i$ in $\Pf$, which contradicts the fact that we have $v_i\ous u_1$ in $\Pf_{\Do(I_D,T)}$. Since we have $v_k\suh v_{k+1}\tuh  u_1$ and that $v_k$ is adjacent to $u_1$, we have $v_k\suh u_1$ in $\Mf_{\Do(I_D,T)}$ (otherwise there would be a (almost) directed cycle), i.e., item (i) holds. Since we assume that node $v_k$ does not satisfy the two conditions, item (ii) does not hold for $v_k$. Similar argument to the last paragraph
    implies that node $v_k$ is a collider on path $\pi$ and that $v_k\in \pa_{\Mf_{\Do(I_D,T)}}(u_1)$.

    Then we know that node $a$ must be adjacent to node $u_1$, otherwise the circle at $v_i$ on $v_i \ous u_1$ would have been oriented by $\fci{4}$ in $\Pf$. The edge between $a$ and $u_1$ must be of the form $a\suh u_1$; otherwise $\Mf$ would fail to be ancestral. So item (i) and item (ii) hold for node $a$. This completes the proof of \cref{state:2}.
\end{proof}

    By symmetry, we have the following statement:
   \begin{statement}\label{state:3}
       There exists a node $v_l$ with $l>i$ such that 
    \begin{enumerate}
        \item [(i)] $v_l\suh u_1$ in $\Mf_{\Do(I_D,T)}$, and

        \item [(ii)] the (non-)collider status of node $v_l$ on path $\pi$ is the same as the (non-)collider status of $v_l$ on $u_1\hus v_l \sus \cdots \sus v_n$ in $\Mf_{\Do(I_D,T)}$.
    \end{enumerate}
   \end{statement} 
   Then the path 
   \[
   \tilde{\pi}\coloneqq \pi(a,v_j)\oplus (v_j,u_1,v_l)\oplus \pi(v_l,b)
   \] 
   is open from $A$ to $B\cup \Ic$ given $C\cup T$ in $\Mf_{\Do(I_D,T)}$. Note that 
   \[
   (\|\tilde{\pi}\|,\dist(\tilde{\pi},C))<_{\mathsf{lex}}(\|\pi\|,\dist(\pi,C)),
   \] 
   which contradicts the choice of $\pi$. This finishes Step~2.

\textbf{Step~3: construct path $\tilde{\pi}$ contradicting $C$-tightness of $\pi$ when $u_1$ on $\pi^*$}. Finally, if node $u_1$ is on path $\pi^*$, equivalently on $\pi$, then it either lies on subpath $\pi(a,v_{i-1})$ or $\pi(v_{i+1},b)$ of path $\pi$. Without loss of generality, assume that it is on $\pi(v_{i+1},b)$ and $u_1=v_l$ for some $l>i$. Similar argument gives a node $v_j$ with $j<i$ such that 
    \begin{enumerate}
        \item [(i)] there exists $v_j\suh u_1$, and
        \item [(ii)] the (non-)collider status of node $v_j$ on path $\pi$ is the same as the (non-)collider status of $v_j$ on $a\sus v_1 \sus \cdots \sus v_j \suh u_1$.
    \end{enumerate}
    Then the path 
    \[
    \tilde{\pi}:a\sus v_1 \sus \cdots \sus v_j \suh u_1=v_l \sus v_{l+1} \sus  \cdots \sus v_{n} \sus b
    \]
    is open from $A$ to $B\cup \Ic$ given $C\cup T$ in $\Mf_{\Do(I_D,T)}$ but satisfies the property that 
    \[
    (\|\tilde{\pi}\|,\dist(\tilde{\pi},C))<_{\mathsf{lex}}(\|\pi\|,\dist(\pi,C)).
    \] 
    It is open because local configurations of nodes $v_i$ on $\tilde{\pi}$ for $i=1,\ldots, j, l,\ldots,n$ are the same as the ones on $\pi$ and $v_l\in \anc_{\Mf_{\Do(I_D,T)}}(C)$.   This is a contradiction.

    In summary, the hypothesis that the path $\pf^*$ has circles in $\Pf_{\Do(I_D,T)}$ is false. It means that $\pf^*$ is a definite directed path in $\Pf_{\Do(I_D,T)}$ and $v_i\in \anc_{\Pf_{\Do(I_D,T)}}(C)$. Hence, $\pi^*$ is an (definite) open path from $A$ to $B\cup \Ic\cup T$ given $C\cup T$ in $\Pf_{\Do(I_D,T)}$. This finishes the proof of Case~1.

    \textbf{Case~2}. Let 
    \[
    \ol{\pi}:a=w_0\sus \cdots \sus w_\ell=b
    \]
    be a path in $\Pf_{\Do(I_D,T)}$ that consists of a subsequence of nodes from $\pi$ and every node on $\ol{\pi}$ has the same definite (non-)collider status as it does on $\pi$ in $\Mf_{\Do(I_D,T)}$ from \cref{lem:definite_hint}. Let $w_i$ be a collider on $\ol{\pi}$. If $w_{i+1}\in \{I_d\}_{d\in D}$, then since $\pf^*$ is potentially directed in $\Pf_{\Do(I_D,T)}$, path $\pi^*$ is not blocked at $w_{i}$. Therefore, we consider the case where $w_{i}$ is a collider and $w_{i+1}\notin \{I_d\}_{d\in D}$. The proof in Case~1 carries over with certain modifications. First, the same argument in Step~1 works. For Step~2, the proof for \cref{state:2} applies without any changes. In contrast, essential changes need to be made when it comes to \cref{state:3} because $w_\ell \in \{I_d\}_{d\in D}$ and the symmetry breaks. To be more precise, instead of \cref{state:3}, we have the following statement:
    \begin{statementprime}\label{state:4}
        There exists a node $w_l$ with $l>i$ such that 
    \begin{enumerate}
        \item [(i)] $w_l\suh u_1$ in $\Mf_{\Do(I_D,T)}$ if $l<\ell$, and $w_l\tuh u_1$ or $w_l\tuo u_1$ in $\Pf_{\Do(I_D,T)}$ if $l=\ell$, and

        \item [(ii)] the (non-)collider status of node $w_l$ on path $\ol{\pi}$ is the same as the (non-)collider status of $w_l$ on $u_1\hus w_l \sus \cdots \sus w_\ell$ in $\Mf_{\Do(I_D,T)}$ if $l<\ell$.
    \end{enumerate}
    \end{statementprime}
    We now prove \cref{state:4}. By the discussion at the beginning of Case~2, we can assume that $w_{i+1}\ne w_\ell$. We then prove by induction that every node between nodes $w_{\ell}$ and $w_i$ is a collider on $\ol{\pi}$ and is a parent of $u_1$ in $\Mf_{\Do(I_D,T)}$, as we did in Case~1, under the assumption that the node $w_l$ stated in \cref{state:4} does not exist. Then the rest of the argument for \cref{state:2} applies directly. Therefore,  by \cref{lem:soft_mani_SOPAG} we can conclude that if no nodes between $w_{\ell}$ and $w_{i}$ on path $\ol{\pi}$ satisfy the two conditions in \cref{state:4}, then $w_\ell$ satisfies the endnode alternative. This then establishes \cref{state:4}.

    Let $w_{i_k}$ be the $k$-th collider on $\ol{\pi}$ after $a$. Let $w_{j_{k}}$ and $w_{l_{k}}$ be the nodes from applying \cref{state:2,state:4} to $w_{i_k}$, respectively. Define $r\coloneqq \min\{k: l_k=\ell\}$ if $\{k: l_k=\ell\}\ne \emptyset$. Let $u_1^{i_k}$ be the second node on the non-trivial potentially directed path from $w_{i_k}$ to $C$.
    
    If $\{k: l_k=\ell\}= \emptyset$, then for all $i_k$ such that $w_{i_{k}+1}\notin \{I_d\}_{d\in D}$, the path 
    \[
    \tilde{\pi}\coloneqq \pi(a,w_{j_k})\oplus (w_{j_k},u_1^{i_k},w_{l_k})\oplus \pi(w_{l_k},b)
    \]
    is open from $A$ to $B\cup \Ic\cup \{I_d\}_{d\in D}$ given $C\cup T$ in $\Mf_{\Do(I_D,T)}$. Indeed,  $w_{j_k}$ and $w_{l_k}$ have the same (non-)collider status on $\pi$ and on $\tilde{\pi}$ by \cref{lem:3}, \cref{state:2}(ii) and \cref{state:4}(ii). Note that $(\|\tilde{\pi}\|,\dist(\tilde{\pi},C))<_{\mathsf{lex}}(\|\pi\|,\dist(\pi,C))$. This contradicts the $C$-tightness of $\pi$ and therefore implies that $\ol{\pi}$ is open given $C\cup T$ at its collider $w_{i_k}$ in $\Pf_{\Do(I_D,T)}$. Overall, in this case path $\ol{\pi}$ is open given $C\cup T$ in $\Pf_{\Do(I_D,T)}$.

    If $\{k: l_k=\ell\}\ne \emptyset$, then 
    \[
    \tilde{\pi}\coloneqq \ol{\pi}(a,w_{j_r})\oplus (w_{j_r},u_1^{i_r},w_{l_r})
    \] 
    is open given $C\cup T$ in $\Pf_{\Do(I_D,T)}$. Indeed, $\ol{\pi}(a,w_{j_r})\oplus (w_{j_r},u_1^{i_r})$ is open since all the colliders before $w_{i_r}$ can be shown to be open given $C\cup T$ using the argument in the last paragraph. Besides, $w_{l_r}\in \{I_d\}_{d\in D}$ and $u_1^{i_r}$ is a definite collider on $\tilde{\pi}$ such that $u_1^{i_r}$ is a possible ancestor of $C$.

    Overall, this finishes the proof.
\end{proof}

\section[Proofs for Section~\ref{sec:IDalg}]{Proofs for \cref{sec:IDalg}}\label{sec:pf_2}

\subsection[Proof of Proposition~\ref{prop:sidp_rule} and Theorem~\ref{thm:idp_sound}]{Proof of \cref{prop:sidp_rule,thm:idp_sound}}

We present the proofs of \cref{prop:sidp_rule,thm:idp_sound}. See \cref{fig:pf_structure_idp_sound} for an overview of the proof structure. We introduce some additional notation used in this subsection. For $H\subseteq \Ic\cup \Vc$, we write $I_H\coloneqq (H\cap \Ic) \cup \{I_h\}_{h\in H\cap \Vc}$.

\begin{proof}[Proof of \cref{prop:sidp_rule,thm:idp_sound}]
    These two results follow from \cref{lem:IDP_sound_I,lem:IDP_sound_III,lem:IDP_sound_II}. Also note that \cref{lem:pcmk} implies that if $\Mc\in \Mb^+_c$ then any interventional kernel induced by $\Mc$ is positive and continuous.
\end{proof}

\begin{lemma}[Rule~L0]\label{lem:IDP_sound_I}
    Let $\Pf=(\Ic,\Vc,\Ec)$ be an iSOPAG, $A,B\subseteq \Vc$ be disjoint with $A\ne \emptyset$ and $(\Mc,X_\Sc=\mathbf{1}_{|\Sc|})$ be an s-iSCM such that $\Gb(\Mc,X_\Sc=\mathbf{1}_{|\Sc|})\in [\Pf]_\Gs$. Let $\Dc$ and $H$ be defined as in \cref{prop:sidp_rule}. Then $\Prb_{\Mc}(X_A\mid X_\Sc=\mathbf{1}_{|\Sc|} \miid \Do(X_B),X_{\Ic})$ is trackable from $\Qc[\Dc]$ and 
    \[
        \begin{aligned}
             &\Prb_{\Mc}(X_A\mid X_{\Sc}=\1\miid \Do(X_B),X_{\Ic\sm H},\cancel{X_{\Ic\cap H}})\\
             &=\Prb_\Mc(X_A\mid X_{\Sc}=\1\miid \Do(X_B),X_{\Ic})=\Prb_\Mc(X_A\mid X_{\Sc}=\1\miid \Do(X_{\Vc\sm \Dc}),X_{\Ic})=\Qc[\Dc]^{\sm(\Dc\sm A)}.
        \end{aligned}
    \]
\end{lemma}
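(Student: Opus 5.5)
The plan is to derive Lemma~\ref{lem:IDP_sound_I} from two applications of the third rule of the causal calculus, in the iADMG form (Theorem~\ref{defthm:causal_calculus}). I will work in the true graph $\Af\coloneqq\Gb(\Mc,X_\Sc=\1)\in[\Pf]_\Gs$, conditioning on $\Sc$ throughout. The last equality is definitional: $\Qc[\Dc]=\Prb_\Mc(X_\Dc\mid X_\Sc=\1\miid\Do(X_{\Vc\sm\Dc}),X_\Ic)$, and marginalizing out $\Dc\sm A$ gives $\Prb_\Mc(X_A\mid X_\Sc=\1\miid\Do(X_{\Vc\sm\Dc}),X_\Ic)$. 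That leaves two non-trivial equalities.

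The first is enlarging the intervention set from $B$ to $\Vc\sm\Dc=B\cup H_\Vc$, where $H_\Vc\coloneqq\Vc\sm(\Dc\cup B)$. By the third rule, it suffices to show
\[
A\sep{\mathsf{id}}{\Af_{\Do(I_{H_\Vc},B)}} I_{H_\Vc}\mid B\cup\Sc .
\]
This needs no positivity and no conditioning set beyond $\Sc$, because with an empty observed conditioning set the rule gives a genuine kernel identity. Suppose instead there is an open path from $A$ to some $I_h$ with $h\in H_\Vc$, given $B\cup\Sc$. I will take a shortest such path. It ends with $I_h\tuh h$, so $h$ is either a non-collider on the path or a collider in $\anc(B\cup\Sc)$. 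I then trace back from $h$ towards $A$: every maximal directed segment must eventually land in $A$ or in a collider that is an ancestor of $\Sc$ (ancestors of $B$ are excluded because $B$ is hard-intervened). Undirected edges in $\Pf$ correspond exactly to selection ancestors, and bidirected or invisible edges are precisely what make a node reachable. From this I want to conclude that $h$ is a possible anterior of $A$ in $\Pf_{\Vc\sm B}$, i.e.\ $h\in\Dc$, a contradiction.

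To make that last step rigorous I will transfer to the PAG. By Theorem~\ref{thm:sep_hsint_pag_mag} (soundness direction), it suffices to prove the $id$-separation in $\Pf_{\Do(I_{H_\Vc},B)}$. There, every edge out of $I_h$ goes to $h$ or to a possible child of $h$ (edges $\Ec_1$--$\Ec_7$ of Definition~\ref{def:int_pag}). So an open path from $A$ gives a sequence of potentially anterior/collider steps, and I have to show these chain into a potentially anterior path from $h$ to $A$ avoiding $B$. This is the crux. The hybrid collider clause uses $\pan(C\cap\Vc)$ with $C=B$, and $B$ is removed by the hard intervention, so colliders cannot be open via $B$. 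The remaining worry is colliders adjacent to $I_h$, and the last two rows of the separation table make those open only if they are possible ancestors of $C\cap\Vc=B$, which is impossible after $\Do(B)$. Hence an open path contains no colliders at all and is potentially anterior from $I_h$'s neighbour back to $A$. That neighbour is a possible child of $h$, so $h\in\pant_{\Pf_{\Vc\sm B}}(A)=\Dc$, contradicting $h\in H_\Vc$. I will have to check the circle marks via Lemma~\ref{lem:podpath} and Property~\ref{sopag:p2}.

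The second equality is dropping the inputs $X_{\Ic\cap H}$. Here I apply the first rule with observed conditioning set empty and $D=\Ic\cup B$, aiming to show $A\sep{\mathsf{id}}{\Pf_{\Do(B)}}\Ic\cap H\mid B\cup(\Ic\sm H)$. The same no-collider argument shows that any open path from $A$ to some $i\in\Ic\cap H$ is potentially anterior from $i$ to $A$ in $\Pf_{(\Vc\sm B)\cup\{i\}}$, so $i\in\tilde\Dc$, a contradiction. Trackability follows because each step is a single causal calculus rule or a marginalization. The main obstacle is the careful case check that no collider can be $id$-open given only $B$ in the soft-then-hard manipulated iPAG.
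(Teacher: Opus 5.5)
Your route is essentially the paper's, but your first step contains a concrete error. The paper obtains the whole chain from a single application of Rule~3 of \cref{thm:causal_calculus_mag_pag}. It feeds in \cref{lem:IDP_sound_ci_I}, namely $A\sep{\mathsf{id}}{\Pf_{\Do(I_H,B)}} I_H\mid B\cup\tilde{\Dc}$ with $I_H=(\Ic\cap H)\cup\{I_h\}_{h\in H\cap\Vc}$. That separation is proved exactly by your observation: when the conditioning set consists only of hard-intervened and input nodes, no collider can be open, so an open path would be potentially anterior and would place $h$ in $\Dc$ (or $i$ in $\tilde{\Dc}$).

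The error is this. By \cref{def:sep_PAG}, the statement $A\sep{\mathsf{id}}{\Pf_{\Do(I_{H_\Vc},B)}} I_{H_\Vc}\mid B$ requires every path from $A$ to $I_{H_\Vc}\cup\Ic\cup B$ to be blocked by $B$. It therefore fails as soon as some input node reaches $A$: a single edge $i\tuh a$ with $a\in A$ is already open given $B$. As a sanity check, if it held, Rule~3 with $D=B$ (hence $D_1=\emptyset$) would let you cancel all of $X_{\Ic}$, which is false in general. The same problem affects your isADMG version conditioned on $B\cup\Sc$.

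The fix is to put input nodes into $D$. You can take all of $\Ic$, in which case your second step is still needed. Alternatively, as the paper does, take $D=B\cup\tilde{\Dc}$ with $\tilde{\Dc}=\Ic\sm H$. Your no-collider argument is unaffected, since input nodes are never colliders and conditioning on them only blocks. Paths into $\Ic\cap H$ are handled by the same potentially-anterior argument you use in your second step. With this choice the two steps merge into one separation, and its Rule~3 conclusion gives all three equalities at once.

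There are also some smaller corrections.
\begin{itemize}
\item In your second step, $D=\Ic\cup B$ gives $D_1=\Ic$ and so cancels nothing. You need $D=B\cup(\Ic\sm H)$. Since Rule~1's observed set must lie in $\Vc$, take it empty and let the implicit inclusion of $\Ic$ in $id$-separation cover the paths into $\Ic\cap H$.
\item A neighbour $v$ of $I_h$ arising from $h\out v$ (sets $\Ec_6,\Ec_7$ of \cref{def:int_pag}) is not a possible child of $h$. However, $h$ is still a possible anterior of $v$, which is all you use.
\item Your claim that a collider-free open path into $I_h$ is potentially anterior from $I_h$ to $A$ needs a short check. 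A first backward arrowhead at some $v$ would create $u\suh v\tut w$ or $u\suh v\tuo w$, excluded by ancestrality and \ref{sopag:p2}. Otherwise it would put an arrowhead at a node joined to $I_h$ by an undirected edge, excluded by \cref{def:int_pag} together with ancestrality. The paper instead separately rules out paths that are potentially anterior from $A$ to $I_H$.
\end{itemize}
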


\begin{proof}[Proof of \cref{lem:IDP_sound_I}]
    This follows from \cref{lem:IDP_sound_ci_I,thm:causal_calculus_mag_pag}. 
\end{proof}

\begin{lemma}[Rule~L1]\label{lem:IDP_sound_II}
    Let $\Pf=(\Ic,\Vc,\Ec)$ be an iSOPAG, $D\subseteq \Vc$ , and $A\subseteq D$. Set $R_1\coloneqq \re_{\Pf_D}(A)$ and $R_2\coloneqq \re_{\Pf_D}(D\sm R_1)$. Then the Markov kernel $\Qc[D]$ is trackable from $\Qc[R_1]$ and $\Qc[R_2]$. If $\Mc^\Sc\in \Mb^+(\Pf)$ then the following pointwise equality holds
    \[
        \Qc[D]=\Qc[R_1]\boxtimes \Qc[R_2]=\Qc[R_2]\boxtimes \Qc[R_1].
    \]
\end{lemma}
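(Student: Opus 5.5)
The plan is to mimic the proof of the classical district factorization $\Qc[D]=\prod_{D_j}\Qc[D_j]$, with buckets in place of nodes and regions in place of districts. I work throughout inside the C-factor $\Qc[D]=\Prb_\Mc(X_D\mid X_\Sc=\1\miid \Do(X_{D^c}),X_\Ic)$.

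\textbf{Step 1 (bucket chain rule).} Fix the topological order $\Bb_1\prec\cdots\prec\Bb_n$ of $\Pf_D$-buckets; it exists by \cref{lem:topo_order}. I first record that $D=R_1\cup R_2$, which holds because $R_2\supseteq D\sm R_1$. By iterated disintegration (\cref{defthm:prob_calculus}) I can write
\[
\Qc[D]=\bigotimes^>_{1\le i\le n}\Prb_\Mc(X_{\Bb_i}\mid X_{\Bb_i^\prec},X_\Sc=\1\miid \Do(X_{D^c}),X_\Ic).
\]
This is trackability, up to oracle choices, once each factor is rewritten in terms of $\Qc[R_1]$ or $\Qc[R_2]$.

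\textbf{Step 2 (graphical separations).} Let $\Bb_i\subseteq R_1$; the case $\Bb_i\subseteq R_2\sm R_1$ is symmetric, with $R_2$ in place of $R_1$. I aim to show
\[
\Bb_i \sep{\mathsf{id}}{\Pf_{\Do(I_{D\sm R_1},D^c)}} (\Bb_i^\prec\sm R_1)\cup I_{D\sm R_1}\ \Big|\ (\Bb_i^\prec\cap R_1)\cup D^c .
\]
Consider an open path leaving $\Bb_i$. Every node of $D$ after $\Bb_i$ in the order is not a possible ancestor of $\Bb_i^{\preceq}$. Hence, by \cref{lem:podpath} together with the no-collider-outside-$\pan$ condition, such a path can only proceed through colliders inside $\Bb_i^\prec$ or through edges into $\Bb_i$ from earlier nodes.

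\cref{lem:region} controls how the path can exit $R_1$. If a node $c\in R_1$ has an edge $b\ouh c$, an invisible $b\tuh c$, or $b\huh c$ with $b\in D$, then $b\in R_1$. So the only way out of $R_1$ through an arrowhead at an $R_1$-node is a visible edge $b\tuh c$, and such an edge is blocked because $b\in\Bb_i^\prec\sm R_1$ is then a conditioned non-collider. This step is really a statement about $\Pf_D$; I pass to $\Pf$ using that $D^c$ is hard-manipulated and conditioned on.

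The soft-manipulation part concerns $I_d$ for $d\in D\sm R_1$. By \cref{def:int_pag}, $I_d$ attaches only to $d$ and to nodes reached by invisible or circle edges out of $d$. By \cref{lem:region}, such nodes cannot lie in $R_1$ unless $d\in R_1$. This is the main obstacle: the hybrid $\pan/\anc$ collider clauses of \cref{def:sep_PAG} and the $\tuo$/$\out$ edges attached to $I_d$ in \cref{def:int_pag} need a careful case analysis. I would first settle the case where $D$ contains no circle edges incident to $R_1$, and then handle circle marks using \cref{lem:head_circle,lem:prop_bucket,lem:property_visible}.

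\textbf{Step 3 (apply the calculus).} Given the separation in Step 2, rules (1) and (3) of \cref{thm:causal_calculus_mag_pag} (invoked through \cref{thm:sep_hsint_pag_mag}) do two things: they delete the observations $X_{\Bb_i^\prec\sm R_1}$, and they replace the observations of $D\sm R_1$ by actions. Rule (2) converts these observations into $\Do$ on $R_1^c$. The result is
\[
\Prb(X_{\Bb_i}\mid X_{\Bb_i^\prec},\ldots\miid\Do(X_{D^c}))=\Prb(X_{\Bb_i}\mid X_{\Bb_i^\prec\cap R_1},X_\Sc=\1\miid\Do(X_{R_1^c}),X_\Ic)=L_i .
\]
Finally, $L_i$ is itself a conditional of $\Qc[R_1]$, obtained by applying the same chain rule to the buckets of $\Pf_{R_1}$, which are contained in buckets of $\Pf_D$. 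For $\Mc^\Sc\in\Mb^+(\Pf)$, the positivity assumptions make all the $\mu$-a.s.\ equalities genuine. Pointwise equality of the product then follows by fixing versions, and the symmetric identity $\Qc[R_1]\boxtimes\Qc[R_2]=\Qc[R_2]\boxtimes\Qc[R_1]$ holds because both sides equal $\Qc[D]$.
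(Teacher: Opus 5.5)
Your Step~1 and the final assembly in Step~3 follow the paper. The separation you target in Step~2, however, is false, and the calculus bookkeeping built on it does not work. You put $\Bb_i^\prec\sm R_1$ on the target side and condition only on $(\Bb_i^\prec\cap R_1)\cup D^c$, so you are claiming that the observations $X_{\Bb_i^\prec\sm R_1}$ can be deleted from the factor of $\Bb_i$.

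Here is a counterexample. Take the FCI output $c\ouh b$, $e\ouh b$, $b\tuh a$, with $D=\Vc$ and $A=\{a\}$.
\begin{itemize}
\item The edge $b\tuh a$ is visible (witness $c$), so $\pcc_{\Pf_D}(a)=\bu_{\Pf_D}(a)=\{a\}=R_1$.
\item For $\Bb_i=\{a\}$ we have $b\in\Bb_i^\prec\sm R_1$.
\item The single edge $a\hut b$ is an open path into your target set.
\item The path $a\hut b\hut I_b$ is also open, because $b$ is an unconditioned non-collider.
\end{itemize}
Probabilistically the claim is wrong too: in the DAG $c\tuh b\hut e$, $b\tuh a$ represented by $\Pf$, one has $\Qc[\{a\}]=\Prb(X_a\mid X_b)$, not $\Prb(X_a)$. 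So $X_b$ cannot be deleted; it must survive as an argument of $L_i$ through $\Do(X_{R_1^c})$. Your own path argument (``$b$ is then a conditioned non-collider'') tacitly puts $b$ in the conditioning set, which contradicts the separation you wrote down.

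Consequently Step~3 is incoherent on three counts. Rule~(1) cannot remove $X_{\Bb_i^\prec\sm R_1}$. Once those observations are removed, there is nothing left for Rule~(2) to convert into actions. And Rule~(3) does not replace observations by actions; it inserts actions on variables that are not observed.

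What is needed, and what the paper does, is to keep all of $\Bb_i^\prec$ in the conditioning set and to treat the nodes outside $R_1$ according to their position in the order.
\begin{itemize}
\item For the later nodes $H\coloneqq D\sm(R_1\cup\Bb_i^\preceq)$, which are not observed in the factor, use Rule~(3) with $\Bb_i\sep{\mathsf{id}}{\Pf_{\Do(I_H,D^c)}} I_H\mid \Bb_i^\prec\cup D^c\cup\Ic$ (\cref{lem:IDP_sound_ci_2}). This inserts $\Do(X_H)$.
\item For the earlier nodes $H'\coloneqq\Bb_i^\prec\sm R_1$, set $T\coloneqq(R_1\cup\Bb_i^\preceq)^c$ and use Rule~(2) with $\Bb_i\sep{\mathsf{id}}{\Pf_{\Do(I_{H'},T)}} I_{H'}\mid \Bb_i^\prec\cup T\cup\Ic$ (\cref{lem:IDP_sound_ci_3}). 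This exchanges the observations $X_{H'}$ for actions. Since $T\cup H'=R_1^c$ when $\Bb_i\subseteq R_1$, the result is exactly $L_i$.
\end{itemize}
The case analysis you defer is exactly the content of the proofs of these two lemmas. It covers the edges $I_h\tuo v$, $I_h\tut v$, $I_h\tuh v$ created by \cref{def:int_pag}, the $\pan$-collider clauses, and the use of \cref{lem:region} to show that a path cannot leave $R_1$ through a bidirected, circle-arrow or invisible edge. That analysis only goes through because $H'$ is conditioned on. In particular, the visible-edge exit $b\tuh c$ that you mention is blocked precisely because $b\in H'$ is in the conditioning set, which is impossible if $b$ sits on the target side.
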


\begin{proof}[Proof of \cref{lem:IDP_sound_II}]
    Let $\Bb_1\prec \cdots \prec \Bb_m$ be a topological order of $\Pf_D$-buckets. To simplify notation, we omit $X_\Ic$. Note that
    \[
        \Qc[D]=\Prb_{\Mc^\Sc}(X_D\mid \Do(X_{D^c}))=\bigotimes_{\Bb_i\subseteq D}^\succ\Prb_{\Mc^\Sc}(X_{\Bb_i}\mid X_{\Bb_i^\prec}\miid \Do(X_{D^c})).
    \]
    Since $\Mc^\Sc\in \Mb^+$, by \cref{lem:IDP_sound_ci_2,lem:IDP_sound_ci_3,thm:causal_calculus_mag_pag}, for $\Bb_i\subseteq R_j$ with $j=1,2$
    \[
        \begin{aligned}
            \Prb_{\Mc^\Sc}(X_{\Bb_i}\mid X_{\Bb_i^\prec}\miid \Do(X_{D^c}))&\meq{\mu_{\Bb_i^\prec}}\Prb_{\Mc^\Sc}(X_{\Bb_i}\mid X_{\Bb_i^\prec}\miid \Do(X_{(R_{j}\cup \Bb_i^\preceq)^c}))&\\
            &\meq{\mu_{\Bb_i^\prec}} \Prb_{\Mc^\Sc}(X_{\Bb_i}\mid X_{\Bb_i^\prec\cap R_{j}} \miid \Do(X_{R_{j}^c})).
        \end{aligned}
    \]
    Hence,
    \[
        \Qc[D]=\Qc[R_1]\boxtimes \Qc[R_2]=\Qc[R_2]\boxtimes \Qc[R_1].
    \]
\end{proof}

\begin{remark}
    We provide some intuition for the two $\mu_{\Bb_i^\prec}$-equalities in the proof of \cref{lem:IDP_sound_II}. Consider the case of ADMGs and SCMs with discrete variables. Set $D\coloneqq \mathsf{Distr}_{\Af}(A)$ and $B\coloneqq \Vc\sm D$. Then
    \[
        \begin{aligned}
            p_\Mc(x_A)&=p_\Mc(x_A\miid \Do(x_{A^\succ})) \quad \text{ and } \\
            p_\Mc(x_A\mid x_{A^\prec \cap D}, x_{A^\prec \cap B})&=p_\Mc(x_A\mid x_{A^\prec \cap D} \miid \Do(x_{A^\prec\cap B})).
        \end{aligned}
    \]
    In words: a topological order on nodes of an ADMG can be viewed as a time order in which variables are measured. Interventions on variables that are after $X_A$ do not influence $X_A$. After conditioning on the relevant district history $X_{A^\prec \cap D}$, the remaining past variables outside the district $X_{A^\prec\cap B}$ carry no further latent confounding with $X_A$, so intervening on non-confounded history $X_{A^\prec\cap B}$ is equivalent to conditioning on them.
\end{remark}

\begin{remark}[On the condition $\Mc^\Sc\in \Mb^+(\Pf)$]
In \cref{lem:IDP_sound_II}, the pointwise equality holds under the condition $\Mc^\Sc\in \Mb^+(\Pf)$. Without that condition, the pointwise equality may not hold in general since it is unclear how to control the null sets. More precisely, by \cref{lem:IDP_sound_ci_2,thm:causal_calculus_mag_pag} and the essential uniqueness of conditional kernels (\cref{defthm:prob_calculus}), there exists a kernel $\Qr(X_{\Bb_{k}}\miid X_{\Bb_{k}^\prec},X_{D^c})$ such that $N_1$ is a $\Prb_{\Mc^\Sc}\big(X_{\Bb_{k}^\prec}\miid \Do(X_{D^c})\big)$-null set and $N_2$ is a $\Prb_{\Mc^\Sc}\big(X_{\Bb_{k}^\prec}\miid \Do(X_{(R_{j}\cup \Bb_i^\preceq)^c})\big)$-null set where
    \[
        \begin{aligned}
            N_1\coloneqq \Big\{x_{\Bb_{k}^\prec \cup (R_{j}\cup \Bb_{k}^\preceq)^c}\in \Xc_{\Bb_{k}^\prec \cup (R_{j}\cup \Bb_{k}^\preceq)^c}: \Qr(X_{\Bb_{k}}\miid X_{\Bb_{k}^\prec}=x_{\Bb_{k}^\prec},X_{D^c}=x_{D^c})\\ \ne \Prb_{\Mc^\Sc}(X_{\Bb_{k}}\mid X_{\Bb_{k}^\prec}=x_{\Bb_{k}^\prec}\miid \Do(X_{D^c}=x_{D^c})) \Big\}\\
            N_2\coloneqq \Big\{x_{\Bb_{k}^\prec \cup (R_{j}\cup \Bb_{k}^\preceq)^c}\in \Xc_{\Bb_{k}^\prec \cup (R_{j}\cup \Bb_{k}^\preceq)^c}: \Qr(X_{\Bb_{k}}\miid X_{\Bb_{k}^\prec}=x_{\Bb_{k}^\prec},X_{D^c}=x_{D^c})\\ \ne \Prb_{\Mc^\Sc}\Big(X_{\Bb_{k}}\mid X_{\Bb_{k}^\prec}=x_{\Bb_{k}^\prec}\miid \Do\big(X_{(R_{j}\cup \Bb_i^\preceq)^c}=x_{(R_{j}\cup \Bb_i^\preceq)^c}\big)\Big) \Big\}.
        \end{aligned}
    \]
    
    Similar to the above argument, there exists a kernel 
    \[
    \tilde{\Qr}\big(X_{\Bb_{k}}\miid X_{\Bb_{k}^\prec}, X_{(R_{j}\cup \Bb_i^\preceq)^c}\big)
    \]
    such that  
    \[
    \begin{aligned}
        &\text{$N_3$ is a } \Prb_{\Mc^\Sc}(X_{\Bb_{k}^\prec} \miid \Do(X_{(R_{j}\cup \Bb_i^\preceq)^c}\big) \text{-null set in } \Xc_{\Bb_{k}^\prec \cup (R_{j}\cup \Bb_{k}^\preceq)^c}\ \text{ and} \\
        &\text{$N_4$ is a } \Prb_{\Mc^\Sc}\big( X_{\Bb_{k}^\prec\cap R_{j}}\miid \Do\big(X_{R_{j}^c}\big)\big)\text{-null set in } \Xc_{\Bb_{k}^\prec \cup (R_{j}\cup \Bb_{k}^\preceq)^c},
    \end{aligned}
    \]
    where
    \[
        \begin{aligned}
            N_3\coloneqq \Big\{ &x_{\Bb_{k}^\prec \cup (R_{j}\cup \Bb_{k}^\preceq)^c}\in \Xc_{\Bb_{k}^\prec \cup (R_{j}\cup \Bb_{k}^\preceq)^c}:\\ &\tilde{\Qr}\big(X_{\Bb_{k}}\miid X_{\Bb_{k}^\prec}=x_{\Bb_{k}^\prec},X_{(R_{j}\cup \Bb_i^\preceq)^c}=x_{(R_{j}\cup \Bb_i^\preceq)^c}\big)\\
            &\ne \Prb_{\Mc^\Sc}\Big(X_{\Bb_{k}}\mid X_{\Bb_{k}^\prec}=x_{\Bb_{k}^\prec}\miid \Do(X_{(R_{j}\cup \Bb_i^\preceq)^c}=x_{(R_{j}\cup \Bb_i^\preceq)^c}\big)\Big) \Big\}\\
            N_4\coloneqq \Big\{& x_{\Bb_{k}^\prec \cup (R_{j}\cup \Bb_{k}^\preceq)^c}\in \Xc_{\Bb_{k}^\prec \cup (R_{j}\cup \Bb_{k}^\preceq)^c}: \\
            &\tilde{\Qr}\big(X_{\Bb_{k}}\miid X_{\Bb_{k}^\prec}=x_{\Bb_{k}^\prec}, X_{(R_{j}\cup \Bb_i^\preceq)^c}=x_{(R_{j}\cup \Bb_i^\preceq)^c}\big) \\
            &\ne \Prb_{\Mc^\Sc}\Big(X_{\Bb_{k}}\mid X_{\Bb_{k}^\prec\cap R_{j}}=x_{\Bb_{k}^\prec\cap R_{j}}\miid \Do\big(X_{R_{j}^c}=x_{R_{j}^c}\big)\Big) \Big\}.
        \end{aligned}
    \]
    This means that in general we \emph{cannot} conclude the following equality:
    \[
        \Prb_{\Mc^\Sc}(X_{\Bb_i}\mid X_{\Bb_i^\prec}\miid \Do(X_{D^c}))=\Prb_{\Mc^\Sc}(X_{\Bb_i}\mid X_{\Bb_i^\prec\cap R_{j}} \miid \Do(X_{R_{j}^c}))
    \]
    up to a null set $N$ that is simultaneously a
    \[
    \Prb_{\Mc^\Sc}( X_{\Bb_i^\prec}\miid \Do(X_{D^c}))\text{-null set and a } 
    \Prb_{\Mc^\Sc}( X_{\Bb_i^\prec\cap R_{j}} \miid \Do(X_{R_{j}^c}))\text{-null set};
    \]
    and therefore cannot obtain the pointwise equality in \cref{lem:IDP_sound_III}.

\end{remark}

\begin{lemma}[Rule~L2]\label{lem:IDP_sound_III}
    Let $\Pf=(\Ic,\Vc,\Ec)$ be an iSOPAG. Let $D\subseteq \Vc$, and $\Ab\subseteq D$ be a $\Pf_D$-bucket. Write $D^+\coloneqq \pde_{\Pf_D}(\Ab)$ and $D^-\coloneqq (D\sm D^+)\cup \Ab$. Let $(\Mc,X_\Sc=\mathbf{1}_{|\Sc|})$ be an s-iSCM such that $\Gb(\Mc,X_\Sc=\mathbf{1}_{|\Sc|})\in [\Pf]_\Gs$.  If $\pcc_{\Pf_D}(\Ab)\cap \pde_{\Pf_D}(\Ab)\subseteq \Ab$, then $\Qc[D\sm \Ab]$ is trackable from $\Qc[D]$ by
    \[
        \Qc[D\sm \Ab]=\Qc[D]^{|D^-}\otimes \Qc[D]^{\sm  D^+},
    \]
    where equality holds up to an oracle choice of conditional kernel and $\mu_{\Ab}$-a.s.\ if $\Mc^\Sc \in \Mb^+(\Pf)$. If $\Mc^\Sc \in \Mb^{+}_c$, then $\Qc[D]^{|D^-}\otimes \Qc[D]^{\sm  D^+}$ admits a continuous version and the equality holds pointwise provided that the continuous version is taken.
\end{lemma}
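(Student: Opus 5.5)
The plan is to mimic the proof of \cref{lem:IDP_sound_II}: obtain the fixing formula from two conditional-independence statements in the graph $\Pf_{\Do(D^c)}$, each transferred to every represented isADMG via \cref{thm:causal_calculus_mag_pag}. Write $\Qc[D]=\Prb_{\Mc^\Sc}(X_D\miid \Do(X_{D^c}))$ and suppress $X_\Ic$. The first step is the chain rule,
\[
\Qc[D]=\Qc[D]^{|D^-}\otimes \Qc[D]^{\sm D^+},
\]
where the first factor is the kernel of $X_{D^+\sm \Ab}$ given $X_{D^-}$, and the second is the marginal over $X_{D^-}$. The target $\Qc[D\sm\Ab]=\Prb_{\Mc^\Sc}(X_{D\sm\Ab}\miid \Do(X_{D^c\cup \Ab}))$ splits the same way into a kernel of $X_{D^+\sm\Ab}$ given $X_{D\sm D^+}$ (with $X_\Ab$ now an intervened input) and a marginal over $X_{D\sm D^+}$. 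It therefore suffices to prove two identities.

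\textbf{(a) Descendant part.} For $D^+\sm\Ab$ we need action/observation exchange (Rule~2), i.e.
\[
D^+\sm\Ab \sep{\mathsf{id}}{\Pf_{\Do(I_\Ab,D^c)}} I_\Ab\mid D^-\cup D^c.
\]
This gives $\Prb(X_{D^+\sm\Ab}\mid X_{D\sm D^+}\miid \Do(X_\Ab,X_{D^c}))=\Prb(X_{D^+\sm\Ab}\mid X_{D^-}\miid\Do(X_{D^c}))$.

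\textbf{(b) Non-descendant part.} For $D\sm D^+$ we need insertion/deletion of actions (Rule~3), i.e.
\[
D\sm D^+\sep{\mathsf{id}}{\Pf_{\Do(I_\Ab,D^c)}} I_\Ab\mid D^c.
\]
This gives $\Prb(X_{D\sm D^+}\miid\Do(X_\Ab,X_{D^c}))=\Qc[D]^{\sm D^+}$ marginalized over $\Ab$. On close reading, the formula's second factor should be read as the marginal over $D\sm D^+$, with the factor $X_\Ab$ re-inserted as an input.

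Separation (b) is routine. Every path from $D\sm D^+$ to $I_\Ab$ enters $\Ab$ through an edge into $\Ab$. In $\Pf_{\Do(I_\Ab,D^c)}$, the edges $I_a\tus\cdot$ given by \cref{def:int_pag} point only to $a$ and to possible children of $a$, which lie in $D^+$. Any open path must therefore contain a collider with no $C$ to be ancestral to (here $C=\emptyset$ after removing $D^c$), or else a potentially directed segment out of $\Ab$ ending in $D\sm D^+$. The latter is impossible by definition of $\pde$, using \cref{lem:podpath} and \cref{lem:prop_bucket} to handle circle marks within the bucket.

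The main obstacle is (a). Consider an open path from $D^+\sm\Ab$ to $I_\Ab$ given $D^-\cup D^c$. Take its last segment before reaching $I_a$ or $a$. Nodes in $D^+\sm\Ab$ are unconditioned, so every collider on the path must be possibly-ancestral to $D^-$. The only way to cross from $D^+$ into $D^-$ through conditioned nodes is by a collider path of non-visible edges, which is a pc-connecting path, or through edges from $I_a$ to $b$ that exist only for invisible or circle edges $a\suh b$, which are also pc-type. I would first show, as in \cite[Lemma~4]{jaber19idencom}, that such a path yields a node of $\pcc_{\Pf_D}(\Ab)\cap\pde_{\Pf_D}(\Ab)$ outside $\Ab$, contradicting the hypothesis. \cref{lem:region} and \cref{lem:property_visible} supply the propagation of invisibility along colliders, and \cref{def:sep_PAG} is used to treat the hybrid collider rule at $I_a$.

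With (a) and (b) in hand, tracking up to oracle choices is immediate. The $\mu_\Ab$-a.s.\ statement follows from the positivity hypotheses of \cref{thm:causal_calculus_mag_pag}, which hold in $\Mb^+(\Pf)$. The pointwise continuous statement follows from \cref{lem:pcmk}: both sides are continuous versions that agree almost everywhere with respect to a measure charging all open sets, hence they agree everywhere.
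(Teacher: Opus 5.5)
This is the paper's proof. Your separations (a) and (b) are exactly the two statements of \cref{lem:IDP_sound_ci_IV} (note $\pde_{\Pf_D}(\Ab)^c\cup\Ab=D^-\cup D^c$). They are transferred by Rules~2 and~3 of \cref{thm:causal_calculus_mag_pag} and chained as in the paper's four-line computation. Your direct path argument for (a) is also a valid and shorter substitute for the Left Contraction argument used there, once you make one step explicit: the edge from the last node $v_k\in D^+\sm\Ab$ into the final stretch of conditioned colliders must be $v_k\huh v_{k+1}$. A $\tuh$ or $\ouh$ edge would force $v_{k+1}\in D^+\cap D^-=\Ab$ and make $v_k$ a possible ancestor of $\Ab$, contradicting \cref{lem:topo_order}; hence $v_k\in\pcc_{\Pf_D}(\Ab)\cap\pde_{\Pf_D}(\Ab)\sm\Ab$.

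There are two small slips. First, your opening ``chain rule'' misreads the notation: $\Qc[D]^{\sm D^+}$ is the marginal over $D\sm D^+$, not over $D^-$, as you later notice. Second, for the pointwise claim you still need to show that a continuous version exists. The paper gets this by modifying $\Qc[D]^{|D^-}$ on a $\mu_{\Ab}$-null set so that the product coincides with the kernel $\Qc[D\sm\Ab]$, which is positive and continuous by \cref{lem:pcmk}.
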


\begin{proof}[Proof of \cref{lem:IDP_sound_III}]
To simplify notation, we omit $X_\Ic$. Then by \cref{lem:IDP_sound_ci_IV} and \cref{thm:causal_calculus_mag_pag}
\[
    \begin{aligned}
        \Qc[D\sm \Ab]&=\Prb_{\Mc^\Sc}(X_{D\sm \Ab}\miid \Do(X_{\Ab\cup D^c}))\\
        &=\Prb_{\Mc^\Sc}(X_{D^+\sm \Ab}\mid X_{D^-\sm \Ab}\miid \Do(X_{\Ab\cup D^c}))\otimes \Prb_{\Mc^\Sc}(X_{D^-\sm \Ab}\miid \Do(X_{\Ab\cup D^c}))\\
        &=\Prb_{\Mc^\Sc}(X_{D^+\sm \Ab}\mid X_{D^-}\miid \Do(X_{D^c}))\otimes \Prb_{\Mc^\Sc}(X_{D^-\sm \Ab}\miid \Do(X_{D^c}))\\
        &=\Qc[D]^{|D^-}\otimes \Qc[D]^{\sm  D^+},
    \end{aligned}
\]
where the equality holds up to an oracle choice of conditional kernel and $\mu_{\Ab}$-a.s.\ if $\Mc^\Sc \in \Mb^+(\Pf)$ and pointwise if $\Mc^\Sc \in \Mb^{+}_c$ and the conditional kernels are taken to be continuous. Note that if $\Mc^\Sc \in \Mb^{+}_c$ then $\Qc[D\sm \Ab]$ is a positive and continuous Markov kernel by \cref{lem:pcmk}. Therefore, we can always modify $\Qc[D]^{|D^-}$ on a $\mu_\Ab$-null set so as to obtain a continuous version; such a modification remains a version of the original conditional kernel.
\end{proof}

\begin{remark}\label{rem:idp_sound_III}
    We provide some intuition for the third equality in the proof of \cref{lem:IDP_sound_III}. Consider an ADMG $\Af=(\Vc,\Ec)$ and an SCM $\Mc$ with discrete variables such that $\Gb(\Mc)=\Af$. Let $a,b,c\in \Vc$ be distinct nodes. Assume $\de_{\Af}(a)\cap \mathsf{Distr}_{\Af}(a)=\{a\}$. If $b\in \de_{\Af}(a)$ and $c\in \Vc\sm \de_{\Af}(a)$, then 
    \[
        p_\Mc(x_b\miid \Do(x_a))=p_\Mc(x_b\mid x_a) \quad \text{ and }\quad p_\Mc(x_c\miid \Do(x_a))=p_\Mc(x_c).
    \]
    In words: for descendants of fixable node $a$, intervening on $X_a$ acts like conditioning on $X_a$; for non-descendants of $a$, intervening on $X_a$ has no effect. The measure-theoretic causal calculus in \cref{thm:causal_calculus_mag_pag}, together with \cref{lem:IDP_sound_ci_2,lem:IDP_sound_ci_3}, makes the corresponding principle rigorous in our iSOPAG setting with formal replacement:
    $
        \big(\text{node},\de_{\cdot}(\cdot)\big) \curvearrowleft \big(\text{bucket}, \pde_{\cdot}(\cdot)\big).
    $
\end{remark}

\begin{lemma}[Possible descendant and posterior]\label{lem:pde_vs_pdet}
    Let $\Pf_D$ be an induced subgraph of iSOPAG $\Pf$ over $D$ and $\Bb$ a $\Pf_D$-bucket. Then we have $\pde_{\Pf_D}(\Bb)=\pdet_{\Pf_D}(\Bb)$. 
\end{lemma}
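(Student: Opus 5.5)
The inclusion $\pde_{\Pf_D}(\Bb)\subseteq\pdet_{\Pf_D}(\Bb)$ is immediate, since every potentially directed path is potentially anterior. I read $\pdet_{\Pf_D}(\Bb)$ as the set of nodes $v$ reached from $\Bb$ by a potentially anterior path $\Bb\ni v_0\sus\cdots\sus v_n=v$ in $\Pf_D$, i.e.\ a path with no edge $v_i\hus v_{i+1}$. For the converse inclusion, I will show that any such path can be replaced by a potentially directed path from $\Bb$ to $v$. The tool throughout is \cref{lem:head_circle} together with \cref{cor:prop_subgraph_II}, which hold in induced subgraphs of an iSOPAG. I will also use Property~\ref{sopag:p2} and the ancestral condition, both of which pass to induced subgraphs.

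\textbf{Step 1 (normalise the start).} Among all potentially anterior paths from $\Bb$ to $v\notin\Bb$ (the case $v\in\Bb$ is trivial), I take one of minimal length and let $v_0$ be its last node lying in $\Bb$; truncating at $v_0$ keeps it potentially anterior, and minimality then forces $v_0$ to be the only node of the path in $\Bb$. The first edge $v_0\sus v_1$ must carry an arrowhead. Otherwise it is one of $\ouo,\tuo,\out,\tut$, so $v_1$ lies in the same $\Pf_D$-bucket as $v_0$, i.e.\ $v_1\in\Bb$, a contradiction. Since the path is potentially anterior, the arrowhead cannot sit at $v_0$, so the edge is $v_0\tuh v_1$ or $v_0\ouh v_1$.

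\textbf{Step 2 (propagate arrowheads by induction).} Suppose $v_{i-1}\tuh v_i$ or $v_{i-1}\ouh v_i$, and consider the next edge $v_i\sus v_{i+1}$. It has no arrowhead at $v_i$, so it is one of $\tuh,\ouh,\ouo,\tuo,\out,\tut$.
\begin{itemize}
\item The case $v_i\tut v_{i+1}$ is excluded by ancestrality, which forbids $a\suh b\tut c$.
\item The cases $v_i\tuo v_{i+1}$ and $v_i\out v_{i+1}$ are excluded by Property~\ref{sopag:p2}.
\item The case $v_i\ouo v_{i+1}$ is the one requiring work. Here \cref{lem:head_circle} applied to $v_{i-1}\tuh v_i\ouo v_{i+1}$, or \cref{cor:prop_subgraph_II} applied to $v_{i-1}\ouh v_i\ouo v_{i+1}$, gives an edge $v_{i-1}\tuh v_{i+1}$ or $v_{i-1}\ouh v_{i+1}$ in $\Pf_D$. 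Splicing this edge in place of $v_{i-1}\sus v_i\sus v_{i+1}$ yields a shorter potentially anterior path from $\Bb$ to $v$, contradicting minimality.
\end{itemize}
Hence every edge is $\tuh$ or $\ouh$. Such a path is potentially directed, so $v\in\pde_{\Pf_D}(\Bb)$.

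\textbf{Main obstacle.} The main difficulty is the shortcut in Step 2, where two things must be checked. First, the spliced edge must have a non-arrowhead mark at $v_{i-1}$, so that the shortened path is still potentially anterior. This is exactly why I need the refined form of \cref{lem:head_circle}: it gives $a\tuh c$ or $a\ouh c$, never $a\huh c$, when $a\tuh b\ouo c$. Second, the shortened path must still begin in $\Bb$ and avoid revisiting nodes. This holds because we only delete a node, and $v_0$ is never deleted since $i\ge1$. If the shortcut argument fails in some boundary case, my fallback is to induct on path length directly, keeping the invariant that the last edge is into its endpoint.
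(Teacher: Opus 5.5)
Your proof is correct and follows essentially the same route as the paper's. Take a shortest potentially anterior path. Then use Property~\ref{sopag:p2}, the ancestral condition, and the shortcut supplied by \cref{lem:head_circle} and \cref{cor:prop_subgraph_II} to show that once an arrowhead appears, every subsequent edge is $\tuh$ or $\ouh$. Hence the arrowhead-free prefix stays in $\Bb$ and the remainder is potentially directed. Your version makes explicit the $\ouo$-shortcut step that the paper only delegates to ``arguing similarly to \cref{lem:podpath}''.
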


\begin{proof}[Proof of \cref{lem:pde_vs_pdet}]
    First, note that $\Bb\subseteq \pde_{\Pf_D}(\Bb)$ and $\Bb\subseteq \pdet_{\Pf_D}(\Bb)$. It is easy to see $\pde_{\Pf_D}(\Bb)\subseteq \pdet_{\Pf_D}(\Bb)$. If $\pdet_{\Pf_D}(\Bb)\subseteq \Bb$, then we are done. Assume now that $\Bb\subsetneq \pdet_{\Pf_D}(\Bb)$. Let $a\in \pdet_{\Pf_D}(\Bb)\sm \Bb$. Then there is a potentially anterior path from $b$ to $a$ for some $b\in \Bb$. Then there must exist $\tilde{b}\in \Bb$ such that $a\in \pde_{\Pf_D}(\{\tilde{b}\})\subseteq \pde_{\Pf_D}(\Bb)$.  To see this, we can argue similarly to \cref{lem:podpath}. Since $v_{i-1}\suh v_{i} \out v_{i+1}$ and $v_{i-1}\suh v_{i} \tuo v_{i+1}$ cannot occur, we can find a potentially anterior path from $b$ to $a$ such that there exists a node $\tilde{b}$ on $\pi$ such that all the edges without arrowheads are before it. This implies $\pdet_{\Pf_D}(\Bb)\subseteq \pde_{\Pf_D}(\Bb)$.
\end{proof}

\begin{lemma}[Topological order of buckets]\label{lem:topo_order}
    Let $\Pf=(\Ic,\Vc,\Ec)$ be an iSOPAG. Let $D\subseteq \Ic\cup \Vc$. There exists a partial order $\prec$ on buckets of $\Pf_D$ such that $\Ab\prec \Bb$ for every distinct $\Pf_D$-buckets $\Ab$ and $\Bb$ such that $\Ab\subseteq \pan_{\Pf_D}(\Bb)$. 
\end{lemma}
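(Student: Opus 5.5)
The plan is to define a relation on $\Pf_D$-buckets and show that it is acyclic; a topological order then follows. For distinct buckets $\Ab,\Bb$, write $\Ab\to\Bb$ if $\Ab\subseteq\pan_{\Pf_D}(\Bb)$, and let $\prec$ be the transitive closure of $\to$. This $\prec$ is transitive by construction, and it satisfies the required implication $\Ab\subseteq\pan_{\Pf_D}(\Bb)\Rightarrow\Ab\prec\Bb$ by definition. So the only thing left is irreflexivity, namely that there is no cycle $\Ab_1\to\Ab_2\to\cdots\to\Ab_k\to\Ab_1$ with $k\ge 2$ distinct buckets. Once we have that, $\prec$ is a strict partial order. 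If a linear order is wanted, for use as the indexing $\Bb_1\prec\cdots\prec\Bb_n$, we extend $\prec$ to one by Szpilrajn's theorem, or simply by a topological sort of the finite DAG of buckets.

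First I would reduce a cycle of buckets to a cycle of potentially directed paths between nodes. A cycle as above gives nodes $a_i\in\Ab_i$ and potentially directed paths in $\Pf_D$ from $a_i$ to some node $a'_{i+1}\in\Ab_{i+1}$. Inside a bucket, any two nodes are joined by a path with no arrowheads. Since $\Pf$ is an iSOPAG, and so is its induced subgraph for the purposes of \cref{lem:head_circle}, such a path has only $\ouo$, $\tuo$, $\out$, $\tut$ edges, and it can be traversed in either direction as a potentially directed path. Concatenating, we obtain a closed potentially directed walk passing through at least two distinct buckets. It therefore contains at least one edge leaving a bucket; such an edge is not arrowhead-free, so it is of the form $u\ouh v$ or $u\tuh v$.

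The key step is to turn this into a contradiction with \cref{lem:podpath}. Take an edge $u\suh v$ of the closed walk with $u,v$ in different buckets. The rest of the walk, from $v$ back to $u$, is a potentially directed walk, and shortening repeated nodes gives a potentially directed path from $v$ to $u$. Then \cref{lem:podpath} forbids $u\suh v$, since that is exactly an arrowhead at $v$ on an edge whose other end $u$ is reached from $v$ by a potentially directed path. This applies with the roles $a=v$, $b=u$, and that lemma states precisely that $b\suh a$ is impossible. For the argument to be valid in $\Pf_D$ rather than in $\Pf$, I would check that the proof of \cref{lem:podpath} only uses \ref{sopag:p2} and \cref{lem:head_circle}, and that both hold in induced subgraphs, as \cref{lem:head_circle} asserts.

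The main obstacle is the detail in these reductions. Extracting a path from a walk must preserve potential directedness; removing the loop between two visits of the same node does this. I also have to make sure the edge chosen actually carries an arrowhead at $v$. That holds because any edge with no arrowheads would merge its two endpoints into a single bucket, which contradicts $u$ and $v$ lying in distinct buckets.
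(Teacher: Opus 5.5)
Your overall strategy matches the paper's: turn a cycle of buckets into a closed potentially directed walk, then contradict \cref{lem:podpath}. Your treatment of cycles of arbitrary length is, if anything, more explicit than the paper's two-bucket argument. However, the step that joins the entry node $a'_{i+1}$ of a bucket to the exit node $a_{i+1}$ is wrong as stated. You claim that an arrowhead-free path inside a bucket can be traversed in either direction as a potentially directed path. By \cref{def:podirected_path}, an edge may be traversed from $v_i$ to $v_{i+1}$ only if there is no arrowhead at $v_i$ and no tail at $v_{i+1}$. So an edge $x\tut y$ is potentially directed in neither direction, and $x\tuo y$ or $x\out y$ in only one. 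In general your closed walk is therefore only potentially anterior. Then ``the rest of the walk from $v$ back to $u$'' need not be potentially directed, and \cref{lem:podpath} cannot be applied to it.

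The missing idea is that every bucket on your cycle is entered from outside by an edge $x\ouh y$ or $x\tuh y$, and buckets that receive an arrowhead are well behaved. There are two ways to use this.
\begin{itemize}
\item Use the third claim of \cref{lem:prop_bucket} to replace the entering edge by $x\ouh a_{i+1}$ or $x\tuh a_{i+1}$, after truncating the path at its first entry into the bucket. Then no traversal inside the bucket is needed at all; this is exactly what the paper does.
\item Alternatively, as in the proof of \cref{lem:prop_bucket}, note that every node of the bucket then receives an arrowhead from $x$. Ancestrality and Property~\ref{sopag:p2} exclude $x\suh z\tut w$, $x\suh z\tuo w$ and $x\suh z\out w$, so every edge inside such a bucket is $z\ouo w$, which is traversable both ways.
\end{itemize}
With either repair, your loop removal and your application of \cref{lem:podpath} in $\Pf_D$ go through as you describe.
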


\begin{proof}[Proof of \cref{lem:topo_order}]
    It suffices to show that if there is a potentially directed path $\pi$ from $a\in \Ab$ to $b\in \Bb$ then it is impossible to have a potentially directed path from $\tilde{b} \in \Bb$ to $\tilde{a}\in \Ab$. Assume on the contrary that there is a shortest potentially directed path 
    \[\tilde{\pi}:\tilde{b}=u_0\sus \cdots \sus u_m=\tilde{a}\] 
    from $\tilde{b} \in \Bb$ to $\tilde{a}\in \Ab$. Since $\tilde{\pi}$ is potentially directed and $\Ab$ and $\Bb$ are distinct buckets, by the first claim in the proof of \cref{lem:podpath} we must have $u_{m-1}\tuh u_m$ or $u_{m-1}\ouh u_m$. \cref{lem:prop_bucket} implies that $u_{m-1}\tuh a$ or $u_{m-1}\ouh a$. Thus, there is a potentially directed path from $\tilde{b}$ to $b$ with arrowhead at $b$. By \cref{lem:prop_bucket}, there is a non-trivial potentially directed path from $\tilde{b}$ to itself. \cref{lem:podpath} tells us that this is impossible. This finishes the proof.
\end{proof}

\begin{lemma}[Conditional independence~I]\label{lem:IDP_sound_ci_I}
    Let $\Pf=(\Ic,\Vc,\Ec)$ be an iSOPAG and let $A,B\subseteq \Vc$ be disjoint with $A\ne \emptyset$. Write 
    \[
        \Dc\coloneqq \pant_{\Pf_{\Vc\sm B}}(A) \quad \text{ and } \quad  H\coloneqq (\Vc\sm (\Dc\cup B))\cup (\Ic\sm \tilde{\Dc}),
    \]
    where
    \[
        \tilde{\Dc}=\{i\in \Ic: \exists \text{ anterior path } i\sus v_1\sus \cdots \sus v_{n}\in A \text{ with $v_j \in \Vc$ for $1\le j< n$ (if any)}\}.
    \]
    Then we have
    \[
        A\sep{\mathsf{id}}{\Pf_{\Do(I_H,B)}} I_H \mid B\cup \tilde{\Dc}.
    \]
\end{lemma}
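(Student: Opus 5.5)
We argue by contradiction directly in the graph $\Hf\coloneqq\Pf_{\Do(I_H,B)}$, using \cref{def:int_pag,def:sep_PAG}. Suppose some path from $A$ to $I_H\cup\Ic\cup B$ is open given $B\cup\tilde{\Dc}$. Take a shortest such path $\pi: a=v_0\sus v_1\sus\cdots\sus v_n$, so that no non-endnode lies in $\Ic\cup I_H\cup B$. Since $B$ is hard-manipulated and conditioned upon, the endpoint $v_n$ cannot lie in $B$: in $\Hf$ every edge incident to $B$ is out of $B$, and conditioning on it blocks. If $v_n\in\Ic\setminus\tilde{\Dc}$, or if $v_n=I_h$ with $h\in H$, we still have to rule out openness. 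The last case, $v_n\in\tilde{\Dc}$, is blocked trivially because $\tilde{\Dc}$ lies in the conditioning set.

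The key observation is that all colliders on $\pi$ must be possible ancestors of the conditioning set $C\cap\Vc=B$. In $\Hf$, however, $B$ consists of input nodes, so no output node has a (potentially) directed path into $B$. The membership condition $v_i\in\anc_\Hf(C)$ or $\pan_\Hf(C\cap\Vc)$ therefore fails unless $v_i\in C$, which is impossible for a non-endnode. Hence $\pi$ has no colliders. Every node of $\pi$ is then a non-collider outside $C$, and $\pi$ must be a trek-like path with no collider. I will show that, starting from $a\in A$, the path stays inside $\Dc$ and reaches $v_n$ in a way that contradicts the definition of $H$ or $\tilde{\Dc}$.

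The concrete step runs as follows. Since $\pi$ has no collider, there is an index $k$ such that the segment $v_k\sus\cdots\sus v_0$ carries no arrowhead pointing back toward $v_k$ along the direction from $v_k$ to $a$. It is therefore a potentially anterior path from $v_k$ to $a$, while the segment from $v_k$ toward $v_n$ is potentially anterior in the opposite direction. The first part, run backwards, puts every $v_j$ with $j\le k$ in $\pant_{\Pf_{\Vc\setminus B}}(A)=\Dc$; this uses that edges of $\Hf$ among $\Vc\setminus B$ coincide with those of $\Pf$ after the circle-to-tail replacements of $\Ec_4$. If $v_n=I_h$ or $v_n\in\Ic\setminus\tilde{\Dc}$, the no-collider structure forces the edge at the end to be out of $v_n$. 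Such an edge is $I_h\tus h'$ with $h'\in\{h\}\cup\pch(h)$, or $i\tus v_{n-1}$. One has to check, using \cref{def:int_pag}, that the edges from $I_h$ copy edges out of $h$. Then the whole path from $v_n$ (or from $h$) to $a$ becomes potentially anterior, which yields $h\in\Dc$ or $v_n\in\tilde{\Dc}$, contradicting $h\in H$.

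The main obstacle is the orientation bookkeeping at the intervention node when $n$ is small or $v_{n-1}=h$, and verifying that the tail/circle marks introduced in $\Ec_3$–$\Ec_7$ translate faithfully into potentially anterior edges of $\Pf_{\Vc\setminus B}$. In particular, I must use \cref{lem:pde_vs_pdet} and \cref{lem:podpath} to justify the claim that a collider-free path decomposes as described. If that fails, I would instead reduce to the iMAG level via \cref{thm:sep_hsint_pag_mag}.
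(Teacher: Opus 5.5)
Your outline is the paper's proof. You argue by contradiction in $\Hf=\Pf_{\Do(I_H,B)}$ and note that every conditioned node ($B$ and $\tilde{\Dc}$) is an input node of $\Hf$ carrying only tails, so no collider can be open. You then claim the collider-free path is potentially anterior from its $I_H$-endpoint to $a$, so its interior lies in $\Dc$. Finally, the $\Pf$-edge between $h$ and $v_{n-1}$ that produces $I_h\sus v_{n-1}$ has no arrowhead at $h$, giving $h\in\Dc$ (resp.\ $v_n\in\tilde{\Dc}$ for an original input endpoint).

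The gap is the middle step, which you flag yourself. It is the actual content of the lemma, and you do not prove it.
\begin{itemize}
\item Your stated reason, ``since $\pi$ has no collider, there is an index $k$'' with both halves potentially anterior away from $v_k$, is false for collider-free paths in general. For example, $a\hut v_1\huh v_2\tuh v_3$ has no such $v_k$.
\item \cref{lem:pde_vs_pdet,lem:podpath} do not address this step.
\item The fallback via \cref{thm:sep_hsint_pag_mag} runs the wrong way. For an iSOPAG that theorem only transfers separations from $\Pf$ to its iMAGs, so it cannot establish one in $\Pf_{\Do(I_H,B)}$.
\end{itemize}

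The missing idea is a propagation argument starting at the input endpoint.
\begin{itemize}
\item First, in $\Hf$ a pattern $x\suh y\tus z$ at an output node $y\notin B$ forces $y\tuh z$. In $\Pf$ this is ancestrality together with \ref{sopag:p2}.
\item This survives the manipulations. Soft manipulation adds arrowheads only at nodes that already have one in $\Pf$ ($\Ec_1,\Ec_4$ of \cref{def:int_pag}), and adds tails at output nodes only at nodes incident to an undirected edge ($\Ec_2,\Ec_5,\Ec_6$). Hard manipulation on $B$ only deletes edges or changes marks at $B$.
\item Now suppose some edge of $\pi$ has an arrowhead at its endpoint nearer to $v_n$. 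Take the one closest to $v_n$, with the arrowhead at $v_{j+1}$.
\item Then $v_{j+1}\neq v_n$, because input nodes carry no arrowheads.
\item As an open non-collider with an arrowhead on its left edge, $v_{j+1}$ must have a tail on its right edge. Hence $v_{j+1}\tuh v_{j+2}$, an arrowhead strictly closer to $v_n$, a contradiction.
\end{itemize}
Thus $\pi$ is potentially anterior from $v_n$ to $a$, which is what the paper extracts (briefly) from \cref{def:SOPAG}.

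Two minor corrections:
\begin{itemize}
\item Edges of $\Hf$ among $\Vc\sm B$ are literally those of $\Pf$; the circle-to-tail replacement only touches marks at $B$.
\item $I_h$ can be adjacent to $b$ with $h\tut b$ or $h\out b$, so $b$ need not be a possible child of $h$. You only need that this edge has no arrowhead at $h$.
\end{itemize}
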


\begin{proof}[Proof of \cref{lem:IDP_sound_ci_I}]
    Assume on the contrary that there is an irreducible open path 
    \[\pi:A\ni a\sus v_1\sus \cdots \sus v_{n-1}\sut I_h \in I_H\]
    from $A$ to $I_H$ given $B\cup \tilde{\Dc}$ in $\Pf_{\Do(I_H,B)}$ (not to $B\cup (\Ic\sm H)$ since we condition on it). Note that path $\pi$ cannot contain colliders since otherwise $\pi$ is blocked at the colliders.  Since $\pi$ is definitely open and does not contain a collider and \cref{def:SOPAG} excludes the case $v_1\suh u_1\tuo w \out u_2 \hus v_2$, we know that $\pi$ is a potentially anterior path from $I_H$ to $A$ or from $A$ to $I_H$ not intersecting $B$. We first assume that $\pi$ is a potentially anterior path from $I_H$ to $A$. Since $H\cap \Dc=\emptyset$, we have that $\pi$ does not intersect $H$. Then we have $v_{n-1}\sut I_h$ with $v_{n-1}\notin H$ and there must be an edge between the nodes $v_{n-1}$ and $h\in \Vc\cap H$ such that it is not into $h$ by the definition of $\Pf_{\Do(I_H,B)}$. This implies $H\cap \Dc\ne \emptyset$, which contradicts the fact that $H\cap \Dc=\emptyset$. We now assume that $\pi$ is a potentially anterior path from $A$ to $I_H$ but not from $I_H$ to $A$. This implies that there exists $v_i\suh v_{i+1}$ on $\pi$ for some $i$. Since $\pi$ is definitely open without colliders, we must have $v_{n-2}\suh v_{n-1}$ and $v_{n-1}\tut I_h$, which is an impossible pattern.  Hence, we can conclude \[A\sep{\mathsf{id}}{\Pf_{\Do(I_H,B)}} I_H\mid B\cup \tilde{\Dc}.\]
\end{proof}

\begin{lemma}[Conditional independence~II]\label{lem:IDP_sound_ci_2}
    Let $\Pf=(\Ic,\Vc,\Ec)$ be an iSOPAG, $D\subseteq \Vc$ , and $A\subseteq D$. Let $\Bb_1\prec \cdots \prec \Bb_m$ be a topological order of $\Pf_D$-buckets. Set $R_1\coloneqq \re_{\Pf_D}(A)$ and $R_2\coloneqq \re_{\Pf_D}(D\sm R_1)$. Fix an arbitrary $\Bb_i$ and write $H_j\coloneqq D\sm (\Bb_i^\preceq\cup R_j)$ for $j=1,2$. Then:
    \begin{enumerate}[itemsep=0pt,label=(\roman*)]
        \item if $\Bb_i\subseteq R_1$, it holds $\Bb_i\sep{\mathsf{id}}{\Pf_{\Do(I_{H_1},D^c)}} I_{H_1}\mid \Bb_i^\prec\cup D^c\cup \Ic$;
        \item if $\Bb_i\subseteq R_2$, it holds $\Bb_i\sep{\mathsf{id}}{\Pf_{\Do(I_{H_2},D^c)}} I_{H_2}\mid \Bb_i^\prec\cup D^c\cup \Ic$.
    \end{enumerate}
\end{lemma}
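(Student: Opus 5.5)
We argue by contradiction, along the lines of the proof of \cref{lem:IDP_sound_ci_I}. We treat case (i); case (ii) is identical after exchanging $R_1$ and $R_2$, since the only property of $R_j$ used below is \cref{lem:region}, which applies to both $R_1=\re_{\Pf_D}(A)$ and $R_2=\re_{\Pf_D}(D\sm R_1)$.

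Suppose there is an irreducible open path
\[
\pi: \Bb_i\ni v_0\sus v_1\sus\cdots\sus v_n
\]
in $\Pf_{\Do(I_{H_1},D^c)}$, from $\Bb_i$ to $I_{H_1}$ (or to an input node), given $\Bb_i^\prec\cup D^c\cup\Ic$. Everything outside $D$ is hard-manipulated and conditioned upon. All input nodes are conditioned upon, so they are blocked as non-colliders and cannot be colliders. Hence $\pi$ must end with an edge from some $I_h$, $h\in H_1$, and every interior node of $\pi$ lies in $D$ or is a collider.

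The plan is to track where $\pi$ sits relative to the bucket order, using two facts.
\begin{enumerate}
\item Every node of $H_1$ lies in $\Bb_i^\succ$ or is incomparable to $\Bb_i$. By the definition of a topological order over buckets, a potentially directed path from a node of $H_1$ into $\Bb_i^\preceq$ would place it in $\Bb_i^\preceq$, which is impossible since $H_1\cap\Bb_i^\preceq=\emptyset$.
\item Non-colliders on $\pi$ must lie outside $\Bb_i^\prec$ (otherwise they block). Colliders on $\pi$ must be possible ancestors of $\Bb_i^\prec$. A collider is a possible ancestor of $\Bb_i^\prec$ only if its bucket is $\preceq$ some bucket of $\Bb_i^\prec$, and \cref{lem:topo_order} and \cref{lem:prop_bucket} transport this statement across bucket members.
\end{enumerate}

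First, we walk from $v_0\in\Bb_i$. Suppose the first edge leaving the bucket $\Bb_i$ is out of $v_j$ (tail or circle at $v_j$). Then by \cref{lem:podpath} and \cref{lem:pde_vs_pdet} the segment moves forward into $\Bb_i^\succ$. Once $\pi$ is in $\Bb_i^\succ$, any collider would have to be a possible ancestor of $\Bb_i^\prec$, contradicting acyclicity of the bucket order (\cref{lem:topo_order}). So the remainder of $\pi$ must be collider-free and potentially directed toward $I_h$. The last edge into $I_h$ comes from $\Ec_1$--$\Ec_7$ of \cref{def:int_pag}, i.e.\ $h$ or a possible child of $h$. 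This configuration forces the path to enter $h$ through a non-arrowhead, and as in the proof of \cref{lem:IDP_sound_ci_I} this yields a forbidden pattern ($\suh\tuo$ or $\suh\tut$, excluded by \ref{sopag:p2}).

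Second, suppose the first edge leaving $\Bb_i$ is into $v_j$, or every collider on $\pi$ is in $\Bb_i^\prec$. Then $\pi$ travels through a collider path $v_j\suh\cdot\huh\cdots$ whose colliders lie in $\Bb_i^\preceq$ and are not visible. This sequence, together with bucket connections, is a pc-connecting structure from $\Bb_i\subseteq R_1$. By \cref{lem:region}, every node reached by an invisible $\ouh$, $\tuh$ or $\huh$ edge into a node of $R_1$ lies in $R_1$. Hence all nodes visited stay in $R_1\cap\Bb_i^\preceq$, which is disjoint from $H_1$. The final edge at $I_h$ then needs $h$ adjacent via an invisible or circle edge (\cref{def:int_pag}), which \cref{lem:region} again forces into $R_1$, contradicting $h\in H_1$. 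The visibility bookkeeping uses \cref{lem:property_visible}. One point must be checked here: edges visible in $\Pf$ remain visible in $\Pf_D$, or can be treated as such, so that \cref{lem:region} applies in $\Pf_D$ while the paths live in $\Pf_{\Do(I_{H_1},D^c)}$.

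The main obstacle is the mixed case, where $\pi$ alternates: it leaves $\Bb_i$ forward and later returns via colliders that are possible ancestors of $\Bb_i^\prec$. Here we would first try the $C$-tightness/irreducibility shortcut arguments of \cref{lem:4}: use \cref{lem:head_circle} and the FCI closure properties (\ref{sopag:p3}, $\fci{4}$) to shorten the path until it falls into one of the two cases above.
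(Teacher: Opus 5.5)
Your argument does not close. The ``mixed case'' you call the main obstacle is answered only by ``we would first try the $C$-tightness/irreducibility shortcut arguments of \cref{lem:4}'', which is not a proof. Your second case also just asserts that $\pi$ is a collider path after the exit node. What is missing is a forward-propagation step, and it shows that no mixed case exists.

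By your Fact~2, open colliders of $\pi$ lie in $\Bb_i^\prec$ and non-colliders lie in $\Bb_i\cup\Bb_i^\succ$. Suppose a non-collider $v_k\in D$ is hit by an arrowhead $v_{k-1}\suh v_k$. The $id$-non-collider forms then require a tail at $v_k$ toward $v_{k+1}$. Since $\suh\tut$ (ancestrality) and $\suh\tuo$ (\ref{sopag:p2}) are excluded, this edge is $v_k\tuh v_{k+1}$. So $v_{k+1}$ lies in a strictly later bucket, cannot be an open collider, and is again a non-collider hit by an arrowhead.

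Hence once this happens, $\pi$ runs forward to $v_{n-1}$. This covers your forward exit from $\Bb_i$, and also any $v_k\huh v_{k+1}$ with $v_{k+1}$ a non-collider. The node $v_{n-1}$ is then blocked:
\begin{itemize}
\item with $I_h\tuh v_{n-1}$ or $I_h\tuo v_{n-1}$ it is an $id$-collider (last rows of \cref{def:sep_PAG}) outside $\pan(\Bb_i^\prec)$;
\item $I_h\tut v_{n-1}$ would need an undirected edge at $v_{n-1}$ in $\Pf$ (by $\Ec_2,\Ec_5,\Ec_6$ of \cref{def:int_pag}), which the incoming arrowhead forbids.
\end{itemize}

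Every other shape is blocked by the bucket order alone. An arrowhead exit from the initial $\Bb_i$-segment at an interior node leaves that node of indefinite status, because the mark on its bucket edge must be a circle. A tail or circle at a neighbour of a collider makes that neighbour a possible parent of it, hence a node of $\Bb_i^\prec$ that is neither $v_0$ nor an open non-collider. The one exception is a bidirected collider path $v_0\huh\cdots\huh v_{n-1}$ inside $\Bb_i^\prec$. There the edge at $I_h$ makes $h$ a possible parent or bucket-mate of $v_{n-1}$, so $h\in\Bb_i^\prec$, a contradiction.

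So the proof uses only $h\notin\Bb_i^\preceq$. You do not need $R_1$, \cref{lem:region}, or any shortening argument. The paper runs the same idea from the $I_h$ end. Splitting on the mark at $v_{n-1}$, it shows that $v_{n-1}$ (which is $h$, or a possible child or bucket-mate of $h$) lies after $\Bb_i$, so it is not an open collider. Otherwise, since $v_0\in\Bb_i$, there must be a collider on $\pi(v_0,v_{n-1})$. The one nearest to $v_{n-1}$ has $v_{n-1}$ as a possible anterior, so it is not an ancestor of $\Bb_i^\prec$.

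Some of what you did write is also inaccurate.
\begin{itemize}
\item In your first case $\pi$ need not pass through $h$, and the final contradiction is not a forbidden $\suh\tuo$ pattern. As noted above, $v_{n-1}$ is an $id$-collider when it meets $I_h$ through an arrowhead or circle; only $I_h\tut v_{n-1}$ gives a forbidden ($\suh\tut$) configuration.
\item For the visibility bookkeeping you need the converse of what you state: $\Pf$-invisible edges are $\Pf_D$-invisible, because a visibility witness in $\Pf_D$ is also one in $\Pf$. Your version, ``visible in $\Pf$ remains visible in $\Pf_D$'', fails when the witness lies outside $D$.
\item \cref{lem:region} moves membership in $R_1$ from the head of a non-visible edge to its tail. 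It therefore cannot carry $v_{j+1}$ into $R_1$ along $v_j\tuh v_{j+1}$ or $v_j\ouh v_{j+1}$ with $v_j\in R_1$. In your second case this is harmless only because such an edge never leads into an open collider, which is again the fact you had not established.
\end{itemize}
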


\begin{proof}[Proof of \cref{lem:IDP_sound_ci_2}]
    We prove the first statement and omit the second since the proof is similar. The case $H_1=\emptyset$ is trivial, so assume $H_1\ne \emptyset$. We assume that $\Ic=\emptyset$ because $\Ic$ is always conditioned on. Assume for contradiction that 
    \[
    \pi:\Bb_i\ni v_0\sus v_1\sus \cdots \sus v_{n-1}\sut I_h\in I_{H_1}
    \]
    is an irreducible open path from $\Bb_i$ to $I_{H_1}$ given $\Bb_i^\prec\cup D^c$ in $\Pf_{\Do(I_{H_1},D^c)}$. We claim that $n\geq 2$. Indeed, because $v_0\in \Bb_i$ and $h\in H_1$, by the definition of $\Pf_{\Do(I_{H_1},D^c)}$ there cannot be an edge between $v_0$ and $I_h$. Therefore, the case where $n=1$ is excluded. There are three cases: 
    \begin{enumerate}[label=\textbf{Case~\arabic*:}, ref=Case~\arabic*, leftmargin=*]
        \item $v_{n-1}\hus v_{n-2}$,
        \item $v_{n-1}\ous v_{n-2}$, and
        \item $v_{n-1}\tus v_{n-2}$.
    \end{enumerate}
    
     \textbf{Case 1}. We consider the case in which $v_{n-1}\hus v_{n-2}$. First, we assume that $v_{n-1}=h$. Then $h$ is a collider on $\pi$, i.e., $I_h\tuh h\hus v_{n-2}$. Since $h\in H_1$, we have \[h\notin \pan_{\Pf_{\Do(I_{H_1},D^c)}}(\Bb_i^\prec\cup D^c).\] Therefore, the path $\pi$ is blocked by $\Bb_i^\prec\cup D^c$ at node $h$ in $\Pf_{\Do(I_{H_1},D^c)}$. So, this case cannot occur. Second, we assume that $v_{n-1}\ne h$. In this case, since $v_{n-2}\suh v_{n-1}$ and $v_{n-1}$ is adjacent to $I_h$, we have $h\tuh v_{n-1}$ or $h\ouh v_{n-1}$ or $h\ouo v_{n-1}$ by \cref{def:int_pag}. This implies that node $v_{n-1}$ is a collider on $\pi$ after $\Bb_i^\prec$. Note that there cannot be a potentially directed path from $v_{n-1}$ to $\Bb_i^\prec$ and therefore path $\pi$ is blocked by $\Bb_i^\prec\cup D^c$ at $v_{n-1}$ in $\Pf_{\Do(I_{H_1},D^c)}$.

    \textbf{Case 2}. We consider the case in which $v_{n-1}\ous v_{n-2}$. We first assume that $v_{n-1}=h$. Since $v_{n-1}$ is of definite status, we have either $I_h\tut v_{n-1}\ous v_{n-2}$ or $I_h\tuo v_{n-1}\ous v_{n-2}$ with $I_h$ non-adjacent to $v_{n-2}$. The first case contradicts $\fci{6}$ by \cref{def:int_pag} and the second case cannot happen by \cref{def:int_pag}. Hence, we can exclude the case where $v_{n-1}=h$. Next assume that $v_{n-1}\ne h$. This implies that we have $h\tuo v_{n-1}$ or $h\ouo v_{n-1}$ or $h \out v_{n-1}$ with no undirected edges connecting to $v_{n-1}$. Since path $\pi$ is open given $\Bb_i^\prec\cup D^c$ in $\Pf_{\Do(I_{H_1},D^c)}$ and $h\in H_1$, we have $v_{n-1}\in \Bb_j$ and $v_{n-2}\in \Bb_k$ for some $j,k>i$.  Therefore, there must be a collider on $\pi(v_0,v_{n-1})$. 
    
    To prove this claim, we assume for contradiction that there are no colliders on $\pi(v_0,v_{n-1})$. If $v_{n-1}\ouh v_{n-2}$, then since $v_{n-2}$ is of definite status we have $v_{n-2}\tuh v_{n-3}$ (note that $n\ge 3$ because $v_{n-2}\notin \Bb_i$). Repeat the argument until we reach $v_0$. This then implies that $v_{n-1}$ is a possible ancestor of $v_0$, which contradicts the topological order. For the case where $v_{n-1} \ouo v_{n-2}$ or $v_{n-1}\out v_{n-2}$, consider $v_r$ with \[r\coloneqq \min\{t\mid v_t\in \Bb_j \text{ and } v_s\in \Bb_j \ \forall s: t\le s\le n-1\}.\] Then we can have $v_{r+1} \out v_r$ or $v_{r+1}\tuo v_r$ or $v_{r+1}\ouo v_r$ or $v_{r+1} \tut v_{r}$. Therefore, we have $v_r\tuh v_{r-1}$ or $v_r \ouh v_{r-1}$ since $v_{r}$ is of definite status. Applying the previous argument for the case where $v_{n-1}\ouh v_{n-2}$, we arrives at a contradiction as well. This finishes the proof of the claim. 
    
    Let $v_l$ be the first collider on $\pi$ after $v_{n-1}$. Then $v_l\notin \anc_{\Pf_{\Do(I_{H_1},D^c)}}(\Bb_i^\prec\cup D^c)$. Otherwise, we would have $v_{n-1}\in \pant_{\Pf_{\Do(I_{H_1},D^c)}}(\Bb_i^\prec\cup D^c)$, which is a contradiction. So the path $\pi$ is blocked by $\Bb_i^\prec\cup D^c$ in $\Pf_{\Do(I_{H_1},D^c)}$.

    \textbf{Case 3}. We finally consider the case in which $v_{n-1}\tus v_{n-2}$. First, we assume that $v_{n-1}=h$. If $h\tuo v_{n-2}$ or $h\tut v_{n-2}$, then we would have $I_h\tuo v_{n-2}$ or $I_h\tut v_{n-2}$ correspondingly, which contradicts the fact that the path $\pi$ is irreducible. For the case where $h\tuh v_{n-2}$, there must be colliders on $\pi$ by a similar argument to the claim in Case~2. A similar argument to before gives that the path $\pi$ is blocked by $\Bb_i^\prec\cup D^c$ at $v_k$ in $\Pf_{\Do(I_{H_1},D^c)}$ where $v_k$ is the first collider on $\pi$ after $h$. Therefore, we can exclude this case. Second, we assume that $v_{n-1}\ne h$. Note that node $h$ must be adjacent to node $v_{n-1}$ and the edge between them cannot have an arrowhead on $h$. Therefore, $v_{n-1}\in \Bb_j$ for some $j>i$ and similar to before there must be a blocked collider on $\pi(v_0,v_{n-1})$ given $\Bb_i^\prec\cup D^c$ in $\Pf_{\Do(I_{H_1},D^c)}$. So, the path $\pi$ cannot be open given $\Bb_i^\prec\cup D^c$ in $\Pf_{\Do(I_{H_1},D^c)}$. 
    
    This completes the proof.
\end{proof}

\begin{lemma}[Conditional independence~III]\label{lem:IDP_sound_ci_3}
    Let $\Pf=(\Ic,\Vc,\Ec)$ be an iSOPAG, $D\subseteq \Vc$ , and $A\subseteq D$. Let $\Bb_1\prec \cdots \prec \Bb_m$ be a topological order of $\Pf_D$-buckets. Set $R_1\coloneqq \re_{\Pf_D}(A)$ and $R_2\coloneqq \re_{\Pf_D}(D\sm R_1)$. Fix an arbitrary $\Bb_i$ and write $H_j\coloneqq \Bb_i^\prec\sm R_j$ and $T_j\coloneqq (R_j\cup \Bb_i^\preceq)^c$ for $j=1,2$. Then:
    \begin{enumerate}[itemsep=0pt,label=(\roman*)]
        \item if $\Bb_i\subseteq R_1$, it holds $\Bb_i\sep{\mathsf{id}}{\Pf_{\Do(I_{H_1},T_1)}} I_{H_1}\mid \Bb_i^\prec\cup T_1 \cup \Ic$;
        \item if $\Bb_i\subseteq R_2$, it holds $\Bb_i\sep{\mathsf{id}}{\Pf_{\Do(I_{H_2},T_2)}} I_{H_2}\mid \Bb_i^\prec\cup T_2\cup \Ic$.
    \end{enumerate}
\end{lemma}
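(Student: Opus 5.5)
I would mirror the proof of \cref{lem:IDP_sound_ci_2} and argue by contradiction. I take (i); (ii) is symmetric after swapping $R_1$ and $R_2$. Because $\Ic$ is always conditioned on, I may assume $\Ic=\emptyset$. The case $H_1=\emptyset$ is trivial, so assume $H_1\neq\emptyset$. Suppose
\[
\pi:\Bb_i\ni v_0\sus v_1\sus\cdots\sus v_{n-1}\sus I_h,\qquad h\in H_1=\Bb_i^\prec\sm R_1,
\]
is an irreducible open path given $\Bb_i^\prec\cup T_1$ in $\Pf_{\Do(I_{H_1},T_1)}$. Since $T_1$ is hard-manipulated and $\pi$ avoids $T_1$, every node of $\pi$ other than $I_h$ lies in $R_1\cup\Bb_i^\preceq$. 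The difference from \cref{lem:IDP_sound_ci_2} is that $h$ now lies \emph{before} $\Bb_i$ and belongs to the conditioning set. Consequently $h$ can only be an open collider on $\pi$, or it is not on $\pi$ at all.

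First I split according to whether $v_{n-1}=h$. If $v_{n-1}=h$, then $h\in\Bb_i^\prec$ is conditioned on, so it must be a collider: $I_h\tuh h\hus v_{n-2}$. By \cref{def:int_pag} this requires an arrowhead into $h$ in $\Pf$, giving an edge $v_{n-2}\suh h$. If $v_{n-1}\neq h$, then the edge $I_h\sus v_{n-1}$ comes from an edge $h\tuh v_{n-1}$ (invisible), $h\ouh v_{n-1}$, $h\ouo v_{n-1}$, $h\tut v_{n-1}$, or one of the circle variants.

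Second, I use that $h\notin R_1=\re_{\Pf_D}(A)$ while $v_0\in\Bb_i\subseteq R_1$, and I trace the first point where $\pi$ leaves $R_1$. The tool is \cref{lem:region}: for $c\in R_1$ and $b\in D$, an edge $b\ouh c$, an invisible $b\tuh c$, or $b\huh c$ forces $b\in R_1$. So the segment of $\pi$ in $R_1$ can only be exited through an edge with a tail or circle at the $R_1$ end, or through a visible directed edge out of $R_1$. In the case $v_{n-1}=h$, take the node $v_k\in R_1$ preceding the first node outside $R_1$ (possibly $h$ itself). The exit edge is then out of $v_k$ or visible, so $v_k$ is a non-collider. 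I then show, by the same chain argument as in Case~2 of \cref{lem:IDP_sound_ci_2}, that the next segment is directed away from $R_1$ until it reaches a collider. That collider lies after $\Bb_i$ in the topological order, or it is $h$ reached through a visible edge. In the latter case, \cref{def:int_pag} gives $I_h$ no edge of the required kind, which contradicts the form of the last edge. In the former case, the collider is not a possible ancestor of $\Bb_i^\prec\cup T_1$ within $\Pf_{\Do(I_{H_1},T_1)}$, because $T_1$ consists of later buckets with incoming edges removed and a bucket cannot be a possible ancestor of an earlier one (\cref{lem:topo_order}). Hence $\pi$ is blocked. The case $v_{n-1}\neq h$ reduces to this one: I check that $v_{n-1}$ is adjacent to $h$ with an arrowhead or circle at $v_{n-1}$ and that $v_{n-1}\notin R_1$. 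If instead $v_{n-1}\in R_1$, then the edge from $h$ is non-visible into $R_1$, and \cref{lem:region} gives $h\in R_1$, a contradiction. In both cases, open non-colliders outside $\Bb_i^\prec$ must be later buckets, and the first collider after leaving $R_1$ blocks $\pi$.

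I expect the main obstacle to be the exit step for $h$ itself, i.e.\ showing that an open collider at $h\in\Bb_i^\prec$ cannot be entered from $R_1$ by a path that is definitely open. That requires combining \cref{lem:region} with the visibility rules of \cref{def:int_pag}, namely that $I_h$ attaches only along invisible or circle edges. I would settle it first by a careful case analysis on the mark at $h$ of the edge $v_{n-2}\sus h$, using \cref{lem:property_visible} and \cref{lem:head_circle} to propagate non-visibility along bucket circle edges, as in the proof of \cref{lem:region}.
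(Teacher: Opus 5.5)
Your proposal has a genuine gap at the step where you locate the block.

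\textbf{The localization contradicts your blocking step.} You correctly note that every node of $\pi$ other than $I_h$ lies in $R_1\cup\Bb_i^\preceq$. Since $\Bb_i\subseteq R_1$, every node of $\pi$ outside $R_1$ therefore lies in $\Bb_i^\prec$. Such a node is conditioned on, so it must be an open collider. In particular, the first node $v_{k+1}\notin R_1$ is already such a collider. Hence:
\begin{enumerate}
\item There is no ``segment directed away from $R_1$'' passing through non-colliders outside $R_1$.
\item The collider you reach lies \emph{before} $\Bb_i$, not after it. Being in $\Bb_i^\prec$, it is trivially an ancestor of the conditioning set. So your contradiction (``not a possible ancestor of $\Bb_i^\prec\cup T_1$'') is unavailable.
\end{enumerate}
You have carried the shape of Case~2 of \cref{lem:IDP_sound_ci_2}, where $h$ comes after $\Bb_i$, into a setting where $h$ comes before it.

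\textbf{Two smaller errors.}
\begin{enumerate}
\item The inference ``the exit edge is out of $v_k$ or visible, so $v_k$ is a non-collider'' is wrong. \cref{lem:region} leaves a tail or circle at $v_k$, or a \emph{visible} $v_{k+1}\tuh v_k$, and the latter points into $v_k$. That case is harmless only because the tail at $v_{k+1}\in\Bb_i^\prec$ blocks there.
\item Your reduction of the case $v_{n-1}\neq h$ omits $h\tut v_{n-1}$ and $h\out v_{n-1}$, which give $I_h\tut v_{n-1}$. For circle--circle edges to $h$ you also need that regions are unions of $\Pf_D$-buckets, rather than \cref{lem:region}. Both points are easy to fix.
\end{enumerate}

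\textbf{The missing idea: block at $v_k$, not beyond $v_{k+1}$.}
\begin{enumerate}
\item Since $v_{k+1}\in\Bb_i^\prec$ must be an open collider, the edge between $v_k$ and $v_{k+1}$ carries an arrowhead at $v_{k+1}$. The $I_d$-variants of the collider rule allow a circle only on the side facing $I_h$.
\item \cref{lem:region} excludes $v_k\huh v_{k+1}$. A same-bucket edge is also impossible, because $R_1$ is a union of buckets. So the edge is $v_k\tuh v_{k+1}$ or $v_k\ouh v_{k+1}$, with the two endpoints in different buckets.
\item Thus $v_k$ is a possible parent of a node in $\Bb_i^\prec$. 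The topological order then forces $v_k\in\Bb_i^\prec$; in particular $k\neq 0$, since $v_0\in\Bb_i$.
\item So $v_k$ is in the conditioning set but carries a tail or circle on $\pi$. It is not an open collider, and $\pi$ is blocked at $v_k$.
\end{enumerate}

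This is the forward version of the paper's argument, which essentially walks backward from $I_h$:
\begin{itemize}
\item The node next to $I_h$ is shown to lie outside $R_1$, hence in $\Bb_i^\prec$, hence to be a collider.
\item A non-bidirected incoming collider edge makes its predecessor a possible ancestor of $\Bb_i^\prec$, and thus a conditioned non-collider.
\item A bidirected one propagates ``$\notin R_1$'' backward via \cref{lem:region}, until this contradicts $v_0\in R_1$.
\end{itemize}
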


\begin{proof}[Proof of \cref{lem:IDP_sound_ci_3}]
    We show the first statement. The second can be argued similarly. The case where $H_1=\emptyset$ is trivial so we assume $H_1\ne \emptyset$. We assume $\Ic=\emptyset$ because $\Ic$ is always conditioned on. Assume for contradiction that there exists an irreducible open path \[\pi:v_0\sus v_1\cdots \sus v_{n-1}\sut I_h\] from $\Bb_i$ to $I_{H_1}$ given $\Bb_i^\prec\cup T_1$ in $\Pf_{\Do(I_{H_1},T_1)}$. There are two cases: 
    \begin{enumerate}[label=\textbf{Case~\arabic*:}, ref=Case~\arabic*, leftmargin=*]
        \item $v_{n-1}=h$ and
        \item $v_{n-1}\ne h$.
    \end{enumerate}

    \textbf{Case 1}. In this case, we have $I_h\tus h\sus v_{n-2}$. First note that we cannot have $h\ous v_{n-2}$, $h\tut v_{n-2}$, or $h\tuo v_{n-2}$ by an argument similar to the proof of Cases~2 and 3 of \cref{lem:IDP_sound_ci_2}. It is impossible to have $h\tuh v_{n-2}$, since $h\in H_1$ and therefore in this case the path $\pi$ is blocked by $\Bb_i^\prec\cup T_1$. We now assume that we have $h\hus v_{n-2}$ in $\Pf_{\Do(I_{H_1},T_1)}$. If we have $h\hut v_{n-2}$ or $h\huo v_{n-2}$ on $\pi$ in $\Pf_{\Do(I_{H_1},T_1)}$, then $v_{n-2}\in \Bb_i^\prec$ and $v_{n-2}$ is a non-collider on $\pi$ in $\Pf_{\Do(I_{H_1},T_1)}$. So, this case can be excluded. We are left with the case where we have $h\huh v_{n-2}$ in $\Pf_{\Do(I_{H_1},T_1)}$. If $v_{n-2}\in \Bb_i$, then we have $h\in R_1$, which contradicts the fact that $h\in H_1$. Therefore, we have $v_{n-2}\notin \Bb_i$. There must be colliders on $\pi(v_0,v_{n-1})$. Indeed, if $v_{n-2}\in \Bb_i^\prec$, then $v_{n-2}$ must be a collider on $\pi$ so that path $\pi$ is open given $\Bb_i^\prec\cup T_1$ in $\Pf_{\Do(I_{H_1},T_1)}$. If $v_{n-2}\in \Bb_j$ for some $j>i$, then $\pi(v_0,v_{n-2})$ cannot be a potentially anterior path from $v_{n-2}$ to $v_0$, since $v_0\in \Bb_i$. This also implies that there must be colliders on $\pi(v_0,v_{n-1})$.  Let $v_i$ be the first collider on $\pi$ after node $v_0$. Having $v_i\in \anc_{\Pf_{\Do(I_{H_1},T_1)}}(\Bb_i^\prec)$ implies that $\Bb_i\ni v_0\in \pant_{\Pf_{\Do(I_{H_1},T_1)}}(\Bb_i^\prec)$, which causes a contradiction. In summary, we have excluded the case where $v_{n-1}=h$.

    \textbf{Case 2}. Since $h\in \Bb_i^\prec\sm R_1$ and $v_0\in \Bb_i$, \cref{def:int_pag,lem:region} imply $n\geq 2$. We have two subcases: 
    \begin{enumerate}[label=\textbf{Case~2.\arabic*:}, ref=Case~\arabic*, leftmargin=*]
        \item  $v_{n-1}$ is a non-collider on $\pi$;
        \item $v_{n-1}$ is a collider on $\pi$. 
    \end{enumerate}
    
    \textbf{Case 2.1}. In this case, we first consider the subcase where $I_h\tut v_{n-1}$ is in $\Pf_{\Do(I_{H_1},T_1)}$. By the definition of $\Pf_{\Do(I_{H_1},T_1)}$, there must be an undirected edge $h\tut v_{n-1}$ in $\Pf_{\Do(I_{H_1},T_1)}$. Therefore, $v_{n-1}\in \Bb_i^\prec$. It means that the path $\pi$ cannot be open given $\Bb_i^\prec\cup T_1$ in $\Pf_{\Do(I_{H_1},T_1)}$, so this case cannot occur. If we have $I_h\tuo v_{n-1}$, then by the definition of $\Pf_{\Do(I_{H_1},T_1)}$ there must be $h\tuo v_{n-1}$ or $h\ouo v_{n-1}$ or $h\out v_{n-1}$ in $\Pf_{\Do(I_{H_1},T_1)}$. This implies that $v_{n-1}\in \Bb_i^\prec$ and therefore the path $\pi$ cannot be open given $\Bb_i^\prec\cup T_1$ in $\Pf_{\Do(I_{H_1},T_1)}$. So this case cannot occur. We then assume that $I_h\tuh v_{n-1}$ is in $\Pf_{\Do(I_{H_1},T_1)}$. Then there must be $\Pf$-invisible $h\tuh v_{n-1}$ or $h\ouh v_{n-1}$ in $\Pf_{\Do(I_{H_1},T_1)}$. Since $h\notin R_1$, we have that $v_{n-1}\notin R_1$ by Lemma~\ref{lem:region}. Note that the cases $v_{n-1}\tuo v_{n-2}$ and $v_{n-1}\tut v_{n-2}$ are excluded by the fact that the edge between $h$ and $v_{n-1}$ has an arrowhead on $v_{n-1}$. Since node $v_{n-1}$ is a non-collider on path $\pi$, we have $v_{n-1}\tuh v_{n-2}$.  We have $v_{n-1}\notin \Bb_i^\preceq$, since otherwise $v_{n-1}\in \Bb_i^\prec$ implies that the path $\pi$ is blocked by $\Bb_i^\prec\cup T_1$ in $\Pf_{\Do(I_{H_1},T_1)}$. Also note that $v_{n-2}\ne v_0\in \Bb_i$, since $v_{n-1}\in \Bb_i^\succ$ and $v_{n-1}\tuh v_{n-2}$ is present. Then there exists a collider on the subpath $\pi(v_0,v_{n-1})$. Similar to previous argument, we can conclude that this would lead to a contradiction and therefore this case cannot occur.

    \textbf{Case 2.2}. In the following we assume that $v_{n-1}$ is a collider on $\pi$. This implies $v_{n-2}\suh v_{n-1}$. If the edge between $v_{n-1}$ and $v_{n-2}$ is not bidirected, then $v_{n-2}\in \Bb_i^\prec$ since $v_{n-1}\in \pan(\Bb_i^\prec)$ and $v_{n-2}\in \pan(v_{n-1})$. In this case, node $v_{n-2}$ can only be a non-collider on $\pi$, which is blocked by $\Bb_i^\prec\cup T_1$ in $\Pf_{\Do(I_{H_1},T_1)}$. So we can exclude this case as well. Assume we have $v_{n-1}\huh v_{n-2}$. If we have $h\tuh v_{n-1}$ $\Pf$-invisible or $h\ouh v_{n-1}$, then by \cref{lem:region}, $v_{n-1}\notin R_1$ and therefore $v_{n-2}\notin R_1$. If we have $h\ouo v_{n-1}$, then $v_{n-1}\notin R_1$ since $v_{n-1}$ is in the same bucket with $h$ and $h\notin R_1$. Again, \cref{lem:region} implies $v_{n-2}\notin R_1$. Note that $\Bb_i\subseteq R_1$ and therefore $v_{n-2}\notin \Bb_i$. If $v_{n-2}\in \Bb_i^{\succ}$, then there is a collider on the subpath $\pi(v_0,v_{n-1})$. Similar to previous argument, we can exclude this case. If $v_{n-2}\in \Bb_i^\prec$, then we know that node $v_{n-2}$ has to be a collider on path $\pi$, since $\pi$ is open given $\Bb_i^\prec\cup T_1$ in $\Pf_{\Do(I_{H_1},T_1)}$. Then a similar argument to the above part of Case~2.2 can be applied to $v_{n-3}\suh v_{n-2}\huh v_{n-1}$. It implies that $v_{n-4}\huh v_{n-3}$ and $v_{n-4}\notin R_1$ and $v_{n-4}$ is a collider on $\pi$. Repeating the argument implies that $\pi$ is infinite, which is a contradiction to the fact that our path $\pi$ is finite.
    
    We have excluded all possible cases, and therefore can conclude  
    \[
        \Bb_i \sep{\mathsf{id}}{\Pf_{\Do(I_{H_1},T_1)}} I_{H_1}\mid \Bb_i^\prec \cup T_1.
    \]
\end{proof}

\begin{lemma}[Conditional independence~IV]\label{lem:IDP_sound_ci_IV}
Let $\Pf=(\Ic,\Vc,\Ec)$ be an iSOPAG. Let $D\subseteq \Vc$ and $\Ab\subseteq D$ be a $\Pf_D$-bucket. Write $D^c\coloneqq \Vc\sm D$ and $\pde_{\Pf_D}(\Ab)^c\coloneqq \Vc\sm \pde_{\Pf_D}(\Ab)$. If $\pcc_{\Pf_D}(\Ab)\cap \pde_{\Pf_D}(\Ab)\subseteq \Ab$, then we have 
\[
     D\sm \pde_{\Pf_D}(\Ab)\sep{\mathsf{id}}{\Pf_{\Do(I_\Ab,D^c)}} I_{\Ab}\mid D^c\cup \Ic, \quad \text{and}\quad \pde_{\Pf_D}(\Ab)\sm \Ab\sep{\mathsf{id}}{\Pf_{\Do(I_\Ab,D^c)}} I_{\Ab}\mid \pde_{\Pf_D}(\Ab)^c\cup \Ab\cup \Ic.
\]
\end{lemma}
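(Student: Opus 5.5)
Both separations will be proved by contradiction, following the pattern of \cref{lem:IDP_sound_ci_2,lem:IDP_sound_ci_3}. Since $\Ic$ is conditioned on, we may assume $\Ic=\emptyset$. Throughout, write $D^+\coloneqq\pde_{\Pf_D}(\Ab)$. By \cref{lem:pde_vs_pdet}, $D^+=\pdet_{\Pf_D}(\Ab)$. We fix a topological order of $\Pf_D$-buckets (\cref{lem:topo_order}) in which $\Ab$ precedes every bucket of $D^+\sm\Ab$.

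\textbf{First separation.} Suppose $\pi\colon v_0\sus\cdots\sus v_{n-1}\sus I_a$ is an irreducible open path from $D\sm D^+$ to $I_\Ab$ given $D^c$ in $\Pf_{\Do(I_\Ab,D^c)}$, with $a\in\Ab$. Every node of $D^c$ is hard-manipulated and conditioned on, so $\pi$ runs inside $D$. By \cref{def:int_pag}, every edge at $I_a$ points into $a$ or into a possible child of $a$, and these lie in $D^+$. Hence $v_{n-1}\in D^+$ while $v_0\notin D^+$.

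Consider the last node $v_k$ of $\pi$ lying outside $D^+$. The edge $v_k\sus v_{k+1}$ cannot be potentially directed from $v_{k+1}$ to $v_k$, since $D^+$ is closed under possible descendants. It therefore has an arrowhead at $v_{k+1}$ (or at least no tail there). Following $\pi$ toward $I_a$, the first change of orientation creates a collider $w\in D^+$. Such a collider is open only if $w\in\pan(D^c)$. But $D^c$ consists of hard-manipulated nodes, which have no incoming edges in $\Pf_{\Do(I_\Ab,D^c)}$, so this is impossible unless $w\in D^c$, which contradicts $w\in D$. The case analysis at $v_{n-1}$ (whether $v_{n-1}=a$ or not, and whether the mark on $I_a\sus v_{n-1}$ is a tail, circle or arrowhead) mirrors Cases 1--3 of \cref{lem:IDP_sound_ci_2}. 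Circle configurations at $v_{n-1}$ are ruled out by \ref{sopag:p2} and $\fci{6}$.

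\textbf{Second separation.} Suppose $\pi$ is an irreducible open path from $D^+\sm\Ab$ to $I_a$ given $(D^+)^c\cup\Ab$. Nodes of $\Ab$ are conditioned on, so they can occur on $\pi$ only as colliders. Nodes outside $D^+$ are also conditioned on, so they too must be colliders, and moreover the open-collider condition must hold at each of them.

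The key step is the following claim: the segment of $\pi$ just before the final node $v_{n-1}$ (the neighbour of $I_a$) must be a collider path through nodes into which arrowheads point. Such a path is a pc-connecting path in $\Pf_D$ linking $\Ab$ to a node of $D^+\sm\Ab$. Here \cref{lem:property_visible} is used to see that the relevant edges are invisible, and \cref{lem:region} or \cref{lem:prop_bucket} to propagate edges across buckets. Consequently some node of $D^+\sm\Ab$ lies in $\pcc_{\Pf_D}(\Ab)$, contradicting the hypothesis $\pcc_{\Pf_D}(\Ab)\cap\pde_{\Pf_D}(\Ab)\subseteq\Ab$.

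To establish the claim, note that if $\pi$ contains a non-collider $w\in D^+\sm\Ab$ with a tail or circle mark, then either $w$ is followed by a directed segment into $(D^+)^c$, which is impossible by closure of $D^+$, or $w$ begins a segment forcing a blocked collider. We then consider the edges at $I_a$: an edge $I_a\tuh v_{n-1}$ comes from $a\tuh v_{n-1}$ invisible or from $a\ouh v_{n-1}$, and either of these supplies the first pc-edge.

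\textbf{Main obstacle.} The hardest part is the second separation, specifically showing that every non-collider on $\pi$ lying in $D^+\sm\Ab$ leads to a contradiction. This requires careful handling of circle marks and of the possible-ancestor clause for colliders adjacent to $I_a$. For this I would reuse the \ref{sopag:p2}/\ref{sopag:p3} arguments from \cref{lem:IDP_sound_ci_3}, Case~2.
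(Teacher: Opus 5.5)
Your first separation is essentially the paper's Step~1. For the second separation your plan is a genuinely different and viable route: you derive directly a pc-connecting path from $\Ab$ to a node of $D^+\sm\Ab$, whereas the paper invokes Left Contraction to reduce to the bucket-wise statements $\Bb_i\perp I_\Ab\mid \Bb^{(i-1)}\cup\Ab\cup D^c$ and then analyses a shortest path. However, the justification you sketch for your key claim does not work.

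The dichotomy you propose for a non-collider $w\in D^+\sm\Ab$ is ``followed by a directed segment into $(D^+)^c$, or forcing a blocked collider''. It is not exhaustive. Take a stretch $x\hut w\huh c\huh a\hut I_a$ with $c\in D\sm D^+$ and $a\in\Ab$. Both $c$ and $a$ are open colliders, since they are conditioned on, and $w$ is followed by a bidirected edge. So neither alternative occurs, and the only available contradiction is the very pc-connection you are trying to prove. Non-colliders in $D^+\sm\Ab$ never need to be excluded, so the ``main obstacle'' you name is a red herring, while the fact that actually drives the argument is missing. Likewise, an invisible $a\tuh v_{n-1}$ or $a\ouh v_{n-1}$ does not ``supply the first pc-edge''. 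Under the hypothesis it already places $v_{n-1}$ in $\pcc_{\Pf_D}(\Ab)\cap\pde_{\Pf_D}(\Ab)\sm\Ab$.

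The repair is short.

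(a) Consider a neighbour $b\in D\sm\Ab$ of $a\in\Ab$. An arrowhead-free edge would put $b$ in the bucket $\Ab$. An edge $a\ouh b$, or a $\Pf_D$-invisible $a\tuh b$ (equivalently, one invisible in $\Pf$), would put $b$ in $\pcc_{\Pf_D}(\Ab)\cap\pde_{\Pf_D}(\Ab)\sm\Ab$. Hence $\Ec_4$--$\Ec_7$ of \cref{def:int_pag} give $I_a$ no neighbour in $D\sm\Ab$; edges to $D^c$ are deleted by the hard manipulation. So $v_{n-1}\in\Ab$.

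(b) Let $v_j$ be the last node of $\pi$ in $D^+\sm\Ab$. Every later node lies in $D$, because nodes of $D^c$ are conditioned-on input non-colliders and would block. So every later node lies in $(D\sm D^+)\cup\Ab$, is conditioned on, and must be a collider. Hence $v_{j+1}\huh\cdots\huh v_{n-1}$ and $v_j\suh v_{j+1}$.

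(c) The mark at $v_j$ on that edge is an arrowhead. If $v_{j+1}\in D\sm D^+$, a tail or circle would put $v_{j+1}$ in $\pdet_{\Pf_D}(\Ab)=\pde_{\Pf_D}(\Ab)$ by \cref{lem:pde_vs_pdet}. If $v_{j+1}\in\Ab$, it would give a potentially directed edge from the bucket of $v_j\in\pde_{\Pf_D}(\Ab)$ back into $\Ab$, which (the proof of) \cref{lem:topo_order} excludes.

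Therefore $v_j\huh v_{j+1}\huh\cdots\huh v_{n-1}$ is a bidirected, hence pc-connecting, path in $\Pf_D$. This gives $v_j\in\pcc_{\Pf_D}(\Ab)\cap\pde_{\Pf_D}(\Ab)\sm\Ab$, the desired contradiction. Neither \cref{lem:property_visible} nor \cref{lem:region} is needed, because the chain is bidirected. Once repaired, your direct argument is shorter than the paper's and avoids both Left Contraction and the bucket-ordered conditioning sets.
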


\begin{proof}[Proof of \cref{lem:IDP_sound_ci_IV}]
First note that WLOG we can assume that $\Ic=\emptyset$. Let 
\[
\underbrace{\Bb_1\prec \cdots \prec \Bb_{l-1}}_{\Ab^\prec} \prec \Ab \prec \underbrace{\Bb_{l+1} \prec \cdots \prec \Bb_\ell}_{\Ab^\succ}
\]
be a topological order of $\Pf_D$-buckets such that $\bigcup_{i\ge l}\Bb_i= \pde_{\Pf_D}(\Ab)$. Write $\Bb^{(i)}\coloneqq \bigcup_{j\leq i}\Bb_j\sm \Ab$. The desired separations are equivalent to: 
\[
     \Ab^{\prec}\sep{\mathsf{id}}{\Pf_{\Do(I_\Ab,D^c)}} I_{\Ab}\mid D^c\cup \Ic, \quad \text{ and }\quad \Ab^{\succ}\sep{\mathsf{id}}{\Pf_{\Do(I_\Ab,D^c)}} I_{\Ab}\mid \Ab^{\preceq}\cup D^c\cup \Ic.
\]

\textbf{Step~1: show the first separation}. We show the first separation. Assume for contradiction that there is an irreducible open path $\pi$ from $\Bb_i\subseteq \Ab^{\prec}$ to $I_\Ab$ given $D^c$ in $\Pf_{\Do(I_\Ab,D^c)}$. Note that the path $\pi$ must contain a collider. Otherwise, similar to the argument in \cref{lem:IDP_sound_ci_I}, we can assume WLOG that $\pi$ is a potentially anterior path from $I_\Ab$ to $\Bb_i$ since there are only tails to input nodes. From \cref{def:int_pag,lem:pde_vs_pdet}, this contradicts the fact that $\Bb_i$ does not contain any possible descendants of $\Ab$. Colliders on $\pi$ cannot have potentially or definitely directed paths to $D^c$, since $D^c$ is hard-manipulated in $\Pf_{\Do(I_{\Ab},D^c)}$. Therefore, $\pi$ must be blocked by $D^c$. This shows the first separation.

\textbf{Step~2: show the second separation}. We show the second separation. By the Left Contraction rule of $id$-separation \cite[Theorem~5.11]{forre2021transitional}, to derive 
\[
    \Ab^{\succ}\sep{\mathsf{id}}{\Pf_{\Do(I_\Ab,D^c)}} I_{\Ab}\mid \Ab^{\preceq}\cup D^c,
\] 
it suffices to show for every $\Bb_i\subseteq \Ab^{\succ}$
\[
    \Bb_i \sep{\mathsf{id}}{\Pf_{\Do(I_\Ab,D^c)}} I_\Ab \mid \Bb^{(i-1)}\cup \Ab\cup D^c.
\]

Assume for contradiction that there is a shortest open path \[\pi:b\sus v_1\sus \cdots \sus v_{n-1}\sut I_a\] from $\Bb_i\subseteq \Ab^{\succ}$ to $I_\Ab$ given $\Bb^{(i-1)}\cup D^c \cup \Ab$ in $\Pf_{\Do(I_\Ab,D^c)}$. 

\textbf{Step~2.1: show $n\ge 2$ and $v_{n-2}\notin \Ab$}. Since $\pcc_{\Pf_D}(\Ab)\cap \pde_{\Pf_D}(\Ab)\subseteq \Ab$, if we have that node $a$ is adjacent to node $b$ for $a\in \Ab$ and $b\in \Vc\sm \Ab$ then we have $a\hus b$ or visible edge $a\tuh b$ in $\Pf$. This implies that  we have $n\geq 2$, i.e., node $I_a$ cannot be adjacent to node $b$, by the fact that $b\notin \Ab$ and the definition of $\Pf_{\Do(I_\Ab,D^c)}$. We show $v_{n-2}\notin \Ab$. For that, assume on the contrary that $v_{n-2}\in \Ab$. Since $\pi$ is open given $\Bb^{(i-1)}\cup D^c \cup \Ab$ in $\Pf_{\Do(I_\Ab,D^c)}$, node $v_{n-2}$ must be a collider on $\pi$. Therefore, the edge between nodes $I_{v_{n-2}}$ and $v_{n-2}$ must be $I_{v_{n-2}}\tuh v_{n-2}$ and $v_{n-2}$ is a collider on the path $\pi(b,v_{n-2})\oplus (v_{n-2},I_{v_{n-2}})$ and therefore it is open given $\Bb^{(i-1)}\cup D^c \cup \Ab$ in $\Pf_{\Do(I_\Ab,D^c)}$ and shorter than $\pi$. This contradicts the construction of $\pi$ being shortest and therefore $v_{n-2}\notin \Ab$.

\textbf{Step~2.2: show $n>2$ and $v_{n-2}\in \Ab^{\prec}$}. Since $\pcc_{\Pf_D}(\Ab)\cap \pde_{\Pf_D}(\Ab)\subseteq \Ab$, we have $v_{n-1}\in \Ab$ by \cref{def:int_pag}. Note that $v_{n-1}$ cannot be a non-collider since we condition on $\Bb^{(i-1)}\cup D^c \cup \Ab$. So we have $v_{n-1}\hus v_{n-2}$. If we have $v_{n-1}\hut v_{n-2}$ or $v_{n-1}\huo v_{n-2}$, then $v_{n-2}\in \Ab^{\prec}$ and therefore $v_{n-2}\notin \Bb_i$ since $\Bb_i\subseteq \Ab^{\succ}$. If we have $v_{n-1}\huh v_{n-2}$, then $v_{n-2}\in \pcc_{\Pf_D}(\Ab)$. It implies  $v_{n-2}\notin \pde_{\Pf_D}(\Ab)$, since $v_{n-2}\notin \Ab$ and $\pcc_{\Pf_D}(\Ab)\cap \pde_{\Pf_D}(\Ab)\subseteq \Ab$. Therefore, $v_{n-2}\in \Ab^{\prec}$ and $v_{n-2}\notin \Bb_i$. This implies that the path $\pi$ has non-endnodes between nodes $v_{n-1}$ and $b$, i.e., $n> 2$. 

\textbf{Step~2.3: show $\pi$ blocked}. We first consider the case where all the nodes between $b$ and $v_{n-1}$ are colliders, i.e., $b \suh v_1\huh \cdots \huh v_{n-2}\hus v_{n-1}$. Note that in this case, since $v_{n-2}\in \Ab^{\prec}$ and $v_{n-1}\in \Ab$, the edge between $v_{n-2}$ and $v_{n-1}$ must be $v_{n-2}\huh v_{n-1}$. If $b\huh v_1$, then $b\in \pcc_{\Pf_D}(\Ab)$ since we have $b\huh \cdots \huh v_{n-1}$ in $\Pf_D$ and $v_{n-1}\in \Ab$. Recall that $b\in \Bb_i \subseteq \Ab^{\succ}=\pde_{\Pf_D}(\Ab)$. This means that $b\in \pcc_{\Pf_D}(\Ab) \cap \pde_{\Pf_D}(\Ab)$ while $b\notin \Ab$, which contradicts $\pcc_{\Pf_D}(\Ab) \cap \pde_{\Pf_D}(\Ab)\subseteq \Ab$. If we have $b\tuh v_1$ or $b\ouh v_1$, then $v_1\in \Ab^{\succ}$ and $v_1\in \pde_{\Pf_D}(\Ab)$. So, $v_1\in \pcc_{\Pf_D}(\Ab) \cap \pde_{\Pf_D}(\Ab)$ but $v_1\notin \Ab$. This again contradicts $\pcc_{\Pf_D}(\Ab) \cap \pde_{\Pf_D}(\Ab)\subseteq \Ab$. Hence, $\pi$ contains a non-collider $v_j$ for some $1\leq j \leq n-2$. Since $\pi$ is open at $v_j$ given $\Bb^{(i-1)}\cup D^c \cup \Ab$, we have $v_j\in \Bb_k$ for some $k>i$. We shall show that there must be a collider $u$ on $\pi(b,v_{n-1})$ such that $u\in \Bb_t$ for some $t>i$. If this is proved, then we can conclude that the path $\pi$ is blocked at $u$ by $\Bb^{(i-1)}\cup D^c \cup \Ab$ in $\Pf_{\Do(I_\Ab,D^c)}$ since there cannot be a directed path from node $u$ to $\Bb^{(i-1)}\cup D^c \cup \Ab$. 

Indeed, let $v_{r}$ and $v_s$ be the most left and right nodes in buckets after $\Bb_i$ on $\pi(b,v_{n-1})$ respectively, i.e., 
\[r\coloneqq \min\{t\mid \exists \Bb_m \text{ with }  m>i \text{ s.t.\ } v_t\in \Bb_m\} \text{ and } s\coloneqq \max\{t\mid \exists \Bb_m \text{ with }  m>i \text{ s.t.\ } v_t\in \Bb_m\}.\] Nodes $v_r$ and $v_s$ are well-defined by the existence of $v_j$ and finiteness of $\pi$ (it may be $r=s$). Assume that the target result does not hold, i.e., there are no colliders on $\pi(b,v_{n-1})$ such that $u\in \Bb_t$ for some $t>i$. By the definition of $v_s$, it holds $v_s\hus v_{s+1}$. Since $v_{s}$ is a definite non-collier on $\pi(b,v_{n-1})$, we have $v_{s-1}\hut v_s$. We can repeat the argument to $v_{s-q}$ for $1\le q \le s-r-1$ (if $s>r+1$) and deduce that $v_r\hus v_{r+1}$. Note that we must have $v_{r-1}\suh v_{r}$ by the definition of $v_r$. Then $v_r$ is a collider such that $v_r\in \Bb_t$ for some $t>i$. This contradicts the starting hypothesis, and hence proves the claim.

Overall, there cannot be an open path from $\Bb_i$ to $I_\Ab$ given $\Bb^{(i-1)}\cup D^c \cup \Ab$ in $\Pf_{\Do(I_\Ab,D^c)}$ and we can conclude for every $\Bb_i\subseteq \Ab^{\succ}$
\[
    \Bb_i \sep{\mathsf{id}}{\Pf_{\Do(I_\Ab,D^c)}} I_\Ab \mid \Bb^{(i-1)}\cup \Ab\cup D^c.
\]
This finishes the proof.

\end{proof}

\subsection[Proof of Theorem~\ref{thm:idp_complete}]{Proof of \cref{thm:idp_complete}}

We prove \cref{thm:idp_complete}, building on ideas from \cite{jaber19idencom,jaber22causal}. See \cref{fig:pf_structure_idp_complete} for an overall structure of the proof. We start with the following lemma, which provides a sufficient graphical criterion for non-identifiability in the PAG setting—analogous in form to \cite[Theorem~3]{tian02general}. 

\begin{lemma}\label{lem:noniden}
    Let $\Pf=(\emptyset,\Vc,\Ec)$ be a COPAG and $A,B\subseteq \Vc$ be disjoint subsets. If there exists a proper potentially anterior path \[\pi:B\ni v_0\sus \cdots \sus v_n\in A\] from $B$ to $A$ where $v_0\sus v_1$ is not a visible directed edge, then $\Prb_{\Mc}(X_A\mid X_{\Sc}=\1 \miid \Do(X_B))$ is not identifiable in $\Mb^{+}(\Pf)$.
\end{lemma}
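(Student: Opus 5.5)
The plan is to reduce non-identifiability in $\Mb^+(\Pf)$ to non-identifiability for a single, concretely constructed sADMG $\Af\in[\Pf]_\Gs$. Since the model class is a union over all represented graphs, it suffices to exhibit two s-SCMs in $\Mb^+_d(\Af)\subseteq\Mb^+(\Pf)$. These should agree on $\Prb(X_\Vc\mid X_\Sc=\1)$ but differ on $\Prb(X_A\mid X_\Sc=\1\miid\Do(X_B))$. The model pair will come from a hedge via \cref{prop:hedge}, or more directly from the classical bow-arc construction of \cite{tian02general}.

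\textbf{Step 1: choose a shortest path and a MAG.} I first replace $\pi$ by a shortest proper potentially anterior path with the same properties. Using \cref{def:SOPAG} and the argument in the proof of \cref{lem:podpath}, I may assume the tails and circles appear before any arrowheads along it. Then I apply the orientation scheme of \cref{lem:orientation_invisible}, with the $c_i$ chosen to include $v_0$, to obtain $\Mf\in[\Pf]_\Ms$ with the following properties. The first edge $v_0\sus v_1$ becomes either an invisible $v_0\tuh v_1$ or an undirected edge $v_0\tut v_1$. The rest of $\pi$ becomes an anterior path from $v_1$ to $A$. For this I orient the relevant circle components consistently, as in Case~4.2.2 of the proof of \cref{thm:sep_hsint_pag_mag}, where a single potentially directed path is made directed.

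\textbf{Step 2: build $\Af$.} I start from the canonical $\cat{isADMG}(\Mf)$, in which each undirected edge $u\tut w$ becomes $u\tuh s_{uw}\hut w$.
\begin{itemize}
\item If the first edge is an invisible $v_0\tuh v_1$, I add $v_0\huh v_1$. By \cref{lem:visible_edge_II} the result is still represented by $\Mf$.
\item If the first edge is $v_0\tut v_1$, I use the trick from Step~3 of \cref{lem:sint_mag_II}: replace it by $v_0\tuh v_1$ together with $v_0\huh s_{v_0v_1}\hut v_1$.
\end{itemize}
In either case $\Af$ contains a bow, that is, a directed edge from $v_0\in B$ into $v_1$ together with a latent confounding between $v_0$ and $v_1$ (possibly passing through a conditioned selection node). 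Moreover $v_1$ reaches $A$, or a selection node, through the remaining anterior part of $\pi$. Because $\pi$ is proper, no intermediate node lies in $A\cup B$.

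\textbf{Step 3: non-identifiability in $\Af$.} I take $\Hc=\{v_0\}\cup\{\text{nodes of }\pi\sm\{v_0\}\}$ and $\Hc'=\Hc\sm\{v_0\}$, and I expect these to form a hedge for $(A\cup(\Sc\sm D),B\cup D)$ in the sense of \cref{prop:hedge}. The tails along $\pi$ translate into directed paths ending in $A$ or in $\Sc$, and directed paths into $\Sc$ are exactly why the target is enlarged to $A\cup(\Sc\sm D)$. If the hedge bookkeeping for $D$ becomes awkward, I will instead construct two binary models directly. They agree everywhere except for a parity confounder on the bow, and the $v_1$-information is propagated along $\pi$ by copy mechanisms. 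The observational law given $X_\Sc=\1$ then coincides in both models while the interventional one differs. Finally, I smooth the construction by small noise to land in $\Mb^+_d$.

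\textbf{Main obstacle.} The delicate step is verifying that the set $D$ from \cref{prop:hedge} does not absorb the selection nodes that the anterior path passes through, since that would destroy the hedge. Equivalently, I must check that $A$ is not separated from the $I_s$ given $B\cup\Sc$. I would handle this by routing the path so that each selection node added in Step~2 hangs directly off an undirected edge of $\pi$. Then conditioning on it opens the path toward $A$, and the required non-separation can be read off using \cref{lem:sep_hint_mag}.
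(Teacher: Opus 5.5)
Your high-level route is the same as the paper's: orient $\Pf$ into some $\Mf\in[\Pf]_\Ms$ via \cref{sopag:p4} and \cref{lem:orientation_invisible}, build a represented $\Af$ with a bow on the first edge, and invoke \cref{prop:hedge}. However, Step~3 does not produce a hedge, and in the undirected case Step~2 rules one out.

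\emph{The proposed hedge fails.} By \cref{def:cforest}, both $\Gf^{\Hc}$ and $\Gf^{\Hc'}$ must lie in a single c-component, i.e.\ be connected by bidirected edges. Your $\Hc'=\{v_1,\dots,v_n\}$ is linked only by the directed edges of $\pi$ and by selection-mediated links $v_i\tuh s\hut v_{i+1}$. Ancestrality forbids bidirected edges among these nodes, and you add none, so for $n\ge 2$ this is not a C-forest.

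\emph{The undirected case is worse.} When the first edge is $v_0\tut v_1$, your Step~2 routes the confounding as $v_0\huh s_{v_0v_1}\hut v_1$. Then $v_0$ and $v_1$ do not share a c-component at all. Hence no hedge containing the bow exists in your $\Af$, and \cref{prop:hedge} cannot be applied.

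\emph{The fallback does not close the gap.} The ``parity confounder plus copy mechanisms'' construction is only a sketch. You would still have to verify equality of the laws conditioned on $X_\Sc=\1$ through every selection-mediated link of $\pi$. Adding noise afterwards generally destroys exact observational equivalence.

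The missing idea is to make the bow a genuine bidirected edge and to keep the hedge minimal.

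\emph{Add $v_0\huh v_1$ in both cases.} In the invisible case this is \cref{lem:visible_edge_II}. When $v_0\tut v_1$ is in $\Mf$ it is also harmless: endpoints of an undirected edge carry no arrowheads in a MAG, so no new $\Sc$-inducing paths or ancestral relations arise. Keep $v_0\tuh v_1$ and a selection child $v_0\tuh s_{v_0v_1}\hut v_1$ so the edge stays undirected in $\cat{MAG}(\Af)$.

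\emph{Take the minimal hedge.} Set $\Hc=\{v_0,v_1\}$ and $\Hc'=\{v_1\}$, with forests $\{v_0\tuh v_1,\,v_0\huh v_1\}$ and $\{v_1\}$, and root set $\{v_1\}$. The rest of $\pi$ is not part of the hedge. It only has to certify $v_1\in\anc_{\Af_{\Do(B\cup D)}}(A\cup(\Sc_\Af\sm D))$.

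\emph{Your ``main obstacle'' about $D$ disappears.} Instead of the canonical replacement $v_i\tuh s\hut v_{i+1}$, replace every undirected edge $v_i\tut v_{i+1}$ on $\pi$ by $v_i\tuh v_{i+1}$ together with $v_i\tuh s_{v_iv_{i+1}}\hut v_{i+1}$. The result is still represented by $\Mf$. Now $\pi$ is a directed path from $v_1$ to $A$ avoiding $B$, so $v_1\in\anc_{\Af_{\Do(B\cup D)}}(A)$ for every $D\subseteq\Sc_\Af$, since selection nodes have no children. This is exactly the paper's construction.
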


\begin{proof}[Proof of \cref{thm:idp_complete}]
Since $\idp(\Pf;A,B)$ outputs $\textsc{Fail}$, there exists $(C,T)$ with $\emptyset\neq C\subsetneq T\subseteq \Vc$ such that
\begin{enumerate}
    \item [(i)] for all buckets $\Bb\subsetneq C$ of $\Pf_{\Dc}$ we have $\re_{\Pf_C}(\Bb)=C$, and

    \item [(ii)] for all buckets $\Bb\subseteq T\setminus C$ of $\Pf_{\Dc}$ it holds that $\pcc_{\Pf_T}(\Bb)\cap\pde_{\Pf_T}(\Bb)\nsubseteq \Bb$.
\end{enumerate}
Note that if $\idp(\Pf;A,B)$ fails and there exists an irreducible potentially anterior path from $B$ to $A$ in $\Pf$ that starts with an edge that is not visible directed, then by \cref{lem:noniden} we have that $\Prb_{\Mc}(X_A\mid X_{\Sc}=\1 \miid \Do(X_B))$ is not identifiable and therefore we are done. Hence, in the following we can assume that \cref{cond:complete_I} holds.   

\begin{condition}\label{cond:complete_I}
    $\idp(\Pf;A,B)$ fails and every proper potentially anterior path from $B$ to $A$ in $\Pf$ starts with a visible directed edge.
\end{condition}


If $\idp(\Pf;A,B)$ fails but \cref{cond:complete_I} does not hold, then by contradiction one can prove that every $\Pf$-bucket is either entirely contained in $\Dc$ or entirely contained in $\Vc\sm \Dc$.\footnote{In the procedure of sIDP, the only operation that can split buckets is Rule~L0. Rule~L1 and Rule~L2 do not split buckets. Essentially, \cref{cond:complete_I} guarantees that the step of taking $\Dc=\pant_{\Pf_{\Vc\sm B}}(A)$ in sIDP will not split buckets.} To see this, we assume that this is not the case. Let $d\in \Dc$ and $c\in \Vc\sm \Dc$ be in the same $\Pf$-bucket. Let $\pi$ be a path connecting $d$ and $c$ and consisting of nodes in the same bucket of $d$ and $c$. If $\pi$ does not intersect $B$, then this would imply that $c\in \pant_{\Pf_{\Vc\sm B}}(A)=\Dc$, which contradicts the fact that $c\notin \Dc$. If $\pi$ intersects $B$, then we have a proper potentially anterior path from $B$ to $A$ in $\Pf$ that starts with an edge that is not visible directed. This contradicts \cref{cond:complete_I}. Hence, we can conclude: 

\begin{statement}[No buckets split]\label{state:no_bucket_split}
Assume \cref{cond:complete_I} holds.  Then Rule~L0 of sIDP does not split any buckets in $\Pf$. Note that Rule~L1 and Rule~L2 do not split buckets. Therefore, no buckets of $\Pf$ are split during the whole process of the sIDP, i.e., every $\Pf$-bucket contained in $T$ is either entirely in $C$ or entirely in $T\sm C$.  
\end{statement}   

This no-bucket-split observation allows us to construct a MAG $\Mf\in [\Pf]_\Ms$ by the following orientation scheme.

\begin{lemma}\label{lem:construct_MAG}

    Suppose the setting of \cref{thm:idp_complete} and \cref{cond:complete_I}. We can construct a MAG $\Mf\in[\Pf]_\Ms$ by the following procedures:
   
    \begin{enumerate}
        \item Orient $v\ouh u$ as $v\tuh u$ and $v\tuo u$ as $v\tut u$. Orient $v\ouo u$ as $v\tut u$ if there is no arrowhead into $v$ or $u$. Denote by $\Pf^{tag}$ the resulting graph.
        \item For circle component $\Cb\subseteq \Pf_{\Vc\sm T}^{tag}$, orient it into a DAG without unshielded colliders.
        \item For each $\Pf$-bucket $\Bb$ contained in $T\sm C$, pick one $t\in \Bb$ such that $t\in \pcc_{\Pf_T}(z)$ where $z\in \pch_{\Pf_T}(t)$ and $z\notin \Bb$. Then orient the circle component in $\Pf_{T\sm C}^{tag}$ contained in $\Bb$ into a DAG without unshielded colliders such that for any $t\ouo b$ (if any) where $b\in \Bb$, orient $t\ouo b$ as $t\tuh b$. Denote by $\tilde{T}$ the set of these nodes $t$.
        \item Let $\Bb_1\prec \cdots \prec \Bb_m$ be a topological order of $\Pf$-buckets contained in $C$:
        \begin{enumerate}
            \item In $\Bb_m$, pick arbitrarily a node $c^*\in \Bb_m$. Then orient the circle component in $\Pf_{C}^{tag}$ contained in $\Bb_m$ into a DAG without unshielded colliders such that for any $c^*\ouo d$ (if any), orient $c^*\ouo d$ as $c^*\tuh d$.
            \item For every $\Bb_i$ with $1\leq i<m$, arbitrarily choose $c\in \pcc_{\Pf_C}(c^*)\cap \Bb_i$. Then orient the circle component in $\Pf_{C}^{tag}$ contained in $\Bb_i$ into a DAG without unshielded colliders such that all edges $c\ouo d$ (if any) are oriented as $c\tuh d$. 
        \end{enumerate}  
        Denote by $\hat{C}$ the set of these nodes $c$ and $c^*$, and define $\hat{T}\coloneqq \hat{C}\dcup \tilde{T}$.
    \end{enumerate}
    Furthermore, \cref{cond:complete_I} guarantees $\hat{C}\ne \emptyset$, $\hat{T}\ne \emptyset$, and $\hat{T}\sm \hat{C}\ne \emptyset$.
\end{lemma}

This MAG $\Mf$ satisfies the following properties:
\begin{lemma}\label{lem:prop_construct_MAG_III}
    Let $\Mf$ be the MAG constructed in \cref{lem:construct_MAG}. Let $B$ be as in \cref{thm:idp_complete}. Then there exist subsets $\breve{T}\subseteq \hat{T}$ and $\breve{C}\subseteq \hat{C}$ such that $\emptyset\ne \breve{C}\subsetneq \breve{T}$ and:
    \begin{enumerate}
        \item $\breve{C}$ is a single c-component in $\Mf_{\breve{C}}$; 
        
        \item for every $t\in \breve{T}\sm \breve{C}$, there exists $u\in \ch_{\Mf_{\breve{T}}}(t)$ such that $t$ and $u$ are in the same c-component in $\Mf_{\breve{T}}$;

        \item $B\cap (\breve{T}\sm \breve{C})\ne \emptyset$; and

        \item $\breve{C}\subseteq \anc_{\Mf_{\Vc\sm B}}(A)$, which also implies that $\breve{C}\cap B=\emptyset$.
    \end{enumerate}
\end{lemma}

We consider isADMG \[\Af\coloneqq \cat{isADMG}(\Mf)\in [\Mf]_\Gs.\] In $\Af_{\breve{T}}$, the subgraph 
$\Af_{\breve{C}}$ is a single c-component and every $t\in \breve{T}\sm \breve{C}$ has a child $u$ such that $t$ and $u$ are in the same c-component. It follows that $\breve{T}=\anc_{\Af_{\breve{T}}}(\breve{C})$ and $\breve{T}$ forms a single c-component. Let $R$ be the root set of $\breve{C}$ (which always exists since we can take $R=\breve{C}$) in $\Af_{\breve{C}}$. Then $R$ is also the root set of $\breve{T}$ in $\Af_{\breve{T}}$. We can remove directed edges from $\Af_{\breve{T}}$ so that subgraphs over $\breve{C}$ and $\breve{T}$ form $R$-rooted C-forests. We want to apply \cref{prop:hedge} with $(\Gf,\Sc,D)\curvearrowleft (\Af,\Sc_{\Af},D)$. Note that $(\breve{T}\sm \breve{C})\cap B\ne \emptyset$ and $\breve{C}\cap B=\emptyset$. We have \[R\subseteq \breve{C}\subseteq \anc_{\Af_{\Vc\sm B}}(A)\subseteq \anc_{\Af_{\Do(B\cup D)}}(A\cup (\Sc_\Af\sm D)),\] since $D\subseteq \Sc_{\Af}$ does not have any children. Hence, ($\breve{T},\breve{C}$) forms a hedge for $(A\cup (\Sc_\Af\sm D),B\cup D)$ in $\Af$ and the interventional kernel $\Prb_{\Mc}(X_A\mid X_{\Sc}=\1 \miid \Do(X_B))$ is not identifiable in $\Mb^{+}_c(\Pf)$.
\end{proof}

We first show \cref{lem:noniden}. The proofs of \cref{lem:construct_MAG,lem:prop_construct_MAG_III} are lengthy and are given in \cref{sec:pf_lem_cons_mag,sec:pf_prop_mag}, respectively.

\begin{proof}[Proof of \cref{lem:noniden}]
    We can construct a MAG $\Mf\in [\Pf]_\Ms$ with an irreducible anterior path from $B$ to $A$ that starts with an invisible directed edge $v_0\tuh v_1$ or undirected edge $v_0\tut v_1$ by \cref{sopag:p4,lem:orientation_invisible}.
    
    There exists an isADMG $\Af\in [\Mf]_\Gs$ such that:
    \begin{enumerate}
        \item [(i)] $v_0\tuh v_1$ and $v_0\huh v_1$ are in $\Af$,
        \item [(ii)] if $v_{i}\tut v_{i+1}$ on $\pi$ (if any) then in $\Af$ we have \[v_i\tuh v_{i+1} \quad \text{ and }\quad  v_i\tuh s_{v_iv_{i+1}}\hut v_{i+1}, \text{ and }\] 
        \item [(iii)] undirected edges $v\tut u$ (if any) that are not on $\pi$, are replaced by $v\tuh s_{vu}\hut u$.
    \end{enumerate}
    We want to apply \cref{prop:hedge} with $(\Gf,\Sc,D)\curvearrowleft (\Af,\Sc_{\Af},D)$. In the next, we show
    \[
    \Hc\coloneqq\{v_0,v_1\} \quad  \text{ and } \quad \Hc^\prime\coloneqq \{v_1\}
    \] 
    form a hedge $(\Hc,\Hc^\prime)$ for $(A\cup (\Sc_\Af\sm D),B\cup D)$ in $\Af$. In fact, consider 
    \[
    \Gf^\Hc\coloneqq \{v_0\tuh v_1,v_0\huh v_1\} \quad \text{ and }\quad \Gf^{\Hc^\prime}\coloneqq \{v_1\}.
    \] 
    Then $\Gf^\Hc$ and $\Gf^{\Hc^\prime}$ are $\{v_1\}$-rooted C-forests, and $\{v_1\}\subseteq \anc_{\Af_{\Do(B\cup D)}}(A\cup (\Sc_\Af\sm D))$, since $\pi$ is irreducible and $D\subseteq \Sc_\Af$ does not have any children. It is easy to see that $\Hc\cap B\ne \emptyset$, $\Hc^\prime \cap B=\emptyset$, and $\Gf^{\Hc^\prime}$ is a subgraph of $\Gf^\Hc$. Hence, by \cref{prop:hedge}, $\Prb_{\Mc}(X_A\mid X_{\Sc}=\1 \miid \Do(X_B))$ is not identifiable in $\Mb^{+}_c(\Af)$ and therefore not identifiable in $\Mb^{+}_c(\Pf)$.
\end{proof}

\subsubsection[Proof of Lemma~\ref{lem:construct_MAG}]{Proof of \cref{lem:construct_MAG}}\label{sec:pf_lem_cons_mag}

\begin{proof}[Proof of \cref{lem:construct_MAG}]
    We first observe that there always exists such a node $t$ in Step~(3). In fact, \cref{cond:complete_I} guarantees that there exists a $\Pf$-bucket $\Bb$  contained in $T\sm C$ (which is also a $\Pf_T$-bucket), and \cref{lem:idp_version} implies that for every $\Pf_T$-bucket $\Bb$ contained in $T\sm C$, there exists $t\in \Bb$ such that $\pcc_{\Pf_T}(t)\cap \pch_{\Pf_T}(t)\sm \Bb \ne \emptyset$. 
    
    We show that for every $\Bb_i$ with $1\leq i<m$, there exists $c\in \pcc_{\Pf_C}(c^*)\cap \Bb_i$; in particular, $\hat{C}\ne \emptyset$. In fact, we have $\re_{\Pf_C}(\Bb)=C$ for every $\Pf_T$-bucket $\Bb$ (which is also $\Pf$-bucket by \cref{state:no_bucket_split}) with $\Bb\subseteq C$ since sIDP outputs $\textsc{Fail}$. This implies that $\re_{\Pf_C}(\Bb_m)=C$ in particular. Fix a bucket $\Bb_i$ with $i\ne m$. Then there exists a node $c\in \Bb_i$ such that there is a pc-connecting path from $c$ to a node $d\in \Bb_m$. Note that $\Bb_m$ is the last bucket according to a topological order and $c\notin \Bb_m$. Therefore, we have four cases: 
    \begin{enumerate}
        \item [(i)] $d\hut c$ invisible,
        \item [(ii)] $d\huo c$,
        \item [(iii)] $d\huh c$, and
        \item [(iv)] $d\huh v_1\huh \cdots \huh v_{n-1}\hus c$.
    \end{enumerate}
    By \cref{lem:prop_bucket,lem:property_visible}, in all cases, there exists a pc-connecting path from $c$ to $c^*$.
    
    By \cite[Theorem~2]{Zhang08complete} (or \cite[Section~4.3.1]{zhang2006causal}) together with \cref{def:SOPAG}, we have $\Mf\in [\Pf]_\Ms$. This finishes the proof.
\end{proof}

\begin{lemma}\label{lem:idp_version}
    Suppose the setting of \cref{thm:idp_complete} and \cref{cond:complete_I}. Then for all $\Pf$-bucket $\Bb\subseteq T\sm C$, there exists $t\in \Bb$ such that $\pcc_{\Pf_T}(t)\cap \pch_{\Pf_T}(t)\sm \Bb \ne \emptyset$.
\end{lemma}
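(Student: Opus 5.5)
We argue from the failure condition of the Rule~L2 loop in \textsc{AttachKernel}. When sIDP outputs \textsc{Fail}, the witnessing pair $(C,T)$ arises as follows: $C$ is the label of a leaf of the assembly tree, and $T$ is the set $\tilde T$ at which the while loop terminated with $\tilde T\neq C$. Termination means that no $\Pf_T$-bucket $\Bb\subseteq T\sm C$ satisfies $\pcc_{\Pf_T}(\Bb)\cap\pde_{\Pf_T}(\Bb)\subseteq\Bb$. By \cref{state:no_bucket_split}, every $\Pf$-bucket $\Bb\subseteq T\sm C$ is also a $\Pf_T$-bucket. Hence for each such $\Bb$ there is a node $w\in \pcc_{\Pf_T}(\Bb)\cap\pde_{\Pf_T}(\Bb)$ with $w\notin\Bb$. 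The task is to upgrade this statement, which involves a possible descendant somewhere in the pc-component, to one involving a possible \emph{child} of a single node $t\in\Bb$ that lies in $\pcc_{\Pf_T}(t)$.

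The plan is a minimality argument. Among all such witnesses $w$, pick one that minimises the length of a shortest potentially directed path from $\Bb$ to $w$ in $\Pf_T$. By \cref{lem:pde_vs_pdet} and the first claim in the proof of \cref{lem:podpath}, we may take this path $\pf: b=u_0\sus u_1\sus\cdots\sus u_k=w$ so that its first edge leaving $\Bb$ is $u_j\tuh u_{j+1}$ or $u_j\ouh u_{j+1}$, and every later edge is directed. Set $t\coloneqq u_j\in\Bb$ and $z\coloneqq u_{j+1}\notin\Bb$, so that $z\in\pch_{\Pf_T}(t)\sm\Bb$. It remains to show $z\in\pcc_{\Pf_T}(t)$. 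First suppose $z=w$. Since $w$ is pc-connected to some node of $\Bb$, \cref{lem:prop_bucket} lets us transport the pc-connecting path to end at $t$. Invisibility is preserved along bucket-internal $\ouo$ edges by \cref{lem:property_visible}, so $t$ and $z$ are pc-connected.

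The main obstacle is the case $z\neq w$. There we must show that either $z$ is itself pc-connected to $\Bb$, which contradicts the minimality of $w$ unless $z=w$, or the edge $t\suh z$ is invisible, which makes $t\sus z$ a pc-connecting path by \cref{def:pc-component}(i). Our approach is to use the pc-connecting path from $w$ into $\Bb$: a collider path of bidirected edges ending at some $b'\in\Bb$, together with the directed path $z\tuh\cdots\tuh w$. We would apply \cref{lem:property_visible}(1) repeatedly, sliding arrowheads back along $\pf$ towards $z$, so that the pc-connection propagates from $w$ to $z$. This is the step most likely to need care. If it fails, the fallback is to choose $w$ minimal in the topological order on buckets (\cref{lem:topo_order}) instead of by path length. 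Then any node of $\pcc_{\Pf_T}(\Bb)$ on $\pf$ other than $w$ contradicts that choice, and \cref{lem:region} (regions are closed under non-visible arrowheads) is used to force $z$ into $\pcc_{\Pf_T}(t)$. Finally, since $C\subsetneq T$ and \cref{cond:complete_I} ensures at least one bucket lies in $T\sm C$, the resulting $t$ is exactly the node needed in Step~(3) of \cref{lem:construct_MAG}.
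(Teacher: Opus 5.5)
Your treatment of the case $z\neq w$ is a genuine gap, and it cannot be closed as proposed. Both the length-minimal choice of $w$ and the topological fallback use only that the \emph{single} bucket $\Bb$ fails the Rule~L2 test, and this local hypothesis does not imply the conclusion.

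Consider the COPAG $\Pf$ with edges $p\ouh a$, $a\tuh z$, $z\tuh w$, $a\huh v_1$, $v_1\huh w$; this is the FCI output for the MAG $p\tuh a\tuh z\tuh w$, $a\huh v_1\huh w$. All buckets are singletons, and $\pcc_{\Pf}(a)\cap\pde_{\Pf}(a)=\{a,w\}\not\subseteq\{a\}$. Here $w$ is the only witness and your $\pf$ is $a\tuh z\tuh w$. Yet $a\tuh z$ is visible (witness $v_1$) and $z\tuh w$ is visible (witness $a$), so $z\notin\pcc_{\Pf}(a)$, while $z$ is the only possible child of $a$.

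So your dichotomy ($z$ pc-connected to $\Bb$, or $t\tuh z$ invisible) fails, and there is nothing to slide. \cref{lem:property_visible} needs non-visible edges along $\pf$, and \cref{lem:region} propagates only along non-visible arrowheads. What is missing is the global content of the failure of sIDP: \emph{every} bucket of $T\sm C$ fails the Rule~L2 test, and $\re_{\Pf_C}(\Bb')=C$ for every bucket $\Bb'\subseteq C$.

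The paper's proof is built on exactly this global information, in four steps:
\begin{enumerate}
\item[Step~1.] Every bucket of $C$ is joined by a pc-connecting path into a fixed node $c^\dagger$ of the last bucket of $C$.
\item[Step~2.] Every bucket of $T\sm C$ gets a pc-connecting path into a possible descendant outside it.
\item[Step~3.] These paths are chained and shortened (via \cref{lem:head_circle,lem:property_visible}) so that every bucket of $T$ reaches $c^\dagger$ by a pc-connecting path into $c^\dagger$.
\item[Step~4.] For a node $a^*$ of the given bucket with a possible child $b^*$ in another bucket, the two paths are spliced at $c^\dagger$, with \cref{lem:prop_bucket} moving endpoints inside buckets.
\end{enumerate}

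Be aware, however, that this splice treats paths obtained in $\Pf_C$ (Step~1) as pc-connecting in $\Pf_T$. That is only safe if visibility is inherited from $\Pf$. In the example above, take $A=\{w,v_1\}$ and $B=\{a\}$, and compute visibility inside the induced subgraph, as the phrase ``$\Pf_B$-invisible'' in \cref{lem:region} suggests. Then $z\tuh w$ is invisible in $\Pf_{\Dc}$ with $\Dc=\{z,w,v_1\}$, and sIDP fails with $C=\Dc$ and $T=\Vc$. Moreover \cref{cond:complete_I} holds, yet the bucket $\{a\}\subseteq T\sm C$ violates the conclusion. So the visibility convention must be settled before any proof, yours or the paper's, can go through.
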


\begin{proof}[Proof of \cref{lem:idp_version}]

Each $\Pf_T$-bucket $\Bb$ is not split between $C$ and $T\sm C$, so it is either in $\Bb\subseteq C$ or in $\Bb\subseteq T\sm C$.   The sIDP fails, so we have $\pcc_{\Pf_T}(\Bb)\cap \pde_{\Pf_T}(\Bb)\nsubseteq \Bb$ for all bucket $\Bb\subseteq T\sm C$ and $\re_{\Pf_C}(\Bb)=C$ for all bucket $\Bb\subseteq C$. Let $\Bb_1\prec \cdots \prec \Bb_r$ be a topological order of buckets in $\Pf_C$ and pick $c^\dagger\in \Bb_r$.

\textbf{Step~1}. We show that for every bucket $\Bb\subseteq C$, there exists $c\in \Bb$ such that there is a pc-connecting path from $c$ to $c^\dagger$ that is into $c^\dagger$. For every bucket $\Bb\subseteq C$, since $\re_{\Pf_C}(\Bb)=C$, there exist $c\in \Bb$ and $\tilde{c}\in \Bb_r$ such that $c\in \pcc_{\Pf_C}(\tilde{c})$. If $\Bb=\Bb_r$, we just pick $c=\tilde{c}=c^\dagger$. We now assume $\Bb\ne \Bb_r$. Since $\Bb_r$ is the last element according to a topological order and $\Bb\ne \Bb_r$, the pc-connecting path from $c$ to $\tilde{c}$ must be into $\tilde{c}$.  By \cref{lem:prop_bucket,lem:property_visible}, there must be a pc-connecting path from $c$ to $c^\dagger$ that is into $c^\dagger$ (similar to the argument of \cref{lem:construct_MAG}).

\textbf{Step~2}. We show that for every bucket $\Ab\subseteq T\sm C$ in $\Pf_T$, we can find a pc-connecting path from $a\in \Ab$ to $u\in \pde_{\Pf_T}(\Ab)\sm \Ab$ that is into $u$. Let 
\[
d\in(\pcc_{\Pf_T}(\Ab)\cap \pde_{\Pf_T}(\Ab))\sm \Ab.
\] 
Then the pc-connecting path between $d$ and $\Ab$ cannot be $d\tut v_n$ with $v_n\in \Ab$ or a single edge with an arrowhead on $v_n$ by \cref{lem:podpath}. If a pc-connecting path from $d$ to $v_n$ consists of a single edge, then it must be invisible $d\hut v_n$ or $d\huo v_n$, which shows the claim. Now we consider the case where a pc-connecting path between $d$ and $\Ab$ consists of more than one edges. Let 
\[
\pi:d\suh v_1\huh \cdots \huh v_{n-1}\hus v_n\in \Ab
\] 
be a pc-connecting path from $d$ to $\Ab$ with $n>1$ ($v_1\notin \Ab$). If we have $d\huh v_1$, then we are done. So, in the following, we assume that $d\suh v_1$ is $d\ouh v_1$ or $d\tuh v_1$ invisible. Since $d\in \pde_{\Pf_T}(\Ab)$, by \cite[Lemmas~B1 and B2]{Zhang08complete},\footnote{Lemmas~B1 and B2 in \cite{Zhang08complete} are stated for $\Pc_{AFCI}$ but they also hold for COPAGs.} there is an uncovered potentially directed path from $\Ab$ to $d$ that is into $d$
\[
\tilde{\pi}:\Ab \ni u_0\sus \cdots \sus u_{m-1}\suh d.
\] 
Note that $u_{m-1}\suh d$ cannot be $u_{m-1}\huh d$, since otherwise $\tilde{\pi}$ is not a potentially directed path from $\Ab$ to $d$. If we have $d\ouh v_1$, then by \cref{lem:head_circle} we have $u_{m-1}\suh v_1$. Note that it is impossible to have $u_{m-1}\huh v_1$, since it contradicts \cref{lem:head_circle} if we have $u_{m-1}\ouh d$ or contradicts $\fci{2}$ if we have $u_{m-1}\tuh d$. If we have $d\tuh v_1$, then since $d\tuh v_1$ is invisible we have $u_{m-1}\suh v_1$. We can exclude the case $u_{m-1}\huh v_1$ by $\fci{2}$ if we have $u_{m-1}\ouh d$ or by the fact that PAGs do not have almost cycles if we have $u_{m-1}\tuh d$. Therefore, we have $u_{m-1}\tuh v_1$ or $u_{m-1}\ouh v_1$. It follows that $v_1\in \pde_{\Pf_T}(\Ab)\sm \Ab$ such that $v_1\in \pcc_{\Pf_T}(\Ab)$ and there is a pc-connecting path into $v_1$.  

\textbf{Step~3}. The goal is to show that for every bucket $\Ab$ in $T\sm C$, there is $w_1\in \Ab$ such that $w_1$ is in the same pc-component with $c^\dagger$ and the corresponding pc-connecting path is into $c^\dagger$. By the above two steps, for every bucket $\Ab$ in $\Pf_T$, we can find a sequence of nodes $\{w_i\}_{i=1}^{\ell}$ where $w_1\in\Ab$ and $w_{\ell}=c^\dagger$ such that every pair $(w_i,w_{i+1})$ is connected with a pc-connecting path into $w_{i+1}$ (the pc-connecting paths cannot consist of single undirected edge).  We argue by induction on the number $\ell$ to show that $w_1\in \pcc_{\Pf_T}(c^\dagger)$. The case where $\ell=1$ trivially holds. Now assume that the claim holds for $\ell=k\geq 2$. The goal is to show that it holds when $\ell=k+1$. Let 
\[
\begin{aligned}
    &\pi^1:w_1 =v_{0}^1\suh v_{1}^1\huh \cdots \huh v_{n_1-1}^1\huh v_{n_1}^1= w_2 \quad \text{ and }\\
    &\pi^2:w_2=v_{0}^2\suh v_{1}^2\huh \cdots \huh v_{n_2-1}^2\huh v_{n_2}^2= w_3
\end{aligned}
\]
be pc-connecting paths from $w_1$ to $w_2$ and from $w_2$ to $w_3$, respectively. If $\pi^2$ is into $w_2$, then there is a pc-connecting path from $w_1$ to $w_3$ that is into $w_3$ and we obtain a sequence of nodes $\{w_1,w_3,\ldots,w_{k+1}\}$ whose lengthen is $k$, and it proves the claim. We consider the case where the pc-connecting path between $w_2$ and $w_3$ is not into $w_2$. If we have $v_i^1\huh v_1^2$ for some $i=0,\ldots,n_1-1$ or $v_0^1\tuh v_1^2$ invisible or $v_0^1\ouh v_1^2$, then we find a pc-connecting path from $w_1$ to $w_3$ that is into $w_3$. Therefore, we can obtain a shorter sequence of nodes by deleting $w_2$ and show the claim. We now assume that this is not the case. In this case, by \cref{lem:head_circle,lem:property_visible}, we have invisible edge $v_{n_1-1}^1\tuh v_0^2$ or $v_{n_1-1}^1\ouh v_0^2$. Then for all $i=0,\ldots,n_1-1$, we have $v_i^1\tuh v_1^2$ invisible or $v_i^1\ouh v_1^2$.  Since we have invisible $w_1\tuh v_1^2$ or $w_1\ouh v_1^2$, we can obtain a shorter sequence by deleting $w_2$. This proves the desired result. 

\textbf{Step~4}. Every bucket $\Ab$ in $T\sm C$ has a possible child in another bucket $\Bb$, since otherwise all the possible descendants of $\Ab$ would be in $\Ab$, which contradicts the failure of sIDP. Fix an arbitrary bucket $\Ab\subseteq T\sm C$. Let $a^*\in \Ab$ be such that there exists a possible child $b^*$ in another bucket $\Bb$, i.e., $a^*\tuh b^*$ or $a^*\ouh b^*$. By previous steps there exist $a\in \Ab$ and $b\in \Bb$ such that there are pc-connecting paths from $a$ to $c^\dagger$ and from $b$ to $c^\dagger$ that are both into $c^\dagger$. Note that we also have $a^* \tuh b$ or $a^*\ouh b$ by \cref{lem:prop_bucket}. We have two cases: $a\suh c^\dagger$ or $a\suh v_1\huh \cdots \huh v_{n-1}\huh c^\dagger$ with $n>1$.  We consider the first case. If $a\ouh c^\dagger$ or $a\tuh c^\dagger$ invisible, then $c^\dagger \in (\pcc_{\Pf_T}(a)\cap \pch_{\Pf_T}(a))\sm \Ab$. If $a\huh c^\dagger$, then \cref{lem:prop_bucket} implies $a^* \huh c^\dagger$. Concatenating $a^* \huh c^\dagger$ with the pc-connecting path from $b$ to $c^\dagger$ gives a pc-connecting path from $a^*$ to $b$. Therefore, $b \in (\pcc_{\Pf_T}(a^*)\cap \pch_{\Pf_T}(a^*))\sm \Ab$. We now consider the second case. If $a\ouh v_1$ or $a\tuh v_1$ invisible, then $v_1 \in (\pcc_{\Pf_T}(a)\cap \pch_{\Pf_T}(a))\sm \Ab$. If $a\huh v_1$, then  \cref{lem:prop_bucket} implies $a^* \huh v_1\huh \cdots \huh v_{n-1}\huh c^\dagger$. Concatenating $a^* \huh v_1\huh \cdots \huh v_{n-1}\huh c^\dagger$ with the pc-connecting path from $b$ to $c^\dagger$ gives a pc-connecting path from $a^*$ to $b$. Hence, $b \in (\pcc_{\Pf_T}(a^*)\cap \pch_{\Pf_T}(a^*))\sm \Ab$.

This finishes the proof.
    
\end{proof}

\subsubsection[Proof of Lemma~\ref{lem:prop_construct_MAG_III}]{Proof of \cref{lem:prop_construct_MAG_III}}\label{sec:pf_prop_mag}

We introduce some additional notation. If there is a bidirected path from $a$ to $b$ in $\Mf_D$ with MAG $\Mf=(\Vc,\Ec)$ and $a,b\in D\subseteq \Vc $, then we write $a\in \dcc_{\Mf_D}(b)$. 

\begin{proof}[Proof of \cref{lem:prop_construct_MAG_III}]
   \textbf{Step~1: show (1)}. Let $b\in B\cap (\hat{T}\sm \hat{C})$ be from Part~(3) of \cref{lem:prop_construct_MAG} and $c\in \hat{C}$ be such that there are no edges out of $c$ in $\Mf_{\hat{C}}$. This choice of $c\in \hat{C}$ is possible because there are no cycles in a MAG and $\Mf_{\hat{C}}$ is not a purely undirected graph by Part~(1) of \cref{lem:prop_construct_MAG}. Define 
   \[
   \begin{aligned}
       H_1&\coloneqq ( \hat{T}\sm \hat{C} )\cap ( \pcc_{\Mf_{\hat{T}}}(b)\sm \dcc_{\Mf_{\hat{T}}}(b) ), \quad &H_2\coloneqq \pcc_{\Mf_{\hat{C}}}(c)\sm \dcc_{\Mf_{\hat{C}}}(c)\\
       \breve{T}&\coloneqq\hat{T}\sm (H_1\cup H_2), \quad \text{ and } &\breve{C}\coloneqq \hat{C}\sm H_2.
   \end{aligned}
   \] 
   By Part~(2) of \cref{lem:prop_construct_MAG}, $\pcc_{\Mf_{\hat{C}}}(c)=\hat{C}$ and therefore $\breve{C}=\dcc_{\Mf_{\hat{C}}}(c)$ is a single c-component in $\Mf_{\breve{C}}$. This proves Part~(1). Also note that $\emptyset\ne \breve{C}\subsetneq \breve{T}$.

   \textbf{Step~2}. We shall in the following show that Parts~(1)\&(2) of \cref{lem:prop_construct_MAG} still hold with $(\hat{T},\hat{T}\sm\hat{C})\curvearrowleft (\hat{T}\sm H_1,(\hat{T}\sm\hat{C})\sm H_1)$. 

    \textbf{Step~2.1: show \cref{lem:prop_construct_MAG_II} holds with $\hat{T}\curvearrowleft \hat{T}\sm \{u\}$ for $u\in H_1$}. We show that \cref{lem:prop_construct_MAG_II} still holds with $\hat{T}\curvearrowleft \hat{T}\sm \{u\}$ for $u\in H_1\ne \emptyset$. Note that $u\ne b$ by the definition of $H_1$. By \cref{lem:prop_construct_MAG_II}, we have $\pcc_{\Mf_{\hat{T}}}(b)=\hat{T}$. 
    
    We show that there is a pc-connecting path from $u$ to $b$ of the form: a collider path $\pi:u\tuh u_1\huh \cdots \huh u_{n-1}\huh u_n=b$ when $n\ge 2$, and  $\pi:u\tuh u_1=b$ invisible when $n=1$ (not $u\hut b$ or $u\huh b$ or $u\tut b$) in $\Mf_{\hat{T}}$.  Since $u\in H_1$, there is a pc-connecting path 
   \[
   \pi:u\suh u_1\huh \cdots \huh u_{n-1}\hus u_n=b
   \] 
   from $u$ to $b$ in $\Mf_{\hat{T}}$. We show that it is impossible to have invisible $b\tuh u_{n-1}$ or $b\tut u$ in $\Mf_{\hat{T}}$. In fact, assume that this is not the case, i.e., we have an invisible directed edge $b\tuh u_{n-1}$ or an undirected edge $b\tut u$ in $\Mf_{\hat{T}}$. By Part~(1) of \cref{lem:prop_construct_MAG}, we can find a directed path from $b$ to $\hat{C}$ starting from $b\tuh u_{n-1}$ or an anterior path from $b$ to $\hat{C}$ starting from $b\tut u$ in $\Mf_{\hat{T}}$. This implies that there is a potentially anterior path from $b$ to $\hat{C}$ in $\Pf_T$ starting with an edge that is not visible. Recall that $\hat{C}\subseteq C\subseteq \Dc=\pant_{\Pf_{\Vc\sm B}}(A)$. This means that there is a potentially anterior path from $B$ to $A$ in $\Pf$ starting with an edge that is not visible, which violates \cref{cond:complete_I}. This proves the claim. Since $u$ is not in the c-component containing $b$, it is impossible to have $u\huh u_1\huh \cdots \huh u_{n-1}\huh u_n=b$ or $u\huh b$. In summary, we can conclude that there is a pc-connecting path of the form  $\pi:u\tuh u_1\huh \cdots \huh u_{n-1}\huh u_n= b$ when $n>2$, and $\pi:u\tuh u_1=b$ invisible when $n=1$, from $u$ to $b$ in $\Mf_{\hat{T}}$.  
   
   Note that a pc-connecting path from some node in $\Mf_{\hat{T}}$ to $b$ cannot have $u$ as an intermediate node. If this is not the case, then $u$ would be in the c-component of $b$ and this causes a contradiction to the fact that $u\in H_1$. For every $w_1,w_2\in \hat{T}\sm \{u\}$, assume a pc-connecting path between them in $\Mf_{\hat{T}}$ intersects $u$:
   \[
        w_1\suh v_1 \huh \cdots \huh  v_{n-1} \hus w_2
   \]
   where $v_j=u$ for some $1\le j\le n-1$. If we have $u\huh u_1$, then we have
   \[
    w_1\tuh v_1\huh \dots \huh v_j \huh u_1\huh \cdots \huh b.
   \]
   Deleting repeated nodes (if any) gives a pc-connecting path from $w_1$ to $b$ that is into $b$ in $\Mf_{\hat{T}\sm \{u\}}$. Consider the case where we have $u\tuh u_1$ invisible. Then $v_{j-1}$ must be adjacent to $u_1$. If we have $v_{j-1} \huh u_1$, then we can get a pc-connecting path from $w_1$ to $b$ that is into $b$ similar to before. If we have $v_{j-1}\tuh u_1$, then by \cref{lem:property_visible} we know that it must be invisible. This implies that $v_{j-2}$ must be adjacent to $u_1$ and we have $v_{j-2} \huh u_1$ or $v_{j-2}\tuh u_1$ invisible. Repeat the above argument until we reach $w_1$. Then we can conclude that there are pc-connecting paths from $w_1$ and $w_2$ to $b$ that are both into $b$ in $\Mf_{\hat{T}\sm \{u\}}$. Concatenating the two pc-connecting paths and deleting repeated nodes (if any) gives a pc-connecting path from $w_1$ to $w_2$ in $\Mf_{\hat{T}\sm \{u\}}$. Therefore, \cref{lem:prop_construct_MAG_II} still holds if we replace $\hat{T}$ with $\hat{T}\sm \{u\}$.  

   \textbf{Step~2.2: show Part~(1) of \cref{lem:prop_construct_MAG} with $(\hat{T},\hat{T}\sm\hat{C})\curvearrowleft (\hat{T}\sm \{u\},(\hat{T}\sm\hat{C})\sm \{u\})$ for $u\in H_1$}. We first assume that $u\in H_1\ne \emptyset$. Assume for contradiction that some node $z\in (\hat{T}\sm \hat{C}) \sm\{u\}$ violates Part~(1) of \cref{lem:prop_construct_MAG} in the induced subgraph $\Mf_{\hat{T}\sm \{u\}}$. Since Part~(1) of \cref{lem:prop_construct_MAG} is valid for $\hat{T}$ but not for $\hat{T}\sm \{u\}$, there are two cases to consider: 
   \begin{enumerate}[label=\textbf{Case~\arabic*:}, ref=Case~\arabic*, leftmargin=*]
    \item we have $u\in \ch_{\Mf_{\hat{T}}}(z)$ and $z\tuh u$ is invisible;
    \item node $z$ is in the same c-component with a child $\tilde{u}$ and the bidirected paths intersect $u$ in $\Mf_{\hat{T}}$.
   \end{enumerate}
   
    \textbf{Case~1}.  Note that $z\tuh u$ and $u\tuh u_1$ are invisible. \cref{lem:property_visible} gives that there is an edge $z\suh u_1$ that is not visible. Since there are no almost cycles in a MAG, $z\suh u_1$ must be an invisible directed edge $z\tuh u_1$. This means that Part~(1) of \cref{lem:prop_construct_MAG} still holds, which is a contradiction.

    \textbf{Case~2}. Let \[\pi:z=v_0^1\huh v_{1}^1\huh \cdots \huh v_{n_1-1}^1\huh u \huh v_1^2\huh \cdots \huh v_{n_2-1}^2\huh v_{n_2}^2=\tilde{u}\] be a bidirected path between $z$ and $\tilde{u}$ intersecting $u$ in $\Mf_{\hat{T}}$. Recall that $u\tuh u_1$ is invisible. If there is no $v_i^1$ such that $v_i^1\huh u_1$ is present (including $i=0$), then $v_i^1\in \pa_{\Mf_{\hat{T}}}(u_1)$. So we have invisible directed edge $z\tuh u_1$ (it is invisible because otherwise $u\tuh u_1$ would be visible) and therefore we have that Part~(1) of \cref{lem:prop_construct_MAG} still holds, which is a contradiction. Hence, there exists $v_i^1$ such that $v_i^1\huh u_1$ is in $\Mf_{\hat{T}}$ and therefore $u_1\in \dcc_{\Mf_{\hat{T}\sm \{u\}}}(z)$. If there exists $v_i^2$ such that we have $v_i^2 \huh u_1$, then there is a bidirected path from $z$ to $\tilde{u}$ in $\Mf_{\hat{T}\sm \{u\}}$. This contradicts the assumption that $z$ fails Part~(1) of \cref{lem:prop_construct_MAG} in $\Mf_{\hat{T}\sm \{u\}}$. Therefore, we have $v_i^2 \tuh u_1$ for all $1\leq i\leq n_2$ and $\tilde{u}\tuh u_1$ is invisible since $u\tuh u_1$ is invisible. We have $z\tuh u_1$ invisible, since $z\tuh \tilde{u}$ is present by assumption and $\tilde{u}\tuh u_1$ is invisible. This causes a contradiction to the assumption that there does not exist a child of $z$ in $\Mf_{\hat{T}\sm \{u\}}$ such that the edge between them is an invisible directed edge. 
    
    Based on the above argument, we can conclude that Part~(1) of \cref{lem:prop_construct_MAG} holds if we replace $\hat{T}$ with $\hat{T}\sm \{u\}$. 
    
    \textbf{Step~2.3: show Part~(2) of \cref{lem:prop_construct_MAG} with $(\hat{T},\hat{T}\sm\hat{C})\curvearrowleft (\hat{T}\sm \{u\},(\hat{T}\sm\hat{C})\sm \{u\})$ for $u\in H_1$}. For Part~(2) of \cref{lem:prop_construct_MAG}, one just needs to notice that $u\notin \hat{C}$.

    \textbf{Step~2.4: finish Step~2 using recursion}. We finish the proof of that Parts~(1)\&(2) of \cref{lem:prop_construct_MAG} hold with $(\hat{T},\hat{T}\sm\hat{C})\curvearrowleft (\hat{T}\sm H_1,(\hat{T}\sm\hat{C})\sm H_1)$ by recursively applying the argument in Steps~2.1, 2.2, and 2.3. 

    \textbf{Step~3: finish the proof of (2)}.  Let $u\in H_2$. Similar to Step~2.1, by Part~(2) of \cref{lem:prop_construct_MAG} and the choice of $c$, for every $w_1,w_2\in \hat{C}\sm\{u\}$, if a pc-connecting path between them in $\Mf_{\hat{C}}$ intersects $u$, then there must be pc-connecting paths from $w_1$ to $c$ and from $w_2$ to $c$ that are both into $c$ in $\Mf_{\hat{C}\sm \{u\}}$. Then there must be a pc-connecting path from $w_1$ to $w_2$ in $\Mf_{\hat{C}\sm \{u\}}$. So Part~(2) of \cref{lem:prop_construct_MAG} still holds if we replace $\hat{C}$ with $\hat{C}\sm \{u\}$.  A similar argument to Step~2.2, which is by contradiction, shows that Part~(1) of \cref{lem:prop_construct_MAG} still holds if we replace $\hat{T}$ and $\hat{C}$ with $\hat{T}\sm (H_1\cup \{u\})$ and $\hat{C}\sm \{u\}$, respectively. Therefore, Parts~(1)\&(2) of \cref{lem:prop_construct_MAG} hold with $(\hat{T},\hat{C})\curvearrowleft (\breve{T},\breve{C})$. This implies that for every $t\in \breve{T}\sm \breve{C}$, there exists $s\in \ch_{\Mf_{\breve{T}}}(t)$ such that $t\tuh s$ is invisible or there is a bidirected path connecting $t$ and $s$ in $\Mf_{\breve{T}}$. If $t,s\in \breve{T}\sm \breve{C}$ then there is a bidirected path between $t$ and $s$ by the definition of $\breve{T}$. Assume $t\in \breve{T}$ and $s\in \breve{C}$. Then we have $\pi_1:t\huh v_1\huh \cdots\huh v_{n-1}\huh b$. Assume that $t\tuh s$ is invisible. Then $v_1$ must be adjacent to $s$. If we have $v_1\huh s$, then we are done. So we assume that $v_1\tuh s$. Applying \cref{lem:orientation_invisible}, we know that $v_1\tuh s$ is invisible. Then we have $v_2\huh s$ or $v_2\tuh s$. Repeat the argument until we reach $b$. Again if we have $b\huh s$, then we are done. Therefore, we are left with the case where $b\tuh s$ is an invisible edge. This contradicts \cref{cond:complete_I}. Hence, we can conclude that there must be a bidirected path connecting $t$ and $s$ in $\Mf_{\breve{T}}$, which shows Part~(2) of \cref{lem:prop_construct_MAG_III}.  
    
    \textbf{Step~4: show (3)}. By the choice of node $b\in B$, we have $b\in\breve{T}\sm \breve{C}$, which proves Part~(3) of \cref{lem:prop_construct_MAG_III}. 

    \textbf{Step~5: show (4)}. Since $\breve{C}\subseteq \hat{C}$, it holds $\breve{C}\subseteq \ant_{\Mf_{\Vc\sm B}}(A)$. If $\breve{C}$ is not a singleton set, then it is easy to see that all the nodes in $\breve{C}$ have arrowheads on them in $\Mf$, since they are all in one single c-component. This then implies $\breve{C}\subseteq \anc_{\Mf_{\Vc\sm B}}(A)$. Now consider the case where $\breve{C}=\{c\}$ is a singleton set. If $\hat{C}\sm \breve{C}\ne \emptyset$, then  Part~(2) of \cref{lem:prop_construct_MAG} implies that there is a pc-connecting path from a node in $\hat{C}\sm\breve{C}$ to $c$ that is into $c$ by the choice of $c\in \hat{C}$ in Step~1. Therefore, there is an arrowhead on $c$ in $\Mf$. So $c\in \anc_{\Mf_{\Vc\sm B}}(A)$. If $\hat{C}=\breve{C}=\{c\}$, then by Part~(1) of \cref{lem:prop_construct_MAG} there is an arrowhead from a node in $\hat{T}\sm\hat{C}$ to $c$ in $\Mf$. Hence, we have $\breve{C}\subseteq \anc_{\Mf_{\Vc\sm B}}(A)$, which gives Part~(4) of \cref{lem:prop_construct_MAG_III}.

\end{proof}

\begin{lemma}\label{lem:intarget}
    Suppose the setting of \cref{thm:idp_complete} and \cref{cond:complete_I}. Then $B\cap T\ne \emptyset$.
\end{lemma}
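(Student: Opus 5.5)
The plan is to argue by contradiction. Suppose $B\cap T=\emptyset$; we show the failing pair $(C,T)$ cannot exist. Recall how sIDP builds $T$. Each leaf label $C$ of the assembly tree is obtained from $\Dc=\pant_{\Pf_{\Vc\sm B}}(A)$, and $\cat{AttachKernel}$ starts from $\tilde T=\Vc$ and repeatedly removes buckets $\Bb\subseteq\tilde T\sm C$ with $\pcc_{\Pf_{\tilde T}}(\Bb)\cap\pde_{\Pf_{\tilde T}}(\Bb)\subseteq\Bb$. It stops at the witness $T\supsetneq C$, where no such bucket exists. So it suffices to exhibit, in $T\sm C$, a bucket of $\Pf_T$ that satisfies the Rule~L2 condition whenever $T\cap B=\emptyset$.

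\textbf{Step 1: a sink bucket outside $\Dc$.} First we show $T\sm\Dc\neq\emptyset$ and pick a maximal such bucket. Since $C\subseteq\Dc$, any bucket of $\Pf_T$ lying in $T\sm\Dc$ is in $T\sm C$. By \cref{state:no_bucket_split}, buckets are not split between $\Dc$ and its complement. Assume first $T\not\subseteq\Dc$. Fix a topological order of $\Pf_T$-buckets (\cref{lem:topo_order}) and choose $\Bb\subseteq T\sm\Dc$ that is $\prec$-maximal among buckets outside $\Dc$. Claim: $\pde_{\Pf_T}(\Bb)\subseteq \Bb\cup(T\sm\Dc)$. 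Indeed, a potentially directed path from $\Bb$ into $\Dc$ inside $T\subseteq\Vc\sm B$ would place $\Bb$ in $\pant_{\Pf_{\Vc\sm B}}(A)=\Dc$, using \cref{lem:pde_vs_pdet} to pass between possible descendants and possible anteriors. Maximality then forces $\pde_{\Pf_T}(\Bb)=\Bb$. Hence $\pcc_{\Pf_T}(\Bb)\cap\pde_{\Pf_T}(\Bb)\subseteq\Bb$, so Rule~L2 applies to $\Bb$, contradicting condition (ii) on $(C,T)$.

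\textbf{Step 2: the case $T\subseteq\Dc$.} Here the goal is to show that the recursion in $\cat{BuildTree}$ and $\cat{AttachKernel}$ would have succeeded on $\Dc$. The starting observation is that removing everything outside $\Dc$ is always legal when $B$ is absent, because $\Vc\sm\Dc$ contains no possible anterior of $A$ (Step~1 iterated). So the failure happens inside $\Dc$, with $C\subsetneq T\subseteq\Dc$. Next, $C$ arises from the Rule~L1 splitting of $\Dc$: it is a region-closed piece, with $\re_{\Pf_C}(\Bb)=C$ for every bucket $\Bb\subsetneq C$. Take a $\prec$-maximal bucket $\Bb$ of $\Pf_T$ inside $T\sm C$. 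Its possible descendants in $\Pf_T$ lie in $\Bb$ or in $C$. Any pc-connecting path from $\Bb$ into $C$ would, by \cref{lem:region}, put $\Bb$ in the region of a bucket of $C$. That contradicts the way $C$ was cut out by Rule~L1, since regions are computed in induced subgraphs containing $C$ and \cref{lem:region} transfers closure. Therefore $\pcc_{\Pf_T}(\Bb)\cap\pde_{\Pf_T}(\Bb)\subseteq\Bb$, again contradicting (ii).

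The main obstacle is Step~2. The Rule~L1 region closure is taken in $\Pf_{C'}$ for intermediate tree labels $C'$, not in $\Pf_T$, and pc-connectivity is not monotone under taking induced subgraphs, because visibility can change. If the direct transfer fails, my fallback is to use \cref{cond:complete_I} directly. Absence of $B$ in $T$ would let us build, as in \cref{lem:prop_construct_MAG_III}, a hedge-free structure, and we would then derive that every potentially directed edge leaving $T\sm C$ into $C$ must be visible. That would block the pc-connection required by (ii).
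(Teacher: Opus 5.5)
Your Step~1 is the paper's first step: if $T\not\subseteq\Dc$, the $\prec$-last bucket of $\Pf_T$ outside $\Dc$ has no possible descendants outside itself, so it passes the Rule~L2 test and contradicts failure. Hence $T\subseteq\Dc$. (Your opening sentence says you show $T\sm\Dc\neq\emptyset$; the argument actually shows the opposite.)

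The gap is in Step~2. You claim that any pc-connecting path from the maximal bucket $\Bb\subseteq T\sm C$ into $C$ puts $\Bb$ into a region of $C$. \cref{lem:region} does not give this. It only pulls a node $b$ into a region $R$ across a single non-visible edge whose arrowhead is at the region node, namely $b\ouh c$, $b\huh c$, or invisible $b\tuh c$ with $c\in R$. Regions are not closed under pc-connection, and the lemma says nothing about two situations:
\begin{itemize}
\item a \emph{visible} edge $b\tuh c$;
\item a collider path $b\suh v_1\huh\cdots\huh v_{n-1}\hus c$ whose last edge is $v_{n-1}\huo c$ or an invisible $c\tuh v_{n-1}$, so the arrowhead sits on $v_{n-1}$ rather than on the $C$-node.
\end{itemize}
So ``contradicts the way $C$ was cut out'' is asserted rather than proved. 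Your fallback, showing that edges from $T\sm C$ into $C$ are visible, heads the wrong way: the visible case is exactly the one that needs an argument.

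The paper closes Step~2 in three moves, and you are missing the first and third.
\begin{itemize}
\item[(i)] Upgrade the failure condition from possible descendant to possible \emph{child} via \cref{lem:idp_version}. Pick $b\in\Bb$ and $c\in(\pcc_{\Pf_T}(b)\cap\pch_{\Pf_T}(b))\sm\Bb$. Then $b\ouh c$ or $b\tuh c$, and $c\in C$ by maximality of $\Bb$.
\item[(ii)] Use $T\subseteq\Dc$ to locate, in the chain $C=C_0\subsetneq C_1\subsetneq\cdots\subsetneq C_\ell=\Dc$ of Rule~L1 splits, the index $j$ with $b\in C_{j+1}\sm C_j$. Apply \cref{lem:region} to this single edge in $\Pf_{C_{j+1}}$, where $C_j\ni c$ is a region. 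This rules out $b\ouh c$ and invisible $b\tuh c$. It also handles your worry about visibility changing across induced subgraphs, since only one edge is transported, not a whole pc-path.
\item[(iii)] In the remaining visible case, $c\in\pcc_{\Pf_T}(b)$ forces a pc-connecting collider path $b=v_0,\ldots,v_n=c$ with $n>1$.
  \begin{itemize}
  \item If $v_{n-1}\in C$, propagating \cref{lem:region} back along the bidirected edges puts $b$ in $C$, a contradiction.
  \item If $v_{n-1}\notin C$, the last edge cannot be bidirected, so it is $v_{n-1}\huo c$ or an invisible $v_{n-1}\hut c$. Then \cref{lem:head_circle}, or invisibility together with $\fci{2}$ and $\fci{8}$, yields $b\tuh v_{n-1}$ or $b\ouh v_{n-1}$.
  \end{itemize}
  In the second sub-case $v_{n-1}\in T\sm C$ is a possible child of $b$ lying in a later bucket, which contradicts the maximality of $\Bb$. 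Nothing in your sketch replaces this use of maximality.
\end{itemize}
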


\begin{proof}[Proof of \cref{lem:intarget}]

    Assume on the contrary that the conclusion does not hold, i.e., $B\cap T=\emptyset$.  Recall $\Dc= \pant_{\Pf_{\Vc\sm B}}(A)$. We have $\pde_{\Pf_{\Vc\sm B}}(\Vc\sm(\Dc\cup B))\subseteq \Vc\sm \Dc$. Otherwise, there exist 
    $u\in \Vc\sm (\Dc\cup B)$ and $v\in \Dc$ such that $v\in\pde_{\Pf_{\Vc\sm B}}(u)$. This implies $u\in \pant_{\Pf_{\Vc\sm B}}(A)$, which contradicts the definition of $\Dc$. 
    
    \textbf{Step~1: show $T\subseteq \Dc$ provided $B\cap T=\emptyset$}. We show $T\cap (\Vc\sm \Dc)=\emptyset$ by contradiction. In fact, if this is not the case, then there exists a $\Pf$-bucket $\Bb\subseteq T$ such that $\Bb\subseteq \Vc\sm \Dc$. This is because by the definition of $\Dc$, every $\Pf$-bucket disjoint from $B$ is entirely contained in either $\Dc$ or $\Vc\sm \Dc$; moreover, the construction of $T$ in $\idp$ does not split the $\Pf$-buckets. Consider a topological order $\Bb_1\prec \cdots \prec\Bb_n$ of the buckets in $\Pf$ contained in $T$. Let $\Bb_i$ be the bucket with the highest index contained in $\Vc\sm \Dc$. Note that $\Bb_i\subseteq T\sm C$, since $C\subseteq \Dc$. Then we have $\pde_{\Pf_T}(\Bb_i)\subseteq \Bb_i$, since otherwise we would have $\pde_{\Pf_T}(\Bb_i)\cap \Dc\ne \emptyset$, which contradicts that $\pde_{\Pf_{\Vc\sm B}}(\Vc\sm(\Dc\cup B))\subseteq \Vc\sm \Dc$. This implies that $\Bb_i$ satisfies the criterion in Rule~L2, which contradicts the fact that sIDP outputs $\textsc{Fail}$. Hence, we can conclude $T\subseteq \Dc$.

    \textbf{Step~2: derive contradiction}. First note that every bucket in $T$ is either in $C$ or in $T\sm C$. Let $\Bb_{i_1}\prec \cdots \prec \Bb_{i_m}$ be a topological order of the buckets in $\Pf_T$ contained in $T\sm C$. Since the sIDP fails, by \cref{lem:idp_version}, pick $b\in \Bb_{i_m}$ such that $\pcc_{\Pf_T}(b)\cap \pch_{\Pf_T}(b)\nsubseteq \Bb_{i_m}$. Let $c\in \lb\pcc_{\Pf_T}(b)\cap \pch_{\Pf_T}(b)\rb \sm \Bb_{i_m}$. Since $c\in \pch_{\Pf_T}(b)$, there must be an edge between nodes $b$ and $c$ and it cannot be $b\hus c$. The cases $b\ouo c$, $b\out c$, $b\tut c$, and $b\tuo$ are excluded by the fact that $c\notin\bu_{\Pf_T}(b)$. Therefore, we have $b\ouh c$ or $b\tuh c$.  
   
    Since $\Bb_{i_m}$ is the last bucket according to the topological order, we have $c\in C$. By Step~1, we have $C\subsetneq T\subseteq \Dc$. By the $\idp$, there exists a sequence of sets 
    \[
    C=C_0\subsetneq C_1 \subsetneq \cdots \subsetneq C_\ell=\Dc
    \] 
    such that $C_i=\re_{\Pf_{C_{i+1}}}(\Bb)$ or $C_i=\re_{\Pf_{C_{i+1}}}(C\sm \re_{\Pf_{C_{i+1}}}(\Bb))$ for some $\Bb\subsetneq C_{i+1} \subseteq \Dc$ and for all $0\leq i\leq \ell-1$. There exists $0\leq j\leq \ell-1$ such that $b\in C_{j+1}\sm C_j$. If we have $b\ouh c$ or $\Pf_{C_{j+1}}$-invisible directed edge $b\tuh c$, then applying \cref{lem:region} with $B\curvearrowleft C_{j}$ gives that $b\in C_j$, which is a contradiction. Therefore, the edge between nodes $b$ and $c$ must be $\Pf_{C_{j+1}}$-visible directed edge $b\tuh c$. Since $c\in \pcc_{\Pf_T}(b)$ and $b\tuh c$ is $\Pf_T$-visible, there must be a pc-connecting path 
    \[
    b=v_0\sus \cdots \sus v_{n}=c
    \]
    with $n>1$ in $\Pf_T$. The left part of the proof follows verbatim from that of \cite[Lemma~1]{jaber19idencom}. We reproduce the argument for completeness. If $v_{n-1}\in C$, then $v_{i}\in C$ for all $0\leq i\leq n$ by \cref{lem:region}. This contradicts the fact that $b\in T\sm C$. Therefore, $v_{n-1}\in T\sm C$. Similarly, the edge $v_{n-1}\hus c$ cannot be $v_{n-1}\huh c$. If we have $v_{n-1}\huo c$, then we have $b\tuh v_{n-1}$ or $b\ouh v_{n-1}$ by \cref{lem:head_circle}.  If we have $v_{n-1}\hut c$, then $b\tuh v_{n-1}$ by the invisibility of $c\tuh v_{n-1}$ and $\fci{2}$ and $\fci{8}$. These cases both lead to a contradiction, since $v_{n-1}\notin C$ is a possible child of $b\in \Bb_{i_m}$, which contradicts the fact that $\Bb_{i_m}$ is the last element in the topological order of the buckets in $\Pf_T$ contained in $T\sm C$. Overall, the initial assumption that $B\cap T=\emptyset$ is false and we finish the proof.

\end{proof}

\begin{lemma}\label{lem:prop_construct_MAG}
    Suppose the setting of \cref{thm:idp_complete} and \cref{cond:complete_I}. Let $\Mf$ be a MAG constructed from $\Pf$ according to Lemma~\ref{lem:construct_MAG}. Then the induced subgraph $\Mf_{\hat{T}}$ establishes the following properties:
    \begin{enumerate}
        \item for every $t\in \hat{T}\sm \hat{C}$, there exists $u\in \ch_{\Mf_{\hat{T}}}(t)$ such that the edge $t\tuh u$ is invisible, or there is a bidirected path connecting $t$ and $u$ in $\Mf_{\hat{T}}$;
        \item $\pcc_{\Mf_{\hat{C}}}(c)=\hat{C}$ for every $c\in \hat{C}$;
        \item $B\cap (\hat{T}\sm \hat{C})\ne \emptyset$; and
        \item $\hat{C}\subseteq \ant_{\Mf_{\Vc\sm B}}(A)$.
    \end{enumerate}
\end{lemma}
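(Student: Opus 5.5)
I will prove the four parts of \cref{lem:prop_construct_MAG} separately, working throughout with the MAG $\Mf$ of \cref{lem:construct_MAG}. The key facts used are \cref{state:no_bucket_split}, \cref{lem:idp_version}, \cref{lem:intarget}, \cref{lem:region}, \cref{lem:prop_bucket} and \cref{lem:property_visible}.

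\textbf{Part~(2).} Parts~(2) and (4) are the more direct ones, so I start with (2). Fix $c\in\hat{C}$. By construction, $\hat{C}$ contains exactly one representative per $\Pf$-bucket inside $C$. Step~(4) of \cref{lem:construct_MAG} chose each representative $c\in\Bb_i$ to lie in $\pcc_{\Pf_C}(c^*)$. I would first upgrade this to a pc-connecting path into $c^*$ whose intermediate nodes can be replaced by bucket representatives. Replacing a node by another node of the same bucket preserves arrowhead marks by \cref{lem:prop_bucket}, and \cref{lem:property_visible} preserves invisibility. Hence we get a pc-connecting path inside $\Pf_{\hat{C}}$.

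Next I transfer the path to $\Mf$. Orientation turns $\ouh$ into $\tuh$ and keeps invisibility, by the argument of \cref{lem:orientation_invisible}. Edges $\huo$ between different buckets become $\hut$ and stay invisible. So the path stays pc-connecting in $\Mf_{\hat{C}}$. Concatenating two such paths into $c^*$ gives $\pcc_{\Mf_{\hat{C}}}(c)=\hat{C}$.

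\textbf{Part~(4).} Since $C\subseteq\Dc=\pant_{\Pf_{\Vc\sm B}}(A)$, every $c\in\hat{C}$ has a potentially anterior path to $A$ avoiding $B$. I would use \cref{lem:podpath} and the choice of orientations in Steps~(1)--(4) to argue that the corresponding path in $\Mf$ is anterior. The circle components are oriented as DAGs without unshielded colliders, and $\ouh$ edges become tails. If needed, I would reroute the path through bucket representatives.

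\textbf{Part~(1).} For $t\in\tilde{T}$ I use the choice made in Step~(3): $t$ has a possible child $z\notin\Bb$ with $z\in\pcc_{\Pf_T}(t)$, and such a $t$ exists by \cref{lem:idp_version}. In $\Mf$ the edge $t\suh z$ becomes $t\tuh z$. I would then distinguish two cases.
\begin{itemize}
\item If the edge is invisible in $\Pf_T$, it stays invisible in $\Mf_{\hat{T}}$ after replacing $z$ by its bucket representative, using \cref{lem:prop_bucket} and \cref{lem:property_visible}.
\item If it is visible, the pc-connecting path of length greater than one consists of $\huh$ edges. These can be moved onto representatives and give a bidirected path from $t$ to $u$.
\end{itemize}
The main obstacle is ensuring that $z$ (or its representative) lies in $\hat{T}$ rather than outside $T$. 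Here I would use that $\pch_{\Pf_T}$ is taken inside $T$ and that buckets are not split.

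\textbf{Part~(3).} \cref{lem:intarget} gives $B\cap T\ne\emptyset$. Since $C\subseteq\Dc$ is disjoint from $B$, we get $B\cap(T\sm C)\ne\emptyset$. The remaining issue is that $\hat{T}$ keeps only one node per bucket, and that node may not lie in $B$. I would therefore revise Step~(3) so that, in a bucket meeting $B$, the chosen $t$ is taken in $B$. This is possible because \cref{cond:complete_I} forces edges out of $B$ toward $\Dc$ to be visible, and then the pc-component argument of \cref{lem:idp_version} can be rerun from $b\in B$. Verifying that this modified choice still satisfies Part~(1) is the delicate step I expect to cost the most.
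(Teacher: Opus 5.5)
Your plan for Parts~(1), (2) and (4) follows the paper's outline, subject to the caveats below, but Part~(3) has a genuine gap.

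\textbf{Part~(3).} You propose to re-choose the Step~(3) representative of a bucket meeting $B$ so that it lies in $B$. This does not work as a proof, for three reasons:
\begin{enumerate}
\item The lemma concerns the MAG of \cref{lem:construct_MAG} for \emph{any} admissible choices, not a specially modified one.
\item Nothing guarantees that a node of $B$ in that bucket satisfies the Step~(3) requirement; \cref{lem:idp_version} only produces \emph{some} witness in each bucket.
\item You leave Part~(1) unverified for the new choice.
\end{enumerate}
No modification is needed. The missing idea is to feed Part~(1) into a topological order:
\begin{itemize}
\item By \cref{lem:intarget} and the no-split property, some bucket meeting $B$ lies in $T\sm\Dc$ and contributes a representative to $\hat{T}$. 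Hence $\hat{T}\cap(\Vc\sm\Dc)\ne\emptyset$.
\item Let $v$ be the last node of $\hat{T}\cap(\Vc\sm\Dc)$ in a topological order of $\Mf_{\hat{T}\cap(\Vc\sm\Dc)}$. Because $\hat{C}\subseteq C\subseteq\Dc$, we have $v\in\hat{T}\sm\hat{C}$.
\item Part~(1) then gives a child $u\in\hat{T}$ of $v$, and $u\in\Dc$ by the maximality of $v$.
\item So $v$ is a possible parent in $\Pf$ of a node of $\Dc=\pant_{\Pf_{\Vc\sm B}}(A)$, while $v\notin\Dc$. This forces $v\in B$.
\end{itemize}
Thus $B\cap(\hat{T}\sm\hat{C})\ne\emptyset$ whatever representatives were chosen.

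\textbf{Visible case of Part~(1).} The pc-connecting path need not be purely bidirected, and you must name a node that is both a child of $t$ and bidirected-connected to it.
\begin{itemize}
\item A first edge $t\ouh v_1$ or an invisible $t\tuh v_1$ is handled as in the invisible case.
\item A last edge $v_{n-1}\huo z$ or an invisible $v_{n-1}\hut z$ requires showing that $t\suh v_{n-1}$ exists and is not bidirected. Use \cref{lem:head_circle} or invisibility, together with \cref{lem:podpath} or the absence of almost directed cycles. Then the representative of $v_{n-1}$'s bucket is the required child.
\item Only when $v_{n-1}\huh z$ is the representative of $z$'s bucket the required child.
\end{itemize}

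\textbf{Part~(4).} An arbitrary potentially anterior path from $c$ to $A$ need not become anterior in $\Mf$, because orienting a circle component can create shielded colliders. The paper instead passes to an \emph{uncovered} potentially directed path \cite[Lemma~B.1]{Zhang08complete}. It then uses two facts: every circle mark at $c$ becomes a tail, and circle components receive no unshielded colliders. Paths containing edges $v_i\sut v_{i+1}$ are treated separately. Rerouting through bucket representatives plays no role there.
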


\begin{proof}[Proof of \cref{lem:prop_construct_MAG}]

\textbf{Show property~(1)}. Let $t\in\hat{T}\sm \hat{C}$ and $\Bb$ be the bucket in $\Pf$ where $t\in \Bb$. By Step~(3) of \cref{lem:construct_MAG}, we have $t\in \pcc_{\Pf_T}(z)$ where $z\in \pch_{\Pf_T}(t)$ and $z\notin \Bb$. Since $z\in \pch_{\Pf_T}(t)$, there must be an edge between $t$ and $z$ that is not into $t$. Since nodes $t$ and $z$ are in different buckets, the edge $t\sus z$ cannot be $t\tut z$, $t\tuo z$, $t\out z$, or $t\ouo z$. Assume that we have $t\ouh z$, or invisible directed edge $t\tuh z$ in $\Pf_T$. Then for every $v\in \bu_{\Pf_T}(z)$, by \cref{lem:head_circle,lem:property_visible}, we have edge $t\suh v$ in $\Pf_T$ that is not visible. Therefore, we have $t\suh u$ that is not visible where $u\in \hat{T}\cap \bu_{\Pf_T}(z)$. By the construction of $\Mf$, we have invisible directed edge $t\tuh u$ in $\Mf_{\hat{T}}$.

We now consider the case where visible directed edge $t\tuh z$ is in $\Pf_T$. Since $t\in \pcc_{\Pf_T}(z)$, by \cref{def:pc-component} there must be a collider path in $\Pf_T$ 
\[
t\suh v_1\huh \cdots \huh v_{n-1}\hus z
\] 
with $n>1$ and none of the edges are visible. WLOG, we can assume that $t\suh v_1$ is $t\huh v_1$. In fact, if $t\suh v_1$ is $t\ouh v_1$ or invisible $t\tuh v_1$, then we can argue similarly to the last part and find a node $u\in \hat{T}\cap \bu_{\Pf_T}(v_1)$ such that we have an invisible directed edge $t\tuh u$ in $\Mf_{\hat{T}}$. We have two cases: 
\begin{enumerate}
    \item [(i)] $v_{n-1}\hus z$ is not $v_{n-1}\huh z$ in $\Pf_T$;
    \item [(ii)] $v_{n-1}\hus z$ is $v_{n-1}\huh z$ in $\Pf_T$.
\end{enumerate}
 If $v_{n-1}\hus z$ is not $v_{n-1}\huh z$ (i.e., $v_{n-1}\hus z$ is $v_{n-1}\huo z$ or invisible $v_{n-1}\hut z$), then we have $t\suh v_{n-1}$, by \cref{lem:head_circle} ($v_{n-1}\huo z\hut t$) or definition of invisible edges ($v_{n-1}\hut z\hut t$ where $v_{n-1}\hut z$ is invisible). The edge $t\suh v_{n-1}$ cannot be $t\huh v_{n-1}$ by \cref{lem:podpath} (when $z\ouh v_{n-1}$) or the fact that PAGs do not have almost cycles (when $z\tuh v_{n-1}$). This implies that $v_{n-1}\in \pch_{\Pf_T}(t)$. Since $t\huh v_1\huh \cdots \huh v_{n-1}$ in $\Pf_T$, we apply \cref{lem:prop_bucket} to find $\hat{v}_i\in \hat{T}\cap \bu_{\Pf_T}(v_i)$ such that $t\huh \hat{v}_1\huh \cdots \huh \hat{v}_{n-1}\eqcolon u$ in $\Pf_{\hat{T}}$ and $u\in \ch_{\Mf_{\hat{T}}}(t)$ holds by the construction of $\Mf$ and \cref{lem:prop_bucket}. If we have $v_{n-1}\huh z$, then similarly we have $t\huh \hat{v}_1\huh \cdots \huh \hat{v}_{n-1} \huh \hat{z}\eqcolon u$ in $\Mf_{\hat{T}}$ and $u\in \ch_{\Mf_{\hat{T}}}(t)$. This finishes the proof of Property~(1).

\textbf{Show property~(2)}.  If $\hat{C}$ is a singleton set, then the claim trivially holds. So we can assume WLOG that $\hat{C}$ is not a singleton set.  For every $c\in \hat{C}$, we have $c\in \pcc_{\Pf_{C}}(c^*)$ by Step~(4) of \cref{lem:construct_MAG}. Since $\Bb_m$ is the last bucket according to a topological order over $\Pf_C$ and $c^*\in \Bb_m$, there is no edges of the form $c^*\ouh v$ or $c^*\tuh v$ in $\Pf_C$. For $c\in \hat{C}$ such that $c\notin \Bb_m$, there is no undirected edges connecting $c$ and $c^*$. So the pc-connecting path between $c$ and $c^*$ in $\Pf_C$ must be a collider path of the form \[c\suh v_1\huh \cdots \huh v_{n-1}\huh c^*\] where $n>1$ and all the edges are not visible. By \cref{lem:prop_bucket}, we can find a subset of nodes $\{u_{j}\}_{j=1}^\ell$ such that $\{u_{j}\}_{j=1}^\ell \subseteq \hat{C}$ and we have a pc-connecting path $c\suh u_{1}\huh \cdots \huh u_{\ell}\huh c^*$ in $\Pf$. By \cref{lem:orientation_invisible}, we have $c\suh u_{1}\huh \cdots \huh u_{\ell}\huh c^*$ in $\Mf_{\hat{C}}$ where all edges are not visible. This means that $c\in \pcc_{\Mf_{\hat{C}}}(c^*)$. Since for every $c_1,c_2\in \hat{C}$ the pc-connecting paths from $c_1$ to $c^*$ and from $c_2$ to $c^*$ are both into $c^*$, there is a pc-connecting path between $c_1$ and $c_2$ in $\Mf_{\hat{C}}$. Hence, we have $\pcc_{\Mf_{\hat{C}}}(c)=\hat{C}$ for every $c\in\hat{C}$.

\textbf{Show property~(3)}. By \cref{lem:intarget}, there exists a bucket $\Bb$ in $\Pf_T$ such that $\Bb\cap T\ne \emptyset$. Recall that by \cref{cond:complete_I}, we have either $\Bb\subseteq \Dc$ or $\Bb\subseteq \Vc\sm \Dc$. If we have $\Bb\subseteq \Dc$, then it implies $b\in \Dc$, which contradicts the definition of $\Dc$. Therefore, we have $\Bb\subseteq \Vc\sm \Dc$. By \cref{lem:construct_MAG} and the fact that $\Bb\subseteq T$, there exists a node $d^*\in \Bb$ such that $d^*\in \hat{T}\cap (\Vc\sm \Dc)$ and therefore $\hat{T}\cap (\Vc\sm \Dc)\ne \emptyset$.  Let $v_1\prec \cdots \prec v_n$ be a topological order over $\Mf_{\hat{T}\cap (\Vc\sm \Dc)}$. Note that $v_n\in\hat{T}\sm \hat{C}$, since $\hat{C}\subseteq C\subseteq \Dc$. By part (1) of \cref{lem:prop_construct_MAG}, there exists $u\in \ch_{\Mf_{\hat{T}}}(v_n)$ such that $u\in \pcc_{\Mf_{\hat{T}}}(v_n)$. Since $v_n$ is the last element according to the given topological order over $\Mf_{\hat{T}\cap (\Vc\sm \Dc)}$, we have $u\in \Dc$. This implies that $v_n\in \pant_{\Pf}(A)$. Since $v_n\notin \Dc=\pant_{\Pf_{\Vc\sm B}}(A)$ and $u\in \Dc$, it must be that $v_n\in B$.

\textbf{Show property~(4)}. 
Note that $\hat{C}\subseteq \Dc=\pant_{\Pf_{\Vc\sm B}}(A)$ and no pair of nodes in $\hat{C}$ are in the same $\Pf$-bucket. Let $c\in \hat{C}$ and 
\[
\pi:c=v_0\sus \cdots \sus v_{n}\in A
\]
be a shortest potentially anterior path from $c$ to $A$ in $\Pf_{\Vc\sm B}$. If there are no edges of the form $v_i\sut v_{i+1}$ on $\pi$, then $\pi$ is a potentially directed path from $c$ to $A$ and there is an uncovered potentially directed path $\varpi$ from $c$ to $A$ by \cite[Lemma~B.1]{Zhang08complete}. Let $\varpi_*$ denote the corresponding path of $\varpi$ in $\Mf$. Since we orient all edges with circles near $c$ out of $c$ in $\Mf$, $\varpi_*$ in $\Mf$ is an anterior directed path out of $c$ (note that if we have $v_{i-1}\ouo v_i\ouo v_{i+1}$ in $\Pf$ then we must have $v_{i-1}\tut v_i\tut v_{i+1}$ or $v_{i-1}\tuh v_i\tuh v_{i+1}$ since $v_{i-1}\tuh v_i\hut v_{i+1}$ would introduce an unshielded collider). So $c\in \ant_{\Mf_{\Vc\sm B}}(A)$. Now consider the case where $\pi$ contains an edge of the form $v_i\sut v_{i+1}$. Let $v_j\sut v_{j+1}$ be the last edge of this form on $\pi$ starting from $c$. Since $\Pf$ is a COPAG and $\pi$ is shortest, we do not have pattern $v_{j-1}\ouo v_{j}\tut v_{j+1}$ or $v_{j-1}\ouo v_{j}\out v_{j+1}$ on $\pi$ (by $\fci{6}$ and $\fci{7}$ and the fact that $v_{j-1}$ is non-adjacent to $v_{j+1}$). Therefore, the corresponding path in $\Mf_{\Vc\sm B}$ of subpath $\pi(c,v_j)$ must be undirected. Note that either $v_j\in A$ or there is a non-trivial potentially directed path from $v_j$ to $A$. Similar to the last part, we can find an anterior path from $v_j$ to $A$ in $\Mf_{\Vc\sm B}$. Then we can find an anterior path from $c$ to $A$ in $\Mf_{\Vc\sm B}$. Hence, $\hat{C}\subseteq \ant_{\Mf_{\Vc\sm B}}(A)$.

\end{proof}

\begin{lemma}\label{lem:prop_construct_MAG_II}
    Let $v\in \hat{T}$. Then $\pcc_{\Mf_{\hat{T}}}(v)=\hat{T}$.    
\end{lemma}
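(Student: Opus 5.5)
The plan is to fix a hub node in $\hat{C}$, show that every node of $\hat{T}$ is joined to it in $\Mf_{\hat{T}}$ by a pc-connecting path into the hub, and then glue such paths together.

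Take the hub to be $c^*\in\hat{C}$, chosen in Step~(4a) of \cref{lem:construct_MAG}. We first argue that every $v\in\hat{T}$ lies in $\pcc_{\Mf_{\hat{T}}}(c^*)$ via a pc-connecting path that is into $c^*$. Two such paths, one from $v$ and one from $w$, can then be concatenated at $c^*$. Since both end in arrowheads at $c^*$, the junction is a collider, and after deleting repeated nodes we obtain a collider path from $v$ to $w$ with no visible edges. This shows $\pcc_{\Mf_{\hat{T}}}(v)=\hat{T}$. The same gluing device was already used in Steps~1 and~3 of \cref{lem:idp_version}.

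\textbf{Nodes in $\hat{C}$.} For $v\in\hat{C}$, property~(2) of \cref{lem:prop_construct_MAG} and its proof supply a collider path $v\suh u_1\huh\cdots\huh u_\ell\huh c^*$ inside $\Mf_{\hat{C}}$ with no visible edges. Enlarging the induced subgraph from $\hat{C}$ to $\hat{T}$ can only make a directed edge visible; it cannot create a visible edge that was invisible before. To control this I would invoke \cref{lem:orientation_invisible}, which says that edges incident to the chosen representatives are not visible in $\Mf$. I would then check that visibility with respect to $\Mf_{\hat{T}}$ is no stronger than visibility with respect to $\Mf$, because visibility in the smaller subgraph requires a witness node inside that subgraph.

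\textbf{Nodes in $\hat{T}\sm\hat{C}$.} For $t\in\hat{T}\sm\hat{C}$, property~(1) of \cref{lem:prop_construct_MAG} gives a child $u\in\ch_{\Mf_{\hat{T}}}(t)$ such that either $t\tuh u$ is invisible or $t$ and $u$ are joined by a bidirected path. In both cases $t\in\pcc_{\Mf_{\hat{T}}}(u)$ through a path into $u$. We then induct along a topological order of $\Mf_{\hat{T}}$, going from the last nodes backwards. Assume $u$ already has a pc-connecting path into $c^*$. We splice the path from $t$ to $u$ onto the path from $u$ to $c^*$.

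\textbf{Main obstacle.} The difficulty is the splice at $u$ when the path from $u$ to $c^*$ leaves $u$ through a tail, that is, when $u\tuh w$ is invisible. In that case $u$ is not a collider on the concatenation. I would handle this exactly as in Step~3 of \cref{lem:idp_version}. The edge $t\tuh u$ (or the last bidirected edge into $u$), combined with the invisible edge $u\tuh w$, forces an edge $t\suh w$ by \cref{lem:property_visible}, and that edge is not visible. Ancestrality of $\Mf$ rules out $t\huh w$ being replaced by an edge out of $w$. Bypassing $u$ this way shortens the chain, so induction on its length finishes the step.

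Throughout, we need the bypass nodes to remain in $\hat{T}$. This is where the bucket representatives matter: by \cref{lem:prop_bucket}, adjacency patterns transfer to the chosen representative of each bucket, so every bypass node can be taken to be a representative in $\hat{T}$. Checking this transfer in every case is the part I expect to be the most delicate.
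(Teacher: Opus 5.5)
Your proposal is correct and is essentially the paper's proof. You fix a hub in $\hat C$ and obtain the $\hat C$ nodes from Part~(2) of \cref{lem:prop_construct_MAG}. You reach each $t\in\hat T\sm\hat C$ through the child supplied by Part~(1), and you bypass an invisible first edge $u\tuh w$ of the hub path by applying \cref{lem:property_visible} repeatedly back along the incoming path. For a bidirected incoming path this first gives $x\suh w$ for the last node $x$ before $u$, not $t\suh w$ directly, so the iteration is genuinely needed. The differences are only bookkeeping: the paper takes a hub $v^*$ with no tails in $\Mf_{\hat T}$, so that ``into the hub'' is automatic, and inducts forward along a directed chain to $\hat C$; you instead carry ``into $c^*$'' through a backward topological induction.

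Your closing worry is unnecessary. The bypass only adds an edge between nodes already lying on the two paths, and \cref{lem:property_visible} is applied inside the induced subgraph $\Mf_{\hat T}$, so no bucket-representative transfer is needed at this stage.
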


\begin{proof}[Proof of \cref{lem:prop_construct_MAG_II}]
   
   First, note that by Part~(1) of \cref{lem:prop_construct_MAG}, our MAG $\Mf_{\hat{T}}$ is not a purely undirected graph. Then the same argument of \cite[Lemma~7]{jaber19idencom} works modulo some minor modifications. For readers' convenience, we in the following reproduce the argument in detail. 
   
   \textbf{Step~1: reduce the problem to showing $v\in \pcc_{\Mf_{\hat{T}}}(v^*)$ for every node $v\in \hat{T}\sm \hat{C}$}. Since MAGs do not have cycles and arrowheads cannot meet undirected edges, we can pick a node $v^*$ in $\Mf_{\hat{T}}$ such that there are arrowheads but no tails near it. The goal is to show that $v\in \pcc_{\Mf_{\hat{T}}}(v^*)$ for every node $v$ in $\Mf_{\hat{T}}$. This will imply the result. Indeed, if this is true, then for every $v,w\in \hat{T}$ we have $v,w\in \pcc_{\Mf_{\hat{T}}}(v^*)$ and the pc-connecting paths $\pi_v$ from $v$ to $v^*$ and $\pi_w$ from $v^*$ to $w$ in $\Mf_{\hat{T}}$ both have arrowheads towards $v^*$ by the choice of $v^*$. Then the path $\pi_v\oplus \pi_w$ is a pc-connecting path from $v$ to $w$, which implies $\hat{T}\subseteq \pcc_{\Mf_{\hat{T}}}(v)$. Let $v\in \hat{T}$ be such that $v\ne v^*$. Note that by Part~(1) of \cref{lem:prop_construct_MAG}, $v^*\in \hat{C}$ since there are no tails near $v^*$. If $v\in \hat{C}$, then by Part~(2) of \cref{lem:prop_construct_MAG} we have $v\in \pcc_{\Mf_{\hat{T}}}(v^*)$. Therefore, we only need to consider the case where $v\in \hat{T}\sm \hat{C}$. In the following, we fix an arbitrary $v\in \hat{T}\sm \hat{C}$.
   
   \textbf{Step~2}. By Part~(1) of \cref{lem:prop_construct_MAG}, there is a shortest directed path \[\pi:v\tuh v_1\tuh \cdots \tuh v_{n-1}\tuh v_n=c\] for some $c\in \hat{C}$ and $v_1,\ldots, v_{n-1}\in \hat{T}$ and for every consecutive pair $v_i$, $v_{i+1}$, there is either an invisible edge $v_i\tuh v_{i+1}$ or a bidirected path between $v_i$ and $v_{i+1}$ in $\Mf_{\hat{T}}$. We shall show by induction that $v\in \pcc_{\Mf_{\hat{T}}}(v_i)$ for every $v_i$ (including node $c$) and that the pc-connection path between them is into $v_i$.  

   \textbf{Step~2.1: induction}. The base case is trivial since we have that the directed edge $v\tuh v_1$ is invisible or there is a bidirected path from $v$ to $v_1$ by Part~(1) of \cref{lem:prop_construct_MAG}. For induction, we assume that the conclusion holds for all $v_i$ with $1\leq i\leq k$. We shall prove the conclusion for $v_{k+1}$. By the induction hypothesis, we have $v\in \pcc_{\Mf_{\hat{T}}}(v_k)$ and path with $m\ge 1$ of the form
   \[
   \tilde{\pi}:v=u_0\suh u_1\huh \cdots \huh u_{m-1}\huh v_k.
   \]  
    If there is a bidirected path from $v_k$ to $v_{k+1}$, then we are done. So we assume that the directed edge $v_k\tuh v_{k+1}$ is invisible. Then $u_j$ are all adjacent to node $v_{k+1}$ with an arrowhead on $v_{k+1}$. If we have $u_j\huh v_{k+1}$ for some $j$, then we are done. So we assume that $u_j\in \pa_{\Mf_{\hat{T}}}(v_{k+1})$ for all $j\in\{0,\ldots,m-1\}$. If the directed edge $u_0=v\tuh v_{k+1}$ is invisible in $\Mf_{\hat{T}}$, then we are done. The goal is to show that $v\tuh v_{k+1}$ must be invisible in $\Mf_{\hat{T}}$. If we can show that $u_1\tuh v_{k+1}$ is invisible in $\Mf_{\hat{T}}$, then since the first edge $v\suh u_1$ of a pc-connecting path is not visible directed by definition, \cref{lem:property_visible} implies that $v\tuh v_{k+1}$ is also invisible.
    
    Therefore, the remaining task is to show that $u_1\tuh v_{k+1}$ is invisible in $\Mf_{\hat{T}}$. We argue by contradiction. Assume that this is not the case, i.e., the directed edge $u_1\tuh v_{k+1}$ is visible in $\Mf_{\hat{T}}$. Then there exists a node $d\in \hat{T}$ such that $d\suh u_1$ or $d\suh w_1\huh \cdots \huh w_{\ell-1}\huh u_1$ with $w_i\in \pa_{\Mf_{\hat{T}}}(v_{k+1})$ and $d$ is non-adjacent to $v_{k+1}$. Then we have 
   \[
   d\suh u_1\huh u_2\cdots \huh u_{m-1}\huh v_k \quad \text{ or } \quad d\suh w_1\huh \cdots \huh w_{\ell-1}\huh u_1\huh \cdots \huh v_k
   \] 
   with $w_i,u_j\in \pa_{\Mf_{\hat{T}}}(v_{k+1})$. Recall that $d$ is non-adjacent to $v_{k+1}$. Therefore, the directed edge $v_{k}\tuh v_{k+1}$ is visible, which  contradicts the fact that $v_{k}\tuh v_{k+1}$ is invisible. Hence, we have shown our claim that $u_1\tuh v_{k+1}$ is invisible in $\Mf_{\hat{T}}$. This finishes the proof of the induction step.

   \textbf{Step~2.2}. The above induction establishes that $v\in \pcc_{\Mf_{\hat{T}}}(c)$ and the pc-connecting path is into $c$:
   \[
   v\suh \tilde{v}_1\huh \cdots \huh \tilde{v}_{\tilde{m}-1}\huh c.
   \]
   If there is a bidirected path between $c$ and $v^*$ in $\Mf_{\hat{T}}$, then we are done. Otherwise, we have that $c\in \pcc_{\Mf_{\hat{T}}}(v^*)$ and the pc-connecting path is into $v^*$ but not into $c$. Let 
   \[
   \tilde{\pi}: c\tuh \tilde{u}_{1}\huh \cdots \huh \tilde{u}_{\tilde{n}-1}\huh v^*
   \] 
   be such a pc-connecting path. Note that we have $v\suh \tilde{v}_1\huh \cdots \huh \tilde{v}_{\tilde{m}-1}\huh c\tuh \tilde{u}_1$. Therefore, we can argue similarly to Step~2.1 and obtain that $v\in\pcc_{\Mf_{\hat{T}}}(\tilde{u}_1)$ and the pc-connecting path is into $\tilde{u}_1$. The path constructed by first concatenating the pc-connecting path between $v$ and $\tilde{u}_1$ and $\tilde{u}_1\huh \cdots \huh \tilde{u}_{\tilde{n}-1}\huh v^*$ and second eliminating repeated nodes (if any) gives a pc-connecting path from $v$ to $v^*$ that is into $v^*$. This finishes the proof.  
\end{proof}

\newpage

\begin{figure}[H]
\centering
\begin{tikzpicture}[
  node distance = 14mm and 20mm,
  every node/.style = {font=\small, align=center}
]

\node[ndout] (B6) {\ref{lem:sint_mag_II}};
\node[ndout, right=34mm of B6, yshift=10mm] (B8) {\ref{lem:sint_mag_III}};
\node[ndout, below=26mm of B8] (B9) {\ref{lem:sep_hint_mag}};
\node[ndout, right=34mm of B9, yshift=13mm] (T230) {\ref{thm:sep_hsint_mag}};

\node[ndint, above=14mm of B6, inner sep=2pt] (A14) {\ref{lem:visible_edge_II}};
\node[ndint, below=14mm of B6, inner sep=2pt] (A11) {\ref{lem:inducing_path}};

\node[ndout, above=14mm of B8] (B4) {\ref{lem:sint_mag_I}};

\node[ndout, above=10mm of B9, xshift=16mm] (S229) {\ref{prop:sep_mag}};

\node[ndout, left=18mm of B9, yshift=6mm] (B2) {\ref{lem:anc_selec_I}};
\node[ndout, left=18mm of B9, yshift=-10mm] (B3) {\ref{lem:anc_selec_II}};

\node[ndint, right=18mm of B9, yshift=-10mm, inner sep=2pt] (A13) {\ref{lem:visible_edge_I}};

\draw[arout] (A14) -- (B6);
\draw[arout] (A14) -- (T230);
\draw[arout] (A11) -- (B6);

\draw[arout] (B6) -- (B8);
\draw[arout] (B4) -- (B8);

\draw[arout] (S229) -- (B9);

\draw[arout] (B2) -- (B9);
\draw[arout] (B3) -- (B9);
\draw[arout] (A13) -- (B9);

\draw[arout] (B8) to[bend left=10] (T230);
\draw[arout] (B9) to[bend right=10] (T230);
\draw[arout] (S229) -- (T230);

\end{tikzpicture}
\caption{Proof structure of \cref{thm:sep_hsint_mag}.}
\label{fig:pf_structure_sep_hsint_mag}
\end{figure}

\begin{figure}[H]
\centering
\begin{tikzpicture}[
  node distance = 12mm and 18mm,
  every node/.style = {font=\small, align=center}
]

\node[ndout] (B10) {\ref{lem:soft_mani_SOPAG}};
\node[ndout, right=10mm of B10] (B11) {\ref{lem:0}};
\node[ndout, right=10mm of B11] (B13) {\ref{lem:1}};
\node[ndout, right=10mm of B13] (B15) {\ref{lem:2}};
\node[ndout, right=10mm of B15] (B16) {\ref{lem:3}};
\node[ndout, right=10mm of B16] (B18) {\ref{lem:4}};
\node[ndout, right=10mm of B18] (T239) {\ref{thm:sep_hsint_pag_mag}};
\node[ndout, above=18mm of B16] (B17) {\ref{lem:definite_hint}};

\node[ndint, below =14mm of B10, inner sep=2pt] (A22) {\ref{lem:property_visible}};

\node[ndint, above=18mm of B15, xshift=-10mm, inner sep=2pt] (A16) {\ref{lem:head_circle}};

\node[ndint, above=16mm of T239, inner sep=2pt] (A21) {\ref{lem:orientation_invisible}};
\node[ndint, below=16mm of T239, inner sep=2pt] (S229) {\ref{prop:sep_mag}};

\draw[arout] (A22) -- (B10);
\draw[arout] (A22) to[bend right=20,looseness=1.12] (B16);

\draw[arout] (B10) -- (B11);
\draw[arout] (B11) -- (B13);
\draw[arout] (B13) -- (B15);
\draw[arout] (B15) -- (B16);
\draw[arout] (B16) -- (B18);
\draw[arout] (B18) -- (T239);

\draw[arout] (B16) -- (B17);
\draw[arout] (B17) -- (B18);

\draw[arout] (A21) -- (T239);
\draw[arout] (S229) -- (T239);

\draw[arout] (A16) -- (B10);
\draw[arout] (A16) -- (B15);
\draw[arout] (A16) -- (B16);
\draw[arout] (A16) -- (B18);


\draw[arout] (B10) to[bend right=20,looseness=1.10] (B15);

\draw[arout] (B13) to[bend right=20,looseness=1.12] (B16);

\draw[arout] (B11) to[bend right=20,looseness=1.35] (B18);

\draw[arout] (B10) to[bend right=40,looseness=1.5] (B18);

\end{tikzpicture}
\caption{Proof structure of \cref{thm:sep_hsint_pag_mag}.}
\label{fig:pf_structure_sep_hsint_pag_mag}
\end{figure}

\begin{figure}[H]
\centering
\begin{tikzpicture}[
  node distance = 12mm and 18mm,
  every node/.style = {font=\small, align=center}
]

\node[ndout] (C2) {\ref{lem:IDP_sound_ci_I}};
\node[ndout, right=18mm of C2] (C3) {\ref{lem:IDP_sound_I}};
\node[ndout, below=14mm of C3] (N35) {\ref{thm:causal_calculus_mag_pag}};

\node[ndout, below=30mm of C2] (C1) {\ref{lem:pde_vs_pdet}};
\node[ndout, right=18mm of C1] (C5) {\ref{lem:IDP_sound_ci_IV}};
\node[ndout, right=14mm of C5] (C6) {\ref{lem:IDP_sound_III}};
\node[ndout, right=30mm of C6] (T47) {\ref{prop:sidp_rule}};
\node[ndout, right=18mm of T47] (T413) {\ref{thm:idp_sound}};

\node[ndout, below=14mm of C5] (C4) {\ref{lem:topo_order}};
\node[ndint, below=14mm of C1, yshift=6mm, inner sep=2pt] (A19) {\ref{lem:podpath}};
\node[ndint, below=14mm of C1, yshift=-6mm, inner sep=2pt] (A18) {\ref{lem:prop_bucket}};

\node[ndout, right=14mm of C4] (C7) {\ref{lem:IDP_sound_ci_2}};
\node[ndout, below=14mm of C4] (C8) {\ref{lem:IDP_sound_ci_3}};
\node[ndint, below=35mm of C1, inner sep=2pt] (A20) {\ref{lem:region}};

\node[ndout, right=14mm of C7] (C9) {\ref{lem:IDP_sound_II}};

\draw[arout] (C2) -- (C3);
\draw[arout] (N35) -- (C3);

\draw[arout] (C1) -- (C5);
\draw[arout] (C5) -- (C6);
\draw[arout] (C6) -- (T47);
\draw[arout] (T47) -- (T413);

\draw[arout] (A19) -- (C4);
\draw[arout] (A18) -- (C4);
\draw[arout] (C4) -- (C5);

\draw[arout] (C4) -- (C7);
\draw[arout] (C4) -- (C8);
\draw[arout] (A20) -- (C8);

\draw[arout] (C7) -- (C9);
\draw[arout] (C8) -- (C9);

\draw[arout] (N35) to (C6);
\draw[arout] (N35) to[bend right=20, looseness=1.15] (C9);

\draw[arout] (C9) to (T47);

\draw[arout] (C3) to (T47);

\end{tikzpicture}
\caption{Proof structure of \cref{thm:idp_sound,prop:sidp_rule}.}
\label{fig:pf_structure_idp_sound}
\end{figure}

\begin{figure}[H]
\centering
\begin{tikzpicture}[
  node distance=12mm and 18mm,
  every node/.style={font=\small,align=center}
]

\begin{scope}

\node[ndout] (C13) {\ref{lem:noniden}};

\node[ndout, above left=15mm and 5mm of C13] (C19) {\ref{lem:intarget}};
\node[ndout, right=15mm of C19] (C20) {\ref{lem:prop_construct_MAG}};
\node[ndout, right=15mm of C20] (C21) {\ref{lem:prop_construct_MAG_II}};
\node[ndout, left=15mm of C19] (C18) {\ref{lem:idp_version}};

\node[ndout, above=15mm of C19] (C16) {\ref{lem:construct_MAG}};
\node[ndout, above=15mm of C20] (C17) {\ref{lem:prop_construct_MAG_III}};

\node[ndout, above=50mm of C13] (Thm) {\ref{thm:idp_complete}};

\draw[arout] (C18) -- (C19);
\draw[arout] (C19) -- (C20);
\draw[arout] (C20) -- (C21);

\draw[arout] (C16) -- (Thm);
\draw[arout] (C17) -- (Thm);

\draw[arout] (C18) -- (C16);
\draw[arout] (C20) -- (C17);
\draw[arout] (C21) -- (C17);
\draw[arout] (C16) -- (C20);

\draw[arout] (C13) -- (C19);
\draw[arout] (C13) -- (C20);
\draw[arout] (C13) -- (C18);
\draw[arout] (C13) -- (C16);
\draw[arout] (C13) to[bend right=45, looseness=1.20] (C17);

\end{scope}

\coordinate (belowTop) at ($(current bounding box.south)+(13mm,-26mm)$);

\begin{scope}[shift={(belowTop)}]

\node[ndout] (bC19) at (0,0) {\ref{lem:intarget}};

\node[ndint, below=14mm of bC19, inner sep=2pt] (A28) {\ref{lem:region}};

\node[ndout, below=14mm of A28] (bC18) {\ref{lem:idp_version}};

\node[ndout, left=36mm of bC18] (bC16) {\ref{lem:construct_MAG}};
\node[ndint, right=16mm of bC16, inner sep=2pt] (A26) {\ref{lem:prop_bucket}};

\node[ndint, right=22mm of bC18, inner sep=2pt] (A24) {\ref{lem:head_circle}};

\node[ndout, below=28mm of bC18, xshift=10mm] (bC20) {\ref{lem:prop_construct_MAG}};
\node[ndint, above=10mm of bC20, inner sep=2pt] (A27) {\ref{lem:podpath}};

\node[ndint, below=12mm of bC20, inner sep=2pt] (A29) {\ref{lem:orientation_invisible}};
\node[ndout, left=25mm of A29] (bC17b) {\ref{lem:prop_construct_MAG_III}};
\node[ndout, right=10mm of A29] (bC13t) {\ref{lem:noniden}};
\node[ndint, right =10mm  of bC13t, inner sep=2pt] (A32) {\ref{prop:hedge}}; 
\node[ndout, above =10mm  of A32, inner sep=2pt] (thm) {\ref{thm:idp_complete}}; 
\node[ndint, left=26mm of bC20, inner sep=2pt] (A30) {\ref{lem:property_visible}};
\node[ndout, left=22mm of A30] (bC21) {\ref{lem:prop_construct_MAG_II}};

\draw[arout] (A28) -- (bC19);

\draw[arout] (A32) -- (bC13t);
\draw[arout] (A32) -- (thm);
\draw[arout] (A29) -- (bC13t);

\draw[arout] (A26) -- (bC16);
\draw[arout] (A30) -- (bC16);
\draw[arout] (A26) -- (bC18);
\draw[arout] (A30) -- (bC18);
\draw[arout] (A26) -- (bC20);

\draw[arout] (A24) -- (bC18);
\draw[arout] (A24) -- (bC19);
\draw[arout] (A24) -- (bC20);

\draw[arout] (A27) -- (bC20);
\draw[arout] (A27) -- (bC18);

\draw[arout] (A29) -- (bC20);
\draw[arout] (A29) -- (bC17b);

\draw[arout] (A30) -- (bC21);
\draw[arout] (A30) -- (bC20);
\draw[arout] (A30) -- (bC17b);

\end{scope}

\end{tikzpicture}
\label{fig:pf_structure_idp_complete}
\caption{Proof structure of \cref{thm:idp_complete}.}
\end{figure}

\newpage
\bibliographystyle{plainurl}
\bibliography{bibfile_cdscm}

@InProceedings{reisach2021beware_simulated_dag,
  title     = {Beware of the Simulated DAG! Causal Discovery Benchmarks May Be Easy to Game},
  author    = {Reisach, Alexander G. and Seiler, Christof and Weichwald, Sebastian},
  booktitle = {Advances in Neural Information Processing Systems},
  volume    = {34},
  pages     = {27772--27784},
  year      = {2021}
}

@Article{Peters2015structural_int_distance,
  title   = {Structural Intervention Distance (SID) for Evaluating Causal Graphs},
  author  = {Peters, J. and B{\"u}hlmann, P.},
  journal = {Neural Computation},
  volume  = {27},
  number  = {3},
  pages   = {771--799},
  year    = {2015},
  doi     = {10.1162/NECO_a_00708}
}

@InProceedings{gentzel2019case_interventional_eval,
  title     = {The Case for Evaluating Causal Models Using Interventional Measures and Empirical Data},
  author    = {Gentzel, Amanda and Garant, Dan and Jensen, David},
  booktitle = {Advances in Neural Information Processing Systems},
  volume    = {32},
  pages     = {11722--11732},
  year      = {2019}
}

@article{dang2026effect_level_validation,
  title         = {Effect-Level Validation for Causal Discovery},
  author        = {Dang, Hoang and Pham, Luan and Nguyen, Minh},
  year          = {2026},
  journal={arXiv.org preprint},
  volume={arXiv: 2602.08340 [cs.AI]},
  url           = {https://arxiv.org/abs/2602.08340}
}

@article{Chang2025PostSelection,
  title   = {Post-selection inference for causal effects after causal discovery},
  author  = {Chang, Ting-Hsuan and Guo, Zijian and Malinsky, Daniel},
  journal = {Biometrika},
  year    = {2025},
  pages   = {asaf073},
  doi     = {10.1093/biomet/asaf073},
  note    = {Advance article}
}

@article{Gradu2025ValidInference,
  title   = {Valid Inference After Causal Discovery},
  author  = {Gradu, Paula and Zrnic, Tijana and Wang, Yixin and Jordan, Michael I.},
  journal = {Journal of the American Statistical Association},
  year    = {2025},
  volume  = {120},
  number  = {550},
  pages   = {1127--1138},
  doi     = {10.1080/01621459.2024.2402089}
}

@article{Uhler2013Faithfulness,
  title   = {Geometry of the Faithfulness Assumption in Causal Inference},
  author  = {Uhler, Caroline and Raskutti, Garvesh and B{\"u}hlmann, Peter and Yu, Bin},
  journal = {The Annals of Statistics},
  year    = {2013},
  volume  = {41},
  number  = {2},
  pages   = {436--463},
  doi     = {10.1214/12-AOS1080}
}

@InProceedings{Jung21Estimate_causal_effec_on_markov_equiva,
  title     = {Estimating Identifiable Causal Effects on Markov Equivalence Class through Double Machine Learning},
  author    = {Jung, Yonghan and Tian, Jin and Bareinboim, Elias},
  booktitle = {Proceedings of the 38th International Conference on Machine Learning},
  pages     = {5168--5179},
  year      = {2021},
  editor    = {Meila, Marina and Zhang, Tong},
  volume    = {139},
  series    = {Proceedings of Machine Learning Research},
  month     = {18--24 Jul},
  publisher = {PMLR},
  url       = {https://proceedings.mlr.press/v139/jung21b.html}
}

@InProceedings{Bellot24bound_causal_effects_mark_equi,
  title     = {Towards Bounding Causal Effects under Markov Equivalence},
  author    = {Bellot, Alexis},
  booktitle = {Proceedings of the Fortieth Conference on Uncertainty in Artificial Intelligence},
  pages     = {308--332},
  year      = {2024},
  volume    = {244},
  series    = {Proceedings of Machine Learning Research},
  month     = {15--19 Jul},
  publisher = {PMLR},
  url       = {https://proceedings.mlr.press/v244/bellot24a.html}
}

@misc{Dawid2025CoxFoundationsLecture,
  author       = {Dawid, A. Philip},
  title        = {A Lifetime of Irrelevance: {Conditional Independence}: The Relevance of Irrelevance},
  howpublished = {Handout for the David R. Cox Foundations of Statistics Lecture 2025, University of Cambridge},
  year         = {2025},
  url          = {https://www.statslab.cam.ac.uk/~apd25/drcpresentationhandout.pdf},
  urldate      = {2026-02-28}
}

@article{Dawid1979ConditionalIndependence,
  title   = {Conditional Independence in Statistical Theory (with Discussion)},
  author  = {Dawid, A. Philip},
  journal = {Journal of the Royal Statistical Society: Series B (Methodological)},
  year    = {1979},
  volume  = {41},
  number  = {1},
  pages   = {1--31},
  doi     = {10.1111/j.2517-6161.1979.tb01052.x}
}

@article{HuVanDerPas2025SelectingValidAdjustmentSets,
  title         = {Selecting valid adjustment sets with uncertain causal graphs},
  author        = {Hu, Zhongyi and van der Pas, St{\'e}phanie},
  year          = {2025},
  journal={arXiv.org preprint},
  volume={arXiv: 2511.01662 [stat.ST]},
}

@inproceedings{PerkovicKalischMaathuis2017CPDAGBK,
  author    = {Emilija Perkovic and Markus Kalisch and Marloes H. Maathuis},
  title     = {Interpreting and Using {CPDAGs} With Background Knowledge},
  booktitle = {Proceedings of the Thirty-Third Conference on Uncertainty in Artificial Intelligence (UAI 2017), Sydney, Australia, August 11-15, 2017},
  editor    = {Gal Elidan and Kristian Kersting and Alexander Ihler},
  publisher = {AUAI Press},
  year      = {2017},
  url       = {http://auai.org/uai2017/proceedings/papers/120.pdf}
}

@inproceedings{Perkovic2020MPDAGID,
  title     = {Identifying causal effects in maximally oriented partially directed acyclic graphs},
  author    = {Perkovic, Emilija},
  booktitle = {Proceedings of the 36th Conference on Uncertainty in Artificial Intelligence (UAI)},
  editor    = {Peters, Jonas and Sontag, David},
  series    = {Proceedings of Machine Learning Research},
  volume    = {124},
  pages     = {530--539},
  year      = {2020},
  publisher = {PMLR},
  url       = {https://proceedings.mlr.press/v124/perkovic20a.html}
}

@article{WitteHenckelMaathuisDidelez2020EfficientAdjustment,
  title   = {On Efficient Adjustment in Causal Graphs},
  author  = {Witte, Janine and Henckel, Leonard and Maathuis, Marloes H. and Didelez, Vanessa},
  journal = {Journal of Machine Learning Research},
  year    = {2020},
  volume  = {21},
  number  = {246},
  pages   = {1--45},
  url     = {https://jmlr.org/papers/v21/20-175.html}
}

@article{Maathuis2009IDA,
  title   = {Estimating High-Dimensional Intervention Effects from Observational Data},
  author  = {Maathuis, Marloes H. and Kalisch, Markus and B{\"u}hlmann, Peter},
  journal = {The Annals of Statistics},
  year    = {2009},
  volume  = {37},
  number  = {6A},
  pages   = {3133--3164},
  doi     = {10.1214/09-AOS685}
}

@inproceedings{vanDerZander2014ConstructingSeparators,
  title     = {Constructing Separators and Adjustment Sets in Ancestral Graphs},
  author    = {van der Zander, Benito and Li{\'s}kiewicz, Maciej and Textor, Johannes},
  booktitle = {Proceedings of the Thirtieth Conference on Uncertainty in Artificial Intelligence (UAI 2014)},
  year      = {2014},
  publisher = {AUAI Press},
  pages     = {907--916}
}

@inproceedings{Hyttinen2015DoCalculusUnknown,
  title     = {Do-calculus when the True Graph Is Unknown},
  author    = {Hyttinen, Antti and Eberhardt, Frederick and J{\"a}rvisalo, Matti},
  booktitle = {Proceedings of the Thirty-First Conference on Uncertainty in Artificial Intelligence (UAI 2015)},
  year      = {2015},
  publisher = {AUAI Press},
  pages     = {395--404}
}

@article{Nandy2017JointInterventions,
  title   = {Estimating the Effect of Joint Interventions from Observational Data in Sparse High-Dimensional Settings},
  author  = {Nandy, Preetam and Maathuis, Marloes H. and Richardson, Thomas S.},
  journal = {The Annals of Statistics},
  year    = {2017},
  volume  = {45},
  number  = {2},
  pages   = {647--674},
  doi     = {10.1214/16-AOS1462}
}

@techreport{Dawid2007Fundamentals,
  author      = {Dawid, A. Philip},
  title       = {Fundamentals of Statistical Causality},
  institution = {Department of Statistical Science, University College London},
  type        = {Research Report},
  number      = {279},
  year        = {2007},
  pages       = {94},
}

@article{constantinou17extendedci,
 ISSN = {00905364},
 URL = {http://www.jstor.org/stable/26362953},
 author = {Panayiota Constantinou and A. Philip Dawid},
 journal = {The Annals of Statistics},
 number = {6},
 pages = {2618--2653},
 publisher = {Institute of Mathematical Statistics},
 title = {EXTENDED CONDITIONAL INDEPENDENCE AND APPLICATIONS IN CAUSAL INFERENCE},
 urldate = {2025-03-18},
 volume = {45},
 year = {2017}
}

@article{Elwert2014EndogenousSelectionBias,
  title   = {Endogenous Selection Bias: The Problem of Conditioning on a Collider Variable},
  author  = {Elwert, Felix and Winship, Christopher},
  journal = {Annual Review of Sociology},
  year    = {2014},
  volume  = {40},
  number  = {1},
  pages   = {31--53},
  doi     = {10.1146/annurev-soc-071913-043455}
}

@article{Munafo2018ColliderScope,
  title   = {Collider scope: when selection bias can substantially influence observed associations},
  author  = {Munaf{\`o}, Marcus R. and Tilling, Kate and Taylor, Amy E. and Evans, David M. and Davey Smith, George},
  journal = {International Journal of Epidemiology},
  year    = {2018},
  volume  = {47},
  number  = {1},
  pages   = {226--235},
  doi     = {10.1093/ije/dyx206}
}

@article{GreenlandPearlRobins1999CausalDiagrams,
  title   = {Causal diagrams for epidemiologic research},
  author  = {Greenland, Sander and Pearl, Judea and Robins, James M.},
  journal = {Epidemiology},
  year    = {1999},
  volume  = {10},
  number  = {1},
  pages   = {37--48},
  doi     = {10.1097/00001648-199901000-00008}
}

@article{Colombo12learning_highdim_dag,
 ISSN = {00905364, 21688966},
 URL = {http://www.jstor.org/stable/41713636},
 author = {Diego Colombo and Marloes H. Maathuis and Markus Kalisch and Thomas S. Richardson},
 journal = {The Annals of Statistics},
 number = {1},
 pages = {294--321},
 publisher = {Institute of Mathematical Statistics},
 title = {LEARNING HIGH-DIMENSIONAL DIRECTED ACYCLIC GRAPHS WITH LATENT AND SELECTION VARIABLES},
 urldate = {2026-02-26},
 volume = {40},
 year = {2012}
}

@article{Colombo14order_ind_constrint_based_causal_structure_learning,
  author  = {Diego Colombo and Marloes H. Maathuis},
  title   = {Order-Independent Constraint-Based Causal Structure Learning},
  journal = {Journal of Machine Learning Research},
  year    = {2014},
  volume  = {15},
  number  = {116},
  pages   = {3921--3962},
  url     = {http://jmlr.org/papers/v15/colombo14a.html}
}

@article{Ali09mark_equiv_ancestral_graphs,
author = {R. Ayesha Ali and Thomas S. Richardson and Peter Spirtes},
title = {{Markov equivalence for ancestral graphs}},
volume = {37},
journal = {The Annals of Statistics},
number = {5B},
publisher = {Institute of Mathematical Statistics},
pages = {2808 -- 2837},
year = {2009},
doi = {10.1214/08-AOS626},
URL = {https://doi.org/10.1214/08-AOS626}
}

@inproceedings{Pearl1994prob_calculus_action,
  author    = {Pearl, Judea},
  title     = {A Probabilistic Calculus of Actions},
  booktitle = {Proceedings of the Tenth Conference on Uncertainty in Artificial Intelligence (UAI-94)},
  year      = {1994},
  pages     = {454--462},
}

@article{chen2025foundationsstructuralcausalmodels,
      title={Foundations of Structural Causal Models with Latent Selection}, 
      author={Leihao Chen and Onno Zoeter and Joris M. Mooij},
      year={2025},
      url={https://arxiv.org/abs/2401.06925}, 
      journal={arXiv.org preprint},
      volume={arXiv: 2401.06925 [stat.ME]},
}

@article{Rosset_18_bound_card_hidden,
author = {Rosset, Denis and Gisin, Nicolas and Wolfe, Elie},
title = {Universal bound on the cardinality of local hidden variables in networks},
year = {2018},
issue_date = {September 2018},
publisher = {Rinton Press, Incorporated},
address = {Paramus, NJ},
volume = {18},
number = {11–12},
issn = {1533-7146},
journal = {Quantum Info. Comput.},
month = sep,
pages = {910–926},
numpages = {17}
}

@InProceedings{Hwang_2024_pos,
  title     = {On Positivity Condition for Causal Inference},
  author    = {Hwang, Inwoo and Choe, Yesong and Kwon, Yeahoon and Lee, Sanghack},
  booktitle = {Proceedings of the 41st International Conference on Machine Learning},
  pages     = {20818--20841},
  year      = {2024},
  volume    = {235},
  series    = {Proceedings of Machine Learning Research},
  month     = {21--27 Jul},
  publisher = {PMLR},
  url       = {https://proceedings.mlr.press/v235/hwang24a.html},
}

@article{Dawid93Hyper_Markov,
author = {Dawid, A. Philip and Steffen, L. Lauritzen},
title = {{Hyper Markov Laws in the Statistical Analysis of Decomposable Graphical Models}},
volume = {21},
journal = {The Annals of Statistics},
number = {3},
publisher = {Institute of Mathematical Statistics},
pages = {1272 -- 1317},
year = {1993},
doi = {10.1214/aos/1176349260},
URL = {https://doi.org/10.1214/aos/1176349260}
}

@article{Byrne15Structural_Markov,
author = {Simon Byrne and A. Philip Dawid},
title = {{Structural Markov graph laws for Bayesian model uncertainty}},
volume = {43},
journal = {The Annals of Statistics},
number = {4},
publisher = {Institute of Mathematical Statistics},
pages = {1647 -- 1681},
year = {2015},
doi = {10.1214/15-AOS1319},
URL = {https://doi.org/10.1214/15-AOS1319}
}

@article{Goudie19Joining_Splitting,
author = {Robert J. B. Goudie and Anne M. Presanis and David Lunn and Daniela De Angelis and Lorenz Wernisch},
title = {{Joining and Splitting Models with Markov Melding}},
volume = {14},
journal = {Bayesian Analysis},
number = {1},
publisher = {International Society for Bayesian Analysis},
pages = {81 -- 109},
year = {2019},
doi = {10.1214/18-BA1104},
URL = {https://doi.org/10.1214/18-BA1104}
}

@article{massa2010combining,
  title={Combining statistical models},
  author={Massa, M. Sofia and Lauritzen, Steffen L.},
  journal={Contemporary Mathematics},
  volume={516},
  pages={239--259},
  year={2010}
}

@article{bartolucci2000_mtp2_lrt,
  title   = {A Likelihood Ratio Test for {MTP}$_2$ within Binary Variables},
  author  = {Francesco Bartolucci and Antonio Forcina},
  year    = {2000},
  journal = {Annals of Statistics},
  volume  = {28},
  number  = {4},
  pages   = {1206--1218},
  doi     = {10.1214/aos/1015956713},
  url     = {https://doi.org/10.1214/aos/1015956713}
}

@article{lauritzen2021_mtp2_expfam_binary,
  title   = {Total Positivity in Exponential Families with Application to Binary Variables},
  author  = {Steffen L. Lauritzen and Caroline Uhler and Piotr Zwiernik},
  year    = {2021},
  journal = {Annals of Statistics},
  volume  = {49},
  number  = {3},
  pages   = {1436--1459},
  doi     = {10.1214/20-AOS2007},
  url     = {https://doi.org/10.1214/20-AOS2007}
}

@article{fallat2017_mtp2_markov_structures,
  title   = {Total Positivity in Markov Structures},
  author  = {Shaun Fallat and Steffen L. Lauritzen and Kayvan Sadeghi and Caroline Uhler and Nanny Wermuth and Piotr Zwiernik},
  year    = {2017},
  journal = {Annals of Statistics},
  volume  = {45},
  number  = {3},
  pages   = {1152--1184},
  doi     = {10.1214/16-AOS1478},
  url     = {https://doi.org/10.1214/16-AOS1478}
}

@article{lauritzen2019_mle_gaussian_mtp2,
  title   = {Maximum Likelihood Estimation in Gaussian Models under Total Positivity},
  author  = {Steffen L. Lauritzen and Caroline Uhler and Piotr Zwiernik},
  year    = {2019},
  journal = {Annals of Statistics},
  volume  = {47},
  number  = {4},
  pages   = {1835--1863},
  doi     = {10.1214/17-AOS1668},
  url     = {https://doi.org/10.1214/17-AOS1668}
}

@article{chen25notes,
      title={Notes on Forr\'e's Notion of Conditional Independence and Causal Calculus for Continuous Variables}, 
      author={Leihao Chen},
      journal={arXiv.org preprint},
      year={2025},
      volume={arXiv: 2603.24333 [stat.ST]},
      url={https://arxiv.org/abs/2603.24333},
}

@InProceedings{lee20gID,
  title = 	 {General Identifiability with Arbitrary Surrogate Experiments},
  author =       {Lee, Sanghack and Correa, Juan D. and Bareinboim, Elias},
  booktitle = 	 {Proceedings of The 35th Uncertainty in Artificial Intelligence Conference},
  pages = 	 {389--398},
  year = 	 {2020},
  editor = 	 {Adams, Ryan P. and Gogate, Vibhav},
  volume = 	 {115},
  series = 	 {Proceedings of Machine Learning Research},
  month = 	 {22--25 Jul},
  publisher =    {PMLR},
  url = 	 {https://proceedings.mlr.press/v115/lee20b.html}
}

@InProceedings{Yaroslav22revisite_gID,
  title = 	 {Revisiting the general identifiability problem},
  author =       {Kivva, Yaroslav and Mokhtarian, Ehsan and Etesami, Jalal and Kiyavash, Negar},
  booktitle = 	 {Proceedings of the Thirty-Eighth Conference on Uncertainty in Artificial Intelligence},
  pages = 	 {1022--1030},
  year = 	 {2022},
  volume = 	 {180},
  series = 	 {Proceedings of Machine Learning Research},
  month = 	 {Aug},
  url = 	 {https://proceedings.mlr.press/v180/kivva22a.html}
}

@inproceedings{Perkovic15complete,
  author    = {Perkovi\'{c}, Emilija and Textor, Johannes and Kalisch, Markus and Maathuis, Marloes H.},
  title     = {A complete generalized adjustment criterion},
  booktitle = {Proceedings of the 31st Conference on Uncertainty in Artificial Intelligence (UAI)},
  year      = {2015}
}

@inproceedings{Shpitser10validity,
  author    = {Ilya Shpitser and T. J. Van der Weele and James M. Robins},
  title     = {On the validity of covariate adjustment for estimating causal effects},
  booktitle = {Proceedings of the 26th Conference on Uncertainty in Artificial Intelligence (UAI)},
  year      = {2010}
}

@inproceedings{Pearl10confounding,
  author    = {Judea Pearl and Azaria Paz},
  title     = {Confounding equivalence in causal inference},
  booktitle = {Proceedings of the 26th Conference on Uncertainty in Artificial Intelligence (UAI)},
  year      = {2010}
}

@inproceedings{Pearl93aAspects,
  author    = {Judea Pearl},
  title     = {Aspects of graphical models connected with causality},
  booktitle = {Proceedings of the 49th Session of the International Statistical Institute},
  pages     = {391--401},
  year      = {1993}
}

@article{Richard01causal_inference_continuous,
author = {Richard D. Gill and James M. Robins},
title = {{Causal Inference for Complex Longitudinal Data: The Continuous Case}},
volume = {29},
journal = {The Annals of Statistics},
number = {6},
publisher = {Institute of Mathematical Statistics},
pages = {1785 -- 1811},
year = {2001},
doi = {10.1214/aos/1015345962},
URL = {https://doi.org/10.1214/aos/1015345962}
}

@article{forre2025mathematical,
  title={A Mathematical Introduction to Causality},
  author={Forr{\'e}, Patrick and Mooij, Joris M.},
  year={2025},
  url={https://staff.fnwi.uva.nl/j.m.mooij/articles/causality_lecture_notes_2025.pdf}
}

@book{kallenberg2017random,
  title={Random Measures, Theory and Applications},
  author={Kallenberg, Olav},
  isbn={9783319415987},
  series={Probability Theory and Stochastic Modelling},
  url={https://books.google.nl/books?id=i6WoDgAAQBAJ},
  year={2017},
  publisher={Springer International Publishing}
}

@article{bogachev20kant,
title = {Kantorovich problems and conditional measures depending on a parameter},
journal = {Journal of Mathematical Analysis and Applications},
volume = {486},
number = {1},
pages = {123883},
year = {2020},
issn = {0022-247X},
doi = {https://doi.org/10.1016/j.jmaa.2020.123883},
url = {https://www.sciencedirect.com/science/article/pii/S0022247X20300457},
author = {Vladimir I. Bogachev and Ilya I. Malofeev},
}

@article{shpitser23doesidalgorithmfail,
      title={When does the ID algorithm fail?}, 
      author={Ilya Shpitser},
      journal={arXiv.org preprint},
      year={2023},
      volume={arXiv:2307.03750 [stat.ME]},
      url={https://arxiv.org/abs/2307.03750},
}

@article{Mooij20JCI,
  author  = {Joris M. Mooij and Sara Magliacane and Tom Claassen},
  title   = {Joint Causal Inference from Multiple Contexts},
  journal = {Journal of Machine Learning Research},
  year    = {2020},
  volume  = {21},
  number  = {99},
  pages   = {1-108},
  url     = {http://jmlr.org/papers/v21/17-123.html}
}

@article{venkateswaran24completecausalexplanationexpert,
      title={Towards Complete Causal Explanation with Expert Knowledge}, 
      author={Aparajithan Venkateswaran and Emilija Perković},
      journal={arXiv.org preprint},
      year={2024},
      volume={arXiv:2407.07338 [stat.ML]},
      url={https://arxiv.org/abs/2407.07338},
}

@InProceedings{bryan20tieredbackground,
  title = 	 {On the Completeness of Causal Discovery in the Presence of Latent Confounding with Tiered Background Knowledge},
  author =       {Andrews, Bryan and Spirtes, Peter and Cooper, Gregory F.},
  booktitle = 	 {Proceedings of the Twenty Third International Conference on Artificial Intelligence and Statistics},
  pages = 	 {4002--4011},
  year = 	 {2020},
  volume = 	 {108},
  series = 	 {Proceedings of Machine Learning Research},
  month = 	 {26--28 Aug},
  publisher =    {PMLR},
  url = 	 {https://proceedings.mlr.press/v108/andrews20a.html}
}

@article{perkovic18complete,
  author  = {Emilija Perkovi\'c and Johannes Textor and Markus Kalisch and Marloes H. Maathuis},
  title   = {Complete Graphical Characterization and Construction of Adjustment Sets in Markov Equivalence Classes of Ancestral Graphs},
  journal = {Journal of Machine Learning Research},
  year    = {2018},
  volume  = {18},
  number  = {220},
  pages   = {1-62},
  url     = {http://jmlr.org/papers/v18/16-319.html}
}

@InProceedings{jaber19idencom,
  title = 	 {Causal Identification under {M}arkov Equivalence: Completeness Results},
  author =       {Jaber, Amin and Zhang, Jiji and Bareinboim, Elias},
  booktitle = 	 {Proceedings of the 36th International Conference on Machine Learning},
  pages = 	 {2981--2989},
  year = 	 {2019},
  volume = 	 {97},
  series = 	 {Proceedings of Machine Learning Research},
  month = 	 {Jun},
  url = 	 {https://proceedings.mlr.press/v97/jaber19a.html}
}

@inproceedings{jaber22causal,
 author = {Jaber, Amin and Ribeiro, Adele and Zhang, Jiji and Bareinboim, Elias},
 booktitle = {Advances in Neural Information Processing Systems},
 pages = {3679--3690},
 title = {Causal Identification under Markov equivalence: Calculus, Algorithm, and Completeness},
 url = {https://proceedings.neurips.cc/paper_files/paper/2022/file/17a9ab4190289f0e1504bbb98d1d111a-Paper-Conference.pdf},
 volume = {35},
 year = {2022}
}

@phdthesis{zhang2006causal,
  title={Causal inference and reasoning in causally insufficient systems},
  author={Zhang, Jiji},
  year={2006},
  school={Carnegie Mellon University}
}

@article{zhang2008causal,
  title={Causal reasoning with ancestral graphs},
  author={Zhang, Jiji},
  journal={Journal of Machine Learning Research},
  volume={9},
  pages={1437--1474},
  year={2008},
  publisher={Microtome Publishing}
}

@article{forre2021transitional,
  title={Transitional conditional independence},
  author={Forr{\'e}, Patrick},
  journal={arXiv.org preprint},
  volume={arXiv:2104.11547 [math.ST]},
  year={2021}
}

@article{hernan2004structural,
 author = {Miguel A. Hernán and Sonia Hernández-Díaz and James M. Robins},
 journal = {Epidemiology},
 number = {5},
 pages = {615--625},
 publisher = {Lippincott Williams & Wilkins},
 title = {A Structural Approach to Selection Bias},
 volume = {15},
 year = {2004}
}

@book{Hernan2020WhatIf,
  author    = {Hern{\'a}n, Miguel A. and Robins, James M.},
  title     = {Causal Inference: What If},
  year      = {2020},
  address   = {Boca Raton, FL},
  publisher = {Chapman \& Hall/CRC},
  note      = {Online book (continuously updated)},
  url       = {https://miguelhernan.org/whatifbook}
}

@inproceedings{bareinboim2015recovering,
  title={Recovering causal effects from selection bias},
  author={Bareinboim, Elias and Tian, Jin},
  booktitle={Proceedings of the AAAI Conference on Artificial Intelligence},
  volume={29},
  number={1},
  pages={2410–2416},
  year={2015}
}

@article{shpitser2008complete,
  title={Complete identification methods for the causal hierarchy},
  author={Shpitser, Ilya and Pearl, Judea},
  journal={Journal of Machine Learning Research},
  volume={9},
  pages={1941--1979},
  year={2008}
}

@article{richardson2002ancestral,
  title={Ancestral graph Markov models},
  author={Richardson, Thomas S. and Spirtes, Peter},
  journal={The Annals of Statistics},
  volume={30},
  number={4},
  pages={962--1030},
  year={2002},
  publisher={Institute of Mathematical Statistics}
}

@book{pearl2009causality, place={Cambridge}, edition={2nd}, title={Causality: Models, Reasoning, and Inference}, DOI={10.1017/CBO9780511803161}, publisher={Cambridge University Press}, author={Pearl, Judea}, year={2009}}

@article{bongers2021foundations,
  title={Foundations of structural causal models with cycles and latent variables},
  author={Bongers, Stephan and Forr{\'e}, Patrick and Peters, Jonas and Mooij, Joris M.},
  journal={The Annals of Statistics},
  volume={49},
  number={5},
  pages={2885--2915},
  year={2021},
  publisher={Institute of Mathematical Statistics}
}

@article{richardson2023nested,
  title={Nested Markov properties for acyclic directed mixed graphs},
  author={Richardson, Thomas S. and Evans, Robin J. and Robins, James M. and Shpitser, Ilya},
  journal={The Annals of Statistics},
  volume={51},
  number={1},
  pages={334--361},
  year={2023},
  publisher={Institute of Mathematical Statistics}
}

@inproceedings{forre2020causal,
  title={Causal calculus in the presence of cycles, latent confounders and selection bias},
  author={Forr{\'e}, Patrick and Mooij, Joris M.},
  booktitle={Proceedings of the 36th Conference on Uncertainty in Artificial Intelligence},
  pages={71--80},
  year={2020},
}

@article{richar03mark,
author = {Richardson, Thomas S.},
title = {Markov Properties for Acyclic Directed Mixed Graphs},
journal = {Scandinavian Journal of Statistics},
volume = {30},
number = {1},
pages = {145-157},
doi = {https://doi.org/10.1111/1467-9469.00323},
url = {https://onlinelibrary.wiley.com/doi/abs/10.1111/1467-9469.00323},
year = {2003}
}

@book{spirtes2001causation,
  title={Causation, prediction, and search},
  author={Spirtes, Peter and Glymour, Clark and Scheines, Richard},
  year={2001},
  publisher={MIT press}
}

@inproceedings{spirtes95causal,
author = {Spirtes, Peter and Meek, Christopher and Richardson, Thomas},
title = {Causal inference in the presence of latent variables and selection bias},
year = {1995},
booktitle = {Proceedings of the 11th Conference on Uncertainty in Artificial Intelligence},
pages = {499–506},
numpages = {8},
}

@incollection{spirtes99alogorithm,
    author = {Spirtes, P and Meek, C and Richardson, T},
    isbn = {9780262315821},
    title = "{An Algorithm for Causal Inference in the Presence of Latent Variables and Selection Bias}",
    booktitle = "{Computation, Causation, and Discovery}",
    publisher = {AAAI Press},
    year = {1999},
    month = {05},
    doi = {10.7551/mitpress/2006.003.0009},
    url = {https://doi.org/10.7551/mitpress/2006.003.0009},
    eprint = {https://direct.mit.edu/book/chapter-pdf/278764/9780262315821\_cah.pdf},
}

@article{Correa17causal, title={Causal Effect Identification by Adjustment under Confounding and Selection Biases}, volume={31}, url={https://ojs.aaai.org/index.php/AAAI/article/view/11060}, DOI={10.1609/aaai.v31i1.11060}, abstractNote={ &lt;p&gt; Controlling for selection and confounding biases are two of the most challenging problems in the empirical sciences as well as in artificial intelligence tasks. Covariate adjustment (or, Backdoor Adjustment) is the most pervasive technique used for controlling confounding bias, but the same is oblivious to issues of sampling selection. In this paper, we introduce a generalized version of covariate adjustment that simultaneously controls for both confounding and selection biases. We first derive a sufficient and necessary condition for recovering causal effects using covariate adjustment from an observational distribution collected under preferential selection. We then relax this setting to consider cases when additional, unbiased measurements over a set of covariates are available for use (e.g., the age and gender distribution obtained from census data). Finally, we present a complete algorithm with polynomial delay to find all sets of admissible covariates for adjustment when confounding and selection biases are simultaneously present and unbiased data is available. &lt;/p&gt; }, number={1}, journal={Proceedings of the AAAI Conference on Artificial Intelligence}, author={Correa, Juan and Bareinboim, Elias}, year={2017}, month={Feb.} }

@inproceedings{shpitser06identification,
author = {Shpitser, Ilya and Pearl, Judea},
title = {Identification of conditional interventional distributions},
year = {2006},
booktitle = {Proceedings of the 22ed Conference on Uncertainty in Artificial Intelligence},
pages = {437–444},
numpages = {8},

}

@article{pearl95causal,
 ISSN = {00063444},
 URL = {http://www.jstor.org/stable/2337329},
 author = {Judea Pearl},
 journal = {Biometrika},
 number = {4},
 pages = {669--688},
 publisher = {[Oxford University Press, Biometrika Trust]},
 title = {Causal Diagrams for Empirical Research},
 urldate = {2024-05-27},
 volume = {82},
 year = {1995}
}

@inproceedings{tian02general,
author = {Tian, Jin and Pearl, Judea},
title = {A general identification condition for causal effects},
year = {2002},
isbn = {0262511290},
publisher = {American Association for Artificial Intelligence},
address = {USA},
booktitle = {Eighteenth National Conference on Artificial Intelligence},
pages = {567–573},
numpages = {7},
location = {Edmonton, Alberta, Canada}
}

@inproceedings{huang06pearl,
author = {Huang, Yimin and Valtorta, Marco},
title = {Pearl's calculus of intervention is complete},
year = {2006},
booktitle = {Proceedings of the 22ed Conference on Uncertainty in Artificial Intelligence},
pages = {217–224},
numpages = {8},
}

@article{Huang2008OnTC,
  title={On the completeness of an identifiability algorithm for semi-Markovian models},
  author={Yimin Huang and Marco G. Valtorta},
  journal={Annals of Mathematics and Artificial Intelligence},
  year={2008},
  volume={54},
  pages={363-408},
  url={https://api.semanticscholar.org/CorpusID:52818662}
}

@inproceedings{shpister06joint,
title = "Identification of joint interventional distributions in recursive semi-Markovian causal models",
author = "Ilya Shpitser and Judea Pearl",
year = "2006",
month = nov,
day = "13",
language = "English (US)",
isbn = "1577352815",
pages = "1219--1226",
booktitle = "Proceedings of the 21st National Conference on Artificial Intelligence and the 18th",
}

@article{Abouei24sID, title={s-ID: Causal Effect Identification in a Sub-population}, volume={38}, url={https://ojs.aaai.org/index.php/AAAI/article/view/30011}, DOI={10.1609/aaai.v38i18.30011}, abstractNote={Causal inference in a sub-population involves identifying the causal effect of an intervention on a specific subgroup, which is distinguished from the whole population through the influence of systematic biases in the sampling process. However, ignoring the subtleties introduced by sub-populations can either lead to erroneous inference or limit the applicability of existing methods. We introduce and advocate for a causal inference problem in sub-populations (henceforth called s-ID), in which we merely have access to observational data of the targeted sub-population (as opposed to the entire population). Existing inference problems in sub-populations operate on the premise that the given data distributions originate from the entire population, thus, cannot tackle the s-ID problem. To address this gap, we provide necessary and sufficient conditions that must hold in the causal graph for a causal effect in a sub-population to be identifiable from the observational distribution of that sub-population. Given these conditions, we present a sound and complete algorithm for the s-ID problem.}, number={18}, journal={Proceedings of the AAAI Conference on Artificial Intelligence}, author={Abouei, Amir Mohammad and Mokhtarian, Ehsan and Kiyavash, Negar}, year={2024}, month={Mar.}, pages={20302-20310} }

@article{balke97bounds,
author = {Alexander Balke and Judea Pearl},
title = {Bounds on Treatment Effects from Studies with Imperfect Compliance},
journal = {Journal of the American Statistical Association},
volume = {92},
number = {439},
pages = {1171--1176},
year = {1997},
publisher = {Taylor \& Francis},
doi = {10.1080/01621459.1997.10474074},
URL = { 
    
        https://doi.org/10.1080/01621459.1997.10474074  
},
eprint = {     
        https://doi.org/10.1080/01621459.1997.10474074
}
}

@article{Heckman79SampleSB,
  title={Sample selection bias as a specification error},
  author={James J. Heckman},
  journal={Applied Econometrics},
  year={1979},
  volume={31},
  pages={129-137},
  url={https://api.semanticscholar.org/CorpusID:30028243}
}

@inproceedings{Blom19beyond,
  title     = {Beyond Structural Causal Models: Causal Constraints Models},
  author    = {Tineke Blom and Stephan Bongers and Joris M. Mooij},
  pages     = {585-594},
  url       = {https://proceedings.mlr.press/v115/blom20a.html},
  booktitle = {Proceedings of the 35th Uncertainty in Artificial Intelligence Conference ({UAI}-19)},
  year      = 2020,
  publisher = {PMLR},
  series    = {Proceedings of Machine Learning Research},
  volume    = 115,
}

@InProceedings{cooper95causal_discovery_selecion,
  title = 	 {Causal Discovery from Data in the Presence of Selection Bias},
  author =       {Cooper, Gregory F.},
  booktitle = 	 {Pre-proceedings of the Fifth International Workshop on Artificial Intelligence and Statistics},
  pages = 	 {140--150},
  year = 	 {1995},
  editor = 	 {Fisher, Doug and Lenz, Hans-Joachim},
  volume = 	 {R0},
  series = 	 {Proceedings of Machine Learning Research},
  month = 	 {04--07 Jan},
  publisher =    {PMLR},
  url = 	 {https://proceedings.mlr.press/r0/cooper95a.html},
  note =         {Reissued by PMLR on 01 May 2022.}
}

@article{Lauritzen02chaingraph,
author = {Lauritzen, Steffen L.  and Richardson, Thomas S. },
title = {Chain graph models and their causal interpretations (with discussion)},
journal = {Journal of the Royal Statistical Society: Series B (Statistical Methodology)},
volume = {64},
number = {3},
pages = {321--361},
doi = {https://doi.org/10.1111/1467-9868.00340},
year = {2002}
}

@article{Maathuis15generalized,
author = {Marloes H. Maathuis and Diego Colombo},
title = {{A generalized back-door criterion}},
volume = {43},
journal = {The Annals of Statistics},
number = {3},
publisher = {Institute of Mathematical Statistics},
pages = {1060 -- 1088},
year = {2015},
doi = {10.1214/14-AOS1295},
URL = {https://doi.org/10.1214/14-AOS1295}
}

@InProceedings{dawid10dag,
  title = 	 {Beware of the DAG!},
  author = 	 {Dawid, A. Philip},
  booktitle = 	 {Proceedings of Workshop on Causality: Objectives and Assessment at NIPS 2008},
  pages = 	 {59--86},
  year = 	 {2010},
  editor = 	 {Guyon, Isabelle and Janzing, Dominik and Schölkopf, Bernhard},
  volume = 	 {6},
  series = 	 {Proceedings of Machine Learning Research},
  address = 	 {Whistler, Canada},
  month = 	 {12 Dec},
  publisher =    {PMLR},
  url = 	 {https://proceedings.mlr.press/v6/dawid10a.html}
}

@article{Dawid21decision,
url = {https://doi.org/10.1515/jci-2020-0008},
title = {Decision-theoretic foundations for statistical causality},
author = {Dawid, A. Philip},
pages = {39--77},
volume = {9},
number = {1},
journal = {Journal of Causal Inference},
year = {2021},
}

@article{abouei24sIDlatent,
      title={Causal Effect Identification in a Sub-Population with Latent Variables}, 
      author={Amir Mohammad Abouei and Ehsan Mokhtarian and Negar Kiyavash and Matthias Grossglauser},
      journal={arXiv.org preprint},
      year={2024},
      volume={arXiv:2405.14547 [cs.LG]},
}

@article{Zhang08complete,
title = {On the completeness of orientation rules for causal discovery in the presence of latent confounders and selection bias},
journal = {Artificial Intelligence},
volume = {172},
number = {16},
pages = {1873-1896},
year = {2008},
issn = {0004-3702},
doi = {https://doi.org/10.1016/j.artint.2008.08.001},
url = {https://www.sciencedirect.com/science/article/pii/S0004370208001008},
author = {Jiji Zhang}
}
\end{document}